\numberwithin{equation}{section}
\newcommand\type{\mathrm{type}}
\newcommand\ltype{\mathrm{ltype}}
\newcommand\computationtype[1]{\underline{#1}}
\newcommand\@KiAlph[1]{%
\ifcase #1\or \kappa\or \kappa'\or \kappa''\else \@ctrerr \fi%
}
\newcommand\ki[1][1]{{\@KiAlph{#1}}}
\newcommand\@TyAlph[1]{%
\ifcase #1\or \tau\or \sigma\or \rho\else \@ctrerr \fi%
}
\newcommand\ty[1][1]{{\@TyAlph{#1}}}
\newcommand\@CTyAlph[1]{%
\computationtype{\ifcase #1\or \tau\or \sigma\or \rho\else \@ctrerr \fi}%
}
\newcommand\cty[1][1]{{\@CTyAlph{#1}}}
\newcommand\tvar[1][1]{{\@TyVarAlph{#1}}}
\newcommand\@TyVarAlph[1]{%
\ifcase #1\or \alpha\or \beta\or \gamma\else \@ctrerr \fi%
}
\newcommand\ltvar[1][1]{{\@LTyVarAlph{#1}}}
\newcommand\@LTyVarAlph[1]{%
\ifcase #1\or \underline{\alpha}\or \underline{\beta}\or \underline{\gamma}\else \@ctrerr \fi%
}
\newcommand\var[1][1]{{\@VarAlph{#1}}}
\newcommand\@VarAlph[1]{%
\ifcase #1\or x\or y\or z\or u\or v\or w\else \@ctrerr \fi%
}
\newcommand\trm[1][1]{{\@TermAlph{#1}}}
\newcommand\@TermAlph[1]{%
\ifcase #1\or t\or s\or r\else \@ctrerr \fi%
}
\newcommand\op{\mathrm{op}}
\newcommand\lop{\mathsf{lop}}
\newcommand\Op{\mathsf{Op}}
\newcommand\Ty{\mathsf{Ty}}
\newcommand\LOp{\mathsf{LOp}}
\newcommand\cnst{\underline{c}}
\newcommand\zero{\underline{0}}
\newcommand\sigmoid{\varsigma}
\newcommand\tSum{\mathrm{sum}}
\newcommand\bp[1]{\boldsymbol{(}#1\boldsymbol{)}}
\newcommand\tUnit{\tTuple{}}
\newcommand\tPair[2]{\langle #1, #2\rangle}
\newcommand\tTriple[3]{\langle #1, #2, #3\rangle}
\newcommand\tTuple[1]{\langle #1\rangle}
\newcommand\Cns{\ell}
\newcommand\Inj[2][\,]{\mathsf{#2}#1}
\newcommand\fun[1]{\lambda #1.}
\newcommand\lfun[1]{\underline{\lambda} #1.}
\newcommand\lapp[2]{#1\bullet #2}
\newcommand\letin[3]{\mathbf{let}\,#1=\,#2\,\mathbf{in}\,#3}
\newcommand\pletin[4]{\letin{\tPair{#1}{#2}}{#3}{#4}}
\newcommand\lfp[2]{\mu#1.#2}
\newcommand\gfp[2]{\nu#1.#2}
\newcommand\llfp[2]{\underline{\mu}#1.#2}
\newcommand\lgfp[2]{\underline{\nu}#1.#2}
\newcommand\tRoll[2][]{\mathbf{roll}_{#1}\,#2}
\newcommand\invtRoll[2][]{{#1}.\mathbf{roll}^{-1}\,#2}
\newcommand\tUnroll[1]{\mathbf{unroll}\,#1}
\newcommand\KI{ \mathfrak{L} }
\newcommand\sind[1]{ \mathfrak{s}_{\left( #1 \right) } }
\newcommand\okCO{ \mathfrak{A} }
\newcommand\okS{ \mathfrak{S} }
\newcommand\TreeIndex{\mathbf{T}}
\newcommand\NNN[1]{ \mathbb{I}_{#1} }
\newcommand\MUNORMAG{\mathfrak{N}}
\newcommand\MUCANG{\mathfrak{n}}
\newcommand\mind[1]{ \mathtt{m}_{\left(#1\right)} }
\newcommand\NORMAL[1]{ \prescript{}{\mathfrak{e}}{\mathcal{N}}\left( {#1}\right)  }
\newcommand\CANsh[1]{ \prescript{\mathfrak{e}}{}{\mathfrak{n}}_{#1} }
\newcommand\sh[1]{ \mathfrak{can}_{#1} }
\newcommand\obdeck[2]{deck_{#1} \left( #2\right)  }
\newcommand\deckk[1]{\mathfrak{deck}_{#1}}
\newcommand{\EuclidU}{\mathsf{U}_e}
\newcommand{\Euclidean}{\mathsf{Diff}}
\newcommand{\Euc}{\mathfrak{E}}
\newcommand\productclosure[1]{\underline{\overline{\times #1}}}
\newcommand\morterminal[1]{!_{#1}}
\newcommand{\Fscone}{\overleftrightarrow{\mathbf{Scone}} }
\newcommand{\Rscone}{\overleftrightarrow{\mathbf{Scone}} }
\newcommand\catLi{\mathbf{Li}}
\newcommand\catFLi{\mathbf{FLi}}
\newcommand\catFSet{\mathbf{FSet}}
\newcommand\catFV{\mathbf{FVect}}
\newcommand\lan{\mathsf{lan}}
\newcommand\ran{\mathsf{ran}}
\newcommand\famX{\mathcal{X}}
\newcommand\famY{\mathcal{Y}}
\newcommand\famZ{\mathcal{Z}}
\newcommand\pvar{{p}}
\newcommand\lvar{\mathsf{v}}
\newcommand\objects{\mathrm{obj} }
\newcommand\ff[1]{ {#1}' }
\newcommand\sobjects[1]{\mathrm{obj}\left( #1 \right)}
\newcommand\repP[1]{ \mathfrak{re}_{#1} }
\newcommand\ceq{\coloneqq }
\newcommand\prodstrict{\,\underline{\times }\, }
\newcommand\equivalenceextensive{\mathcal{S}}
\newcommand\initialobject{\mathbb{0} }
\newcommand\ic{\iota }  
\newcommand\Ds[1]{\mathfrak{D}#1 }
\newcommand\Dsr[1]{\mathfrak{D}^t#1 }
\newcommand\sDsr[1]{\Dsr{\left( #1 \right)} }
\newcommand\sDs[1]{\Ds{\left( #1 \right)} }
\newcommand\GrothSet{\Sigma_\Set}
\newcommand\GrothE{\Sigma_\catE}
\newcommand\GrothD{\Sigma_\catD}
\newcommand\GrothC{\Sigma_\catC}
\newcommand\GrotED{\Sigma_{\catE\times\catD} }
\newcommand\suitL{\ast }
\newcommand\suitHL{\mu\nu\catL }
\newcommand\Produ{\overline{\prodstrict } }
\newcommand\bijpoly{\displaystyle\overline{{\Sigma }} }
\newcommand\inversebijpoly{\displaystyle\overline{{\partial }} }
\newcommand\ind{\mathfrak{in}}
\newcommand\coind{\mathfrak{out}}
\newcommand\unfold{\mathrm{unfold}}
\newcommand\fold{\mathrm{fold}}
\newcommand\MPoly{\mu\mathsf{Poly}}
\newcommand\mnPoly{\mu\nu\mathsf{Poly}}
\newcommand\GrothSpFib{\mathrm{S}\mathfrak{pFib}}
\newcommand\GrothInd{\mathfrak{Ind}}
\newcommand\ie{ \underline{e} }
\newcommand\ienn{ \overline{e} }
\newcommand\iem{e_{\mu\overline{E}}}
\newcommand\ienu{e_{\nu\overline{E}}}
\newcommand\ih{\underline{h} }
\newcommand\ihnu{ \overline{h}  }
\newcommand\ihnXx{h_{\left( X, \nu\overline{H} ^X\right)}}
\newcommand\ihmXx{h_{\left(X, \mu\overline{H} ^X\right)}}
\newcommand\liste{\mathsf{E}}
\newcommand\listee{\mathcal{E}}
\newcommand\evp{\mathsf{ev}_{poly}}
\newcommand\iE{\overline{E}}
\newcommand\iH{\overline{H}}
\newcommand\iJ{\overline{J}}
\newcommand\forgetfulS{{\mathsf{L}}}
\newcommand\AAlg{\textrm{-}\mathrm{Alg}}
\newcommand\CCoAlg{\textrm{-}\mathrm{CoAlg}}
\newcommand{\Fam}[1]{\mathbf{Fam}(#1)}
\newcommand\Vect{\mathbf{Vect}}
\newcommand\FVect{\mathbf{FVect}}
\newcommand\CMon{\mathbf{CMon}}
\newcommand{\forsem}[1]{\overleftrightarrow{\llbracket #1\rrbracket} }
\newcommand{\sforG}[1]{\overleftrightarrow{G}\left( #1 \right)}
\newcommand{\forpi}{\overleftrightarrow{\pi}}
\newcommand{\forG}{\overleftrightarrow{G}}
\newcommand{\forF}{\overleftrightarrow{F}}
\newcommand{\revG}{\overleftrightarrow{G}}
\newcommand{\incLRR}{\mathrm{inc}}
\newcommand{\revR}{\overleftrightarrow{\RR ^n }}
\newcommand{\nonList}[1]{\left[ #1\right] _\ast}
\newcommand\summ{\mathrm{sum}}
\newcommand\productt{\mathrm{product}}
\newcommand\projection{\pi}
\newcommand\tensMatch[5][\,]{\mathbf{case}\,#2\,\mathbf{of}#1{!#3}\otimes{#4}\To#5}
\newcommand\vMatch[3][\,]{\mathbf{case}\,#2\,\mathbf{of}#1\{#3\}}
\newcommand\leftrightvMatch[3][\,]{\mathbf{case}\,#2\,\mathbf{of}#1\left\{#3\right\}}
\newcommand\tFold[3]{\mathbf{fold}\,#1\,\mathbf{with}\,#2\To#3}
\newcommand\tGen[3]{\mathbf{gen\,from}\,#1\,\mathbf{with}\,#2\To#3}
\newcommand\tInl{\mathbf{inl}\,}
\newcommand\tInr{\mathbf{inr}\,}
\newcommand\tFst{\mathbf{fst}\,}
\newcommand\tSnd{\mathbf{snd}\,}
\newcommand\tProj[1]{\mathbf{proj}_{#1}\,}
\newcommand\tCoProj[1]{\mathbf{coproj}_{#1}\,}
\newcommand\idx[2]{\mathbf{idx}(#1; #2)\,}
\newcommand\vGamma{\overline{\Gamma}}
\newcommand\Cat{\mathbf{Cat}}
\newcommand\ctx{\Gamma}
\newcommand\kctx{\Delta}
\newcommand\tinf{\vdash}
\newcommand\DGinf[3][]{\kctx\mid\ctx #1\tinf #2 : #3}
\newcommand\Ginf[3][]{\ctx #1\tinf #2 : #3}
\newcommand\Dinf[3][]{\kctx #1\tinf #2 : #3}
\newcommand\subst[2]{#1{}[#2]}
\newcommand\sfor[2]{^{#2}\!/\!_{#1}}
\newcommand\creals{\underline{\mathbf{real}}}
\newcommand\reals{\mathbf{real}}
\newcommand\Unit{\mathbf{1}}
\newcommand\t*{\boldsymbol{\mathop{*}}}
\newcommand\t+{\boldsymbol{\mathop{+}}}
\newcommand\To{\to}
\newcommand\bProd[2]{\bp{#1 \t* #2}}
\newcommand\tProd[3]{\bp{#1 \t* #2 \t* #3}}
\newcommand\Dsynsymbol[1][]{\scalebox{0.8}{$\overrightarrow{\mathcal{D}}$}_{#1}}
\newcommand\Dsyn[2][]{\Dsynsymbol[#1](#2)}
\newcommand\Dsynrevsymbol[1][]{\scalebox{0.8}{$\overleftarrow{\mathcal{D}}$}_{#1}}
\newcommand\Dsynrev[2][]{\Dsynrevsymbol[#1](#2)}
\newcommand\CSyn{{\mathbf{CSyn}}}
\newcommand\LSyn{{\mathbf{LSyn}}}
\newcommand\Syn{\mathbf{Syn}}
\newcommand\tFromMaybe[1]{\mathrm{fromMaybe}}
\newcommand\freeeq[1]{\stackrel{\# #1}{=}}
\newcommand\beeq{\stackrel{\beta\eta}{=}}
\newcommand\bepeq{\!\stackrel{\beta\eta+}{=}\!}
\newcommand{\pushright}[1]{\ifmeasuring@#1\else\omit$\displaystyle#1$\ignorespaces\fi}
\newcommand\explainr[1]{&\pushright{\color{gray}\scriptsize\{\;\textnormal{#1}\;\}}}
\newcommand\citepappx[1]{\ifx\fossacsversion\undefined Appx.~#1\else\citep[Appx.~#1]{vakar2020reverse}\fi}
\definecolor{shade}{RGB}{223,223,223}
\definecolor{unshade}{RGB}{255,255,255}
\newtcbox{\shadebox}{on line,arc=1pt, outer arc=2pt,%
  colback=shade,colframe=shade,boxsep=0pt,%
  left=1pt,right=1pt,top=2pt,bottom=2pt,%
  boxrule=0pt,bottomrule=1pt,toprule=1pt}
\newcommand{\shade}[1]{%
        \shadebox{\ensuremath{#1}}%
}
\newtcbox{\unshadebox}{on line,arc=1pt, outer arc=2pt,%
  colback=unshade,colframe=shade,boxsep=0pt,%
  left=1pt,right=1pt,top=2pt,bottom=2pt,%
  boxrule=0pt,bottomrule=1pt,toprule=1pt}
\newcommand\syncat[1]{\mspace{-25mu}\synname{#1}}
\newcommand\synname[1]{\qquad\text{#1}}
\newenvironment{syntax}[1][]{%
\(
  \begin{array}[t]{#1l@{\quad\!\!}*3{l@{}}@{\,}l}
}{
\end{array}
\)%
}
\newcommand\gdefinedby{::=}
\newcommand\gor{\mathrel{\lvert}}
\newcommand\vor{\mathrel{\big\lvert}}
\newcommand{\multI}{\mathsf{multi}}
\newcommand{\plus}{\mathsf{plus}}
\newcommand{\semt}[1]{\prescript{ }{\Sigma}{\sem{#1}} }
\newcommand{\semtt}[1]{\prescript{t}{\Sigma }{\sem{#1}} }
\newcommand{\Diff}{\mathbf{Diff}}
\def\MTrightharpoonupfill{%
  \arrowfill@\relbar\relbar\rightharpoonup}
\def\MTleftharpoondownfill{%
  \arrowfill@\leftharpoondown\relbar\relbar}
\def\MTleftharpoonupfill{%
  \arrowfill@\leftharpoonup\relbar\relbar}
\def\MTrightharpoondownfill{%
  \arrowfill@\relbar\relbar\rightharpoondown}
\newcommand*\xhookrightleftharpoons[2][]{\mathrel{%
  \raise.22ex\hbox{%
    $\lhook\joinrel\ext@arrow 0359\MTrightharpoonupfill{\phantom{#1}}{#2}$}%
  \setbox0=\hbox{%
    $\ext@arrow 3095\MTleftharpoondownfill{#1}{\phantom{\lhook\joinrel#2}}$}%
  \kern-\wd0 \lower.22ex\box0}}
\newcommand*\xleftrighthookharpoons[2][]{\mathrel{%
  \raise.22ex\hbox{%
    $\ext@arrow 3095\MTleftharpoonupfill{\phantom{#1\mspace{15mu}}}{#2}$}%
  \setbox0=\hbox{%
    $\mathrel{\raise-.4837ex\hbox{$\lhook$}}\joinrel\ext@arrow 0359\MTrightharpoondownfill{#1}{\phantom{#2}}$}%
  \kern-\wd0 \lower.22ex\box0}}
\newcommand\pair[2]{\parent{#1, #2}}
\newcommand\parent[1]{\left(#1\right)}
\newcommand\pair-[2]{(#1, #2)}
\newcommand{\lUnit}{\underline{\Unit}}
\newcommand{\Man}{\mathbf{Man}}
\newcommand{\Set}{\mathbf{Set}}
\newcommand\inv[1]{#1^{-1}}
\newcommand\sqinv[1]{#1^{\!\!-1}}
\newcommand\inv+[1]{\parent{#1}^{-1}}
\newcommand\initial{\mathbb{0}}
\newcommand\terminal{\mathbb{1}}
\newcommand\isomorphic\cong
\newcommand{\sem}[1]{\llbracket #1\rrbracket}
\newcommand{\RR}{\mathbb{R}}
\newcommand{\NN}{\mathbb{N}}
\newcommand\cat[1]{\mathcal{#1}}
\newcommand\catA{\cat{A}}
\newcommand\catB{\cat{B}}
\newcommand\catC{\cat{C}}
\newcommand\catD{\cat{D}}
\newcommand\catE{\cat{E}}
\newcommand\catF{\cat{F}}
\newcommand\catL{\cat{L}}
\newcommand\catS{\cat{S}}
\newcommand\depproj[2]{\mathbf{p}_{#1,#2}}
\newcommand\depv[2]{\mathbf{v}_{#1,#2}}
\newcommand\depq[2]{\mathbf{q}_{#1,#2}}
\newcommand\Hom{\mathrm{Hom}}
\newcommand\Dsemsymbol[1][]{\mathcal{T}_{#1}}
\newcommand\Dsem[2][]{\Dsemsymbol[#1]#2}
\newcommand\Dsemrevsymbol[1][]{\mathcal{T}^*_{#1}}
\newcommand\Dsemrev[2][]{\Dsemrevsymbol[#1]#2}
\newcommand\ev[1][]{\mathrm{ev}^{#1}}
\newcommand\evf[1][]{\mathrm{ev1}^{#1}}
\newcommand\evRsymbol[1][]{\mathrm{evR}^{#1}}
\newcommandtwoopt\evR[3][][]{\evRsymbol[#2]_{#1}(#3)}
\newcommand\lamRsymbol[1][]{\mathrm{lamR}^{#1}}
\newcommandtwoopt\lamR[3][][]{\lamRsymbol[#2]_{#1}(#3)}
\newcommand{\sPair}[2]{( #1, #2 )}
\newcommand\transpose[1]{{#1}^{t}}
\newcommand\copower[3][]{!#2\otimes_{#1}#3}
\newcommand\id[1][]{{\mathrm{id}_{#1}}}
\newcommand\xto\xrightarrow
\newcommand*\bigcdot{\mathpalette\bigcdot@{.6}}
\newcommand*\bigcdot@[2]{\mathbin{\vcenter{\hbox{\scalebox{#2}{$\m@th#1\bullet$}}}}}
\newcommand\seq[2][]{\left(#2\right)_{#1}}
\newcommand\coseq[2][]{\left[#2\right]_{#1}}
\newcommand\set[1]{\left\{#1\right\}}
\newcommand\proj[1]{\pi_{#1}}
\newcommand\coproj[1]{\iota_{#1}}
\newcommand\inj[1]{\iota_{#1}}
\newcommand\colim{\mathrm{colim}\,}
\renewcommand\lim{\mathrm{lim}\,}
\newcommand\ob[1]{\objects\left( #1\right)}
\newcommand{\defeq}{\stackrel {\mathrm{def}}=}
\newcommand\mvremark[1]{\marginpar{\pdfcomment[color=yellow]{MV: #1}}}
\newcommand\flnremark[1]{\marginpar{\pdfcomment[color=yellow]{FLN: #1}}}
\newcommand\mvremark[1]{}
\newcommand\flnremark[1]{}
\newcommand\OKTheoremAddReferences[2]{
  \expandafter\newcommand\csname#1ref\endcsname[1]{#2~\ref{#1:##1}}
  \expandafter\newcommand\csname#1label\endcsname[1]{\label{#1:##1}}
  \WithSuffix\expandafter\newcommand\csname#1ref\endcsname*[1]{\ref{#1:##1}}
  \WithSuffix\expandafter\newcommand\csname#1label\endcsname+[1]{\hypertarget{#1+:##1}{}\zref@labelbyprops{#1:##1}{oktheoremfreetext}}
  \WithSuffix\expandafter\newcommand\csname#1ref\endcsname+[1]{\hyperlink{#1+:##1}{{{\let\ref\@refstar#2~\zref@extract{#1:##1}{oktheoremfreetext}}}}}
  \WithSuffix\expandafter\newcommand\csname#1ref\endcsname-[1]{\hyperlink{#1+:##1}{{\let\ref\@refstar{\zref@extract{#1:##1}{oktheoremfreetext}}}}}
}
\theoremstyle{definition}
\newtheorem{insight}{Insight}
\newtheorem{thmx}{Theorem}
\newtheorem*{exampleEnv*}{Example}
\newtheorem*{exampleEnv+}{Example \oktheorem@parameter}
\newenvironment{example*}{\begin{exampleEnv*}}{\qed\end{exampleEnv*}}
\newenvironment{example+}[1]{\def\oktheorem@parameter{#1}\begin{exampleEnv+}}{\qed\end{exampleEnv+}}
\newcommand\sqsubsection[1]{\vspace{-4pt}\subsection{#1}\vspace{-3pt}}
  \newcommand\LDomain[1]{\mathbf{LDom}(#1)}
  \newcommand\LCDomain[1]{\mathbf{CDom}(#1)}
\newcommand\ct[1]{\underline{#1}}
\newcommand\cRR{\ct{\RR}}
\newcommand{\deptrans}[1]{{#1}^\dagger}
\newcommand{\lintrans}[1]{{#1}^\ddagger}
\newcommand{\ovplus}{\vee}
\renewcommand\tableofcontents{%
	    \section*{\contentsname
		\@mkboth{%
		   \MakeUppercase\contentsname}{\MakeUppercase\contentsname}}%
	    \@starttoc{toc}%
      }
      \renewcommand*\l@part[2]{%
      \ifnum \c@tocdepth >-2\relax
        \addpenalty\@secpenalty
        \addvspace{2.25em \@plus\p@}%
        \setlength\@tempdima{3em}%
        \begingroup
          \parindent \z@ \rightskip \@pnumwidth
          \parfillskip -\@pnumwidth
          {\leavevmode
           \large \bfseries #1\hfil
           \hb@xt@\@pnumwidth{\hss #2%
                              \kern-\p@\kern\p@}}\par
           \nobreak
           \if@compatibility
             \global\@nobreaktrue
             \everypar{\global\@nobreakfalse\everypar{}}%
          \fi
        \endgroup
      \fi}
    \newcommand*\l@section[2]{%
      \ifnum \c@tocdepth >\z@
        \addpenalty\@secpenalty
        \addvspace{1.0em \@plus\p@}%
        \setlength\@tempdima{1.5em}%
        \begingroup
          \parindent \z@ \rightskip \@pnumwidth
          \parfillskip -\@pnumwidth
          \leavevmode \bfseries
          \advance\leftskip\@tempdima
          \hskip -\leftskip
          #1\nobreak\hfil
          \nobreak\hb@xt@\@pnumwidth{\hss #2%
                                     \kern-\p@\kern\p@}\par
        \endgroup
      \fi}
    \newcommand*\l@subsection{\@dottedtocline{2}{1.5em}{2.3em}}
    \renewcommand*\l@subsubsection{\@dottedtocline{3}{3.8em}{3.2em}}
    \renewcommand*\l@paragraph{\@dottedtocline{4}{7.0em}{4.1em}}
    \newcommand*\l@subparagraph{\@dottedtocline{5}{10em}{5em}}
\newcommand\ndots{\cdot\cdot\cdot\cdot\cdot\cdot}
\newcommand\nndots{\ndots\ndots\ndots} 
\begin{document}

\lefttitle{CHAD for Expressive Total Languages}
\righttitle{CHAD for Expressive Total Languages}

\papertitle{Article}

\jnlPage{1}{00}
\jnlDoiYr{2019}
\doival{10.1017/xxxxx}

\title{CHAD for Expressive Total Languages}

\begin{authgrp}
  \author{Fernando Lucatelli Nunes}
  \author{ Matthijs V\'ak\'ar}
  \affiliation{
    {Department of Information and Computing Sciences}\\           
    {Utrecht University}\\           
    {Netherlands}                    
  }
\end{authgrp}

\history{(Received xx xxx xxx; revised xx xxx xxx; accepted xx xxx xxx)}

\begin{abstract}
  We show how to apply forward and reverse mode Combinatory Homomorphic Automatic Differentiation (CHAD) \citep{vakar2021chad,vakar2020reverse}
  to total functional programming languages with expressive type systems featuring the combination of
  \begin{itemize}
\item tuple types;
\item sum types;
\item inductive types;
\item coinductive types;
\item function types.
  \end{itemize}
  We achieve this by analysing the categorical semantics of such types
  in $\Sigma$-types (Grothendieck constructions) of suitable categories.
  Using a novel categorical logical relations technique for such expressive type 
  systems,
  we give a correctness proof of CHAD in this setting by showing that it computes the usual mathematical 
  derivative of the function that the original program implements. 
  The result is a principled, purely functional and provably correct method for performing 
  forward and reverse mode automatic differentiation (AD) on total functional programming languages with expressive type systems.
\end{abstract}

\begin{keywords}
automatic differentiation; software correctness; programming languages; scientific computing; program transformations; type systems; dependently typed languages;
Artin gluing; comma categories; logical relations; initial algebra semantics; creation of initial algebras; coalgebras; Grothendieck construction; exponentiability; fibred categories; polynomial functors;  linear types;
 variant types; inductive types; coinductive types; cartesian closed categories; denotational semantics;
extensive indexed categories; extensive categories; (co)monadicity; free cocompletion under coproducts.
\end{keywords}

\maketitle

\clearpage

\tableofcontents
\clearpage

\section{Introduction}\label{sec:introduction}
Automatic differentiation (AD) is a popular technique for computing 
derivatives of functions implemented by computer programs,
essentially by applying the chain rule across the program code.
It is typically the method of choice for computing derivatives in
machine learning and scientific computing because of its efficiency and numerical stability.
AD has two main variants: forward mode AD, which calculates the derivative of a function, and reverse mode AD,
which calculates the (matrix) transpose of the derivative.
Roughly speaking, for a function $f:\RR^n\to \RR^m$, reverse mode is the more efficient technique if 
$n\gg m$ and forward mode is if $n\ll m$.
Seeing that we are usually interested in computing derivatives (or gradients) of functions $f:\RR^n\to\RR$ 
with very large $n$, reverse AD tends to be the more important algorithm in practice \citep{baydin2018automatic}.

While the study of AD has a long history in the numerical methods community, which we will not survey (see, for example, \citep{griewank2008evaluating}), there has recently been a proliferation of work by the programming languages community 
examining the technique from a new angle.
New goals pursued by this community include 
\begin{itemize}
\item giving a \emph{concise, clear and easy-to-implement definition} of various AD algorithms;
\item \emph{expanding the languages} and programming techniques that AD can be applied to;
\item relating AD to its mathematical \emph{foundations} in differential geometry and proving that AD implementations \emph{correctly} calculate derivatives;
\item performing AD at \emph{compile time} through \emph{source-code transformation}, to maximally expose optimization opportunities to the compiler and to avoid interpreter overhead that other AD approaches can incur;
\item providing formal \emph{complexity guarantees} for AD implementations.
\end{itemize}
We provide a brief summary of some of this more recent work in section \ref{sec:related-work}.
The present paper adds to this new body of work by advancing the state of the art of the first four goals.
We leave the fifth goal when applied to our technique mostly to future work (with the exception of Corollary \ref{cor:no-code-blow-up}).
Specifically, we extend the scope of the Combinatory Homomorphic Automatic Differentiation (CHAD) method of forward and reverse AD \citep{vakar2021chad,vakar2020reverse}  (from the previous state of the art: a simply typed $\lambda$-calculus) to apply to total functional programming languages with expressive type systems, i.e. the combination of:
\begin{itemize}
\item \emph{tuple types}, to enable programs that return or take as an argument more than one value;
\item \emph{sum types}, to enable programs that define and branch on variant data types;
\item \emph{inductive types}, to include programs that operate on labeled-tree-like data structures;
\item \emph{coinductive types}, to deal with programs that operate on lazy infinite data structures such as streams;
\item \emph{function types}, to encompass programs that use popular higher order programming idioms such as maps and folds.
\end{itemize}
This conceptually simple extension requires a considerable extension of existing techniques in 
denotational semantics.
The pay-offs of this challenging development are surprisingly simple AD algorithms as well as 
reusable abstract semantic techniques.

The main contributions of this paper are:
\begin{itemize}
\item developing an abstract categorical semantics (\S \ref{sec:background-categorical-semantics}) of such expressive type systems in 
suitable $\Sigma$-types of categories (\S \ref{sec:grothendieck-constructions});
\item presenting, as the initial instantiation of this abstract semantics,  an idealised target language for CHAD when applied to such type systems (\S \ref{sec:target-language});
\item deriving the forward and reverse CHAD algorithms  (\S \ref{sec:AD-transformations}) when applied to expressive type systems as the uniquely defined homomorphic functors (\S \ref{sect:structure-preserving-functors}) from the source (\S \ref{sec:source-language}) to the target language (\S \ref{sec:target-language});
\item introducing (categorical) logical relations techniques (aka sconing) for reasoning about expressive functional languages that include 
both inductive and coinductive types (\S \ref{sec:subsconing});
\item using such a logical relations construction over the concrete denotational semantics  (\S \ref{sec:concrete-semantics-specification}) of the source and target languages (\S \ref{sec:concrete-models}) that demonstrates that CHAD correctly calculates the usual mathematical derivative (\S \ref{sec:logical-relations-argument}), even for programs between inductive types (\S \ref{sect:correctness-inductive-types});
\item discussing examples (\S \ref{sect:EXAMPLE}) and applied considerations around implementing this extended CHAD method in practice (\S \ref{sec:practical-considerations}). 
\end{itemize}  

We start by giving a high-level 
overview of the key insights and theorems in this paper 
in \S\ref{sec:key-ideas}.

\section{Key ideas}\label{sec:key-ideas}

\subsection{Origins in semantic derivatives and chain rules}
CHAD starts from the observation that for a differentiable function 
$$f: \RR^n\to \RR^m$$
it is useful to pair the primal function value $f(x)$ with $f$'s derivative $Df(x)$ at $x$ 
if we want to calculate derivatives in a compositional way (where we underline the spaces $\cRR^n$ of tangent vectors to emphasize their algebraic structure and we write a linear 
function type for the derivative to indicate its linearity in its tangent vector argument):
\begin{align*}
\Dsem{f} : &\RR^n \to \RR^m\times (\cRR^n\multimap \cRR^m)\\
&x\mapsto \sPair{f(x)}{Df(x)}.
\end{align*}
Indeed, the chain rule for derivatives teaches us that we compute the derivative 
of a composition $g\circ f$ of functions as follows, where we write $\Dsem[1]{f}\defeq \pi_1\circ \Dsem{f}$ and $\Dsem[2]{f}\defeq \pi_2\circ \Dsem{f}$ for the first and second components of $\Dsem{f}$, respectively:
$$
\Dsem{(g\circ f)}(x) = \sPair{\Dsem[1]{g}(\Dsem[1]{f}(x))}{\Dsem[2]{g}(\Dsem[1]{f}(x))\circ \Dsem[2]{f}(x)}.
$$
We make two observations:
\begin{enumerate}
\item the derivative of $g\circ f$ does not only depend on the derivatives of $g$ and $f$ but also on the primal value of $f$;
\item the primal value of $f$ is used twice: once in the primal value of $g\circ f $ and once in its derivative; we want to share these repeated subcomputations.
\end{enumerate}
\begin{insight}
This shows that it is wise to \emph{pair up} computations of primal function values and derivatives and to \emph{share} computation between both if we  want to calculate derivatives of functions compositionally and efficiently.
\end{insight}

Similar observations can be made for $f$'s transposed (adjoint) derivative $\transpose{Df}$, which propagates not tangent vectors but cotangent vectors and which we can pair up as 
\begin{align*}
    \Dsemrev{f} : &\RR^n \to \RR^m\times (\cRR^m\multimap \cRR^n)\\
    &x\mapsto \sPair{f(x)}{\transpose{Df}(x)}
    \end{align*}
to get the following chain rule
$$
\Dsemrev{(g\circ f)}(x) = \sPair{\Dsemrev[1]{g}(\Dsemrev[1]{f}(x))}{\Dsemrev[2]{f}(x)\circ\Dsemrev[2]{g}(\Dsemrev[1]{f}(x))}.
$$

CHAD directly implements the operations $\Dsemsymbol$ and $\Dsemrevsymbol$ as source code transformations $\Dsynsymbol$ and $\Dsynrevsymbol$ on a functional language to implement forward and reverse mode AD, respectively.
These code transformations are defined compositionally through structural induction on the syntax, by making use of the chain rules above.

\subsection{CHAD on a first-order functional language}
We first discuss what the technique looks like on a standard typed first-order functional language.
Despite our different presentation in terms of a $\lambda$-calculus rather than Elliott's categorical combinators, this is essentially the algorithm of \citep{elliott2018simple}.
Types $\ty,\ty[2],\ty[3]$ are either statically sized arrays of $n$ real numbers $\reals^n$ or tuples $\ty\t*\ty[2]$ of  types $\ty,\ty[2]$.
We consider programs $\trm$ of type $\ty[2]$ in typing context $\Gamma=\var_1:\ty_1,\ldots,\var_n:\ty_n$, where $\var_i$ are identifiers.
We write such a typing judgement for programs in context as $\Gamma\vdash\trm:\ty[2]$.
As long as our language has certain primitive operations (which we represent schematically)
$$
\inferrule{\Ginf{\trm_1}{\reals^{n_1}}\quad\cdots\quad \Ginf{\trm_k}{\reals^{n_k}}}{\Ginf{\op(\trm_1,\ldots,\trm_k)}{\reals^m}}
$$
such as constants (as nullary operations), (elementwise) addition and multiplication of arrays,
inner products and certain non-linear functions like sigmoid functions,  we can write complex programs by sequencing together such operations.
For example, writing $\reals$ for $\reals^1$, we can write a program $\var_1:\reals,\var_2:\reals,\var_3:\reals,\var_4:\reals\vdash \trm[2]:\reals$ by 
\begin{align*}
&\letin{\var[2]}{\var_1 * \var_4 + 2 * x_2 }{}\\
&\letin{\var[3]}{\var[2]* \var_3}{}\\
&\letin{w}{\var[3]+ \var_4}{\sin(w)},
\end{align*}
where we indicate shared subcomputations with $\mathbf{let}$-bindings.

CHAD observes that we can define for each language type $\ty$ associated types of
\begin{itemize}
    \item forward mode primal values $\Dsyn{\ty}_1$;\\
    we define $\Dsyn{\reals^n}=\reals^n$ and $\Dsyn{\ty\t*\ty[2]}_1=\Dsyn{\ty}_1\t*\Dsyn{\ty[2]}_1$; that is, for now $\Dsyn{\ty}_1=\ty$; 
    \item reverse mode primal values $\Dsynrev{\ty}_1$;\\ we define $\Dsynrev{\reals^n}=\reals^n$ and $\Dsynrev{\ty\t*\ty[2]}_1=\Dsynrev{\ty}_1\t*\Dsynrev{\ty[2]}_1$; that is, for now $\Dsynrev{\ty}_1=\ty$; 
    \item forward mode tangent values $\Dsyn{\ty}_2$;\\
    we define $\Dsyn{\reals^n}_2=\creals^n$ and $\Dsyn{\ty\t*\ty[2]}=\Dsyn{\ty}_2\t*\Dsyn{\ty[2]}_2$;
    \item reverse mode cotangent values $\Dsynrev{\ty}_2$;\\
    we define $\Dsynrev{\reals^n}_2=\creals^n$ and $\Dsynrev{\ty\t*\ty[2]}=\Dsynrev{\ty}_2\t*\Dsynrev{\ty[2]}_2$.
\end{itemize}
Indeed, the justification for these definitions is the crucial observation that \emph{a (co)tangent vector to a product of spaces is precisely a pair of tangent (co)vectors to the two spaces}.
Put differently, the space $\Dsem[(x,y)]{(X\times Y)}$ of (co)tangent vectors to $X\times Y$ at a point $(x,y)$ equals the product space $(\Dsem[x] X) \times (\Dsem[y] Y)$  \citep{tu2011manifolds}. 

We write the (co)tangent types associated with $\reals^n$ as $\creals^n$ to emphasize that it is a linear type and to 
distinguish it from the cartesian type $\reals^n$.
In particular, we will see that tangent and cotangent values are elements of linear types that come equipped with a commutative monoid structure $(\zero,+)$.
Indeed, (transposed) derivatives are linear functions: homomorphisms of this monoid structure\footnote{In fact, the (co)tangent vectors form a 
vector space and (transposed) derivatives are vector space homomorphisms. Surprisingly, it is only the monoid structure that is relevant to phrasing and proving correct 
CHAD. Therefore, we choose to emphasize this monoid structure over the full vector space structure.
For example, CHAD-like algorithms also works for more general data types than the real numbers, as long as they form a commutative monoid.
An interesting example is a datatype that implements saturation arithmetic, as is commonly used as a cheap alternative to floating point arithmetic in machine learning.
}. 
We extend these operations $\Dsynsymbol$ and $\Dsynrevsymbol$ to act on typing contexts $\Gamma$:
\begin{align*}
\Dsyn{\var_1:\ty_1,\ldots,\var_n:\ty_n}_1&=\var_1:\Dsyn{\ty_1}_1,\ldots, \var_n:\Dsyn{\ty_n}_1 \\ 
\Dsynrev{\var_1:\ty_1,\ldots,\var_n:\ty_n}_1&=\var_1:\Dsynrev{\ty_1}_1,\ldots, \var_n:\Dsynrev{\ty_n}_1 \\
\Dsyn{\var_1:\ty_1,\ldots,\var_n:\ty_n}_2&=\Dsyn{\ty_1}_2\t*\cdots\t*\Dsyn{\ty_n}_2\\
\Dsynrev{\var_1:\ty_1,\ldots,\var_n:\ty_n}_2&=\Dsynrev{\ty_1}_2\t*\cdots\t*\Dsynrev{\ty_n}_2.
\end{align*}
To each program $\Gamma\vdash\trm:\ty[2]$, CHAD associates programs calculating the forward mode  and reverse mode derivatives $\Dsyn[\vGamma]{\trm}$ and $\Dsynrev[\vGamma]{\trm}$, which are indexed by the list $\vGamma$ of identifiers that occur in $\Gamma$:
\begin{align*}
&\Dsyn{\Gamma}_1\vdash \Dsyn[\vGamma]{\trm} : \Dsyn{\ty[2]}\t* \left( \Dsyn{\Gamma}_2\multimap \Dsyn{\ty[2]}\right)\\
&\Dsynrev{\Gamma}_1\vdash \Dsynrev[\vGamma]{\trm} : \Dsynrev{\ty[2]}\t* \left( \Dsynrev{\ty[2]} \multimap\Dsynrev{\Gamma}_2 \right).
\end{align*}
Observing that each program $\trm$ computes a differentiable function $\sem{\trm}$ between Euclidean spaces, as long as all primitive operations $\op$ are differentiable, the key property that we prove for these code transformations is that they actually calculate derivatives:
\begin{thmx}[Correctness of CHAD, Theorem \ref{theorem:correctness-theorem-for-data-typess}]\label{thm:chad-correctness}
For any well-typed program $$\var_1:\reals^{n_1},\ldots,\var_k:\reals^{n_k}\vdash {\trm}:\reals^m$$
we have that
$\sem{\Dsyn[\var_1,\ldots,\var_k]{\trm}}=\Dsem{\sem{\trm}}\;\text{ and }\;\sem{\Dsynrev[\var_1,\ldots,\var_k]{\trm}}=\Dsemrev{\sem{\trm}}.$
\end{thmx}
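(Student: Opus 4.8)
The plan is to prove the statement not by a direct structural induction on $\trm$ — which fails, because the equation $\sem{\Dsyn[\var_1,\ldots,\var_k]{\trm}}=\Dsem{\sem{\trm}}$ is not strong enough to serve as its own induction hypothesis once $\trm$ internally uses function types, inductive types or coinductive types — but by a \emph{categorical logical relations} (sconing) argument, in which the desired equation is strengthened to a type-indexed relation and then established through a fundamental lemma. Although the theorem only concerns ground contexts $\reals^{n_i}$ and a ground result type $\reals^m$, the term $\trm$ may internally traverse the full type structure, so the relation must be proven at all types and only specialized at the end.

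First I would recall that the source language is the initial (free) category $\Syn$ carrying exactly the required type structure — finite products and coproducts, exponentials, and initial algebras and final coalgebras for the admissible polynomial functors — generated by the primitive types and operations $\op$. By construction, $\Dsynsymbol$ and $\Dsynrevsymbol$ are the unique such structure-preserving functors out of $\Syn$, and the concrete denotational semantics $\sem{-}$ is likewise the unique structure-preserving functor into the model. Consequently both composites $\sem{\Dsynsymbol(-)}$ and $\Dsemsymbol\circ\sem{-}$ (and their reverse-mode analogues) are determined by their action on generators, provided we can exhibit a suitable category in which to compare them whose objects remember a derivative relationship.

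The core construction is the glued category (subscone) $\Gl$ built over the semantic model, whose objects carry a logical relation and whose morphisms preserve it. At the ground type $\reals^n$ the relation specifies, for each generalized element (a smooth map out of a probe/plot), that the tangent component of its image under the paired primal–tangent semantics coincides with the genuine mathematical derivative $D$; the reverse-mode version uses the transposed derivative and is handled in the dual glued category $\GlRev$. I would then verify that $\Gl$ carries all the type structure needed to interpret $\Syn$: products, coproducts and exponentials are comparatively standard for sconing, but the decisive point is that $\Gl$ also admits the initial algebras and final coalgebras of the relevant polynomial functors, and that these are \emph{created} by the projection to the base, so that they are computed compatibly with differentiation. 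This is exactly where the paper's earlier machinery — the $\Sigma$-type (Grothendieck) analysis, extensivity, and the creation-of-(co)limits and (co)monadicity results — is indispensable.

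With $\Gl$ shown to model the source language, the base cases reduce to checking each primitive operation $\op$: its CHAD transform must compute its actual derivative (respectively transpose-derivative), which holds because primitives are differentiated by hand from their known derivatives and combine via the chain rules recalled in \S\ref{sec:key-ideas}. Hence every generator lifts into $\Gl$, and the universal property yields a unique structure-preserving functor $\Syn\to\Gl$ whose two projections recover $\sem{\Dsynsymbol(-)}$ and $\Dsemsymbol\circ\sem{-}$; uniqueness then forces them to agree. This fundamental lemma holds at all types, and specializing it to ground contexts $\reals^{n_i}$ and ground result type $\reals^m$ yields precisely the two claimed identities. The hard part will be the third step: closing the logical-relations construction under both inductive and coinductive types simultaneously, and proving that differentiation of (co)recursively defined programs is tracked correctly — a genuinely novel difficulty (absent in simply typed CHAD) that forces the use of the creation-of-initial-algebras and final-coalgebras results developed earlier in the paper.
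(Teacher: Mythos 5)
Your proposal follows essentially the same route as the paper's own proof: a scone over the concrete semantics (the paper uses a single comma category $\Fscone$ over $\Fam{\Set}\times\Fam{\Vect}\times\Fam{\Vect^{\op}}$ handling both modes at once, rather than two separate glued categories, but this is cosmetic), logical relations at $\reals^n$ defined via differentiable curves and their (transposed) derivatives, closure of the scone under all type formers with the monadicity--comonadicity results creating the $\mu\nu$-polynomials (Corollaries~\ref{coro:creating-munu-polynomials} and~\ref{coro:main-sconing}), lifting of the primitives from the assumption $\sem{D\op}=\Ds{\sem{\op}}$, a fundamental lemma obtained from the universal property of $\Syn$ (Corollary~\ref{cor:universal-property-of-source-language}, giving the paper's Theorem~\ref{theo:correctness-commutative-diagram}), and finally specialization at ground types.

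One step, however, is mis-stated in a way that would not survive scrutiny: you claim the lifted functor $\Syn\to\Gl$ has ``two projections'' recovering $\sem{\Dsynsymbol(-)}$ and $\Dsemsymbol\circ\sem{-}$, and that uniqueness forces these to agree. The projections of the lift recover $\sem{-}$ and $\sem{\Dsyn{-}}$ (the semantics of the \emph{syntactic} transform), not $\Dsemsymbol\circ\sem{-}$; the latter is not a structure-preserving functor on $\Syn$ at all, since no canonical semantic derivative exists at function types or coinductive types --- which is precisely why a uniqueness-of-functors comparison is unavailable and logical relations are needed in the first place. The correct extraction, which your closing sentence gestures at but does not carry out, is: relation-preservation, applied to the identity curve on $\RR^{n_1+\cdots+n_k}$ (an element of the relation at the domain type, since $\Ds{\id}=\id$), shows that $\bigl(\sem{\trm},\sem{\Dsyn{\trm}},\sem{\Dsynrev{\trm}}\bigr)$ lies in the relation at $\reals^m$, and the relation at ground type \emph{by definition} says that the second and third components are $\Ds{\sem{\trm}}$ and $\Dsr{\sem{\trm}}$. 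In the paper this final step is Theorem~\ref{theo-basic-result-for-LR}, resting on the characterization of differentiability by precomposition with curves (Lemma~\ref{lem:differentiability-derivative-characterization}); some such lemma is needed to pass from ``respects all curves'' to ``equals the derivative,'' and your write-up should include it.
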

Once we fix the semantics for the source and target languages, we can show that this theorem holds if we define $\Dsynsymbol$ and $\Dsynrevsymbol$ on programs using the chain rule.
The proof works by plain induction on the syntax.
For example, we can correctly define reverse mode CHAD on a first-order language as follows:
\begin{flalign*}
    &\Dsynrev[\vGamma]{\op(\trm_1,\ldots,\trm_k)} \defeq && \pletin{\var_1}{\var_1'}{\Dsynrev[\vGamma]{\trm_1}}{\cdots\\
    &&& \pletin{\var_k}{\var_k'}{\Dsynrev[\vGamma]{\trm_k}}{\\
    &&&\tPair{\op(\var_1,\ldots,\var_k)}{\lfun\lvar \letin{\lvar}{\transpose{D\op}(\var_1,\ldots,\var_k;\lvar)}{\\
    &&&\phantom{\tPair{\op(\var_1,\ldots,\var_k)}{\lfun\lvar }}\lapp{\var_1'}{\tProj{1}{\lvar}}+\cdots+\lapp{\var_k'}{\tProj{k}{\lvar}}}}}}
\end{flalign*}
\begin{flalign*}
&\Dsynrev[\vGamma]{\var} \defeq &&  \tPair{\var}{\lfun{\lvar} \tCoProj{\idx{\var}{\vGamma}}(\lvar)}
\end{flalign*}
\begin{flalign*}
    &
\Dsynrev[\vGamma]{\letin{\var}{\trm}{\trm[2]}}  
\defeq &&
\pletin{\var}{\var'}{\Dsynrev[\vGamma]{\trm}}{\\ &&&
    \pletin{\var[2]}{\var[2]'}{\Dsynrev[\vGamma,\var]{\trm[2]}}{\\ &&&
        \tPair{\var[2]}{\lfun\lvar 
        \letin{\lvar}{\lapp{\var[2]'}{\lvar}}{
            \tFst\lvar+\lapp{\var'}{(\tSnd \lvar)}
        }}
    }}
\end{flalign*}
\begin{flalign*}&
\Dsynrev[\vGamma]{\tPair{\trm}{\trm[2]}} \defeq && 
\pletin{\var}{\var'}{\Dsynrev[\vGamma]{\trm}}{ \\ &&&
\pletin{\var[2]}{\var[2]'}{\Dsynrev[\vGamma]{\trm[2]}}{\\ &&&
\tPair{\tPair{\var}{\var[2]}}{\lfun\lvar \lapp{\var'}{(\tFst\lvar)} + {\lapp{\var[2]'}{(\tSnd \lvar)}}}}}
\end{flalign*}
\begin{flalign*}&
\Dsynrev[\vGamma]{\tFst\trm} \defeq && 
\pletin{\var}{\var'}{\Dsynrev[\vGamma]{\trm}}
{\tPair{\tFst\var}{\lfun\lvar \lapp{\var'}{\tPair{\lvar}{\zero}}}}
\end{flalign*}
\begin{flalign*}&
\Dsynrev[\vGamma]{\tSnd\trm} \defeq && 
\pletin{\var}{\var'}{\Dsynrev[\vGamma]{\trm}}
{\tPair{\tSnd\var}{\lfun\lvar \lapp{\var'}{\tPair{\zero}{\lvar}}}}
\end{flalign*}
Here, we write $\lfun\lvar\trm$ for a linear function abstraction (merely a notational convention -- it can simply be thought of as a plain function abstraction) and $\lapp{\trm}{\trm[2]}$ for a linear function application (which again can be thought of as a plain function application).
Furthermore,
given
$\Gamma;\lvar:\cty\vdash \trm:\tProd{\cty[2]_1}{\cdots}{\cty[2]_n}$, we write $\Gamma;\lvar:\cty\vdash \tProj{i}(\trm):\cty[2]_i$ for the 
$i$-th projection of $\trm$.
Similarly,
given $\Gamma;\lvar:\cty\vdash \trm:\cty[2]_i$, we write the $i$-th coprojection $\Gamma;\lvar:\cty\vdash 
\tCoProj{i}(\trm)= \tTuple{\zero,\ldots,\zero,\trm,\zero,\ldots,\zero}
:\tProd{\cty[2]_1}{\cdots}{\cty[2]_n}$ and we write $\idx{\var_i}{\var_1,\ldots,\var_n}=i$ for the index of an identifier in a list of identifiers.  
Finally, $\transpose{D\op}$ here is a linear operation that implements the transposed derivative of the primitive operation $\op$.

Note, in particular, that CHAD pairs up primal and (co)tangent values and shares common subcomputations. 
We see that what CHAD achieves is a compositional efficient reverse mode AD algorithm that computes the (transposed) derivatives of a composite program in terms of the (transposed) derivatives $\transpose{D\op}$ of the basic building blocks $\op$.

\subsection{CHAD on a higher-order language: a categorical perspective saves the day}
So far, this account of CHAD has been smooth sailing:
we can simply follow the usual mathematics of (transposed) derivatives of functions $\RR^n\to \RR^m$ 
and implement it in code. 
A challenge arises when trying to extend the algorithm to more expressive languages with features that do not have an obvious counterpart in multivariate calculus, like higher-order functions.

\citep{vakar2021chad,vakar2020reverse} solve this problem by observing that 
we can understand CHAD through the categorical structure of Grothendieck constructions (aka $\Sigma$-types of categories).
In particular, they observe that the syntactic category of the target language for CHAD, a language with both cartesian and linear types, forms a locally indexed category $\LSyn:\CSyn^{op}\to \Cat$, i.e. functor to the category of categories and functors for which $\ob\LSyn(\ty)=\ob\LSyn(\ty[2])$ for all $\ty,\ty[2]\in\ob\CSyn$ and $\LSyn(\ty\xto{\trm}\ty[2]):\LSyn(\ty[2])\to\LSyn(\ty)$ is identity on objects.
Here, $\CSyn $ is the syntactic category whose objects are cartesian types $\ty,\ty[2],\ty[3]$ and morphisms $\ty\to \ty[2]$ are programs $\var:\ty\vdash \trm:\ty[2]$, up to a standard program equivalence.
Similarly, $\LSyn(\ty)$ is the syntactic category whose objects are linear types $\cty,\cty[2],\cty[3]$ and morphisms $\cty[2]\to\cty[3]$ are programs $\var:\ty;\lvar:\cty[2]\vdash \trm:\cty[3]$ of type $\cty[3]$ that have a free variable $\var$ of cartesian type $\ty$ and a free variable $\lvar$ of linear type $\cty[2]$.
The key observation then is the following.
\begin{thmx}[CHAD from a universal property, Corollary \ref{cor:CHAD-definition}] \label{thm:chad-universal-property}
Forward and reverse mode CHAD are the unique structure preserving functors
\begin{align*}
    &\Dsyn{-}:\Syn\to \Sigma_{\CSyn}\LSyn\\
    &\Dsynrev{-}:\Syn\to \Sigma_{\CSyn}\LSyn^{op}
\end{align*}
from the syntactic category $\Syn$ of the source language to (opposite) Grothendieck construction of the target language $\LSyn:\CSyn^{op}\to \Cat$
that send primitive operations $\op$ to their derivative $D\op$ and transposed 
derivative $\transpose{D\op}$, respectively.
\end{thmx}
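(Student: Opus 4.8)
The plan is to exploit the universal property of the source syntactic category $\Syn$. By construction, $\Syn$ is the \emph{free} category carrying the relevant categorical structure---finite products for tuple types, coproducts for sum types, exponentials for function types, initial algebras of the definable polynomial functors for inductive types, and final coalgebras for coinductive types---generated by the ground types $\reals^n$ together with the primitive operations $\op$ as basic morphisms. Concretely, this means that for any category $\catC$ equipped with the same structure, together with (i) a choice of object for each ground type and (ii) a choice of morphism interpreting each primitive operation between the correspondingly generated objects, there is a \emph{unique} structure-preserving functor $\Syn\to\catC$ realising these choices. Thus the theorem reduces to three things: that the (fibrewise-opposite) Grothendieck construction of the target is a category of the required kind, that the CHAD type and operation assignments are legitimate generator data for it, and that the resulting unique functor coincides with the CHAD transformation.

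For the first and most substantial ingredient, I would appeal to the analysis of $\Sigma$-types of categories in \S\ref{sec:grothendieck-constructions}: whenever $\LSyn:\CSyn^{op}\to\Cat$ is a locally indexed category of the appropriate (extensive, suitably (co)complete) kind, both $\Sigma_{\CSyn}\LSyn$ and its fibrewise-opposite counterpart $\Sigma_{\CSyn}\LSyn^{op}$ inherit all of this structure---products, coproducts, exponentials, and initial algebras and final coalgebras of the polynomial functors---with these (co)limits and (co)algebras computed explicitly from the corresponding data in $\CSyn$ and in the fibres of $\LSyn$. This is exactly the structure present in $\Syn$, so both Grothendieck constructions are objects of the same category of structured categories in which $\Syn$ is free.

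Next I would specify the generator data. On ground types we take the object pairing the primal type with its linear (co)tangent type, $\reals^n\mapsto(\reals^n,\creals^n)\in\Sigma_{\CSyn}\LSyn$ (respectively $\Sigma_{\CSyn}\LSyn^{op}$). On primitive operations we set $\op\mapsto(\op,D\op)$ for the forward transformation and $\op\mapsto(\op,\transpose{D\op})$ for the reverse transformation, where the first component is the primal program and the second is the (transposed) derivative, which is a morphism in the relevant fibre of $\LSyn$. One checks these are well-typed generator assignments---that $(\op,D\op)$ and $(\op,\transpose{D\op})$ really are morphisms of $\Sigma_{\CSyn}\LSyn$ and $\Sigma_{\CSyn}\LSyn^{op}$ between the generated objects---which is immediate from the typing of the target primitives. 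The universal property then delivers unique structure-preserving functors $\Dsynsymbol$ and $\Dsynrevsymbol$ extending this data, with uniqueness automatic since any two structure-preserving functors agreeing on generators must coincide.

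Finally, I would reconcile this abstract characterisation with the concrete inductive definition of CHAD (sketched for the first-order fragment and extended in \S\ref{sec:AD-transformations}): since structure preservation forces the action on products, coproducts, exponentials, let-bindings, and (un)rolls to be exactly the clauses produced by functoriality and the chain rule, the unique functor guaranteed above must send each program to the syntactic code transformation defined by those clauses. The main obstacle is the first ingredient---verifying that the Grothendieck construction genuinely carries \emph{inductive and coinductive} type structure, i.e.\ that $\Sigma_{\CSyn}\LSyn$ and $\Sigma_{\CSyn}\LSyn^{op}$ possess initial algebras and final coalgebras of the definable polynomial endofunctors, created from the base and fibre structure. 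Products, coproducts and exponentials of $\Sigma$-types are comparatively routine, but the (co)inductive case requires the creation-of-(co)algebras machinery that is the novel technical heart of the paper, and arranging for the fibrewise-opposite variant to support these simultaneously (so that reverse mode exists) is the delicate point.
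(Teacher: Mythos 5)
Your proposal follows essentially the same route as the paper: the paper proves this statement (Corollary \ref{cor:CHAD-definition}) by combining the universal property of $\Syn$ as the free bicartesian closed category with $\mu\nu$-polynomials on the ground types and primitive operations (Corollary \ref{cor:universal-property-of-source-language}) with the fact that $\Sigma_{\CSyn}\LSyn$ and $\Sigma_{\CSyn}\LSyn^{op}$ are bicartesian closed categories with $\mu\nu$-polynomials (Corollary \ref{cor:categorical-structure-target-language}, resting on the $\Sigma$-bimodel analysis of \S\ref{sec:grothendieck-constructions}), then fixing exactly the generator data $\reals^n\mapsto(\reals^n,\creals^n)$, $\op\mapsto(\op,D\op)$ and $\op\mapsto(\op,\transpose{D\op})$ as you do. The only slip is terminological: $\LSyn$ must be a general \emph{strictly} indexed category rather than a locally indexed one as you write, since extensivity---and hence coproducts in the total categories---fails in the locally indexed setting, a point the paper itself emphasizes.
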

In particular, they prove that this is true for the unambiguous definitions of CHAD for a source language that is the first-order functional language we have considered above, which we can see as the freely generated category $\Syn$ with finite products, generated by the objects $\reals^n$ and morphisms $\op$.
That is, for this limited language, ``structure preserving functor'' should be interpreted as ``finite product preserving functor''.

This leads \citep{vakar2021chad,vakar2020reverse} to the idea to try to use 
Theorem \ref{thm:chad-universal-property} as a definition of CHAD on more expressive programming languages.
In particular, they consider a higher-order functional source language $\Syn$, i.e. the freely generated cartesian closed category on the objects $\reals^n$ and morphisms $\op$ and try to define $\Dsyn{-}$ and $\Dsynrev{-}$ as the (unique)
structure preserving (meaning: cartesian closed) functors to $\Sigma_{\CSyn}\LSyn$ and $\Sigma_{\CSyn}\LSyn^{op}$ for a suitable linear target language $\LSyn:\CSyn^{op}\to \Cat$.
The main contribution then is to identify conditions on a locally indexed category
$\catL:\catC^{op}\to \Cat$ that guarantee that $\Sigma_{\catC}\catL$ and $\Sigma_{\catC}\catL^{op}$ are cartesian closed and to take the target language $\LSyn:\CSyn^{op}\to \Cat$ as a freely generated such category.
\begin{insight}
To understand how to perform CHAD on a source 
language with language feature $X$ (e.g., higher-order functions),
we need to understand the categorical semantics of language feature $X$ (e.g., categorical exponentials) in 
categories of the form $\Sigma_\catC\catL$ and $\Sigma_\catC\catL^{op}$.
Giving sufficient conditions on $\catL$ for such a semantics to exist yields a 
suitable target language for CHAD, with the definition of the algorithm 
falling from the universal property of the source language.
\end{insight}

Furthermore, we observe in these papers that Theorem \ref{thm:chad-correctness} again holds 
for this extended definition of CHAD on higher-order languages.
However, to prove this, plain induction no longer suffices and we instead need to use a logical relations construction over the semantics (in the form of categorical sconing) that relates differentiable curves to their associated primal and (co)tangent curves.
This is necessary because the program $\trm$ may use higher-order constructions such as $\lambda$-abstractions and function applications in its definition, even if the input and output types are plain first-order types that implement some Euclidean space.
\begin{insight}
To obtain a correctness proof of CHAD on source languages with language 
feature $X$, it suffices to give a concrete denotational semantics for the 
source and target languages as well as a categorical semantics of language 
feature $X$ in a category of logical relations (a scone) over these concrete semantics.
The main technical challenge is to analyse logical relations techniques for language 
feature $X$.    
\end{insight}

Finally, these papers observe that the resulting target language can be implemented as a shallowly embedded DSL in standard functional languages, using a module system to implement the required linear types as abstract types, with a reference Haskell implementation available at \url{https://github.com/VMatthijs/CHAD}.
In fact, \citep{vytiniotis2019differentiable} had proposed the same CHAD algorithm for higher-order languages, arriving at it from practical considerations rather than abstract categorical observations.
\begin{insight}
The code generated by CHAD naturally comes equipped with very precise (e.g., linear) types. These types
emphasize the connections to its mathematical foundations and provide scaffolding for its correctness proof.
However, they are unnecessary for a practical implementation of the algorithm: CHAD can be made to generate
 standard functional (e.g., Haskell) code; the type safety can even be rescued by implementing the linear types 
as abstract types. 
\end{insight}

\subsection{CHAD for sum types: a challenge -- (co)tangent spaces of varying dimension}
A natural approach, therefore, when extending CHAD to yet more expressive source languages is to try to use Theorem \ref{thm:chad-universal-property} as a definition.
In the case of sum types (aka variant types), therefore, we should consider their categorical equivalent, distributive coproducts, and seek conditions on $\catL:\catC^{op}\to\Cat$ under which $\Sigma_{\catC}\catL$ and $\Sigma_{\catC}\catL^{op}$ have distributive coproducts.
The difficulty is that these categories tend not to have coproducts if $\catL$ is locally indexed.
Instead, the desire to have coproducts in $\Sigma_{\catC}\catL$ and $\Sigma_{\catC}\catL^{op}$ naturally leads us to consider more general strictly indexed categories $\catL:\catC^{op}\to\Cat$.

In fact, this is compatible with what we know from differential geometry \citep{tu2011manifolds}:
coproducts allow us to construct spaces with multiple connected components, each of which may have a distinct dimension.
To make things concrete: the space $\Dsem[x]{(\RR^2 \sqcup \RR^3)}$ of tangent vectors to $\RR^2 \sqcup \RR^3$ is either $\cRR^2$ or $\cRR^3$ depending on whether the base point $x$ is chosen in the left or right component of the coproduct.
More generally, a differentiable function $f:X\to Y$ between spaces of varying dimension (which can be formalized as manifolds with multiple connected components), induces functions on the spaces of tangent and cotangent vectors\footnote{In the case of tangent vectors, this often presented in terms of the (equivalent) induced lift $(\Sigma_{x\in X} \Dsem[x]X)\to (\Sigma_{y\in Y} \Dsem[y] Y)$ of $f: X\to Y$ to the tangent bundles.}:
\begin{align*}
\Dsem{f}&:\Pi_{x\in X}\Sigma_{y\in Y}(\Dsem[x] X\multimap \Dsem[y]Y)\\
\Dsemrev{f}&:\Pi_{x\in X}\Sigma_{y\in Y}(\Dsemrev[y] Y\multimap \Dsemrev[x]X),
\end{align*}
whose first component is $f$ itself and whose second component is the action on (co)tangent vectors that $f$ induces.

If the types $\Dsyn{\ty}_2$ and $\Dsynrev{\ty}_2$ are to represent spaces of tangent and cotangent vectors to the spaces that $\Dsyn{\ty}_1$ and $\Dsynrev{\ty}_1$ represent, we would expect them to be types that vary with the particular base point (primal) we choose.
This leads to a refined view of CHAD: while $\vdash \Dsyn{\ty}_1:\type $ and $\vdash\Dsynrev{\ty}_1:\type$ can remain (closed/non-dependent) cartesian types, $\pvar:\Dsyn{\ty}_1\vdash \Dsyn{\ty}_2:\ltype$ and
$\pvar:\Dsynrev{\ty}_1\vdash \Dsynrev{\ty}_2:\ltype$
are, in general, linear dependent types.

\begin{insight}
To accommodate sum types in CHAD, it is natural to consider a target language with dependent types: this allows the dimension of the spaces of (co)tangent vectors to vary with the chosen primal.
In categorical terms: we need to consider general strictly indexed categories $\catL:\catC^{op}\to \Cat$ instead of merely locally indexed ones.
\end{insight}
\noindent The CHAD transformations of the program now becomes typed in the following more precise way:
\[
\begin{array}{l}
    \Dsyn{\Gamma}_1\vdash \Dsyn[\vGamma]{\trm}:\Sigma{\pvar:\Dsyn{\ty}_1}.{\Dsyn{\Gamma}_2\multimap \Dsyn{\ty}_2}\\
    \Dsynrev{\Gamma}_1\vdash \Dsynrev[\vGamma]{\trm}:\Sigma{\pvar:\Dsynrev{\ty}_1}.{\Dsynrev{\ty}_2\multimap \Dsynrev{\Gamma}_2},
\end{array}\]
where the action of $\Dsyn{-}_2$ and $\Dsynrev{-}_2$ on typing contexts
$\Gamma=\var_1:\ty_1,\ldots,\var_n:\ty_n$ has been refined~to
\[
    \Dsyn{\Gamma}_2\defeq \tProd{\subst{\Dsyn{\ty_1}_2}{\sfor{\pvar}{\var_1}}}{\cdots}{\subst{\Dsyn{\ty_n}_2}{\sfor{\pvar}{\var_n}}}\qquad\quad
    \Dsynrev{\Gamma}_2\defeq \tProd{\subst{\Dsynrev{\ty_1}_2}{\sfor{\pvar}{\var_1}}}{\cdots}{\subst{\Dsynrev{\ty_n}_2}{\sfor{\pvar}{\var_n}}}.
\]
All given definitions remain valid, where we simply reinterpret some tuples as having a $\Sigma$-type rather than the more limited original tuple type.

We prove the following novel results.
\begin{thmx}[Bicartesian closed structure of $\Sigma$-categories, Prop. \ref{prop:grothendieck-products-covariant} and \ref{theo:grothendieck-products-contravariant}, Theorem \ref{theo:grothendieck-ccc-covariant}, \ref{theo:grothendieck-ccc-contravariant} and \ref{theo:distributive-property-total-category} and Corollary \ref{coro:cocartesianstructure-in-the-cocartesian-csategory} and \ref{coro:cocartesian-structure-GrothCL-contravariant}]
For a category $\catC$ and a strictly indexed category $\catL:\catC^{op}\to \Cat$, $\Sigma_\catC \catL$ and $\Sigma_\catC \catL^{op}$ have
\begin{itemize} 
    \item (fibred) finite products, if $\catC$ has finite coproducts and $\catL$ has strictly indexed products and coproducts;
    \item (fibred) finite coproducts, if $\catC$ has finite coproducts and $\catL$ is extensive;
    \item exponentials, if $\catL$ is a biadditive model of the dependently typed enriched effect calculus (we intentially keep this vague here to aid legibility -- the point is that these are relatively standard conditions).
\end{itemize}
Furthermore, the coproducts in $\Sigma_\catC \catL$ and $\Sigma_\catC \catL^{op}$  distribute over the products, as long as those in $\catC$ do, even in absence of exponentials. 
Notably, the exponentials are not generally fibred over $\catC$.
\end{thmx}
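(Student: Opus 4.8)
The plan is to address each item by one common method. I unfold the definition of the morphisms of the Grothendieck construction so that each hom-set of $\Sigma_\catC\catL$ splits into a base part, a morphism of $\catC$, and a fibre part, a morphism of some fibre $\catL(C)$ obtained after reindexing along the base part. To produce a universal object I then represent these two parts independently: the base part by a suitable finite (co)product of $\catC$, and the fibre part by the indexed structure (reindexing functors and their adjoints) of $\catL$. Since $\Sigma_\catC\catL^\op$ arises from $\Sigma_\catC\catL$ by replacing every fibre $\catL(C)$ with its opposite, the two statements of each item are proved by the same computation with the roles of indexed products and indexed coproducts — and, fibrewise, of products and coproducts, and of powers $\power{X}{Y}$ and copowers $\copower{X}{Y}$ — interchanged. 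I will carry out the covariant case and read off the contravariant one by this dualisation.

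For the finite products I first construct the terminal object as the empty product: it is the empty (co)product of $\catC$ together with the unique object of the resulting fibre, which the hypotheses on $\catL$ render trivial, and its universal property is inherited directly from $\catC$. For the binary product of $(C_1,A_1)$ and $(C_2,A_2)$ I take the base to be the appropriate finite (co)product of $C_1$ and $C_2$, with structure maps $u_i$, and the fibre to be assembled from the reindexings $\catL(u_i)A_i$ using the strictly indexed products of $\catL$ together with the fibrewise product; the contravariant case uses the indexed coproducts instead, which is why both are assumed. The required bijection of hom-sets is then checked componentwise: the base component is discharged by the universal property in $\catC$, and the fibre component by chaining the reindexing adjunctions of $\catL$ and naturality.

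The finite coproducts are where extensivity enters. Extensivity of $\catL$ supplies equivalences $\catL(\initial)\simeq\mathbf 1$ and $\catL(C_1+C_2)\simeq\catL(C_1)\times\catL(C_2)$, so I define the coproduct of $(C_1,A_1)$ and $(C_2,A_2)$ to have base $C_1+C_2$ and fibre the object $\langle A_1,A_2\rangle$ of $\catL(C_1+C_2)$ corresponding to $(A_1,A_2)$ under this equivalence; its universal property reduces, via the equivalence and the coproduct in $\catC$, to a pair of fibre hom-sets. Distributivity is then obtained by exhibiting the canonical comparison morphism between $X\times(Y+Z)$ and $(X\times Y)+(X\times Z)$ out of this data and checking it is invertible; since the comparison is built solely from the structure maps of $\catC$ and the (natural) extensivity equivalences, invertibility reduces to the distributive law of $\catC$, and in particular needs no exponentials.

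The exponential is the main obstacle. Its base component cannot be an exponential of $\catC$ — indeed $\catC$ need not be cartesian closed, which is precisely why the resulting exponentials are not fibred over $\catC$ — so I must build a base object that internalises a primal map $C_1\to C_2$ together with its fibrewise linear action, exactly the shape of the CHAD derivative types. This is what the biadditive dependently typed enriched effect calculus structure on $\catL$ is for: it provides powers $\power{X}{Y}$, copowers $\copower{X}{Y}$, the dependent products and coproducts ($\Sigma\dashv$ reindexing $\dashv\Pi$) and the fibrewise biproducts needed to write the candidate object down. My plan is to define the candidate exponential $(C_1,A_1)\Rightarrow(C_2,A_2)$ from these connectives and then to verify the currying bijection $\Hom\big((D,B)\times(C_1,A_1),(C_2,A_2)\big)\cong\Hom\big((D,B),(C_1,A_1)\Rightarrow(C_2,A_2)\big)$, natural in $(D,B)$, by a single long chain of the defining adjunctions (copower–power, reindexing, and $\Sigma,\Pi$). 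I expect the difficulty to lie here: keeping the interleaved base and fibre bookkeeping coherent through the dependent structure and checking that the same construction serves uniformly for $\Sigma_\catC\catL$ and $\Sigma_\catC\catL^\op$. By contrast, the products, coproducts and distributivity are comparatively routine reductions to the base category and the indexed adjoints of $\catL$.
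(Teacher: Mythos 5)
Your proposal is correct and, in essence, follows the same route as the paper: explicit candidate objects whose base component comes from the cartesian/cocartesian structure of $\catC$ and whose fibre component comes from the indexed structure of $\catL$ (indexed products and coproducts for the products, the extensivity equivalence $\equivalenceextensive^{(W,X)}$ for the coproducts, and the $\Pi$/$\Sigma$/$\multimap$ connectives for the exponentials, i.e.\ exactly the CHAD-shaped object $\Pi_{X}\Sigma_{Y}\,\catL(\pi_1)(x)\multimap\catL(\pi_2)(y)$ paired with a fibre part), each verified by a chain of natural hom-set bijections, with the contravariant statements obtained by the fibrewise dualisation you describe. Two of your verification steps do diverge from the paper's, though both are viable. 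For coproducts, the paper does not check the universal property directly: it first shows (Theorem \ref{theo:adjoint-to-get-coproduct}) that extensivity together with fibrewise initial objects yields adjunctions $\catL(\ic_W)_!\dashv\catL(\ic_W)$ with $\catL(\ic_W)_!=\equivalenceextensive^{(W,X)}\circ(\id,\initial)$, and then feeds these into a general statement about coproducts in total categories (Proposition \ref{prop:corpoducts-in-the-total-category}); your direct splitting of the hom-set through the extensivity equivalence proves the same thing and is arguably shorter. For distributivity, the paper deliberately avoids your step of checking that the \emph{canonical} comparison is invertible: it constructs \emph{some} natural isomorphism $(W,w)\times\bigl((Y,y)\sqcup(Z,z)\bigr)\cong\bigl((W,w)\times(Y,y)\bigr)\sqcup\bigl((W,w)\times(Z,z)\bigr)$ and invokes \citep[Theorem~4]{MR2864762}, which says that the existence of any such natural isomorphism already implies distributivity. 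Your plan also works---a morphism $(f,f')$ of $\GrothC\catL$ is invertible iff both $f$ and $f'$ are---but it costs you the fibre-level bookkeeping of the canonical map through the extensivity equivalences, which is precisely what that citation lets the paper skip.

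One caution on the products. The base of the binary product is the product $C_1\times C_2$ in $\catC$, in \emph{both} $\Sigma_\catC\catL$ and $\Sigma_\catC\catL^{op}$; only the fibre part switches between indexed products and indexed coproducts. Your hedge ``the appropriate finite (co)product of $C_1$ and $C_2$'' (and, for that matter, the theorem statement's hypothesis that $\catC$ have finite \emph{coproducts}, which is a slip---the propositions it cites, Propositions \ref{prop:grothendieck-products-covariant} and \ref{theo:grothendieck-products-contravariant}, assume finite \emph{products} in $\catC$) must not be taken literally: your own formula $\catL(u_i)A_i$ with $A_i\in\catL(C_i)$ only typechecks if the structure maps $u_i$ go \emph{from} the base object \emph{to} $C_i$, i.e.\ are product projections, since reindexing is contravariant. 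Read with that correction, your construction is exactly the paper's.
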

\noindent The crucial notion here is our (novel) notion of extensivity of an indexed category, which generalizes well-known notions of extensive categories.
In particular, we call $\catL:\catC^{op}\to \Cat$ \emph{extensive} if the canonical functor $\catL(\sqcup_{i=1}^n C_i)\to \prod_{i=1}^n \catL(C_i)$ is an equivalence.
Furthermore, we note that we need to re-establish the product and exponential structures of $\Sigma_\catC \catL$ and $\Sigma_\catC \catL^{op}$ due to the generalization from locally indexed to arbitrary strictly indexed categories $\catL$.

Using these results, we construct a suitable target language
$\LSyn:\CSyn^{op}\to\Cat$ for CHAD on a source language with sum types (and tuple and function types), derive the forward and reverse CHAD algorithms for such a language and reestablish Theorems. \ref{thm:chad-correctness} and \ref{thm:chad-universal-property}  
 in this more general context.
This target language is a standard dependently typed enriched effect calculus with cartesian sum types 
and extensive families of linear types (i.e., dependent linear types that can be defined through case distinction).
Again, the correctness proof of Theorem \ref{thm:chad-correctness}  uses the universal property of Theorem \ref{thm:chad-universal-property} and a logical relations (categorical sconing) construction over the denotational semantics of the source and target languages.
This logical relations construction is relatively straightforward and relies on well-known sconing methods for bicartesian closed categories.
In particular, we obtain the following formulas for a sum type $\set{\ell_1\ty_1\mid\cdots\mid \ell_n\ty_n}$ with constructors $\ell_1,\ldots,\ell_n$
that take arguments of type $\ty_1,\ldots,\ty_n$:
\begin{align*}
    &\Dsyn{\set{\ell_1\ty_1\mid\cdots\mid \ell_n\ty_n}}_1 \defeq \set{\ell_1\Dsyn{\ty_1}_1\mid \cdots \mid\ell_n\Dsyn{\ty_n}_1}\\
    &\Dsyn{\set{\ell_1\ty_1\mid\cdots\mid \ell_n\ty_n}}_2\defeq 
    \vMatch{\pvar}{\ell_1\pvar\To \Dsyn{\ty_1}_2\mid\cdots\mid 
    \ell_n\pvar\To\Dsyn{\ty_n}_2}\\
    &\Dsynrev{\set{\ell_1\ty_1\mid\cdots\mid \ell_n\ty_n}}_1 \defeq \set{\ell_1\Dsynrev{\ty_1}_1\mid \cdots \mid\ell_n\Dsynrev{\ty_n}_1}\\
    &\Dsynrev{\set{\ell_1\ty_1\mid\cdots\mid \ell_n\ty_n}}_2\defeq 
    \vMatch{\pvar}{\ell_1\pvar\To \Dsynrev{\ty_1}_2\mid\cdots\mid 
    \ell_n\pvar\To\Dsynrev{\ty_n}_2},
\end{align*}
mirroring our intuition that the (co)tangent bundle to a coproduct of spaces decomposes (extensively) into the (co)tangent bundles to the component spaces.

\subsection{CHAD for (co)inductive types: where do we begin?}
If we are to really push forward the dream of differentiable programming,
we need to learn how to perform AD on programs that operate on data types.
To this effect, we analyse CHAD for inductive and coinductive types.
If we want to follow our previous methodology to find suitable definitions and correctness proofs, we first need a good categorical axiomatization of such types.
It is well-known that inductive types correspond to initial algebras of functors, 
while coinductive types are precisely terminal coalgebras.
The question, however, is what class of functors to consider.
That choice makes the vague notion of (co)inductive types precise.

Following \citep{ITA_2002__36_2_195_0}, we work with the class of \emph{$\mu\nu$-polynomials}, a relatively standard choice: i.e. functors that can be defined inductively through the combination of
\begin{itemize}
\item constants for primitive types $\reals^n$;
\item type variables $\tvar$;
\item unit and tuple types $\Unit$ and $\ty\t*\ty[2]$ of $\mu\nu$-polynomials;
\item sum types $\set{\Cns_1\ty_1\mid\cdots\mid \Cns_n\ty_n}$ of $\mu\nu$-polynomials;
\item initial algebras $\lfp{\tvar}{\ty}$ of $\mu\nu$-polynomials;
\item terminal coalgebras $\gfp{\tvar}{\ty}$ of $\mu\nu$-polynomials.
\end{itemize}
Notably, we exclude function types, as the non-fibred nature of exponentials in  $\Sigma_\catC \catL$ and $\Sigma_\catC \catL^{op}$ would significantly complicate the technical development.
While this excludes certain examples like the free state monad 
(which, for type $\ty[2]$ state would be the intial algebra $\lfp{\tvar}{\set{Get (\ty[2]\To \tvar)\mid Put (\ty[2]\t* \tvar)}}$),
it still includes the vast majority of examples of eager and lazy types that one uses in practice: e.g., lists $\lfp{\tvar}{\set{Empty\,\Unit\mid Cons (\ty[2]\t* \tvar)}}$, (finitely branching) labelled trees like $\lfp{\tvar}{\set{Leaf\,\Unit\mid Node (\ty[2]\t* \tvar\t* \tvar)}}$, streams $\gfp{\tvar}{\ty[2]\t* \tvar}$, and many more.

We characterize conditions on a strictly indexed category $\catL:\catC^{op}\to\Cat$ that guarantee that $\Sigma_\catC\catL$ and $\Sigma_\catC \catL^{op}$ have this precise notion of inductive and coinductive types.
The first step is to give a characterization of initial algebras and terminal coalgebras of split fibration endofunctors on $\Sigma_\catC\catL$ and $\Sigma_\catC \catL^{op}$. 
For legibility, we state the results here for simple endofunctors and (co)algebras, but they 
generalize to parameterized endofunctors and (co)algebras.
\begin{thmx}[Characterization of initial algebras and terminal coalgebras in $\Sigma$-categories, Corollary \ref{coro:finalresult-of-initial-algebras-fibrations} and Theorem \ref{theo:parameterized-terminal-coalgebras-for-indexed-functors}]
Let $E$ be a split fibration endofunctor on $\Sigma_\catC\catL$ (resp. $\Sigma_\catC \catL^{op}$) and let $(\iE,e)$ be the corresponding strictly indexed endofunctor on $\catL$.
Then, $E$ has 
a (fibred) initial algebra if
\begin{itemize}
    \item $\iE:\catC\to\catC$ has an initial algebra $\ind _{\iE}:\iE(\mu\iE)\to \mu\iE$;
    \item $\catL(\ind _{\iE} )^{-1} \iem :\catL(\mu\iE)\to \catL(\mu\iE)$ has an initial algebra (resp. terminal coalgebra);
    \item $\catL(f)$ preserves initial algebras (resp. terminal coalgebras) for all morphisms $f\in \catC$;
\end{itemize}
and $E$ has a (fibred) terminal coalgebra if 
\begin{itemize}
    \item $\iE:\catC\to\catC$ has a terminal coalgebra $\coind _{\iE}:\nu\iE\to \iE(\nu\iE)$;
    \item $\catL(\coind _{\iE} ) \iem :\catL(\nu\iE)\to \catL(\nu\iE)$ has a terminal  coalgebra (resp. initial algebra)
    \item $\catL(f)$ preserves terminal coalgebras (resp. initial algebras) for all morphisms $f\in \catC$.
\end{itemize}
\end{thmx}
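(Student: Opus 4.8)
The plan is to unfold the notion of $E$-algebra (resp.\ $E$-coalgebra) in the total category and to exploit the fact that the split fibration $\Sigma_{\catC}\catL\to\catC$ separates the data into a base part and a fibre part. Write $E=(\iE,e)$, so that $E(C,X)=(\iE C,e_C X)$ with functors $e_C\colon\catL(C)\to\catL(\iE C)$ satisfying the strict naturality $e_D\circ\catL(h)=\catL(\iE h)\circ e_C$ for every $h\colon D\to C$. An $E$-algebra on $(C,X)$ then consists of an $\iE$-algebra $f\colon\iE C\to C$ together with a fibre morphism $g\colon e_C X\to\catL(f)X$ in $\catL(\iE C)$ (in $\Sigma_{\catC}\catL^{op}$ the fibre arrow is reversed). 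First I would project to the base: since $E$ lies over $\iE$ and the projection preserves the relevant universal structure, the base component of any initial $E$-algebra must be the initial $\iE$-algebra, so I fix it to be $\mu\iE$ with structure $\ind_{\iE}$.

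The key move is Lambek's lemma in the base: $\ind_{\iE}$ is an isomorphism, hence $\catL(\ind_{\iE})$ is an isomorphism of fibre categories. Over this isomorphism I can transpose the fibre datum $g$ along $\catL(\ind_{\iE})^{-1}$, turning the ``dialgebra-like'' fibre component into a genuine algebra for the endofunctor $\Phi\defeq\catL(\ind_{\iE})^{-1}\iem$ on $\catL(\mu\iE)$ (and, in the $\Sigma_{\catC}\catL^{op}$ case, into a $\Phi$-coalgebra, because the fibre direction is reversed). I would then take the candidate initial $E$-algebra to be $(\mu\iE,X_0)$, where $(X_0,\alpha)$ is the initial $\Phi$-algebra (resp.\ terminal $\Phi$-coalgebra) supplied by the second hypothesis, with structure map obtained by transporting $\alpha$ back along $\catL(\ind_{\iE})$; this transported map is automatically an isomorphism, so the $E$-algebra structure is a cartesian arrow and the resulting initial algebra is fibred.

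To verify the universal property, given an arbitrary $E$-algebra $((C,X),(f,g))$ I would first obtain the unique base map $h\colon\mu\iE\to C$ from initiality of $\mu\iE$ among $\iE$-algebras; this discharges the base half of the algebra-morphism equation. The fibre half then reduces, after applying $\catL(\ind_{\iE})^{-1}$ and using strict naturality of $e$ together with the base equation $h\circ\ind_{\iE}=f\circ\iE h$, to the statement that the fibre component $k$ is a $\Phi$-algebra morphism from $(X_0,\alpha)$ into the induced $\Phi$-algebra $\catL(\ind_{\iE})^{-1}\catL(\iE h)(g)$ on $\catL(h)X$; initiality of $(X_0,\alpha)$ yields a unique such $k$. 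The three remaining cases (terminal coalgebras in $\Sigma_{\catC}\catL$, and both constructions in $\Sigma_{\catC}\catL^{op}$) follow by dualising initial/terminal and the fibre direction, replacing $\Phi$ by $\Psi\defeq\catL(\coind_{\iE})\ienu$ over the terminal $\iE$-coalgebra $\nu\iE$.

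The third hypothesis — that every reindexing functor $\catL(f)$ preserves initial algebras (resp.\ terminal coalgebras) — is what makes the dualisation and the parameterized refinement go through, and I expect it to be the main obstacle to handle carefully. In the terminal-coalgebra cases the unique base map points \emph{into} $\nu\iE$, so reindexing along $h$ transports the universal fibre (co)algebra \emph{out of} its home fibre $\catL(\nu\iE)$; preservation is precisely the condition guaranteeing that the transported object is still universal in $\catL(C)$, so that the fibre morphism $k$ exists and is unique there. The same preservation property lets the fibrewise construction be assembled into a functor for the parameterized statement (where $\iE$ and $e$ carry an extra parameter), ensuring the family of fibrewise (co)algebras is stable under base change and the induced structure maps are cartesian. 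The principal bookkeeping difficulty throughout is keeping the coherences straight: the contravariance of $\catL$, the strict naturality of $e$ with respect to reindexing, and the systematic flip of initial/terminal caused by passing to $\catL^{op}$ in the fibre.
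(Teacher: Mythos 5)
Your proposal is correct and follows essentially the same route as the paper: you construct the candidate by transposing the fibre datum along the Lambek isomorphism $\catL(\ind_{\iE})^{-1}$ (the paper's endofunctor $\ie=\catL(\ind_{\iE})^{-1}\iem$), take the initial algebra $\mu\ie$ in the fibre over $\mu\iE$, and verify initiality by reindexing an arbitrary $E$-algebra's fibre component into $\catL(\mu\iE)$ and folding there — exactly the paper's Theorem~\ref{theo:fundamental-theorem-on-initial-algebra-semantics=Grothendieck}. You also correctly locate where the preservation hypothesis does real work: in the terminal-coalgebra case the unique base map points into $\nu\iE$, so the universal fibre object must be transported out of $\catL(\nu\iE)$ and remain universal (the paper's Theorem~\ref{theo:fundamental-theorem-on-terminal-coalgebra-Grothendieck}), and in the parameterized refinement it guarantees the split-fibration compatibility (the paper's Eq.~\eqref{coro:strange-condition-preservation} in Corollary~\ref{coro:finalresult-of-initial-algebras-fibrations}).
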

We use this result to give sufficient conditions for (fibred) $\mu\nu$-polynomials (including their fibred initial algebras and terminal coalgebras) to exist in $\Sigma_\catC\catL$ and $\Sigma_\catC\catL^{op}$.
In particular, we show that it suffices to extend the target language $\LSyn:\CSyn^{op}\to \Cat$ with both cartesian and linear inductive and coinductive types to perform CHAD on a source language $\Syn$ with inductive and coinductive types.
Again, an equivalent of Theorem \ref{thm:chad-universal-property} holds.

We write $\tRoll{\var}$ for the constructor of inductive types (applied to an identifier $\var$),
$\tUnroll{\var}$ for the destructor of coinductive types,
and $\invtRoll[\ty]{\var}\defeq \tFold{\var}{\var[2]}{\subst{\ty}{\sfor{\tvar}{\var[2]\vdash \tRoll\var[2]}}}$, where we write 
$\subst{\ty}{\sfor{\tvar}{\var[2]\vdash \tRoll\var[2]}}$ for the functorial action of the parameterized type $\ty$ with type parameter $\tvar$ on the term $\tRoll\var[2]$ in context $\var[2]$.
This yields the following formula for spaces of primals and (co)tangent vectors to (co)inductive types, where:
\begin{align*}
&\Dsyn{\tvar}_1\defeq \tvar \qquad\qquad & \Dsyn{\tvar}_2 = \ltvar\\
&\Dsyn{\lfp{\tvar}{\ty}}_1\defeq \lfp{\tvar}{\Dsyn{\ty}_1}
\qquad\qquad
&\Dsyn{\lfp{\tvar}{\ty}}_2\defeq \llfp{\ltvar}{\subst{\Dsyn{\ty}_2}{\sfor{\pvar}{\invtRoll[\Dsyn{\ty}_1]{\pvar}}}}\\
&\Dsyn{\gfp{\tvar}{\ty}}_1\defeq \gfp{\tvar}{\Dsyn{\ty}_1}
\qquad\qquad
&\Dsyn{\gfp{\tvar}{\ty}}_2\defeq \lgfp{\ltvar}{\subst{\Dsyn{\ty}_2}{\sfor{\pvar}{\tUnroll\pvar}}}\\
&\Dsynrev{\tvar}_1\defeq \tvar \qquad\qquad & \Dsynrev{\tvar}_2 = \ltvar\\
&\Dsynrev{\lfp{\tvar}{\ty}}_1\defeq \lfp{\tvar}{\Dsynrev{\ty}_1}
\qquad\qquad
&\Dsynrev{\lfp{\tvar}{\ty}}_2\defeq \lgfp{\ltvar}{\subst{\Dsynrev{\ty}_2}{\sfor{\pvar}{\invtRoll[\Dsynrev{\ty}_1]{\pvar}}}}\\
&\Dsynrev{\gfp{\tvar}{\ty}}_1\defeq \gfp{\tvar}{\Dsynrev{\ty}_1}
\qquad\qquad
&\Dsynrev{\gfp{\tvar}{\ty}}_2\defeq \llfp{\ltvar}{\subst{\Dsynrev{\ty}_2}{\sfor{\pvar}{\tUnroll\pvar}}}
\end{align*}
\begin{insight}
Types of primals to (co)inductive types are (co)inductive types of primals, types of tangents to (co)inductive types are linear (co)inductive types of tangents, and types of cotangents to inductive types are linear coinductive types of cotangents and vice versa.
\end{insight}

For example, for a type $\ty=\lfp{\tvar}{\set{Empty\,\Unit\mid Cons (\ty[2]\t* \tvar)}}$ of lists of elements of type $\ty[2]$,
we have a cotangent space 
\[
\Dsynrev{\ty}_2 = \lgfp{\ltvar}{\vMatch{\inv\tRoll{\pvar}}{Empty\,\_\To \lUnit\mid Cons\, \pvar\To \subst{\Dsynrev{\ty[2]}_2}{\sfor{\pvar}{\tFst\pvar}}\t*\ltvar}}\qquad\text{where}\]
$\sqinv\tRoll{\pvar}=\tFold{\pvar}{\var[2]}{\vMatch{\var[2]}{Empty\,\var[2]\To Empty\,\var[2]\mid Cons \, \var[2] \To Cons\tPair{\tFst\var[2]}{\tRoll\!(\tSnd\var[2])}}}\hspace{-40pt}$\\[8pt]
and, for a type $\ty=\gfp{\tvar}{\ty[2]\t* \tvar}$ of streams, we have a cotangent space
$$
\Dsynrev{\ty}_2 = \llfp{\ltvar}{\subst{\Dsynrev{\ty[2]}_2}{\sfor{\pvar}{\tFst(\tUnroll\pvar)}}\t*\ltvar}.
$$

We demonstrate that the strictly indexed category $\FVect:\Set^{op}\to \Cat$ of 
families of vector spaces also satisfies our conditions, so it gives a concrete 
denotational semantics of the target language $\LSyn:\CSyn^{op}\to\Cat$,
by Theorem \ref{thm:chad-universal-property}.
To reestablish the correctness theorem \ref{thm:chad-correctness},
existing logical relations techniques do not suffice, as far as we are aware.
Instead, we achieve it by developing a novel theory of categorical logical relations (sconing)
for languages with expressive type systems like our AD source language.
\begin{insight}
    We can obtain powerful logical relations techniques for reasoning 
    about expressive type systems by analysing when the forgetful functor 
    from a category of logical relations to the underlying category is comonadic and monadic.
\end{insight}
\indent In almost all instances, the forgetful functor from a category 
of logical relations to the underlying category is comonadic and in many instances, including ours,
it is even monadic.
This gives us the following logical relations techniques for expressive type systems:
\begin{thmx}[Logical relations for expressive types, \S\ref{sec:subsconing}] \label{thm:logical-relations-expressive} Let $G:\catC\to\catD$ be a functor. We observe
    \begin{itemize}
    \item If $\catD$ has binary products, then the forgetful functor from the scone (the comma category) $\catD\downarrow G\to \catD\times\catC$ is comonadic
    (Theorem~\ref{theo:basic-comonadicity-sconing}).
    \item If $G$ has a left adjoint and $\catC $ has binary coproducts, then $\catD\downarrow G\to \catD\times\catC$ is monadic
    (Corollary~\ref{coro:basic-comonadicity-monadicity-sconing}).
    \end{itemize}
    This is relevant because:
    \begin{itemize}
    \item comonadic functors create initial algebras (Theorem~\ref{theo:creation-of-initial-algebras});
    \item monadic functors create terminal coalgebras (Theorem~\ref{theo:creation-of-initial-algebras});
    \item monadic-comonadic functors create $\mu\nu$-polynomials (Corollary \ref{coro:creating-munu-polynomials});
    \item if $\catE$ is monadic-comonadic over $\catE'$, then $\catE$ is finitely complete cartesian closed if $\catE'$ is (Proposition \ref{prop:closed-structure-comonadic-monadic}). 
    \end{itemize}
\end{thmx}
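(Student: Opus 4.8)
The plan is to prove the two (co)monadicity assertions by the most hands-on route available: exhibit the (co)free object explicitly, and then identify the (co)algebras for the induced (co)monad \emph{on the nose} with the objects of the scone, rather than invoking the Beck (co)monadicity theorem (which is available as a fallback but requires checking reflection of isomorphisms and creation of split (co)equalizers). Throughout, write $U\colon \catD\downarrow G\to\catD\times\catC$ for the forgetful functor, which sends an object $(D,C,f\colon D\to GC)$ of the scone to $(D,C)$ and a scone morphism to its underlying pair $(d,c)$.

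For comonadicity, I would first construct a right adjoint $R$ to $U$, using the binary products of $\catD$: set $R(D,C)\defeq (D\times GC,\, C,\, \pi_2\colon D\times GC\to GC)$. A short transpose calculation shows $U\dashv R$, since a scone morphism $(D',C',f')\to R(D,C)$ is a pair $(d,c)$ whose $GC$-component is forced to equal $G(c)\circ f'$, leaving exactly a free pair $(d_1\colon D'\to D,\, c\colon C'\to C)$. The induced comonad is $T=UR$ with $T(D,C)=(D\times GC, C)$, counit $(\pi_1,\id[C])$, and comultiplication whose first component is $(\id[D\times GC],\pi_2)$. A $T$-coalgebra structure on $(D,C)$ is a pair $(\alpha_1,\alpha_2)$; the counit law forces $\alpha_2=\id[C]$ and $\alpha_1=(\id[D],g)$ for a unique $g\colon D\to GC$, and one checks that coassociativity then holds automatically for \emph{every} such $g$. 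Hence coalgebras are exactly triples $(D,C,g)$ and coalgebra morphisms are exactly scone morphisms, so the comparison functor $\catD\downarrow G\to\mathrm{CoAlg}(T)$ is an isomorphism of categories, establishing comonadicity.

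For monadicity, when $G$ additionally has a left adjoint $F\dashv G$ (so $F\colon\catD\to\catC$) and $\catC$ has binary coproducts, I would dualize the construction to build a left adjoint $L$ to $U$: set $L(D,C)\defeq(D,\, C\sqcup FD,\, G(\iota_2)\circ\eta_D)$, where $\eta$ is the unit of $F\dashv G$ and $\iota_2\colon FD\to C\sqcup FD$ the coproduct injection. Transposing along $F\dashv G$ turns the scone constraint into the condition that the $FD$-component of a map out of $C\sqcup FD$ is determined, whence $L\dashv U$. The induced monad is $S=UL$ with $S(D,C)=(D, C\sqcup FD)$ and unit $(\id[D],\iota_1)$; an $S$-algebra structure $(\beta_1,\beta_2)$ is forced by the unit law to satisfy $\beta_1=\id[D]$ and $\beta_2=[\id[C],h]$ for some $h\colon FD\to C$, equivalently (transposing) a map $D\to GC$, with associativity again automatic. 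Thus $\mathrm{Alg}(S)\cong\catD\downarrow G$, and $U$ is monadic.

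The four ``relevant because'' items are then assembled from the separately stated downstream results: a comonadic functor creates initial algebras and a monadic functor creates terminal coalgebras because the (co)free--forgetful adjunction lets one transport the relevant (co)limit data (e.g.\ the initial/terminal sequence) along the (co)algebra structure; combining both for a simultaneously monadic and comonadic functor yields creation of $\mu\nu$-polynomials and transfer of the finitely complete cartesian closed structure. I expect the main obstacle to lie in the monadic case rather than the comonadic one: one must track the $F\dashv G$ transposes carefully when simultaneously verifying $L\dashv U$ and checking that the algebra laws collapse to \emph{exactly} the scone datum, and one must confirm that associativity (respectively coassociativity) imposes no extra constraint, so that the comparison functor is genuinely an isomorphism and not merely faithful. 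Once the explicit (co)free objects are pinned down, everything else is a routine diagram chase.
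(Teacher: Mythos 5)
Your proof of the two (co)monadicity assertions is correct, but it takes a genuinely different route from the paper's. For comonadicity, the paper constructs the same right adjoint $(D,C)\mapsto (D\times G(C),\,C,\,\pi_2)$, but then invokes a form of Beck's theorem: it shows that the forgetful functor creates absolute limits (using the explicit description of limits in a comma category) and concludes comonadicity from "left adjoint + creates absolute limits". You bypass Beck entirely: you compute the induced comonad $T(D,C)=(D\times G(C),C)$ and check by hand that the counit law forces any $T$-coalgebra structure to be $((\id[D],g),\id[C])$ for a unique $g\colon D\to G(C)$, with coassociativity automatic, so the comparison functor is an isomorphism onto the category of $T$-coalgebras. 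For monadicity the divergence is sharper: the paper never constructs a left adjoint to the forgetful functor at all; it uses the isomorphism $\catD\downarrow G\cong F\downarrow \catC$ (available because $F\dashv G$) and applies the dual of its comonadicity theorem to $F\downarrow\catC\to\catC\times\catD$. You instead exhibit the left adjoint $L(D,C)\defeq (D,\,C\sqcup F(D),\,G(\iota_2)\circ\eta_D)$ explicitly and identify the Eilenberg--Moore algebras of $S=UL$ with scone objects via the transposition $h\colon F(D)\to C\;\leftrightarrow\;\hat{h}\colon D\to G(C)$. I verified the bookkeeping (unit law forcing $\beta=(\id[D],[\id[C],h])$, associativity automatic, algebra morphisms transposing exactly to the scone-square condition), and it is sound. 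What each approach buys: the paper's is shorter given Beck's theorem and the comma-category duality trick; yours is more self-contained, needs no (co)monadicity theorem, and produces explicit formulas for the (co)monads and their (co)algebras, something the paper only gestures at in a remark (identifying the comonad as a free comonad).

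One caveat on the second half of your proposal. The four ``relevant because'' bullets are separately stated and separately proved results in the paper, so deferring to them is legitimate, and that is in effect what the paper's Theorem does as well. However, your one-sentence gloss of \emph{why} (co)monadic functors create initial algebras / terminal coalgebras --- transporting ``the initial/terminal sequence'' along the (co)algebra structure --- is not the paper's mechanism and would not work as stated: creation in the paper's sense involves no chain constructions and no hypotheses guaranteeing such chains converge. The paper's actual argument lifts the adjunction to an adjunction between categories of (co)algebras of compatible endofunctors (doctrinal adjunction), observes that the lifted forgetful functor sits in a commuting square in which all other functors create absolute (co)limits, concludes that the lifted functor is itself (co)monadic, and then extracts creation of terminal (resp.\ initial) objects. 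If you intended your sketch as a proof of those bullets rather than a citation, that part has a gap; as a citation, it is fine.
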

\noindent As a consequence, we can lift our concrete denotational semantics of all types, including inductive and coinductive types to our categories of logical relations over the semantics.

These logical relations techniques are suffient to yield the 
correctness theorem \ref{thm:chad-correctness}.
Indeed, as long as derivatives of primitive operations are correctly implemented in the sense that $\sem{D\op}=D\op$ and $\sem{\transpose{D\op}}=\transpose{D\sem{\op}}$, Theorem \ref{thm:logical-relations-expressive} tells us that 
the unique structure preserving functors 
\begin{align*}&\sPair{\sem{-}}{\sem{\Dsyn{-}}}:\Syn\to \Set\times \Sigma_\Set \FVect\\ & \sPair{\sem{-}}{\sem{\Dsynrev{-}}}:\Syn\to \Set\times \Sigma_\Set\FVect^{op}\end{align*}
lift to the scones of  $\Hom((\RR^k,(\RR^k,\cRR^k)),-) :\Set\times \Sigma_\Set \FVect\to\Set$ and $\Hom((\RR^k,(\RR^k,\cRR^k)),-) :\Set\times \Sigma_\Set \FVect^{op}\to \Set$ where we lift the image of $\reals^n$, respectively, to the logical relations
\begin{align*}
&\set{(f,(g,h))\mid f=g\text{ and } h = Df\phantom{{}^t}}\hookrightarrow (\Set\times \Sigma_\Set \FVect\phantom{{}^{op}})\left((\RR^k,(\RR^k,\cRR^k)), (\RR^n,(\RR^n,\cRR^n))\right)\\ 
&\set{(f,(g,h))\mid f=g\text{ and } h = \transpose{Df}}\hookrightarrow  (\Set\times \Sigma_\Set \FVect^{op})\left((\RR^k,(\RR^k,\cRR^k)), (\RR^n,(\RR^n,\cRR^n))\right).
\end{align*}
We see that $\sem{\Dsyn{\trm}}$ and $\sem{\Dsynrev{\trm}}$ propagate derivatives and transposed derivatives of differentiable $k$-surfaces (differentiable functions $\RR^k\to \mathrm{dom}\sem{\trm}$) correctly for all programs $\trm$.
Seeing that $(\id,(\id,x\mapsto \id))$ is one such $k$-surface in the logical relation associated with $\reals^k$, we see that $(\sem{\trm},(\pi_1\circ\sem{\Dsyn{\trm}},\pi_2\circ\sem{\Dsyn{\trm}}))$ and $(\sem{\trm},(\pi_1\circ \sem{\Dsynrev{\trm}},\pi_2\circ \sem{\Dsynrev{\trm}}))$ are $k$-surfaces in the relations as well, for any $x:\reals^k\vdash \trm:\reals^n$.
That is, Theorem \ref{thm:chad-correctness} holds.

Our novel logical relations machinery is in no way restricted to the context of CHAD, however.
In fact, it is widely applicable for reasoning about total functional languages with expressive type systems.

\subsection{Inductive types and derivatives}
So far, we have only phrased the CHAD correctness theorem \ref{thm:chad-correctness} only for programs $\trm$ with domain/codomain isomorphic to some Euclidean space $\RR^n$, even if $\trm$ may make use of any complex types (including variant, inductive, coinductive and function types) in its computation.
The reason for this restriction, is that this limited context of functions $f:\RR^n\to \RR^m$ is an obvious setting where we have a simple, canonical, unambiguous notion of derivative $\Dsem{f}:\RR^n\to \RR^m\times (\cRR^n\multimap \cRR^m)$, allowing us to phrase an obvious correctness criterion.

More generally, for $f:X\to Y$ where $X$ and $Y$ are manifolds, we also have an unambiguous notion of derivative $\Dsem{f}:\Pi_{x\in X} \Sigma_{y\in Y}\Dsem[x]X\multimap \Dsem[y]Y$, which allows us to strengthen our correctness result.
In fact, for our purposes it suffices to consider the relatively simple context of differentiable functions $f:\coprod\limits_{i\in I}\RR^{n_i}\to \coprod\limits_{j\in J}\RR^{m_j}$
between very simple manifolds that arise as disjoint unions of (finite dimensional) Euclidean spaces.
Such functions $f$ decompose uniquely as copairings $f=[\iota_{\phi(i)}\circ g_i]_{i\in I}$ where we write $\iota_k$ for the $k$-th coprojection and where $\phi:I\to J$ is some function and $g_i:\RR^{n_i}\to \RR^{m_{\phi(j)}}$.
That is, $f$ can be understood as the family $(g_i)_{i\in I}$ and its derivative $\Dsem{f}$ decomposes uniquely as the family of plain derivatives $\Dsem{g_i}$ in the usual sense.
We have a similar decomposition for the transposed derivatives $\Dsemrev{f}$.

This notion of derivatives of functions between disjoint unions of Euclidean spaces is relevant to our context, as we have the following result.
\begin{thmx}[Canonical form of $\mu$-polynomial semantics, Corollary \ref{coro:normal-form-inductive-data-types}]
For any types $\ty_i$ built from Euclidean spaces $\reals^n$, tuple types $\ty_i\t*\ty_j$, variant types $\set{\ell_1\ty_1\mid\cdots\mid \ell_n\ty_n}$, type variables $\tvar$ and inductive types $\mu\tvar.\ty_i$ (so-called $\mu$-polynomials), its denotation $\sem{\ty_i}$ is isomorphic to a manifold of the form
$
\coprod\limits_{i\in I}\RR^{n_i}
$
for some countable set $I$ and some $n_i\in \NN$.
\end{thmx}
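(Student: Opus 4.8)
The plan is to prove the stronger statement by structural induction on the $\mu$-polynomial. Let $\mathcal{M}$ denote the class of objects (of the concrete model in which $\sem{-}$ takes values) that are isomorphic to a countable disjoint union $\coprod_{i\in I}\RR^{n_i}$ of Euclidean spaces. First I would record that $\mathcal{M}$ is closed under the first-order type formers: it contains each $\RR^n$ (take $I$ a singleton) and the empty space $\mathbb{0}=\coprod_{\emptyset}$; it is closed under finite coproducts, since a finite union of countable index sets is countable; and it is closed under finite products, since products distribute over coproducts and $\RR^a\times\RR^b\cong\RR^{a+b}$, giving $\bigl(\coprod_{i\in I}\RR^{n_i}\bigr)\times\bigl(\coprod_{j\in J}\RR^{m_j}\bigr)\cong\coprod_{(i,j)\in I\times J}\RR^{n_i+m_j}$ with $I\times J$ countable. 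To push the induction through the $\mu$-binder, I would strengthen the hypothesis to open types: an open $\mu$-polynomial $\ty$ in free type variables $\tvar_1,\dots,\tvar_k$ denotes a functor $\sem{\ty}\colon\mathcal{M}^k\to\mathcal{M}$ which, after distributing products over sums, has the polynomial normal form $\sem{\ty}(X_1,\dots,X_k)\cong\coprod_{s\in S}\RR^{c(s)}\times\prod_{j=1}^{k}X_j^{\,d_j(s)}$ with $S$ countable and $c(s),d_j(s)\in\NN$ (the index set $S$ is finite when no $\mu$-binder occurs, and only becomes countably infinite once one does). In particular each such functor preserves $\mathcal{M}$ and is finitary, i.e. preserves filtered colimits, in each argument.

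The crux is the inductive-type case $\lfp{\tvar}{\ty}$. Fixing the remaining variables to objects of $\mathcal{M}$, the induction hypothesis presents $\sem{\ty}$ as a finitary polynomial endofunctor $F(X)\cong\coprod_{s\in S}\RR^{c(s)}\times X^{d(s)}$ with $S$ countable. Since $(-)\times\RR^{c(s)}$ is a left adjoint, finite powers are finitary, and coproducts commute with filtered colimits, $F$ preserves $\omega$-colimits, so by Ad\'amek's initial-algebra theorem $\sem{\lfp{\tvar}{\ty}}=\mu F\cong\colim_{n<\omega}F^n\mathbb{0}$. Each $F^n\mathbb{0}$ lies in $\mathcal{M}$ by the closure properties above (finite powers of countable disjoint unions of Euclidean spaces are again such, by the product computation). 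The key remaining observation is that every connecting map $F^n(!)\colon F^n\mathbb{0}\to F^{n+1}\mathbb{0}$ is a coproduct inclusion: the initial map $!\colon\mathbb{0}\to F\mathbb{0}$ is the inclusion of the empty subfamily, and if $g\colon\coprod_{i\in I}\RR^{m_i}\hookrightarrow\coprod_{j\in J}\RR^{m_j}$ is a coproduct inclusion with $I\subseteq J$, then so is $g^{d(s)}$ (using $X^{d}\cong\coprod_{(i_1,\dots,i_d)\in I^{d}}\RR^{m_{i_1}+\cdots+m_{i_d}}$ and $I^{d}\subseteq J^{d}$), hence so is $F(g)$ by distributivity and cocontinuity of coproducts. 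Thus the initial chain is a nested sequence of coproduct inclusions with countable index sets $I_0\subseteq I_1\subseteq\cdots$ and compatible fibre dimensions, and its colimit is $\coprod_{i\in\bigcup_n I_n}\RR^{n_i}$; since a countable union of countable sets is countable, $\mu F\in\mathcal{M}$.

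Assembling these closure results by induction yields the claim for every closed $\mu$-polynomial. I expect the inductive-type case to be the main obstacle, for two reasons. First, one must verify that $\sem{\ty}$ is genuinely finitary, so that Ad\'amek computes $\mu F$ as the $\omega$-colimit rather than a longer transfinite chain; this is exactly where excluding coinductive types $\gfp{\tvar}{\ty}$ and function types is essential, since those would replace the directed union of coproduct inclusions by a limit with infinite-dimensional fibres such as $\RR^{\NN}$, taking us outside $\mathcal{M}$. Second, one must track the coproduct-inclusion structure of the connecting maps carefully enough to identify the colimit with an honest countable disjoint union of Euclidean spaces and to discharge the countability bookkeeping on the index sets. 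The product and coproduct cases, by contrast, are routine applications of distributivity and of countability of finite products and countable unions of countable sets.
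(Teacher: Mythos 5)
Your proposal is correct and takes essentially the same route as the paper's proof (Theorem \ref{THEO:NORMAL-FORM-INDUCTIVE-TYPES}, specialized in Corollary \ref{coro:normal-form-inductive-data-types}): strengthen the induction to \emph{open} $\mu$-polynomials with a countable polynomial normal form $\coprod_{s\in S}\RR^{c(s)}\times\prod_{j}X_j^{d_j(s)}$, dispatch products and coproducts by distributivity and countability, and compute $\mu F$ as the colimit of the Ad\'amek $\omega$-chain $\initial\to F(\initial)\to F^2(\initial)\to\cdots$, exactly as the paper does by observing that these polynomial functors preserve colimits of $\omega$-chains. The only divergence is in bookkeeping for that colimit: you identify the connecting maps as coproduct inclusions and take the union of the countable index sets, whereas the paper writes the colimit directly as the countable coproduct $\coprod_{v}S_v(W)$ of inductively defined layers and then invokes infinite distributivity to restore the normal form --- the same computation organized differently.
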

Consequently, we can strengthen Theorem \ref{thm:chad-correctness} in the following form:
\begin{thmx}[Correctness of CHAD (Generalized), Theorem \ref{theorem:correctness-theorem-for-inductive-data-types}]\label{thm:chad-correctness-strong}
    For any well-typed program $$\var_1:\ty_1,\ldots,\var_k:\ty_n\vdash {\trm}:\ty[2],$$
    where $\ty_i,\ty[2]$ are all (closed) $\mu$-polynomials,
    we have that
    $\sem{\Dsyn[\var_1,\ldots,\var_k]{\trm}}=\Dsem{\sem{\trm}}\;\text{ and }\;\sem{\Dsynrev[\var_1,\ldots,\var_k]{\trm}}=\Dsemrev{\sem{\trm}}.$
    \end{thmx}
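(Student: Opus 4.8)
The plan is to lift the logical-relations (sconing) argument that establishes the base correctness result (Theorem~\ref{thm:chad-correctness}) from Euclidean spaces to arbitrary $\mu$-polynomials, exploiting the fact that the semantics of the latter are, up to isomorphism, disjoint unions of Euclidean spaces. By Corollary~\ref{coro:normal-form-inductive-data-types}, for each boundary type we may fix isomorphisms $\sem{\ty_i}\cong\coprod_{l\in I_i}\RR^{n_l}$ and $\sem{\ty[2]}\cong\coprod_{m\in J}\RR^{p_m}$ with $I_i, J$ countable and $n_l,p_m\in\NN$. This presentation is exactly what makes $\Dsem{\sem{\trm}}$ (and $\Dsemrev{\sem{\trm}}$) unambiguous: any differentiable map between disjoint unions of Euclidean spaces decomposes uniquely as a family of ordinary smooth maps between components, and its derivative is defined to be the corresponding family of ordinary (resp.\ transposed) derivatives. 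So the content of the theorem is precisely that CHAD computes this componentwise derivative.

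First I would construct the two scones over the concrete semantics, living over $\Set\times\Sigma_\Set\FVect$ and $\Set\times\Sigma_\Set\FVect^{op}$ relative to the hom-functors $\Hom((\RR^k,(\RR^k,\cRR^k)),-)$, taking the base relation at each primitive type $\reals^n$ to be
\[
\set{(f,(g,h))\mid f=g\text{ and }h=Df}\qquad\text{and}\qquad\set{(f,(g,h))\mid f=g\text{ and }h=\transpose{Df}}
\]
as displayed above. By Theorem~\ref{thm:logical-relations-expressive} the forgetful functors from these scones are simultaneously monadic and comonadic, hence (Corollary~\ref{coro:creating-munu-polynomials}) create $\mu\nu$-polynomials and (Proposition~\ref{prop:closed-structure-comonadic-monadic}) lift the cartesian closed structure; consequently each scone is a model of the \emph{entire} source language — including the coinductive and function types that $\trm$ may use internally — and, by the universal property (Theorem~\ref{thm:chad-universal-property}), the unique structure-preserving functors $\sPair{\sem{-}}{\sem{\Dsyn{-}}}$ and $\sPair{\sem{-}}{\sem{\Dsynrev{-}}}$ lift along the forgetful functors to the scones. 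The only datum to check by hand is the base case: the hypotheses $\sem{D\op}=D\sem{\op}$ and $\sem{\transpose{D\op}}=\transpose{D\sem{\op}}$ make every primitive operation a morphism of logical relations.

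The heart of the argument is then to identify, for a $\mu$-polynomial boundary type $\ty$, the abstractly constructed relation $R_\ty$ (obtained by closing the base relations under products, coproducts and initial algebras) with the concrete derivative relation $\set{(f,(g,h))\mid f=g,\ h=Df}$ on $k$-surfaces into $\sem{\ty}\cong\coprod_l\RR^{n_l}$. I would prove this by induction on the structure of the $\mu$-polynomial. The product and coproduct cases are routine, using that a $k$-surface into a coproduct factors through a single summand because $\RR^k$ is connected, and that (co)tangent vectors to a product decompose as pairs while those to a coproduct live in a single component. The inductive-type case $\lfp{\tvar}{\ty}$ is where connectedness interacts with the colimit defining the initial algebra: a $k$-surface into $\sem{\lfp{\tvar}{\ty}}$ lands in one connected component, which sits at some finite constructor depth, so the surface factors through a finite stage of the initial-algebra chain, where the abstract and concrete relations already agree by the induction hypothesis; the creation-of-initial-algebras machinery (Theorem~\ref{theo:creation-of-initial-algebras}) ensures that this stagewise agreement assembles into agreement at the colimit.

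Finally, I would conclude by evaluating the lifted functor at the generic surface. The tuple $(\id,(\id,x\mapsto\id))$ is a $k$-surface in the relation associated with $\reals^k$, and after composing with the closed change-of-coordinates isomorphisms supplied by Corollary~\ref{coro:normal-form-inductive-data-types} one obtains, for the given program $\var_1:\ty_1,\ldots,\var_k:\ty_n\vdash\trm:\ty[2]$, a surface into $\sem{\ty_1}\times\cdots\times\sem{\ty_n}$ lying in $R_{\ty_1}\times\cdots\times R_{\ty_n}$; the fundamental lemma then places its image under $\sPair{\sem{-}}{\sem{\Dsyn{-}}}$ in $R_{\ty[2]}$, which by the identification above says exactly that $\sem{\Dsyn[\var_1,\ldots,\var_k]{\trm}}=\Dsem{\sem{\trm}}$, and symmetrically for the reverse mode. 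I expect the main obstacle to be the inductive-type case of the identification lemma: making precise that a differentiable surface into the disjoint-union semantics of an inductive type is confined to a bounded constructor depth, and that the (co)tangent CHAD computes at that depth coincides with the genuine componentwise derivative, is the technical crux distinguishing this generalized result from the Euclidean-only Theorem~\ref{thm:chad-correctness}.
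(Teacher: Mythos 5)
Your proposal is correct in outline and rests on the same logical-relations skeleton as the paper, but it discharges the inductive-type case with a different key lemma, so a comparison is in order. The paper (Sections \ref{sec:logical-relations-argument}--\ref{sect:correctness-inductive-types}) builds a single scone $\Fscone$ over $\Fam{\Set}\times\Fam{\Vect}\times\Fam{\Vect^{\op}}$ (treating both modes at once, and working over $\Fam{\Set}$ so that component-preservation of maps is automatic rather than deduced from connectedness), lifts the semantics to $\forsem{-}:\Syn\to\Fscone$ by the universal property, and establishes the commuting square of Theorem \ref{theo:correctness-commutative-diagram} --- all of which matches your first two paragraphs. The divergence is at the crux: you propose an \emph{identification lemma}, stating that the abstract scone relation $R_\ty$ at every $\mu$-polynomial $\ty$ coincides, in both directions, with the concrete derivative relation, proved by induction with the $\mu$-case handled via factorization of surfaces through finite stages of the initial-algebra chain. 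The paper never characterizes the relation at inductive types at all: it instead proves a normal-form theorem (Theorem \ref{THEO:NORMAL-FORM-INDUCTIVE-TYPES}, Corollary \ref{coro:normal-form-inductive-data-types}), valid in \emph{any} suitable model with infinite coproducts --- including $\Fscone$ itself --- exhibiting $G(\mu E)$ as canonically isomorphic to a countable coproduct of finite products of the $G(\reals^n)$; because these isomorphisms are assembled purely from universal properties and distributivity, their images under $\forpi$ are automatically triples (diffeomorphism, derivative, transposed derivative) (Lemma \ref{lemma:NORMAL-INDUCTIVE-DATA-TYPE}). Correctness then follows by conjugating $\forsem{\trm}$ with these isomorphisms and reusing the coproduct-of-products case (Theorem \ref{theo-basic-result-for-LR}), which needs only the one-directional statement that existence of a lift implies correct derivatives.

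Your route is conceptually sharper (the relation at every data type literally \emph{is} the derivative relation) but demands more, and you should be aware of three repairable soft spots. First, your induction hypothesis must be formulated for \emph{parameterized} polynomials: the finite stages $H^n(\initial)$ through which your surfaces factor are iterated substitution instances and are not structurally smaller than $\mu H$, so a naive structural induction does not close; the paper's own induction over $\MPoly$ is set up in exactly this parameterized form. Second, the assembly of stagewise agreement into agreement at the colimit rests on comonadic creation of \emph{colimits} (Proposition \ref{prop:creation-limits-monadicity}), not on creation of initial algebras per se (Theorem \ref{theo:creation-of-initial-algebras}), since it is the relation part of the $\omega$-chain colimit that must be identified. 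Third, your final evaluation step needs surfaces of every dimension occurring among the components of the (coproduct-shaped) domain, so the scone must be indexed over all $k\in\NN$ simultaneously, as in the paper's \eqref{eq:functor-SCONE}, rather than over a single fixed $k$. None of these is fatal, but the paper's detour through canonical normal forms sidesteps all three at once, which is precisely why the proof is organized that way.
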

Again, $\trm$ can make use of coinductive types and function types in the middle of its computation, but they may not occur in the input or output types.
The reason is that, as far as we are aware, there is no canonical\footnote{In fact, on such infinite dimensional spaces, we have many inequivalent definitions of derivative (that all coicide for finite dimensional spaces) \cite{iglesias2013diffeology,christensen2014tangent}.} notion of semantic derivative for functions between the sort of infinite dimensional spaces that co-datatypes such as coinductive types and function types implement.
This makes it challenging to even phrase what semantic correctness at such types would mean.

\subsection{How does CHAD for expressive types work in practice?}
The CHAD code transformations we describe in this papers are well-behaved in practical implementations in the sense of the following 
compile-time complexity result.
\begin{thmx}[No code blow-up, Corollary \ref{cor:no-code-blow-up}]
    \label{thm:no-code-blow-up}
    The size of the code of the CHAD transformed programs $\Dsyn[\vGamma]{\trm}$ and $\Dsynrev[\vGamma]{\trm}$  grows linearly with the size of the original source program $\trm$.
\end{thmx}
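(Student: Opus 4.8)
The plan is to prove the bound by a straightforward structural induction on the source program $\trm$, exploiting the fact (cf.\ Theorem~\ref{thm:chad-universal-property}) that CHAD is defined \emph{compositionally}: each clause in the definition of $\Dsyn{-}$ and $\Dsynrev{-}$ rewrites the transform of a compound term $c(\trm_1,\dots,\trm_n)$ as a fixed syntactic \emph{template} --- a context with $n$ holes, whose shape depends only on the constructor $c$ and on the types annotating it, not on the subterms $\trm_i$ --- into which the recursively computed transforms $\Dsyn[\vGamma_i]{\trm_i}$ (resp.\ $\Dsynrev[\vGamma_i]{\trm_i}$) are plugged. First I would fix a size measure $|\cdot|$ counting both term and type formers, and, for each source-language constructor $c$, read off from its defining equation the size $k_c$ of its template (as a function of the local type annotations) and the number of holes filled by recursive transforms. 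Setting $C$ to be a suitable constant dominating all the $k_c$ (after accounting for type annotations, see below) reduces the theorem to the uniform claim $|\Dsyn[\vGamma]{\trm}| \le C\,|\trm|$ for all $\vGamma$, and symmetrically for $\Dsynrev[]{}$.

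The heart of the argument, and the property that makes the bound \emph{linear} rather than exponential, is that in every defining clause each recursive transform $\Dsyn[\vGamma_i]{\trm_i}$ occurs \emph{exactly once}. This is visible in the equations displayed above: in the rules for $\op(\trm_1,\dots,\trm_k)$, for $\letin{\var}{\trm}{\trm[2]}$, for $\tPair{\trm}{\trm[2]}$ and for $\tFst\trm,\tSnd\trm$, each subterm's transform is bound once by a (plain or pattern-matching) $\mathbf{let}$, and its \emph{bound variables} --- rather than the transform itself --- are what get reused in assembling the primal and the backpropagator. This is precisely the sharing highlighted in the key-ideas section: the $\mathbf{let}$-bindings prevent the primal value of a subcomputation (and its transform) from being inlined and thereby duplicated, which is the classical source of code blow-up in naive differentiation. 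I would verify by inspection that the same single-occurrence discipline holds for the remaining clauses (sum types and the introduction/elimination forms for inductive and coinductive types), so that every hole is a genuine, non-duplicating occurrence.

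Given these two facts the induction is immediate: for a compound term,
\[
|\Dsyn[\vGamma]{c(\trm_1,\dots,\trm_n)}| \;\le\; k_c + \sum_{i=1}^{n}|\Dsyn[\vGamma_i]{\trm_i}| \;\le\; k_c + C\sum_{i=1}^n |\trm_i| \;\le\; C\,|c(\trm_1,\dots,\trm_n)|,
\]
using $|c(\trm_1,\dots,\trm_n)| = 1 + \sum_i |\trm_i|$ together with $k_c$ being absorbed by $C$, while the base cases (variables, constants) are bounded by $C$ directly; the bound is uniform in the context $\vGamma_i$ since those do not affect sizes. I expect the main obstacle to lie in the (co)inductive clauses rather than the first-order ones. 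There the templates are not literally constant: the transforms of the $\mathbf{roll}$/$\mathbf{unroll}$ and $\mathbf{fold}$/$\mathbf{gen}$ combinators refer to the functorial map term $\subst{\ty}{\sfor{\tvar}{\dots}}$ and to $\invtRoll[\Dsyn{\ty}_1]{\pvar}$, whose sizes grow with the polynomial $\ty$ carried by the construct; likewise the $n$-ary tuple and variant clauses have template size growing with their arity. The careful point is therefore to ensure $|\cdot|$ counts these type annotations, and to check, by a separate but entirely routine induction on the structure of $\ty$, that the functorial map associated with a $\mu\nu$-polynomial $\ty$ and that $\Dsyn{\ty}_1$ (resp.\ $\Dsynrev{\ty}_1$) both have size linear in $|\ty|$. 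Since the type annotations are part of the source program, each per-constructor cost $k_c$ is then bounded by a constant multiple of the local type-annotation size and is absorbed into the same linear estimate, so the argument goes through uniformly.
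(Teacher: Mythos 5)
Your proposal is correct and takes essentially the same route as the paper: the paper derives Corollary~\ref{cor:no-code-blow-up} directly from the observation that each defining clause of $\Dsyn[\vGamma]{-}$ and $\Dsynrev[\vGamma]{-}$ uses the CHAD transform of each subterm exactly once, which is precisely the single-occurrence discipline driving your structural induction. Your extra scrutiny of the (co)inductive clauses and of type-annotation sizes (and, one should add, of the context-length-dependent projections/coprojections in the variable case) addresses details that the paper states the corollary without spelling out.
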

We have ensured to pair up the primal and (co)tangent computations in our CHAD transformation and to exploit any possible sharing of common subcomputations, using $\mathbf{let}$-bindings.
However, we leave a formal study of the run-time complexity of our technique to future work.

As formulated in this paper, CHAD generates code with linear dependent types.
This seems very hard to implement in practice.
However, this is an illusion: we can use the code generated by CHAD and interpret it as less precise types.
We sketch how all type dependency can be erased and how all linear types other than the linear (co)inductive types 
can be implemented as abstract types in a standard functional language like Haskell.
In fact, we describe three practical implementation strategies for our treatment of sum types, none of which 
require linear or dependent types.
All three strategies have been shown to work in the CHAD reference implementation.
We suggest how linear (co)inductive types might be implemented in practice, based on their concrete denotational semantics, but leave the actual implementation to future work.
\section{Background: categorical semantics of expressive total languages}
\label{sec:background-categorical-semantics}
In this section, we fix some notation and recall the well-known abstract categorical semantics 
of total functional languages with expressive type systems \citep{pitts1995categorical,zbMATH00589495,ITA_2002__36_2_195_0}, which builds on the usual semantics of the simply typed $\lambda$-calculus in Cartesian closed categories \citep{lambek1988introduction}.
In this paper, we will be interested in a few particular 
instantiations (or models) of such an abstract categorical semantics $\catC$:
\begin{itemize}
\item the initial model $\Syn$ (\S \ref{sec:source-language}), which represents the programming language under consideration, up to $\beta\eta$-equivalence; this will be the source language of our AD code transformation;
\item the concrete denotational model $\Set$  (\S \ref{sec:concrete-models}) in terms of sets and functions, which represents our default denotational semantics of the source language;
\item models $\Sigma_{\catC}\catL$ and $\Sigma_{\catC}\catL^{op}$  (\S \ref{sec:grothendieck-constructions}) in the the $\Sigma$-types of suitable indexed categories $\catL:\catC^{op}\To\Cat$;
\item in particular, the models $\Sigma_{\CSyn}\LSyn$ and $\Sigma_{\CSyn}\LSyn^{op}$  (\S \ref{sec:target-language}) built out of the target language, which yield forward and reverse mode CHAD code transformations, respectively;
\item sconing (categorical logical relations) constructions $\Fscone$ and $\Rscone$  (\S \ref{sec:subsconing}) over the models 
$\Set\times \Sigma_{\Set}\FVect$ and $\Set\times \Sigma_{\Set}\FVect^{op}$ that yield the correctness arguments for forward and reverse mode CHAD, respectively,
where $\FVect:\Set^{op}\to \Cat$ is the strictly indexed category of families of real vector spaces.
\end{itemize}
We deem it relevant to discuss the abstract categorical semantic framework for our language as we need these various instantiations of the framework.

\subsection{Basics}
We use standard definitions from category theory; see, for instance, \citep{zbMATH03367095, zbMATH06319720}.
A category $\catC$ can be seen as a semantics for a typed functional programming language, whose types correspond to objects of $\catC$ and whose programs that take an input of type $A$ and produce an output of type $B$ are represented by the homset $\catC(A,B)$.
Identity morphisms $\id[A]$ represent programs that simply return their input (of type $A$) unchanged as output and composition $g\circ f$ of morphisms $f$ and $g$ represents running the program $g$ after the program $f$.
Notably, the equations that hold between morphisms represent program equivalences that hold for the particular notion of semantics that $\catC$ represents.
Some of these program equivalences are so fundamental that we demand them as structural equalities that need to hold in any categorical model (such as the associativity law $f\circ (g \circ h)=(f\circ g)\circ h$).
In programming languages terms, these are known as the $\beta$- and $\eta$-equivalences of programs.

\subsection{Tuple types}

Tuple types represent a mechanism for letting programs take more than one input or produce more than one output.
Categorically, a tuple type corresponds to a product $\prod_{i\in I}A_i$ of a \textit{finite} family of types $\set{A_i}_{i\in I}$, which we also write $\terminal$ or $A_1\times A_2$ in the case of nullary and binary products. For basic aspects of products, we refer the reader to \citep[Chapter~III]{zbMATH03367095}.

We write $\seq[i\in I]{f_i}:C\to \prod_{i\in I}A_i$ for the product pairing of $\set{f_i:C:A_i}_{i\in I}$ and 
$\proj{j}:\prod_{i\in I}A_i\to A_j$ for the $j$-th product projection, for $j\in I$.
As such, we say that a categorical semantics $\catC$ models 
(finite) tuples if $\catC$ has (chosen) finite products.

\subsection{Primitive types and operations}
We are interested in programming languages that have support for a certain set $\Ty$ of ground types such as integers and (floating point) real numbers as well as certain sets $\Op(T_1,\ldots, T_n; S)$, for $T_1,\ldots,T_n,S\in\Ty$, of operations on these basic types such as addition, multiplication, sine functions, etc. 
We model such primitive types and operations categorically by demanding that our category has a distinguished object $C_T$ for each $T\in \Ty$ to represent the primitive types and a distinguished morphism $f_{\op}\in \catC(C_{T_1}\times \ldots\times C_{T_n}, C_S)$ for all primitive operations $\op\in \Op(T_1,\ldots, T_n; S)$. For basic aspects of categorical type theory, see, for instance, \citep[Chapters~3\&4]{zbMATH00589495}.

\subsection{Function types}
Function types let us type popular higher order programming idioms such as maps and folds, which capture common control flow abstractions.
Categorically, a type of functions from $A$ to $B$ is modelled as an exponential $A\Rightarrow B$.
We write $\ev:(A\Rightarrow B)\times A\To B$ (evaluation) for the co-unit of the adjunction $(-)\times A\dashv A\Rightarrow(-)$ and $\Lambda$ for the Currying natural isomorphism $\catC(A\times B, C)\To \catC(A,B\Rightarrow C)$.
We say that a categorical semantics $\catC$ with tuple types models function types if $\catC$ has a chosen right adjoint  $(-)\times A\dashv A\Rightarrow(-)$.

\subsection{Sum types  (aka variant types)}
Sum types (aka variant types) let us model data that exists in multiple different variants and branch in our code on these different possibilities.
Categorically, a sum type is modelled as a coproduct $\coprod_{i\in I}A_i$ of a collection of a finite family $\set{A_i}_{i\in I}$ of types, which we also write $\initial$ or $A_1\sqcup A_2$ in the case of nullary and binary coproducts.
We write $\coseq[i\in I]{f_i}:\coprod_{i\in I}C_i\to A$ for the copairing of $\set{f_i:C_i\to A}_{i\in I}$ and $\coproj{j}:A_j\to \coprod_{i\in I}A_i$ for the $j$-th coprojection.
In fact, in presence of tuple types, a more useful programming interface is obtained if one restricts to 
\emph{distributive} coproducts, i.e. coproducts  $\coprod_{i\in I}A_i$ such that the map $\coseq[i\in I]{\pair{\coproj{i}\circ\proj{1}}{\proj{2}}}:\coprod_{i\in I}(A_i\times B)\To (\coprod_{i\in I}A_i)\times B$ is an isomorphism; see, for instance, \citep{MR1201048, MR2864762}.
Note that in presence of function types, coproducts are automatically distributive since the left adjoint functors $(-)\times A$ preserve colimits; see, for instance, \citep[6.3]{zbMATH06319720}.
As such, we say that a categorical semantics $\catC$ models (finite) sum types if $\catC$ has (chosen) finite distributive coproducts.

\subsection{Inductive and coinductive types}\label{SUBSECT:INDUCTIVE-COINDUCTIVE-TYPES}
We employ the usual semantic interpretation of \textit{inductive and coinductive types} as, respectively, \textit{initial algebras} and \textit{terminal coalgebras} of a certain class of functors. We refer the reader, for instance, to \citep[Chapter~9]{MR2178101}, \citep{ITA_2002__36_2_195_0} and \citep{adamekinformal}.

Most of this section is dedicated to describing precisely which class of functors we consider initial algebras and terminal coalgebras, a class we call \emph{$\mu\nu$-polynomials}.
Roughly speaking, we define $\mu\nu$-polynomials to be functors that can be constructed from products, coproducts, projections, diagonals, constants, initial algebras and terminal  coalgebras.

To fix terminology and for future reference of the detailed constructions, we recall below basic aspects of parameterized initial algebras and parameterized terminal  coalgebras.

\begin{definition}[The category of $E$-algebras]
Let  $E : \catD\to \catD$ be an endofunctor.
The category of $E$-algebras, denoted by $E\AAlg$, is defined as follows. The objects are pairs
$(W, \zeta ) $ in which $W\in \catD $ and $ \zeta : E(W)\to W $ is a morphism of $\catD $.
A morphism between $E$-algebras  $(W, \zeta ) $ and $(Y, \xi) $ is a morphism 
$g: W\to Y $ of $\catD $ such that 
\begin{equation}
	\begin{tikzcd}
		E(W) \arrow[rr, "E(  g)"] \arrow[swap,dd, "\zeta"] && E(Y) \arrow[dd, "\xi"] \\
		&&\\
		W \arrow[swap, rr, "g"] && Y                              
	\end{tikzcd}
\end{equation}
commutes. Dually, we define the category $E\CCoAlg$ of $E$-coalgebras by
\begin{equation}
	E\CCoAlg := \left(E^{\op}\AAlg\right) ^{\op}
\end{equation}
in which $E^\op : \catD ^\op\to \catD ^\op $ is the image of $E$ by $\op :\Cat\to \Cat $.
\end{definition}
\begin{definition}[Initial algebra and terminal  coalgebra] 
Let  $E : \catD\to \catD$ be an endofunctor.
Provided that they exist, the initial object $(\mu E, \ind _E ) $ of $E\AAlg$ and the
terminal  object  $(\nu E, \coind _E ) $ of $E\CCoAlg $ are respectively referred to as the initial $E$-algebra and the terminal  $E$-coalgebra.
\end{definition}

\begin{remark} 
By Lambek's Theorem,  
provided that the initial algebra $(\mu E, \ind _E ) $ of an endofunctor $E$ exists, we have that $\ind _E$   is invertible.
Dually, we get the result for terminal  coalgebras.
\end{remark} 

Assuming the existence of the initial $E$-algebra and the terminal  $E$-coalgebra, we denote by
\begin{equation}\label{eq:basic-fold-semantics-appendix}
	\fold_E (Y, \xi): \mu E \to Y, \quad \unfold_E (X, \varrho ): X\to \nu E 
\end{equation}
the unique morphisms in $\catD $ such that
\begin{equation}
	\begin{tikzcd}
		E(\mu E) \arrow[rrr, "{E\left(\fold_E (Y, \xi)\right)}"] \arrow[swap,dd, "\ind_E"] &&& E(Y) \arrow[dd, "\xi"] \\
		&&&\\
		\mu E \arrow[swap, rrr, "{\fold_E (Y, \xi)}"] &&& Y                              
	\end{tikzcd}\qquad
	\begin{tikzcd}
		X \arrow[rrr, "{\unfold_E (X, \varrho)}"] \arrow[swap,dd, "\varrho"] &&& \nu E \arrow[dd, "\coind _E"] \\
		&&&\\
		E(X) \arrow[swap, rrr, "{E\left(\unfold_E (X, \varrho)\right)}"] &&& E(\nu E)                              
	\end{tikzcd}
\end{equation}
commute.  Whenever it is clear from the context, we denote $ \fold_E (Y, \xi) $ by $\fold_E \xi $, and  $\unfold_E (X, \varrho )$
by $\unfold_E  \varrho  $.

Given a functor
$	H : \catD '\times\catD \to \catD $ and an object
$X $ of $\catD ' $, we denote by $H^X $ the endofunctor 
\begin{equation}\label{eq:basic-functor-inductive-types-restricted}
	H(X, -): \catD \to \catD .
\end{equation}   
In this setting, if $\mu H^X$ exists for any object $X\in\catD'$ then the universal properties of the initial algebras induce a functor denoted by $\mu H : \catD' \to \catD$, called the parameterized initial algebra. In the following, we spell out how to construct parameterized initial algebras and terminal  coalgebras.

\begin{proposition}[$\mu$-operator and $\nu$-operator]\label{prop:general_parameterized_initial_algebras}
	Let $ H:\catD '\times \catD \to\catD $ be a functor. Assume that, for each object $X\in\catD '$, the functor $H^X = H(X,-) $ is such that
	$\mu H ^X $ exists. In this setting, we have the induced functor
	\begin{eqnarray*}
		\mu H :  \catD ' & \to     & \catD\\
		X            & \mapsto & \mu H^X\\
		\left( f: X\to Y \right) & \mapsto & \fold _{H^X} \left( \ind_{H^Y}\circ  H(f, \mu H ^Y)\right).
	\end{eqnarray*}	
	Dually, assuming that, for each object $X\in\catD '$, 	$\nu H ^X $ exists, we have the induced functor
	\begin{eqnarray*}
		\nu H :  \catD ' & \to     & \catD\\
		X            & \mapsto & \nu H^X\\
		\left( f: X\to Y \right) & \mapsto & \unfold _{H^Y} \left(  H(f, \nu H ^X)\circ \coind_{H^X}\right).
	\end{eqnarray*}	
\end{proposition}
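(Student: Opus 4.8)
The plan is to verify that the object assignment $X\mapsto\mu H^X$ together with the stated morphism assignment defines a functor $\mu H:\catD'\to\catD$; the statement for $\nu H$ then follows by duality, applying the result to $H^\op:(\catD')^\op\times\catD^\op\to\catD^\op$ and using $E\CCoAlg=(E^\op\AAlg)^\op$, which exchanges initial algebras with terminal coalgebras and $\fold$ with $\unfold$. First I would check that the morphism assignment is well defined: for $f:X\to Y$, the composite $\ind_{H^Y}\circ H(f,\mu H^Y)$ has domain $H(X,\mu H^Y)=H^X(\mu H^Y)$ and codomain $\mu H^Y$, so it equips $\mu H^Y$ with the structure of an $H^X$-algebra. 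Consequently $\fold_{H^X}\!\left(\ind_{H^Y}\circ H(f,\mu H^Y)\right):\mu H^X\to\mu H^Y$ is the unique $H^X$-algebra morphism out of the initial algebra $(\mu H^X,\ind_{H^X})$, and $\mu H(f)$ is a genuine morphism of $\catD$.

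For preservation of identities, bifunctoriality of $H$ gives $H(\id[X],\mu H^X)=\id[{H(X,\mu H^X)}]$, so the induced $H^X$-algebra structure on $\mu H^X$ reduces to $\ind_{H^X}$ itself and $\mu H(\id[X])=\fold_{H^X}(\ind_{H^X})$. Since $\id[{\mu H^X}]$ is visibly an $H^X$-algebra endomorphism of $(\mu H^X,\ind_{H^X})$, the uniqueness clause defining $\fold$ forces $\mu H(\id[X])=\id[{\mu H^X}]$.

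The crux is preservation of composition. Given $f:X\to Y$ and $g:Y\to Z$, set $h\defeq\mu H(g)\circ\mu H(f)$. By uniqueness of $\fold$ it suffices to prove that $h$ is an $H^X$-algebra morphism from $(\mu H^X,\ind_{H^X})$ to $(\mu H^Z,\ind_{H^Z}\circ H(g\circ f,\mu H^Z))$, that is,
\[
h\circ\ind_{H^X}=\ind_{H^Z}\circ H(g\circ f,\mu H^Z)\circ H(X,h),
\]
for this exhibits $h=\mu H(g\circ f)$. Starting from $h\circ\ind_{H^X}$, I would rewrite in three moves. First, the fold square for $\mu H(f)$ turns $\mu H(f)\circ\ind_{H^X}$ into $\ind_{H^Y}\circ H(f,\mu H^Y)\circ H(X,\mu H(f))$. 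Second, the fold square for $\mu H(g)$ turns the exposed $\mu H(g)\circ\ind_{H^Y}$ into $\ind_{H^Z}\circ H(g,\mu H^Z)\circ H(Y,\mu H(g))$. Third, the interchange law $H(a,b)\circ H(a',b')=H(a\circ a',b\circ b')$ collapses the accumulated composite of four $H$-arrows to $H(g\circ f,h)$, which factors as $H(g\circ f,\mu H^Z)\circ H(X,h)$, yielding exactly the displayed equation.

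The main obstacle is precisely this composition calculation: it forces one to interleave the two fold-compatibility squares with the two-variable functoriality of $H$, tracking how substitutions in the first and second arguments of $H$ recombine; neither ingredient alone suffices. Once this bookkeeping is complete, functoriality of $\mu H$ is established, and the dual construction of $\nu H$ (with the roles of source and target object swapped, as reflected in the formula $f\mapsto\unfold_{H^Y}(H(f,\nu H^X)\circ\coind_{H^X})$) is obtained formally by the duality noted above.
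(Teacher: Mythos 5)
Your proposal is correct and follows essentially the same route as the paper's proof: identities are handled via uniqueness of $\fold$ after noting $H(\id,\mu H^X)=\id$, and composition is handled by showing $\mu H(g)\circ\mu H(f)$ is an $H^X$-algebra morphism into $\left(\mu H^Z,\ \ind_{H^Z}\circ H(g\circ f,\mu H^Z)\right)$ using the two fold squares and bifunctoriality of $H$, then invoking uniqueness. The only differences are cosmetic bookkeeping (the paper interleaves the interchange law with each fold square, writing $H(f,\mu H(f))$, while you defer it to a single collapse at the end) and your slightly more explicit treatment of well-definedness and of the duality yielding $\nu H$.
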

\begin{proof}
	We assume that the functor $ H:\catD '\times \catD \to\catD $  is such that, for any object $X\in\catD ' $, $\mu H ^X $ exists. 	For each morphism $f: X\to Y $, we define $\mu H (f) = \fold _{H^X} \left(  \ind_{H^Y}\circ  H(f, \mu H ^Y)\right) $ as above. We prove below that this makes $\mu H(f) $ a functor.
	
	Given $X\in\catD '$,
	\begin{align*}
		& \mu H( \id _  X ) \\
		&  = \fold _{H^X} \left(  \ind_{H^X}\circ  H( \id _ X , \mu H ^X)\right) \\
		& = \fold _{H^X} \left(  \ind_{H^X} \right) \\
		& = \id _{\mu H ^X} .
	\end{align*}
	Moreover, given morphisms $f: X\to Y $ and $g: Y\to Z $ in $\catD '$, we have that
	\begin{align*}
		&  \mu H (g) \circ \mu H (f)  \circ \ind _{H^X} \\
		& = \mu H (g) \circ \ind _{H ^Y }\circ  H \left( f, \mu H (f ) \right)   \explainr{$\mu H(f) = \fold _{H^X} \left(  \ind_{H^Y}\circ  H(f, \mu H ^Y)\right) $} \\
		& = \ind _{H ^Z}\circ H \left( g, \mu H ( g ) \right) \circ  H \left( f, \mu H (f ) \right)   \explainr{$\mu H(g) = \fold _{H^Y} \left(  \ind_{H^Z}\circ  H(g, \mu H ^Z)\right) $}\\
		& = \ind _{H ^Z}\circ H \left( g f, \mu H ( g )\circ\mu H (f ) \right) \\
		& =   \ind _{H ^Z}\circ H \left(  g f , \mu H ^Z \right)\circ H \left( X, \mu H ( g )\circ \mu H (f )  \right)
	\end{align*}
	and, hence, the diagram
	\begin{equation*}
		\begin{tikzcd}
			H \left( X, \mu H^X \right) \arrow[rrr, "{H \left( X, \mu H ( g )\circ \mu H (f )  \right)}"] \arrow[swap,ddd, "{ \ind_{H ^X} }"] &&& H \left( X, \mu H^Z \right) \arrow[ddd, "{\ind _{H ^Z}\circ H \left(  g\circ f , \mu H ^Z \right) }"] \\
			&&&\\
			&&&\\
			\mu H ^X \arrow[swap, rrr, "{\mu H ( g )\circ \mu H (f )}"] &&& \mu H^Z                               
		\end{tikzcd}
	\end{equation*}
	commutes. By the universal property of the initial algebra $\left( \mu H^X, \ind _{H ^X}\right) $,
	we conclude that 
	\begin{align*}
		&  \mu H (g) \circ \mu H (f)   \\
		& = \fold _{H^X}\left( \ind _{H ^Z}\circ H \left(  g\circ f , \mu H ^Z \right) \right) \\
		& = \mu H ( g\circ f ) . \explainr{definition}
	\end{align*}
\end{proof}

It is worth noting that in Proposition \ref{prop:general_parameterized_initial_algebras}, $\catD'$ can be any category. However, in the standard setting of initial algebra semantics, there is a special interest in the case where $\catD' = \catD^{n-1}$ and $n>1$, which is described below.

\begin{proposition}[Parameterized initial algebras and terminal  coalgebras]\label{prop:basic_parameterized_initial_algebras}
	Let $ H:\catD ^n\to\catD $ be a functor in which $n>1 $. Assume that, for each object $X\in\catD ^{n-1}$,
	$\mu H ^X $ exists. In this setting, we have the induced functor
	\begin{eqnarray*}
		\mu H :  \catD ^{n-1} & \to     & \catD\\
		X            & \mapsto & \mu H^X\\
		\left( f: X\to Y \right) & \mapsto & \fold _{H^X} \left(  \ind_{H^Y}\circ  H(f, \mu H ^Y)\right).
	\end{eqnarray*}	
	Dually, if $\nu H ^X $ exists for any $X\in\catD ^{n-1}$, we have the induced functor
	\begin{eqnarray*}
		\nu H :  \catD^{n-1} & \to     & \catD\\
		X            & \mapsto & \nu H^X\\
		\left( f: X\to Y \right) & \mapsto & \unfold _{H^Y} \left(  H(f, \nu H ^X)\circ \coind_{H^X}\right).
	\end{eqnarray*}	
\end{proposition}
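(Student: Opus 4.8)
The plan is to observe that this statement is nothing more than the specialization of Proposition \ref{prop:general_parameterized_initial_algebras} to the particular parameter category $\catD' \defeq \catD^{n-1}$. The remark immediately preceding the statement already flags this: Proposition \ref{prop:general_parameterized_initial_algebras} was proved for an \emph{arbitrary} category $\catD'$, and here we merely instantiate $\catD'$ to be a finite power of $\catD$. Consequently, the entire content of the argument is an identification of data, and no new functoriality computation is required.

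Concretely, I would first invoke the canonical isomorphism of categories $\catD^n \cong \catD^{n-1}\times\catD$, under which the given functor $H:\catD^n\to\catD$ is presented as a functor $\catD^{n-1}\times\catD\to\catD$. Under this identification the endofunctor $H^X=H(X,-):\catD\to\catD$ of the present statement coincides, for each $X\in\catD^{n-1}$, with the endofunctor $H(X,-)$ to which Proposition \ref{prop:general_parameterized_initial_algebras} refers. The standing hypothesis that $\mu H^X$ (respectively $\nu H^X$) exists for every $X\in\catD^{n-1}$ is therefore exactly the hypothesis of Proposition \ref{prop:general_parameterized_initial_algebras} with $\catD'=\catD^{n-1}$. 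Applying that proposition then produces the induced functor $\mu H:\catD^{n-1}\to\catD$ (respectively $\nu H:\catD^{n-1}\to\catD$), and the formulas it supplies for the actions on objects, $X\mapsto\mu H^X$, and on morphisms, $(f:X\to Y)\mapsto\fold_{H^X}(\ind_{H^Y}\circ H(f,\mu H^Y))$, are verbatim the formulas asserted here. The dual claim for $\nu H$ is obtained in the same way, using the coalgebraic half of Proposition \ref{prop:general_parameterized_initial_algebras}.

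I do not expect any genuine obstacle, since the statement is a corollary of an already-established result. The only point meriting a moment's attention is the bookkeeping around the identification $\catD^n\cong\catD^{n-1}\times\catD$: one should check that this isomorphism is strictly compatible with the formation of the endofunctors $H^X$ and with the morphisms $\ind$, $\coind$, $\fold$ and $\unfold$, so that the transported formulas hold on the nose rather than merely up to isomorphism. This compatibility is immediate from the strictness of products in $\Cat$, so the proposition follows at once.
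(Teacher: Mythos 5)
Your proposal is correct and matches the paper's own treatment: the paper states this proposition without a separate proof, precisely because (as the remark preceding it notes) it is the instantiation of Proposition \ref{prop:general_parameterized_initial_algebras} at $\catD' = \catD^{n-1}$ under the identification $\catD^n \cong \catD^{n-1}\times\catD$. Your additional remark about the strictness of this identification is a reasonable bookkeeping check but introduces no new content.
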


In order to model inductive and coinductive types coming from parameterized types not involving function types,
we introduce the following notions.

\begin{definition}[$\mu\nu$-polynomials]\label{def:basic-definition-munupolynomials}
	Assuming that $\catD $ has finite coproducts and finite products, 
	the category $\mnPoly _ \catD $  
	is the smallest subcategory of $\Cat $ satisfying the following.
\begin{enumerate}
\item[O)] The objects are defined inductively by:
	\begin{enumerate}[O1)]
		\item the terminal   category $\terminal $ is an object of $\mnPoly _ \catD $;
		\item the category $\catD $ is an object of $\mnPoly _ \catD $; 
		\item for any pair of objects  $\left( \catD ', \catD ''  \right) \in \mnPoly _ \catD\times \mnPoly _ \catD $, the product $\catD '\times \catD '' $  is an object of  $\mnPoly _ \catD $.
	\end{enumerate}
\item[M)] The morphisms satisfy the following properties:
	\begin{enumerate}[M1)]
		\item\label{eq:munu-M1} for any object $\catD '$ of $\mnPoly _ \catD $, the unique functor 
		$\catD '\to \terminal $ is a morphism of $\mnPoly _ \catD $; 
		\item\label{eq:munu-M2} for any object $\catD '$ of $\mnPoly _ \catD $,  all the functors $\terminal \to \catD ' $ are morphisms of $\mnPoly _ \catD $;
		\item\label{eq:munu-M3}  the binary product $\times : \catD \times\catD   \to \catD $ is a morphism of  
		$\mnPoly _ \catD $;
		\item\label{eq:munu-M4}  the binary coproduct $\sqcup  : \catD\times \catD  \to \catD $ is a morphism of
		$\mnPoly _ \catD $;
		\item\label{eq:munu-M5} for any pair of objects  $\left( \catD ', \catD ''  \right) \in \mnPoly _ \catD\times \mnPoly _ \catD $,  the projections 
		\begin{equation*}
		\pi _1 : \catD '\times \catD '' \to \catD ',\qquad   \pi _2 : \catD '\times \catD '' \to \catD ''
		\end{equation*}
		are morphisms of  $\mnPoly _ \catD $; 
		\item\label{eq:munu-M6} given objects $ \catD ', \catD '' ,  \catD '''$  of $\mnPoly _\catD  $, if $E: \catD '   \to \catD ''    $ and $J : \catD '  \to \catD '''    $ are morphisms of $\mnPoly _ \catD $, then so is the induced functor $(E,J) :\catD '   \to \catD '' \times \catD '''   $; 
		\item\label{eq:munu-M7}   if $\catD '$ is an object of $\mnPoly _ \catD $,  $H: \catD '\times \catD   \to\catD  $ is a morphism of $\mnPoly _ \catD $  and $\mu H : \catD ' \to \catD    $ exists,
		then $\mu H $ is a morphism of  $\mnPoly _ \catD $;
		\item\label{eq:munu-M8}  if $\catD '$ is an object of $\mnPoly _ \catD $,  $H: \catD '\times \catD     \to\catD $ is a morphism of $\mnPoly _ \catD $  and $\nu H : \catD '  \to \catD   $ exists,
	then $\nu H $ is a morphism of  $\mnPoly _ \catD $.
	\end{enumerate}
\end{enumerate} 
	We say that $\catD $ has \textit{$\mu\nu$-polynomials} if $\catD $ has finite coproducts and products and, for any endofunctor $ E: \catD \to\catD  $ in $ \mnPoly _ \catD $, $\mu E $ and $\nu E $ exist.
	We say that $\catD $ has \textit{chosen} $\mu\nu$-polynomials if we have additionally made a choice of initial algebras and terminal   coalgebras for all $\mu\nu$-polynomials.
\end{definition}

\begin{remark}[Self-duality]\label{remark:sel-duality-munupolynomials}
	A category $\catD $ has $\mu\nu$-polynomials if and only if $\catD^{\op} $ has $\mu\nu$-polynomials as well.
\end{remark}

Another suitably equivalent way of defining $\mnPoly _ \catD$ is the following. The category
$\mnPoly _ \catD$ is the smallest subcategory of $\Cat $ such that:
\begin{enumerate}[-]
	\item the inclusion $\mnPoly _ \catD\to\Cat $ creates finite products;
	\item 	$\catD$ is an object of the subcategory $\mnPoly _ \catD$;
	\item for any object $\catD '$ of $\mnPoly _ \catD$, all the functors $\terminal \to \catD ' $ are morphisms of $\mnPoly _ \catD$;
	\item and the binary product $\times : \catD \times\catD  \to \catD $ is a morphism of  $\mnPoly _ \catD $;
	\item  the binary coproduct $\sqcup  : \catD\times \catD \to \catD $ is a morphism of  $\mnPoly _ \catD $;
	\item if $\catD '$ is an object of $\mnPoly _ \catD $, $H: \catD '\times \catD  \to\catD   $ is a morphism of $\mnPoly _ \catD $  and $\mu H : \catD ' \to \catD   $ exists,
	then $\mu H $ is a morphism of  $\mnPoly _ \catD $;
	\item  if $\catD '$ is an object of $\mnPoly _ \catD $, $H: \catD '\times \catD  \to\catD    $ is a morphism of $\mnPoly _ \catD $  and $\nu H : \catD '\to \catD    $ exists,
	then $\nu H $ is a morphism of  $\mnPoly _ \catD $.
\end{enumerate}	

\begin{lemma}\label{lem:about-completeness-of-munupolynomials}
Let $\catC$ be a category with $\mu\nu$-polynomials. If $\catD $ is an object of $\mnPoly _ \catC $ and  $$H: \catD \times \catC     \to\catC $$ is a functor in $\mnPoly _ \catC $,
	then $\mu H : \catD\to\catC $ and $\nu H : \catD\to\catC $  exist (and, hence, they are morphisms of $\mnPoly _ \catC $).
\end{lemma}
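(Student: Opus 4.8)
The plan is to reduce the existence of the parameterized initial algebra $\mu H$ and terminal coalgebra $\nu H$ to the hypothesis that $\catC$ has $\mu\nu$-polynomials, which directly guarantees initial algebras and terminal coalgebras only for \emph{endofunctors} of $\catC$ lying in $\mnPoly_\catC$. The bridge between the two-variable functor $H:\catD\times\catC\to\catC$ and such endofunctors is Proposition~\ref{prop:general_parameterized_initial_algebras}: it tells us that $\mu H$ and $\nu H$ exist as soon as the partial applications $H^X = H(X,-):\catC\to\catC$ admit an initial algebra and a terminal coalgebra for every object $X\in\catD$. So the whole argument hinges on recognizing each $H^X$ as an object of the ``allowed'' class.

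The crux is therefore to show that each $H^X$ is itself a morphism of $\mnPoly_\catC$, i.e.\ an endofunctor of $\catC$ in the class. First I would factor $H^X$ as the composite
\[
\catC \xrightarrow{\ (c_X,\,\id[\catC])\ } \catD\times\catC \xrightarrow{\ H\ } \catC,
\]
where $c_X:\catC\to\catD$ is the constant functor at $X$. Since $\mnPoly_\catC$ is a subcategory of $\Cat$, it contains $\id[\catC]$ (as $\catC$ is one of its objects) and is closed under composition, so it suffices to check that $c_X$ is a morphism and then assemble. The constant functor $c_X$ factors as $\catC \xrightarrow{!} \terminal \xrightarrow{X} \catD$, where the first leg is a morphism by condition M1) and the second---being a functor out of $\terminal$---is a morphism by condition M2); hence $c_X$ is a morphism of $\mnPoly_\catC$. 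Condition M6) then makes the pairing $(c_X,\id[\catC]):\catC\to\catD\times\catC$ a morphism, and composing with the morphism $H$ shows $H^X\in\mnPoly_\catC$.

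With $H^X$ recognized as an endofunctor of $\catC$ in $\mnPoly_\catC$, the hypothesis that $\catC$ has $\mu\nu$-polynomials yields that $\mu H^X$ and $\nu H^X$ exist, for every $X\in\catD$. Proposition~\ref{prop:general_parameterized_initial_algebras} then assembles these fibrewise (co)algebras into the parameterized functors $\mu H:\catD\to\catC$ and $\nu H:\catD\to\catC$, establishing their existence. Finally, conditions M7) and M8) of Definition~\ref{def:basic-definition-munupolynomials} conclude that $\mu H$ and $\nu H$ are themselves morphisms of $\mnPoly_\catC$, as claimed in the parenthetical remark.

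I expect no serious obstacle: the only point requiring care is the bookkeeping that places $H^X$ in $\mnPoly_\catC$, and even this is routine once $H^X$ is rewritten through the constant functor $c_X$ and conditions M1), M2), M6) are invoked. The genuine content of the lemma is that ``having $\mu\nu$-polynomials'' is stable under passing from honest endofunctors to parameterized ones, and this stability is precisely what Proposition~\ref{prop:general_parameterized_initial_algebras} packages for us.
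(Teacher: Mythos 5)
Your proof is correct and follows essentially the same route as the paper's: both factor $H^X$ through the terminal category (constant functor at $X$ paired with the identity), conclude $H^X$ is an endomorphism in $\mnPoly_\catC$, invoke the hypothesis that $\catC$ has $\mu\nu$-polynomials, and then apply Proposition~\ref{prop:general_parameterized_initial_algebras} to assemble $\mu H$ and $\nu H$. Your explicit appeal to M7) and M8) for the parenthetical claim is a small additional detail the paper leaves implicit, but it does not change the argument.
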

\begin{proof}
Let $X$ be any object of $ \catD $. Denoting by $X: \terminal \to \catD $ the functor constantly equal to $X $, the functor $H^X $ is the composition below.
\begin{equation*}
\begin{tikzcd}
\catC \arrow[rr, "{ \left( 1, \id _{\catC} \right) }"] 
\arrow[swap,bend right=12, rrrrrrrrrr, "{ H^X }"] && 
\terminal\times\catC 
\arrow[rrrr, "{ \left( X\circ \pi _1 , \id _{\catC}\circ\pi _2 \right) }"] 
&&&&\catD\times\catC \arrow[rrrr, "{ H }"] &&&& \catC
\end{tikzcd}
\end{equation*}
Since all the morphisms above are in  $ \mnPoly _ \catC $, we conclude that $H^X $ is an endomorphism of $ \mnPoly _ \catC $. Therefore, since $\catC $ has $\mu\nu$-polynomials,
$\mu H^X $ and $\nu H ^X $ exist.

By Proposition \ref{prop:general_parameterized_initial_algebras}, since $\mu H^X $ and $\nu H ^X $ exist for any 
$X$ in $\catD $, $\mu H $ and $\nu H  $ exist. 
\end{proof}

We say that a categorical semantics $\catC$ with (finite) sum and tuple types supports inductive and coinductive types if $\catC$ has chosen $\mu\nu$-polynomials.
Note that we do not consider the more general notion of (co)inductive types defined by endofunctors that may contain function types in their construction.

\section{Structure-preserving functors}\label{sect:structure-preserving-functors}
In this paper, the definition of our AD macro, the definitions of the concrete semantics and logical relations are all framed in terms of appropriate structure-preserving functors. This fact highlights the significance of the suitable notions of structure-preserving functors in our work. 

A structure-preserving functor between bicartesian closed categories are, of course, bicartesian closed functors. We usually assume that those are strict, which means that the functors preserve the structure on the nose.
 
It remains to establish the notion of structure-preserving functor between categories with $\mu\nu$-polynomials. We do it below, starting by establishing the notion of preservation/creation/reflection of initial algebras and terminal coalgebras. 
\subsection{Preservation, reflection and creation of initial algebras}\label{sect:appendix-has-inductive-types}

We begin by recalling a fundamental result on lifting functors from the base categories to the categories of algebras in Lemma \ref{lem:comparsion-of-algebras-basic-lemma}. This is actually related to the universal property of the categories of algebras.

\begin{lemma}\label{lem:comparsion-of-algebras-basic-lemma}
	Let $F : \catD\to\catC $ be a functor. 
	Given endofunctors $E : \catC\to\catC $, $E': \catD\to\catD $ and a natural transformation $\gamma : E  \circ F\longrightarrow F\circ E '   $, we have an induced functor defined by
	\begin{eqnarray*}
		\check{F}_{\gamma } : & E'\AAlg       & \to E\AAlg  \\
		&  \left( X, \zeta \right) & \mapsto \left( F(X), F ( \zeta ) \circ \gamma _X \right)\\
		& g      & \mapsto  F(g).
	\end{eqnarray*}               
\end{lemma}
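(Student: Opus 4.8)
The plan is to verify directly that $\check{F}_\gamma$ is a well-defined functor, the only genuine ingredient being the naturality of $\gamma$. First I would check that the object assignment is well-typed. For an $E'$-algebra $(X,\zeta)$ the structure map is $\zeta : E'(X)\to X$, and the component $\gamma_X : E(F(X))\to F(E'(X))$ composes with $F(\zeta) : F(E'(X))\to F(X)$ to give $F(\zeta)\circ\gamma_X : E(F(X))\to F(X)$; this is exactly an $E$-algebra structure on $F(X)$, so $(F(X), F(\zeta)\circ\gamma_X)$ is a legitimate object of $E\AAlg$.

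Next, and this is the heart of the argument, I would show that the morphism assignment lands in $E\AAlg$. Let $g : (X,\zeta)\to (Y,\xi)$ be a morphism of $E'$-algebras, i.e. $g\circ\zeta = \xi\circ E'(g)$. I must check that $F(g)$ satisfies the $E$-algebra commutation square for the structure maps $F(\zeta)\circ\gamma_X$ and $F(\xi)\circ\gamma_Y$. The plan is to run the chain
\begin{align*}
F(g)\circ F(\zeta)\circ\gamma_X &= F(g\circ\zeta)\circ\gamma_X = F(\xi\circ E'(g))\circ\gamma_X\\
&= F(\xi)\circ F(E'(g))\circ\gamma_X = F(\xi)\circ\gamma_Y\circ E(F(g)),
\end{align*}
where the first two equalities use functoriality of $F$ together with the algebra-morphism law for $g$, and the last equality is precisely the naturality square of $\gamma : E\circ F\to F\circ E'$ evaluated at $g$, namely $F(E'(g))\circ\gamma_X = \gamma_Y\circ E(F(g))$. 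This exhibits $F(g)$ as a morphism $(F(X),F(\zeta)\circ\gamma_X)\to (F(Y),F(\xi)\circ\gamma_Y)$ in $E\AAlg$.

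Finally, functoriality of $\check{F}_\gamma$ is immediate: morphisms in both $E'\AAlg$ and $E\AAlg$ compose as their underlying $\catD$- and $\catC$-morphisms, and since $\check{F}_\gamma$ sends $g$ to $F(g)$, preservation of identities and composites follows directly from functoriality of $F$. I do not expect any real obstacle here; the single point requiring care is correctly reading off the variance of $\gamma$ and invoking its naturality at the right morphism in the penultimate step of the displayed computation.
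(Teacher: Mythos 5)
Your proof is correct and follows essentially the same route as the paper's: the displayed chain of equalities (functoriality of $F$ plus the algebra-morphism law for $g$, then naturality of $\gamma$) is exactly the paper's computation, and both arguments conclude functoriality of $\check{F}_\gamma$ directly from that of $F$. The only addition is your explicit well-typedness check for the object assignment, which the paper leaves implicit.
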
	
\begin{proof}
	Indeed, if $g : W\to Z $ is the underlying morphism of an algebra morphism between $(W, \zeta ) $ and $(Z, \xi ) $, we have that
	\begin{align*}
		&  F(g)\circ  F( \zeta ) \circ \gamma _W  \\
		& = F( \xi ) \circ FE '(g) \circ \gamma _ W \explainr{$f:(W, \zeta )\to (Z, \xi)$}\\
		& =  F( \xi ) \circ \gamma _Z \circ EF (g) \explainr{naturality of $\gamma $}
	\end{align*}
	which proves that $F(g) $ in fact gives a morphism between the algebras $ \left( F(W), F(\zeta ) \circ \gamma _W \right) $ and $\left( F (Z), F (\xi ) \circ \gamma _Z \right) $. The functoriality of $\check{F} _\gamma $ follows, then, from that of $F$.
\end{proof}

Dually, we have:

\begin{lemma}\label{lem:comparsion-of-terminal-coalgebras-basic-lemma}
	Let $E : \catC\to\catC $, $G : \catC\to\catD $, and $E': \catD\to\catD $ be functors. 
	Each natural transformation $\beta :G\circ E \longrightarrow  E'\circ G$ induces a 
	functor 
	\begin{eqnarray*}
		\tilde{G}^{\beta } : & E\CCoAlg      & \to E'\CCoAlg  \\
		& (W, \xi ) & \mapsto  \left( G(W), \beta _W\circ G ( \xi ) \right)\\
		& f      & \mapsto  G(f).
	\end{eqnarray*}               
\end{lemma}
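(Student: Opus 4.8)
The plan is to derive this statement as the formal dual of Lemma~\ref{lem:comparsion-of-algebras-basic-lemma}, exploiting the definition $E\CCoAlg := \left(E^{\op}\AAlg\right)^{\op}$. Concretely, I would instantiate Lemma~\ref{lem:comparsion-of-algebras-basic-lemma} on the opposite categories: take its functor to be $G^{\op} : \catC^{\op}\to\catD^{\op}$, its two endofunctors to be $E'^{\op} : \catD^{\op}\to\catD^{\op}$ and $E^{\op} : \catC^{\op}\to\catC^{\op}$, and its required natural transformation to be $\beta^{\op}$. Since $(E'\circ G)^{\op} = E'^{\op}\circ G^{\op}$ and $(G\circ E)^{\op} = G^{\op}\circ E^{\op}$, the transformation $\beta : G\circ E\longrightarrow E'\circ G$ dualizes to $\beta^{\op} : E'^{\op}\circ G^{\op}\longrightarrow G^{\op}\circ E^{\op}$, which is exactly a natural transformation of the shape demanded by Lemma~\ref{lem:comparsion-of-algebras-basic-lemma} (with its $E$ replaced by $E'^{\op}$, its $E'$ by $E^{\op}$, and its $F$ by $G^{\op}$).

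Applying the lemma then produces a functor $\check{(G^{\op})}_{\beta^{\op}} : E^{\op}\AAlg\to E'^{\op}\AAlg$, and I would define $\tilde{G}^{\beta}$ to be its opposite $\left(\check{(G^{\op})}_{\beta^{\op}}\right)^{\op} : E\CCoAlg\to E'\CCoAlg$. Unwinding the definition of $\check{(-)}$ on an object of $E^{\op}\AAlg$, i.e.\ an $E$-coalgebra $(W,\xi)$ with $\xi : W\to E(W)$, yields the pair $\left(G(W),\,\beta_W\circ G(\xi)\right)$, and on a morphism it yields $G(f)$, matching the claimed formulas exactly.

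For completeness, and to make the $\op$-bookkeeping unambiguous, I would also record the direct verification. First I check that the object assignment is well-defined: $G(\xi) : G(W)\to G(E(W))=(G\circ E)(W)$ and $\beta_W : (G\circ E)(W)\to (E'\circ G)(W)=E'(G(W))$, so $\beta_W\circ G(\xi)$ is a legitimate $E'$-coalgebra structure on $G(W)$. Then, for a coalgebra morphism $f:(W,\xi)\to(Z,\rho)$ — i.e.\ $E(f)\circ\xi = \rho\circ f$ — I compute $(\beta_Z\circ G(\rho))\circ G(f) = \beta_Z\circ G(\rho\circ f) = \beta_Z\circ G(E(f)\circ\xi) = \beta_Z\circ (G\circ E)(f)\circ G(\xi)$, and then invoke naturality of $\beta$ to rewrite $\beta_Z\circ (G\circ E)(f) = E'(G(f))\circ\beta_W$, obtaining $E'(G(f))\circ(\beta_W\circ G(\xi))$; this is precisely the condition that $G(f)$ is an $E'$-coalgebra morphism. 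Functoriality of $\tilde{G}^{\beta}$ is then immediate from that of $G$.

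I expect no genuine obstacle here, as the content is entirely formal. The only point that demands care — and the one I would double-check first — is the variance bookkeeping in the dualization: verifying that $\beta$ dualizes to a transformation of exactly the direction $E'^{\op}\circ G^{\op}\longrightarrow G^{\op}\circ E^{\op}$, so that the induced functor lands in $E'\CCoAlg$ (and not, say, in some mismatched coalgebra category) after re-opping. The direct computation above doubles as an independent sanity check that the roles of $E$, $E'$, and the direction of $\beta$ have been matched correctly.
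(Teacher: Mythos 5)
Your proposal is correct and matches the paper's own treatment: the paper derives Lemma~\ref{lem:comparsion-of-terminal-coalgebras-basic-lemma} simply by duality from Lemma~\ref{lem:comparsion-of-algebras-basic-lemma} (introducing it with ``Dually, we have:''), which is exactly your dualization via $E\CCoAlg = \left(E^{\op}\AAlg\right)^{\op}$, and your variance bookkeeping ($\beta^{\op} : E'^{\op}\circ G^{\op}\longrightarrow G^{\op}\circ E^{\op}$) is right. Your additional direct verification mirrors the paper's proof of Lemma~\ref{lem:comparsion-of-algebras-basic-lemma} itself and serves as a valid sanity check, but adds nothing beyond what the duality already gives.
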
	

Below, whenever we talk about strict preservation,
we are assuming that we have chosen initial objects (terminal objects) in the respective categories of (co)algebras.

We can, now, establish the definition of preservation, reflection and creation of initial algebras 
using the respective notions for the induced functor. More precisely:

\begin{definition}[Preservation, reflection and creation of initial algebras]\label{def:preservation-of-initial-algebras-of-a-specific-endofunctor}\label{def:preservation-of-initial-algebras}
	We say that a functor $F : \catD\to \catC $ \textit{(strictly) preserves the initial algebra/reflects the initial algebra/creates the initial algebra of the endofunctor $E: \catC\to \catC $} if, whenever $E' : \catD\to\catD $ is such that $\gamma : E\circ F\cong F\circ E '  $ (or, in the strict case, $F\circ E ' = E\circ F$),
	the functor 
	\begin{eqnarray*}
		\check{F}_ \gamma  : & E'\AAlg       & \to E\AAlg  \\
		&  \left( X , \zeta \right) & \mapsto  \left( F (X), F ( \zeta )\circ \gamma _X  \right)\\
		& g      & \mapsto  F (g).
	\end{eqnarray*} 	 
	induced by $\gamma$  strictly) preserves the initial object/reflects the initial object/creates the initial object.

Finally, we say that a functor $F : \catD\to \catC $ \textit{(strictly) preserves initial algebras/reflects initial algebras/creates initial algebras} if $F$ (strictly) preserves initial algebras/reflects initial algebras/creates initial algebras of any endofunctor on $\catD $.
\end{definition}

\begin{remark}	
	In other words, let $F : \catD\to \catC $ be a functor. 
	\begin{enumerate}[(I)]
		\item We say that $F$ (strictly) preserves initial algebras, if:  for any natural isomorphism  $\gamma : E\circ F\cong F\circ E ' $ (or, in the strict case, for each identity $E\circ F = F\circ E ' $) in which $E$ and $E'$ are endofunctors, assuming that  $ \left( \mu E', \ind _{E'} \right) $ is the initial $E '$-algebra, the $E$-algebra $ \left( F \left( \mu E'\right), F \left( \ind _{E'} \right)\circ \gamma _{\mu E' } \right) $ is an initial object of $E\AAlg $ (the chosen initial object of $E\AAlg $, in the strict case).
		\item We say that $F$ reflects initial algebras, if:  for any natural isomorphism  $\gamma : E\circ F\cong F\circ E '  $ 
		in which $E $ and $E'$ are endofunctors, if  $ \left( F(Y), F\left( \xi\right)\circ \gamma _ Y  \right) $ is an initial $E $-algebra and
		$(Y, \xi) $ is an $E'$-algebra, then 	$(Y, \xi) $ is an initial $E'$-algebra.
		\item We say that $F$ creates initial algebras if: (A) $F$ reflects and preserves initial algebras and, moreover, (B) for any $\gamma : E\circ F\cong F\circ E '  $ in which $E$ and $E'$ are endofunctors, $E '\AAlg $ has an initial algebra if $E\AAlg $ does.	
	\end{enumerate}
\end{remark}

\begin{definition}[Preservation, reflection and creation of terminal coalgebras]\label{def:preservation-of-terminal-coalgebras-specific-endofunctor}\label{def:preservation-of-terminal-coalgebras}
	We say that a functor $G : \catC\to \catD $ \textit{(strictly) preserves the initial algebra/reflects the initial algebra/creates the initial algebra} of an endofunctor $E:\catC\to\catC $  if, for any 
	natural isomorphism  $\beta : G\circ E \cong E'\circ G $ (or, in the strict case, $GE = E'G$),
	the functor 
	\begin{eqnarray*}
		\tilde{G}^\beta : & E\CCoAlg       & \to E'\CCoAlg \\
		& \left( W, \xi \right) & \mapsto \left( G(W), \beta_W\circ G ( \xi ) \right)\\
		& f     &  \mapsto G(f).
	\end{eqnarray*} 	 
	induced by $\beta$  (strictly) preserves the terminal object/reflects the terminal object/creates the terminal object.
	
	Finally, we say that \textit{$G : \catC\to \catD$ (strictly) preserves terminal coalgebras/reflects terminal coalgebras/creates terminal coalgebras} if $G$ (strictly) preserves terminal coalgebras/reflects terminal coalgebras/creates terminal coalgebras of any endofunctor on $\catC $.	
\end{definition}

\subsection{$\mu\nu$-polynomial preserving functors} 
Finally, we can introduce the concept of a structure-preserving functor for $\mu\nu$-polynomials.
\begin{definition}
A functor $G: \catD\to\catC$ \textit{(strictly) preserves $\mu\nu$-polynomials} if it strictly preserves finite coproducts, finite products, as well as initial algebras and terminal coalgebras of $\mu\nu$-polynomials.
\end{definition}

\section{An expressive functional language as a source language for AD}\label{sec:source-language}
We describe a source language for our AD code transformations.
We consider a standard total functional programming language with an expressive 
type system, over ground types $\reals^n$ for arrays of real numbers of static 
length $n$, for all $n\in \NN$, and sets $\Op_{n_1,...,n_k}^m$ of primitive operations $\op$, for all $k, m, n_1,\ldots, n_k\in \NN$.
These operations $\op$ will be interpreted as differentiable functions $(\RR^{n_1}\times \cdots\times \RR^{n_k})\To \RR^m$ and the reader can keep the following examples in mind:
\begin{itemize}
  \item constants $\cnst\in \Op_{}^n$ for each $c\in \RR^n$, for which 
   we slightly abuse notation and write $\cnst(\tUnit)$ as $\cnst$;
    \item elementwise addition and product $(+),(*)\!\in\!\Op_{n,n}^n$
    and matrix-vector product $(\star)\!\in\!\Op_{n\cdot m, m}^n$;
    \item operations for summing all the elements in an array: $\tSum\in\Op_{n}^1$;
    \item some non-linear functions like the sigmoid function $\sigmoid\in \Op_{1}^1$.
\end{itemize} 

Its kinds, types and terms are generated by the 
grammar in Fig. \ref{fig:sl-terms-types-kinds}.
\begin{figure}[!ht]
  \framebox{\begin{minipage}{0.98\linewidth}\qquad
\input{sl-terms-types-kinds}
  \end{minipage}}
\caption{\label{fig:sl-terms-types-kinds} Grammar for the kinds, types 
and terms of the source language for our AD transformations.}
\end{figure}
We write $\Delta\vdash\ty:\type$ to specify that the type $\ty$ is \emph{well-kinded} in 
\emph{kinding context} $\Delta$, where $\Delta$ is a list of the form $\alpha_1:\type,\ldots,\alpha_n:\type$.
The idea is that the type variables identifiers $\alpha_1,\ldots, \alpha_n$ can be used in the formation of $\ty$.
These kinding judgements are defined according to the rules displayed in 
Fig. \ref{fig:sl-kind-system}.
\begin{figure}[!ht]
\framebox{\begin{minipage}{0.98\linewidth}\noindent\hspace{-24pt}\input{sl-kind-system}\end{minipage}}
    \caption{Kinding rules for the AD source language.
    Note that we only consider the formation of function types 
    of non-parameterized types (shaded in grey).\label{fig:sl-kind-system}}\;
  \end{figure}
We write $\DGinf{\trm}{\ty}$ to specify that the term $\trm$ is \emph{well-typed} in the \emph{typing context} 
$\Gamma$, where $\Gamma$ is a list of the form $\var_1:\ty_1,\ldots,\var_n:\ty_n$ for variable identifiers $\var_i$ and types 
$\ty_i$ that are well-kinded in kinding context $\Delta$.
These typing judgements are defined according to the rules displayed in 
Fig. \ref{fig:sl-type-system}.
\begin{figure}[!ht]
\framebox{\begin{minipage}{0.98\linewidth}\noindent\hspace{-24pt}\input{sl-type-system}\end{minipage}}
    \caption{Typing rules for the AD source language.\label{fig:sl-type-system}}\;
  \end{figure}
As Fig. \ref{fig:sl-equations} displays, we consider the terms of our language up to the standard $\beta\eta$-theory.
To present this equational theory, we define in Fig. \ref{fig:sl-types-functorial-action}, by induction, some syntactic sugar for the functorial action $\Delta,\Delta'\mid\Gamma,\var:\subst{\ty}{\sfor{\tvar}{\ty[2]}}\vdash \subst{\ty}{\sfor{\tvar}{\var\vdash \trm}} :\subst{\ty}{\sfor{\tvar}{\ty[3]}}$ in argument $\tvar$ 
of parameterized types $\Delta,\tvar:\type\vdash \ty:\type$ on terms $\Delta'\mid\Gamma,x:\ty[2]\vdash\trm:\ty[3]$.
\begin{figure}[!ht]
  \framebox{\begin{minipage}{0.98\linewidth}\hspace{-24pt} 
\input{sl-types-functorial-action}
\end{minipage}}
\caption{\label{fig:sl-types-functorial-action} Functorial action $\Delta,\Delta'\mid\Gamma,\var:\subst{\ty}{\sfor{\tvar}{\ty[2]}}\vdash \subst{\ty}{\sfor{\tvar}{\var\vdash \trm}} :\subst{\ty}{\sfor{\tvar}{\ty[3]}}$ in argument $\tvar$ 
of parameterized types $\Delta,\tvar:\type\vdash \ty:\type$ on terms $\Delta'\mid\Gamma,x:\ty[2]\vdash\trm:\ty[3]$ of the source language.}
\end{figure}%

We employ the usual conventions of free and bound variables and write
 $\subst{\ty}{\sfor{\tvar}{\ty[2]}}$ for the capture-avoiding substitution of
 the type $\ty[2]$ for the identifier $\tvar$ in $\ty$ (and similarly, 
 $\subst{\trm}{\sfor{\var}{\trm[2]}}$ for the capture-avoiding substitution of the term 
 $\trm[2]$ for the identifier $\var$ in $\trm$).
 We define make liberal use of the standard syntactic sugar $\letin{\tPair{\var}{\var[2]}}{\trm}{\trm[2]}\defeq \letin{\var[3]}{\trm}{\letin{\var}{\tFst\var[3]}{\letin{\var[2]}{\tSnd\var[3]}{\trm[2]}}}$.

  \begin{figure}[!ht]
    \framebox{\begin{minipage}{0.98\linewidth}\hspace{-24pt} \input{sl-equations}
   \end{minipage}}
   \caption{We consider the standard $\beta\eta$-laws above for our language.
   We write $\freeeq{\var_1,\ldots,\var_n}$ to indicate that the variables $\var_1,\ldots,\var_n$ need to be fresh in the left hand side.
   Equations hold on pairs of terms of the same type.
   As usual, we only distinguish terms up to $\alpha$-renaming of bound variables.\label{fig:sl-equations}\;
  }
   \end{figure}

This standard language is equivalent to the freely generated bicartesian closed category 
$\Syn$ with $\mu\nu$-polynomials on the directed polygraph (computad) given by the ground types $\reals^n$ as objects and primitive operations $\op$ as arrows.
Equivalently, we can see it as the initial category that supports tuple types, function types, sum types, inductive and coinductive types and primitive types 
$\Ty=\set{\reals^n\mid n\in\NN}$ and primitive operations $\Op(\reals^{n_1},\ldots,\reals^{n_k};\reals^m)=\Op_{n_1,\ldots,n_k}^m$ (in the sense of \S\ref{sec:background-categorical-semantics}).
$\Syn$ effectively represents programs as (categorical) combinators, also known as ``point-free style''
in the functional programming community.
Concretely, $\Syn$ has types as objects,
homsets $\Syn(\ty,\ty[2])$ consist of $(\alpha)\beta\eta$-equivalence classes of terms 
$\cdot\mid\var:\ty\vdash \trm:\ty[2]$,
identities are $\cdot\mid\var:\ty\vdash\var:\ty$,
and the composition of $\cdot\mid\var:\ty\vdash \trm:\ty[2]$ and $\cdot\mid\var[2]:\ty[2]\vdash \trm[2]:\ty[3]$ is given by $\cdot\mid\var:\ty\vdash \letin{\var[2]}{\trm}{\trm[2]}:\ty[3]$.
\begin{corollary}[Universal property of $\Syn$] \label{cor:universal-property-of-source-language}
Given any bicartesian closed category with $\mu\nu$-polynomials $\catC$, any consistent assignment of $F(\reals^n )\in\ob\catC$ and $F(\op)\in \catC(F(\reals^{n_1})\times \cdots\times F(\reals^{n_k}), F(\reals^m))$ for $\op\in\Op_{n_1,\ldots,n_k}^m$ extends to a unique $\mu\nu$-polynomial preserving bicartesian closed functor $F:\Syn\to\catC$.
\end{corollary}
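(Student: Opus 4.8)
The plan is to prove this as an initiality (freeness) result for the term model: $\Syn$ is presented by generators (the objects $\reals^n$ and morphisms $\op$) together with the bicartesian-closed-plus-$\mu\nu$-polynomial structure, subject to the $\beta\eta$-equations, so a structure-preserving functor out of it is the same thing as a structure-compatible assignment on generators. Concretely, I would first establish \emph{existence} by defining $F$ through induction on the syntax; then verify well-definedness with respect to the equational theory, functoriality, and strict structure preservation; and finally settle \emph{uniqueness} by a second induction showing that the generators together with the preserved structure pin down $F$ completely.

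For existence, I would first interpret the \emph{types}. A well-kinded type $\Delta\vdash\ty:\type$ with $m$ type-variable declarations should be sent to an object of $\mnPoly_\catC$, i.e. a functor $\catC^m\to\catC$, so that a closed type becomes an object of $\catC$. The clauses follow the type formers: $\reals^n$ goes to $F(\reals^n)$; products, coproducts and exponentials go to the chosen products, coproducts and exponentials of $\catC$; and $\lfp{\tvar}{\ty}$ and $\gfp{\tvar}{\ty}$ go to the parameterized initial algebra $\mu H$ and terminal coalgebra $\nu H$ of the functor $H=F(\ty)$, which exist and are functorial by Proposition~\ref{prop:general_parameterized_initial_algebras} together with Lemma~\ref{lem:about-completeness-of-munupolynomials}, precisely because $\catC$ has $\mu\nu$-polynomials. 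I would then interpret \emph{terms}: a derivation of $\Gamma\vdash\trm:\ty[2]$ is sent to a morphism $F(\Gamma)\to F(\ty[2])$ in $\catC$, defined by recursion on the derivation, using pairing and projection for tuples, $\ev$ and $\Lambda$ for abstraction and application, copairing and the distributivity isomorphism for variants, and $\ind$, $\fold$, $\coind$, $\unfold$ for the four (co)inductive term formers. Because the typing rules are syntax-directed, the derivation is essentially unique, so this assignment is unambiguous on raw terms.

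The main obstacle is showing that this assignment descends to $\beta\eta$-equivalence classes, and the crux is a \emph{coherence lemma} relating the \emph{syntactic} functorial action $\subst{\ty}{\sfor{\tvar}{\var\vdash\trm}}$ of Fig.~\ref{fig:sl-types-functorial-action} to the genuine categorical action of the functor $F(\ty)$. I would prove, by induction on the structure of $\ty$, that $F\bigl(\subst{\ty}{\sfor{\tvar}{\var\vdash\trm}}\bigr)$ equals the functor $F(\ty)$ applied to the morphism $F(\trm)$. This lemma is exactly what is needed to validate the computation rule $\tFold{\tRoll\trm}{\var}{\trm[2]}=\cdots$ and its dual for $\tGen$/$\tUnroll$ against the commuting squares defining $\fold$ and $\unfold$ in \eqref{eq:basic-fold-semantics-appendix}; the uniqueness (``implies'') equations for $\tFold$ and $\tGen$ in Fig.~\ref{fig:sl-equations} then fall out of the universal properties of the initial algebra and terminal coalgebra. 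The remaining $\beta\eta$-equations for tuples, functions and variants are the standard bicartesian-closed verifications. Alongside this I would prove the usual substitution lemma, so that composition in $\Syn$ (which is substitution, i.e.\ let-binding) is sent to composition in $\catC$ and $F$ is genuinely a functor; strict preservation of all the structure is then immediate from the defining clauses.

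Uniqueness is the easier direction. Suppose $G:\Syn\to\catC$ is any $\mu\nu$-polynomial-preserving bicartesian closed functor agreeing with $F$ on the generators $\reals^n$ and $\op$. Since $G$ strictly preserves finite products, finite coproducts, exponentials, and the initial algebras and terminal coalgebras of $\mu\nu$-polynomials, a straightforward induction on the construction of types forces $G(\ty)=F(\ty)$ for every object, and a parallel induction on the construction of terms (each built from the generators by the preserved term formers) forces $G(\trm)=F(\trm)$ for every morphism. Hence $G=F$, which establishes the asserted universal property.
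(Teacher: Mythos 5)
Your proposal is correct, and it is essentially the proof the paper has in mind: the paper offers no explicit argument for Corollary~\ref{cor:universal-property-of-source-language}, treating it as immediate from the preceding assertion that $\Syn$ \emph{is} the freely generated bicartesian closed category with $\mu\nu$-polynomials on the computad of ground types $\reals^n$ and operations $\op$. Your structural-induction argument --- interpreting parameterized types as objects of $\mnPoly_\catC$ (legitimate because the kinding rules confine exponentials to closed types), proving the coherence lemma identifying the syntactic functorial action $\subst{\ty}{\sfor{\tvar}{\var\vdash\trm}}$ with the categorical one so that the $\tFold$/$\tGen$ equations follow from the universal properties of $\ind$ and $\coind$, and then the uniqueness induction --- is precisely the standard verification of that freeness which the paper delegates to the literature.
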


\section{Modelling expressive functional languages in Grothendieck constructions}\label{sec:grothendieck-constructions}
In this section, we present a novel construction of categorical models (in the sense of \S\ref{sec:background-categorical-semantics})
$\Sigma_\catC\catL$ and $\Sigma_\catC\catL^{op}$
of expressive functional languages (like our AD source language of \S\ref{sec:source-language}) in $\Sigma$-types of suitable 
indexed categories $\catL:\catC^{op}\to \Cat$.
In particular, the problem we solve in this section is to 
identify suitable sufficient conditions to put on an indexed category 
$\catL:\catC^{op}\to \Cat$, whose base category we think of 
as the semantics of a cartesian type theory and whose 
fibre categories we think of as the semantics of a dependent linear type theory, such that $\Sigma_\catC\catL$ and $\Sigma_\catC\catL^{op}$ are categorical models of expressive functional languages in this sense.
We call such an indexed category a \emph{$\Sigma$-bimodel of} language feature 
$X$ if it satifies our sufficient conditions for $\Sigma_\catC\catL$ and $\Sigma_\catC\catL^{op}$ to be categorical models of language feature $X$.

This abstract material in many ways forms the theoretical crux of this paper.
We consider two particular instances of this idea later:
\begin{itemize}
\item the case where $\catL$ is the syntactic category $\LSyn:\CSyn^{op}\to \Cat$ of a suitable target language for AD translations (\S\ref{sec:target-language}); the universal property of the source language $\Syn$ then yields unique structure preserving functors $\Dsynsymbol:\Syn\to\Sigma_{\CSyn}\LSyn$ and $\Dsynrevsymbol:\Syn\to\Sigma_{\CSyn}\LSyn^{op}$  implementing forward and reverse mode AD; 
\item the case where $\catL$ is the indexed category of families of real vector spaces $\FVect:\Set^{op}\to \Cat$ (\S\ref{sec:concrete-models}); this gives a concrete denotational semantics to the target language, which we use in the correctness proof of AD.
\end{itemize}

\subsection{Basics: the categories $\Sigma_\catC \catL$ and $\Sigma_\catC \catL^{\op}$}
Recall that for any strictly indexed category, i.e. a (strict) functor $\catL:\catC^{\op}\to\Cat$, 
we can consider its total category (or Grothendieck construction) $\Sigma_\catC \catL$,
which is a fibred category over $\catC$  (see \citep[sections A1.1.7, B1.3.1]{johnstone2002sketches}).
We can view it as a $\Sigma$-type of categories, which 
generalizes the cartesian product.
Further, given a strictly indexed category $\catL:\catC^{\op}\to \Cat$,
we can consider its fibrewise dual category $\catL^{\op}:\catC^{\op}\to \Cat$,
which is defined as the composition $\catC^{\op}\xto{\catL}\Cat\xto{\op}\Cat$, where $\op$ is defined by 
$A\mapsto A^\op $.
Thus, we can apply the same construction to $\catL^{\op}$ to obtain a category~$\Sigma_{\catC}\catL^{\op}$.

Concretely, $\Sigma_\catC \catL$ is the following category:
\begin{itemize}
    \item objects are pairs $(W,w)$ of an object $W$ of $\catC$ and an object $w$ of $\catL(W)$;
    \item morphisms $(W,w)\to (X,x)$ are pairs $(f, \ff{f} )$ with $f :W\to X$ in $\catC$ and $\ff{f} : w \to \catL(f)(x)$ in  $\catL(W)$;
    \item identities $\id[(W,w)]$ are $(\id[W],\id[w])$;
    \item composition of $(W,w)\xto{(f,\ff{f})}(X,x)$
    and $(X,x)\xto{(g,\ff{g})}(Y,y)$ is given by
    $$(g\circ f, \catL(f)(\ff{g} ) \circ \ff{f} ) .$$
    \end{itemize}
    Concretely, $\Sigma_{\catC}\catL^{\op}$ is the following category:
\begin{itemize}
    \item objects are pairs $(W, w)$ of an object $W$ of $\catC$ and an object $w$ of $\catL(W)$;
    \item morphisms $(W,w)\to (X,x)$ are pairs $(f, \ff{f} )$ with $f :W\to X$ in $\catC$ and $\ff{f} :  \catL(f)(x)\to w $ in  $\catL(W)$;
    \item identities $\id[(W,w)]$ are $(\id[W],\id[w])$;
  \item composition of $(W,w)\xto{(f,\ff{f})}(X,x)$
and $(X,x)\xto{(g,\ff{g})}(Y,y)$ is given by
$$(g\circ f, \ff{f}\circ \catL(f)(\ff{g} ) ) .$$
\end{itemize}

\subsection{Products in total categories}\label{subsect:cartesian-structure-sigmatypes-Grothendieck-Construction}
We start by studying the cartesian structure of $\Sigma_\catC\catL$. We refer to \citep{MR0213413} for a basic reference for fibrations/indexed categories and properties of the total category.

\begin{definition} 
A strictly indexed category $\catL$ has \emph{strictly indexed finite (co)products} if 
\begin{enumerate}[i)]
	\item each fibre $\catL(C)$ has chosen finite (co)products $(\times , \terminal)$ (respectively, $(\sqcup , \initial)$);
	\item change of base strictly preserves these (co)products in the sense that $\catL(f)$ preserves finite products (respectively, finite coproducts) for all morphisms $f$ in $\catC$.
\end{enumerate}
\end{definition} 

We \textit{recall} the well-known fact that
$\Sigma_\catC \catL$  ($\Sigma_\catC \catL^{\op}$) has finite products if $\catC $ has finite products and 
$\catL $ has indexed finite products (coproducts).

\begin{proposition}[Cartesian structure of $\GrothC \catL $]\label{prop:grothendieck-products-covariant}
	Assuming that $\catC$ has finite products $(\terminal,\times)$ 
	and $\catL$ has indexed finite products $(\terminal,\times)$, we have that
	$\Sigma_{\catC}\catL$ has (fibred) terminal object 
	$\terminal =\left(\terminal,\terminal\right)$ and (fibred) binary product 
	$(W,w)\times (Y,y)=(W\times Y,\catL(\pi_1)(w)\times  \catL(\pi_2)(y))$.
	\end{proposition}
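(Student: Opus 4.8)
The plan is to verify directly that the stated object and operation satisfy the universal properties of a terminal object and a binary product in $\Sigma_\catC\catL$, exactly as spelled out by the concrete description of the category given above. The one structural fact I will lean on throughout is the hypothesis that $\catL$ has \emph{strictly} indexed finite products: every change-of-base functor $\catL(f)$ preserves the chosen finite products on the nose, and I will combine this with the (contravariant) functoriality of $\catL$ to rewrite fibre objects of the form $\catL(h)(\catL(\pi_1)(w)\times\catL(\pi_2)(y))$ into products of pushed-forward fibres. Everything else is bookkeeping with the composition law $(g,\ff{g})\circ(f,\ff{f})=(g\circ f,\catL(f)(\ff{g})\circ\ff{f})$.

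For the terminal object, given any $(W,w)$, a morphism $(W,w)\to(\terminal,\terminal)$ is a pair $(f,\ff{f})$ with $f\colon W\to\terminal$ in $\catC$ and $\ff{f}\colon w\to\catL(f)(\terminal)$ in $\catL(W)$. The first component $f$ is forced to be the unique map $\morterminal{W}$ into $\terminal$. Since $\catL(f)$ preserves the chosen nullary product, $\catL(f)(\terminal)=\terminal$ in $\catL(W)$, so $\ff{f}$ is a morphism $w\to\terminal$ in the fibre and is therefore also unique. Hence $(\terminal,\terminal)$ is terminal.

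For the binary product, I first exhibit the projections $\Pi_1=(\pi_1,\proj{1})\colon P\to(W,w)$ and $\Pi_2=(\pi_2,\proj{2})\colon P\to(Y,y)$, where $P=(W\times Y,\catL(\pi_1)(w)\times\catL(\pi_2)(y))$, the $\pi_i$ are the $\catC$-projections and the $\proj{i}$ are the product projections in the fibre $\catL(W\times Y)$; these are well-typed precisely because the fibre object is a product there. Given a competitor $(C,c)$ with maps $(f,\ff{f})\colon(C,c)\to(W,w)$ and $(g,\ff{g})\colon(C,c)\to(Y,y)$, I take the $\catC$-component of the mediator to be $h\defeq\langle f,g\rangle$, the unique map with $\pi_1\circ h=f$ and $\pi_2\circ h=g$. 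The key computation is that, by functoriality and strict preservation, $\catL(h)(\catL(\pi_1)(w)\times\catL(\pi_2)(y))=\catL(\pi_1\circ h)(w)\times\catL(\pi_2\circ h)(y)=\catL(f)(w)\times\catL(g)(y)$; this is exactly the codomain into which the fibre component of the mediator must land. I then set the fibre component to be $\langle\ff{f},\ff{g}\rangle$, the unique fibre map with $\proj{1}\circ\langle\ff{f},\ff{g}\rangle=\ff{f}$ and $\proj{2}\circ\langle\ff{f},\ff{g}\rangle=\ff{g}$.

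The remaining work is to check that $(h,\langle\ff{f},\ff{g}\rangle)$ composed with $\Pi_1$ and $\Pi_2$ recovers the given maps, and that it is the only such mediator. Both reduce to the two universal properties just used: the $\catC$-component is handled by the product in $\catC$, and the fibre component by the product in the fibre $\catL(C)$, once one observes (again by strict preservation) that $\catL(h)(\proj{i})$ is literally the product projection out of $\catL(f)(w)\times\catL(g)(y)$, so that the composition law collapses the fibre part of $\Pi_i\circ(h,\langle\ff{f},\ff{g}\rangle)$ to $\proj{i}\circ\langle\ff{f},\ff{g}\rangle$. I expect no genuine obstacle here; the only delicate point — and the place where the hypotheses are essential rather than cosmetic — is ensuring that the mediating fibre morphism even type-checks, which is where the strict, on-the-nose preservation of products by $\catL(h)$, together with $\catL(\pi_i\circ h)=\catL(h)\circ\catL(\pi_i)$, must be invoked. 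Finally, the word ``(fibred)'' in the statement is justified by noting that the projection $\Sigma_\catC\catL\to\catC$ sends $\Pi_1,\Pi_2$ to the $\catC$-projections $\pi_1,\pi_2$, so these products are strictly preserved by the fibration.
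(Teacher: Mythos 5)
Your proposal is correct, but it takes a genuinely different route from the paper. The paper proves the proposition by a representability computation: it exhibits chains of natural bijections of hom-sets, $\Sigma_{\catC}\catL((X,x),(\terminal,\terminal))\cong\terminal$ and $\Sigma_{\catC}\catL\left((X,x),(W\times Z,\catL(\pi_1)(w)\times\catL(\pi_2)(z))\right)\cong\Sigma_{\catC}\catL((X,x),(W,w))\times\Sigma_{\catC}\catL((X,x),(Z,z))$, by unfolding the $\Sigma$-type description of the hom-sets and then successively applying the indexed product structure, functoriality of $\catL$, the product in each fibre, the product in $\catC$, and a Beck--Chevalley-type rearrangement of $\Sigma$'s in $\Set$. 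You instead verify the universal properties elementwise: you exhibit the projections $(\pi_i,\proj{i})$, build the mediator $(\langle f,g\rangle,\langle\ff{f},\ff{g}\rangle)$, and check compatibility and uniqueness. Both are sound. Your approach has the merit of making completely explicit the one place where the hypotheses do real work --- the identity $\catL(h)(\catL(\pi_1)(w)\times\catL(\pi_2)(y))=\catL(f)(w)\times\catL(g)(y)$, which needs strict functoriality of $\catL$ together with on-the-nose preservation of chosen products (including projections) by change of base, so that the fibre component of the mediator type-checks and $\catL(h)(\proj{i})$ collapses to the fibre projection; the paper's $\cong$-chain would in principle go through with preservation only up to coherent isomorphism, but it leaves the naturality of the displayed bijections as an implicit burden. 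The paper's style, on the other hand, is more compact and scales directly to the harder computations later in the same section (notably the exponential formula of Theorem \ref{theo:grothendieck-ccc-covariant}, which is proved by exactly the same hom-set technique), which is presumably why the authors chose it.
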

	\begin{proof}
		We have (natural) bijections 
		\vspace{-2pt}\\
		\resizebox{\linewidth}{!}{\parbox{\linewidth}{
		\begin{align*}
		&\Sigma_{\catC}\catL((X,x), (\terminal,\terminal))\\
		&=\Sigma_{f\in \catC(X,\terminal)} \catL(X)(x ,\catL(f)(\terminal)) \explainr{by definition}\\
		&\cong
		\Sigma_{f\in \catC(X,\terminal)} \catL(X)(x,\terminal)\explainr{indexed $\terminal$}\\
		&\cong\terminal\times \terminal\explainr{$\terminal$  terminal in $\catC$ and $\catL(X)$}\\
		&\cong \terminal
		\end{align*} \small 
		\begin{align*}
		&\Sigma_{\catC}\catL\left( (X,x),(W\times Z , \catL(\pi_1)(w)\times  \catL(\pi_2)(z))\right) \\
		&=\Sigma_{\sPair{f}{g}\in \catC ( X, W\times Y )} \catL( X )( x,\catL\sPair{f}{g}(\catL(\pi_1)( w )\times  \catL(\pi_2)( z ))) \explainr{by definition}\\
		&\cong\Sigma_{\sPair{f}{g}\in \catC( X , W\times Z )} \catL( X )(x ,\catL\sPair{f}{g}\catL(\pi_1)( w )\times \catL\sPair{f}{g} \catL(\pi_2)( z ))\explainr{indexed $\times$}\\
		&=\Sigma_{\sPair{f}{g}\in \catC( X , W\times Z )} \catL(X)( x ,\catL(f)(w)\times \catL(g)(z))\explainr{functoriality $\catL$}\\
		&\cong \Sigma_{\sPair{f}{g}\in \catC( X, W\times Z )} \catL(X)(x,\catL(f)(w))\times 
		\catL(X)(x, \catL(g)(z))\explainr{$\times$ product in $\catL(A_1)$}\\
		&\cong \Sigma_{f\in \catC(X, W)}\Sigma_{g\in \catC(X, Z)} \catL(X)(x,\catL(f)(w))\times 
		\catL(X)(x, \catL(g)(z))\explainr{$\times$ product in $\catC$}\\
		&\cong \left(\Sigma_{f\in \catC( X, W)}\catL(X)(x,\catL(f)(w))\right) \times \left(\Sigma_{g\in \catC(X, Z)} 
		\catL(X)(x, \catL(g)(z))\right)\explainr{Beck-Chevalley for $\Sigma$ in $\Set$}\\
		&=\Sigma_{\catC}\catL((X,x),(W,w))\times \Sigma_{\catC}\catL((X,x), (Z,z)).
	\end{align*}
		}}\end{proof}\normalsize
	In particular, finite products in $\Sigma_{\catC}\catL$ are fibred in the sense that the projection functor $\Sigma_{\catC}\catL\to \catC$ preserves them, on the nose.
	Codually, we have:
\begin{proposition}[Cartesian structure of $\GrothC \catL ^\op $]\label{theo:grothendieck-products-contravariant}
	Assuming that $\catC$ has finite products $(\terminal,\times)$ 
	and $\catL$ has indexed finite coproducts $(\initial,\sqcup )$, we have that
	$\Sigma_{\catC}\catL^{\op}$  has (fibred) terminal object 
	$\terminal=(\terminal,\initial )$ and (fibred) binary product 
	$(W,w)\times (Y,y)=(W\times Y,\catL(\pi_1)(w)\sqcup  \catL(\pi_2)(y))$.
\end{proposition}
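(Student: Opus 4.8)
The plan is to deduce this proposition from its covariant counterpart, Proposition~\ref{prop:grothendieck-products-covariant}, by a change of fibre rather than by repeating the hom-set calculation. Recall from the Basics subsection that $\Sigma_{\catC}\catL^{\op}$ is \emph{by definition} the total category of the strictly indexed category $\catL^{\op}:\catC^{\op}\to\Cat$ obtained as the composite $\catC^{\op}\xto{\catL}\Cat\xto{\op}\Cat$. The first step is to observe that ``$\catL$ has indexed finite coproducts'' is literally the same data as ``$\catL^{\op}$ has indexed finite products'': the chosen coproduct $(\initial,\sqcup)$ in each fibre $\catL(C)$ is exactly the chosen product $(\terminal,\times)$ in the opposite fibre $\catL^{\op}(C)=\catL(C)^{\op}$, and the reindexing functor $\catL^{\op}(f)=\catL(f)^{\op}$ preserves those products precisely when $\catL(f)$ preserves the coproducts. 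Hence $\catL^{\op}$ satisfies the hypotheses of Proposition~\ref{prop:grothendieck-products-covariant}.

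Applying that proposition to $\catL^{\op}$ in place of $\catL$ then yields a fibred terminal object $(\terminal,\terminal_{\catL^{\op}(\terminal)})$ and fibred binary products $(W,w)\times(Y,y)=(W\times Y,\,\catL^{\op}(\pi_1)(w)\times_{\catL^{\op}}\catL^{\op}(\pi_2)(y))$ in $\Sigma_{\catC}\catL^{\op}$. Translating the fibre data back along the identification $\catL^{\op}(C)=\catL(C)^{\op}$ replaces the terminal object $\terminal_{\catL^{\op}(\terminal)}$ by the initial object $\initial$ of $\catL(\terminal)$, and the fibrewise product $\times_{\catL^{\op}}$ by the fibrewise coproduct $\sqcup$ of $\catL$ (noting that $\catL^{\op}(\pi_i)$ and $\catL(\pi_i)$ agree on objects). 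This produces exactly $(\terminal,\initial)$ and $(W\times Y,\,\catL(\pi_1)(w)\sqcup\catL(\pi_2)(y))$ as claimed, and fibredness over $\catC$ is inherited verbatim since the base component of the construction is untouched.

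For readers preferring an explicit argument, the same conclusion follows by mirroring the hom-set computation in the proof of Proposition~\ref{prop:grothendieck-products-covariant}: one expands $\Sigma_{\catC}\catL^{\op}((X,x),-)$ by definition, uses that reindexing preserves coproducts to rewrite $\catL\langle f,g\rangle(\catL(\pi_1)(w)\sqcup\catL(\pi_2)(z))$ as $\catL(f)(w)\sqcup\catL(g)(z)$, and then applies the universal property of the fibrewise coproduct in the \emph{domain} position, $\catL(X)(a\sqcup b,x)\cong\catL(X)(a,x)\times\catL(X)(b,x)$, together with the product in $\catC$ and Beck--Chevalley for $\Sigma$ in $\Set$. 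This hom-out-of-a-coproduct isomorphism is the precise dual of the hom-into-a-product isomorphism that drove the covariant case.

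I expect no genuine obstacle here: the only point requiring care is the bookkeeping of the fibrewise-opposite identification, namely confirming that indexed coproducts for $\catL$ coincide with indexed products for $\catL^{\op}$ and that forming the total category commutes with the fibrewise $\op$. Once this is checked, the statement is immediate from Proposition~\ref{prop:grothendieck-products-covariant}.
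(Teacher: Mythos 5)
Your proof is correct and is essentially the paper's own argument: the paper dispatches this proposition with the single word ``codually,'' meaning precisely the reduction you spell out, namely applying Proposition~\ref{prop:grothendieck-products-covariant} to the fibrewise dual $\catL^{\op}$ after observing that indexed finite coproducts for $\catL$ are the same data as indexed finite products for $\catL^{\op}$. Your explicit check of this identification (and the optional mirrored hom-set computation) just makes the paper's implicit duality step precise.
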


That is, in our terminology, $\catL:\catC^{op}\to \Cat$ is a \emph{$\Sigma$-bimodel of 
tuple types} if $\catC$ has chosen finite products and $\catL$ has finite strictly indexed products and coproducts.
	
We will, in particular, apply the results above in the situation where $\catL$ has indexed finite biproducts in the sense of Definition~\ref{def:strictly-indexed-biproducts}, in which case the finite product structures of $\Sigma_\catC\catL$ and $\Sigma_\catC\catL^{op}$ coincide.

\begin{definition}[Strictly indexed finite biproducts]\label{def:strictly-indexed-biproducts}
A category with finite products and coproducts is \textit{semi-additive} if the binary coproduct functor is naturally isomorphic to the binary product functor; see, for instance, \citep{MR2864762, arXiv:1711.02051}.	In this case, the product/coproduct is called \textit{biproduct}, and the biproduct structure 
is denoted by $(\times, \terminal) $ or $(+, \initial)$.

A strictly indexed category $\catL$ has \emph{strictly indexed finite biproducts}  if 
	\begin{enumerate}[--]
		\item $\catL $ has strictly indexed finite products and coproducts;
		\item each fibre $\catL(C)$ is semi-additive.
	\end{enumerate}
\end{definition}

\subsection{Generators}\label{subsect:generators-Grothendieck-construction}
In this section, we establish the obvious sufficient (and necessary)
conditions for $\Sigma_\catC\catL$ and $\Sigma_\catC\catL^{op}$ to model 
primitive types and operations in the sense of \S\ref{sec:background-categorical-semantics}.
These conditions are an immediate consequence of the structure of $\Sigma_\catC\catL$ and $\Sigma_\catC\catL^{op}$ as cartesian categories.

\begin{definition}
We say that $\catL:\catC^{op}\to \Cat$ is a \emph{$\Sigma$-bimodel of primitive types $\Ty$ and operations $\Op$} if
\begin{itemize}
	\item for all $T\in\Ty$, we have a choice of objects $C_T\in \ob\catC$ and $L_T,L'_T\in \ob\catL(C_T)$;
	\item  
	for all $\op\in \Op(T_1,\ldots, T_n; S)$, we have a choice of morphisms
	\begin{align*}
		&f_{\op}\in \catC(C_{T_1}\times \ldots\times C_{T_n}, C_S)\\
		&g_{\op}\in \catL(C_{T_1}\times \ldots\times C_{T_n})(\catL(\pi_1)(L_{T_1})\times\cdots\times \catL(\pi_n)(L_{T_n}), \catL(f_{\op})(L_S))\\
		&g'_\op\in \catL(C_{T_1}\times \ldots\times C_{T_n})(\catL(f_{\op})(L'_S),\catL(\pi_1)(L'_{T_1})\sqcup\cdots\sqcup \catL(\pi_n)(L'_{T_n})).
	\end{align*}
\end{itemize}
We say that such a model has \emph{self-dual primitive types} in case $L_T=L'_T$ for all $T\in\Ty$.
\end{definition}

\subsection{Cartesian closedness of total categories}\label{subsect:closed-structure-sigmatypes-Grothendieck-Construction}
The question of Cartesian closure of the categories $\Sigma_\catC \catL$ and $\Sigma_\catC \catL^{op}$ 
is a lot more subtle.
In particular, the formulas for exponentials tend to involve $\Pi$- and $\Sigma$-types, hence we need to recall some 
definitions from categorical dependent type theory.
As also suggested by \citep{kerjean:hal-03123968}, these formulas relate closely to the Diller-Nahm variant \citep{diller1974variante,Hyland02,glehnmoss2018}
of the Dialectica interpretation \citep{godel1958bisher} and \citep{altenkirch2010higher}'s formula for higher-order containers.
We plan to explain this connection in detail in future work as it would form a distraction from the point of the current paper.

We use standard definitions from the semantics of dependent 
type theory and the dependently typed enriched effect calculus.
An interested reader can find background on this material in \citep[Chapter 5]{vakar2017search}
and \citep{ahman2016dependent}.
We briefly recalling some of the usual vocabulary~\citep[Chapter 5]{vakar2017search}.
\begin{definition}
Given an indexed category $\catD:\catC^{op}\to \Cat$, we say:
\begin{itemize}
	\item it satisfies the \emph{comprehension axiom} if: $\catC$ has a chosen terminal object $\terminal$;
	$\catD$ has strictly indexed terminal objects $\terminal$ (i.e. chosen terminal objects $\terminal\in\catD(X)$, such 
	that $\catD(g)(\terminal)=\terminal\in\catD (W) $ for all $g:W\to X$ in $\catC$);  and, for each object $\left( X, x\right)\in \Sigma_\catC \catD $, the functor  
	\begin{align*}
	\repP{(X,x)}: (\catC/X)^{op} &\to \Set \\
	\left( W, W\xto{f}X \right) & \mapsto \catD(W)(\terminal, \catD(f)(x))
	\end{align*}
	are representable by an object $\left( X.x, X.x\xto{\depproj{X}{x}} X\right) $ of $\catC/X$:
	\begin{align*}
	\repP{(X,x)}\left( W, W\xto{f}X \right) = \catD(W)(\terminal, \catD(f)(x))  &\cong \catC/X\left( \left( W, f \right) ,\left( X.x, \depproj{X}{x}\right) \right)\\
	b&\mapsto \sPair{f}{b}.
	\end{align*}
	We write $\depv{X}{x}$ for the unique element of $\catD(X.x)(\terminal, \catD(\depproj{X}{x})(x))$ such that $\sPair{\depproj{X}{x}}{\depv{X}{x}}=\id[\depproj{X}{x}]$ (the universal element of the representation). 
	
	Furthermore, given $f: W\to X $, we write $\depq{f}{b}$ for the unique morphism $\sPair{f\circ \depproj{W}{\catD(f)(x)}}{\depv{W}{\catD(f)(x)}}$ making the  square below a pullback:  \\
\begin{center} 
\begin{tikzcd}
	W.\catD(f)(x) \arrow[r, "\depq{f}{x}"] \arrow[d, "\depproj{X}{\catD(f)(x)}"'] & X.x \arrow[d, "\depproj{X}{x}"] \\
	W \arrow[r, "f"']                                                              & X                             
\end{tikzcd}
\end{center} 	
We henceforth call such squares $\mathbf{p}$-squares; 
	\item it \emph{supports $\Sigma$-types} if we have left adjoint functors $\Sigma_w\dashv \catD(\depproj{W}{w}):\catD(W.w)\leftrightarrows \catD(W)$
	satisfying the left Beck-Chevalley condition for $\mathbf{p}$-squares w.r.t. $\catD $ (this means that $\catD(f) \circ \left( \Sigma_{\catD(f)(x)} \to \Sigma_x\right) \circ \catD(\depq{f}{x}) $ are the identity);
	\item it \emph{supports $\Pi$-types} if $\catD^{op}$ supports $\Sigma$-types; explicitly, that is the case iff we have right adjoint functors $\catD(\depproj{W}{w})\dashv \Pi_w:\catD(W)\leftrightarrows \catD(W.w)$ satisfying the right Beck-Chevalley condition for $\mathbf{p}$-squares in the sense that the canonical maps $\Pi_{\catD(f)(x)} \circ \left( \catD(f)\to\catD(\depq{f}{x})\right) \circ \Pi_x$ are the identity.
\end{itemize}
\end{definition} 
\begin{definition} 
In case $\catD:\catC^{op}\to \Cat$ satisfies the comprehension axiom, we say that
\begin{itemize}
	\item it satisfies \emph{democratic comprehension} if the comprehension functor 
	\begin{align*}
	\catD(W)(w',w) & \xto{\depproj{W}{-}} \catC/W\left( \left( W.w', \depproj{W}{w'}\right) , \left( W.w, \depproj{W}{w}\right) \right)\\
	d & \mapsto \sPair{\depproj{W}{w'}}{\catD(\depproj{W}{w'})(d)\circ \depv{W}{w'}}
	\end{align*}
	defines an isomorphism of categories $\catD(\terminal )\cong \catC/\terminal \cong \catC$;
	\item it satisfies \emph{full/faithful comprehension} if the comprehension functor 
	is full/faithful;
	\item it \emph{supports (strong) $\Sigma$-types} (i.e. $\Sigma$-types with a dependent elimination rule,
	which in particular makes $\catD$ support $\Sigma$-types) if dependent projections compose: for all triple $\left( W, w , s\right)$ where $W\in\catC$, $w\in \objects\left(\catD(W)\right) $ and $s\in \objects\left( \catD(W.w)\right) $, we have 
	$$\depproj{W}{w}\circ \depproj{W.w}{s}\cong\depproj{W}{\Sigma_w s};$$
	then, in particular, $W.\Sigma_w s\cong W.w.s$; further, we have projection morphisms $\pi_1\in\catD(W)(\Sigma_w s, w)$
	and $\pi_2\in \catD(W.w)(\terminal, s)$;
\end{itemize}
\end{definition} 
\begin{remark}[$\Sigma$- and $\Pi$- as dependent product and function types]
In case $\catD$ satisfies fully faithful comprehension,
\begin{itemize}
\item $\Sigma_w \catD(\depproj{W}{w})(v)$ gives the categorical product $w\times v$ of $w$ and $v$ in $\catD(W)$;
\item $\Pi_w \catD(\depproj{W}{w})(v)$ gives the categorical exponential $w\Rightarrow v$ of $w$ and $v$ in $\catD(W)$.
\end{itemize}
\end{remark}
\begin{definition}[$\Sigma$-bimodel for function types]
We call a strictly indexed category $\catL:\catC^{op}\to\Cat$ a \emph{$\Sigma$-bimodel for function types} if it is a biadditive model of the dependently typed enriched effect calculus in the sense that it comes equipped with
\begin{enumerate}[$\catL$A)]
\item a model of cartesian dependent type theory in the sense of a 
strictly indexed category $\catC':\catC^{op}\to \Cat$ that satisfies full, faithful, democratic comprehension with $\Pi$-types and strong $\Sigma$-types;
\item strictly indexed finite biproducts in the sense of Definition~\ref{def:strictly-indexed-biproducts} in $\catL$;
\item $\Sigma$- and $\Pi$-types in $\catL$;
\item a strictly indexed functor $\multimap: \catL^{\op}\times\catL\to\catC'$ and a natural isomorphism
$$
\catL(W)(w,x)\cong \catC'(A)(\terminal, w\multimap x).
$$
\end{enumerate}
\end{definition}

We can immediately note that our notion of $\Sigma$-bimodel of function types is also a $\Sigma$-bimodel of tuple types.
Indeed, strong $\Sigma$-types and comprehension give us, in particular, chosen finite products in $\catC$.

We next show why this name is justified: we show that the Grothendieck construction of a
$\Sigma$-bimodel of function types is cartesian closed.\footnote{This is a generalization of the proof given in   \citep{vakar2020reverse}, where the result is established for locally indexed categories.}

In the following, we slightly abuse notation to aid legibility:
\begin{itemize}
\item denoting by $\morterminal{W}: W\to\terminal $ the only morphism, we will sometimes conflate $Z\in\objects\catC'(\terminal )$ and $\terminal .Z\in\ob\catC$ as well as 
$f\in \catC'(W)(\terminal, \catC'(\morterminal{W} )(Z))$ and $\sPair{\morterminal{W} }{f}\in \catC(W, \terminal .Z)$); this is justified by the democratic comprehension axiom;
\item we will sometimes simply write $z$ for $\catD(\depproj{W}{w})(z)$ where the weakening map $\catD(\depproj{W}{w})$ is clear from context.
\end{itemize}
Given $X, Y\in \catC$ we will write $\evf$ for the obvious $\catC$-morphism
$$
\evf:\Pi_{X} \Sigma_{Y}Z.X\to Y,
$$
that is, the morphism obtained as the composition (where we write $\pi_1$ for the projection $\Sigma_{Y}Z\to Y$)
$$
\Pi_{X}\Sigma_{Y}Z.X\cong (\Pi_{X}\Sigma_{Y}Z)\times X\xto{(\Pi_{X}\pi_1)\times X}(\Pi_{X}Y)\times X\cong (X\Rightarrow Y)\times X\xto{\mathrm{ev}}Y
$$
With these notational conventions in place, we can describe the cartesian closed structure of Grothendieck constructions.

\begin{therm}[Exponentials of the total category]\label{theo:grothendieck-ccc-covariant}
	For a $\Sigma$-bimodel $\catL$ for function types,
	$\Sigma_{\catC}\catL$ has exponential $$(X,x)\Rightarrow (Y,y)=
	(\Pi_{X}\Sigma_{Y} \catL(\pi_1)(x)\multimap \catL(\pi_2)(y),
	\Pi_{X} \catL(\evf)(y)).$$
\end{therm}
\begin{proof}
	We have (natural) bijections 
	\vspace{-2pt}\\
	\resizebox{\linewidth}{!}{\parbox{\linewidth}{
		\begin{align*}
			&\Sigma_{\catC}\catL((W,w)\times (X,x), (Y,y)) =\\
			&=\Sigma_{\catC}\catL((W\times X,\catL(\pi_1)(w)\times \catL(\pi_2)(x)), (Y,y)) \explainr{by Prop.~\ref{prop:grothendieck-products-covariant}}\\
			&=\Sigma_{f\in \catC(W\times X, Y)}\catL(W\times X)(\catL(\pi_1)(w)\times \catL(\pi_2)(x), \catL(f)(y)) \explainr{by definition}\\
			&\cong\Sigma_{f\in \catC(W\times X, Y)}\catL(W\times X)(\catL(\pi_1)(w), \catL(f)(y))\times \catL(W\times X)(\catL(\pi_2)(x), \catL(f)(y))\explainr{$\times$ coproduct in $\catL(W\times X)$}\\
			&\cong\Sigma_{f\in \catC(W\times X, Y)}\catL(W)(w, \Pi_{X}  \catL(f)(y))\times \catL(W\times X)(\catL(\pi_2)(x), \catL(f)(y))\explainr{$\Pi $-types in $\catL$}\\
			&\cong\Sigma_{f\in \catC(W\times X, Y)}\catL(W)(w, \Pi_{X}  \catL(f)(y))\times \catC'(W\times X)(\terminal, \catL(\pi_2)(x)\multimap\catL(f)(y))\explainr{$\multimap$-types in $\catC'$}\\
			&\cong \Sigma_{\sPair{f}{g}\in \Sigma_{f\in\catC(W\times X, Y)} \catC'(W\times X)(\terminal, \catL(\pi_2)(x)\multimap\catL(f)(y))}\catL(W)(w, \Pi_{X}  \catL(f)(y))\explainr{$\Sigma$-types in $\Set$}\\
			&\cong \Sigma_{\sPair{f}{g}\in \Sigma_{f\in\catC'(W\times X)(\terminal, Y)} \catC'(W\times X)(\terminal, \catL(\pi_2)(x)\multimap\catL(f)(y))}\catL(W)(w, \Pi_{X}  \catL(f)(y))\explainr{comprehension}\\
			&\cong \Sigma_{\sPair{f}{g}\in \catC'(W\times X)(\terminal, \Sigma_{Y}\catL(\pi_2\circ\pi_1)(x)\multimap \catL(\pi_2)(y))}\catL(W)(w, \Pi_{X}  \catL(f)(y))\explainr{strong $\Sigma$-types in $\catC'$}\\
			&=\Sigma_{\sPair{f}{g} \in \catC'(W\times X)(\terminal, \Sigma_{Y} \catL(\pi_2\circ\pi_1)(x)\multimap \catL(\pi_2)(y))}\catL(W)(w, \Pi_{X} \catL(\evf\circ \sPair{\sPair{f}{g}}{\pi_2})((y)))\explainr{definition $\evf$}\\
			&=\Sigma_{\sPair{f}{g} \in \catC'(W\times X)(\terminal, \Sigma_{Y} \catL(\pi_2\circ\pi_1)(x)\multimap \catL(\pi_2)(y))}\catL(W)(w, \Pi_{X} \catL(\sPair{\sPair{f}{g}}{\pi_2})(\catL(\evf)(y)))\explainr{functoriality of $\catL$}\\
			&=\Sigma_{\sPair{f}{g} \in \catC'(W\times X)(\terminal, \Sigma_{Y} \catL(\pi_2\circ\pi_1)(x)\multimap \catL(\pi_2)(y))}\catL(W)(w, \catL(\sPair{f}{g})(\Pi_{X} \catL(\evf)(y)))\explainr{Beck-Chevalley for $\Pi $-types}\\
			&\cong\Sigma_{h \in \catC'(W\times X)(\terminal, \Sigma_{Y} \catL(\pi_2\circ\pi_1)(x)\multimap \catL(\pi_2)(y))}\catL(W)(w, \catL(h)(\Pi_{X} \catL(\evf)(y)))\explainr{strong $\Sigma$-types in $\catC'$}\\
			&=\Sigma_{h \in \catC'(W\times X)(\catL(\pi_1)(\terminal), \Sigma_{Y} \catL(\pi_2\circ\pi_1)(x)\multimap \catL(\pi_2)(y))}\catL(W)(w, \catL(h)(\Pi_{X} \catL(\evf)(y)))\explainr{indexed $\terminal$ in $\catC'$}\\
			&\cong\Sigma_{h \in \catC'(W)(\terminal, \Pi_{X}\Sigma_{Y} \catL(\pi_2\circ\pi_1)(x)\multimap \catL(\pi_2)(y))}\catL(W)(w, \catL(h)(\Pi_{X} \catL(\evf)(y)))\explainr{$\Pi$-types in $\catC'$}\\
			&\cong\Sigma_{h \in \catC(W, \Pi_{X}\Sigma_{Y} \catL(\pi_1)(x)\multimap \catL(\pi_2)(y))}\catL(W)(w, \catL(h)(\Pi_{X} \catL(\evf)(y)))\explainr{comprehension}\\
			&=\Sigma_\catC \catL((W, w),(\Pi_{X}\Sigma_{Y} \catL(\pi_1)(x)\multimap \catL(\pi_2)(y),
			\Pi_{X} \catL(\evf)(y)))\\
			&=\Sigma_\catC \catL((W, w),(X, x)\Rightarrow (Y, y)).
		\end{align*}
}}\end{proof}
\noindent Codually, we have:

	\begin{therm}\label{theo:grothendieck-ccc-contravariant}
		For a $\Sigma$-bimodel $\catL$ for function types,
    $\Sigma_{\catC}\catL^{\op}$ has exponential $$(X,x)\Rightarrow (Y,y)=
    (\Pi_{X}\Sigma_{Y} \catL(\pi_2)(y)\multimap \catL(\pi_1)(x),
    \Sigma_{X} \catL(\evf)(y)).$$
	\end{therm}

Note that these exponentials are not fibred over $\catC$ in the sense that the projection functors $\Sigma_\catC \catL\to \catC$ and $\Sigma_\catC \catL^{op}\to \catC$ are generally not cartesian closed functors.
This is in contrast with the interpretation of all other type formers we consider in this paper.

 \subsection{Coproducts in total categories}\label{sec:appendix-IndedxedCategories-Products-Corpoducts}
We, now, study the coproducts in the total categories $\GrothC\catL $ and $\GrothC\catL ^\op $. We are particularly interested in the case of \textit{extensive indexed categories}, a notion introduced in \ref{subsect:coproduct-structure}.
For future reference, we start by recalling the general case: see, for instance, \citep{MR0213413} for a basic reference on properties of the total categories. 
 
 \begin{proposition}[Initial object in $\GrothC \catL $]\label{prop:initialobject-grothendieckconstruction}
 	Let $\catL : \catC ^\op\to\Cat $ be a strictly indexed category. We assume that
 	\begin{enumerate}[i)]
 		\item $\catC $ has initial object $\initialobject $;
 		\item $\catL (\initialobject ) $ has initial object, denoted, by abuse of language, by $\initialobject $.
 	\end{enumerate}	
 	In this case, $\left( \initialobject  ,\initialobject   \right) $
 	is the initial object of $\GrothC \catL $.
 \end{proposition}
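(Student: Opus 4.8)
The plan is to verify the universal property of an initial object directly, by unpacking the concrete description of morphisms in $\GrothC \catL$ recalled earlier in the excerpt. Since there are no nontrivial universal constructions to build here (no products, coproducts, or adjoints to manipulate), I expect this to be a routine two-component calculation with no real obstacle; the only care needed is to keep the dependency of the second component on the first straight.

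First I would fix an arbitrary object $(X,x)$ of $\GrothC \catL$ and recall that, by the concrete description of $\GrothC \catL$, a morphism $(\initialobject,\initialobject)\to (X,x)$ is precisely a pair $(f,\ff{f})$ in which $f:\initialobject\to X$ is a morphism of $\catC$ and $\ff{f}:\initialobject\to \catL(f)(x)$ is a morphism of $\catL(\initialobject)$. The goal is to show that exactly one such pair exists.

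Next I would treat the two components in order. Since $\initialobject$ is the initial object of $\catC$ by hypothesis i), there is a unique choice of first component $f:\initialobject\to X$. Having fixed this $f$, the object $\catL(f)(x)$ of $\catL(\initialobject)$ is determined, and since $\initialobject$ is the initial object of $\catL(\initialobject)$ by hypothesis ii), there is a unique morphism $\ff{f}:\initialobject\to \catL(f)(x)$ in $\catL(\initialobject)$. Thus the pair $(f,\ff{f})$ exists and is uniquely determined.

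Finally I would conclude that for every object $(X,x)$ the hom-set $\GrothC \catL\bigl((\initialobject,\initialobject),(X,x)\bigr)$ is a singleton, which is exactly the statement that $(\initialobject,\initialobject)$ is initial in $\GrothC \catL$. If desired, one can note that this argument uses nothing about the strictly indexed structure beyond functoriality, and that the codual statement for $\GrothC\catL^{\op}$ follows by the same reasoning once one replaces the initial object of $\catL(\initialobject)$ with the initial object of $\catL(\initialobject)^{\op}$, i.e.\ the terminal object of $\catL(\initialobject)$.
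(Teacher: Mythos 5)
Your proof is correct and is essentially the same argument as the paper's: the paper computes $\GrothC\catL\left((\initialobject,\initialobject),(Y,y)\right) = \coprod_{n\in\catC(\initialobject,Y)}\catL(\initialobject)(\initialobject,\catL(n)(y)) \cong \coprod_{n\in\catC(\initialobject,Y)}\terminal \cong \terminal$, which is exactly your two-component singleton count written as a chain of bijections. The only cosmetic difference is order of presentation (the paper collapses the fibre hom-sets first and the indexing set second, while you fix the base morphism first), which changes nothing.
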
	
 \begin{proof}
 	Assuming the hypothesis above, given any object $(Y,y)\in\GrothC\catL $,
 	\begin{align*}
 		& \displaystyle  \GrothC\catL \left( (\initialobject ,\initialobject ), (Y,y) \right) \\
 		&= \displaystyle  \displaystyle\coprod _{n\in\catC (\initialobject , Y) } \catL ( \initialobject ) ( \initialobject , \catL (n) (y)  )   \explainr{by definition}\\
 		&  \cong \displaystyle  \displaystyle\coprod _{n\in\catC (\initialobject , Y) } \terminal   \explainr{$\initialobject$ initial in $\catL ( \initialobject )$} \\
 		& \cong \terminal . \explainr{$\initialobject$ initial in $\catC$ }   
 	\end{align*}
 \end{proof}

 \begin{proposition}[Coproducts in $\GrothC \catL $]\label{prop:corpoducts-in-the-total-category}
 	Let $\catL : \catC ^\op\to\Cat $ be a strictly indexed category. We assume that
 	\begin{enumerate}[i)]
 		\item $ ((W_i,w_i)) _{i\in I} $ is family of objects of $\GrothC\catL$;
 		\item the category $\catC $ has the coproduct
 		\begin{equation}
 			\left(\begin{tikzcd}
 				W_t\arrow[r, "{\ic _ {W_t}}"] & \displaystyle\coprod _{i\in I } W_i                           
 			\end{tikzcd}\right) _{t\in I }
 		\end{equation}
 		of the objects in $ \left( (W_i,w_i)\right) _{i\in I} $;
 		\item there is an adjunction $\catL (\ic _ {W_i} )! \dashv \catL (\ic _ {W_i} ) $ for each $i\in I$;
 		\item $\catL \left( \displaystyle\coprod _{i\in I } W_i  \right) $ has the coproduct $\displaystyle \coprod _{i\in I} \catL (\ic _ {W_i} )! (w_i) $
 		of the objects  $\left( \catL (\ic _ {W_i} )! (w_i)\right) _{i\in I} $.
 	\end{enumerate}	
 	In this case, $$\left( \displaystyle\coprod _{i\in I } W_i  ,\quad \displaystyle\coprod _{i\in I } \catL (\ic _ {W_i} )! (w_i)   \right) $$
 	is the coproduct of the objects $ \left( (W_i,w_i )\right) _{i\in I} $ in $\GrothC \catL $.
 \end{proposition}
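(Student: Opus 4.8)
The plan is to establish the claimed coproduct by the same hom-set computation used for the products in Proposition~\ref{prop:grothendieck-products-covariant}: I would exhibit a chain of bijections, natural in $(Y,y)$, between $\GrothC\catL\bigl((\coprod_{i\in I} W_i,\ \coprod_{i\in I}\catL(\ic_{W_i})!(w_i)),(Y,y)\bigr)$ and $\prod_{i\in I}\GrothC\catL((W_i,w_i),(Y,y))$, and then read off that the components are induced by the evident coprojections. Since a coproduct is precisely an object whose covariant hom-functor is naturally isomorphic to the product of the hom-functors of the summands, this suffices.

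First I would unfold the definition of the total-category hom-set, writing it as $\coprod_{f\in\catC(\coprod_i W_i,Y)}\catL(\coprod_i W_i)\bigl(\coprod_i\catL(\ic_{W_i})!(w_i),\ \catL(f)(y)\bigr)$. Applying the universal property of the coproduct $\coprod_i W_i$ in $\catC$ (assumption ii) replaces the indexing set by $\prod_i\catC(W_i,Y)$, so that each $f$ is recorded by the family $(f_i)_i$ with $f_i:=f\circ\ic_{W_i}$. Using the universal property of the fibrewise coproduct $\coprod_i\catL(\ic_{W_i})!(w_i)$ in $\catL(\coprod_i W_i)$ (assumption iv) splits the inner hom-set into $\prod_i\catL(\coprod_i W_i)\bigl(\catL(\ic_{W_i})!(w_i),\ \catL(f)(y)\bigr)$. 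The adjunction $\catL(\ic_{W_i})!\dashv\catL(\ic_{W_i})$ of assumption iii then transposes each factor to $\catL(W_i)\bigl(w_i,\ \catL(\ic_{W_i})(\catL(f)(y))\bigr)$, and functoriality of $\catL$ rewrites $\catL(\ic_{W_i})\circ\catL(f)=\catL(f\circ\ic_{W_i})=\catL(f_i)$, producing $\catL(W_i)(w_i,\catL(f_i)(y))$.

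At this point the expression reads $\coprod_{(f_i)_i\in\prod_i\catC(W_i,Y)}\prod_i\catL(W_i)(w_i,\catL(f_i)(y))$, and the only remaining move is to commute the dependent sum past the product. I expect this to be the main (and essentially only) obstacle: it is an instance of distributivity of $\prod$ over $\coprod$ in $\Set$, i.e. the type-theoretic axiom of choice $\prod_i\coprod_{f_i}B_i(f_i)\cong\coprod_{(f_i)_i}\prod_i B_i(f_i)$, exactly the kind of $\Set$-level step invoked as ``Beck--Chevalley for $\Sigma$ in $\Set$'' in Proposition~\ref{prop:grothendieck-products-covariant}. Applying it yields $\prod_i\coprod_{f_i\in\catC(W_i,Y)}\catL(W_i)(w_i,\catL(f_i)(y))=\prod_i\GrothC\catL((W_i,w_i),(Y,y))$, as required. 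The rest is bookkeeping that mirrors the product computations: checking that every bijection is natural in $(Y,y)$ so that the composite identifies a genuine coproduct cocone, and confirming that the cocone maps realising the universal property are the coprojections $(\ic_{W_t},\ \catL(\ic_{W_t})(\kappa_t)\circ\eta_{w_t}):(W_t,w_t)\to(\coprod_i W_i,\coprod_i\catL(\ic_{W_i})!(w_i))$, where $\kappa_t$ is the fibrewise coprojection into $\coprod_i\catL(\ic_{W_i})!(w_i)$ and $\eta$ is the unit of assumption iii. No new difficulty arises beyond the $\Set$-level distributivity.
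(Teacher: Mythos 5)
Your proposal is correct and is essentially the paper's own proof: the paper establishes the same chain of natural bijections (definition of hom-sets, distributivity of $\prod$ over $\coprod$ in $\Set$, the universal property of the coproduct in $\catC$, functoriality of $\catL$, the adjunctions $\catL(\ic_{W_i})!\dashv\catL(\ic_{W_i})$, and the universal property of the fibrewise coproduct), merely traversed in the opposite direction, starting from $\prod_{i\in I}\GrothC\catL((W_i,w_i),(Y,y))$. Your extra remarks on naturality and the explicit form of the coprojections are fine and consistent with what the paper leaves implicit.
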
	
 \begin{proof}
 	Assuming the hypothesis above, given any object $(Y,y)\in\GrothC\catL $,	
 	\begin{align*}
 		& \displaystyle \prod _{i\in I }  \GrothC\catL \left( (W_i,w_i), (Y,y) \right) \\
 		& = \displaystyle \prod _{i\in I } \left( \displaystyle\coprod _{n\in\catC (W_ i, Y) } \catL ( W_i ) (w_i, \catL (n) (y)  ) \right)  \explainr{by definition}\\
 		&  \cong \displaystyle \coprod _{(n_i)_{i\in I}\in \prod _{i\in I} \catC (W_ i, Y) } \left( \displaystyle\prod _{ i\in I } \catL ( W_i ) (w_i, \catL (n _i ) (y)  ) \right)  \explainr{distributivity} \\
 		& \cong\displaystyle \coprod _{h\in  \catC (\coprod _{i\in I} W_ i, Y) } \left( \displaystyle\prod _{ i\in I } \catL ( W_i ) (w_i, \catL (h\circ \ic _ {W_i} ) (y)  ) \right)  \explainr{coprod. univ. property}\\
 		&\cong \displaystyle \coprod _{h\in  \catC (\coprod _{i\in I} W_ i, Y) } \left( \displaystyle\prod _{ i\in I } \catL ( W_i ) \left( w_i, \catL (\ic _ {W_i})\circ \catL (h ) (y)  \right) \right) \\
 		& \cong  \displaystyle \coprod _{h\in  \catC (\coprod _{i\in I} W_ i, Y) } \left( \displaystyle\prod _{ i\in I } \catL \left( \coprod _{i\in I} W_i \right) \left( \catL (\ic _ {W_i}) ! (w_i), \catL (h ) (y)  \right) \right) \explainr{adjunctions}\\
 		& \cong  \displaystyle \coprod _{h\in  \catC (\coprod _{i\in I} W_ i, Y) } \left( \displaystyle \catL \left( \coprod _{i\in I} W_i \right) \left( \coprod _{ i\in I }\catL (\ic _ {W_i}) ! (w_i), \catL (h ) (y)  \right) \right) \explainr{coprod. univ. property}\\
 		& = \GrothC\catL \left( \left( \coprod _{i\in I} W_ i , \coprod _{ i\in I }\catL (\ic _ {W_i}) ! (w_i) \right), (Y,y) \right) . \explainr{coprod. univ. property}   
 	\end{align*}
 \end{proof}
 
 Codually, we get results on the initial objects and coproducts in the category $\GrothC \catL ^\op $ below.
 
 \begin{corollary}[Initial object in $\GrothC \catL ^\op $]\label{coro:initialobject-in-the-total-category-RAD}
 	Let $\catL : \catC ^\op\to\Cat $ be a strictly indexed category. We assume that
 	\begin{enumerate}[i)]
 		\item $\catC $ has initial object $\initialobject $;
 		\item $\catL (\initialobject ) $ has terminal object $\terminal $.
 	\end{enumerate}	
 	In this case, $\left( \initialobject  ,\terminal  \right) $
 	is the initial object of $\GrothC \catL $.
 \end{corollary}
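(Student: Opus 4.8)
The plan is to dualize the computation in Proposition~\ref{prop:initialobject-grothendieckconstruction}: I will verify directly that for every object $(Y,y)\in\GrothC\catL^\op$ the hom-set $\GrothC\catL^\op\left((\initialobject,\terminal),(Y,y)\right)$ is a singleton, which is precisely the statement that $(\initialobject,\terminal)$ is initial. (Note in passing that the corollary's conclusion should read $\GrothC\catL^\op$ rather than $\GrothC\catL$; the latter is a typo, since the hypothesis on $\catL(\initialobject)$ now concerns its \emph{terminal} object.)

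First I would unfold the definition of morphisms in $\GrothC\catL^\op$. A morphism $(\initialobject,\terminal)\to(Y,y)$ is a pair $(n,\ff{n})$ consisting of $n\in\catC(\initialobject,Y)$ together with a fibre morphism $\ff{n}\colon\catL(n)(y)\to\terminal$ in $\catL(\initialobject)$. Hence
$$\GrothC\catL^\op\left((\initialobject,\terminal),(Y,y)\right)=\coprod_{n\in\catC(\initialobject,Y)}\catL(\initialobject)\left(\catL(n)(y),\terminal\right).$$

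Next I would apply the two hypotheses in turn. Since $\terminal$ is terminal in $\catL(\initialobject)$, each fibre hom-set $\catL(\initialobject)(\catL(n)(y),\terminal)$ is a singleton; and since $\initialobject$ is initial in $\catC$, the indexing set $\catC(\initialobject,Y)$ is itself a singleton. The coproduct therefore collapses, yielding the required (natural) isomorphism
$$\GrothC\catL^\op\left((\initialobject,\terminal),(Y,y)\right)\cong\terminal,$$
and naturality in $(Y,y)$ is inherited from the naturality of the two collapses above.

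There is essentially no obstacle here: the argument is a routine fibrewise dualization of Proposition~\ref{prop:initialobject-grothendieckconstruction}, obtained by replacing the fibre $\catL(\initialobject)$ with its opposite, so that the \emph{initial} object used in the covariant case is replaced by the \emph{terminal} object of $\catL(\initialobject)$. The only point that deserves a moment's attention is that the two singletons---one arising from the base category and one from the fibre---combine correctly under the indexing coproduct, which is immediate.
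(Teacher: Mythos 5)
Your proof is correct and is essentially the paper's own argument: the paper justifies this corollary as the codual of Proposition~\ref{prop:initialobject-grothendieckconstruction} (i.e., by applying that proposition to the fibrewise dual $\catL^{\op}$), and your computation simply carries out that dualization explicitly — terminality in the fibre $\catL(\initialobject)$ collapses each fibre hom-set $\catL(\initialobject)(\catL(n)(y),\terminal)$, and initiality of $\initialobject$ in $\catC$ collapses the indexing set. Your observation that the conclusion should read $\GrothC\catL^{\op}$ rather than $\GrothC\catL$ is also right; that is a typo in the paper's statement.
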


 \begin{corollary}[Coproducts in $\GrothC \catL ^\op $]\label{coro:coproducts-in-the-total-category-RAD}
 	Let $\catL : \catC ^\op\to\Cat $ be a strictly indexed category. We assume that
 	\begin{enumerate}[i)]
 		\item $ ((W_i,w_i)) _{i\in I} $ is family of objects of $\GrothC\catL$;
 		\item the category $\catC $ has the coproduct
 		\begin{equation}
 			\left(\begin{tikzcd}
 				W_t\arrow[r, "{\ic _ {W_t}}"] & \displaystyle\coprod _{i\in I } W_i                           
 			\end{tikzcd}\right) _{t\in I }
 		\end{equation}
 		of the objects in $ \left( (W_i,w_i)\right) _{i\in I} $;
 		\item there is an adjunction $\catL (\ic _ {W_i} )\dashv \catL (\ic _ {W_i} )^\ast $ for each $i\in I$;
 		\item $\catL \left( \displaystyle\coprod _{i\in I } W_i  \right) $ has the product $\displaystyle \prod _{i\in I} \catL (\ic _ {W_i} )^\ast (w_i) $
 		of the objects $\left( \catL (\ic _ {W_i} )^\ast (w_i)\right) _{i\in I} $.
 	\end{enumerate}	
 	In this case, $$\left( \displaystyle\coprod _{i\in I } W_i  ,\quad \displaystyle\prod _{i\in I } \catL (\ic _ {W_i} )^\ast (w_i)   \right) $$
 	is the coproduct of the objects $ \left( (W_i,w_i )\right) _{i\in I} $ in $\GrothC \catL ^\op $.
 \end{corollary}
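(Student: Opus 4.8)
The plan is to obtain this statement as the exact fibrewise dual of Proposition~\ref{prop:corpoducts-in-the-total-category}. There are two routes, and I would sketch both. The slickest is to invoke that proposition verbatim on the dual indexed category $\catL^{\op}:\catC^{\op}\to\Cat$, noting that $\GrothC(\catL^{\op})=\GrothC\catL^{\op}$, and then translate the three fibrewise hypotheses across the dualisation. The reindexing functor of $\catL^{\op}$ along $\ic_{W_i}$ is $\catL^{\op}(\ic_{W_i})=\left(\catL(\ic_{W_i})\right)^{\op}$; since $F\dashv G$ entails $G^{\op}\dashv F^{\op}$, the assumed \emph{right} adjoint $\catL(\ic_{W_i})^{\ast}$ of $\catL(\ic_{W_i})$ supplies a left adjoint $\left(\catL(\ic_{W_i})^{\ast}\right)^{\op}\dashv\catL^{\op}(\ic_{W_i})$, which is exactly the functor Proposition~\ref{prop:corpoducts-in-the-total-category} names $\catL(\ic_{W_i})!$. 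Likewise, a coproduct in the fibre $\catL^{\op}\left(\coprod_{i}W_i\right)=\left(\catL\left(\coprod_iW_i\right)\right)^{\op}$ is precisely a product in $\catL\left(\coprod_iW_i\right)$, so the hypothesised product $\prod_{i}\catL(\ic_{W_i})^{\ast}(w_i)$ is the coproduct required by the proposition. Feeding these through, the proposition's conclusion, read back in $\catL$ rather than $\catL^{\op}$, is the asserted coproduct object $\left(\coprod_iW_i,\prod_i\catL(\ic_{W_i})^{\ast}(w_i)\right)$.

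For concreteness I would also spell out the explicit route, which recomputes the universal property directly. Using that a morphism $(W_i,w_i)\to(Y,y)$ in $\GrothC\catL^{\op}$ is a pair $(n,n')$ with $n\in\catC(W_i,Y)$ and $n'\colon\catL(n)(y)\to w_i$ in $\catL(W_i)$, I would form $\prod_{i\in I}\GrothC\catL^{\op}\left((W_i,w_i),(Y,y)\right)=\prod_{i\in I}\coprod_{n\in\catC(W_i,Y)}\catL(W_i)(\catL(n)(y),w_i)$ and transform it by the same chain of isomorphisms as in Proposition~\ref{prop:corpoducts-in-the-total-category}: distributivity of $\prod$ over $\coprod$ in $\Set$; the universal property of $\coprod_iW_i$ to replace the family $(n_i)$ by a single $h$ with $n_i=h\circ\ic_{W_i}$; functoriality of $\catL$ to rewrite $\catL(h\circ\ic_{W_i})=\catL(\ic_{W_i})\circ\catL(h)$; the adjunction $\catL(\ic_{W_i})\dashv\catL(\ic_{W_i})^{\ast}$ to pass from $\catL(W_i)\left(\catL(\ic_{W_i})(\catL(h)(y)),w_i\right)$ to $\catL\left(\coprod_iW_i\right)\left(\catL(h)(y),\catL(\ic_{W_i})^{\ast}(w_i)\right)$; and finally the universal property of the product $\prod_i\catL(\ic_{W_i})^{\ast}(w_i)$ to collapse the outer $\prod_i$. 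This identifies the product of hom-sets with $\GrothC\catL^{\op}\left(\left(\coprod_iW_i,\prod_i\catL(\ic_{W_i})^{\ast}(w_i)\right),(Y,y)\right)$, naturally in $(Y,y)$.

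The only genuine point of care, and the step I expect to warrant the most double-checking, is the variance bookkeeping under dualisation: passing from $\catL$ to $\catL^{\op}$ simultaneously turns a left adjoint into a right adjoint and a fibrewise coproduct into a fibrewise product. This is precisely why the hypotheses of this corollary ask for $\catL(\ic_{W_i})^{\ast}$ as a right adjoint and for a product $\prod_i\catL(\ic_{W_i})^{\ast}(w_i)$ in the fibre, mirroring the left adjoint and fibrewise coproduct of Proposition~\ref{prop:corpoducts-in-the-total-category}. Everything else is a mechanical transcription of that proof's isomorphisms, so no new ingredient beyond the one adjunction step is needed.
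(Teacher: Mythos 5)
Your proposal is correct and matches the paper's approach exactly: the paper derives this corollary purely "codually" from Proposition~\ref{prop:corpoducts-in-the-total-category}, which is precisely your first route (apply that proposition to $\catL^{\op}$, using that $F\dashv G$ gives $G^{\op}\dashv F^{\op}$ and that fibrewise products become fibrewise coproducts), and your variance bookkeeping is right. Your second, explicit hom-set computation is just the dual transcription of the paper's proof of that proposition, so no new ingredient is needed.
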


\subsection{Extensive indexed categories and coproducts in total categories}\label{subsect:coproduct-structure}
 We introduce a special property that fits our context well. We call this property \textit{extensivity} because it generalizes the concept of 
\textit{extensive categories} (see \ref{subsect:parameterized-initial-algebras-coalgebras} for the notion of extensive category). 

As we will show, the property of \textit{extensivity} is a crucial requirement for our models. One significant advantage of this property is that it allows us to easily construct coproducts in the total categories, even under lenient conditions. We demonstrate this in Theorem~\ref{coro:cocartesianstructure-in-the-cocartesian-csategory}.

\begin{itemize}
	\item We assume that the category $\catC $ has \textit{finite coproducts}. Given $  W, X\in\catC $, we denote by
	\begin{equation}
	\begin{tikzcd}
	W \arrow[rr, "{\ic_1 = \ic _ W}"] && W\sqcup X && X \arrow[swap, ll, "{\ic_2 = \ic _X}"]                             
	\end{tikzcd}
	\end{equation}
	the coproduct (and coprojections) in $\catC $, and by $\initial $ the initial object of $\catC $. 
\end{itemize}

\begin{definition}[Extensive indexed categories]\label{definition:extesive-indexed-categories}
We call	an indexed category $\catL : \catC ^\op \to \Cat $  \textit{extensive}
if, for any $(W,X)\in\catC\times\catC $, the unique functor 
    \begin{equation}\label{eq:definition-of-extensivity-equivsalence}
	\begin{tikzcd}
	\catL (W\sqcup X)  \arrow[rrr, "{\left(\catL (\ic _ W), \catL (\ic _ X)\right)}"] &&& \catL (W) \times \catL (X)                              
	\end{tikzcd}
	\end{equation}
induced by the functors
\begin{equation}
\begin{tikzcd}
\catL (W) && \catL (W\sqcup X) \arrow[rr, "{\catL (\ic _ X) }"]\arrow[swap, ll, "{\catL (\ic _W) }"]  && \catL (X)                             
\end{tikzcd}
\end{equation}
is an equivalence. In this case, for each $(W,X)\in\catC\times\catC $, we denote by
\begin{equation}\label{eq:equivalence-extensive-indexed-category} 
	\equivalenceextensive ^{(W,X)} : \catL (W)\times \catL (X)\to \catL (W\sqcup  X)
\end{equation} 
 an inverse equivalence
of ${\left(\catL (\ic _ W), \catL (\ic _ X)\right)} $. 
\end{definition}	

Since the products of $\catC ^\op $ are the coproducts of $\catC $,
the extensive condition described above is equivalent to say that the (pseudo)functor  $\catL : \catC ^\op \to \Cat $ preserves binary (bicategorical) products (up to equivalence). 

Since our cases of interest are strict, this leads us to consider \textit{strict extensivity}, that is to say, \textit{whenever we talk about extensive strictly indexed categories, we are assuming that \eqref{eq:definition-of-extensivity-equivsalence} is invertible.} 
In this case, it is even clearer that extensivity coincides with the well-known notion of preservation of binary products.
\begin{lemma}[Extensive strictly indexed categories]
	Let  $\catL : \catC ^\op \to \Cat $ be an indexed category. $\catL $ is strictly extensive 
	if, and only if,  $\catL $ is a functor that preserves binary products.
\end{lemma}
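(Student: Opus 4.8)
The plan is to unwind both sides until they become the same statement about a single comparison functor. The key observation is that binary products in $\catC^{\op}$ are exactly binary coproducts in $\catC$: the product of $W$ and $X$ in $\catC^{\op}$ is the object $W\sqcup X$, equipped with the two coprojections $\ic_W : W \to W\sqcup X$ and $\ic_X : X \to W\sqcup X$, now read as the two product projections $W\sqcup X \to W$ and $W\sqcup X \to X$ of $\catC^{\op}$.

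First I would spell out what it means for the (strict) functor $\catL : \catC^{\op} \to \Cat$ to preserve this binary product. In $\Cat$, the categorical product of $\catL(W)$ and $\catL(X)$ is the cartesian product category $\catL(W)\times\catL(X)$ together with the usual projection functors $\pi_1,\pi_2$. Preservation means that the canonical comparison functor $\phi : \catL(W\sqcup X) \to \catL(W)\times\catL(X)$ — the unique functor determined by the universal property of the product via $\pi_1\circ\phi = \catL(\ic_W)$ and $\pi_2\circ\phi = \catL(\ic_X)$ — is an isomorphism of categories.

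Then I would note that this comparison functor $\phi$ is, by the very definition of the pairing into a cartesian product of categories, nothing but the functor $\left(\catL(\ic_W),\catL(\ic_X)\right)$ appearing in Definition~\ref{definition:extesive-indexed-categories}. Hence, for a fixed pair $(W,X)$, the assertion ``$\catL$ preserves the product of $W$ and $X$ in $\catC^{\op}$'' and the assertion ``the functor $\left(\catL(\ic_W),\catL(\ic_X)\right)$ is invertible'' are literally the same. Quantifying over all pairs $(W,X)\in\catC\times\catC$ then yields that $\catL$ preserves all binary products if and only if $\catL$ is strictly extensive.

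There is no real obstacle here beyond bookkeeping. The only points requiring care are tracking the direction reversal induced by $(-)^{\op}$ (so that coprojections in $\catC$ become product projections in $\catC^{\op}$) and checking that ``preservation'' is intended in the strict, on-the-nose sense matching the invertibility demanded in the definition of \emph{strict} extensivity, rather than mere equivalence. Since $\catL$ is a strict functor and the product in $\Cat$ is the literal cartesian product of categories, the comparison functor is genuinely an isomorphism precisely when extensivity holds strictly, so the two notions coincide.
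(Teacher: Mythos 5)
Your proposal is correct and matches the paper's own (implicit) argument: the paper gives no separate proof, instead observing just before the lemma that products in $\catC^{\op}$ are coproducts in $\catC$, so that the canonical comparison functor into $\catL(W)\times\catL(X)$ is exactly $\left(\catL(\ic_W),\catL(\ic_X)\right)$, and strict extensivity (invertibility of that functor) is literally preservation of binary products. Your additional care about the op-direction reversal and the strict (isomorphism, not equivalence) reading of preservation is precisely the point the paper flags when passing from the pseudo/bicategorical statement to the strict one.
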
 

Recall that, in general, \textit{preservation of binary products implies preservation of preterminal objects}; see, for instance, \citep[Remark~4.14]{zbMATH07629358}. 
Lemma \ref{lem:preservation-of-terminal-object} 
is the appropriate analogue of this observation suitably applied to the context of extensive indexed categories. Moreover, Lemma~\ref{lem:preservation-of-terminal-object} can be seen as a generalization of \citep[Proposition~2.8]{MR1201048}.

\begin{lemma}[Preservation of terminal objects]\label{lem:preservation-of-terminal-object}
	Let $ \catL : \catC ^\op \to \Cat $ be an extensive indexed category
	which is not (naturally isomorphic to the functor) constantly equal to $\initial $. The unique 
	functor 
	\begin{equation}\label{eq:basic-observation-preservation-of-terminal}
	\catL (\initial )\to \terminal
	\end{equation}
	  is an equivalence. If, furthermore, 
	\eqref{eq:definition-of-extensivity-equivsalence} is an isomorphism, then \eqref{eq:basic-observation-preservation-of-terminal} is invertible.
\end{lemma}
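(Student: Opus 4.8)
The plan is to derive everything from a single instantiation of extensivity at $W=X=\initial$, turning the statement into a computation about diagonal functors; this is the concrete manifestation of the principle that product-preserving functors preserve preterminal objects \citep{MR1201048}. First I would note that, since $\initial$ is initial in $\catC$, the set $\catC(\initial,\initial\sqcup\initial)$ is a singleton, so the two coprojections $\ic_1,\ic_2\colon\initial\to\initial\sqcup\initial$ coincide; write $u$ for this common morphism. As a coproduct of initial objects, $\initial\sqcup\initial$ is itself initial, so $u$ is the unique isomorphism $\initial\to\initial\sqcup\initial$ and $\catL(u)$ is invertible. Hence the comparison functor $(\catL(\ic_1),\catL(\ic_2))\colon\catL(\initial\sqcup\initial)\to\catL(\initial)\times\catL(\initial)$ factors as $\Delta\circ\catL(u)$, where $\Delta$ is the diagonal of $\catL(\initial)$. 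Since $\catL(u)$ is invertible, extensivity says exactly that $\Delta$ is an equivalence (and, when \eqref{eq:definition-of-extensivity-equivsalence} is an isomorphism, that $\Delta$ is an isomorphism of categories).

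Next I would characterise when the diagonal $\Delta\colon\cat{A}\to\cat{A}\times\cat{A}$ is an equivalence. Full faithfulness at $(a,b)$ says that the map $\cat{A}(a,b)\to\cat{A}(a,b)\times\cat{A}(a,b)$, $f\mapsto(f,f)$, is a bijection, which forces each hom-set to have at most one element; essential surjectivity forces any two objects to be isomorphic, so in the nonempty case every hom-set also contains an isomorphism and is therefore a singleton. Thus $\Delta$ is an equivalence exactly when $\cat{A}$ is empty or every hom-set of $\cat{A}$ is a singleton, i.e. when $\cat{A}\simeq\initial$ or $\cat{A}\simeq\terminal$. Applying this to $\cat{A}=\catL(\initial)$, I conclude that $\catL(\initial)$ is either empty or equivalent to $\terminal$, in which latter case the unique functor $\catL(\initial)\to\terminal$ is an equivalence.

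It remains to exclude $\catL(\initial)\simeq\initial$, and this is precisely where the non-degeneracy hypothesis is used. A category equivalent to the empty category is itself empty, so if $\catL(\initial)$ were equivalent to $\initial$ it would have no objects. But then, for every $C\in\catC$, the unique morphism $\initial\to C$ induces a functor $\catL(C)\to\catL(\initial)=\initial$, and a functor into the empty category can exist only out of the empty category; hence $\catL(C)$ is empty for all $C$, making $\catL$ naturally isomorphic to the constant functor at $\initial$. This contradicts the hypothesis, so $\catL(\initial)\simeq\terminal$, proving the first claim.

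For the second claim, I would rerun the computation of the first paragraph under the assumption that \eqref{eq:definition-of-extensivity-equivsalence} is invertible: then $\Delta\colon\catL(\initial)\to\catL(\initial)\times\catL(\initial)$ is an isomorphism of categories. Surjectivity of $\Delta$ on objects forces every pair $(x,y)$ to satisfy $x=y$, so $\catL(\initial)$ has at most one object, and surjectivity of $\Delta$ on morphisms forces its endomorphism monoid to be trivial; together with the first claim (which already excludes the empty case) this shows $\catL(\initial)$ has exactly one object and one morphism, so $\catL(\initial)\to\terminal$ is invertible. The main obstacle throughout is the honest treatment of the degenerate empty fibre: extensivity on its own cannot separate $\initial$ from $\terminal$ (both are preterminal in $\Cat$), so the non-triviality hypothesis is genuinely necessary, and one must exploit that equivalence-to-empty is already strict emptiness in order to propagate the degeneracy from the fibre over $\initial$ to all fibres.
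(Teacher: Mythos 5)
Your proof is correct, and its core move is the same one the paper makes: specialize extensivity to $W=X=\initial$, observe that the two coprojections $\initial\to\initial\sqcup\initial$ coincide and are invertible, and deduce that a self-comparison functor on $\catL(\initial)$ is an equivalence. The difference lies in what is done with that functor. The paper works with the projection $\pi_{\catL(\initial)}:\catL(\initial)\times\catL(\initial)\to\catL(\initial)$ and then invokes its Appendix Lemma \ref{lem:pseudopreterminal-Cat}, a reusable characterization of pseudo-preterminal objects of $\Cat$ (projection an equivalence, identity isomorphic to a constant functor, etc.), whereas you work with the diagonal $\Delta$ (the section of that projection) and prove the needed classification from scratch, reading off from full faithfulness and essential surjectivity of $\Delta$ that $\catL(\initial)$ is either empty or indiscrete. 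The two are interchangeable since $\pi_{\catL(\initial)}\circ\Delta=\id$; your route is self-contained and elementary, while the paper's buys a lemma it can cite elsewhere. Your use of the nondegeneracy hypothesis, propagating emptiness of $\catL(\initial)$ to every fibre via the functors $\catL(\initial\to C)$, is exactly the paper's first paragraph in contrapositive form.

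Two further remarks. First, you actually go beyond the paper's written proof on the second claim: the paper's argument stops at the equivalence statement, and Lemma \ref{lem:pseudopreterminal-Cat} is formulated only for equivalences, whereas you verify directly that an isomorphism $\Delta$ forces exactly one object and one arrow, so that \eqref{eq:basic-observation-preservation-of-terminal} is invertible; this is a small but genuine completion. Second, a minor wording caveat: for a general (pseudo)functorial $\catL$, the functor $\catL(u)$ induced by the isomorphism $u=\ic_1=\ic_2$ is only an equivalence rather than invertible. Your factorization $(\catL(\ic_1),\catL(\ic_2))=\Delta\circ\catL(u)$ and the two-out-of-three argument go through verbatim with that weakening, and strict invertibility of $\catL(u)$ is available precisely in the strict setting where your second claim needs it, so nothing breaks.
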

\begin{proof}
Firstly, given any $X\in\catC $ such that $\catL (X) $ is not (isomorphic to) the initial object of $\Cat $, we have that $\catL ( i_X : \initial\to X ) $ is a functor from $\catL (X) $ to
$\catL (\initial ) $. Hence $\catL (\initial ) $ is not isomorphic to the initial category as well. 	

Secondly, since	$\ic _{\initial } : \initial \to \initial\sqcup \initial $ is an isomorphism,
 $\left(\catL (\ic _ {\initial} ), \catL (\ic _ \initial )\right)$ is an equivalence and
\begin{equation}
\begin{tikzcd}
\catL ( \initial\sqcup \initial ) \arrow[rrrrrr, bend right=15, swap, "{\catL (\ic _ {\initial} )}"] \arrow[rrr, "{\left(\catL (\ic _ {\initial} ), \catL (\ic _ \initial )\right)}"] &&& \catL (\initial ) \times \catL (\initial ) \arrow[rrr, "{ \pi _{\catL(\initial )} }"]    &&&  \catL (\initial ) ,                         
\end{tikzcd} 
\end{equation}	
 we conclude that
	 $ \pi _{\catL(\initial )} $ is an equivalence. 	  
	 This proves that 
	$\catL (\initial ) \to \terminal $ is an equivalence by Appendix \ref{sec:appendix-pseudoterminal-in-Cat}, Lemma \ref{lem:pseudopreterminal-Cat}.
\end{proof}

We proceed to study the cocartesian structure of $\GrothC \catL $ (and $\GrothC \catL ^\op $) when $\catL $ is extensive. We start by proving  in Theorem \ref{theo:adjoint-to-get-coproduct} that, in the case of extensive indexed categories, the hypothesis of Proposition \ref{prop:initialobject-grothendieckconstruction} always holds.

\begin{therm}\label{theo:adjoint-to-get-coproduct}
	Let $\catL : \catC ^\op \to \Cat $ be an extensive (strictly) indexed category. Assume that $X$ is an object of $\catC $ such that 
	$\catL (X) $ has initial object $\initialobject $. In this case, for any $W\in \catC $,
	we have an adjunction 
\begin{equation}
\begin{tikzcd}
\catL (W\sqcup X) \arrow[bend right=15, rrrr, swap, "{\catL (\ic _ W)}"] &&\bot && \catL (W)  \arrow[bend right=15, llll, swap, "{\equivalenceextensive ^{(W,X)}\circ \left(\id _{\catL (W)}, \initialobject \right) }"]                            
\end{tikzcd}
\end{equation}
in which, by abuse of language, $\initialobject : \catL (W)\to \catL (X) $ is the functor constantly equal to $\initialobject $. 	
Dually, we have an adjunction
\begin{equation}
\begin{tikzcd}
\catL (W) \arrow[bend right=15, rrrr, swap, "{\equivalenceextensive ^{(W,X)}\circ \left(\id _{\catL (W)}, \terminal \right) }"] &&\bot && \catL (W\sqcup X)  \arrow[bend right=15, llll, swap, "{\catL (\ic _ W)}"]                            
\end{tikzcd}
\end{equation}
provided that $\catL (X) $ has terminal object $\terminal $ and, by abuse of language, we denote by $\terminal : \catL (W)\to \catL (X) $ the functor constantly equal to $\terminal $.
\end{therm}
\begin{proof}
	Assuming that $\catL (X) $ has initial object $\initialobject $, we have the adjunction
\begin{equation}
\begin{tikzcd}
\catL (W) \times \catL (X) \arrow[bend right=15, rrrr, swap, "{\pi _{\catL (W)}}"] &&\bot && \catL (W)\arrow[bend right=15, llll, swap, "{\left(\id _{\catL (W)}, \initialobject \right) }"]                            
\end{tikzcd}
\end{equation}
whose unit is the identity and counit is pointwise given by 
$ \varepsilon _{(w,x)} =  (\id _w, \initialobject \to x ) $. Therefore we have the composition of adjunctions 
\begin{equation*}
\begin{tikzcd}
\catL (W\sqcup X)\arrow[bend right=35, rrrrrr, swap, "{\catL (\ic _ W)}"]\arrow[bend right=15, rr, swap, "{\left(\catL (\ic _ W), \catL (\ic _ X)\right)}"] &\bot &\catL (W) \times \catL (X) \arrow[bend right=15, ll, swap, "{\equivalenceextensive ^{(W,X)} }"]\arrow[bend right=15, rrrr, swap, "{\pi _{\catL (W)}}"] &&\bot && \catL (W)  .\arrow[bend right=15, llll, swap, "{\left(\id _{\catL (W)}, \initialobject \right) }"]\arrow[bend right=35, llllll, swap, "{\equivalenceextensive ^{(W,X)}\circ \left(\id _{\catL (W)}, \initialobject \right) }"]                            
\end{tikzcd}
\end{equation*}
\end{proof}

\begin{corollary}[Cocartesian structure of $\GrothC \catL $]\label{coro:cocartesianstructure-in-the-cocartesian-csategory}
	Let $\catL : \catC ^\op \to \Cat $ be an extensive strictly indexed category, with initial objects    
	$\initialobject\in\catL (W)$ for each $W\in \catC $. In this case, the category $\GrothC \catL $ 
	has initial object $\initialobject = (\initialobject , \initialobject )\in \GrothC \catL$, and	
	 (fibred) binary coproduct given by $(W, w)\sqcup (X,x) = \left( W\sqcup X, \equivalenceextensive ^{(W,X)} (w,x)  \right) $.
\end{corollary}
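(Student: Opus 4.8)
The plan is to derive the statement by instantiating the two general results on the cocartesian structure of $\GrothC\catL$ (Propositions~\ref{prop:initialobject-grothendieckconstruction} and~\ref{prop:corpoducts-in-the-total-category}) and feeding them the adjunctions that extensivity manufactures (Theorem~\ref{theo:adjoint-to-get-coproduct}). The initial object is immediate: the hypotheses of Proposition~\ref{prop:initialobject-grothendieckconstruction} are precisely that $\catC$ has an initial object and that $\catL(\initialobject)$ has an initial object, both of which we have assumed, so $(\initialobject,\initialobject)$ is initial in $\GrothC\catL$ with no further work.

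For the binary coproduct I would verify the four hypotheses of Proposition~\ref{prop:corpoducts-in-the-total-category} applied to the two-element family $\left((W,w),(X,x)\right)$. Hypotheses (i) and (ii) hold by assumption, since $\catC$ has finite coproducts and thus the coproduct $W\sqcup X$ with coprojections $\ic_W,\ic_X$. Hypothesis (iii) is exactly what Theorem~\ref{theo:adjoint-to-get-coproduct} supplies: its assumption (each relevant fibre has an initial object) is met, so we obtain the adjunction $\catL(\ic_W)!\dashv\catL(\ic_W)$ with $\catL(\ic_W)! = \equivalenceextensive^{(W,X)}\circ(\id,\initialobject)$, and symmetrically, swapping the roles of $W$ and $X$ and using that $\catL(W)$ too has an initial object, the adjunction $\catL(\ic_X)!\dashv\catL(\ic_X)$ with $\catL(\ic_X)! = \equivalenceextensive^{(W,X)}\circ(\initialobject,\id)$.

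The crux is hypothesis (iv), together with identifying the resulting coproduct. Here I would exploit that strict extensivity makes $\equivalenceextensive^{(W,X)}$ an \emph{isomorphism} of categories inverse to $\left(\catL(\ic_W),\catL(\ic_X)\right)$, hence it preserves coproducts on the nose. Transporting the computation into the product fibre $\catL(W)\times\catL(X)$, where coproducts are formed componentwise and coproduct with the initial object is the identity (i.e.\ $w\sqcup\initialobject\cong w$ and $\initialobject\sqcup x\cong x$), one gets
\[
\catL(\ic_W)!(w)\sqcup\catL(\ic_X)!(x)
= \equivalenceextensive^{(W,X)}(w,\initialobject)\sqcup\equivalenceextensive^{(W,X)}(\initialobject,x)
= \equivalenceextensive^{(W,X)}(w,x).
\]
This simultaneously establishes the existence required by hypothesis (iv) and pins down the value of the coproduct. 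The conclusion of Proposition~\ref{prop:corpoducts-in-the-total-category} then reads exactly $(W,w)\sqcup(X,x)=\left(W\sqcup X,\equivalenceextensive^{(W,X)}(w,x)\right)$. Fibredness is automatic, since the first component is the coproduct $W\sqcup X$ in $\catC$, so the projection $\GrothC\catL\to\catC$ carries this coproduct to a coproduct.

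The main obstacle I anticipate is the bookkeeping around the two left adjoints: one must check that the left adjoint to $\catL(\ic_X)$ fills the $W$-slot (not the $X$-slot) of $\equivalenceextensive^{(W,X)}$ with the initial object, and that the componentwise coproduct in $\catL(W)\times\catL(X)$ genuinely collapses to $(w,x)$ via the unit law $z\sqcup\initialobject\cong z$. Once the direction of each adjoint is settled and the strictness of the equivalence is used, everything else is a direct instantiation of the cited results.
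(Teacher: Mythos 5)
Your proposal is correct and takes essentially the same route as the paper's own proof: the initial object via Proposition~\ref{prop:initialobject-grothendieckconstruction}, the adjunctions $\catL (\ic _W )! = \equivalenceextensive ^{(W,X)}\circ (\id ,\initialobject )$ and $\catL (\ic _X )! = \equivalenceextensive ^{(W,X)}\circ (\initialobject ,\id )$ from Theorem~\ref{theo:adjoint-to-get-coproduct}, the identification $\catL (\ic _W )!(w)\sqcup \catL (\ic _X )!(x)\cong \equivalenceextensive ^{(W,X)}(w,x)$ using that $\equivalenceextensive ^{(W,X)}$ preserves coproducts together with the unit law for initial objects, and finally Proposition~\ref{prop:corpoducts-in-the-total-category}. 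The only difference is presentational: you verify the hypotheses of Proposition~\ref{prop:corpoducts-in-the-total-category} one by one, whereas the paper runs the same ingredients as a single chain of isomorphisms.
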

\begin{proof}
	In fact, by Proposition \ref{prop:initialobject-grothendieckconstruction}, we have that $(\initialobject , \initialobject )$ is the initial object of $\GrothC \catL $.	
	Moreover, given $\left( (W,w), (X,x)\right) \in \GrothC\catL  \times \GrothC\catL $, we have that
\begin{eqnarray*}
\equivalenceextensive ^{(W,X)}\circ \left(\id _{\catL (W)}, \initialobject \right) = \catL (\ic _W )! &\dashv &\catL (\ic _W )\\
\equivalenceextensive ^{(W,X)}\circ \left( \initialobject , \id _{\catL (X)}\right) = \catL (\ic _X )! &\dashv &\catL (\ic _X )
\end{eqnarray*}	 
by Theorem  \ref{theo:adjoint-to-get-coproduct}. Therefore we get that
\begin{align*}
&  \displaystyle \left( W\sqcup X, \equivalenceextensive ^{(W,X)}\left( w , x\right)  \right) \\
& \cong \displaystyle \left( W\sqcup X, \equivalenceextensive ^{(W,X)} \left( w, \initialobject \right) \sqcup \equivalenceextensive ^{(W,X)} \left( \initialobject , x\right)  \right) \explainr{$\equivalenceextensive ^{(W,X)}$ preserves coproducts} \\
& \cong \displaystyle \left( W\sqcup X, \equivalenceextensive ^{(W,X)}\circ \left(\id _{\catL (W)}, \initialobject \right) (w)\sqcup \equivalenceextensive ^{(W,X)}\circ \left( \initialobject , \id _{\catL (X)}\right) (x)  \right) \\
& \cong \displaystyle \left( W\sqcup X, \catL (\ic _W )! (w)\sqcup \catL (\ic _X )! (x)  \right) \explainr{Theorem~\ref{theo:adjoint-to-get-coproduct}}\\
& \cong \displaystyle (W,w)\sqcup (X,x).\explainr{Proposition~\ref{prop:corpoducts-in-the-total-category}}
\end{align*}
\end{proof}
In particular, finite coproducts in $\Sigma_{\catC}\catL$ are fibred in the sense that the projection functor $\Sigma_{\catC}\catL\to \catC$ preserves them, on the nose.

Codually, we have: 
\begin{corollary}[Cocartesian structure of $\GrothC \catL ^\op $]\label{coro:cocartesian-structure-GrothCL-contravariant}
	Let $\catL : \catC ^\op \to \Cat $ be an extensive strictly indexed category, with terminal objects
	$\terminal\in\catL (W)$ for each $W\in \catC $. In this case, the category $\GrothC \catL ^\op $ 
	has (fibred) initial object $\initialobject = (\initialobject , \terminal )\in \GrothC \catL ^\op$, and	
	(fibred) binary coproduct given by 
	\begin{equation}
	(W, w)\sqcup (X,x) = \left( W\sqcup X, \equivalenceextensive ^{(W,X)} (w,x)  \right).
	\end{equation}
\end{corollary}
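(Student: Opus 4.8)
The plan is to establish this result as the exact codual of Corollary~\ref{coro:cocartesianstructure-in-the-cocartesian-csategory}, invoking the contravariant counterparts of the three ingredients used there. First I would dispatch the initial object: since $\catL(\initialobject)$ has a terminal object $\terminal$, Corollary~\ref{coro:initialobject-in-the-total-category-RAD} immediately gives that $(\initialobject,\terminal)$ is initial in $\GrothC\catL^\op$, and this initial object is fibred over $\catC$ because its first component is the initial object $\initialobject$ of $\catC$.

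For the binary coproduct, the key step is to supply the right adjoints demanded by the hypotheses of Corollary~\ref{coro:coproducts-in-the-total-category-RAD}. I would read off from the dual adjunction of Theorem~\ref{theo:adjoint-to-get-coproduct} that, whenever $\catL(X)$ has a terminal object, $\catL(\ic_W)$ admits the right adjoint
\[
\catL(\ic_W)^\ast = \equivalenceextensive^{(W,X)}\circ\left(\id_{\catL(W)},\terminal\right),
\]
and, symmetrically, $\catL(\ic_X)^\ast = \equivalenceextensive^{(W,X)}\circ\left(\terminal,\id_{\catL(X)}\right)$. With these adjunctions in hand, and using that each fibre of $\catL$ has chosen finite products, Corollary~\ref{coro:coproducts-in-the-total-category-RAD} yields the coproduct
\[
(W,w)\sqcup(X,x) = \left(W\sqcup X,\; \catL(\ic_W)^\ast(w)\times\catL(\ic_X)^\ast(x)\right)
\]
in $\GrothC\catL^\op$, where the product is taken in the fibre $\catL(W\sqcup X)$.

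It then remains to simplify the fibre component to $\equivalenceextensive^{(W,X)}(w,x)$. Here I would use that $\equivalenceextensive^{(W,X)}$, being half of an equivalence, preserves finite products, together with the componentwise computation of products in $\catL(W)\times\catL(X)$ and the unit laws $w\times\terminal\cong w$ and $\terminal\times x\cong x$, to obtain
\[
\catL(\ic_W)^\ast(w)\times\catL(\ic_X)^\ast(x) = \equivalenceextensive^{(W,X)}(w,\terminal)\times\equivalenceextensive^{(W,X)}(\terminal,x) \cong \equivalenceextensive^{(W,X)}(w,x).
\]
Combining the fibred initial object with this coproduct establishes the claim, and the fibredness of the coproduct is inherited from Corollary~\ref{coro:coproducts-in-the-total-category-RAD} exactly as in the covariant case. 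I do not expect a genuine obstacle, since the entire argument is dual to the one already carried out; the only point requiring care is keeping the direction of the adjunction in Theorem~\ref{theo:adjoint-to-get-coproduct} aligned with the hypothesis $\catL(\ic_{W_i})\dashv\catL(\ic_{W_i})^\ast$ of Corollary~\ref{coro:coproducts-in-the-total-category-RAD}. Because the dual adjunction places $\catL(\ic_W)$ as the \emph{left} adjoint—in contrast to the covariant setting where it appears as a right adjoint—this orientation matches precisely, and no further adjustment is needed.
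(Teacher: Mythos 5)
Your proposal is correct and is precisely the paper's intended argument: the paper proves only the covariant Corollary~\ref{coro:cocartesianstructure-in-the-cocartesian-csategory} and disposes of this statement with ``Codually'', and your proof is exactly that codualization, assembled from Corollary~\ref{coro:initialobject-in-the-total-category-RAD}, the dual adjunction $\catL(\ic_W)\dashv \equivalenceextensive^{(W,X)}\circ(\id_{\catL(W)},\terminal)$ of Theorem~\ref{theo:adjoint-to-get-coproduct}, and Corollary~\ref{coro:coproducts-in-the-total-category-RAD}, with the correct observation that the adjunction orientation now matches hypothesis (iii) of that corollary. One small repair: the hypotheses grant only terminal objects in the fibres, not ``chosen finite products'', so hypothesis (iv) of Corollary~\ref{coro:coproducts-in-the-total-category-RAD} cannot be cited that way; instead, the required product exists because it is the image under the equivalence $\equivalenceextensive^{(W,X)}$ of the automatically existing product $(w\times\terminal,\terminal\times x)\cong(w,x)$ in $\catL(W)\times\catL(X)$ --- which is exactly your final simplification, so that computation should simply be placed before, rather than after, the invocation of the corollary.
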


\begin{definition}[$\Sigma$-bimodel for sum types]
	A strictly indexed category $\catL :\catC ^\op \to\Cat $ is a $\Sigma$-bimodel for sum types
	if $\catL $ is an \textit{extensive strictly indexed category} such that $\catL (W)$
	has initial and terminal objects.
\end{definition}


\subsection{Distributive property of the total category}
We refer the reader to \citep{MR1201048, MR2864762} for the basics on distributive categories. 

As we proved, $\GrothC \catL $ is bicartesian closed  provided that $\catL :\catC ^\op\to\Cat  $ is $\Sigma$-bimodel for function types and sum types. Therefore, in this setting, we 
get that $\GrothC \catL $ is distributive.

However, even without the assumptions concerning closed structures, whenever
we have a $\Sigma$-bimodel for sum types, we 
can inherit distributivity from $\catC$. Namely,
we have Theorem \ref{theo:distributive-property-total-category}.

Recall that a category $\catC $ with finite products and coproducts is a \textit{distributive category} if, for each triple $\left( W, Y, Z\right) $ of objects in $\catC $, the canonical morphism 
\begin{equation}\label{eq:isomorphism-distributive-category}
\left< W\times \ic _ Y ^{Y\sqcup Z}, W\times\ic _Z^{Y\sqcup Z} \right> : \left( W\times Y\right) \sqcup \left(  W\times Z\right)\rightarrow W\times \left( Y\sqcup Z \right) ,        
\end{equation}
induced by $W\times \ic _ {Y}  $ and $W\times \ic _ {Z}  $,
is invertible.
It should be noted that, in a such a distributive category $\catC$, for any such a triple $\left( W, Y, Z\right) $ of objects in $\catC $, the diagram 
\begin{equation*}
	\begin{tikzpicture}[x=6cm, y=3cm]
		\node (a) at (0,2) {$ W\times\left(  Y\sqcup Z \right)   $};
		\node (b) at (0, 1) {$ \left( W\times Y \right)\sqcup \left( W\times Z \right)     $ };
		\node (c) at (1,1) {$ W  $ };
		\node (d) at (-1,1) {$ \left( Y\sqcup Z\right)  $ };
		\draw[->] (a)--(c) node[midway,above right] {$ \pi _W ^{W\times \left( Y\sqcup Z\right) }  $};
		\draw[->] (b)--(c) node[midway,below] {$ \left< \pi _W ^{W\times Y}, \pi _W ^{W\times Z}  \right>    $};
		\draw[->] (b)--(a) node[midway,left] {$ \cong    $} node[midway,right] {$ \left< W\times \ic _ Y, W\times\ic _Z \right>    $};
		\draw[->] (a)--(d) node[midway,above] {$ \pi _{\left( Y\sqcup Z\right)} ^{W\times \left( Y\sqcup Z\right) }  $};
		\draw[->] (b)--(d) node[midway,above] {$ \scriptscriptstyle\left< \ic _Y \circ \pi _Y ^{W\times Y}, \ic _ Z \circ \pi _Z ^{W\times Z}  \right>    $} node[midway,below] {$ \pi _Y ^{W\times Y}\sqcup \pi _Z ^{W\times Z}     $};
	\end{tikzpicture} 
\end{equation*} 
commutes. Therefore we have:
\begin{lemma}
	Let $\catL : \catC ^\op\to\Cat $ 
	be an extensive strictly indexed category, in which $\catC $ is a distributive category.
	For each triple $\left( W, Y, Z\right) $ of objects in $\catC $,	the diagrams 
\begin{equation}\label{eq:piW-distributiveproperty}
		\begin{tikzpicture}[x=0.4cm, y=0.3cm]
			\node (a) at (0,16) {$ \catL \left(W\times\left(  Y\sqcup Z \right)\right)    $};
			\node (b) at (0, 0)  {$ \catL  \left( W\times Y \right) \times \catL   \left( W\times Z \right)      $ };
			\node (c) at (16,8) {$ \catL\left( W\right)   $ };
			\node (d) at (0,8) {$ \catL \left( \left( W\times Y \right)\sqcup \left( W\times Z \right) \right)     $ };
			\draw[->] (a)--(d) node[midway, right] {$ \cong    $} node[midway,left] {$ \catL \left( \left< W\times \ic _ Y, W\times\ic _Z \right> \right)     $};
			\draw[<-]  (d)  to[bend right=25] node[midway,left] {$ \equivalenceextensive ^{(W\times Y ,W\times Z)}   $} (b);
			\draw[<-]  (b)  to[bend right=25] node[midway,right] {$\scriptscriptstyle  \left( \catL \left( \ic _ {W\times Y} \right)  , \catL \left( \ic _ {W\times Z} \right) \right)   $} (d);
			\draw[->] (c)--(d) node[midway,above] {$ \catL \left(\left< \pi _W ^{W\times Y}, \pi _W ^{W\times Z}  \right> \right)   $};
			\draw[->]  (c)  to[bend right=20] node[midway,above right] {$\scriptstyle \catL \left( \pi _W ^{W\times \left( Y\sqcup Z\right) }\right)  $} (a);
			\draw[->]  (c)  to[bend left=20] node[midway,below right]  {$ \scriptstyle\left( \catL \left( \pi _W ^{W\times Y} \right)  , \catL \left( \pi _W ^{W\times Z} \right) \right)    $} (b);
		\end{tikzpicture} 
\end{equation} 
\begin{equation}\label{eq:piY+Z-distributiveproperty}
	\begin{tikzpicture}[x=0.4cm, y=0.3cm]
		\node (a) at (0,16) {$ \catL \left(W\times\left(  Y\sqcup Z \right)\right)    $};
		\node (b) at (0, 0)  {$ \catL  \left( W\times Y \right) \times \catL   \left( W\times Z \right)      $ };
		\node (c) at (16,8) {$ \catL\left( Y\sqcup Z \right)   $ };
		\node (d) at (0,8) {$ \catL \left( \left( W\times Y \right)\sqcup \left( W\times Z \right) \right)     $ };
		\node (e) at (16,0) {$ \catL\left( Y\right) \times \catL \left(  Z \right)   $ };
		\draw[->] (a)--(d) node[midway, right] {$ \cong    $} node[midway,left] {$ \catL \left( \left< W\times \ic _ Y, W\times\ic _Z \right> \right)     $};
		\draw[<-]  (d)  to[bend right=25] node[midway,left] {$ \equivalenceextensive ^{(W\times Y ,W\times Z)}   $} (b);
		\draw[<-]  (b)  to[bend right=25] node[midway,right] {$\scriptscriptstyle  \left( \catL \left( \ic _ {W\times Y} \right)  , \catL \left( \ic _ {W\times Z} \right) \right)   $} (d);
		\draw[->] (c)--(d) node[midway,above] {$ \catL \left(\left< \pi _Y ^{W\times Y}, \pi _Z ^{W\times Z}  \right> \right)   $};
		\draw[->]  (c)  to[bend right=20] node[midway,above right] {$\scriptstyle \catL \left( \pi _ {\left( Y\sqcup Z\right) }^{W\times \left( Y\sqcup Z\right) }\right)  $} (a);
		\draw[->]  (e)  to[bend right=20] node[midway,right]  {$  \equivalenceextensive ^{(Y , Z)}  $} (c);
		\draw[->]  (c)  to[bend right=20] node[midway,left]  {$\scriptscriptstyle  \left( \catL \left( \ic _ {Y} \right)  , \catL \left( \ic _ {Z} \right) \right)   $} (e);
		\draw[->] (e)--(b) node[midway,below] {$ \catL \left( \pi _Y^{W\times Y}\right) \times \catL \left( \pi _Z^{W\times Z} \right)   $};
	\end{tikzpicture} 
\end{equation} 
commute.	 
\end{lemma}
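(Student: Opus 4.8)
The plan is to derive both commuting squares \eqref{eq:piW-distributiveproperty} and \eqref{eq:piY+Z-distributiveproperty} as formal images, under the contravariant functor $\catL$, of the commuting diagram in $\catC$ displayed immediately before the statement, with the coproducts in the base translated into products of fibres via strict extensivity. First I would record the two identities in $\catC$ that the pre-lemma diagram encodes, namely
\begin{align*}
\pi_W^{W\times(Y\sqcup Z)}\circ\langle W\times\ic_Y, W\times\ic_Z\rangle &= \langle\pi_W^{W\times Y}, \pi_W^{W\times Z}\rangle,\\
\pi_{(Y\sqcup Z)}^{W\times(Y\sqcup Z)}\circ\langle W\times\ic_Y, W\times\ic_Z\rangle &= \pi_Y^{W\times Y}\sqcup\pi_Z^{W\times Z},
\end{align*}
together with the defining coproduct identities $\langle f,g\rangle\circ\ic_1=f$ and $\langle f,g\rangle\circ\ic_2=g$ and the componentwise behaviour of $\sqcup$ on morphisms. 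Everything else will be a consequence of these and the functoriality of $\catL$.

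I would then handle, in each diagram, the triangle with apex $c$. Applying $\catL$ to the first identity above and using functoriality yields at once the commutativity of the path $c\to a\to d$ against the direct edge $c\to d$ in \eqref{eq:piW-distributiveproperty}; applying $\catL$ to the second identity does the same for \eqref{eq:piY+Z-distributiveproperty}. These are the only faces that actually invoke distributivity of $\catC$; all remaining faces are pure extensivity bookkeeping.

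Next I would treat the faces carrying the extensivity equivalences. By the preceding Lemma characterising strict extensivity as preservation of binary products, the comparison functors $(\catL(\ic_{W\times Y}),\catL(\ic_{W\times Z}))$ and $(\catL(\ic_Y),\catL(\ic_Z))$ are isomorphisms of categories with inverses $\equivalenceextensive^{(W\times Y,W\times Z)}$ and $\equivalenceextensive^{(Y,Z)}$ respectively. Wherever an edge of a face is one of the inverse equivalences $\equivalenceextensive$, it therefore suffices to verify the required equality after composing with the corresponding comparison functor, which cancels the $\equivalenceextensive$ and reduces the check to functoriality of $\catL$ plus a coproduct identity. For the face of \eqref{eq:piW-distributiveproperty} relating $c$, $d$ and $b$, the edge $d\to b$ is already the comparison direction, so each component collapses directly, e.g.\ $\catL(\ic_{W\times Y})\circ\catL(\langle\pi_W^{W\times Y},\pi_W^{W\times Z}\rangle)=\catL(\langle\pi_W^{W\times Y},\pi_W^{W\times Z}\rangle\circ\ic_{W\times Y})=\catL(\pi_W^{W\times Y})$, matching the edge $c\to b$. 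For the face of \eqref{eq:piY+Z-distributiveproperty} relating $e$, $c$ and $b$, the edge $e\to c$ is the inverse equivalence $\equivalenceextensive^{(Y,Z)}$, so I would precompose with $(\catL(\ic_Y),\catL(\ic_Z))$, use that it is inverse to $\equivalenceextensive^{(Y,Z)}$, and then reduce each component by $(\pi_Y^{W\times Y}\sqcup\pi_Z^{W\times Z})\circ\ic_{W\times Y}=\ic_Y\circ\pi_Y^{W\times Y}$ to the edge $e\to b$.

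The computation is thus entirely formal once these reductions are in place. The step I expect to require the most care — and where the only real friction lies — is the bookkeeping of the two distinct extensivity equivalences $\equivalenceextensive^{(W\times Y,W\times Z)}$ and $\equivalenceextensive^{(Y,Z)}$: one must consistently eliminate each $\equivalenceextensive$ in favour of the inverse of its canonical comparison functor \emph{before} invoking functoriality, never manipulating $\equivalenceextensive$ directly. I would organise the write-up so that every face is checked by first composing away the relevant inverse equivalence, reducing each verification to functoriality of $\catL$ together with the universal properties of products and coproducts in $\catC$.
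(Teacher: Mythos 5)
Your proposal is correct and takes essentially the same route as the paper, which states this lemma without proof as an immediate consequence (``Therefore we have:'') of the commuting diagram in $\catC$ displayed just before it: your two recorded identities are exactly what that diagram encodes, and the remaining faces follow, as you say, from contravariant functoriality of $\catL$, the coproduct identities, and the fact that in the strict case $\equivalenceextensive^{(-,-)}$ is a genuine inverse of the comparison functor. Your explicit face-by-face bookkeeping, including eliminating each $\equivalenceextensive$ against its comparison functor before invoking functoriality, is precisely the verification the paper leaves implicit.
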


\begin{therm}\label{theo:distributive-property-total-category}
	Let $\catL : \catC ^\op\to\Cat $ be $\Sigma$-bimodel for sum and tuple types, in which $\catC $ is a distributive category. In this setting, the category $\GrothC\catL $ 
	is a distributive category. 
\end{therm}	
\begin{proof}
By Proposition \ref{prop:grothendieck-products-covariant} and  Corollary \ref{coro:cocartesianstructure-in-the-cocartesian-csategory}, we have that 
$\GrothC\catL $ indeed has finite coproducts and finite products.

	Let $\catD $ be a category with finite coproducts and products.	
	A category is distributive if the canonical morphisms \eqref{eq:isomorphism-distributive-category} are invertible. However, by \citep[Theorem~4]{MR2864762}, the existence of any natural isomorphism $\left( W\times Y\right) \sqcup \left(  W\times Z\right)\cong W\times \left( Y\sqcup Z \right)$ implies that $\catD $ distributive. Hence, we proceed to prove below that such a natural isomorphism exists in the case of $\GrothC \catL $, leaving the question of canonicity omitted.

We indeed have the natural isomorphisms in $\left( \left( W, w\right) , \left( Y, y\right) , \left( Z, z\right)\right) \in \GrothC\catL \times\GrothC\catL\times \GrothC\catL  $
\begin{align*}
	&\left( W, w\right) \times\left( \left( Y, y\right)\sqcup \left( Z, z\right) \right)    \\
	&\cong \left( W, w\right) \times \left( Y\sqcup Z , \equivalenceextensive ^{(Y , Z)} (y, z)  \right)  \explainr{Corollary~\ref{coro:cocartesianstructure-in-the-cocartesian-csategory}} \\
	&\cong \left( W\times\left(Y\sqcup Z\right) , \catL (\pi _W ) (w) \times\catL (\pi _{Y\sqcup Z})\equivalenceextensive ^{(Y , Z)} (y, z)  \right) ,  \explainr{Proposition~\ref{prop:grothendieck-products-covariant} }
\end{align*}
which, by the distributive property of $\catC $, is (naturally) isomorphic to
\begin{equation}\label{eq:step1-distributive-grothcL}
\left( \left(W\times Y\right)\sqcup  \left(W\times Z\right), \catL \left( \left< W\times \ic _ Y, W\times\ic _Z \right> \right)    \left( \catL (\pi _W ) (w) \times\catL (\pi _{Y\sqcup Z})\equivalenceextensive ^{(Y , Z)} (y, z)\right)  \right) . 
\end{equation}
Moreover, we have the natural isomorphisms
\begin{align*}
	&\scriptstyle\catL \left( \left< W\times \ic _ Y, W\times\ic _Z \right> \right)    \left( \catL (\pi _W ) (w) \times\catL (\pi _{Y\sqcup Z})\equivalenceextensive ^{(Y , Z)} (y, z)\right)    \\
	& \scriptstyle\cong \catL \left( \left< W\times \ic _ Y, W\times\ic _Z \right> \right)    \left( \catL (\pi _W ) (w) \right)  \times \catL \left( \left< W\times \ic _ Y, W\times\ic _Z \right> \right) \left( \catL (\pi _{Y\sqcup Z})\equivalenceextensive ^{(Y , Z)} (y, z)\right)  \explainr{$ \catL \left( \left< W\times \ic _ Y, W\times\ic _Z \right> \right)$ invertible}\\
	& \scriptstyle = \equivalenceextensive ^{(W\times Y , W\times Z)}   \left( \catL (\pi _W ) (w), \catL (\pi _W ) (w) \right)  \times \catL \left( \left< W\times \ic _ Y, W\times\ic _Z \right> \right) \circ \catL (\pi _{Y\sqcup Z})\circ\equivalenceextensive ^{(Y , Z)} (y, z)  \explainr{Diagram~\eqref{eq:piW-distributiveproperty}}\\
	& \scriptstyle = \equivalenceextensive ^{(W\times Y , W\times Z)}   \left( \catL (\pi _W ) (w), \catL (\pi _W ) (w) \right)  \times \equivalenceextensive ^{(W\times Y , W\times Z)} \left( \catL (\pi _Y ) (y), \catL (\pi _Z ) (z) \right),  \explainr{Diagram~\eqref{eq:piY+Z-distributiveproperty}}
\end{align*}	
which is naturally isomorphic to 
\begin{equation}\label{eq:step2-distributive-grothcL}	
\equivalenceextensive ^{(W\times Y , W\times Z)}   \left( \catL (\pi _W ) (w) \times \catL (\pi _Y ) (y) , \catL (\pi _W ) (w) \times  \catL (\pi _Z ) (z) \right) . 
\end{equation}
since $\equivalenceextensive ^{(W\times Y , W\times Z)} $ is invertible.
Therefore we have the natural isomorphisms
\begin{align*}
	&\left( W, w\right) \times\left( \left( Y, y\right)\sqcup \left( Z, z\right) \right)    \\
	& \scriptstyle\cong \left( \left( W\times Y\right)\sqcup  \left(W\times Z\right), \catL \left( \left< W\times \ic _ Y, W\times\ic _Z \right> \right)    \left( \catL (\pi _W ) (w) \times\catL (\pi _{Y\sqcup Z})\equivalenceextensive ^{(Y , Z)} (y, z)\right)  \right) \explainr{Eq.~\eqref{eq:step1-distributive-grothcL}}\\
	& \scriptstyle \cong \left( \left( W\times Y\right)\sqcup  \left(W\times Z\right), \equivalenceextensive ^{(W\times Y , W\times Z)}   \left( \catL (\pi _W ) (w) \times \catL (\pi _Y ) (y) , \catL (\pi _W ) (w) \times  \catL (\pi _Z ) (z) \right)\right)\explainr{Eq.~\eqref{eq:step2-distributive-grothcL}}\\
	& \scriptstyle \cong \left(  W\times Y, \catL (\pi _W ) (w) \times \catL (\pi _Y ) (y) \right)
	\sqcup
	\left(  W\times Z, \catL (\pi _W ) (w) \times  \catL (\pi _Z ) (z) \right)
	\explainr{Corollary~\ref{coro:cocartesianstructure-in-the-cocartesian-csategory}}\\
	&\left( \left( W, w\right) \times \left( Y, y\right)\right)\sqcup  \left( \left( W, w\right)\times \left( Z, z\right) \right), \explainr{Proposition~\ref{prop:grothendieck-products-covariant}}
\end{align*}	
which completes our proof.	
\end{proof}
Codually, we have:
\begin{therm}\label{theo:distributive-property-total-category-reverse}
	Let $\catL : \catC ^\op\to\Cat $ be a $\Sigma $-bimodel for sum and tuple types, in which $\catC $ is a distributive category. Then we conclude that $\GrothC\catL ^\op $ 
	is a distributive category. 
\end{therm}

\subsection{Extensive property of the total category}
As per the definition provided in \citep[Definition2.1]{MR1201048}, a category $\catC$ is considered extensive if the basic (codomain) indexed category $\catC /- : \catC ^\op \to \Cat$ is an extensive indexed category as introduced at Definition~\ref{definition:extesive-indexed-categories}. Recall that every extensive category is distributive~\citep[Proposition~4.5]{MR1201048}.

The result below also holds for the non-strict scenario.

\begin{therm}\label{theo:extensivity-property-total-category}
	Let $\catL : \catC ^\op\to\Cat $ be an extensive strictly indexed category, in which $\catC $ is an extensive category. Assume that we have initial objects $\initial\in\catL \left( W\right) $. In this case, the category $\GrothC\catL $ is extensive and, hence, distributive.
\end{therm}	
\begin{proof}
	We denote  by $\equivalenceextensive ^{\left( W,X  \right)} _\catL : \catL (W)\times \catL (X)\to \catL(W\sqcup X) $ the isomorphisms of the
	extensive strictly indexed category $\catL $.
	
	The first step is to see that, indeed,  $\GrothC\catL $ has coproducts by Corollary \ref{coro:cocartesianstructure-in-the-cocartesian-csategory}. 
	We, then, note that, for each pair $(W,w)$ and $(X,x) $ of objects in 
	$\GrothC\catL $, we note that, in fact, we have  have that 
	\begin{equation}
	\equivalenceextensive ^{\left( (W,w), (X,x)  \right)} _ {\GrothC\catL /-} :
	 \GrothC\catL /(W,w) \times  \GrothC\catL /(X,x) \to \GrothC\catL /\left( (W,w)\sqcup (X,x)  \right) 
	\end{equation}
defined by the coproduct of the morphisms is an equivalence. Explicitly, given objects $A=  \left((W _0 ,w _0 ) ,   ( f: W_0\to W, f' : w_0\to \catL\left( f \right) w  ) \right) $ of $\GrothC\catL /(W,w) $ and 
 $B =  \left((X _0 ,x _0 ) ,   ( g: X_0\to X, g' : x_0\to \catL\left( g \right) x  ) \right) $ of $\GrothC\catL /(X,x)$,
 $ \equivalenceextensive ^{\left( (W,w), (X,x)  \right)} _ {\GrothC\catL /-}\left( A, B\right) $
 is given by 
 $$\left( \left( W_0\sqcup X_0, \equivalenceextensive ^{\left( W, X \right) } _{\catL } (w_0, x_0)        \right) , \left(   f\sqcup g:  W_0\sqcup X_0\to W\sqcup X,  \equivalenceextensive ^{\left( W,X  \right)} _\catL  \left(f', g'\right)                    \right)            \right)   $$ 
which is clearly an equivalence given that the functor $\left( (W_0, f), (X_0, g) \right)\mapsto \left( W_0\sqcup X_0, f\sqcup g\right)  $ is an equivalence
$\catC / W\times \catC / X \to \catC / W\sqcup X $.
\end{proof}

\begin{therm}\label{theo:extensivity-property-total-category-reverse}
	Let $\catL : \catC ^\op\to\Cat $ be an extensive strictly indexed category, in which $\catC $ is an extensive category. Assume that we have terminal objects $\terminal\in\catL \left( W\right) $. In this case, the category $\GrothC\catL ^\op $ is extensive and, hence, distributive.
\end{therm}

It is worth mentioning that \textit{free cocompletions under (finite) coproducts} are extensive, as shown in \citep[Proposition~2.4]{MR1201048} for the infinite case. This implies that freely generated models on languages featuring variant types are extensive. Therefore, having an extensive base category $\catC$ is a common occurrence in our setting.

\subsection{Strictly indexed categories and split fibrations}
\label{sec:appendix-SplitFibrations-vs-IndedxedCategories}
Before we specialize to our setting of $\mu\nu$-polynomials, we need to establish and prove general results on parameterized initial algebras (and terminal coalgebras) in the total category of a split fibration (see \ref{subsect:general-result-initial-algebras-total-categories} and \ref{sect:appendix-coinductivetypes-grothendieckconstruction}).

In order to talk about these results, we need to talk about \textit{strictly indexed functors} and \textit{split fibration functors} and the one-to-one correspondence between them. For this purpose, we shortly recall the equivalence between strict indexed categories and split fibrations below.

\begin{definition}[Strictly indexed functor]
	Let $\catL ' : \catD ^\op\to \Cat $ and $ \catL  : \catC ^\op\to \Cat $ be two strictly indexed categories.
	A \textit{strictly indexed functor} between $\catL ' $ and $\catL $ consists of a pair $(\overline{H}, h ) $
	in which $\iH  : \catD \to \catC $ is a functor and
	\begin{equation} 
		h: \catL ' \longrightarrow \left(\catL  \circ \iH ^\op\right)
	\end{equation}
	is a natural transformation, where  $\iH ^\op $ denotes the image of $\overline{H} $ by $\op $.
	Given two strictly indexed functors $(\iE ,e) :\catL ''\to \catL '  $ and $(\overline{H} , h) : \catL '\to \catL  $, the composition is given by 
	\begin{equation}
		\left( \overline{HE}, (h _ {\iE ^{\op }})\cdot e: \catL '' \longrightarrow \left(\catL  \circ \left(\overline{HE}\right) ^{\op }\right) \right) .
	\end{equation}
	Strictly indexed categories and strictly indexed functors do form a category, denoted herein by $\GrothInd $.
\end{definition}
It is well known that the Grothendieck construction provides an equivalence between indexed categories
and fibrations. Restricting this to our setting, we get the equivalence
\begin{eqnarray*}
	\int : & \GrothInd &\to \GrothSpFib\\
	& \catL : \catC ^\op\to \Cat & \mapsto \left(\mathsf{P}_{\catL } : \GrothC \catL\to \catC\right)\\
	& (\iE, e) & \mapsto (E, \iE ) 
\end{eqnarray*}
between the category of strictly indexed categories (with strictly indexed functors) and the category of
(Grothendieck) 
split fibrations.

Although not necessary to your work, we refer to \citep{MR0213413} and \citep[Theorem~1.3.6]{johnstone2002sketches} for further details. 
We explicitly state the relevant part of this result below.

\begin{proposition}\label{prop:GrothedieckSplitFibrations-vs-IndexedCategories}
	Given two strictly indexed categories, $\catL ' : \catD ^\op\to \Cat$ and $\catL  : \catC ^\op\to \Cat $, there is a bijection between strictly indexed functors
	\begin{equation*}
		\left(\iH : \catD \to \catC , h: \catL ' \longrightarrow \left(\catL  \circ \iH ^\op\right)\right) : \catL '\to \catL 
	\end{equation*}
	and pairs $ (H, \iH ) $ in which $H:\GrothD \catL ' \to\GrothC \catL$ is a functor
	satisfying the following two conditions. 
	\begin{enumerate}[$\Sigma$A)]
		\item The diagram
		\begin{equation}\label{eq:1-condition-morphism-split-fibrations}	
			\begin{tikzcd}
				\GrothD \catL ' \arrow[rrr, "{H}"] \arrow[swap,d, "{\mathsf{P}_{\catL '}}"] &&& \GrothC \catL \arrow[d, "{\mathsf{P}_{\catL}}"] \\
				\catD  \arrow[swap, rrr, "{\iH}"] &&& \catC                              
			\end{tikzcd}
		\end{equation} 
		commutes.
		\item For any morphism $(f: X\to Y, \id : \catL ' (f) (y) \to  \catL ' (f) (y) ) $ between $(X, \catL ' (f) (y) )$ and $(Y, y) $ in  $\GrothD \catL '$, 
		\begin{equation}\label{eq:2-condition-morphism-split-fibrations}	
			H(f, \id) = ( \overline{H} (f) , \id ) : H(X, \catL ' (f) (y) )\to H(Y, y ).
		\end{equation}
	\end{enumerate}  
\end{proposition}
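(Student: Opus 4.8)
The plan is to exhibit the two assignments explicitly and verify that they are mutually inverse, with the whole argument resting on the canonical factorization of a morphism of a Grothendieck construction into a vertical part followed by a Cartesian lift.

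From a strictly indexed functor $(\iH, h)$ I would define $H \colon \GrothD\catL' \to \GrothC\catL$ by $H(X,x) = (\iH(X), h_X(x))$ on objects and $H(f, \ff{f}) = (\iH(f), h_X(\ff{f}))$ on a morphism $(f,\ff{f}) \colon (X,x)\to (Y,y)$. This typechecks: since $\ff{f}\colon x\to \catL'(f)(y)$ and $h_X$ is a functor, $h_X(\ff{f})\colon h_X(x)\to h_X(\catL'(f)(y))$, and naturality of $h$ gives $h_X(\catL'(f)(y)) = \catL(\iH(f))(h_Y(y))$, so $(\iH(f), h_X(\ff{f}))$ is indeed a morphism $H(X,x)\to H(Y,y)$ of $\GrothC\catL$. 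Functoriality of $H$ then unwinds, through the composition rule $(g,\ff{g})\circ(f,\ff{f}) = (g\circ f, \catL'(f)(\ff{g})\circ \ff{f})$, into functoriality of each $h_X$ together with naturality of $h$. Conditions $\Sigma$A and $\Sigma$B are immediate: the first component of $H$ is $\iH$ of the first component, and when $x=\catL'(f)(y)$ and $\ff{f}=\id$ preservation of identities by $h_X$ forces $H(f,\id) = (\iH(f), \id)$.

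Conversely, given $H$ satisfying $\Sigma$A and $\Sigma$B, condition $\Sigma$A forces the first component of $H(X,x)$ to be $\iH(X)$, so I define $h_X(x)$ to be the second component of $H(X,x)$ and, for a fibre morphism $a$ regarded as the vertical morphism $(\id[X], a)$, define $h_X(a)$ to be the second component of $H(\id[X], a)$. Functoriality of each $h_X$ follows from functoriality of $H$ on vertical morphisms over $\id[X]$. The load-bearing step is recovering naturality of $h$. On objects it comes straight from $\Sigma$B applied to the Cartesian lift $(f,\id)\colon (X,\catL'(f)(y))\to (Y,y)$: its image $(\iH(f),\id)$ can only typecheck if $h_X(\catL'(f)(y)) = \catL(\iH(f))(h_Y(y))$. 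On morphisms, I would apply $H$ to the commuting square of $\GrothD\catL'$ whose horizontal edges are the two Cartesian lifts $(f,\id)$ and whose vertical edges are $(\id[X], \catL'(f)(b))$ and $(\id[Y], b)$; functoriality of $H$, with $\Sigma$B pinning each horizontal image to $(\iH(f),\id)$ and the definitions of $h_X, h_Y$ on the vertical images, turns commutativity of the $H$-image square into the identity $\catL(\iH(f))\circ h_Y = h_X\circ \catL'(f)$ of functors.

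Finally I would check that the assignments are inverse. Going $(\iH,h)\mapsto H\mapsto (\iH,h)$ returns the original, since $h_X(x)$ is by construction the second component of $H(X,x)$; going $H\mapsto (\iH,h)\mapsto H'$ returns $H$, because $H'$ agrees with $H$ on objects by definition and on an arbitrary morphism $(f,\ff{f})$ by the factorization $(f,\ff{f}) = (f,\id)\circ(\id[X], \ff{f})$ combined with $\Sigma$A and $\Sigma$B. The only genuinely delicate point is the morphism-level naturality above, where one must track the asymmetric composition rule of $\GrothC\catL$ with care; everything else is routine bookkeeping.
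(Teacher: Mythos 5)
Your proposal is correct and follows essentially the same route as the paper's own proof: the forward assignment $H(f,\ff{f})\defeq(\iH(f),h_X(\ff{f}))$ and the converse recovery of $h_X$ from the image of vertical morphisms $(\id_X,\ff{f})$ are exactly the two constructions given there. The paper leaves the functoriality, naturality, and mutual-inverse checks implicit, whereas you spell them out (including the factorization of a morphism as a vertical morphism followed by a Cartesian lift), which is a faithful completion of the same argument rather than a different one.
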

\begin{proof}
	Although, as mentioned above, this result is just a consequence of the well known result about the equivalence
	between indexed categories and fibrations, we recall below how to construct the bijection.
	
	For each strictly indexed functor $(\overline{H}, h) : \catL ' \to \catL $, we define
	\begin{equation}\label{eq:fibration-to-indexed-category}
		H (f: X\to Y , f': x\to \catL '(f) y ) := (\overline{H} (f), h_X (f') ).
	\end{equation}
	Reciprocally, given a pair $(H, \overline{H}) $ satisfying  \eqref{eq:1-condition-morphism-split-fibrations} and \eqref{eq:2-condition-morphism-split-fibrations}, we define
	\begin{equation}
		h_X (f': w\to x ) := H \left( (\id_ X , f') : (X, w)\to (X, x)    \right) 
	\end{equation}	
	for each object $X\in \catD $ and
	each morphism $f': w\to x $ of $\catL '(X)$.
\end{proof}
\begin{definition}[Split fibration functor]		
	A pair $(H, \overline{H} ): \mathsf{P}_{\catL '}\to \mathsf{P}_{\catL } $ satisfying \eqref{eq:1-condition-morphism-split-fibrations} and \eqref{eq:2-condition-morphism-split-fibrations} is herein called a
	\textit{split fibration functor}. Whenever it is clear from the context, we omit the split fibrations 
	$\mathsf{P}_{\catL '}$, $\mathsf{P}_{\catL }$, and the functor $\overline{H} $.
\end{definition}
Following the above, given a 
strictly indexed functor $(\iH , h) : \catL '\to \catL $, we denote
\begin{eqnarray*}
	\int \catL & = & \left( \mathsf{P}_{\catL } : \GrothC \catL\to \catC\right) \\ 
	\int \left( \iH , h \right)  & = &   \left( H ,\iH \right)
\end{eqnarray*}
in which $H \left( f : X\to Y , f': x\to \catL (f) (y) \right) = (\iH (f), h_X (f') ) $.

Let $\catL ': \catD ^\op \to \Cat $ and $\catL  : \catC ^\op \to \Cat $  be strictly indexed categories. We denote by
$ \catL '\prodstrict\catL   $ the product of the strict indexed categories in $\GrothInd $. Explicitly,  
\begin{eqnarray*}
	\catL '\prodstrict\catL  : &(\catD\times \catC ) ^\op & \to \Cat\\
	& (X,Y ) & \mapsto \catL ' (X) \times \catL (Y)\\
	& (f, g ) & \mapsto \catL ' (f)\times \catL  (g) .
\end{eqnarray*}
It should be noted that 
\begin{equation}
	\left(\int \catL '\prodstrict\catL   \right)  \cong \left(\int \catL '\right)\times  \left( \int  \catL   \right)  =  \left(\mathsf{P}_{\catL ' }\times\mathsf{P}_{\catL }  : \left( \GrothD \catL '\right)\times\left(\GrothC \catL \right)  \to \catD\times\catC \right) ,
\end{equation}
which means that the product in $ \GrothSpFib $ coincides with the usual product
of functors $\mathsf{P}_{\catL }\times\mathsf{P}_{\catL '}$. Moreover, given 
indexed functors $(\overline{H}, h) : \cat{H}\to \cat{H}' $ and  $(\overline{E}, e) : \catL\to \catL ' $, we have that
$$ (\overline{H}, h)\prodstrict (\overline{E}, e)  = \left( \overline{H}\times \overline{E}, h\times e   \right) $$
and, since the product of split fibrations is given by the usual product of functors,
\begin{equation}
	\int \left( (\overline{H}, h)\prodstrict (\overline{E}, e)\right) = 
	\left(\int (\overline{H}, h)\right) \times  \left( \int (\overline{E}, e)\right) .
\end{equation}  

Codually, given a strictly indexed category $ \catL : \catC ^\op\to \Cat $, we have the Grothendieck codual construction 
\begin{center}
	$\int ^\op \catL   =  \left( \mathsf{P}_{\catL ^\op } : \GrothC \catL ^\op\to \catC\right)$, \qquad  $\int ^\op \left( \iH , h \right)   =    \left( H ,\iH \right)$
\end{center}
in which $H \left( f : X\to Y , f': \catL (f) (y)\to x \right) = (\iH (f), h_X (f') ) $. This construction
gives an equivalence between the indexed categories and split op-fibrations (if we consider the opposite of $ \int ^\op \catL   $). We, of course, have the codual observations above.


\subsection{General result on initial algebras in total categories}
\label{subsect:general-result-initial-algebras-total-categories}
In order to study the $\mu\nu$-polynomials of total categories in our setting in \ref{subsect:parameterized-initial-algebras-coalgebras}, 
we start by establishing general results about parameterized initial algebras 
in the Grothendieck construction of split fibrations. More precisely, in Theorem \ref{theo:parameterized-initial-algebras-for-indexed-functors},  we investigate when 
a total category $ \GrothC \catL $ has the parameterized initial algebra of a split fibration functor
\begin{equation}\label{eq:strictly-indexed-functor-split-fibration-functor}
H : \left( \GrothD \catL '\right) \times \left( \GrothC \catL \right)  \to \GrothC\catL .
\end{equation} 
We start by studying initial algebras os strictly indexed endofunctors:

\begin{therm}[Initial algebras of strictly indexed endofunctors]\label{theo:fundamental-theorem-on-initial-algebra-semantics=Grothendieck}
	Let  
	$(  \iE , e )  $
	be a strictly indexed endofunctor on $\catL : \catC ^\op\to\Cat $
	and $E: \GrothC\catL \to \GrothC\catL $ the corresponding split fibration endofunctor. 
	Assume that
	\begin{enumerate}[$\mathfrak{e}$1)]
		\item the initial $\iE$-algebra $\left( \mu \iE,\, \ind _{\iE}  \right) $ exists;
		\item the initial $\left( \catL(\ind _{\iE} )^{-1} \iem \right) $-algebra $\left( \mu \left( \catL (\ind _{\iE } )^{-1} \iem \right),\, \ind _{\left(\catL (\ind _{\iE } )^{-1} e_{\mu \iE }\right) } \right)  $ exists.
	\end{enumerate}
	Denoting by 
	$\ie $ the endofunctor $\catL (\ind _{\iE } )^{-1} e_{\mu \iE } $ on $\catL (\mu \iE ) $,
	the initial $E$-algebra exists and is given by
	\begin{equation}\label{eq:initial-algebra-structure-splitendofunctor}
		\mu E = \left( \mu \iE ,\, \mu \ie \right), \qquad \ind _E = \left( \ind _{\iE },\, \catL (\ind _{\iE } ) \left(\ind _{\ie } \right) \right).
	\end{equation}
	Moreover, 
	for each $E$-algebra 
	$$\left( (Y,y),\, (\xi ,\xi '): E(Y,y)\to (Y,y)  \right) = \left(  (Y,y), \left( \xi : \iE (Y)\to Y, \xi ': e_Y(y)\to \catL (\xi ) (y)  \right) \right), $$ 
	we have that
	\begin{equation}\label{eq:fold-initialalgebras-grothendieckconstruction-endofunctor}
		\fold _{E} \left( \xi , \xi ' \right)  = \left( \fold _ {\iE } \xi , \, \, 
		\fold _{ \ie  } \left(  
		\catL \left(\iE ( \fold _{\iE }\, \xi   )\cdot \ind _{\iE } ^{-1} \right)
		(\xi ')\right)  \right).
	\end{equation}
\end{therm}

\begin{proof}
	In fact, under the hypothesis above, given an $E$-algebra
	\begin{equation*}
		\left( \xi : \iE (Y)\to Y, \xi ': e_Y(y)\to \catL (\xi ) (y)  \right)
	\end{equation*}
	on $(Y, y) $, 
	we have that there is a unique morphism	
	\begin{equation*}
		\left(\fold _{ \ie } \, \catL \left(\iE ( \fold _{\iE } \xi   )\cdot \ind _{\iE } ^{-1} \right)
		(\xi ')\right)  : \mu\ie \to 
		\catL\left( \fold _{\iE }  \xi \right)(y) 
	\end{equation*}
	in $\catL (\mu \iE ) $ such that
	\begin{equation*}
		\begin{tikzcd}
			\ie (\mu\ie) \arrow[rrrrr, "{\ie\left(\fold _{ \ie } \, \catL \left(\iE ( \fold _{\iE } \xi   )\cdot \ind _{\iE }^{-1} \right)(\xi ')\right)}"] \arrow[swap,ddd, "{\ind _{\ie } }"] &&&&& 
			\ie\circ \catL\left(\fold _{\iE } \xi \right) (y)\arrow[d,equal]\\
			&&&&& \catL \left(\iE ( \fold _{\iE } \xi   )\cdot \ind _{\iE }^{-1} \right)\circ e_Y (y) \arrow[d, "{\catL \left(\iE ( \fold _{\iE } \xi   )\cdot \ind _{\iE }^{-1} \right)(\xi ')}"]\\
			&&&&& \catL \left(\xi \cdot \iE ( \fold _{\iE } \xi   )\cdot \ind _{\iE }^{-1} \right) (y)\arrow[d, equal]
			\\
			\mu \ie  \arrow[swap, rrrrr, "{\left(\fold _{ \ie } \, \catL \left(\iE ( \fold _{\iE } \xi   )\cdot \ind _{\iE }^{-1} \right)(\xi ')\right)}"] &&&&& 
			\catL \left(\fold _{\iE } \xi\right)( y )                             
		\end{tikzcd}
	\end{equation*} 
	commutes. Since $\catL (\ind _{\iE})$ is invertible, this implies that 
	\begin{equation*}
		\left(	\fold _{ \ie } \, \catL \left(\iE ( \fold _{\iE } \xi   )\cdot \ind _{\iE } ^{-1} \right)
		(\xi ') \right)  : \mu\ie \to 
		\catL\left( \fold _{\iE }  \xi \right)(y) 
	\end{equation*}
	is the unique morphism in $\catL \left( \iE (\mu \iE ) \right)$ such that
	\begin{equation*}
		\begin{tikzcd}
			\iem (\mu\ie) \arrow[rrrrrr, "{\iem\left(\fold _{ \ie } \, \catL \left(\iE ( \fold _{\iE } \xi   )\cdot \ind _{\iE }^{-1} \right)(\xi ')\right)}"] \arrow[swap,ddd, "{\catL (\ind _{\iE}) (\ind _{\ie }) }"] &&&&&& \iem\circ \catL\left(\fold _{\iE } \xi \right) (y)\arrow[d,equal]\\
			&&&&&& \catL \left(\iE ( \fold _{\iE } \xi )\right)\circ e_Y (y) \arrow[d, "{\catL \left(\iE ( \fold _{\iE } \xi   )\right)(\xi ')}"]\\
			&&&&&& \catL \left( \iE ( \fold _{\iE } \xi   ) \right)\circ \catL \left(\xi \right) (y)\arrow[d, equal]
			\\
			\catL (\ind _{\iE}) (\mu \ie )  \arrow[swap, rrrrrr, "{\catL (\ind _{\iE})\left(\fold _{ \ie } \, \catL \left(\iE ( \fold _{\iE } \xi   )\cdot \ind _{\iE }^{-1} \right)(\xi ')\right)}"] &&&&&& \catL \left(\ind _{\iE}\right)\circ \catL \left(\fold _{\iE } \xi\right)( y )                             
		\end{tikzcd}
	\end{equation*} 
	commutes.	
	Finally, by the above and the universal property of $\fold _{\iE} \xi$, this completes the proof that  
	\begin{equation}
		\mathfrak{u} = \left( \fold _{\iE} \xi, \,   \left(\fold _{ \ie } \, \catL \left(\iE ( \fold _{\iE } \xi   )\cdot \ind _{\iE } ^{-1} \right) (\xi ')\right)\right)
	\end{equation}
	is the unique morphism in $\GrothC\catL $ such that
	$$  (\xi , \xi ')\circ E( \mathfrak{u} ) = \mathfrak{u} \circ \left( \ind _{\iE },\, \catL (\ind _{\iE } ) \left(\ind _{\ie } \right) \right) . $$
	This completes the proof that $\left( (\mu\iE , \mu\ie), \left( \ind _{\iE },\, \catL (\ind _{\iE } ) \left(\ind _{\ie } \right) \right)  \right) $ is the initial object of $E\AAlg$, and that $\fold _{E} ((Y,y), (\xi, \xi ') ) = \mathfrak{u} $.
\end{proof}

Let $\catL : \catC ^\op \to \Cat $,  $\catL ': \catD ^\op \to \Cat $  be  strictly indexed categories as above. We denote by
$\catL '\prodstrict \catL  : \left( \catD\times\catC \right) ^\op \to \Cat  $ the product of the indexed categories (see \ref{sec:appendix-SplitFibrations-vs-IndedxedCategories}). 
An object of  $ \Sigma _{\catD\times \catC } \left( \catL '\prodstrict \catL\right)  \cong \left( \GrothD \catL '\right) \times \left( \GrothC \catL \right)  $ can be seen as a quadruple $\left( (X, x) , (W, w)\right)   $ 
in which $x \in\catL '(X) $ and $ w\in \catL (W) $. Moreover, a morphism between objects 
$\left( (X_0 , x _0) , (W _0, w_ 0 )\right) $ and $\left( (X_1 , x_1 ) , (W_1,  w _1)\right)  $ consists of a quadruple  $\left(  (f, f')   , (g, g')  \right) $ in which 
$ (f, g) : ( X _ 0 , W _ 0 )\to ( X_1 , W_1 ) $  is a morphism in $\catD\times \catC  $, and  
$ (f', g') : (x_0, w_0)\to \left( \catL ' (f)( x_1 ), \catL  (g)( w_1 )\right) $
is a morphism in $\catL ' ( X_0 )\times \catL  ( W_0 ) $. 

Given a strictly indexed functor $(\iH, h) : \catL '\prodstrict \catL  \to\catL  $ and  an object $(X,x)  $ of $\left( \GrothD \catL '\right) $, we can consider the restriction $( \iH ^X, h ^{(X,x)} ) $ in which $\iH ^X = \iH (X,-)$ and $h^{(X,x)} : \catL\longrightarrow \left( \catL\circ \iH ^X \right)  $    is pointwise defined by
\begin{eqnarray*}
	h^{(X,x)}_Y : &\catL (Y) &\to \catL\circ \iH ^X (Y)\\
	& f' : y\to z & \mapsto h_{(X,Y)} (x, f' )
\end{eqnarray*}
in which we denote by $(X,Y) \in \catD \times \catC  $. 
To be consistent with the notation previously introduced (in Proposition \ref{prop:general_parameterized_initial_algebras}), we also denote by $h_{(X,Y)}^x $ the morphism $h^{(X,x)}_Y $ above.

As a consequence of Theorem \ref{theo:fundamental-theorem-on-initial-algebra-semantics=Grothendieck}, we have that, under suitable conditions, parameterized 
initial algebras of split fibration functors are split fibration functors; namely, we have:

\begin{therm}[Parameterized initial algebras are split fibration functors]\label{theo:parameterized-initial-algebras-for-indexed-functors}
	Let  
	$(  \iH , h )  $
	be a strictly indexed functor from 
	$\catL '\prodstrict \catL  : \left( \catD\times\catC \right) ^\op \to \Cat  $ to $\catL : \catC ^\op\to\Cat $,
	and $$ H : \left( \GrothD \catL '\right) \times \left( \GrothC \catL \right)  \to \GrothC\catL $$ the corresponding split fibration functor. 
	Assume that:
	\begin{enumerate}[$\mathfrak{h}$1)]
		\item for each object $X$ of $\catD $, the initial $\iH ^X $-algebra $\left( \mu \iH ^X,\, \ind _{\iH ^X}  \right) $ exists;
		\item for each object $(X,x) $ in $\GrothD\catL '  $, denoting by 
		$\ih _X  $ the functor 
		\begin{equation}\label{eq: concise-notation-parameterized-types-fibers}
			\catL(\ind _{\iH ^X} )^{-1} \ihmXx : \catL '( X )  \times \catL(\mu \iH ^X ) \to\catL(\mu \iH ^X )     
		\end{equation}
		is such that the initial $\ih _X ^x $-algebra $\left( \mu \ih _X^x   ,\, \ind _{\ih _X^x}  \right)  $ exists;
		\item for each morphism $ g : X\to Y $ in $\catD $ and $ y\in \catL ' (Y) $, Eq.~\eqref{eq:additionalhypothesis-muH-GrothendieckFibration} holds.
		\begin{equation}\label{eq:additionalhypothesis-muH-GrothendieckFibration}
			\catL \left( \mu\iH (g) \right) (\ind _{\ih _ Y^y } ) =  \ind _ {\ih _ X^{\catL ' (g) (y) }} 
		\end{equation}
	\end{enumerate}
	In this setting, the parameterized initial algebra $\mu H: \GrothD \catL '  \to \GrothC\catL $   exists and is a split fibration functor.
\end{therm}
\begin{proof}
	Assuming the hypothesis,
	we conclude that, for each $(X,x)$ in $\GrothD \catL ' $, the category  
	$\GrothC\catL $ has the initial $H^{(X,x)} $-algebra, by Theorem \ref{theo:fundamental-theorem-on-initial-algebra-semantics=Grothendieck}. Hence we have that 
	$$ \mu H : \GrothD \catL '\to \GrothC\catL $$ 
	exists by Proposition \ref{prop:general_parameterized_initial_algebras}.	
	More precisely, given a morphism $(f, f') : (X,x)\to (Y,y) $ in $\GrothD\catL ' $, we compute  $\mu H (f, f') $ below. 
	\begin{align*}
		&\mu H (f, f')\\
		&= \fold _{H^{(X,x)} }\left( \ind_{H^{(Y,y)} }\circ  H\left( (f, f' ), \mu H ^{(Y,y)}\right)\right) \explainr{Proposition~\ref{prop:general_parameterized_initial_algebras}}\\
		&= \fold _{H^{(X,x)} }\left( \left( \ind _{\iH ^Y }, \catL (\ind _{\iH ^Y } ) ( \ind _{ \ih _ Y^y } ) \right)  \circ  H\left( (f, f') , \mu H ^{(Y,y)} \right)\right) \explainr{  Eq.~\eqref{eq:initial-algebra-structure-splitendofunctor}}\\
		&= \fold _{H^{(X,x)} }\left( \left( \ind _{\iH ^Y }, \catL (\ind _{\iH ^Y } ) ( \ind _{ \ih _ Y ^y } ) \right)  \circ \left( \iH (f, \mu \iH ^{Y}), h_{ (X, \mu\iH ^Y ) } ( f', \mu\ih _ Y^y )   \right) \right) \explainr{indexed functor}\\
		&= \fold _{H^{(X,x)} }\left( \ind _{\iH ^Y }\circ \iH (f, \mu \iH ^{Y}), \catL \left(\ind _{\iH ^Y }\circ \iH (f, \mu \iH ^{Y}) \right) ( \ind _{ \ih _ Y^y } )  \right.\\ &\hspace{220pt}\left.
		\circ\left( h_{ (X, \mu\iH ^Y ) } ( f', \mu\ih _ Y^y )   \right) \right) \explainr{composing} 
	\end{align*}
	which, by denoting
	$\xi = \ind _{\iH ^Y }\circ \iH (f, \mu \iH ^{Y}) $ and 	
	$\xi ' =  \catL \left(\xi \right) ( \ind _{ \ih _ Y^y } ) \circ\left( h_{ (X, \mu\iH ^Y ) } (f', \mu\ih _ Y^y )   \right) $,
	is equal to
	\begin{align*}
		&\fold _{H^{(X,x)} }\left( \ind _{\iH ^Y }\circ \iH (f, \mu \iH ^{Y}) , \xi ' \right)\\
		&= \left( \fold _{\iH ^X } \left(\ind _{\iH ^Y }\circ \iH (f, \mu \iH ^{Y})\right), \,  \left(\fold _{ \ih _X^x } \, \catL \left(\iH ^X ( \fold _{\iH ^X } \xi   )\cdot \ind _{\iH  ^X } ^{-1} \right) (\xi ')\right)\right) \explainr{  Eq.~\eqref{eq:fold-initialalgebras-grothendieckconstruction-endofunctor}}\\
		&= \left( \mu\iH (f),  \,   \left(\fold _{ \ih _X^x } \, \catL \left(\iH ^X ( \fold _{\iH ^X } \xi   )\cdot \ind _{\iH  ^X } ^{-1} \right) (\xi ')\right)  
		\right) . \explainr{  Proposition~\ref{prop:general_parameterized_initial_algebras}}
	\end{align*}
	The above shows that 
	\begin{equation}\label{eq:first-eq-muH-splitfibration}
		\mu H (f,f') = \left( \mu\iH (f),  \,   \left(\fold _{ \ih _X^x } \, \catL \left(\iH ^X ( \fold _{\iH ^X } \xi   )\cdot \ind _{\iH  ^X } ^{-1} \right) (\xi ')\right) \right) .
	\end{equation}
	
	Now, we can proceed to prove that $ \mu H $ is actually a split fibration functor. Firstly, 
	by Equation \eqref{eq:first-eq-muH-splitfibration}, we have that 
	\begin{equation}
		\begin{tikzcd}
			\GrothD \catL '  \arrow[rrrrr, "{\mu H}"] \arrow[swap,d, "{\mathsf{P}_{\catL ' } }"] &&&&& \GrothC \catL \arrow[d, "{\mathsf{P}_{\catL}}"] \\
			\catD \arrow[swap, rrrrr, "{\mu\iH}"] &&&&& \catC                              
		\end{tikzcd}
	\end{equation} 
	commutes. 
	
	Let 
	$ \left( g , \id \right) : \left( X, \catL ' (g) (y) \right) \to
	\left( Y , y \right) $
	be a morphism in $ \left( \GrothD \catL '\right)  $. 
	Denoting, again, 
	\begin{center}  
		$\xi = \ind _{\iH ^Y }\circ \iH (g, \mu \iH ^{Y}) $\qquad and\qquad 	
		$\xi ' =  \catL \left(\xi \right) ( \ind _{ \ih _ Y^y } ) \circ\left( h_{ (X, \mu\iH ^Y ) } ( \id , \mu\ih _ Y^y )   \right) $,
	\end{center} 
	we have that
	\begin{align*}
		& \left(\fold _{ \ih _X^{\catL ' (g) (y)} } \, \catL \left(\iH ^X ( \fold _{\iH ^X } \xi   )\cdot \ind _{\iH  ^X } ^{-1} \right) (\xi ')\right) \\
		& =  \left(\fold _{ \ih _X^ {\catL ' (g) (y)} } \, \catL \left( \xi\cdot \iH ^X ( \fold _{\iH ^X } \xi   )\cdot \ind _{\iH  ^X } ^{-1} \right) (\ind _{ \ih _ Y^y } )\right) \explainr{$h_{ (X, \mu\iH ^Y ) } ( \id , \mu\ih _ Y^y ) = \id $}  \\
		& =   \left(\fold _{ \ih _X^{\catL ' (g) (y)} } \, \catL \left(  ( \fold _{\iH ^X } \xi   )\cdot \ind _{\iH  ^X } \cdot \ind _{\iH  ^X } ^{-1} \right) (\ind _{ \ih _ Y^y } )\right) \explainr{ $ \fold _{\iH ^X } \xi $ } \\
		& =  \left(\fold _{ \ih _X^ {\catL ' (g) (y)} } \, \catL \left(  \mu\iH (g)   \right) (\ind _{ \ih _ Y^y } )\right) \explainr{ Proposition~\ref{prop:general_parameterized_initial_algebras}} \\
		&  =  \id _{\mu \ih _X^ {\catL ' (g) (y)} } \explainr{ Eq.~\eqref{eq:additionalhypothesis-muH-GrothendieckFibration} }
	\end{align*} 
	By Equation \eqref{eq:first-eq-muH-splitfibration}, the above proves that
	$$ \mu H \left( g , \id \right) = \left( \mu\iH (g), \id \right) $$
	and, hence, we completed the proof that $\mu H $ is a split fibration functor.
\end{proof}

We can, then, reformulate our result in terms of the existence
of parameterized initial algebras in the base category and in the fibers. That is to say, we have:

\begin{therm}[Parameterized initial algebras are strictly indexed functors]\label{coro:parameterized-initial-algebras-for-indexed-functos}
	Let  $(  \iH , h )  $ be a strictly indexed functor from $\catL '\prodstrict \catL : ( \catD\times \catC ) ^\op \to\Cat $ to $\catL : \catC ^\op\to\Cat $, and $H:\left(\GrothD\catL '\right)\times \left(\GrothC\catL \right)   \to \GrothC\catL $ the corresponding split fibration functor. Assume that:	
	\begin{enumerate}[$\mathfrak{h}$1)]
		\item the parameterized initial algebra $\mu\iH : \catD\to \catC $ exists;
		\item for any $X\in \catD $, the parameterized initial algebra $ \mu \ih _X $ exists;
		\item for each morphism $ g : X\to Y $ in $\catD $ and $ y\in Y $, 	
	Eq.~\eqref{coro:strange-condition-preservation} holds.
		\begin{equation}\label{coro:strange-condition-preservation}
			\catL \left( \mu\iH (g) \right) (\ind _{\ih _ Y^y } ) =  \ind _ {\ih _ X^{\catL ' (g) (y) }} 
		\end{equation} 		
	\end{enumerate}	
	In this setting, the parameterized initial algebra $$\mu H: \GrothD\catL ' \to \GrothC\catL $$   is a split fibration functor coming from the strictly indexed functor  $\left(\mu \iH ,  \mu\left( \ih _ {(-)} \right) \right) $ in which, for each $X\in\catD $,
	\begin{equation}
		\mu\left( \ih _ {(X)} \right)  = \mu\ih _ {X} = \mu\left( \catL(\ind _{\iH ^X} )^{-1} \ihmXx \right) : \catL '(X) \to\catL (\mu \iH ^X ) .  
	\end{equation}	 
\end{therm}
\begin{proof}
	By Theorem \ref{theo:parameterized-initial-algebras-for-indexed-functors} (Eq.~\eqref{eq:first-eq-muH-splitfibration}) and Proposition \ref{prop:GrothedieckSplitFibrations-vs-IndexedCategories} (Eq.~\eqref{eq:fibration-to-indexed-category}),
	we have that $$ \mu H: \GrothD\catL ' \to \GrothC\catL $$
	comes from the indexed category $ (\mu\iH , \mathfrak{h} ) $
	in which, for each $X\in\catD  $ and each morphism $ f': x\to w $
	in $ \catL ' (X) $, 
	\begin{align*}
		& \mathfrak{h} _ X (f') \\
		& = \mu H ( \id _X , f' ) \\
		& = \left( \id _{\mu \iH ^X } ,  \,  \fold _{ \ih _X^x } \,   \left( \ind _{ \ih _ X^w }  \circ \catL \left( \ind _{\iH  ^X } ^{-1} \right) \left( h_{ (X, \mu\iH ^X ) } (f', \mu\ih _ X^w )   \right) 
		\right) \right)  \explainr{ Eq.~\eqref{eq:first-eq-muH-splitfibration} } \\
		& = \left( \id _{\mu \iH ^X } , \,   \fold _{ \ih _X^x } \,  \left( \ind _{ \ih _ X^w }  \circ  \ih _ X \left(  f', \mu\ih _ X^w    \right) 
		\right) \right) \\
		& = \left( \id _{\mu \iH ^X } , \,   \mu  \ih _X (f')   \right) \explainr{Proposition~\ref{prop:general_parameterized_initial_algebras} }
	\end{align*} 
\end{proof}

Finally, for strictly indexed categories respecting initial algebras (see Definition \ref{def:indexed-category-respecting-initial-algebras}), we get a cleaner version of Theorem  \ref{coro:parameterized-initial-algebras-for-indexed-functos} below.

\begin{corollary}[Parameterized initial algebras and strictly indexed categories respecting initial algebras]\label{coro:finalresult-of-initial-algebras-fibrations}
	Let  $(  \iH , h )  $ be a strictly indexed functor from $\catL '\prodstrict\catL  :  ( \catD\times\catC ) ^\op \to\Cat $ to $\catL : \catC ^\op\to\Cat $, and $H:\left(\GrothD\catL '\right)\times \left(\GrothC\catL \right)   \to \GrothC\catL $ the corresponding split fibration functor. Assume that:	
	\begin{enumerate}[$\mathfrak{h}$1)]
		\item $\catL$ respects initial algebras;
		\item the parameterized initial algebra $\mu\iH : \catD \to \catC $ exists;
		\item for any $X\in \catD $, the parameterized initial algebra $ \mu \ih _X $ exists.
	\end{enumerate}	
	In this setting, the parameterized initial algebra $$\mu H: \GrothD\catL '  \to \GrothC\catL $$   is a split fibration functor coming from the strictly indexed functor  $\left(\mu \iH ,  \mu\left( \ih _ {(-)} \right) \right) $ in which, for each $X\in\catD $,
	\begin{equation}
		\mu\left( \ih _ {(X)} \right)  = \mu\ih _ {X} = \mu\left( \catL(\ind _{\iH ^X} )^{-1} \ihmXx \right) : \catL ' (X ) \to\catL (\mu \iH ^X ) .  
	\end{equation}	 
\end{corollary}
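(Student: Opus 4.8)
The plan is to deduce the corollary directly from Theorem~\ref{coro:parameterized-initial-algebras-for-indexed-functos} by showing that, once $\catL$ respects initial algebras, the only non-trivial hypothesis of that theorem — the compatibility equation \eqref{coro:strange-condition-preservation} — holds automatically. Hypotheses $\mathfrak{h}1$ and $\mathfrak{h}2$ (existence of $\mu\iH$ and of each $\mu\ih _X$) are assumed verbatim, so the entire content of the proof is to derive, for every $g:X\to Y$ in $\catD $ and $y\in\catL ' (Y)$, the equation $\catL (\mu\iH (g))(\ind _{\ih _Y^y}) = \ind _{\ih _X^{\catL ' (g)(y)}}$.

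First I would establish the strict conjugation identity
\[
\catL (\mu\iH (g)) \circ \ih _Y^y \;=\; \ih _X^{\catL ' (g)(y)} \circ \catL (\mu\iH (g)) \;:\; \catL (\mu\iH ^Y) \to \catL (\mu\iH ^X).
\]
To obtain it, I would unfold $\ih _Y^y = \catL (\ind _{\iH ^Y})^{-1}\circ h_{(Y,\mu\iH ^Y)}(y,-)$ and $\ih _X^{\catL ' (g)(y)} = \catL (\ind _{\iH ^X})^{-1}\circ h_{(X,\mu\iH ^X)}(\catL ' (g)(y),-)$, and then invoke naturality of the indexed-functor component $h$ at the morphism $(g,\mu\iH (g))$ of $\catD \times\catC $, which yields
\[
h_{(X,\mu\iH ^X)}(\catL ' (g)(y), \catL (\mu\iH (g))(-)) \;=\; \catL (\iH (g,\mu\iH (g)))\circ h_{(Y,\mu\iH ^Y)}(y,-).
\]
After substituting this in, both sides of the conjugation identity reduce to the single equation $\catL (\mu\iH (g))\circ\catL (\ind _{\iH ^Y})^{-1} = \catL (\ind _{\iH ^X})^{-1}\circ\catL (\iH (g,\mu\iH (g)))$ between change-of-base functors. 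Using Lambek's theorem to invert the structure maps $\ind _{\iH ^X},\ind _{\iH ^Y}$ and the (contravariant) functoriality of $\catL $, this is equivalent to $\mu\iH (g)\circ\ind _{\iH ^X} = \ind _{\iH ^Y}\circ\iH (g,\mu\iH (g))$ in $\catC $ — precisely the fold-defining equation for $\mu\iH (g)=\fold _{\iH ^X}(\ind _{\iH ^Y}\circ\iH (g,\mu\iH ^Y))$ from Proposition~\ref{prop:general_parameterized_initial_algebras}. This settles the identity.

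With the identity in hand, I would read it through Lemma~\ref{lem:comparsion-of-algebras-basic-lemma}: taking $E=\ih _X^{\catL ' (g)(y)}$, $E'=\ih _Y^y$, $F=\catL (\mu\iH (g))$ and $\gamma=\id $ (we are in the strict case), the identity says exactly that $\catL (\mu\iH (g))$ strictly intertwines the two endofunctors, so that $\check{F}_{\id}$ carries $\ih _Y^y$-algebras to $\ih _X^{\catL ' (g)(y)}$-algebras by post-composition with $\catL (\mu\iH (g))$. Since $\catL $ respects initial algebras (Definition~\ref{def:indexed-category-respecting-initial-algebras}), the functor $\catL (\mu\iH (g))$ strictly preserves the chosen initial algebra, hence it sends $(\mu\ih _Y^y,\ind _{\ih _Y^y})$ to the chosen initial $\ih _X^{\catL ' (g)(y)}$-algebra; comparing structure maps gives $\catL (\mu\iH (g))(\ind _{\ih _Y^y}) = \ind _{\ih _X^{\catL ' (g)(y)}}$, i.e.\ \eqref{coro:strange-condition-preservation}. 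Theorem~\ref{coro:parameterized-initial-algebras-for-indexed-functos} then applies and delivers both that $\mu H$ is a split fibration functor and its description as the strictly indexed functor $\left(\mu\iH ,\mu(\ih _{(-)})\right)$ with the stated fibre components.

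The main obstacle I expect is the second paragraph: the bookkeeping showing that the conjugation identity collapses onto the fold-defining equation. The delicate points are tracking the inverse change-of-base functors $\catL (\ind _{\iH ^X})^{-1}$ (legitimate only by Lambek invertibility), the contravariance of $\catL $, and the correct instantiation of the naturality of $h$ at $(g,\mu\iH (g))$ rather than at $(g,\id )$; once these are aligned, the remainder is a direct appeal to earlier results.
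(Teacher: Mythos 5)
Your proposal is correct and follows essentially the same route as the paper's own proof: reduce to verifying Eq.~\eqref{coro:strange-condition-preservation}, establish the strict intertwining identity $\catL(\mu\iH(g))\circ\ih_Y^y = \ih_X^{\catL'(g)(y)}\circ\catL(\mu\iH(g))$ via naturality of $h$ at $(g,\mu\iH(g))$ together with the fold-defining equation for $\mu\iH(g)$ (using contravariance of $\catL$ and Lambek invertibility), and then invoke the respects-initial-algebras hypothesis to transport the chosen initial algebra and its structure map. Your second paragraph even makes explicit the appeal to Lemma~\ref{lem:comparsion-of-algebras-basic-lemma} with $\gamma=\id$, which the paper leaves implicit in its final sentence.
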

\begin{proof}
	By Theorem \ref{coro:parameterized-initial-algebras-for-indexed-functos}, it is enough to show that
	Equation \eqref{coro:strange-condition-preservation} holds whenever $\catL $ respects initial algebras.
	
	We have that, for any morphism $g : X\to Y $ in $\catD $, and each $y\in \catL '(Y) $, by the naturality of $h: \catL ' \prodstrict \catL  \longrightarrow \left(\catL  \circ \iH ^\op\right) $
	and the definition of $\mu\iH (g) $, the squares 
	\begin{equation*}
		\begin{tikzcd}
			\catL \left( \mu \iH ^Y \right)  \arrow[rrr, "{\catL \left(\mu \iH (g)\right) }"] \arrow[swap,d, "{\left( y, \id _{\catL ( \mu \iH ^Y ) }  \right) }"] &&& \catL ( \mu \iH ^X ) \arrow[d,"{\left(\catL ' (y), \id _{\catL ( \mu \iH ^X ) }  \right) }"] \\
			\catL ' \left( Y  \right) \times \catL \left( \mu\iH ^Y \right)  \arrow[d, swap, "{h_{\left( Y, \mu\iH ^Y \right) }}"] \arrow[rrr, "{\catL ' \left( g  \right)\times \catL \left(  \mu\iH (g) \right) }"] &&& \catL ' \left( X  \right) \times \catL \left( \mu\iH ^X \right)\arrow[d, "{h_{\left( X, \mu\iH ^X \right) }}"] \\
			\catL \left( \iH \left( Y, \mu\iH ^Y \right)  \right)\arrow[d, swap, "{\catL\left( \ind _{\iH ^Y } \right) ^{-1}}"] \arrow[rrr, "{\catL\left( \iH \left( g, \mu \iH (g) \right)  \right) }"] &&&\catL \left( \iH \left( X, \mu\iH ^X \right)  \right)\arrow[d, "{\catL\left( \ind _{\iH ^X } \right) ^{-1}}"] \\
			\catL \left( \mu\iH ^Y\right) \arrow[rrr, swap, "{\catL\left( \mu\iH (g)\right) }"] &&&\catL \left( \mu\iH ^X\right)                     
		\end{tikzcd}
	\end{equation*} 
	commute. Thus, we get that 	
	\begin{align*}
		& \catL\left( \mu\iH (g)\right)\circ\ih _ {Y}^y \\
		& = \catL\left( \mu\iH (g)\right)\circ\ih _ {Y}\circ \left( y, \id _{\catL ( \mu \iH ^Y ) }  \right) \\ 
		& = \catL\left( \mu\iH (g)\right)\circ\catL\left( \ind _{\iH ^Y } \right) ^{-1} \circ h_{\left( Y, \mu\iH ^Y \right)}\circ \left( y, \id _{\catL ( \mu \iH ^Y ) }  \right)   \\
		& = \catL\left( \ind _{\iH ^X } \right) ^{-1} \circ h_{\left( X, \mu\iH ^X \right)}\circ \left(\catL ' (y), \id _{\catL ( \mu \iH ^X ) }  \right)\circ\catL\left( \mu\iH (g)\right)  \\
		& =\ih _ X ^{\catL ' (y) } \circ\catL\left( \mu\iH (g)\right)  .
	\end{align*}	
	Therefore, assuming that $\catL $ respects initial algebras, we conclude that
	\begin{equation*}
		\catL \left( \mu\iH (g) \right) (\ind _{\ih _ Y^y } ) =  \ind _ {\ih _ X^{\catL ' (g) (y) }} 
	\end{equation*}
	holds. That is to say \eqref{coro:strange-condition-preservation} holds for any $g : X\to Y $ in $\catD $ and any $y\in\catL ' (Y)$. This completes the proof by Theorem \ref{coro:finalresult-of-initial-algebras-fibrations}.	
\end{proof}

\subsection{General result on terminal coalgebras in total categories} \label{sect:appendix-coinductivetypes-grothendieckconstruction}
Analogously to the case of initial algebras above, 
in order to give basis for our study in \ref{subsect:parameterized-initial-algebras-coalgebras}, 
we investigate the general case of parameterized terminal coalgebras of split fibration functors 
like in \eqref{eq:strictly-indexed-functor-split-fibration-functor}. 

Definition \ref{def:preservation-of-initial-algebras} on initial algebra preserving functors plays a central role in Theorem \ref{theo:fundamental-theorem-on-terminal-coalgebra-Grothendieck}. Specifically, we use this definition in the context of indexed categories, where we define:

\begin{definition}[Initial-algebra-respecting]\label{def:indexed-category-respecting-initial-algebras}\label{def:indexed-category-respecting-terminal-coalgebras}
	A strictly indexed category $\catL : \catC ^\op\to \Cat $ \textit{respects initial algebras} if $\catL (f) $ strictly preserves initial
	algebras for any morphism $f$ of $\catC $.\footnote{We could have allowed non-strict preservation but, in our context, it is more practical to keep things as strict as possible.} 
	
	Dually, $\catL : \catC ^\op\to \Cat $ \textit{respects terminal coalgebras } if $\catL (f) $ strictly preserves terminal coalgebras for any morphism $f$ of $\catC $.
\end{definition}

\begin{therm}[Terminal coalgebras of strictly indexed endofunctors]\label{theo:fundamental-theorem-on-terminal-coalgebra-Grothendieck}
	Let  
	$(  \iE , e )  $
	be a strictly indexed endofunctor on $\catL : \catC ^\op\to\Cat $
	and $E: \GrothC\catL \to \GrothC\catL $ the corresponding split fibration endofunctor. 
	Assume that:
	\begin{enumerate}[$\mathfrak{e}$1)]
		\item $\catL $ respects terminal coalgebras;
		\item the terminal $\iE$-coalgebra $\left( \nu \iE,\, \coind _{\iE}  \right) $ exists;
		\item the terminal $\left( \catL(\coind _{\iE} ) \ienu \right) $-coalgebra $\left( \nu \left( \catL(\coind _{\iE} ) \ienu\right),\, \coind _{\catL(\coind _{\iE} ) \ienu } \right)  $ exists.
	\end{enumerate}
	Denoting by $\ienn $ the endofunctor $\catL \left( \coind _{\iE } \right) \ienu $ on $\catL (\nu \iE ) $,
	the terminal $E$-coalgebra exists and is given by
	\begin{equation}\label{eq:terminal-coalgebra-structure-splitendofunctor}
		\nu E = \left( \nu \iE ,\,\nu \ienn \right), \qquad \coind _E = \left( \coind _{\iE },\, \coind _{\ienn } \right).
	\end{equation}
	Moreover, 
	for each $E$-coalgebra 
	$$\left( (Y,y),\, (\xi ,\xi '): (Y,y)\to E (Y,y)  \right) = \left(  (Y,y), \left( \xi : Y\to \iE (Y), \xi ': y\to \catL (\xi ) e_Y (y)  \right) \right), $$ 
	we have that
	\begin{equation}\label{eq:unfold-terminalcoalgebras-grothendieckconstruction-endofunctor}
		\unfold _{E} \left( \xi , \xi ' \right)  = \left( \unfold _{\iE }  \xi , \, \,  
		\unfold _{\catL \left( \xi  \right) e_Y } \xi ' \right)  .
	\end{equation}
\end{therm}

\begin{proof}
	Under the hypothesis above, given an $E$-coalgebra
	$$ \left( \xi : Y\to \iE (Y), \xi ': y\to \catL (\xi ) e_Y (y)  \right)  $$
	on $ ( Y , y ) $, we have that the diagram
	\begin{equation*}
		\begin{tikzcd}
			\catL \left( \nu \iE \right)\arrow[swap,dddd, bend right =80, "{ \ienn }"]  \arrow[rrr, "{\catL \left(\unfold _{\iE }  \xi \right) }"] \arrow[swap,dd, "{ \ienu }"] &&& \catL \left( Y \right)  \arrow[dd,"{ e_Y }"] \\
			&&&\\
			\catL \left( \iE \left( \nu \iE \right) \right) \arrow[dd, swap, "{ \catL \left(  \coind _{\iE } \right)  }"] \arrow[rrr, "{\catL \left( \iE\left(\unfold _{\iE }  \xi \right) \right) }"] &&&\catL \left( \iE \left( Y\right) \right) \arrow[dd, "{ \catL \left(  \xi \right) }"] \\
			&&&\\
			\catL \left( \nu \iE \right) \arrow[rrr, swap, "{\catL \left(\unfold _{\iE }  \xi \right) }"] &&&\catL \left( Y \right)                    
		\end{tikzcd}
	\end{equation*} 
	commutes. Thus, since  $\catL $ respects terminal coalgebras, we have that 
	\begin{equation*}
		\left( \catL \left(\unfold _{\iE }  \xi \right) \left( \nu \ienn\right)  ,\, \catL \left(\unfold _{\iE }  \xi \right) \left( \coind _{\ienn }\right) \right)  
	\end{equation*}
	is the terminal $ \catL \left(  \xi \right) e_Y $-coalgebra.
	Therefore, we have that 
	\begin{equation*}
		\unfold _{\catL \left( \xi \right) e_Y   }  \xi ' : y\to\catL \left(\unfold _{\iE }  \xi \right) \left( \nu \ienn\right) 
	\end{equation*}
	is the unique morphism of $\catL (Y) $ such that   
	\begin{equation*}
		\begin{tikzcd}
			y \arrow[dd, swap, "{ \xi ' }"] \arrow[rrrrr, "{\unfold _{\catL \left( \xi \right) e_Y   }  \xi '  }"] &&&&&\catL \left(\unfold _{\iE }  \xi \right) \left( \nu \ienn\right)  \arrow[dd, "{ \catL \left(\unfold _{\iE }  \xi \right) \left( \coind _{\ienn }\right) }"] \\
			&&&&&\\
			\catL \left(  \xi \right) e_Y\left( y \right)  \arrow[rrrrr, swap, "{  \catL \left(  \xi \right) e_Y \left( \unfold _{\catL \left( \xi \right) e_Y   }  \xi ' \right)  }"] &&&&& \catL \left(  \xi \right) e_Y \catL \left(\unfold _{\iE }  \xi \right) \left( \nu \ienn\right)                  
		\end{tikzcd}
	\end{equation*} 
	which shows that 
	$$ \left( \unfold _{\iE }  \xi , \, \,  
	\unfold _{\catL \left( \xi  \right) e_Y } \xi ' \right)  : (Y,y)\to E(Y,y) = \left( \iE (Y), e_Y (y) \right)
	$$
	is the unique morphism of $ \GrothC\catL  $
	such that 
	\begin{equation*}
		\begin{tikzcd}
			\left( Y, y \right) \arrow[dd, swap, "{ \left( \xi, \xi '\right) }"] \arrow[rrrrr, "{\left( \unfold _{\iE }  \xi , \, \, \unfold _{\catL \left( \xi  \right) e_Y } \xi ' \right)   }"] &&&&& \left( \nu \iE ,\,\nu \ienn \right) \arrow[dd, "{ \left( \coind _{\iE },\, \coind _{\ienn } \right) }"] \\
			&&&&&\\
			E\left( Y, y \right) = \left( \iE (Y), e_Y (y) \right)\arrow[rrrrr, "{  \left( \iE \left( \unfold _{\iE }  \xi \right) ,  e_Y \left( \unfold _{\catL \left( \xi  \right) e_Y } \xi '\right)  \right)  }"]  \arrow[rrrrr, swap, "{  E\left( \unfold _{\iE }  \xi , \, \, \unfold _{\catL \left( \xi  \right) e_Y } \xi ' \right)  }"] &&&&& E\left( \nu \iE ,\,\nu \ienn \right) = \left( \iE\left( \nu \iE\right) , \ienn \left( \nu \ienn\right)  \right)                 
		\end{tikzcd}
	\end{equation*} 
	commutes. This completes the proof that $\nu E = \left( \nu \iE ,\,\nu \ienn \right) $ 
	is the terminal $E$-coalgebra.
\end{proof}

\begin{therm}[Parameterized terminal coalgebras are strictly indexed functors]\label{theo:parameterized-terminal-coalgebras-for-indexed-functors}
	Let  
	$(  \iH , h )  $
	be a strictly indexed functor from 
	$\catL '\prodstrict \catL  :  \left( \catD\times \catC \right)  ^\op \to\Cat $ to $\catL : \catC ^\op\to\Cat $,
	and $H : \left(\GrothD\catL '\right) \times\left(\GrothC\catL \right)  \to \GrothC\catL $ the corresponding split fibration functor. 
	Assume that
	\begin{enumerate}[$\mathfrak{h}$1)]
		\item $\catL $ respects terminal coalgebras;
		\item for each object $X$ of $\catC $, the terminal $\iH ^X $-coalgebra $\left( \nu \iH ^X,\, \coind _{\iH ^X}  \right) $ exists;
		\item for each object $(X,x) $ in $\GrothD\catL ' $, denoting by 
		$\ihnu _X  $ the functor 
		\begin{equation}
			\catL(\coind _{\iH ^X} ) \ihnXx : \catL ' (X) \times \catL(\nu \iH ^X ) \to\catL(\nu \iH ^X )     
		\end{equation}
		is such that the terminal $\ihnu _X ^x $-coalgebra $\left( \nu \ihnu _X^x   ,\, \coind _{\ihnu _X^x}  \right)  $ exists.
	\end{enumerate}
	In this setting, the parameterized terminal coalgebra $$\nu H: \GrothD\catL ' \to \GrothC\catL $$   is a split fibration functor coming from the strictly indexed functor  $\left(\nu \iH ,  \nu\left( \ihnu _ {(-)} \right) \right) $ in which, for each $X\in\catD $,
	\begin{equation}
		\nu\left( \ihnu _ {(X)} \right)  = \nu\ihnu _ {X} = \nu\left( \catL(\coind _{\iH ^X} ) \ihnXx \right) :  \catL ' (X) \to\catL (\nu \iH ^X ) .  
	\end{equation}
\end{therm}
\begin{proof}
	Assuming the hypothesis, 
	we conclude that, for each $(X,x)$ in $ \GrothD\catL ' $, 
	$\GrothC\catL $ has the terminal $H^{(X,x)} $-coalgebra by Theorem \ref{theo:fundamental-theorem-on-terminal-coalgebra-Grothendieck}. Hence, by Proposition \ref{prop:general_parameterized_initial_algebras}, we have that 
	$$ \nu H : \GrothD\catL '\to \GrothC\catL $$ 
	exists.	
	More precisely, given a morphism $(f, f') : (X,x)\to (Y,y) $ in $\GrothD\catL '$, we compute  $\nu H (f, f') $ below. 
	\begin{align*}
		&\nu H (f, f')\\
		&= \unfold _{H^{(Y,y)} }\left(  H\left( (f, f') , \nu H ^{(X,x)}\right) \circ \coind_{H^{(X,x)} }     \right) \explainr{Proposition~\ref{prop:general_parameterized_initial_algebras}}\\
		&= \unfold _{H^{(Y,y)} }\left(   H\left( (f, f') , \nu H ^{(X,x)} \right) \circ  \left( \coind _{\iH ^X }, \,   \coind _{ \ihnu _ X^x }  \right)  \right) \explainr{  Eq.~\eqref{eq:terminal-coalgebra-structure-splitendofunctor}}\\
		&= \unfold _{H^{(Y,y)} }\left( \left( \iH (f, \nu \iH ^{X}), h_{ (X, \nu\iH ^X ) } ( f', \nu\ihnu _ X^x )   \right) \circ    \left( \coind _{\iH ^X }, \,   \coind _{ \ihnu _ X^x }  \right)        \right) \explainr{hypothesis}\\
		&= \unfold _{H^{(Y,y)} }\left( \iH (f, \nu \iH ^{X})\circ\coind _{\iH ^X }   ,\, \catL \left(\coind _{\iH ^X } \right)  \left( h_{ (X, \nu\iH ^X ) } ( f', \nu\ihnu _ X^x )\right)  \,  \circ\coind _{ \ihnu _ X^x } \right) \explainr{composing} \\
		&= \unfold _{H^{(Y,y)} }\left( \iH (f, \nu \iH ^{X})\circ\coind _{\iH ^X }   ,\, \ihnu _X \left( f', \nu\ihnu _ X^x \right)    \circ\coind _{ \ihnu _ X^x } \right) \explainr{definition of $\ihnu _ X$}\\
		& =  \left( \unfold _{\iH ^Y } \left( \iH (f, \nu \iH ^{X})\circ\coind _{\iH ^X }\right)   , \,\unfold _{\ihnu_Y ^y } \left( \ihnu _X \left( f', \nu\ihnu _ X^x \right)    \circ\coind _{ \ihnu _ X^x }\right) \right)\explainr{Eq.~\eqref{eq:unfold-terminalcoalgebras-grothendieckconstruction-endofunctor}}\\
		& = \left( \nu \iH (f), \nu \ihnu _Y (f')    \right) \explainr{Proposition~\ref{prop:general_parameterized_initial_algebras}}
	\end{align*}
	Since $\nu H (f, f') = \left( \nu \iH (f), \nu \ihnu _Y (f')    \right) $,  clearly, then, the pair $ \left( \nu H, \nu \iH \right) $ 
	satisfies 
	Eq.~\eqref{eq:1-condition-morphism-split-fibrations} and Eq.~\eqref{eq:2-condition-morphism-split-fibrations} of Proposition 	\ref{eq:1-condition-morphism-split-fibrations}. Moreover, $\nu H $ comes from the  
	strictly indexed functor  $\left(\nu \iH ,  \nu\left( \ihnu _ {(-)} \right) \right) $.
\end{proof}


\subsection{$\mu\nu$-polynomials in total categories}\label{subsect:parameterized-initial-algebras-coalgebras}

We examine the existence of \textit{$\mu\nu$-polynomials} in $\GrothC\catL $ and $\GrothC\catL ^\op $. In order to do so,
we employ the results and terminology established in \ref{theo:fundamental-theorem-on-initial-algebra-semantics=Grothendieck} and \ref{sect:appendix-coinductivetypes-grothendieckconstruction} 

Making use of Definition~\ref{definition:extesive-indexed-categories} and Definition~\ref{def:strictly-indexed-biproducts}, we introduce the following concept to provide support for our definition of $\Sigma$-bimodel for inductive and coinductive types:

\begin{definition}[$\mnPoly _ \catL $]\label{def:mnpolyL}
	Let $\catC $ be a category with $\mu\nu $-polynomials, and  $\catL : \catC ^\op \to\Cat $ an extensive strictly indexed category with strictly indexed finite biproducts. We define the category  $ \mnPoly_\catL  $ as the smallest subcategory of $\Cat $ satisfying 
	the following.
	\begin{enumerate}
		\item[O)] The objects are defined inductively by:
	\begin{enumerate}[O1)]
		\item the terminal category $\terminal $ is an object of $\mnPoly _ \catL $;  
		\item if $\catD $ and $\catD ' $ are objects of  $\mnPoly _ \catL $, then so is $\catD \times \catD ' $;  
		\item for each object $W\in\catC $, the category $\catL (W) $ is an object of $\mnPoly _ \catL $.
	\end{enumerate}
       \item[M)] The morphisms satisfy the following properties:
	\begin{enumerate}[M1)]
	\item for any object $\catD $ of $\mnPoly _ \catL $, the unique functor	$\catD \to \terminal $ is a morphism of $\mnPoly _ \catL $;
	\item for any object $\catD $ of $\mnPoly _ \catL $, all the functors $\terminal \to \catD $ are morphisms of $\mnPoly _ \catL $;
	\item for each $(W, X)\in \catC\times \catC $, the projections $\pi _1 : \catD \times \catD ' \to \catD $ and $\pi_2 : \catD \times \catD ' \to \catD '$
	are morphisms of  $\mnPoly _ \catL $;
	\item  for each $W\in\catC $, the biproduct $ +  : \catL (W) \times\catL (W)  \to \catL (W) $ is a morphism of  
	$\mnPoly _ \catL $;
	\item for each $(W, X)\in \catC\times \catC $, the functor 
	\begin{equation*}
		\equivalenceextensive ^{(W,X)} : \catL (W)\times \catL (X)\to \catL (W\sqcup  X) 
	\end{equation*}
	of the extensive structure (see \eqref{eq:equivalence-extensive-indexed-category})
	is a morphism of $\mnPoly _ \catL $;
	\item given an object $\catD $ of $\mnPoly _\catC $, a morphism $ \iH  : \catD  \times \catC    \to \catC    $ of $\mnPoly _ \catC $ and any object $X\in \catD ' $,  
	\begin{eqnarray*} 
		\catL (\ind _{\iH  ^X} )^{-1} : & \catL \left(  \iH  ^X \left( \mu \iH  ^ X \right) \right) &\to  \catL \left( \mu \iH  ^X \right) ,\\ 
		\catL (\coind _{\iH ^X } ) : &\catL \left(  \iH ^X \left( \nu \iH ^X \right) \right) & \to  \catL \left( \nu \iH ^X \right) 
	\end{eqnarray*}
	are morphisms of  $\mnPoly _ \catL $; \label{def:mnpolyL:munuHconditionformnPolyL}
		\item for each $(W, X)\in \catC\times \catC $, the functors induced by the projections
	\begin{equation*} 
		\catL (\pi _1 ) :\catL \left( W \right)\to  \catL \left( W\times X \right) , \qquad \catL (\pi _2 ) :\catL \left( X \right)\to  \catL \left( W\times X \right) 
	\end{equation*} 
	are morphisms of $\mnPoly _ \catL $;
	\item if $E: \catD   \to \catD  '   $ and $J : \catD   \to \catD ''    $ are morphisms of $\mnPoly _ \catL $, then so is $(E,J) :\catD    \to \catD ' \times\catD ''   $;
	\item if $\catD ', \catD $ are objects of $\mnPoly _ \catL $, $h: \catD '\times \catD  \to\catD    $ is a morphism of $\mnPoly _ \catL $  and $\mu h : \catD ' \to \catD   $ exists,
	then $\mu h $ is a morphism of  $\mnPoly _ \catL $;
	\item  if $\catD ', \catD $ are objects of $\mnPoly _ \catL $, $h: \catD '\times \catD  \to\catD    $ is a morphism of $\mnPoly _ \catL $  and $\nu h : \catD '\to \catD   $ exists,
	then $\nu h $ is a morphism of  $\mnPoly _ \catL $.
	\end{enumerate}
	\end{enumerate}
\end{definition}

Having established the previous definition, we can now introduce the notion of a $\Sigma$-bimodel for inductive and coinductive types:

\begin{definition}[$\Sigma$-bimodel for inductive and coinductive types]\label{def:suitL-indexedcategory}
We say that $\catL : \catC ^\op\to\Cat $ is a 
\emph{$\Sigma$-bimodel for inductive and coinductive types}  (or, for short, a \textit{$\suitL $-indexed category}) if: 
\begin{enumerate}[$\suitL$1)]
	\item $\catL $ is a strictly indexed category;
	\item $\catC $ has $\mu\nu$-polynomials (Definition~\ref{def:basic-definition-munupolynomials});
			\item  $\catL : \catC ^\op\to\Cat $ has strictly indexed finite biproducts (Definition~\ref{def:strictly-indexed-biproducts});
 	\item $\catL $ is extensive (Definition~\ref{definition:extesive-indexed-categories});
 	\item $\catL $ respects terminal coalgebras and initial algebras (Definition~\ref{def:indexed-category-respecting-initial-algebras});
	\item whenever $\catD $ is an object of $\mnPoly _ \catL $ and $e: \catD  \to\catD    $ is a morphism of $\mnPoly _ \catL $, $\mu e   $ and $\nu e $ exist.
\end{enumerate} 
\end{definition}

\begin{lemma}\label{lem:munucompleteness-mnPolyL}
Let $\catL : \catC ^\op \to \Cat $ be a $\suitL$-indexed category. If $\catD , \catD '$ are objects of  $\mnPoly _ \catL $ then, whenever $h : \catD '\times \catD \to \catD $ is a morphism of   $\mnPoly _ \catL $, 
\begin{center}
$\mu h : \catD '\to\catD $\qquad and\qquad $\nu h : \catD '\to \catD $
\end{center}
exist.
\end{lemma}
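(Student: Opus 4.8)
The plan is to adapt the proof of Lemma~\ref{lem:about-completeness-of-munupolynomials} almost verbatim, with the role of the hypothesis ``$\catC$ has $\mu\nu$-polynomials'' now played by the final condition $\suitL$6 of Definition~\ref{def:suitL-indexedcategory}, and the role of $\mnPoly_\catC$ played by $\mnPoly_\catL$. The whole argument reduces the parameterized statement to the endofunctor statement supplied by the axioms, via Proposition~\ref{prop:general_parameterized_initial_algebras}.

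First I would fix an arbitrary object $X\in\catD'$ and argue that the restricted functor $h^X = h(X,-):\catD\to\catD$ is again a morphism of $\mnPoly_\catL$. To do so, I would exhibit $h^X$ as the composite
\[
\catD \xto{(1,\,\id[\catD])} \terminal\times\catD \xto{(X\circ\pi_1,\,\pi_2)} \catD'\times\catD \xto{h} \catD,
\]
where $1:\catD\to\terminal$ is the unique functor to the terminal category and $X:\terminal\to\catD'$ is the constant functor at $X$. Each factor lies in $\mnPoly_\catL$: the functor $1$ is a morphism by the clause allowing the unique functor to $\terminal$, the constant functor $X$ by the clause allowing all functors $\terminal\to\catD'$, the projections $\pi_1,\pi_2$ out of $\terminal\times\catD$ by the projection clause, the identity $\id[\catD]$ because $\mnPoly_\catL$ is a subcategory of $\Cat$ and hence contains identities, and the two pairings by the pairing clause of Definition~\ref{def:mnpolyL}. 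Since $\mnPoly_\catL$ is closed under composition, $h^X$ is a morphism of $\mnPoly_\catL$, and it is an endofunctor on the object $\catD$.

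Second, because $\catD$ is an object of $\mnPoly_\catL$ and $h^X$ is an endomorphism of $\catD$ within $\mnPoly_\catL$, condition $\suitL$6 of Definition~\ref{def:suitL-indexedcategory} immediately yields that both $\mu h^X$ and $\nu h^X$ exist, for every choice of $X\in\catD'$. Finally, the existence of $\mu h^X$ (respectively $\nu h^X$) for all $X$ is precisely the hypothesis of Proposition~\ref{prop:general_parameterized_initial_algebras}, which then manufactures the parameterized functors $\mu h:\catD'\to\catD$ and $\nu h:\catD'\to\catD$, establishing the claim.

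The argument is essentially bookkeeping, so the only point requiring care is the first step: verifying that every factor in the chosen decomposition of $h^X$ is genuinely among the generating morphism classes listed in Definition~\ref{def:mnpolyL}. I do not anticipate a real obstacle here, since the decomposition uses only constant functors out of $\terminal$, projections, identities and pairings, all of which are explicitly closed morphism classes of $\mnPoly_\catL$; moreover the $\mu$- and $\nu$-cases run in perfect parallel, so both halves of the statement are handled by the same reasoning at once.
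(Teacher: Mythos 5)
Your proposal is correct and follows essentially the same route as the paper's own proof: the paper likewise factors $h^X$ as $\catD \to \terminal\times\catD \to \catD'\times\catD \xto{h} \catD$ using the generating clauses of $\mnPoly_\catL$, invokes condition $\suitL$6 to get $\mu h^X$ and $\nu h^X$, and then applies Proposition~\ref{prop:general_parameterized_initial_algebras} to assemble the parameterized (co)algebras. The only difference is cosmetic (your middle arrow $(X\circ\pi_1,\pi_2)$ equals the paper's $(x\circ\pi_1,\id_\catD\circ\pi_2)$), so there is nothing to add.
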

\begin{proof}
By Proposition \ref{prop:general_parameterized_initial_algebras}, it is enough to show that, for each $x\in \catD ' $, $\mu h^x $ and $\nu h^x $ exist.

In fact, denoting by $x: \terminal \to \catD '$ the functor constantly equal to $x\in\catD ' $, the functor $h^x $ is the composition below.
\begin{equation*}
\begin{tikzcd}
\catD \arrow[rr, "{ \left( 1, \id _{\catD} \right) }"] 
\arrow[swap,bend right=15, rrrrrrrrrr, "{ h^x }"] && 
\terminal\times\catD 
\arrow[rrrr, "{ \left( x\circ \pi _1 , \id _{\catD}\circ\pi _2 \right) }"] 
&&&&\catD '\times\catD \arrow[rrrr, "{ h }"] &&&& \catD
\end{tikzcd}
\end{equation*}
Since all the horizontal arrows above are morphisms of  $ \mnPoly _ \catL $, we conclude that $h^x $ is an endomorphism of $ \mnPoly _ \catL $. Therefore, since $\catL $ is a
$\suitL$-indexed category,
$\mu h^x $ and $\nu h ^x $ exist.
\end{proof}

\begin{definition}[$\suitHL $-indexed category and indexed functor]\label{def:suitHL-indexed category and indexed functor}
Let $\catL : \catC ^\op \to \Cat $, $ \catL ': \catD ^\op \to \Cat $ be strictly indexed categories.
We say that $\catL '$ is a \textit{$\suitHL $-indexed category} if: 
\begin{enumerate}[label=$\suitHL$\arabic*),series=suitHL]
\item \label{1def:suitHL-indexed category and indexed functor} $ \cat D $ is an object of $\mnPoly _{\catC} $; 
\item \label{2def:suitHL-indexed category and indexed functor} $\catL '(W) $ is an object of $\mnPoly _{\catL } $  for any $W$ in $\catD $.
\end{enumerate}
A strictly indexed functor  $(  \iH , h )  $ between 
$\catL ': \catD ^\op \to \Cat  $ and $\catL '' : \catE ^\op \to \Cat $ is a \textit{$\suitHL $-indexed functor} if:
\begin{enumerate}[resume*=suitHL]
	\item $\catL '  , \catL ''  $ are $\suitHL $-indexed categories;
	\item $\iH : \catD \to\catE $ is a morphism of $\mnPoly _ { \catC } $;
	\item for each 	$X \in \catD  $,  $ h_{X} : \catL '\left( X \right) \to \catL ''  \circ \iH (X)$	is a morphism of $\mnPoly _{\catL } $.
\end{enumerate}
\end{definition} 

\begin{therm}\label{theo:split-fibration-functor-munu}
Let $\catL ': \catD ^\op \to \Cat $ be a strictly indexed category and 
$\catL : \catC ^\op \to \Cat $ a $\suitL$-indexed category. 
Assume that
$(  \iH , h )  $ is a $\suitHL $-indexed functor,  and $H:\GrotED\left( \catL '\prodstrict \catL  \right)\cong \left(\GrothE\catL '\right)\times \left(\GrothD\catL  \right)   \to \GrothD\catL  $ is the corresponding split fibration functor.  We have that:
 \begin{enumerate}[i)]
	\item $\mu H : \GrothE\catL ' \to \GrothD\catL  $ exists and
	is the split fibration functor induced by the $\suitHL $-indexed functor
	\begin{equation}\label{eq:muH-forthePolynomials}
		\left( \mu\iH :  \catE\to \catD  , \,  \mu\left( \ih _ {(-)} \right)  \right) 
	\end{equation}
	in which 
	\begin{equation}
		\mu\left( \ih _ {(X)} \right)  = \mu\ih _ {X} = \mu\left( \catL(\ind _{\iH ^X} )^{-1} \ihmXx \right) :  \catL ' ( X ) \to\catL  (\mu \iH ^X ) .  
	\end{equation}	
	\item $\nu H :  \GrothE \catL ' \to \GrothD\catL  $ exists and
	is the split fibration functor induced by the $\suitHL $-indexed functor
	\begin{equation}\label{eq:nuH-forthePolynomials} 
		\left(\nu \iH :  \catE\to \catD  , \,  \nu\left( \ihnu _ {(-)} \right) \right) 
	\end{equation} 
	in which 
	\begin{equation}
		\nu\left( \ihnu _ {(X)} \right)  = \nu\ihnu _ {X} = \nu\left( \catL(\coind _{\iH ^X} ) \ihnXx \right) : \catL '' (X )  \to\catL ' (\nu \iH ^X ) .  
	\end{equation} 
\end{enumerate}
Furthermore, both $\mu H $ and $\nu H $ are  $\suitHL $-indexed functors.
\end{therm}
\begin{proof}
	Since $\catC $ has $\mu\nu $-polynomials, $\catD $ is an object
	of $\mnPoly _ {\catC } $ and $\iH $ is a morphism of $\mnPoly _{\catC} $, we have that 
	$\mu\iH $ and $\nu\iH $ exist by Lemma \ref{lem:about-completeness-of-munupolynomials} (and, hence, are morphisms in $ \mnPoly _{\catC } $). Moreover,
we have that 	$\catL(\coind _{\iH ^X} ) $ and $ \catL(\ind _{\iH ^X} )^{-1} $
are morphisms of $\mnPoly _ {\catL} $ by \ref{def:mnpolyL:munuHconditionformnPolyL} of Definition \ref{def:mnpolyL}. 

For any $X\in\catD $, since $\left( \iH, h\right) $ is a $\suitHL$-indexed functor, we have that,  $\catL '\left( X\right) $  is an object of $\mnPoly _{\catL} $ and
\begin{eqnarray*}
\ihmXx  : & \catL '\left( X\right)\times \catL \left(\mu\iH ^X \right) &\to \catL \circ \iH \left( X, \mu\iH ^X \right) \\
\ihnXx : & \catL '\left( X\right)\times \catL \left(\nu\iH ^X \right) & \to \catL \circ \iH \left( X, \nu\iH ^X \right)  
\end{eqnarray*}
are morphisms of $\mnPoly _{\catL } $. 

We conclude, then, that the compositions
\begin{eqnarray*}
\ih _X = \catL(\ind _{\iH ^X} ) ^{-1} \ihmXx  : & \catL '\left( X\right)\times \catL \left(\mu\iH ^X \right) & \to \catL \left(\mu\iH ^X \right) \\
\ihnu _X = \catL(\coind _{\iH ^X} ) \ihnXx  : & \catL '\left( X\right)\times \catL \left(\nu\iH ^X \right) & \to \catL \left(\nu\iH ^X \right)
\end{eqnarray*}
are also morphisms of $\mnPoly _ {\catL} $. Thus, we have that $\mu \ih _ X $ and $\nu \ihnu _X  $ exist (and are morphisms of $\mnPoly _{\catL } $) by Lemma \ref{lem:munucompleteness-mnPolyL}.

Finally, since $\catL $ respects initial algebras and terminal coalgebras, we have that
$(  \iH , h )  $ satisfies the hypotheses of Corollary \ref{coro:finalresult-of-initial-algebras-fibrations} and Theorem \ref{theo:parameterized-terminal-coalgebras-for-indexed-functors}.  Therefore $\mu H $ and $\nu H $ exist and are induced by 
\eqref{eq:muH-forthePolynomials} and \eqref{eq:nuH-forthePolynomials} respectively.

The fact that \eqref{eq:muH-forthePolynomials} and \eqref{eq:nuH-forthePolynomials}
are also $\suitHL $-indexed functors follows from the fact that $\catL '$ is a $\suitHL $-indexed category by hypothesis, $ \mu \iH $ is a morphism of $\mnPoly _{\catC } $ (as observed above)
and $\mu \ih _ X, \nu \ihnu _X  $ are morphisms of $\mnPoly _{\catL} $ (also observed above).
\end{proof}
In particular, we see that initial algebras and terminal coalgebras of $\mu\nu$-polynomials in $\Sigma_\catC \catL$ (and, codually, $\Sigma_\catC \catL ^{op}$) are fibred over $\catC$.

Before proving Theorem \ref{theo:Inductive-Coinductive-Completeness-TotalCategory}, our main theorem about $\mu\nu$-polynomials in $\GrothC\catL $, we 
prove Lemma \ref{lem:the-bijection-between-mnPoly-and-indexedcategories} which
establishes a bijection 
between objects of $\mnPoly _{\GrothC\catL } $ and indexed categories.

\begin{definition}
Let $\catL : \catC ^\op \to \Cat $ be a strictly indexed category. We inductively define the 
set $\Produ \catL  $ of indexed categories as follows:
\begin{enumerate}[$\Produ \catL$1.]
\item the terminal indexed category $\terminal : \terminal \to \Cat $ belongs to $\Produ \catL  $;
\item $\catL $ belongs to $\Produ \catL  $;
\item if $\catL '$ and $\catL '' $ belong to $\Produ \catL  $, then 
$\left( \catL '\prodstrict \catL ''\right) \in \Produ \catL  $.  
\end{enumerate}
\end{definition}

\begin{lemma}\label{lem:polynomials-are-suitHL-indexed-categories}
	Let $\catL : \catC ^\op\to\Cat $ be a strictly indexed category.  Then all the elements of $\Produ\catL $ are $\suitHL $-indexed categories.
\end{lemma}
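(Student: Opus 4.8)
The plan is to prove the statement by structural induction on the inductive definition of $\Produ\catL$, checking at each stage the two clauses required of a $\suitHL$-indexed category (Definition~\ref{def:suitHL-indexed category and indexed functor}): that the base category is an object of $\mnPoly_\catC$ (condition $\suitHL$1) and that every fibre is an object of $\mnPoly_\catL$ (condition $\suitHL$2). Since $\Produ\catL$ is generated by exactly three clauses — the terminal indexed category $\terminal$, the indexed category $\catL$ itself, and closure under the strict product $\prodstrict$ — it suffices to verify the two base cases and the single inductive step.

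For the base cases, I would first treat $\terminal:\terminal\to\Cat$. Its base is the terminal category $\terminal$, which is an object of $\mnPoly_\catC$ by the object clause (O1) of Definition~\ref{def:basic-definition-munupolynomials}, and its unique fibre is the terminal category $\terminal$, an object of $\mnPoly_\catL$ by the object clause (O1) of Definition~\ref{def:mnpolyL}. I would then treat $\catL:\catC^\op\to\Cat$ itself: its base $\catC$ is an object of $\mnPoly_\catC$ by clause (O2) of Definition~\ref{def:basic-definition-munupolynomials}, and each fibre $\catL(W)$ for $W\in\catC$ is an object of $\mnPoly_\catL$ by clause (O3) of Definition~\ref{def:mnpolyL}. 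Hence both generators satisfy $\suitHL$1) and $\suitHL$2).

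For the inductive step, I would assume $\catL':\catD^\op\to\Cat$ and $\catL'':\catE^\op\to\Cat$ are $\suitHL$-indexed categories and show that $\catL'\prodstrict\catL''$ is one as well. Recalling that $\catL'\prodstrict\catL''$ has base $\catD\times\catE$ and fibre $\catL'(X)\times\catL''(Y)$ over $(X,Y)$, condition $\suitHL$1) holds because $\catD,\catE\in\mnPoly_\catC$ by the inductive hypothesis and $\mnPoly_\catC$ is closed under binary products of objects (clause O3 of Definition~\ref{def:basic-definition-munupolynomials}), giving $\catD\times\catE\in\mnPoly_\catC$. Likewise, condition $\suitHL$2) holds because for each $(X,Y)$ the fibres $\catL'(X)$ and $\catL''(Y)$ are objects of $\mnPoly_\catL$ by hypothesis and $\mnPoly_\catL$ is closed under binary products (clause O2 of Definition~\ref{def:mnpolyL}), so $\catL'(X)\times\catL''(Y)\in\mnPoly_\catL$.

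I do not expect a genuine obstacle: the whole argument reduces to the observation that the object-classes of $\mnPoly_\catC$ and of $\mnPoly_\catL$ are closed under finite products and contain the relevant generators ($\catC$, the $\catL(W)$, and $\terminal$). The only point that demands care is bookkeeping — matching the base and the fibres arising in each clause of $\Produ\catL$ to the correct object-formation rule in each of the two $\mnPoly$ definitions, and recalling the explicit description of $\catL'\prodstrict\catL''$ so that its fibres are correctly identified as the products $\catL'(X)\times\catL''(Y)$ rather than anything more exotic.
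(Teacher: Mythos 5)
Your proof is correct and follows essentially the same route as the paper's: both arguments proceed by induction over the three generating clauses of $\Produ\catL$, checking the two generators $\terminal$ and $\catL$ directly against the object clauses of $\mnPoly_{\catC}$ and $\mnPoly_{\catL}$, and handling $\catL'\prodstrict\catL''$ via closure of both object classes under binary products. Your version is slightly more explicit in citing the individual object-formation clauses, but the content is identical.
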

\begin{proof}
	The terminal indexed category $\terminal : \terminal\to\Cat $ is 
	a $\suitHL $-indexed category since $\terminal\in\mnPoly _{\catC } $ and $\terminal\in\mnPoly _{\catL } $.
	Furthermore, $\catL : \catC ^\op \to\Cat $ is a $\suitHL $-indexed category by the definition of $\mnPoly _{\catL }$. 
	
	Finally, if $\catL ': \catD ^\op \to\Cat $ and $\catL '' :\catE ^\op \to\Cat $ are $\suitHL $-indexed categories, then:
	\begin{enumerate}[--]
		\item we have that $\left( \catD, \catE \right)\in\mnPoly _{\catC }\times  \mnPoly _{\catC } $. Thus
		\begin{equation}\label{1lem:polynomials-are-suitHL-indexed-categories}
			\left( \catD \times \catE\right)\in  \mnPoly _{\catC } ;
		\end{equation}
		\item for any $\left( W, W'\right) \in \catD\times\catE $, 
		the categories $\catL ' ( W ) $ and $\catL '' (W ') $ are objects of $\mnPoly _{\catL} $. Thus
		\begin{equation}\label{2lem:polynomials-are-suitHL-indexed-categories}
			\catL '\prodstrict \catL '' \left(  W, W' \right) = \catL ' (W) \times\catL '' (W ') \in \mnPoly _{\catL } .  
		\end{equation} 
	\end{enumerate}
	By \eqref{1lem:polynomials-are-suitHL-indexed-categories} and \eqref{2lem:polynomials-are-suitHL-indexed-categories}, we conclude that
	$\catL '\prodstrict \catL '' : \left( \catD \times \catE\right) ^\op \to \Cat $ 
	is a $\suitHL$-indexed category.
\end{proof}

\begin{lemma}\label{lem:the-bijection-between-mnPoly-and-indexedcategories}
Let $\catL : \catC ^\op \to \Cat $ be a strictly indexed category.
The function 
\begin{equation} 
\inversebijpoly : \objects{\left(\mnPoly _{\GrothC\catL }\right)}\to\Produ  \catL 
\end{equation}
 inductively defined by
\ref{def:delta-1}, \ref{def:delta-2} and \ref{def:delta-3} is a bijection.
\begin{enumerate}[$\inversebijpoly $1. ]
\item terminal respecting:
$\inversebijpoly \left( \terminal\right) \ceq  \left( \terminal : \terminal \to\Cat \right)$;\label{def:delta-1}
\item basic element: $ \inversebijpoly \left(  \GrothC\catL\right)  \ceq \left( \catL : \catC ^\op\to\Cat\right) $;\label{def:delta-2} 
\item product respecting: given $\left( \catD , \catD '\right) \in \mnPoly _{\GrothC\catL }\times \mnPoly _{\GrothC\catL } $,
$$ \inversebijpoly \left( \catD\times\catD ' \right) \ceq \inversebijpoly \left( \catD\right) \prodstrict\inversebijpoly \left( \catD ' \right) . $$\label{def:delta-3}
\end{enumerate}
\end{lemma}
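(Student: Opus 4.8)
The plan is to produce the inverse of $\inversebijpoly$ explicitly --- it is the Grothendieck (total-category) construction --- and then to verify that the two maps are mutually inverse by structural induction. The point is that both $\sobjects{\mnPoly_{\GrothC\catL}}$ and $\Produ\catL$ are, by construction, the smallest collections generated by two base elements and closed under one binary operation: on the left the generators are $\terminal$ and $\GrothC\catL$ (clauses O1) and O2) of Definition~\ref{def:basic-definition-munupolynomials}) with the operation $\times$ (clause O3)); on the right the generators are the terminal indexed category and $\catL$ with the operation $\prodstrict$. The map $\inversebijpoly$ merely relabels the generators and commutes with these operations, so its inverse must do the same.

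First I would define $\bijpoly : \Produ\catL \to \sobjects{\mnPoly_{\GrothC\catL}}$ by the mirror-image recursion $\bijpoly(\terminal:\terminal\to\Cat) \ceq \terminal$, $\bijpoly(\catL) \ceq \GrothC\catL$, and $\bijpoly(\catL'\prodstrict\catL'') \ceq \bijpoly(\catL')\times\bijpoly(\catL'')$, and check that it indeed lands in $\sobjects{\mnPoly_{\GrothC\catL}}$: the base values are objects by O1) and O2) of Definition~\ref{def:basic-definition-munupolynomials}, and the product clause preserves membership by O3). Conceptually $\bijpoly$ sends an indexed category in $\Produ\catL$ to its total category; what makes the recursive product clause coincide with this conceptual description is the identification $\int(\catL'\prodstrict\catL'')\cong(\int\catL')\times(\int\catL'')$ recalled in \S\ref{sec:appendix-SplitFibrations-vs-IndedxedCategories}, i.e.\ that the total category of a $\prodstrict$ is the product of the total categories.

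Next I would establish $\bijpoly\circ\inversebijpoly = \id$ by induction over $\sobjects{\mnPoly_{\GrothC\catL}}$: the two generator cases are immediate from clauses \ref{def:delta-1} and \ref{def:delta-2} together with the matching clauses of $\bijpoly$, while for a product $\catD\times\catD'$ one computes $\bijpoly(\inversebijpoly(\catD\times\catD')) = \bijpoly(\inversebijpoly(\catD))\times\bijpoly(\inversebijpoly(\catD'))$ using clause \ref{def:delta-3} and concludes by the inductive hypotheses. The opposite composite $\inversebijpoly\circ\bijpoly = \id$ is proved in exactly the same way by induction over $\Produ\catL$. Together these two identities show $\inversebijpoly$ is a bijection with inverse $\bijpoly$.

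The main obstacle is not the inductions, which are pure bookkeeping, but the well-definedness that underlies them: one must know that $\sobjects{\mnPoly_{\GrothC\catL}}$ and $\Produ\catL$ are genuinely free on their signatures, so that every element is uniquely either a generator or a product of uniquely determined factors. This unique readability is precisely what legitimises the inductive definition of $\inversebijpoly$ in the statement and what makes the structural inductions above valid. I would therefore make explicit that within each collection the two generators are distinct and not obtainable as a product, and that the binary operation determines its factors, so that no element carries two incompatible parses; with this in hand the remainder of the argument is formal.
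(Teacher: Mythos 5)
Your proposal is correct and follows essentially the same route as the paper: the paper likewise defines the inverse $\bijpoly$ by the mirror-image recursion ($\terminal\mapsto\terminal$, $\catL\mapsto\GrothC\catL$, $\prodstrict\mapsto\times$, i.e.\ the Grothendieck construction) and concludes that both composites $\bijpoly\circ\inversebijpoly$ and $\inversebijpoly\circ\bijpoly$ are identities directly from the inductive definitions of the two sets. Your closing remark on unique readability of the generated collections is a point the paper leaves implicit, but it elaborates rather than alters the argument.
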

\begin{proof} 
The inverse of $\inversebijpoly  $ is clearly given by the Grothendieck construction. More precisely, the inverse is denoted herein by $\bijpoly  $ and can be inductively defined as follows:
\begin{enumerate}[$\bijpoly $1) ]
\item terminal respecting:
$\bijpoly \left( \terminal : \terminal \to\Cat \right) \ceq  \terminal $;
\item basic element: $ \bijpoly \left(  \catL : \catC ^\op\to\Cat\right)  \ceq \GrothC\catL $; 
\item product respecting: given $\left( \catL ' : \catD ^\op\to\Cat  , \catL '' : \catE^\op \to\Cat\right) \in \Produ \catL \times \Produ\catL $,
$$ \bijpoly \left( \catL '\prodstrict\catL '' \right) \ceq \bijpoly \left( \catL '\right) \times\bijpoly \left( \catL '' \right) . $$
\end{enumerate}
By the inductive definitions of the sets $\objects\left(\mnPoly _{\GrothC\catL }\right) $ and
$\Produ\catL $, we conclude that 
\begin{center}
$\bijpoly\circ \inversebijpoly = \id _{\objects\left( \mnPoly _{\GrothC\catL }\right)} $\qquad and\qquad $\inversebijpoly\circ \bijpoly = \id _{\Produ  \catL } $.
\end{center}
\end{proof}

\begin{lemma}\label{lem:subcategory-suitHL-produL}
	Let $\catL : \catC ^\op \to \Cat $ be a strictly indexed category. The objects of  $ \mnPoly _{\GrothC\catL } $ with the functors that are induced by 
	$\suitHL $-indexed functors between objects of $\Produ\catL$ form a subcategory of $\Cat $.
\end{lemma}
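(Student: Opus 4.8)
The plan is to verify directly the two defining closure properties of a subcategory of $\Cat$: closure under identities and under composition. The bookkeeping is organized by the correspondence between strictly indexed functors and split fibration functors, i.e.\ the functoriality of the Grothendieck construction $\int : \GrothInd\to\GrothSpFib$. By Lemma~\ref{lem:the-bijection-between-mnPoly-and-indexedcategories}, every object $\catD$ of $\mnPoly_{\GrothC\catL}$ is the total category $\int\inversebijpoly(\catD)$ of a uniquely determined element $\inversebijpoly(\catD)\in\Produ\catL$, and by Lemma~\ref{lem:polynomials-are-suitHL-indexed-categories} this element is a $\suitHL$-indexed category. A morphism in our candidate collection is, by definition, a functor $F$ of the form $\int(\iH,h)$ for some $\suitHL$-indexed functor $(\iH,h)$ between the associated objects of $\Produ\catL$.

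First I would treat identities. Given an object $\catD$, with associated $\suitHL$-indexed category $\catL'=\inversebijpoly(\catD)$, I claim the identity strictly indexed functor $(\id,\id)$ on $\catL'$ is a $\suitHL$-indexed functor. Indeed, its base functor is the identity on the base of $\catL'$, which lies in $\mnPoly_\catC$ since that is a subcategory of $\Cat$ and hence contains identities; and for each object $X$ of the base, the fiber component is the identity functor on $\catL'(X)$, an object of $\mnPoly_\catL$, which is likewise a subcategory and so contains this identity. All conditions of Definition~\ref{def:suitHL-indexed category and indexed functor} are thus met. Since $\int$ sends the identity indexed functor to the identity split fibration functor, $\id_\catD=\int(\id,\id)$ belongs to the collection.

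Next I would treat composition. Suppose $F:\catD\to\catD'$ and $G:\catD'\to\catD''$ are induced by $\suitHL$-indexed functors $(\iH,h)$ and $(\iE,e)$ between the corresponding objects of $\Produ\catL$. Functoriality of $\int$ gives that $G\circ F$ is the split fibration functor induced by the composite strictly indexed functor, whose base functor is the composite of the two base functors and whose natural transformation is $(h_{\iE^{\op}})\cdot e$ as in the composition rule for strictly indexed functors. It remains to check this composite is again $\suitHL$: its base functor is a composite of morphisms of $\mnPoly_\catC$, hence again a morphism of $\mnPoly_\catC$ by closure under composition in that subcategory; and at each object $X$, the fiber component unwinds to $h_{\iE(X)}\circ e_X$, a composite of morphisms of $\mnPoly_\catL$, hence a morphism of $\mnPoly_\catL$ for the same reason. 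The source and target $\suitHL$-indexed categories are inherited from $(\iH,h)$ and $(\iE,e)$, so $G\circ F=\int\bigl((\iH,h)\circ(\iE,e)\bigr)$ is in the collection.

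The only step requiring care is the composition computation: one must unwind the composition formula $(h_{\iE^{\op}})\cdot e$ to exhibit the fiber components as $h_{\iE(X)}\circ e_X$ and confirm that whiskering $h$ along $\iE$ and postcomposing with $e$ keeps each component inside $\mnPoly_\catL$. No genuine obstacle arises here, however: since $\mnPoly_\catC$ and $\mnPoly_\catL$ are by their very definitions subcategories of $\Cat$, closure under composition is immediate, and the entire argument reduces to transporting these closure properties through the indexed-functor/split-fibration-functor correspondence.
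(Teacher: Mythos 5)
Your proof is correct and takes essentially the same route as the paper's: both check identities and composition directly, using the bijection of Lemma~\ref{lem:the-bijection-between-mnPoly-and-indexedcategories}, the fact that elements of $\Produ\catL$ are $\suitHL$-indexed categories (Lemma~\ref{lem:polynomials-are-suitHL-indexed-categories}), and closure of $\mnPoly_{\catC}$ and $\mnPoly_{\catL}$ under identities and composition, transported along the indexed-functor/split-fibration correspondence. One cosmetic slip: with your labelling ($F$ induced by $(\iH,h)$, $G$ by $(\iE,e)$), the composite $G\circ F$ is induced by $\bigl(\iE\circ\iH,\,(e_{\iH^{\op}})\cdot h\bigr)$ with fiber components $e_{\iH(X)}\circ h_X$, not $(h_{\iE^{\op}})\cdot e$ and $h_{\iE(X)}\circ e_X$ as you wrote (you quoted the paper's composition formula without adapting it to your own naming), but this does not affect the validity of the argument.
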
	
\begin{proof}
	Let $\catA $ be an object of $\mnPoly _{\GrothC\catL }  $.
	By Lemma \ref{lem:the-bijection-between-mnPoly-and-indexedcategories}, we have the associated strictly indexed category
	\begin{equation*}
		\inversebijpoly\left( \catA \right) = \catL ':  \catD ^\op  \to \Cat .
	\end{equation*}
	The identity $\id _{\catA } $ on $\catA $ clearly comes from the identity 
	$$\left( \id_{\catD} :  \catD\to\catD , \id   \right) : \catL '\to\catL ' $$
	which is a $\suitHL $-indexed category, since $\catL '$ is a $\suitHL$-indexed category
	by Lemma \ref{lem:polynomials-are-suitHL-indexed-categories}.
	
	Finally, if $E:\catA \to \catA ' $ and $H : \catA '\to\catA '' $ are functors induced, respectively, 
	by the $\suitHL $-indexed functors 
	\begin{center}
		$\left( \iE , e \right) : \catL ' \to\catL '' $ \qquad and \qquad $\left( \iH , h \right) : \catL '' \to\catL ''' $,
	\end{center} 
	then $H\circ E $ is induced by the composition 
	$$ \left( \iH\circ \iE , h_{\iE ^{\op}} \circ e    \right) $$
	which is a $\suitHL$-indexed functor as well, since $\iH $, $\iE $ are morphisms of $\mnPoly _{\catC} $ and, for any $W\in\catD $,  $h_{\iE (W) } $ and   $e _W $ are morphisms of $\mnPoly _ \catL $. 
\end{proof}

\begin{definition}
	We denote by $\overline{\mnPoly _{\GrothC \catL } } $ the category defined in Lemma \ref{lem:subcategory-suitHL-produL}.
\end{definition}

\begin{therm}\label{theo:Inductive-Coinductive-Completeness-TotalCategory}
	Let $\catL : \catC ^\op \to \Cat $ be a $\suitL$-indexed category.
	The category $\GrothC\catL $ has $\mu\nu$-polynomials.
\end{therm} 
\begin{proof}
By Theorem \ref{theo:split-fibration-functor-munu}, since $\catL$ is a $\suitL $-indexed category, any endomorphism $E : \GrothC \catL\to \GrothC\catL $ of the subcategory 
$\overline{\mnPoly _{\GrothC \catL } } $ has an initial algebra and a terminal coalgebra. 
Therefore, in order to complete the proof, it is enough to show that the morphisms of
$\overline{\mnPoly _{\GrothC \catL } } $ satisfy the inductive properties of Definition \ref{def:basic-definition-munupolynomials}.

Let $\catA $, $\catA '$ and $\catA '' $ be objects of $\mnPoly _{\GrothC\catL }  $.
By Lemma \ref{lem:the-bijection-between-mnPoly-and-indexedcategories}, we have the associated strictly indexed categories
\begin{eqnarray*}
 \inversebijpoly\left( \catA \right) = \catL ':  &\catD ^\op & \to \Cat ,\\ 
 \inversebijpoly\left( \catA '\right) = \catL '' : & \catE ^\op & \to\Cat ,\\  
 \inversebijpoly\left( \catA '' \right)  = \catL ''': &\catF ^\op & \to\Cat .
\end{eqnarray*}
Recall that $\catL ', \catL ''$ and $\catL ''' $ are $\suitHL$-indexed categories by 
Lemma \ref{lem:polynomials-are-suitHL-indexed-categories}.

\begin{enumerate}[(A)]
	\item The unique functor $\catA\to\terminal $ is induced by the unique 
	indexed functor 
		$$ \left( \catD\to\terminal ,\, \left( \catL '\left(W\right) \to\terminal   \right) _{W\in \catD }\right) $$
	between $\catL $ and the terminal indexed category $\terminal : \terminal\to\Cat $. Since 
	 $	\catD\to\terminal $ is a morphism of  $\mnPoly _{\catC } $ and, for any $W \in\catD  $,  $ \catL '\left(W\right) \to\terminal $  is a morphism of $\mnPoly _{\catL } $, we have that the unique indexed functor is a $\suitHL$-indexed functor.	
	\item Given a functor $ F : \terminal\to\catA\cong \GrothC\catL ' $, it corresponds to
	an object $\left( W\in\catD , x\in\catL '(W) \right) \in \GrothC\catL ' $. In other words, $F$ is induced by the strictly indexed functor
	$$\left( W: \terminal \to \catD , w : \terminal \to\catL '(W)\right)  $$
	in which $W$ and $w$ denote the obvious functors. Since any functor $\terminal \to \catD $ is a morphism of $\mnPoly _{\catC} $ and, for any $W\in\catD $, any functor $\terminal \to\catL '(W) $ is a morphism of   $\mnPoly _{\catL} $, we have that $\left( W: \terminal \to \catD , w : \terminal \to\catL '(W)\right)  $ is a $\suitHL $-indexed functor.
	\item By Proposition \ref{prop:grothendieck-products-covariant}, the binary product $\times : \GrothC\catL\times\GrothC\catL\to\GrothC\catL $ is induced by the 
	strictly indexed functor 
	$$\left( \times : \catC\times\catC\to\catC , \, p \right) : \catL\prodstrict\catL \to \catL  $$
	in which $p_{\left( W, W'\right) } $ is given by the composition	
\begin{equation*}
	\begin{tikzpicture}[x=0.5cm, y=0.5cm]
		\node (a) at (0,0) {$\catL \left( W\right) \times \catL \left( W '\right) $};
		\node (b) at (14,0) {$\catL \left( W\times W ' \right) \times \catL \left( W\times W ' \right) $ };
		\node (c) at (14,-4) {$\catL \left( W\times W ' \right) $};
		\draw[->] (a)--(b) node[midway,above] {$\catL\left( \pi _1 \right) \times \catL\left( \pi _2 \right)   $};
		\draw[->] (b)--(c) node[midway,right] {$ + $};
		\draw[->] (a)--(c) node[midway,below left] {$p_{\left( W, W'\right) }  $};
	\end{tikzpicture} 
\end{equation*}  

It remains to show that $\left( \times : \catC\times\catC\to\catC , \, p \right) $
is a $\suitHL$-indexed functor. Since $\times : \catC\times\catC\to\catC $ is a morphism of $\mnPoly _{\catC} $, it is enough to prove that $p_{(W, W')} $ is a morphism of 
$\mnPoly _  {\catL} $ for any 
$\left( W, W' \right)\in \catC\times\catC $.

Since, for any $\left( W, W' \right)\in \catC\times\catC $, we have that
\begin{center}
$\pi _{\catL \left( W \right) } : \catL \left( W \right) \times \catL \left( W '\right)\to \catL \left( W \right) $, \quad
$\pi _{\catL \left( W '\right) } : \catL \left( W \right) \times \catL \left( W '\right)\to \catL \left( W '\right) $
\end{center}
\begin{center} 
$\catL \left( \pi _1\right) :  \catL \left( W \right)\to \catL \left( W\times W' \right) $,  \quad
$\catL \left( \pi _2\right) :  \catL \left( W '\right)\to \catL \left( W\times W' \right) $  
\end{center}
are morphisms of $\mnPoly _{\catL} $, we conclude that 
$$\left( \catL\left( \pi _1 \right)\circ  \pi _{\catL \left( W \right) }  , \catL\left( \pi _2 \right)\circ \pi _{\catL \left( W '\right) }  \right)  
= \catL\left( \pi _1 \right) \times \catL\left( \pi _2 \right)   
$$
is a morphism of $\mnPoly _{\catL} $. Thus, since 
$\times : \catL\left( W\times W '\right)\times \catL\left( W\times W '\right)\to \catL\left( W\times W '\right) $ is a morphism of $\mnPoly _{\catL} $ as well, we conclude that 
the composition $p_{(W,W')} $
is a morphism of
 $\mnPoly _{\catL} $.
 \item By Corollary \ref{coro:cocartesianstructure-in-the-cocartesian-csategory}, the coproduct $\sqcup :\GrothC\catL\times \GrothC\catL\to\GrothC\catL $
 is induced by the strictly indexed functor
	$$\left( \sqcup : \catC\times\catC\to\catC , \, s \right) : \catL\prodstrict\catL \to \catL  $$
 in which $s_{\left( W, W'\right) } $ is given by the functor
 \begin{equation*}
 	\equivalenceextensive ^{(W,W ')} : \catL (W)\times \catL (X)\to \catL (W\sqcup  X) 
 \end{equation*}
 of the extensive structure (see \eqref{eq:equivalence-extensive-indexed-category})
 is a morphism of $\mnPoly _ \catL $. 
 
 We have that $\left( \sqcup : \catC\times\catC\to\catC , \, s \right) : \catL\prodstrict\catL \to \catL  $ is a $\suitHL$-indexed functor, since $\sqcup : \catC\times\catC\to\catC $ is a morphism of $\mnPoly _{\catC} $ and $\equivalenceextensive ^{(W,W ')} $ is a morphism of $\mnPoly _{\catL} $, for any $\left( W, W'\right)\in\catC\times\catC $.

	\item The projections 
	\begin{equation*}
		\pi _1 : \catA\times \catA ' \to \catA ,\qquad   \pi _2 : \catA\times \catA ' \to \catD '
	\end{equation*} 
    are, respectively, induced by the strictly indexed functors 
    \begin{eqnarray*}
    	\left( \pi _1 : \catD\times\catE\to \catD ,\, \left( \pi_1 : \catL (W)\times\catL (W')\to \catL (W) \right) _{\left( W, W'\right) \in \catD\times \catE }  \right) : &\catL '\prodstrict \catL '' & \to \catL '\\
	\left( \pi _2 : \catD\times\catE\to \catE ,\, \left( \pi_2 : \catL (W)\times\catL (W')\to \catL (W ') \right) _{\left( W, W'\right) \in \catD\times \catE }  \right) : &\catL '\prodstrict \catL ''& \to \catL ''
	\end{eqnarray*}
which are $\suitHL $-indexed functors, since $$\pi _1 : \catD\times\catE\to \catD,\qquad \pi _2 : \catD\times\catE\to \catE $$
are morphisms of $\mnPoly _{\catC } $  and, for any $\left( W, W'\right) \in \catD\times \catE$,  
$$ \pi_1 : \catL (W)\times\catL (W')\to \catL (W)  , \qquad \pi_2 : \catL (W)\times\catL (W')\to \catL (W ')
$$
are morphisms of $\mnPoly _{\catL } $. 	
	\item Assuming that $E: \catA\to\catA ' $ and $J : \catA\to\catA '' $ are functors induced by the  $\suitHL $-indexed functors	
	\begin{center}
$\left( \iE , e : \catL '\rightarrow \catL '' \circ \iE ^\op \right) : \catL ' \to \catL ''  $ \qquad and \qquad 	$\left( \iJ , j : \catL '\rightarrow \catL ''' \circ \iJ ^\op \right) : \catL ' \to \catL '''  $,	
	\end{center}	
the functor $ (E, J ) : \catA\to \catA '\times \catA ''  $ is induced by the 
strictly indexed functor
$$\left( \left( \iE , \iJ \right) , (e,j)\right) : \catL '\to \catL ''\prodstrict \catL ''' . $$ which is a $\suitHL$-indexed functor as well since: 
		\begin{enumerate} [--]
			\item $\iE$, $\iJ$ are morphisms of $\mnPoly _{\catC }$ and, hence, so is $\left( \iE , \iJ\right) $; 
			\item $e_W, j_W $ are morphisms of $\mnPoly _{\catL} $ for any $W\in \catD $ and, hence, so is $\left(e_W, j_W \right) $. 
		\end{enumerate}
\end{enumerate}
Finally, assuming that $H : \catA\times\GrothC\catL \to\GrothC\catL $ is a functor induced
by a $\suitHL $-functor
$$\left( \iH , h\right) : \catL '\prodstrict\catL \to \catL , $$
we have, by Theorem \ref{theo:split-fibration-functor-munu}, that
\begin{enumerate}[(A)]
	\setcounter{enumi}{6}
	\item $\mu H $ is induced by the $\suitHL $-indexed functor $$\left( \mu\iH :  \catE\to \catD  , \,  \mu\left( \ih _ {(-)} \right)  \right)  : \catL '\to \catL  .$$
	\item $\nu H $ is induced by the $\suitHL $-indexed functor $$\left(\nu \iH :  \catE\to \catD  , \, \nu\left( \ihnu _ {(-)} \right) \right)  : \catL '\to \catL  .$$
\end{enumerate} 
\end{proof}

Codually, we have:

\begin{therm}\label{theo:Inductive-Coinductive-Completeness-TotalCategory-contravariant}
	Let $\catL : \catC ^\op \to \Cat $ be a $\suitL$-indexed category.
	The category $\GrothC\catL ^\op $ has $\mu\nu$-polynomials.
\end{therm}

\subsection{$\Sigma$-bimodel for function types, inductive and coinductive types}
By Theorem \ref{theo:distributive-property-total-category}, the \textit{Grothendieck construction of any $\Sigma$-bimodel for inductive and coinductive types is distributive}. Moreover, we get the closed structure if 
$\catL $ satisfies the conditions of \ref{subsect:closed-structure-sigmatypes-Grothendieck-Construction}.
More precisely:

\begin{corollary}
	Let $\catL : \catC ^\op \to \Cat $  be a $\Sigma$-bimodel for inductive and coinductive types. The categories $\GrothC\catL  $  and $\GrothC\catL ^\op $ are distributive categories with $\mu\nu$-polynomials.
\end{corollary}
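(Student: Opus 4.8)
The plan is to read off the result by assembling the two independent strands of the Grothendieck-construction theory developed above, observing first that a \emph{$\Sigma$-bimodel for inductive and coinductive types} is, by Definition~\ref{def:suitL-indexedcategory}, precisely a $\suitL$-indexed category. I would begin by recording what the six clauses of that definition supply: $\catL$ is strictly indexed; $\catC$ has $\mu\nu$-polynomials (Definition~\ref{def:basic-definition-munupolynomials}), hence in particular chosen finite products and finite coproducts; $\catL$ has strictly indexed finite biproducts (Definition~\ref{def:strictly-indexed-biproducts}); $\catL$ is extensive (Definition~\ref{definition:extesive-indexed-categories}); $\catL$ respects both initial algebras and terminal coalgebras; and $\mnPoly_\catL$ is closed under the $\mu$- and $\nu$-operators. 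These are exactly the hypotheses consumed by the structural theorems, so the proof reduces to matching hypotheses and citing them, with no remaining computation.

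For the \emph{$\mu\nu$-polynomial} assertion there is nothing further to verify: Theorem~\ref{theo:Inductive-Coinductive-Completeness-TotalCategory} takes a $\suitL$-indexed category as input and concludes that $\GrothC\catL$ has $\mu\nu$-polynomials, and its codual Theorem~\ref{theo:Inductive-Coinductive-Completeness-TotalCategory-contravariant} does the same for $\GrothC\catL^\op$. I would simply invoke both.

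For the \emph{distributivity} assertion I would first check that a $\suitL$-indexed category is in particular a $\Sigma$-bimodel for sum and tuple types, the hypothesis of Theorem~\ref{theo:distributive-property-total-category}. Extensivity together with the existence of initial and terminal objects in each fibre $\catL(W)$ (which the semi-additive biproduct structure furnishes, the biproduct being at once $\initial$ and $\terminal$) makes $\catL$ a $\Sigma$-bimodel for sum types; separately, the strictly indexed finite biproducts give strictly indexed finite products and coproducts, and $\catC$ carries chosen finite products from its $\mu\nu$-polynomial structure, so $\catL$ is also a $\Sigma$-bimodel for tuple types. Granting that $\catC$ is distributive, Theorem~\ref{theo:distributive-property-total-category} then yields distributivity of $\GrothC\catL$, and the codual Theorem~\ref{theo:distributive-property-total-category-reverse} yields it for $\GrothC\catL^\op$.

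The only hypothesis not literally delivered by the $\suitL$-definition, and hence the point to treat with care, is the distributivity of the base $\catC$. This is not a genuine obstacle: in every model we consider $\catC$ is an extensive category (indeed a free cocompletion under finite coproducts), and every extensive category is distributive by \citep[Proposition~4.5]{MR1201048}; alternatively, one records it as the standing requirement that $\catC$ model \emph{distributive} sum types. With this point noted, the four cited theorems assemble the corollary directly.
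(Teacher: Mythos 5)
Your proposal is correct and is essentially the paper's own argument: the paper gives this corollary no proof beyond the sentence preceding it, which invokes Theorem~\ref{theo:distributive-property-total-category} for distributivity, while the $\mu\nu$-polynomial claim is exactly Theorems~\ref{theo:Inductive-Coinductive-Completeness-TotalCategory} and~\ref{theo:Inductive-Coinductive-Completeness-TotalCategory-contravariant}. Your reduction of the distributivity part --- checking that a $\suitL$-indexed category is a $\Sigma$-bimodel for sum and tuple types, with the zero object of each semi-additive fibre serving as both the initial and the terminal object --- is the intended reading and is correct.

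The hypothesis you single out, distributivity of $\catC$, is a genuine gap in the corollary as stated, and your caution is warranted: it is \emph{not} derivable from Definition~\ref{def:suitL-indexedcategory}. Indeed, the category $\Set_{\ast}$ of pointed sets is locally presentable, hence has $\mu\nu$-polynomials by the paper's own Proposition~\ref{prop:locally-presentable-has-munupolynomials}, yet it is not distributive: writing $\sqcup$ for the wedge (the coproduct of pointed sets), the canonical map $(A\times X)\sqcup(A\times Y)\to A\times(X\sqcup Y)$ sends the two copies of $(a,\ast)$, for $a$ a non-basepoint of $A$, to the same element, so it is not injective already for two-element pointed sets. Now take $\catL$ to be the strictly indexed category over $\Set_{\ast}$ that is constant at $\terminal$; this is extensive, has strictly indexed finite biproducts, respects initial algebras and terminal coalgebras, and satisfies clause $\suitL$6) trivially, so it is a $\Sigma$-bimodel for inductive and coinductive types, yet $\GrothC\catL\cong\Set_{\ast}$ is not distributive. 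Hence the corollary is false without an added hypothesis, and one of your two repairs is genuinely required: either assume $\catC$ distributive (or extensive, which would let one invoke Theorem~\ref{theo:extensivity-property-total-category} and its codual instead), or restrict to the paper's actual instances, where $\CSyn$ and $\Set$ are distributive and freely generated bases are even extensive, hence distributive by \citep[Proposition~4.5]{MR1201048}. On this point your write-up is more careful than the paper itself, which silently applies Theorem~\ref{theo:distributive-property-total-category} without the needed hypothesis on $\catC$.
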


\begin{corollary}\label{coro:suitL-indexed-category-total-category-munu}
	Let $\catL : \catC ^\op \to \Cat $  be a $\Sigma$-bimodel for inductive, coinductive and function types. The categories $\GrothC\catL  $  and $\GrothC\catL ^\op $ are closed categories with $\mu\nu$-polynomials.
\end{corollary}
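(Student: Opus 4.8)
The plan is to observe that the combined hypothesis of being a $\Sigma$-bimodel for inductive, coinductive \emph{and} function types simultaneously satisfies the hypotheses of two families of results already established, so that the conclusion follows by assembling them rather than by any new construction. First I would note that such a $\catL$ is in particular a $\Sigma$-bimodel for inductive and coinductive types, i.e.\ a $\suitL$-indexed category in the sense of Definition~\ref{def:suitL-indexedcategory}: its base $\catC$ has $\mu\nu$-polynomials, and $\catL$ has strictly indexed finite biproducts, is extensive, respects initial algebras and terminal coalgebras, and satisfies the $\mnPoly_\catL$ completeness condition. By the preceding Corollary this already guarantees that $\GrothC\catL$ and $\GrothC\catL^\op$ are distributive categories with $\mu\nu$-polynomials; concretely, the finite products come from Propositions~\ref{prop:grothendieck-products-covariant} and~\ref{theo:grothendieck-products-contravariant}, the finite coproducts from Corollaries~\ref{coro:cocartesianstructure-in-the-cocartesian-csategory} and~\ref{coro:cocartesian-structure-GrothCL-contravariant}, and the initial algebras and terminal coalgebras of $\mu\nu$-polynomials from Theorems~\ref{theo:Inductive-Coinductive-Completeness-TotalCategory} and~\ref{theo:Inductive-Coinductive-Completeness-TotalCategory-contravariant}.

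Next I would use that $\catL$ is also a $\Sigma$-bimodel for function types. As remarked immediately after that definition, such a $\catL$ is automatically a $\Sigma$-bimodel for tuple types, so the finite-product structure already in play is precisely the fibred product of Proposition~\ref{prop:grothendieck-products-covariant}. Applying Theorems~\ref{theo:grothendieck-ccc-covariant} and~\ref{theo:grothendieck-ccc-contravariant} then supplies exponentials $(X,x)\Rightarrow(Y,y)$ in $\GrothC\catL$ and in $\GrothC\catL^\op$ with respect to exactly these products. Hence each of $\GrothC\catL$ and $\GrothC\catL^\op$ is cartesian closed, and together with the distributive finite coproducts obtained in the first step, both are bicartesian closed. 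Combining this with the $\mu\nu$-polynomial structure of the first step — which is built over the same finite products and coproducts — yields that $\GrothC\catL$ and $\GrothC\catL^\op$ are closed categories equipped with chosen initial algebras and terminal coalgebras of all $\mu\nu$-polynomials, as claimed.

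Since this is an assembly of prior results, there is no deep obstruction; the one point requiring care is bookkeeping, namely checking that the chosen finite products over which the coproducts distribute and on which the $\mu\nu$-polynomials are constructed coincide with those used to form the exponentials. Because all of these are the canonical fibred products of Proposition~\ref{prop:grothendieck-products-covariant}, this amounts to verifying that no incompatible choices have been introduced, rather than a genuine mathematical difficulty. No constructions beyond those in the cited theorems are needed.
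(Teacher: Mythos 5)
Your proposal is correct and takes essentially the same route as the paper: the paper derives this corollary by combining the preceding corollary (that a $\Sigma$-bimodel for inductive and coinductive types yields distributive total categories with $\mu\nu$-polynomials, via Theorems~\ref{theo:Inductive-Coinductive-Completeness-TotalCategory}, \ref{theo:Inductive-Coinductive-Completeness-TotalCategory-contravariant} and \ref{theo:distributive-property-total-category}) with the exponentials of Theorems~\ref{theo:grothendieck-ccc-covariant} and \ref{theo:grothendieck-ccc-contravariant}, exactly as you do. Your extra remark that the biproduct hypothesis makes the product structures in the two families of results coincide is the right bookkeeping observation, which the paper leaves implicit.
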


\section{Linear $\lambda$-calculus as an idealised AD target language}\label{sec:target-language}
We describe a target language for our AD code transformations, a 
variant of the dependently typed enriched effect calculus \citep[Chapter 5]{vakar2017search}.
Its cartesian types, linear types, and terms are generated by the  
grammar of Fig. \ref{fig:sl-terms-types-kinds} and \ref{fig:tl-terms-types-kinds}, making the target language
a proper extension of the source language.
We note that we use a special symbol $\lvar$ for the unique linear identifier.
\begin{figure}[!ht]
  \framebox{\begin{minipage}{0.98\linewidth}
\input{tl-terms-types-kinds}
  \end{minipage}}
\caption{\label{fig:tl-terms-types-kinds} A grammar for the kinds, types and terms of the target language, extending that of Fig. \ref{fig:sl-terms-types-kinds}.}
\end{figure}
We introduce kinding judgements $\Delta\mid \Gamma\vdash \ty:\type$ and 
$\Delta\mid \Gamma\vdash \cty:\ltype$ for cartesian and linear types, where $\Delta=\tvar_1:\type,\ldots,\tvar_n:\type$ is a list of (cartesian) type identifiers and $\Gamma=\var_1:\ty_1,\ldots,\var_n:\ty_n$ is a list of identifiers $x_i$ with cartesian type $\tau_i$.
These kinding judgements are defined according to the rules displayed in 
Fig. \ref{fig:sl-kind-system} and \ref{fig:tl-kind-system}.
\begin{figure}[!ht]
\framebox{\begin{minipage}{0.98\linewidth}\noindent\hspace{-24pt}\input{tl-kind-system}\end{minipage}}
    \caption{Kinding rules for the AD target language that we consider on top of those of Fig. \ref{fig:sl-kind-system}, where our first rule specifies how kinding judgements of the source language imply kinding of types in the target language.
    Observe that, according to the second rule, type variables $\tvar$ from the kinding context $\Delta $ can  be used as a linear 
    type $\ltvar$.
    Note that we only consider the formation of $\Sigma$- and $\Pi$-types 
    and linear function types of non-parameterized types (shaded in grey).\label{fig:tl-kind-system}}\;
  \end{figure}

We use typing judgements $\Delta\mid\Gamma\vdash \trm:\ty$ and $\Delta\mid\Gamma;\lvar:\cty\vdash \trm[2]:\cty[2]$ for terms of well-kinded cartesian types $\Delta\mid\Gamma\vdash \ty:\type$ and linear type $\Delta\mid\Gamma\vdash \cty[2]:\ltype$, where $\Delta=\tvar_1:\type,\ldots,\tvar_n:\type$ is a list of cartesian type identifiers, $\Gamma=\var_1:\ty_1,\ldots,\var_n:\ty_n$ is a list of identifiers $x_i$ of well-kinded cartesian type $\Delta\mid \var_1:\ty_1,\ldots,\var_{i-1}:\ty_{i-1}\vdash\ty_i:\type$ and $\lvar$ is the unique linear identifier of well-kinded linear type $\Delta\mid\Gamma\vdash \cty:\ltype$.
Note that terms of linear type always contain the unique linear identifier $\lvar$ in the typing context.
These typing judgements are defined according to the rules displayed in 
Fig. \ref{fig:sl-type-system}, \ref{fig:tl-type-system1} and 
\ref{fig:tl-type-system2}.

We work with linear operations $\lop\in\LOp_{n_1,...,n_k;
n'_1,\ldots,n'_l}^{m_1,\ldots, m_r}$,
which are intended to represent functions that are linear (in the sense of 
respecting $\zero$ and $+$) in the last $l$ 
arguments but not in the first $k$.
To serve as a practical target language for the automatic derivatives of all programs from the source language, we work with the following linear operations:
for all $\op\in \Op_{n_1,...,n_k}^{m}$, 
\begin{align*}
&D\op\in \LOp_{n_1,...,n_k;n_1,....,n_k}^m
&\transpose{D\op } = \transpose{\left( D\op\right) }\in \LOp_{n_1,...,n_k;m}^{n_1,....,n_k}.
\end{align*}
We will use these linear operations $D\op$ and $\transpose{D\op}$ as the forward and reverse derivatives of the corresponding primitive operations $\op$\footnote{Nothing would stop us from defining the derivative of a primitive operations as a more general term, rather than a linear operation. In fact, that is what we considered in \citep{vakar2021chad,vakar2020reverse}.
However, we believe that treating derivatives of operations as linear operations slightly simplifies the development and is no limitation, seeing that we are free to implement linear operations as we please in a practical AD system.
}.
We 
 write
 $$
\LDomain{\lop}\defeq \reals^{n'_1}\t*\ldots\t* \reals^{n'_l}
\qquad\text{and}\qquad
\LCDomain{\lop}\defeq \reals^{m_1}\t*\ldots\t* \reals^{m_r}
$$
for $\lop\in\LOp_{n_1,...,n_k;
n'_1,\ldots,n'_l}^{m_1,\ldots, m_r}$.
\begin{figure}[!ht]
\framebox{\begin{minipage}{0.98\linewidth}\noindent\hspace{-24pt}\input{tl-type-system1}\end{minipage}}
    \caption{Typing rules for the AD target language that we consider on top of the rules of Fig. \ref{fig:sl-type-system} and \ref{fig:tl-type-system2}.\label{fig:tl-type-system1}}\;
  \end{figure}
  \begin{figure}[!ht]
    \framebox{\begin{minipage}{0.98\linewidth}\noindent\hspace{-24pt}\input{tl-type-system2}\end{minipage}}
        \caption{Typing rules for the AD target language that we consider on top of the rules of Fig. \ref{fig:sl-type-system} and \ref{fig:tl-type-system1}.\label{fig:tl-type-system2}}\;
      \end{figure}

  Fig.  \ref{fig:sl-equations} and \ref{fig:tl-equations} display the equational theory we consider for the terms and types, which we call $(\alpha)\beta\eta+$-equivalence.
  To present this equational theory, we define in Fig. \ref{fig:tl-types-functorial-action}, by induction, some syntactic sugar for the functorial action $\Delta,\Delta'\mid\Gamma;\lvar:\subst{\cty}{\sfor{\tvar}{\cty[2]}}\vdash \subst{\cty}{\sfor{\ltvar}{\lvar\vdash \trm}} :\subst{\cty}{\sfor{\tvar}{\cty[3]}}$ in argument $\ltvar$ 
of parameterized types $\Delta,\tvar:\type\mid\Gamma\vdash \cty:\ltype$ on terms $\Delta'\mid\Gamma;\lvar:\cty[2]\vdash\trm:\cty[3]$.
\begin{figure}[!ht]
  \framebox{\begin{minipage}{0.98\linewidth}\hspace{-24pt} 
\input{tl-types-functorial-action}
\end{minipage}}
\caption{\label{fig:tl-types-functorial-action} Functorial action $\Delta,\Delta'\mid\Gamma;\lvar:\subst{\cty}{\sfor{\ltvar}{\cty[2]}}\vdash \subst{\cty}{\sfor{\ltvar}{\lvar\vdash \trm}} :\subst{\cty}{\sfor{\ltvar}{\cty[3]}}$ in argument $\ltvar$ 
of parameterized types $\Delta,\tvar:\type\mid \Gamma\vdash \cty:\ltype$ on terms $\Delta'\mid\Gamma;\lvar:\cty[2]\vdash\trm:\cty[3]$ of the target language.}
\end{figure}%

  \begin{figure}[!ht]
    \framebox{\begin{minipage}{0.98\linewidth}\hspace{-24pt} \input{tl-equations}
   \end{minipage}}
   \caption{Equational rules for the idealised, linear AD language, which we use on top of the 
   rules of Fig. \ref{fig:sl-equations}. 
   In addition to standard $\beta\eta$-rules for $\copower{(-)}{(-)}$- and $\multimap$-types,
   we add rules making $(\zero,+)$ into a commutative monoid on the terms of 
   each linear type as well as rules which say that terms of linear types are homomorphisms in their linear variable.
   Equations hold on pairs of terms of the same type/types of the same kind.
   As usual, we only distinguish terms up to $\alpha$-renaming of bound variables.\label{fig:tl-equations}\;
  }
   \end{figure}

This target language can be viewed as 
defining a strictly indexed category 
$\LSyn:\CSyn^{op}\to \Cat$:
\begin{itemize}
    \item $\CSyn$ extends its full subcategory $\Syn$ with the newly added cartesian types; its objects are cartesian types and $\CSyn(\ty,\ty[2])$ consists of $(\alpha)\beta\eta$-equivalence classes of target language programs $\cdot\mid \var:\ty\vdash \trm:\ty[2]$.
    \item Objects of $\LSyn(\ty)$ are linear types $\cdot\mid\pvar:\ty\vdash \cty[2]:\ltype$ up to $(\alpha)\beta\eta+$-equivalence.
    \item Morphisms in $\LSyn(\ty)(\cty[2],\cty[3])$ are terms
    $\cdot\mid \var:\ty;\lvar:\cty[2]\vdash \trm:\cty[3]$ modulo $(\alpha)\beta\eta+$-equivalence.
    \item Identities in $\LSyn(\ty)$ are represented by the terms 
    $\cdot\mid \var:\ty;\lvar:\cty[2]\vdash\lvar:\cty[2]$.
    \item Composition of $\cdot\mid \var:\ty;\lvar:\cty[2]_1\vdash \trm:\cty[2]_2$
    and $\cdot\mid \var:\ty;\lvar:\cty[2]_2\vdash \trm[2]:\cty[2]_3$ in $\LSyn(\ty)$
    is defined as $\cdot\mid \var:\ty;\lvar:\cty[2]_1\vdash \letin{\lvar}{\trm}{\trm[2]}:\cty[2]_3$.
    \item Change of base $\LSyn(\trm):\LSyn(\ty)\to\LSyn(\ty')$ along 
    $(\cdot\mid\var':\ty'\vdash \trm:\ty)\in \CSyn(\ty',\ty)$ is defined  
    $\LSyn(\trm)(\cdot\mid\var:\ty;\lvar:\cty[2]\vdash \trm[2]:\cty[3])\defeq 
    \cdot\mid \var':\ty';\lvar:\cty[2]\vdash \letin{\var}{\trm}{\trm[2]}~:~\cty[3]
    $.
    \item All type formers are interpreted as one expects based on their notation,
    using introduction and elimination rules for the required structural isomorphisms.
\end{itemize}
\begin{corollary}\label{cor:categorical-structure-target-language}
  $\Sigma_{\CSyn}\LSyn$ and $\Sigma_{\CSyn}\LSyn^{op}$ are both bicartesian closed 
  categories with $\mu\nu$-polynomials.
  \end{corollary}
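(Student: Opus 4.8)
The plan is to derive the corollary purely from the abstract structure theorems of \S\ref{sec:grothendieck-constructions}, by checking that the syntactic strictly indexed category $\LSyn:\CSyn^{op}\to\Cat$ described above is simultaneously a $\Sigma$-bimodel for function types and a $\suitL$-indexed category in the sense of Definition~\ref{def:suitL-indexedcategory} (the latter already entails being a $\Sigma$-bimodel for sum and tuple types, since extensivity together with fibrewise biproducts provides finite coproducts, products and the required initial and terminal objects). Granting this, Corollary~\ref{coro:suitL-indexed-category-total-category-munu} yields that $\Sigma_{\CSyn}\LSyn$ and $\Sigma_{\CSyn}\LSyn^{op}$ are closed with $\mu\nu$-polynomials; Corollaries~\ref{coro:cocartesianstructure-in-the-cocartesian-csategory} and~\ref{coro:cocartesian-structure-GrothCL-contravariant} together with Propositions~\ref{prop:grothendieck-products-covariant} and~\ref{theo:grothendieck-products-contravariant} supply the finite (co)products, and Theorems~\ref{theo:distributive-property-total-category} and~\ref{theo:distributive-property-total-category-reverse} (or, more cheaply, the fact that the exponentials of Theorems~\ref{theo:grothendieck-ccc-covariant} and~\ref{theo:grothendieck-ccc-contravariant} force distributivity) complete the bicartesian picture. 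Thus no new categorical construction is needed: the content is the syntactic verification.

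Most of the required fibrewise structure can be read off directly from the grammar and $(\alpha)\beta\eta+$-theory of Figs.~\ref{fig:tl-terms-types-kinds}--\ref{fig:tl-equations}. The linear type formers $\lUnit$ and $\cty\t*\cty[2]$, the constant $\zero$ and the sum $\trm+\trm[2]$, subject to the commutative-monoid and linear-homomorphism equations, make each fibre $\LSyn(\ty)$ semi-additive, giving the strictly indexed finite biproducts of Definition~\ref{def:strictly-indexed-biproducts} (conditions~$\catL$B and~$\suitL$3). The linear $\Pi$- and $\Sigma$-type formers (power and copower) supply the $\Sigma$- and $\Pi$-types in $\LSyn$ (condition~$\catL$C), while the cartesian dependent $\Pi$- and $\Sigma$-types of $\CSyn$ provide the model $\catC'$ of dependent type theory with full, faithful, democratic comprehension required by condition~$\catL$A, and $\multimap$ together with linear $\lambda$-abstraction and application witnesses condition~$\catL$D. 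Each of these reduces to checking that the relevant introduction/elimination rules and equations encode exactly the stated universal property, which is routine bookkeeping.

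The substantive conditions — and where I expect the main difficulty — are extensivity (Definition~\ref{definition:extesive-indexed-categories}) and the requirement that $\LSyn$ respect initial algebras and terminal coalgebras (Definition~\ref{def:indexed-category-respecting-initial-algebras}, condition~$\suitL$5). Extensivity, i.e. that the canonical functor $\LSyn(\sqcup_{i=1}^n C_i)\to\prod_{i=1}^n\LSyn(C_i)$ is an isomorphism, is exactly the content of the linear case-distinction type former and its $\beta\eta$-equations: a linear type, and a linear term, in a context whose cartesian variable ranges over a variant type is determined up to $(\alpha)\beta\eta+$-equivalence by its branches, and this branchwise description is the inverse functor $\equivalenceextensive$. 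Respecting initial algebras and terminal coalgebras amounts to showing that change of base $\LSyn(\trm)$ — substitution into the cartesian parameter — strictly commutes with the linear type formers $\llfp{\ltvar}{\cty}$ and $\lgfp{\ltvar}{\cty}$ and transports the $\mathbf{roll}$/$\mathbf{unroll}$ structure accordingly. This is a capture-avoiding substitution lemma that has to be matched against the functorial-action definitions of Fig.~\ref{fig:tl-types-functorial-action} and the fold/generate equations of Fig.~\ref{fig:tl-equations}; its strictness is precisely what the abstract hypotheses demand, so this commutation is the delicate point of the argument.

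Finally, the remaining conditions of a $\suitL$-indexed category are straightforward: $\CSyn$ has $\mu\nu$-polynomials (condition~$\suitL$2) since it inherits the cartesian inductive and coinductive type formers of the source language of \S\ref{sec:source-language}, and condition~$\suitL$6 — existence of $\mu e$ and $\nu e$ for every endomorphism $e$ of $\mnPoly_{\LSyn}$ — is supplied directly by the linear inductive and coinductive type formers $\llfp{\ltvar}{\cty}$ and $\lgfp{\ltvar}{\cty}$, whose formation rules are available for all parameterized linear types. Because all the hypotheses in play are stable under the fibrewise dual (Remark~\ref{remark:sel-duality-munupolynomials} and the codual statements throughout \S\ref{sec:grothendieck-constructions}), the identical verification applies to $\LSyn^{op}$, so the contravariant theorems give the corresponding structure on $\Sigma_{\CSyn}\LSyn^{op}$, completing the proof.
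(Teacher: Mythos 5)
Your proposal is correct and follows essentially the same route the paper intends: the corollary is stated without an explicit proof precisely because it is meant to follow from checking that the syntactic indexed category $\LSyn:\CSyn^{op}\to\Cat$ is a $\Sigma$-bimodel for tuple, sum, function, inductive and coinductive types (the type formers and $(\alpha)\beta\eta+$-equations of \S\ref{sec:target-language} supplying biproducts, $\Sigma$-/$\Pi$-/$\multimap$-types, extensivity via case distinction, and strict compatibility of change of base with $\llfp{\ltvar}{\cty}$ and $\lgfp{\ltvar}{\cty}$), and then invoking Propositions~\ref{prop:grothendieck-products-covariant} and~\ref{theo:grothendieck-products-contravariant}, Corollaries~\ref{coro:cocartesianstructure-in-the-cocartesian-csategory} and~\ref{coro:cocartesian-structure-GrothCL-contravariant}, Theorems~\ref{theo:grothendieck-ccc-covariant} and~\ref{theo:grothendieck-ccc-contravariant}, and Corollary~\ref{coro:suitL-indexed-category-total-category-munu}. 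Your identification of the genuinely delicate points (strict extensivity from the linear case-distinction former, and $\LSyn$ respecting initial algebras and terminal coalgebras as a substitution lemma against Fig.~\ref{fig:tl-types-functorial-action} and Fig.~\ref{fig:tl-equations}) is exactly where the syntactic work lies.
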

In fact, $\LSyn:\CSyn^{op}\to \Cat$ is the initial $\Sigma$-bimodel of tuples, self-dual primitive types and primitive operations, function types, sum types and inductive and coinductive types,
in the sense that for any other such a $\Sigma$-bimodel $\catL:\catC^{op}\to \Cat$, we have a unique homomorphic indexed functor $(\overline{F}, f):(\CSyn,\LSyn)\to (\catC,\catL)$.

\begin{corollary}[Concrete semantics of the target language]\label{cor:universal-property-of-target-language}
	Let $\catL:\catC^{op}\to \Cat$ be a $\Sigma$-bimodel for inductive, coinductive and function types.
	Let 
	\begin{enumerate}[a)]
		\item for each $n$-dimensional array $\reals^n\in\Syn$, \label{UP-TARGET1}
		$\overline{F}\left( \reals^n\right)\in\objects\left( \catC\right) $;
		\item \label{UP-TARGET2} for each $n$-dimensional array $\reals^n\in\Syn$, 
		$$\underline{F}\left( \reals^n\right)  \in  \catL \left( \overline{F}\left( \reals^n\right) \right) ;$$
		\item \label{UP-TARGET3} for each primitive $\op\in \Op_{n_1,\ldots, n_k}^m$:
		\begin{enumerate}[i)]
			\item  $\overline{F}\left( \op \right)  :\RR^{n_1}\times\cdots\times \RR^{n_k}\To \RR^m$ is the map in $\Set$ corresponding to the operation that $\op$ intends to implement;
			\item $f_{\op}\in\catFV \left( \overline{F}\left( \reals^{n_1}\right) \times\cdots\times \overline{F}\left( \reals^{n_k}\right) \right) \left( \underline{F}\left( \reals^{n_1}\right) \times\cdots\times \underline{F}\left( \reals^{n_k}\right), \underline{F}\left( \reals^{m}\right)    \right) $ is the family of linear transformations that $D\op$ intends to implement;
			\item $f_\op ^t  \in \catFV \left( \overline{F}\left( \reals^{n_1}\right) \times\cdots\times \overline{F}\left( \reals^{n_k}\right) \right) \left(\underline{F}\left( \reals^{m}\right)  ,\underline{F}\left( \reals^{n_1}\right) \times\cdots\times \underline{F}\left( \reals^{n_k}\right)  \right)
			$ is the family of linear transformations that $\transpose{\left( D\op\right) }$ intends to implement.
		\end{enumerate}
	\end{enumerate}
be an assignment. 
We obtain canonical bicartesian closed functors that preserve $\mu\nu$-polynomials\\
\noindent
\begin{minipage}{.5\linewidth} 
	\begin{equation} \label{eq:canonically-induced-target-language-UP}
		F:  \Sigma_{\CSyn}\LSyn  \to\Sigma_{\catC}\catL 
	\end{equation} 
\end{minipage}
\begin{minipage}{.5\linewidth} 
	\begin{equation} \label{eq:canonically-induced-target-language-reverse-UP}
	\prescript{t}{}{F}:  \Sigma_{\catC }\LSyn^{op} \to\Sigma_{\CSyn}\catL ^{op}
	\end{equation} 
\end{minipage}\\ \\
that extend the assignment given by \ref{UP-TARGET1}, \ref{UP-TARGET2} and \ref{UP-TARGET3}.
\end{corollary}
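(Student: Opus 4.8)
The plan is to reduce the statement to the initiality of the target language $\LSyn:\CSyn^{op}\to\Cat$ among $\Sigma$-bimodels recorded in the paragraph preceding Corollary \ref{cor:categorical-structure-target-language}, and then to transport the resulting homomorphic indexed functor through the Grothendieck construction. First I would observe that the assignment \ref{UP-TARGET1}--\ref{UP-TARGET3} is exactly the data that turns $\catL$ into a $\Sigma$-bimodel of self-dual primitive types and operations in the sense of \S\ref{subsect:generators-Grothendieck-construction}: set $C_{\reals^n}\defeq \overline{F}(\reals^n)$ and $L_{\reals^n}=L'_{\reals^n}\defeq \underline{F}(\reals^n)$ (so the primitive types are self-dual), and read off $f_\op$, $g_\op$ and $g'_\op$ from the underlying map of $\op$ and from the forward and transposed derivative families supplied in \ref{UP-TARGET3} (in the concrete instance $\catL=\FVect$ these are literally the derivative families). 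Since by hypothesis $\catL$ is already a $\Sigma$-bimodel for function types (hence for tuple types), sum types, and inductive and coinductive types, it is now a $\Sigma$-bimodel of the entire signature of the source language.

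Next I would invoke the universal property of $\LSyn$: being the initial such $\Sigma$-bimodel, there is a unique homomorphic strictly indexed functor $(\overline{F},f):(\CSyn,\LSyn)\to(\catC,\catL)$ extending the assignment, i.e. one preserving the chosen biproducts, the extensive structure, the $\Pi$-, $\Sigma$- and $\multimap$-structure, and the (co)inductive structure on the nose. Applying the equivalence $\int:\GrothInd\to\GrothSpFib$ of \S\ref{sec:appendix-SplitFibrations-vs-IndedxedCategories} (and its codual $\int^{op}$) yields split fibration functors $F=\int(\overline{F},f):\Sigma_{\CSyn}\LSyn\to\Sigma_{\catC}\catL$ and $\prescript{t}{}{F}=\int^{op}(\overline{F},f):\Sigma_{\CSyn}\LSyn^{op}\to\Sigma_{\catC}\catL^{op}$, which by construction restrict to the given assignment on the generating objects and operations.

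It remains to check that $F$ and $\prescript{t}{}{F}$ are bicartesian closed and preserve $\mu\nu$-polynomials, and this is the main obstacle. The fibred products of Proposition \ref{prop:grothendieck-products-covariant} and the fibred coproducts of Corollary \ref{coro:cocartesianstructure-in-the-cocartesian-csategory} are given by explicit formulas in the indexed biproducts and the extensive equivalences $\equivalenceextensive^{(W,X)}$; since $(\overline{F},f)$ preserves exactly these ingredients, preservation of finite products and coproducts is immediate from the formulas. The genuinely delicate points are the exponentials and the (co)inductive types. For exponentials I would appeal to the explicit formula of Theorem \ref{theo:grothendieck-ccc-covariant}, whose value is built from $\Pi$-, $\Sigma$- and $\multimap$-types together with the auxiliary map $\evf$; because a homomorphic indexed functor preserves each of these and the Beck--Chevalley isomorphisms used in the derivation, $F$ sends the exponential formula to the exponential formula, even though the exponentials are \emph{not} fibred over the base, so this cannot be argued by mere compatibility with the projection functors.

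For $\mu\nu$-polynomials I would lean on Theorem \ref{theo:split-fibration-functor-munu}: there, initial algebras and terminal coalgebras in $\Sigma_\catC\catL$ are computed as $(\mu\overline{E},\mu\underline{e})$ from an initial algebra $\mu\overline{E}$ in the base and a fibrewise initial algebra $\mu\underline{e}$, and dually for coalgebras. Since $(\overline{F},f)$ preserves the base $\mu\nu$-polynomial structure, the change-of-base isomorphisms $\catL(\ind_{\overline{E}})^{-1}$, and the fibrewise (co)algebras, it carries these canonical (co)inductive types to the corresponding ones in $\Sigma_\catC\catL$; the bijection of Lemma \ref{lem:the-bijection-between-mnPoly-and-indexedcategories} then lets me run this preservation argument uniformly over all $\mu\nu$-polynomials by structural induction, mirroring exactly the existence proof of Theorem \ref{theo:Inductive-Coinductive-Completeness-TotalCategory}. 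The codual statements for $\prescript{t}{}{F}$ follow by the self-duality of the constructions (Remark \ref{remark:sel-duality-munupolynomials}).
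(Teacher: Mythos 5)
Your proposal is correct and follows essentially the same route as the paper, which states this corollary as an immediate consequence of the initiality of $\LSyn:\CSyn^{op}\to\Cat$ among $\Sigma$-bimodels of tuples, self-dual primitive types and operations, function, sum and (co)inductive types, asserted in the paragraph directly preceding it. Your filling-in of the details the paper leaves implicit --- reading the assignment as the primitive-type/operation data of such a $\Sigma$-bimodel, transporting the unique homomorphic indexed functor through the Grothendieck construction, and checking preservation of the (non-fibred) exponentials and of $\mu\nu$-polynomials via the explicit formulas of Theorems \ref{theo:grothendieck-ccc-covariant} and \ref{theo:split-fibration-functor-munu} --- matches the intended argument.
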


\section{Novel AD algorithms as source-code transformations}
\label{sec:AD-transformations}
By Corollary \ref{cor:categorical-structure-target-language}, $\Sigma_{\CSyn}\LSyn$ and $\Sigma_{\CSyn}\LSyn^{op}$
are both bicartesian 
closed categories with $\mu\nu$-polynomials.
By the universal property of $\Syn$ established in Corollary \ref{cor:universal-property-of-source-language},
we get unique $\mu\nu$-polynomial preserving bicartesian closed functors $\Dsyn{-}:\Syn\to\Sigma_{\CSyn}\LSyn$ and $\Dsynrev{-}:\Syn\to\Sigma_{\CSyn}\LSyn^{op}$ implementing source-code transformations 
for forward and reverse AD, respectively,
once we fix a compatible definition for the code transformations on primitive types $\reals^n$ and operations $\op$.
\begin{corollary}[CHAD]\label{cor:CHAD-definition}
    Once we fix the derivatives of the ground types and primitive operations of $\Syn $ by defining
        \begin{itemize}
            \item for each $n$-dimensional array $\reals^n\in\Syn$, 
             $\Dsyn{\reals^n }\defeq \left( \reals^n , \creals^n  \right) $ and $\Dsynrev{ \reals^n }\defeq\left( \reals^n , \creals^n  \right) $ in which we think of $\creals^n $ as the associated tangent and cotangent space;
            \item for each primitive $\op\in \Op_{n_1,\ldots, n_k}^m$,  $\Dsyn{\op } \defeq\left( \op , \mathsf{D}\op \right) $ and $\Dsynrev{\op }\defeq \left( \op , \transpose{\mathsf{D}\op} \right) $, in which $\mathsf{D}\op$ and $\transpose{\mathsf{D}\op}$  are the linear operations that implement the derivative and the transposed derivative of $\op $, respectively,
        \end{itemize}
        we obtain unique functors
    \begin{equation}
        \Dsyn{-}:\Syn\to \Sigma_{\CSyn}\LSyn , \qquad \qquad
        \Dsynrev{-}:\Syn\to \Sigma_{\CSyn}\LSyn^{op}
    \end{equation}	
        that extend these definitions such that  $\Dsyn{-}$ and $\Dsynrev{-}$  strictly preserve the bicartesian closed structure and the $\mu\nu$-polynomials. 
    \end{corollary}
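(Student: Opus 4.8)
The plan is to obtain both functors as immediate instances of the universal property of the source language. By Corollary~\ref{cor:categorical-structure-target-language}, both $\Sigma_{\CSyn}\LSyn$ and $\Sigma_{\CSyn}\LSyn^{op}$ are bicartesian closed categories with $\mu\nu$-polynomials, so each qualifies as a target category $\catC$ in Corollary~\ref{cor:universal-property-of-source-language}. Consequently, once we exhibit the data $\Dsyn{\reals^n}$, $\Dsyn{\op}$ (resp.\ $\Dsynrev{\reals^n}$, $\Dsynrev{\op}$) as a \emph{consistent} assignment in the sense of that corollary, we immediately get unique $\mu\nu$-polynomial preserving bicartesian closed functors extending it. The whole argument therefore reduces to a typing check: that each $\Dsyn{\op}$ really is a morphism of $\Sigma_{\CSyn}\LSyn$ between the prescribed (products of) objects, and dually for $\Dsynrev{\op}$.

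First I would treat the forward case. The object assignment $\Dsyn{\reals^n}\defeq(\reals^n,\creals^n)$ is visibly an object of $\Sigma_{\CSyn}\LSyn$. For $\op\in\Op_{n_1,\ldots,n_k}^m$, Proposition~\ref{prop:grothendieck-products-covariant} computes the domain product as $\left(\reals^{n_1}\times\cdots\times\reals^{n_k},\ \LSyn(\pi_1)(\creals^{n_1})\times\cdots\times\LSyn(\pi_k)(\creals^{n_k})\right)$, and a morphism of $\Sigma_{\CSyn}\LSyn$ out of it into $(\reals^m,\creals^m)$ is by definition a pair consisting of the Cartesian program $\op\colon\reals^{n_1}\times\cdots\times\reals^{n_k}\to\reals^m$ together with a linear map $\LSyn(\pi_1)(\creals^{n_1})\times\cdots\times\LSyn(\pi_k)(\creals^{n_k})\to\LSyn(\op)(\creals^m)$ in the fibre $\LSyn(\reals^{n_1}\times\cdots\times\reals^{n_k})$. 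This is exactly the typing of the linear operation $D\op\in\LOp_{n_1,\ldots,n_k;n_1,\ldots,n_k}^m$ built into the target language, so $\Dsyn{\op}=(\op,D\op)$ is a well-typed morphism and the assignment is consistent.

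The reverse case is codual, the only new ingredient being the role of the biproduct structure. By Proposition~\ref{theo:grothendieck-products-contravariant}, the domain product in $\Sigma_{\CSyn}\LSyn^{op}$ is formed using the fibrewise coproduct $\sqcup$; since $\LSyn$ has strictly indexed finite biproducts (Definition~\ref{def:strictly-indexed-biproducts}), this $\sqcup$ agrees with the $\times$ used forward, and the two total categories share the same finite-product objects. A morphism of $\Sigma_{\CSyn}\LSyn^{op}$ reverses the linear component, so $\Dsynrev{\op}=(\op,\transpose{D\op})$ asks for a linear map $\LSyn(\op)(\creals^m)\to\LSyn(\pi_1)(\creals^{n_1})\sqcup\cdots\sqcup\LSyn(\pi_k)(\creals^{n_k})$, which is precisely the typing of $\transpose{D\op}\in\LOp_{n_1,\ldots,n_k;m}^{n_1,\ldots,n_k}$. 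Note that it is the self-dual choice of $\creals^n$ as \emph{both} tangent and cotangent space that lets both assignments reuse the same objects $(\reals^n,\creals^n)$.

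Having verified that both assignments are consistent, Corollary~\ref{cor:universal-property-of-source-language} yields the unique desired functors $\Dsyn{-}$ and $\Dsynrev{-}$, which strictly preserve the bicartesian closed structure and the $\mu\nu$-polynomials by construction. There is no deep obstacle here: the substance of the result lies entirely in the two cited corollaries, and the only point requiring care is the bookkeeping that aligns the product and coproduct formulas of the Grothendieck constructions (Propositions~\ref{prop:grothendieck-products-covariant} and~\ref{theo:grothendieck-products-contravariant}) with the domains and codomains of the linear operations $D\op$ and $\transpose{D\op}$ — an alignment that holds by design through the biproduct structure of $\LSyn$.
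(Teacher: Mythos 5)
Your proposal is correct and follows essentially the same route as the paper: the result is obtained by combining Corollary~\ref{cor:categorical-structure-target-language} (the target categories $\Sigma_{\CSyn}\LSyn$ and $\Sigma_{\CSyn}\LSyn^{op}$ are bicartesian closed with $\mu\nu$-polynomials) with the universal property of $\Syn$ from Corollary~\ref{cor:universal-property-of-source-language}, applied to the given assignment on $\reals^n$ and $\op$. Your additional typing check that $(\op,D\op)$ and $(\op,\transpose{D\op})$ are well-typed morphisms of the respective Grothendieck constructions is exactly the consistency the paper takes for granted from the design of the linear operations $D\op$ and $\transpose{D\op}$ in the target language.
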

\noindent By definition of equality in $\Syn$, $\Sigma_{\CSyn}\LSyn$ and $\Sigma_{\CSyn}\LSyn^{op}$,
these code transformations automatically respect equational
reasoning principles, in the sense that 
$\trm\beeq\trm[2]$ implies that $\Dsyn{\trm}\bepeq\Dsyn{\trm[2]}$ and 
$\Dsynrev{\trm}\bepeq\Dsynrev{\trm[2]}$.
In this section, we detail the implied definitions of $\Dsynsymbol$ and $\Dsynrevsymbol$ as well as their properties.

\sqsubsection{Some notation}
In the rest of this section, we use the following syntactic sugar:
\begin{itemize}
\item a notation for (linear) $n$-ary tuple types: $\tProd{\cty_1}{\ldots}{\cty_n}\defeq \bProd{\bProd{\bProd{\cty_1}{\cty_2}\cdots}{\cty_{n-1}}}{\cty_n}
$;
\item a notation for $n$-ary tuples: $\tTriple{\trm_1}{\cdots}{\trm_n}\defeq \tPair{\tPair{\tPair{\trm_1}{\trm_2}\cdots}{\trm_{n-1}}}{\trm_n}$;
\item given $\Gamma;\lvar:\cty\vdash \trm:\tProd{\cty[2]_1}{\cdots}{\cty[2]_n}$, we write $\Gamma;\lvar:\cty\vdash \tProj{i}(\trm):\cty[2]_i$ for the 
obvious $i$-th projection of $\trm$, which is constructed by repeatedly applying  
$\tFst$ and $\tSnd$ to $\trm$;
\item given $\Gamma;\lvar:\cty\vdash \trm:\cty[2]_i$, we write the $i$-th coprojection $\Gamma;\lvar:\cty\vdash 
\tCoProj{i}(\trm)\defeq \tTuple{\zero,\ldots,\zero,\trm,\zero,\ldots,\zero}
:\tProd{\cty[2]_1}{\cdots}{\cty[2]_n}$;
\item for a list $\var_1,\ldots,\var_n$ of distinct identifiers, we write $\idx{\var_i}{\var_1,\ldots, \var_n}\defeq i$ for the index of the identifier $\var_i$ in this list;
\item a $\mathbf{let}$-binding for tuples: $\pletin{\var}{\var[2]}{\trm}{\trm[2]}\defeq 
\letin{\var[3]}{\trm}{\letin{\var}{\tFst\var[3]}{\letin{\var[2]}{\tSnd\var[3]}{\trm[2]}}},$ where $\var[3]$ is a fresh variable.
\end{itemize}
Furthermore, all variables used in the source code transforms below are assumed to be freshly chosen.

\sqsubsection{Kinding and typing of the code transformations}
We define for each type $\ty$ of the source language:
\begin{itemize}
    \item a cartesian type $\Dsyn{\ty}_1$ of forward mode primals;
    \item a linear type $\Dsyn{\ty}_2$ (with free term variable $\pvar$) of forward mode tangents;
    \item a cartesian type $\Dsynrev{\ty}_1$ of reverse mode primals;
    \item a linear type $\Dsynrev{\ty}_2$ (with free term variable $\pvar$) of reverse mode cotangents.
\end{itemize}
We extend $\Dsyn{-}$ and $\Dsynrev{-}$ to act on typing contexts
$\Gamma=\var_1:\ty_1,\ldots,\var_n:\ty_n$ as
\[\begin{array}{lll}
    \Dsyn{\Gamma}_1\defeq \var_1:\Dsyn{\ty_1}_1,\ldots,\var_n:\Dsyn{\ty_n}_n\qquad
    \;&\text{(a cartesian typing context)}\\
    \Dsyn{\Gamma}_2\defeq \tProd{\subst{\Dsyn{\ty_1}_2}{\sfor{\pvar}{\var_1}}}{\cdots}{\subst{\Dsyn{\ty_n}_2}{\sfor{\pvar}{\var_n}}}\qquad\;&\text{(a linear type)}\\
    \Dsynrev{\Gamma}_1\defeq \var_1:\Dsynrev{\ty_1}_1,\ldots,\var_n:\Dsynrev{\ty_n}_n\qquad&\text{(a cartesian typing context)}\\
    \Dsynrev{\Gamma}_2\defeq \tProd{\subst{\Dsynrev{\ty_1}_2}{\sfor{\pvar}{\var_1}}}{\cdots}{\subst{\Dsynrev{\ty_n}_2}{\sfor{\pvar}{\var_n}}}\qquad&\text{(a linear type)}.\end{array}
\]
Our code transformations are well-kinded in the sense that they 
translate a type $\Delta\vdash\ty:\type$ of the source language into pairs of types of the target language
\[
\begin{array}{lll}
\Delta\mid\cdot \vdash\Dsyn{\ty}_1:\type\\
\Delta\mid\pvar:\Dsyn{\ty}_1\vdash \Dsyn{\ty}_2:\ltype\\
\Delta\mid\cdot\vdash\Dsynrev{\ty}_1:\type\\
\Delta\mid \pvar:\Dsynrev{\ty}_1\vdash \Dsynrev{\ty}_2:\ltype.
\end{array}
\]
Similarly, the functors $\Dsyn{-}:\Syn\to \Sigma_\CSyn\LSyn$ and 
$\Dsynrev{-}:\Syn\to \Sigma_\CSyn \LSyn^{op}$ define for each term 
$\trm$ of the source language and a list $\vGamma$ of identifiers 
that contains at least the free identifiers of $\trm$:
\begin{itemize}
\item a term $\Dsyn[\vGamma]{\trm}_1$ that represents the forward mode primal computation associated with $\trm$; 
\item a term $\Dsyn[\vGamma]{\trm}_2$ that represents the forward mode tangent computation associated with $\trm$;
\item a term $\Dsynrev[\vGamma]{\trm}_1$ that represents the reverse mode primal computation associated with $\trm$;
\item a term $\Dsynrev[\vGamma]{\trm}_2$ that represents the reverse mode cotangent computation associated with $\trm$.
\end{itemize}
These code transformations are well-typed in the sense that a source language term $\trm$ that is typed according to $\Delta\mid \Gamma\vdash \trm:\ty$ is translated into pairs of terms 
of the target language that are typed as follows:
\[
\begin{array}{l}
    \Delta\mid\Dsyn{\Gamma}_1\vdash \Dsyn[\vGamma]{\trm}_1:\Dsyn{\ty}_1\\
    \Delta\mid\Dsyn{\Gamma}_1;\lvar:\Dsyn{\Gamma}_2\vdash \Dsyn[\vGamma]{\trm}_2:\subst{\Dsyn{\ty}_2}{\sfor{\pvar}{\Dsyn[\vGamma]{\trm}_1}}\\
    \Delta\mid\Dsynrev{\Gamma}_1\vdash \Dsynrev[\vGamma]{\trm}_1:\Dsynrev{\ty}_1\\
\Delta\mid\Dsynrev{\Gamma}_1;\lvar:\subst{\Dsynrev{\ty}_2}{\sfor{\pvar}{\Dsynrev[\vGamma]{\trm}_1}}\vdash \Dsynrev[\vGamma]{\trm}_2:\Dsynrev{\Gamma}_2,
\end{array}\]
where $\vGamma$ is the list of identifiers that occurs in $\Gamma$ (that is, $\overline{\var_1:\ty_1,\ldots,\var_n:\ty_n}\defeq \var_1,\ldots,\var_n$).

However, as we noted already in Insight 1 of \S\ref{sec:key-ideas}, 
we often want to share computation between the primal and (co)tangent values,
for reasons of efficiency.
Therefore, we focus instead on transforming a source language term $\Delta\mid \Gamma\vdash \trm:\ty$ into target language terms:
\[
\begin{array}{l}
    \Delta\mid\Dsyn{\Gamma}_1\vdash \Dsyn[\vGamma]{\trm}:\Sigma{\pvar:\Dsyn{\ty}_1}.{\Dsyn{\Gamma}_2\multimap \Dsyn{\ty}_2}\\
    \Delta\mid\Dsynrev{\Gamma}_1\vdash \Dsynrev[\vGamma]{\trm}:\Sigma{\pvar:\Dsynrev{\ty}_1}.{\Dsynrev{\ty}_2\multimap \Dsynrev{\Gamma}_2},
\end{array}\]
where $\Dsyn[\vGamma]{\trm}\bepeq \tPair{\Dsyn[\vGamma]{\trm}_1}{\lfun{\lvar}\Dsyn[\vGamma]{\trm}_2}$ and $\Dsynrev[\vGamma]{\trm}\bepeq \tPair{\Dsynrev[\vGamma]{\trm}_1}{\lfun{\lvar}\Dsynrev[\vGamma]{\trm}_2}$.
While both representations of AD on programs are equivalent in terms of the 
$\beta\eta+$-equational theory of the target language
and therefore for any semantic and correctness purposes, they are meaningfully 
different in terms of efficiency.
Indeed, we ensure that common subcomputations between the primals and (co)tangents are shared via let-bindings in $\Dsyn[\vGamma]{\trm}$ and $\Dsynrev[\vGamma]{\trm}$.

\sqsubsection{Code transformations of primitive types and operations}
We have suitable terms (linear operations) 
\begin{align*}
&\var_1:\reals^{n_1}, \cdots,\var_k: \reals^{n_k}\;\;;\;\; \lvar:\creals^{n_1}\t* \cdots\t* \creals^{n_k} &\hspace{-5pt}&\vdash D\op(\var_1,\ldots,\var_k;\lvar)\hspace{-5pt} &:\; &\creals^m\\
&\var_1:\reals^{n_1}, \cdots,\var_k: \reals^{n_k}\;\;;\;\; \lvar: \creals^m &\hspace{-5pt}&\vdash \transpose{D\op}(\var_1,\ldots,\var_k;\lvar)\hspace{-5pt}& :\;&\creals^{n_1}\t* \cdots\t* \creals^{n_k}
\end{align*} to represent the forward- and 
reverse-mode derivatives of the primitive operations $\op\in\Op_{n_1,...,n_k}^m$.
Using these, we define
\begin{flalign*}
    &\Dsyn{\reals^n}_1 \defeq \reals^n
    && {\Dsyn{\reals^n}_2} \defeq \creals^n\\
    \end{flalign*}
\begin{flalign*}
        &\Dsyn[\vGamma]{\op(\trm_1,\ldots,\trm_k)} \defeq  \ndots\ndots&& \pletin{\var_1}{\var_1'}{\Dsyn[\vGamma]{\trm_1}}{\cdots\pletin{\var_k}{\var_k'}{\Dsyn[\vGamma]{\trm_k}}{\\
        &&&\tPair{\op(\var_1,\ldots,\var_k)}{\lfun\lvar D\op(\var_1,\ldots,\var_n;\tTriple{\lapp{\var_1'}{\lvar}}{\ldots}{\lapp{\var_k'}{\lvar}})}}}
\end{flalign*}
        \begin{flalign*}
            &\Dsynrev{\reals^n}_1 \defeq \reals^n
            && \Dsynrev{\reals^n}_2 \defeq \creals^n\\
        \end{flalign*}
  \begin{flalign*}
            &\Dsynrev[\vGamma]{\op(\trm_1,\ldots,\trm_k)} \defeq  \ndots\ndots\cdot\cdot\cdot&& \pletin{\var_1}{\var_1'}{\Dsynrev[\vGamma]{\trm_1}}{\cdots\\
            &&& \pletin{\var_k}{\var_k'}{\Dsynrev[\vGamma]{\trm_k}}{\\
            &&&\tPair{\op(\var_1,\ldots,\var_k)}{\lfun\lvar \letin{\lvar}{\transpose{D\op}(\var_1,\ldots,\var_k;\lvar)}{\\
            &&&\phantom{\tPair{\op(\var_1,\ldots,\var_k)}{\lfun\lvar }}\lapp{\var_1'}{\tProj{1}{\lvar}}+\cdots+\lapp{\var_k'}{\tProj{k}{\lvar}}}}}}
 \end{flalign*}
 
    For the AD transformations to be correct, it is important that these derivatives of language
    primitives are implemented correctly in the sense that
    $$
\sem{\var_1,\ldots,\var_k;\var[2]\vdash D\op(\var_1,\ldots,\var_k;\lvar)}=D\sem{\op}\qquad \sem{\var_1,\ldots,\var_k;\lvar\vdash \transpose{D\op}(\var_1,\ldots,\var_k;\lvar)}=\transpose{D\sem{\op}}.
    $$
    For example, for elementwise multiplication $(*)\in\Op_{n,n}^n$, we need that
\begin{eqnarray*}
	\sem{D(*)(\var_1,\var_2;\lvar)}((a_1, a_2), (b_1, b_2))&=&a_1 * b_2 + a_2 * b_1;\\
\sem{\transpose{D(*)}(\var_1,\var_2;\lvar)}((a_1, a_2),b)&=&(a_2 * b, a_1 * b).
\end{eqnarray*}
    By Corollary \ref{cor:universal-property-of-source-language}, the extension of the AD transformations $\Dsynsymbol$ and $\Dsynrevsymbol$ 
    to the full source language are now  canonically determined, as the unique 
    $\mu\nu$-polynimials preserving bicartesian closed functors that extend the previous definitions.

    \sqsubsection{Forward-mode CHAD definitions}
We define the types of (forward-mode) primals $\Dsyn{\ty}_1$ and tangents $\Dsyn{\ty}_2$ associated with a type $\ty$ as follows:
\begin{align*}
&\Dsyn{\Unit}_1 \defeq \Unit 
\\
&\Dsyn{\ty\t*\ty[2]}_1 \defeq \Dsyn{\ty}_1\t*\Dsyn{\ty[2]}_1
\\
&\Dsyn{\ty\To\ty[2]}_1 \defeq \Pi{\pvar:\Dsyn{\ty}_1}. 
\Sigma{\pvar':\Dsyn{\ty[2]}_1}.\Dsyn{\ty}_2\multimap 
\subst{\Dsyn{\ty[2]}_2}{\sfor{\pvar}{\pvar'}}
\\
&\Dsyn{\set{\ell_1\ty_1\mid\cdots\mid \ell_n\ty_n}}_1 \defeq \set{\ell_1\Dsyn{\ty_1}_1\mid \cdots \mid\ell_n\Dsyn{\ty_n}_1}
\\
&\Dsyn{\tvar}_1 \defeq \tvar \\
&\Dsyn{\lfp{\tvar}{\ty}}_1\defeq \lfp{\tvar}{\Dsyn{\ty}_1}
\\
&\Dsyn{\gfp{\tvar}{\ty}}_1\defeq \gfp{\tvar}{\Dsyn{\ty}_1}
\\
\\
& {\Dsyn{\Unit }_2} \defeq \lUnit \\
& \Dsyn{\ty\t*\ty[2]}_2 \defeq  \subst{\Dsyn{\ty}_2}{\sfor{\pvar}{\tFst\,\pvar}}\t*\subst{\Dsyn{\ty[2]}_2}{\sfor{\pvar}{\tSnd\,\pvar}}\\
&\Dsyn{\ty\To\ty[2]}_2 \defeq \Pi{\pvar':\Dsyn{\ty}_1}.\subst{\Dsyn{\ty[2]}_2}{\sfor{\pvar}{\tFst\, (\pvar\,\pvar')}}\\
&\Dsyn{\set{\ell_1\ty_1\mid\cdots\mid \ell_n\ty_n}}_2\defeq 
\vMatch{\pvar}{\ell_1\pvar\To \Dsyn{\ty_1}_2\mid\cdots\mid 
\ell_n\pvar\To\Dsyn{\ty_n}_2}\\
& \Dsyn{\tvar}_2\defeq \ltvar\\
&\Dsyn{\lfp{\tvar}{\ty}}_2\defeq \llfp{\ltvar}{\subst{\Dsyn{\ty}_2}{\sfor{\pvar}{\tFold{\pvar}{\var[2]}{\subst{\Dsyn{\ty}_1}{\sfor{\tvar}{\var[2]\vdash \tRoll\var[2]}}}}}}\\
&\Dsyn{\gfp{\tvar}{\ty}}_2\defeq \lgfp{\ltvar}{\subst{\Dsyn{\ty}_2}{\sfor{\pvar}{\tUnroll\pvar}}}\\
\end{align*}
For programs $\trm$, we define we define their efficient CHAD transformation $\Dsyn[\vGamma]{\trm}$ 
 as follows (and we list the less efficient transformations $\Dsyn[\vGamma]{\trm}_1$ and $\Dsyn[\vGamma]{\trm}_2$ that do not share computations between the primals and tangents in Appendix~\ref{sec:inefficient-ad-transformation}):
    \begin{flalign*}
    &\Dsyn[\vGamma]{\var} \defeq  \nndots\nndots\nndots&& \tPair{\var}{\lfun{\lvar}\tProj{\idx{\var}{\vGamma}}(\lvar)}
    \end{flalign*}
    \begin{flalign*}
    &\Dsyn[\vGamma]{\letin{\var}{\trm}{\trm[2]}} \defeq\nndots\nndots\ndots\cdot\cdot &&\pletin{\var}{\var'}{\Dsyn[\vGamma]{\trm}}{ \\
    &&&\pletin{\var[2]}{\var[2]'}{\Dsyn[\vGamma,\var]{\trm[2]}}{\\
    &&&\tPair{\var[2]}{\lfun\lvar
    \lapp{\var[2]'}{\tPair{\lvar}{\lapp{\var'}{\lvar}}}}}}
    \end{flalign*}
    \begin{flalign*}
    &\Dsyn[\vGamma]{\tUnit}  \defeq\nndots\nndots\nndots\ndots\cdot\cdot\cdot\cdot &&\tPair{\tUnit}{\lfun\lvar\tUnit}
    \end{flalign*}
    \begin{flalign*}
    &\Dsyn[\vGamma]{\tPair{\trm}{\trm[2]}} \defeq \nndots\nndots\ndots\ndots&&
    \pletin{\var}{\var'}{\Dsyn[\vGamma]{\trm}}{ \\
    &&&\pletin{\var[2]}{\var[2]'}{\Dsyn[\vGamma]{\trm[2]}}{\\
    &&&\tPair{\tPair{\var}{\var[2]}}{\lfun\lvar \tPair{\lapp{\var'}\lvar}{\lapp{\var[2]'}\lvar}}}}
    \end{flalign*}
    \begin{flalign*}&
    \Dsyn[\vGamma]{\tFst\trm} \defeq \nndots\ndots\ndots\cdot\cdot\cdot\cdot\cdot&&
    \pletin{\var}{\var'}{\Dsyn[\vGamma]{\trm}}
    {\tPair{\tFst\var}{\lfun\lvar \tFst(\lapp{\var'}{\lvar})}}
    \end{flalign*}
    \begin{flalign*}&
    \Dsyn[\vGamma]{\tSnd\trm} \defeq  \nndots\ndots\ndots\cdot\cdot&&
    \pletin{\var}{\var'}{\Dsyn[\vGamma]{\trm}}
    {\tPair{\tSnd\var}{\lfun\lvar \tSnd(\lapp{\var'}{\lvar})}}
    \end{flalign*}
    \begin{flalign*}
        &
    \Dsyn[\vGamma]{\fun\var\trm}\defeq\ndots\ndots\cdot\cdot &&
        \letin{\var[2]}{\fun\var\Dsyn[\vGamma,\var]{\trm}}{\\
            &&&\tPair{\fun\var
            \pletin{\var[3]}{\var[3]'}{\var[2]\,\var}
            \tPair{\var[3]}{\lfun\lvar\lapp{\var[3]'}{\tPair{\zero}{\lvar}}}}{\lfun\lvar\fun\var\lapp{(\tSnd(\var[2]\,\var))}{\tPair{\lvar}{\zero}}}
        }
    \end{flalign*}
    \begin{flalign*}&
    \Dsyn[\vGamma]{\trm\,\trm[2]} \defeq\nndots &&
    \pletin{\var}{\var'_{\text{ctx}}}{\Dsyn[\vGamma]{\trm}}{
    \pletin{\var[2]}{\var[2]'}{\Dsyn[\vGamma]{\trm[2]}}{
    \pletin{\var[3]}{\var'_{\text{arg}}}{\var\,\var[2]}{}}}
    \\&
     && \tPair{\var[3]}{\lfun\lvar (\lapp{\var'_{\text{ctx}}}{\lvar})\,\var[2]+
    \lapp{\var'_{\text{arg}}}{(\lapp{\var[2]'}{\lvar})}}
    \end{flalign*}
    \begin{flalign*}&
    \Dsyn[\vGamma]{\Cns\trm} \defeq \nndots\nndots\ndots\ndots&&
    \pletin{\var}{\var'}{\Dsyn[\vGamma]{\trm}}{\tPair{\Cns\var}{\var'}}
    \end{flalign*}
    \begin{flalign*} &
        \Dsyn[\vGamma]{\vMatch{\trm}{\Cns_1\var_1\To\trm[2]_1\mid\cdots\mid \Cns_n \var_n\To\trm[2]_n}}\defeq \ndots\!\!&&
        \pletin{\var[2]}{\var[2]'}{\Dsyn[\vGamma]{\trm}}{\\ &&&
            \vMatch{\var[2]}{\Cns_1\var_1\To
            \\ &&& \quad \pletin{\var[3]_1}{\var[3]_1'}{\Dsyn[\vGamma,\var_1]{\trm[2]_1}}{\\ &&&
            \quad \tPair{\var[3]_1}{\lfun\lvar 
            \lapp{\var[3]_1'}{\tPair{\lvar}{\lapp{(\letin{\var[2]}{\ell_1\var_1}{\var[2]'})}{\lvar}}}
            }}\\ &&&
            \qquad\qquad\mid\cdots\mid\\ &&&
            \qquad\qquad\Cns_n \var_n\To 
            \\ &&& \quad \pletin{\var[3]_n}{\var[3]_n'}{\Dsyn[\vGamma,\var_n]{\trm[2]_n}}{\\ &&&
            \quad \tPair{\var[3]_n}{\lfun\lvar 
            \lapp{\var[3]_n'}{\tPair{\lvar}{\lapp{(\letin{\var[2]}{\ell_n\var_n}{\var[2]'})}{\lvar}}}
            }}}}
        \end{flalign*}
    \begin{flalign*}
        &
    \Dsyn[\vGamma]{\tRoll\trm} \defeq \nndots\ndots\ndots\cdot\cdot&&\pletin{\var}{\var'}{\Dsyn[\vGamma]{\trm}}{
    \tPair{\tRoll\var}{\lfun\lvar \tRoll (\lapp{\var'}{\lvar})}
    }
    \end{flalign*}
    \begin{flalign*}
        &
    \Dsyn[\vGamma]{\tFold{\trm}{\var}{\trm[2]}}\defeq\ndots &&
    \pletin{\var[2]}{\var[2]'}{\Dsyn[\vGamma]{\trm}}{
    \\ &&&
    \letin{\var[3]}{\fun\var\Dsyn[\var]{\trm[2]}}{\\ &&&
    \tPair{\tFold{\var[2]}{\var}{\tFst(\var[3]\,\var)}}{\\ &&&
    \lfun\lvar\tFold{\lapp{\var[2]'}{\lvar}}{\lvar}{\\ &&& \quad\letin{\var}{\tFold{\var[2]}{\var}{\subst{\Dsyn{\ty}_1}{\sfor{\tvar}{\var\vdash \tFst(\var[3]\,\var)}}}}{\lapp{(\tSnd(\var[3]\,\var))}{\lvar}}}
    }
    }}
    \end{flalign*}
    \begin{flalign*}
    &
    \Dsyn[\vGamma]{\tUnroll\trm} \defeq\nndots\cdot\cdot\cdot\cdot\cdot&&& \pletin{\var}{\var'}{\Dsyn[\vGamma]{\trm}}{
    \tPair{\tUnroll\var}{\lfun\lvar \tUnroll (\lapp{\var'}{\lvar})}
    }
    \end{flalign*}
    \begin{flalign*}&
    \Dsyn[\vGamma]{\tGen{\trm}{\var}{\trm[2]}} \defeq \nndots&&
    \pletin{\var[2]}{\var[2]'}{\Dsyn[\vGamma]{\trm}}{\\ &&&
    \letin{\var[3]}{\fun\var\Dsyn[\var]{\trm[2]}}{\\ &&&
    \tPair{\tGen{\var[2]}{\var}{\tFst(\var[3]\,\var)}}{\\ &&&
    \lfun\lvar\tGen{\lapp{\var[2]'}{\lvar}}{\lvar}{\lapp{(\tSnd(\var[3]\,\var[2]))}{\lvar}}}
    }
    }
    \end{flalign*}

\sqsubsection{Reverse-mode CHAD definitions}
We define the types of (reverse-mode) primals $\Dsynrev{\ty}_1$ and cotangents $\Dsynrev{\ty}_2$ associated with a type $\ty$ as follows:
\begin{align*}
&\Dsynrev{\Unit}_1 \defeq \Unit 
\\
&\Dsynrev{\ty\t*\ty[2]}_1 \defeq \Dsynrev{\ty}_1\t*\Dsynrev{\ty[2]}_1
\\
&\Dsynrev{\ty\To\ty[2]}_1 \defeq \Pi{\pvar:\Dsynrev{\ty}_1}. 
\Sigma{\pvar':\Dsynrev{\ty[2]}_1}.
\subst{\Dsynrev{\ty[2]}_2}{\sfor{\pvar}{\pvar'}}\multimap 
\Dsynrev{\ty}_2
\\
&\Dsynrev{\set{\ell_1\ty_1\mid\cdots\mid \ell_n\ty_n}}_1 \defeq \set{\ell_1\Dsynrev{\ty_1}_1\mid \cdots \mid\ell_n\Dsynrev{\ty_n}_1}
\\
&\Dsynrev{\tvar}_1 \defeq \tvar \\
&\Dsynrev{\lfp{\tvar}{\ty}}_1\defeq \lfp{\tvar}{\Dsynrev{\ty}_1}
\\
&\Dsynrev{\gfp{\tvar}{\ty}}_1\defeq \gfp{\tvar}{\Dsynrev{\ty}_1}
\\
&\\
& \Dsynrev{\Unit }_2 \defeq \lUnit \\
& {\Dsynrev{\ty\t*\ty[2]}_2} \defeq  \subst{\Dsynrev{\ty}_2}{\sfor{\pvar}{\tFst\,\pvar}}\t*\subst{\Dsynrev{\ty[2]}_2}{\sfor{\pvar}{\tSnd\,\pvar}}\\
&\Dsynrev{\ty\To\ty[2]}_2 \defeq  \Sigma{\pvar':\Dsynrev{\ty}_1}.\subst{\Dsynrev{\ty[2]}_2}{\sfor{\pvar}{\tFst\, (\pvar\,\pvar')}}\\
&\Dsynrev{\set{\ell_1\ty_1\mid\cdots\mid \ell_n\ty_n}}_2\defeq 
\vMatch{\pvar}{\ell_1\pvar\To \Dsynrev{\ty_1}_2\mid\cdots\mid 
\ell_n\pvar\To\Dsynrev{\ty_n}_2}\\
& \Dsynrev{\tvar}_2\defeq \ltvar\\
&\Dsynrev{\lfp{\tvar}{\ty}}_2\defeq \lgfp{\ltvar}{\subst{\Dsynrev{\ty}_2}{\sfor{\pvar}{\tFold{\pvar}{\var[2]}{\subst{\Dsynrev{\ty}_1}{\sfor{\tvar}{\var[2]\vdash \tRoll\var[2]}}}}}}\\
&\Dsynrev{\gfp{\tvar}{\ty}}_2\defeq \llfp{\ltvar}{\subst{\Dsynrev{\ty}_2}{\sfor{\pvar}{\tUnroll\pvar}}}\\
\end{align*}
For programs $\trm$, we define their efficient CHAD transformation $\Dsynrev[\vGamma]{\trm}$ 
as follows (and we list the less efficient transformations $\Dsynrev[\vGamma]{\trm}_1$ and $\Dsynrev[\vGamma]{\trm}_2$ that do not share computation between the primals and cotangents in Appendix \ref{sec:inefficient-ad-transformation}):
\begin{flalign*}
&\Dsynrev[\vGamma]{\var} \defeq\nndots\nndots\ndots\ndots\cdot\cdot\cdot &&  \tPair{\var}{\lfun{\lvar} \tCoProj{\idx{\var}{\vGamma}}(\lvar)}
\end{flalign*}
\begin{flalign*}
    &
\Dsynrev[\vGamma]{\letin{\var}{\trm}{\trm[2]}}  
\defeq \nndots\ndots\cdot\cdot\cdot&&
\pletin{\var}{\var'}{\Dsynrev[\vGamma]{\trm}}{\\ &&&
    \pletin{\var[2]}{\var[2]'}{\Dsynrev[\vGamma,\var]{\trm[2]}}{\\ &&&
        \tPair{\var[2]}{\lfun\lvar 
        \letin{\lvar}{\lapp{\var[2]'}{\lvar}}{
            \tFst\lvar+\lapp{\var'}{(\tSnd \lvar)}
        }}
    }}
\end{flalign*}
\begin{flalign*}& 
\Dsynrev[\vGamma]{\tUnit}  \defeq \nndots\nndots\nndots\ndots\cdot\cdot\cdot\cdot && \tPair{\tUnit}{\lfun\lvar\zero}\end{flalign*}
\begin{flalign*}&
\Dsynrev[\vGamma]{\tPair{\trm}{\trm[2]}} \defeq \nndots\nndots\cdot\cdot\cdot&& 
\pletin{\var}{\var'}{\Dsynrev[\vGamma]{\trm}}{ \\ &&&
\pletin{\var[2]}{\var[2]'}{\Dsynrev[\vGamma]{\trm[2]}}{\\ &&&
\tPair{\tPair{\var}{\var[2]}}{\lfun\lvar \lapp{\var'}{(\tFst\lvar)}} + {\lapp{\var[2]'}{(\tSnd \lvar)}}}}
\end{flalign*}
\begin{flalign*}&
\Dsynrev[\vGamma]{\tFst\trm} \defeq\nndots\ndots\ndots\cdot\cdot\cdot\cdot\cdot && 
\pletin{\var}{\var'}{\Dsynrev[\vGamma]{\trm}}
{\tPair{\tFst\var}{\lfun\lvar \lapp{\var'}{\tPair{\lvar}{\zero}}}}
\end{flalign*}
\begin{flalign*}&
\Dsynrev[\vGamma]{\tSnd\trm} \defeq \nndots\ndots\ndots\cdot\cdot && 
\pletin{\var}{\var'}{\Dsynrev[\vGamma]{\trm}}
{\tPair{\tSnd\var}{\lfun\lvar \lapp{\var'}{\tPair{\zero}{\lvar}}}}
\end{flalign*}
\begin{flalign*}&   
\Dsynrev[\vGamma]{\fun\var\trm} \defeq\nndots\ndots\ndots\cdot\cdot\cdot && 
\letin{\var[2]}{\fun\var\Dsynrev[\vGamma,\var]{\trm}}{\\ &&&
\tPair{\fun\var
\pletin{\var[3]}{\var[3]'}{\var[2]\,\var}{
\tPair{\var[3]}{\lfun\lvar\tSnd(\lapp{\var[3]'}{\lvar})}}}
{\\ &&&\lfun\lvar  \tensMatch{\lvar}{\var}{\lvar}{
    \tFst(\lapp{(\tSnd(\var[2]\,\var))}{\lvar})} }
}
\end{flalign*}
\begin{flalign*}&
\Dsynrev[\vGamma]{\trm\,\trm[2]} \defeq\ndots\ndots\cdot\cdot\cdot\cdot\cdot&& 
\pletin{\var}{\var'_{\text{ctx}}}{\Dsynrev[\vGamma]{\trm}}{
\pletin{\var[2]}{\var[2]'}{\Dsynrev[\vGamma]{\trm[2]}}{
\pletin{\var[3]}{\var'_{\text{arg}}}{\var\,\var[2]}{}}}
\\
&&& \tPair{\var[3]}{\lfun\lvar  \lapp{\var'_{\text{ctx}}}{(!\var[2]\otimes \lvar)} + \lapp{\var[2]'}{(\lapp{\var'_{\text{arg}}}{\lvar})}}
\end{flalign*}
\begin{flalign*}&
\Dsynrev[\vGamma]{\Cns\trm} \defeq \nndots\nndots\ndots\ndots&& 
\pletin{\var}{\var'}{\Dsynrev[\vGamma]{\trm}}{\tPair{\Cns\var}{\var'}}\end{flalign*}
\begin{flalign*}&
    \Dsynrev[\vGamma]{\vMatch{\trm}{\ell_1\var_1\To\trm[2]_1\mid\cdots\mid \ell_n\var_n\To\trm[2]_n}} \defeq \ndots\cdot\cdot\cdot\!\!&&
        \pletin{\var[2]}{\var[2]'}{\Dsynrev[\vGamma]{\trm}}{\\ &&&
        \vMatch{\var[2]}{\Cns_1\var_1\To
        \\ &&&\quad \pletin{\var[3]_1}{\var[3]_1'}{\Dsynrev[\vGamma,\var_1]{\trm[2]_1}}{\\ &&&
        \quad \tPair{\var[3]_1}{\lfun\lvar 
        \letin{\lvar}{
            \lapp{\var[3]_1'}{\lvar}
        }{\tFst\lvar+\\ &&&
        \qquad\qquad\lapp{(\letin{\var[2]}{\ell_1\var_1}{\var[2]'})}{(\tSnd\lvar)}}}}\\ &&&
        \qquad\qquad\mid\cdots\mid\\ &&&
        \qquad\qquad\Cns_n \var_n\To 
        \\ &&&\quad \pletin{\var[3]_n}{\var[3]_n'}{\Dsynrev[\vGamma,\var_n]{\trm[2]_n}}{\\ &&&
        \quad \tPair{\var[3]_n}{\lfun\lvar 
        \letin{\lvar}{
            \lapp{\var[3]_n'}{\lvar}
        }{\tFst\lvar+\\ &&&
        \qquad\qquad\lapp{(\letin{\var[2]}{\ell_n\var_n}{\var[2]'})}{(\tSnd\lvar)}}}}}}
    \end{flalign*}
\begin{flalign*}
&
\Dsynrev[\vGamma]{\tRoll\trm} \defeq \nndots\ndots\ndots&& \pletin{\var}{\var'}{\Dsynrev[\vGamma]{\trm}}{
\tPair{\tRoll\var}{\lfun\lvar \lapp{\var'}{(\tUnroll\lvar)}}
}\end{flalign*}
\begin{flalign*}&
\Dsynrev[\vGamma]{\tFold{\trm}{\var}{\trm[2]}}\defeq  \ndots&& 
\pletin{\var[2]}{\var[2]'}{\Dsynrev[\vGamma]{\trm}}{\\ &&&
\letin{\var[3]}{\fun\var\Dsynrev[\var]{\trm[2]}}{\\ &&&
\tPair{\tFold{\var[2]}{\var}{\tFst (\var[3]\,\var)}\\ &&&}{
\lfun\lvar \lapp{\var[2]'}{\tGen{\lvar}{\lvar}{\\ &&&
\quad\letin{\var}{\tFold{\var[2]}{\var}{\subst{\Dsyn{\ty}_1}{\sfor{\tvar}{\var\vdash \tFst(\var[3]\,\var)}}}}{\lapp{(\tSnd(\var[3]\,\var))}{\lvar}}}
}
}
}
}\end{flalign*}
\begin{flalign*}&
\Dsynrev[\vGamma]{\tUnroll\trm} \defeq \nndots\cdot\cdot\cdot\cdot\cdot\cdot\cdot&& \pletin{\var}{\var'}{\Dsynrev[\vGamma]{\trm}}{
\tPair{\tUnroll\var}{\lfun\lvar\lapp{\var'}{(\tRoll\lvar)}}
}\end{flalign*}
\begin{flalign*}&
\Dsynrev[\vGamma]{\tGen{\trm}{\var}{\trm[2]}}\defeq\nndots\cdot\cdot &&
\pletin{\var[2]}{\var[2]'}{\Dsynrev[\vGamma]{\trm}}{\\ &&&
\letin{\var[3]}{\fun\var\Dsynrev[\var]{\trm[2]}}{\\ &&&
\tPair{\tGen{\var[2]}{\var}{\tFst(\var[3]\,\var)}\\ &&&}{
\lfun\lvar\lapp{\var[2]'}{\tFold{\lvar}{\lvar}{
    \lapp{(\tSnd(\var[3]\,\var[2]))}{\lvar}}
}
}
}}
\end{flalign*}

\section{Concrete models}\label{sec:concrete-models}
In order to proceed with our correctness proof of Automatic Differentiation, we need to establish the semantics of the program transformation in our setting. 
In this section, we construct  denotational semantics for the target language.

\subsection{Locally presentable categories and $\mu\nu$-polynomials}\label{subsec:locallypresentable-Set}
We show that any cartesian closed locally presentable category yields a concrete model for the source language. 
The only step needed to establish this fact is to prove that locally presentable categories have $\mu\nu$-polynomials, \textit{cf.}~\citep[Theorem~3.7]{ITA_2002__36_2_195_0}. 
 We establish this result below. We refer the reader to \citep{adamek1994locally,BirdThesis} for basics on 
locally presentable categories.

The first fact to recall is that locally presentable categories are complete (and cocomplete by definition): see, for instance, \citep[pag.~45]{adamek1994locally}. Moreover:

\begin{lemma}\label{lem:right-adjoint-accessible-preserves}
	Let $\catA , \catB $ be locally presentable categories.
\begin{enumerate}[A)] 	
	\item A functor $G: \catA\to\catB $ has a left adjoint  if and only if $G$ is accessible and preserves limits.\label{lem:locally-presentable-adjoint-A}
	\item A functor $F: \catB \to \cat A $ has a right adjoint if and only if $F$ preserves colimits.\label{lem:locally-presentable-adjoint-B}
\end{enumerate} 	
\end{lemma}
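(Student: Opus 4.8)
The plan is to recognise both statements as instances of the adjoint functor theorems specialised to locally presentable categories, where the usual ``size'' hypotheses become automatic. Throughout I would use the following standard facts about a locally presentable category $\catC$ (see \citep{adamek1994locally}): it is complete and cocomplete; it is well-powered and co-well-powered; it possesses a (small) strong generating set, namely its $\lambda$-presentable objects for suitable $\lambda$; and, crucially, the Ad\'amek--Rosick\'y characterisation that a category is locally presentable precisely when it is \emph{accessible and complete} (equivalently, accessible and cocomplete).

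For part \ref{lem:locally-presentable-adjoint-B}, the forward implication is immediate, since any right adjoint preserves colimits. For the converse I would invoke the dual of the Special Adjoint Functor Theorem: the domain $\catB$ is cocomplete, co-well-powered and has a small generating set, so a functor $F\colon\catB\to\catA$ out of it admits a right adjoint as soon as it preserves all small colimits. As this is exactly the hypothesis, $F$ has a right adjoint.

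For part \ref{lem:locally-presentable-adjoint-A}, the forward implication combines two observations: a left adjoint makes $G$ a right adjoint, hence limit-preserving; and a right adjoint between locally presentable categories is accessible (its left adjoint $F$ is cocontinuous, and one checks that $G$ preserves $\mu$-filtered colimits for a cardinal $\mu$ large enough that both a generating set of presentable objects of $\catB$ and their images under $F$ are $\mu$-presentable, so that the adjunction isomorphism commutes with the colimit on the nose). For the converse — the substantive direction — I would fix an object $B\in\catB$ and study the comma category $(B\downarrow G)$, whose initial object, should it exist, is precisely a universal arrow from $B$ to $G$, i.e.\ the value at $B$ of the desired left adjoint. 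Two facts assemble to produce this initial object: first, since $\catA$ is complete and $G$ preserves limits, the projection $(B\downarrow G)\to\catA$ creates limits, so $(B\downarrow G)$ is complete; second, since $G$ and $\catB$ are accessible, the comma category $(B\downarrow G)$ is accessible, being built from accessible data by a comma/inserter construction. By the characterisation recalled above, a complete accessible category is locally presentable, hence cocomplete, hence has an initial object. Letting $B$ vary and assembling the resulting universal arrows yields the left adjoint.

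The main obstacle is the bookkeeping around accessibility rather than any genuinely hard construction: specifically, verifying that $(B\downarrow G)$ is accessible (which rests on the closure of accessible categories under comma constructions with accessible functors) and, in the forward direction of part \ref{lem:locally-presentable-adjoint-A}, pinning down the cardinal witnessing accessibility of the right adjoint $G$. Both are standard within the Ad\'amek--Rosick\'y framework, so I would cite the relevant closure properties rather than reprove them; once they are in hand, the ``accessible $+$ complete $=$ locally presentable'' equivalence does the real work and makes the solution-set condition of the General Adjoint Functor Theorem automatic.
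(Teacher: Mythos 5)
Your proposal is correct. For part \ref{lem:locally-presentable-adjoint-B} your argument coincides with the paper's: the paper likewise recalls that locally presentable categories are co-wellpowered \citep[Theorem~1.58]{adamek1994locally} and invokes the (dual) Special Adjoint Functor Theorem, exactly as you do. The difference lies in part \ref{lem:locally-presentable-adjoint-A}: the paper disposes of it in one line by citing \citep[Theorem~1.66]{adamek1994locally}, whereas you unfold an actual proof of that theorem --- the forward direction via a uniformization-of-cardinals argument showing right adjoints between locally presentable categories are accessible, and the converse via the comma categories $(B\downarrow G)$, which are complete (because $G$ preserves limits and the projection creates them), accessible (closure of accessible categories under comma constructions along accessible functors), hence locally presentable by the ``accessible $+$ complete'' characterization, hence cocomplete, yielding the initial object that is the required universal arrow. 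This is essentially the standard proof underlying the cited theorem, so what your route buys is self-containedness at the cost of invoking two nontrivial closure results (accessibility of comma categories and the accessible-and-complete criterion) that the paper's citation bundles into a single reference. One cosmetic slip: in part \ref{lem:locally-presentable-adjoint-B} you justify the easy direction by ``any right adjoint preserves colimits''; what you mean (and need) is that a functor \emph{possessing} a right adjoint, i.e.\ a left adjoint, preserves colimits.
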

\begin{proof}
\ref{lem:locally-presentable-adjoint-A} is \cite[Theorem~1.66]{adamek1994locally}.

Recall that every locally presentable 
is co-wellpowered, see \citep[Theorem~1.58]{adamek1994locally}. By the special adjoint functor theorem~\citep[pag.~129]{zbMATH03367095}, we get that \ref{lem:locally-presentable-adjoint-B} holds. 
\end{proof}

\begin{lemma}\label{lem:accessible-functors-have-initial-algebras-terminal-coalgebras}
Every accessible endofunctor on a locally presentable category has an initial algebra and a terminal coalgebra.
\end{lemma}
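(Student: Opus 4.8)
The plan is to avoid any transfinite chain construction and instead show that the categories $F\AAlg$ and $F\CCoAlg$ of algebras and coalgebras of an accessible endofunctor $F:\catA\to\catA$ are themselves locally presentable; an initial algebra is then nothing but the initial object of $F\AAlg$ and a terminal coalgebra nothing but the terminal object of $F\CCoAlg$, both of which exist because locally presentable categories are in particular complete and cocomplete. The engine driving this is the fundamental characterisation (see \citep{adamek1994locally}) that a category is locally presentable if and only if it is accessible and complete, equivalently accessible and cocomplete. So for each of the two categories I only need to verify two things: accessibility, and completeness (for algebras) or cocompleteness (for coalgebras).

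For accessibility I would observe that $F\AAlg$ and $F\CCoAlg$ are the inserter categories $\mathrm{Ins}(F,\id[\catA])$ and $\mathrm{Ins}(\id[\catA],F)$ respectively, their objects being exactly the morphisms $F(X)\to X$, respectively $X\to F(X)$. The $2$-category of accessible categories and accessible functors is closed under inserters, and both $F$ (by hypothesis) and $\id[\catA]$ are accessible; hence $F\AAlg$ and $F\CCoAlg$ are accessible. For the (co)completeness, I would use that the forgetful functor $F\AAlg\to\catA$ creates limits while the forgetful functor $F\CCoAlg\to\catA$ creates colimits. The second of these is the one I would spell out: given a diagram of coalgebras whose underlying diagram has colimit $\left(C,(c_j)_j\right)$ in $\catA$, the composites $F(c_j)\circ\xi_j$ form a cocone, inducing a unique structure map $C\to F(C)$ that turns the colimiting cocone into a colimit in $F\CCoAlg$; the algebra case is dual. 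Since $\catA$ is complete and cocomplete, $F\AAlg$ is complete and $F\CCoAlg$ is cocomplete.

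Combining the two ingredients via the characterisation above, $F\AAlg$ is accessible and complete, hence locally presentable, hence has an initial object, i.e.\ an initial $F$-algebra; dually $F\CCoAlg$ is accessible and cocomplete, hence locally presentable, hence has a terminal object, i.e.\ a terminal $F$-coalgebra. I expect the terminal coalgebra to be the delicate half: one cannot simply dualise the classical initial-chain argument, since the opposite of a locally presentable category is almost never locally presentable and an accessible $F$ need not preserve the cofiltered limits that would make the terminal sequence converge. The locally-presentable-coalgebra-category route bypasses this entirely, at the cost of invoking the (nontrivial) closure of accessible categories under inserters, which is therefore the main result I would be leaning on.
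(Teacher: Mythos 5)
Your proof is correct, but it takes a genuinely different route from the paper's. The paper treats the two halves asymmetrically: the initial algebra is obtained directly from the classical (transfinite) colimit-of-the-chain construction $\initial\to E(\initial)\to\cdots$, citing Ad\'amek's result, while the terminal coalgebra is obtained by asserting that $E\CCoAlg$ is locally presentable, observing that the forgetful functor $E\CCoAlg\to\catA$ creates colimits and hence, being a colimit-preserving functor between locally presentable categories, has a right adjoint $R$ by the adjoint functor theorem, and taking $R(\terminal)$ as the terminal coalgebra. You instead treat both halves uniformly: $E\AAlg$ and $E\CCoAlg$ are accessible (as the inserters $\mathrm{Ins}(E,\id)$ and $\mathrm{Ins}(\id,E)$, via the closure of accessible categories under inserters) and respectively complete and cocomplete (creation of limits, resp.\ colimits, by the forgetful functors), hence locally presentable, hence possess an initial, resp.\ terminal, object. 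What the paper's route buys is economy on the algebra side (no appeal to the inserter theorem) and extra information on the coalgebra side (the cofree-coalgebra right adjoint $R$, not merely the terminal object); what your route buys is symmetry, complete avoidance of chain constructions, and, notably, an actual argument for the local presentability of $E\CCoAlg$, a fact the paper asserts with only a citation. Your closing caveat is also well taken: the naive dual of the initial-algebra chain need not converge for an accessible endofunctor, since such a functor preserves filtered colimits rather than cofiltered limits, so some detour of this kind is genuinely necessary for the coalgebra half.
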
 
\begin{proof}

Every accessible endofunctor on a locally presentable category has
an initial algebra since we construct the initial algebra via the colimit of the chain $\initialobject\to E\left( \initialobject\right) \to \cdots $, see \citep{adamek1979least}.

If $\catA$ is a locally presentable category, given an endofunctor $E : \catA \to \catA $,
we have that $E\CCoAlg $ is locally presentable. Since the forgetful functor 
$ E\CCoAlg\to  \catA $ is a functor between locally presentable categories that creates colimits, 
we have that it has a right adjoint $R$. Therefore $R(\terminal ) $ is the terminal object of 
$E\CCoAlg $ (terminal coalgebra of $E$), see \citep{MR1228862}. 
\end{proof}

\begin{proposition}\label{prop:locally-presentable-has-munupolynomials}
	If $\catD $ is locally presentable then $\catD $ has $\mu\nu $-polynomials.
\end{proposition}
\begin{proof}
	The terminal category $\terminal $ is a locally presentable category and, if 
	$\catD ' $ and $\catD ''$ are locally presentable categories, then $\catD '\times\catD ''$ is locally presentable as well. Therefore all the objects of $\mnPoly _ \catD $ are locally presentable.
	
	Given locally presentable categories $\catD ', \catD ''$,  the projections $\pi _1 : \catD '\times\catD ''\to \catD '$ and
	$\pi _2 : \catD '\times\catD ''\to \catD ''$ have right (and left) adjoints and, therefore, are accessible.
	
	Moreover, given locally presentable categories $ \catD ', \catD '' ,  \catD '''$, if $E: \catD '   \to \catD ''    $ and $J : \catD '  \to \catD '''    $ are accessible functors, then so is the induced functor $(E,J) :\catD '   \to \catD '' \times \catD '''   $.
	
	Furthermore, $\times : \catD\times \catD \to \catD $ and 	$\sqcup : \catD\times \catD \to \catD $ have, respectively, a left adjoint and a right adjoint. Therefore they are accessible.
	
	Finally, by \citep[Proposition~3.8]{ITA_2002__36_2_195_0}, assuming their existence, $\mu H $ and $\nu H $ are
	accessible whenever $H: \catD '\times\catD\to\catD $ is accessible and $\catD ' $ is locally presentable. 
	
	This completes the proof that all morphisms of $\mnPoly _ \catD $  are accessible. Hence, by
	Lemma \ref{lem:accessible-functors-have-initial-algebras-terminal-coalgebras}, we have that
	all endofunctors in  $\mnPoly _ \catD $  have initial algebras and terminal coalgebras. 
		Therefore $\catD $ has $\mu\nu $-polynomials.
\end{proof}

\begin{remark}[Duality]
Let $\catD$ be a category. 	By a well-known result by Gabriel-Ulmer~\cite[7.13]{zbMATH03356994},
$\catD$ and $\catD ^\op $ are locally presentable if, and only if, $\catD $ is a complete lattice.
Therefore, in general, the property of being locally presentable is not self-dual.
	
As remarked in \ref{remark:sel-duality-munupolynomials}, the property of having $\mu\nu$-polynomials is self-dual. Hence, by Proposition \ref{prop:locally-presentable-has-munupolynomials}, we have that, whenever $\catD ^\op $ is locally presentable, $\catD $ has $\mu\nu$-polynomials.		
\end{remark}

\subsection{$\catLi$, $\catFLi$ and $\Fam{\catLi}$}\label{subsec:locallypresentable-Fam-basic-properties}
Henceforth,  we assume that $\catLi $ is a locally presentable category with biproducts $(+, \initial ) $ that is monadic over $\Set$.
The main examples that we have in mind are the category of real vector spaces $\catLi = \Vect $ and the category of commutative monoids $\catLi = \CMon $.

We consider the indexed category 
\begin{eqnarray}
	\catFLi : &\Set ^\op  & \to \Cat\label{eq:catFli}\\
	& X & \mapsto \Cat\left[ X, \catLi \right] = \catLi ^ X \nonumber\\
	& f : X\to Y & \mapsto \catLi ^f = \Cat \left[ f, \catLi \right] : \catLi ^Y\to \catLi ^X\nonumber
\end{eqnarray}
defined by the composition 
\begin{equation}
	\Set ^\op\rightarrow \Cat ^\op \xrightarrow{\Cat\left[ -, \catLi \right] } \Cat 
\end{equation}
in which $\Cat\left[ -, \catLi \right] = \catLi ^{(-)}$ is the exponential (internal hom) in $\Cat $.
We have that
\begin{equation}
	\displaystyle\GrothSet\catFLi\cong  \Fam{\catLi }, \qquad\qquad \GrothSet\catFLi ^\op\cong  \Fam {\catLi ^\op } 
\end{equation}
where $\Fam {\catLi }$ and $\Fam {\catLi ^\op } $ are, respectively, the free cocompletion under coproducts
of $\catLi $ and of $\catLi ^\op $.  We refer the reader, for instance, to \cite[Section~2]{zbMATH07186728} and \cite[Chapter~6]{zbMATH01577082} for basic facts about free cocompletion under coproducts.

We have the following basic straightforward properties about $\Fam{\catLi} $:

\begin{proposition}\label{theo:Fam-closed-structure}
	Let $\catD$ be a category with biproducts $(+, \initial )$.
		If $\catD $ has (infinite) products, $\Fam{ \catD } $ is cartesian closed. Codually,  if $\catD $ has  (infinite) coproducts, $\Fam{ \catD ^\op } $	is cartesian closed.
\end{proposition}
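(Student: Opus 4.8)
The plan is to work with the concrete description of $\Fam{\catD}$ as the Grothendieck construction $\GrothSet\bigl(\catD^{(-)}\bigr)$ of the indexed category $X\mapsto\catD^{X}$, exactly as in the construction \eqref{eq:catFli} of $\catFLi$ (now with $\catD$ in place of $\catLi$): an object is a set $X$ together with a family $(w_x)_{x\in X}$ of objects of $\catD$, and, unwinding the definition of the total category and using the axiom of choice in $\Set$, the hom-sets are
\[
\Fam{\catD}\bigl((X,w),(B,y)\bigr)
=\coprod_{f\colon X\to B}\ \prod_{x\in X}\catD(w_x,y_{f(x)})
\cong\prod_{x\in X}\coprod_{b\in B}\catD(w_x,y_b).
\]
First I would record the product structure: since $\Set$ has finite products and $\catD^{(-)}$ has strictly indexed finite products (computed pointwise from the biproducts of $\catD$), Proposition~\ref{prop:grothendieck-products-covariant} gives binary products in $\Fam{\catD}$, namely $(X,w)\times(B,y)=\bigl(X\times B,\,((x,b)\mapsto w_x+y_b)\bigr)$, where crucially $+$ is the \emph{bi}product of $\catD$.

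The heart of the argument is to exhibit the exponential $(B,y)\Rightarrow(C,z)$ and verify its defining bijection. I would compute the relevant hom-set by the following chain, natural in $(X,w)$:
\begin{align*}
\Fam{\catD}\bigl((X,w)\times(B,y),(C,z)\bigr)
&\cong\prod_{x\in X}\prod_{b\in B}\coprod_{c\in C}\catD(w_x+y_b,z_c)\\
&\cong\prod_{x}\prod_{b}\coprod_{c}\bigl(\catD(w_x,z_c)\times\catD(y_b,z_c)\bigr)\\
&\cong\prod_{x}\coprod_{\phi\colon B\to C}\Bigl(\catD\bigl(w_x,\prod_{b}z_{\phi(b)}\bigr)\times\prod_{b}\catD(y_b,z_{\phi(b)})\Bigr)\\
&\cong\prod_{x}\coprod_{(\phi,s)\in\mathcal E}\catD\bigl(w_x,\prod_{b}z_{\phi(b)}\bigr),
\end{align*}
where the second step uses that $+$ is a \emph{co}product, the third uses the $\Sigma$--$\Pi$ swap (axiom of choice) in $\Set$ together with the fact that $\catD$ has (possibly infinite) products to form $\prod_{b}z_{\phi(b)}$, and $\mathcal E=\coprod_{\phi\colon B\to C}\prod_{b\in B}\catD(y_b,z_{\phi(b)})$. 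The last line is precisely $\Fam{\catD}\bigl((X,w),(\mathcal E,u)\bigr)$ for the family $u_{(\phi,s)}=\prod_{b\in B}z_{\phi(b)}$, so $(B,y)\Rightarrow(C,z)=(\mathcal E,u)$ is the required exponential.

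I would finish by checking that every isomorphism in the chain is natural in $(X,w)$ (each is either an instance of a universal property in $\catD$ or a natural reindexing of set-level (co)products), which upgrades the bijection to an adjunction $(-)\times(B,y)\dashv(B,y)\Rightarrow(-)$ and hence establishes cartesian closedness. The codual statement then follows at once by applying the first part to $\catD^{\op}$: biproducts are self-dual, and $\catD^{\op}$ has all products exactly when $\catD$ has all coproducts, whence $\Fam{\catD^{\op}}$ is cartesian closed. The main obstacle I anticipate is making the third step airtight, since it simultaneously commutes the set-level coproduct past the product over $B$ and absorbs the (possibly infinite) product over $B$ into a single object of $\catD$; this is exactly the point where the hypothesis of (infinite) products is indispensable and where naturality is most delicate. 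A more conceptual alternative would be to verify that $\catD^{(-)}$ is a $\Sigma$-bimodel for function types (with comprehension model $\catC'=\Set/-$) and invoke Theorem~\ref{theo:grothendieck-ccc-covariant}, but since that verification invokes products of $\catD$ in precisely the same place to supply the $\Pi$-types, the self-contained computation above seems preferable for this basic fact.
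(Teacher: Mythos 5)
Your proposal is correct and is essentially the paper's own proof: the exponential you derive, $(\mathcal E,u)$ with $\mathcal E=\coprod_{\phi\colon B\to C}\prod_{b\in B}\catD(y_b,z_{\phi(b)})=\Fam{\catD}\bigl((B,y),(C,z)\bigr)$ and $u_{(\phi,s)}=\prod_{b\in B}z_{\phi(b)}$, is exactly the pair $\bigl(\Fam{\catD}\bigl((Y,\famY),(Z,\famZ)\bigr),\mathcal{YZ}\bigr)$ that the paper declares to be the exponential, and the codual statement is handled the same way (apply the result to $\catD^{\op}$). The only difference is that you spell out, via the chain of natural hom-set bijections, the verification that the paper asserts without detail — a welcome filling-in rather than a different route.
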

\begin{proof}
	Namely, given families of objects $\famY : Y\to \catD  , \famZ : Z\to \catD $, we define
\begin{eqnarray}	
	 \mathcal{YZ} : &\Fam{\catD }\left( \left( Y, \famY\right) , \left(Z, \famZ\right) \right) &\to \catD   \\
	 & \left( g : Y\to Z, \left( \alpha _ y : \famY (y)\to \famZ \left( g (y)\right)    \right)_{y\in Y} \right)  & \mapsto \prod _{y\in Y } \famZ\left( g(y)\right) \nonumber\\	
	 &&\\
  	\mathcal{YZ}^t : &\Fam{\catD ^\op }\left( \left( Y, \famY\right) , \left(Z, \famZ\right) \right) &\to \catD   \\
  	& \left( g : Y\to Z, \left( \alpha _ y : \famZ \left( g (y)\right) \to \famY (y)   \right) _{y\in Y} \right)  &  \mapsto \coprod _{y\in Y } \famZ\left( g(y)\right) \nonumber
\end{eqnarray}
The pair $\left(  \Fam{\catD }\left( \left( Y, \famY\right)  , \left(Z, \famZ\right) \right),  \mathcal{YZ} \right) $ 
 is the exponential $\left( Y ,   \famY  \right)\Rightarrow\left( Z ,   \famZ  \right)  $  in $\Fam{ \catD } $, provided that $\catD$ has products. 
 
 Codually, $\left(  \Fam{\catD }\left( \left( Y, \famY\right)  , \left( Z, \famZ\right) \right),  \mathcal{YZ}^t \right) $ 
 is the exponential $\left( Y ,   \famY  \right)\Rightarrow\left( Z ,   \famZ  \right)  $ in $\Fam{ \catD ^\op } $, provided that $\catD $ has coproducts.
\end{proof}

\begin{proposition}\label{theo:Fam-locally-presentable}
	$\Fam{ \catD } $ is locally presentable, whenever $\catD $ is locally presentable.
\end{proposition}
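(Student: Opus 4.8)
The plan is to verify the two hypotheses of the standard recognition theorem for locally presentable categories \citep{adamek1994locally}: a category is locally $\lambda$-presentable precisely when it is cocomplete and admits a set of $\lambda$-presentable objects whose closure under $\lambda$-filtered colimits is the whole category. Throughout I use the concrete description $\Fam{\catD}\cong\GrothSet(\catD^{(-)})$, whose objects are pairs $(X,(A_x)_{x\in X})$ with $X\in\Set$ and $A_x\in\catD$, and whose morphisms $(X,(A_x))\to(Y,(B_y))$ are pairs of a function $f\colon X\to Y$ and a family $(\phi_x\colon A_x\to B_{f(x)})_{x\in X}$; in particular $\Fam{\catD}((X,A_\bullet),(Y,B_\bullet))\cong\coprod_{f\colon X\to Y}\prod_{x\in X}\catD(A_x,B_{f(x)})$.

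First I would establish cocompleteness. Since $\catD$ is locally presentable it is cocomplete, and $\Fam{\catD}$ inherits all colimits: coproducts are disjoint unions of index sets, $\coprod_i(X_i,A^i_\bullet)=(\coprod_iX_i,[A^i_\bullet])$, while a coequalizer of $(f,\phi),(g,\psi)\colon(X,A_\bullet)\rightrightarrows(Y,B_\bullet)$ is obtained by forming $q\colon Y\to Z$ in $\Set$ and, over each $z\in Z$, taking the (small) colimit in $\catD$ of the $B_y$ with $q(y)=z$ glued along the spans $B_{f(x)}\xleftarrow{\phi_x}A_x\xrightarrow{\psi_x}B_{g(x)}$. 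A useful bookkeeping device is the index functor $p\colon\Fam{\catD}\to\Set$, $(X,A_\bullet)\mapsto X$: the assignments $X\mapsto(X,(\terminal)_x)$ and $X\mapsto(X,(\initial)_x)$ are respectively right and left adjoint to $p$, so $p$ preserves all limits and colimits; hence the index set of any colimit in $\Fam{\catD}$ is the corresponding colimit in $\Set$, which pins down the shape of $\lambda$-filtered colimits.

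Next I would fix a regular cardinal $\lambda$ with $\catD$ locally $\lambda$-presentable and take as generators the (essentially small) set $\mathcal G$ of objects $(S,(A_s)_{s\in S})$ with $|S|<\lambda$ and each $A_s$ a $\lambda$-presentable object of $\catD$. For density, every $(X,A_\bullet)=\coprod_{x\in X}(\{x\},A_x)$ is the $\lambda$-directed colimit over the $\lambda$-small subsets $S\subseteq X$ of the restrictions $(S,A_\bullet|_S)$, and each $(\{x\},A_x)$ is a $\lambda$-filtered colimit of $(\{x\},A_x^i)$ with $A_x^i$ $\lambda$-presentable (writing $A_x$ as such a colimit in $\catD$); since $\lambda$-filtered colimits commute with $\lambda$-small coproducts, this exhibits $(X,A_\bullet)$ as a $\lambda$-filtered colimit of objects of $\mathcal G$. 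For presentability, because $\lambda$-presentable objects are closed under $\lambda$-small colimits it suffices to treat the singletons $(\{*\},A)$ with $A$ $\lambda$-presentable; here $\Fam{\catD}((\{*\},A),(Y,B_\bullet))\cong\coprod_{y\in Y}\catD(A,B_y)$, and I would verify that this commutes with $\lambda$-filtered colimits.

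The hard part will be exactly this last commutation check, that is, making the explicit form of $\lambda$-filtered colimits in $\Fam{\catD}$ precise and confirming that the hom-functors out of the generators commute with them. Everything else is formal bookkeeping, but here one must identify the colimit $(X_\infty,B_\bullet)$ of a $\lambda$-filtered diagram as $X_\infty=\mathrm{colim}_jX_j$ in $\Set$ together with, for each $x_\infty\in X_\infty$, an object $B_{x_\infty}$ that is a $\lambda$-filtered colimit in $\catD$ taken along the fibre of $x_\infty$, and then commute $\catD(A,-)$ past the fibrewise colimit (using $\lambda$-presentability of $A$) and the disjoint union past the $\lambda$-filtered colimit of index sets. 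Granting this, the recognition theorem of \citep{adamek1994locally} gives that $\Fam{\catD}$ is locally $\lambda$-presentable, completing the proof.
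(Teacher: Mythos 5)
Your proposal is correct, but it takes a genuinely different route from the paper. The paper's proof is fibrational and citation-driven: after getting cocompleteness of $\Fam{\catD}$ from Lemma~\ref{lem:completeness-cocompleteness}, it invokes the Makkai--Par\'e theorem on accessibility of total categories of indexed categories, checking that $X\mapsto \Cat[X,\catD]$ has accessible fibres, accessible reindexing functors (via the left Kan extension adjoints), a locally presentable base $\Set$, and preserves limits; accessibility plus cocompleteness then gives local presentability. You instead verify the recognition theorem by hand: you take as generators the families $(S,(A_s)_{s\in S})$ with $|S|<\lambda$ and each $A_s$ $\lambda$-presentable, reduce their presentability to singletons $(\{*\},A)$ via closure of $\lambda$-presentables under $\lambda$-small colimits, and check density by writing any family as a $\lambda$-directed colimit over $\lambda$-small subsets of the index set followed by fibrewise $\lambda$-filtered colimits. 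The computation you flag as the hard part --- that $\lambda$-filtered colimits in $\Fam{\catD}$ are computed as the colimit of index sets together with $\lambda$-filtered fibrewise colimits over the (themselves $\lambda$-filtered) fibres, so that $\coprod_{y}\catD(A,B_y)$ commutes with them --- does go through, using that filtered colimits of sets decompose over the colimit of the index sets. What each approach buys: yours is elementary and self-contained, avoids the accessible-categories machinery entirely, produces an explicit generating set, and even tracks the presentability rank (local $\lambda$-presentability is preserved); the paper's is much shorter given the cited theorem, and its indexed-category formulation is reused almost verbatim for the analogous statements about $\catFLi$ elsewhere in the paper, so it fits the fibrational viewpoint that organizes the whole development.
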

\begin{proof}
Since $\catD $ is cocomplete, $\Fam{\catD}$ is cocomplete (see Lemma \ref{lem:completeness-cocompleteness}). Moreover, 
it is clear that the indexed category defined by $X\mapsto \Cat[X, \catD ] $ satisfies the conditions of 
 \cite[Definition~5.3.1]{zbMATH00044674}, since:
\begin{enumerate}[(1)]
	\item for each $X\in\Set$, $\Cat[X, \catD ]  = \catD ^X $ is locally presentable and, hence, accessible;
	\item for any function $f$, $\Cat[X, \catD ] $ is accessible by Lemma~\ref{lem:right-adjoint-accessible-preserves}, since it has a left adjoint given by the left Kan extension $\lan_f$; see \eqref{eq:pointwise-kan-extensions};
	\item $\Set $ is locally presentable;
	\item $X\mapsto \Cat[X, \catD ] $  preserves any limit of $\Set ^\op $.
\end{enumerate}
Therefore,  $\Fam{ \catD } $ is accessible by \cite[Theorem~5.3.4]{zbMATH00044674}. This completes the proof that $\Fam{ \catD } $  is locally presentable.
\end{proof}

As a consequence, we have that:

\begin{corollary}
	$\Fam{\catLi} $ is cartesian closed and locally presentable and, hence, has $\mu\nu$-polynomials.
\end{corollary}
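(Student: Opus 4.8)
The plan is to assemble the statement directly from the three preceding results, since each of the three asserted properties of $\Fam{\catLi}$ has already been isolated as a general fact. No new construction is needed; the work is purely in checking that the hypotheses of those results apply to $\catLi$.

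First I would establish cartesian closedness. Proposition \ref{theo:Fam-closed-structure} asserts that $\Fam{\catD}$ is cartesian closed whenever $\catD$ has biproducts $(+,\initial)$ and (infinite) products. By our standing assumption $\catLi$ carries biproducts, so the only hypothesis that needs checking is the existence of arbitrary (small) products in $\catLi$. This is immediate from local presentability: as recalled at the start of \S\ref{subsec:locallypresentable-Set}, every locally presentable category is complete, and in particular has all set-indexed products. Applying Proposition \ref{theo:Fam-closed-structure} with $\catD = \catLi$ then yields that $\Fam{\catLi}$ is cartesian closed, with exponentials computed by the formula for $\mathcal{YZ}$ given there.

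Second, local presentability of $\Fam{\catLi}$ is exactly Proposition \ref{theo:Fam-locally-presentable} applied to the locally presentable category $\catLi$. Having established this, the existence of $\mu\nu$-polynomials follows at once: Proposition \ref{prop:locally-presentable-has-munupolynomials} states that any locally presentable category has $\mu\nu$-polynomials, so applying it to $\Fam{\catLi}$ completes the argument.

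There is essentially no genuine obstacle. The corollary is a direct consequence of the three cited propositions chained together, and the only point that requires a moment's attention is verifying the infinite-product hypothesis of Proposition \ref{theo:Fam-closed-structure}, which is supplied by the completeness that accompanies local presentability. I would note, for emphasis, that the monadicity of $\catLi$ over $\Set$ plays no role in this particular corollary; all the substantive categorical content has already been discharged in establishing the three propositions it invokes.
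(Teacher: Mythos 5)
Your proposal is correct and follows exactly the route the paper intends: the corollary is stated there as an immediate consequence ("As a consequence, we have that:") of Propositions~\ref{theo:Fam-closed-structure}, \ref{theo:Fam-locally-presentable} and \ref{prop:locally-presentable-has-munupolynomials}, applied with the standing assumptions on $\catLi$ (biproducts, local presentability, hence completeness and thus infinite products). Your added observations --- that the only hypothesis needing verification is the existence of infinite products, supplied by completeness of locally presentable categories, and that monadicity over $\Set$ is not used here --- are both accurate and consistent with the paper.
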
 	

The results proven above do not guarantee that $\Fam{\catLi ^\op} $ has $\mu\nu$-polynomials, since $\Fam{\catLi ^\op} $ is not, generally, locally presentable. However, in \ref{subsec:locallypresentable-Fam}, we show that $\catFLi $ yields a model for the target language and, hence,  $\Fam{\catLi} $ and $\Fam{\catLi ^\op} $ have $\mu\nu$-polynomials (and are cartesian closed).

\subsection{$\catFLi$ is a $\Sigma$-bimodel for inductive and coinductive types}\label{subsec:locallypresentable-Fam}

We establish that $\catFLi : \Set ^\op\to\Cat $  yields a model for the target language in Corollary~\ref{coro:target-language-concrete-model}. By the results of Section \ref{sec:grothendieck-constructions}, this provides proof that $\GrothSet\catFLi \cong \Fam{\catLi} $ and $\GrothSet\catFLi  ^\op\cong \Fam{\catLi ^\op} $ are bicartesian categories with $\mu\nu$-polynomials by Corollary \ref{coro:suitL-indexed-category-total-category-munu}. We start by proving that $\catFLi$ is a $\Sigma $-bimodel for inductive and coinductive types.

Since $\Set $ is locally presentable,
$\Set $ has $\mu\nu $-polynomials by Proposition \ref{prop:locally-presentable-has-munupolynomials}.
Moreover, since $\catLi $ is complete and cocomplete, $\catFLi \left( X\right) = \catLi ^X $
is complete and cocomplete as well; namely, the limits and colimits are constructed pointwise. In particular, 
$\catFLi \left( X\right) = \catLi  ^X $ has biproducts (also constructed pointwise) $(+, \initial ) $.

It should be noted that, for any function $f: X\to Y $ in $\Set $,
we have that
\begin{equation}
	\catLi  ^f = \catFLi \left( f\right) : \Cat\left[ Y, \catLi \right] \to \Cat\left[ X, \catLi \right]
\end{equation}
has a (fully faithful) left adjoint and a  (fully faithful) right adjoint, given by the left and right Kan extensions respectively;\footnote{The basic definition of Kan extension can be found, for instance, in \citep[Chapter~X]{zbMATH03367095}. Although one can verify it directly, \ref{eq:pointwise-kan-extensions} follows from the general result about pointwise Kan extensions; see, for instance, \citep{zbMATH03362059} or \citep[Chapter~4]{zbMATH02172008}.}  
namely, for each
$\famX : X \to \catLi $,
\begin{equation}\label{eq:pointwise-kan-extensions}
	\ran _ f \famX (x) = \prod _{i\in f^{-1}(x) } \mathcal{X} (i), \qquad\qquad \lan _ f \famX (x) = \coprod _{i\in f^{-1}(x) } \famX (i) .
\end{equation}
Therefore, we can conclude that: (1)  $\catFLi \left( f \right) $ preserves limits, colimits and, consequently, biproducts; (2) $\catFLi \left( f \right) $ preserves initial algebras and terminal coalgebras by Theorem \ref{theo:right-adjoint-preserves-terminal-coalgebras}. 
Furthermore,  
$\catFLi \left( f\right) $ \textit{strictly} preserves 
biproducts (and the zero object), initial algebras and terminal coalgebras, provided that $\catLi$ has chosen ones. 

Finally, it is clear that we have the isomorphism
\begin{eqnarray*}
\catFLi \left( X\sqcup Y \right) & = & \Cat\left[ X\sqcup Y, \catLi \right] \\
&\cong  &\Cat\left[ X, \catLi \right] \times \Cat\left[ Y, \catLi \right]  \\
&= &  \catFLi \left( X\right) \times \catFLi \left( Y \right)  
\end{eqnarray*}
and, hence, $\catFLi $ is extensive. Indeed, we have
\begin{equation}
\equivalenceextensive ^{(X,Y)} :  \catFLi \left( X\right) \times \catFLi \left( Y \right) \to \catFLi \left( X\sqcup Y \right) 
\end{equation}
in which  $\equivalenceextensive ^{(X,Y)} \left( \famX , \famY \right) (i) = \famX (i) $ if $i\in X $ and
$\equivalenceextensive ^{(X,Y)} \left( \famX , \famY \right) (j) = \famY (j) $ if $j\in Y $.

\begin{therm}\label{theo:Sigma-bimodel-inductive-coinductive-FAM}
	The strictly indexed category
$\catFLi $ is a $\Sigma$-bimodel for inductive and coinductive types. Therefore $\GrothSet \catFLi $ and $\GrothSet \catFLi ^\op $ have $\mu\nu $-polynomials.
\end{therm}	
\begin{proof}
	
It only remains to prove that all the endomorphisms in  $\mnPoly _ \catFLi $ have initial algebras and terminal coalgebras. In order to do so, by Lemma \ref{lem:accessible-functors-have-initial-algebras-terminal-coalgebras}, it is enough to prove that $\mnPoly _ \catFLi $ is a subcategory of the category of locally presentable categories and accessible functors between them.

The subcategory of locally presentable functors and accessible functors is closed under products. That is to say, if $\catD , \catD '  $ are locally presentable categories and $E,J $ are accessible functors between locally presentable categories, we get that $\terminal , \catD\times \catD '$ are locally presentable categories, 
$(E,J) $ is accessible,  and the projections are accessible (since they have right adjoints).

Moreover, $\catLi ^X $ is locally presentable for any set $X$ since $\catLi $ is locally presentable.
Also, since the biproduct $ + : \catLi ^X\times\catLi ^X\to\catLi ^X $ has a right adjoint, it is accessible.
Furthermore, since it has a right adjoint, we get that 
$ \catLi (f) $ is accessible for any function $f:X\to Y $. 

Finally, by \citep[Proposition~3.8]{ITA_2002__36_2_195_0}, assuming their existence, $\mu h $ and $\nu h $ are
accessible whenever $h: \catD '\times\catD\to\catD $ is accessible and $\catD ', \catD $ are locally presentable categories. 
	
Since isomorphisms between locally presentable categories are accessible, this completes the proof that all functors in $\mnPoly _ \catFLi $  are accessible functors between locally presentable categories. 

Therefore, any endomorphism in $\mnPoly _ \catFLi $  has initial algebra and terminal coalgebra by Lemma~\ref{lem:accessible-functors-have-initial-algebras-terminal-coalgebras}. This completes the proof.
\end{proof}

\subsection{$\catFLi$ is a $\Sigma$-bimodel for function types}
We consider the cartesian dependent type theory $\catFSet : \Set ^\op \to\Cat $, $ X\mapsto \Cat\left[ X, \Set \right] $.
It is well-known that $\catFSet $ satisfies full, faithful, democratic comprehension with $\Pi $-types and strong $\Sigma$-types~\citep{jacobs1999categorical}. 
In this context, we have that $\catFLi $ has $\Pi $- types by \citep[Theorem~5.2.9]{vakar2017search}. Finally, $\catFLi $ indeed has $\Sigma $-types and $\multimap$-types by \citep[Theorem~5.6.3]{vakar2017search}. 

This proves that $\catFLi$ is a $\Sigma$-bimodel for function types. By Theorem \ref{theo:Sigma-bimodel-inductive-coinductive-FAM},
we conclude:
\begin{therm}\label{coro:target-language-concrete-model}
	$\catFLi : \Set ^\op \to \Cat $ yields a $\Sigma$-bimodel for inductive, coinductive, and function types.
\end{therm}

\begin{corollary}\label{coro:source-grothendieck-construction}
	The categories $\Fam{\catLi} $ and $\Fam{\catLi ^\op} $ are bicartesian closed categories with $\mu\nu $-polynomials.
\end{corollary}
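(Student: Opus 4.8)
The plan is to recognise that this corollary is a packaging of the general theory of Section~\ref{sec:grothendieck-constructions} applied to the concrete indexed category $\catFLi$, transported across the isomorphisms $\GrothSet\catFLi \cong \Fam{\catLi}$ and $\GrothSet\catFLi^\op \cong \Fam{\catLi^\op}$ recorded at the start of~\ref{subsec:locallypresentable-Fam-basic-properties}. First I would invoke Theorem~\ref{coro:target-language-concrete-model}, which states precisely that $\catFLi : \Set^\op \to \Cat$ is a $\Sigma$-bimodel for inductive, coinductive, and function types. This is exactly the hypothesis demanded by the general machinery, so no new verification of the $\Sigma$-bimodel axioms is needed here.

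Next I would assemble the bicartesian closed structure on the total categories from the relevant results. Since $\catFLi$ is a $\Sigma$-bimodel for function types, it is in particular a $\Sigma$-bimodel for tuple types, and Corollary~\ref{coro:suitL-indexed-category-total-category-munu} already bundles the closed structure (finite products and exponentials, via Propositions~\ref{prop:grothendieck-products-covariant}, \ref{theo:grothendieck-products-contravariant} and Theorems~\ref{theo:grothendieck-ccc-covariant}, \ref{theo:grothendieck-ccc-contravariant}) together with the $\mu\nu$-polynomials (Theorems~\ref{theo:Inductive-Coinductive-Completeness-TotalCategory} and~\ref{theo:Inductive-Coinductive-Completeness-TotalCategory-contravariant}) for both $\GrothSet\catFLi$ and $\GrothSet\catFLi^\op$. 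Because $\catFLi$ is simultaneously a $\Sigma$-bimodel for inductive and coinductive types, it is an extensive strictly indexed category with strictly indexed finite biproducts, hence with initial and terminal objects in every fibre; Corollaries~\ref{coro:cocartesianstructure-in-the-cocartesian-csategory} and~\ref{coro:cocartesian-structure-GrothCL-contravariant} therefore supply the finite coproducts, and Theorem~\ref{theo:distributive-property-total-category} (with its codual) gives distributivity. Putting the closed structure and the coproducts together upgrades ``closed with $\mu\nu$-polynomials'' to ``bicartesian closed with $\mu\nu$-polynomials'' for $\GrothSet\catFLi$ and $\GrothSet\catFLi^\op$.

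The final step is to transport this conclusion along the isomorphisms of categories $\GrothSet\catFLi \cong \Fam{\catLi}$ and $\GrothSet\catFLi^\op \cong \Fam{\catLi^\op}$. Since being bicartesian closed and possessing $\mu\nu$-polynomials are properties stable under isomorphism (indeed under equivalence) of categories, the statement for $\Fam{\catLi}$ and $\Fam{\catLi^\op}$ follows at once.

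As essentially all the work has been done in the general Grothendieck-construction theory and in Theorem~\ref{coro:target-language-concrete-model}, there is no genuine obstacle. The only point deserving a moment's care is making sure the coproducts are obtained \emph{independently} of the products and exponentials --- that is, that the extensivity of $\catFLi$ is actually used via Corollary~\ref{coro:cocartesianstructure-in-the-cocartesian-csategory} --- so that the total categories come out bicartesian closed and not merely cartesian closed.
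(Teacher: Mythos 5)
Your proposal is correct and follows essentially the same route as the paper: the corollary is stated there as an immediate consequence of Theorem~\ref{coro:target-language-concrete-model} (that $\catFLi$ is a $\Sigma$-bimodel for inductive, coinductive, and function types) combined with the general Grothendieck-construction results of Section~\ref{sec:grothendieck-constructions} — distributivity/coproducts from extensivity and Corollary~\ref{coro:suitL-indexed-category-total-category-munu} for the closed structure and $\mu\nu$-polynomials — transported across $\GrothSet\catFLi \cong \Fam{\catLi}$ and $\GrothSet\catFLi^\op \cong \Fam{\catLi^\op}$. Your explicit attention to obtaining the coproducts via extensivity (Corollaries~\ref{coro:cocartesianstructure-in-the-cocartesian-csategory} and~\ref{coro:cocartesian-structure-GrothCL-contravariant}) rather than only the closed structure simply makes visible what the paper leaves implicit.
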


\subsection{$\Fam{\catLi} $ and $\Fam{\catLi ^\op} $ are complete and cocomplete}
Concrete models provide a significant advantage in terms of the extra properties they can satisfy, which we leverage in our open semantic logical relations.
In particular, we have:

\begin{lemma} \label{lem:completeness-cocompleteness}
	$\Fam{\catLi} $ and $\Fam{\catLi ^\op} $ are complete and cocomplete.
\end{lemma}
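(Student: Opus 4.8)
The plan is to reduce both statements, via the isomorphisms $\Fam{\catLi}\cong \GrothSet\catFLi$ and $\Fam{\catLi^{\op}}\cong \GrothSet\catFLi^{\op}$, to the (co)completeness of a total category of a bifibration over $\Set$, built from the (co)completeness of the base and the fibres together with the adjoint behaviour of reindexing. A key point I must respect is to avoid invoking Proposition~\ref{theo:Fam-locally-presentable}, since that proposition already cites the present lemma; the argument therefore has to proceed directly through the fibrational structure rather than through accessibility.

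First I would record the ingredients uniformly for $\catD\in\{\catLi,\catLi^{\op}\}$. As $\catLi$ is locally presentable it is complete and cocomplete, and hence so is $\catLi^{\op}$; consequently each fibre $\catD^X=\Cat[X,\catD]$ is complete and cocomplete, with all (co)limits computed pointwise. For every $f:X\to Y$ in $\Set$, the reindexing functor $\catD^f$ is precomposition with $f$, and since pointwise (co)limits are preserved under such reindexing, $\catD^f$ is both continuous and cocontinuous; moreover it has a left adjoint $\lan_f$ and a right adjoint $\ran_f$ given by the Kan-extension formulas \eqref{eq:pointwise-kan-extensions}, which exist because $\catD$ has all small coproducts, resp. products. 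Thus $\catFLi$ and $\catFLi^{\op}$ are bifibrations over the bicomplete base $\Set$ with bicomplete fibres.

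With these ingredients in place, completeness follows from the standard description of limits in the total category of a fibration (see \citep{MR0213413}): a limit of a diagram $j\mapsto (B_j,X_j)$ is obtained by first forming $B=\lim_j B_j$ in $\Set$ and then $\lim_j \catFLi(p_j)(X_j)$ in the fibre over $B$, using continuity of the $\catFLi(p_j)$. Dually, cocompleteness is obtained from the opfibration structure: via the cobase-change functors $\lan_f$, which are cocontinuous as left adjoints, a colimit of $j\mapsto(B_j,X_j)$ is computed as $\colim_j \lan_{c_j}(X_j)$ in the fibre over $B=\colim_j B_j$. Both constructions apply verbatim to $\catFLi$ and to $\catFLi^{\op}$, yielding that $\Fam{\catLi}$ and $\Fam{\catLi^{\op}}$ are complete and cocomplete.

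The main obstacle is the colimit case. Limits in a fibred total category are routine once reindexing is continuous, but arbitrary colimits genuinely require the opfibration (cobase-change) structure together with its cocontinuity, which is precisely why I isolate the existence of the left adjoints $\lan_f$ as a separate step. I also need to ensure that the argument for $\Fam{\catLi^{\op}}$ never silently appeals to local presentability — which fails for $\catLi^{\op}$ in general by the Gabriel--Ulmer duality recalled above — so the proof must run purely through the bifibrational (co)completeness of the total category and not through accessibility.
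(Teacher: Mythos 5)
Your proof is correct and follows essentially the same route as the paper: the paper's own proof opens by asserting exactly your argument, namely that (co)completeness "from a fibred perspective" follows from $\catFLi$ having indexed limits and colimits over the complete and cocomplete base $\Set$, which is the general bifibration argument you develop in detail. The only difference is one of emphasis — the paper leaves that general statement as a known result and instead writes out explicit constructions of just the two (co)limits it needs later (coproducts, via the free-coproduct-completion structure, and pullbacks in $\Fam{\catLi^{\op}}$, computed as a base pullback in $\Set$ together with a pushout in the fibre $\catLi^{W\times_{(f,g)}X}$), whereas you give the full general argument, including the correct and important caveat that the proof must not pass through local presentability, since that fails for $\catLi^{\op}$ and would also make the appeal to Proposition~\ref{theo:Fam-locally-presentable} circular.
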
 
\begin{proof}
	This is a well known result and, from a fibred perspective, follows from the fact that $\catFLi$ has indexed limits and colimits (and $\Set$ is cocomplete and complete).

We only need, however, the coproducts and pullbacks that we sketch below.

Coproducts: it is clear that $\Fam{\catLi} $ and $\Fam{\catLi ^\op} $ have coproducts, $\Fam{-}$ is the cocompletion under coproducts.
The coproduct of a (possibly infinite) family $\left( W_i, w_i \right) _{i\in L} $ of objects in $\Fam{\catLi} $
(respectively $\Fam{\catLi ^\op} $) is given by the object $\left( \bigsqcup\limits _{i\in L} W_i , \langle w_i \rangle _{i\in L} \right) $ in $\Fam{\catLi} $ (respectively in $\Fam{\catLi ^\op} $), where $\langle w_i \rangle$  denotes 
the family $\bigsqcup\limits _{i\in L} W_i  \to \catLi $ defined by $w_i $ in each component $W_i$.
	
Pullbacks: let $(f, f'): (W,w)\to (Y,y) $ and $(g, g'): (X,x)\to (Y,y) $ be morphisms of $\Fam{\catLi ^\op}$. We consider the pullback $W\times _{(f,g)} X $ of $f$ along $g$, with projections $p_W : W\times _{(f,g)} X\to W $ and $p_X : W\times _{(f,g)} X\to X $. Denoting by $s$ the pushout of \eqref{eq:span-pushout}  in the category $ \catFLi\left( W\times _{(f,g)} X \right) =  \catLi ^{W\times _{(f,g)} X } $, the pullback of $(f, f'): (W,w)\to (Y,y) $ and $(g, g'): (X,x)\to (Y,y) $ in $\Fam{\catLi ^\op}$ is given by $\left( W\times _{(f,g)} X, s \right) $. 
	\begin{equation}\label{eq:span-pushout}
	\begin{tikzpicture}[x=5cm, y=1cm]
		\node (a) at (0,0) {$    y\circ g\circ p_X = y\circ f\circ p_W $};
		\node (b) at (1, 0) { $w\circ p_W   $ };
		\node (c) at (-1,0) {$    x \circ p_X $ };
		\draw[->] (a)--(b) node[midway,above] {$ \catFLi (p_W) (f') $};
		\draw[->] (a)--(c) node[midway,above] {$ \catFLi (p_X) (g') $};
	\end{tikzpicture} 
\end{equation}  
\end{proof}

\section{Concrete denotational semantics for CHAD}\label{sec:concrete-semantics-specification}

In this section, we will establish a concrete denotational semantics for both the source and target languages, and establish CHAD's specification.

\subsection{The concrete model $\Fam{\Set}$ for the source language}\label{subsect:FamSet-concrete-model}
We define a denotational semantics for our source language by interpreting coproducts of Euclidean spaces as families of sets, i.e., we interpret our language in $\Fam{\Set}$. 
This approach offers technical advantages as it is the natural way to interpret functions between sum types in our setting.

Below, we establish some notation to talk about morphisms, objects and coproducts in $\Fam{\Set } $. We start by recalling that 
the category $\Fam{\Set}\simeq \Cat[\mathsf{2}, \Set ] $ is locally presentable (see Proposition \ref{theo:Fam-locally-presentable}). Hence, by Proposition \ref{prop:locally-presentable-has-munupolynomials}, $\Fam{\Set}$ has $\mu\nu $-polynomials.
This proves that $\Fam{\Set}$  is a suitable concrete model for our source language, since $\Fam{\Set}\simeq \Cat[\mathsf{2}, \Set ] $ is cartesian closed.

\begin{proposition}\label{prop:FAMSET-as-CONCRETE-MODEL}
	The category $\Fam{\Set}\simeq \Cat[\mathsf{2}, \Set ] $ is complete, cocomplete, cartesian closed and has $\mu\nu$-polynomials.
\end{proposition}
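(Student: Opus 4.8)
The plan is to verify the four asserted properties of $\Fam{\Set}$ one at a time, leaning heavily on results already established in the excerpt. The statement to prove is that $\Fam{\Set}\simeq \Cat[\mathsf{2},\Set]$ is complete, cocomplete, cartesian closed and has $\mu\nu$-polynomials. The key strategic observation is that almost everything reduces to the single fact that $\Fam{\Set}$ is \emph{locally presentable}, which in turn is an immediate instance of Proposition~\ref{theo:Fam-locally-presentable} taking $\catD = \Set$ (and $\Set$ is the prototypical locally presentable category).

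First I would record that $\Set$ is locally presentable, so by Proposition~\ref{theo:Fam-locally-presentable} we have that $\Fam{\Set}$ is locally presentable. Completeness and cocompleteness then follow immediately, since every locally presentable category is complete and cocomplete (cocomplete by definition, complete by the standard result recalled at the start of \S\ref{subsec:locallypresentable-Set}, citing \citep[pag.~45]{adamek1994locally}); this also follows from Lemma~\ref{lem:completeness-cocompleteness}. Next, having local presentability in hand, Proposition~\ref{prop:locally-presentable-has-munupolynomials} directly yields that $\Fam{\Set}$ has $\mu\nu$-polynomials. So three of the four properties are essentially one-line invocations of prior results.

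The remaining property, cartesian closedness, I would handle via the equivalence $\Fam{\Set}\simeq \Cat[\mathsf{2},\Set]$. The category $\Cat[\mathsf{2},\Set]$ is a presheaf (functor) category on the small category $\mathsf{2}$ (the arrow category, or equivalently the two-object discrete category depending on the intended reading; since $\Fam{\Set}$ is the free coproduct cocompletion of $\Set$, the relevant indexing is by the discrete category), and functor categories into a cartesian closed category with all the needed limits and colimits are themselves cartesian closed. Alternatively, and more in the spirit of the paper, cartesian closedness follows directly from Proposition~\ref{theo:Fam-closed-structure} applied to $\catD = \Set$: since $\Set$ has biproducts only trivially, one should instead note that $\Set$ has (infinite) products, so $\Fam{\Set}$ is cartesian closed with the exponential formula given there. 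Thus I would simply cite Proposition~\ref{theo:Fam-closed-structure}.

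The main obstacle, such as it is, is not mathematical depth but rather making sure the hypotheses of each cited result genuinely apply to $\Set$ and to $\Fam{\Set}$ without friction: in particular, Proposition~\ref{theo:Fam-closed-structure} and Proposition~\ref{theo:Fam-locally-presentable} are phrased for a base category $\catD$ with biproducts $(+,\initial)$, whereas $\Set$ does not have biproducts. The careful point is therefore to confirm that the specific instance $\Fam{\Set}$ is nonetheless covered — either by observing that the cartesian closedness argument for functor categories $\Cat[\mathsf{2},\Set]$ is standard and independent of the biproduct hypothesis, or by checking that the relevant parts of the cited proofs (local presentability via \citep[Theorem~5.3.4]{zbMATH00044674}, and the exponential construction) only use the product/colimit structure that $\Set$ genuinely possesses. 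Once this compatibility is confirmed, the proof is a short assembly of the four cited facts, and no substantial new computation is required.
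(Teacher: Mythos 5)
Your primary chain of reasoning is the same as the paper's and is correct: $\Set$ is locally presentable, so $\Fam{\Set}$ is locally presentable by Proposition~\ref{theo:Fam-locally-presentable} (whose statement, note, carries \emph{no} biproduct hypothesis --- only Proposition~\ref{theo:Fam-closed-structure} does --- so there is no friction to resolve there); completeness and cocompleteness then follow from local presentability, the existence of $\mu\nu$-polynomials follows from Proposition~\ref{prop:locally-presentable-has-munupolynomials}, and cartesian closedness comes from $\Fam{\Set}\simeq \Cat[\mathsf{2},\Set]$ being a functor category over a small category into $\Set$. That is exactly the paper's argument.

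However, two of your side remarks are wrong, and one of them would sink the proof if you actually followed it. First, $\mathsf{2}$ must be read as the interval (arrow) category, not the discrete two-object category: the free coproduct completion $\Fam{\Set}$ is equivalent to the arrow category $\Set^{\to}=\Cat[\mathsf{2},\Set]$ via $(A_l)_{l\in L}\mapsto \left(\coprod_{l\in L} A_l\to L\right)$, whereas the discrete reading gives $\Set\times\Set$, which is \emph{not} $\Fam{\Set}$. (Both readings are presheaf categories, so your cartesian-closedness conclusion survives, but your resolution of the ambiguity is backwards.) Second, and more seriously, the fallback route ``simply cite Proposition~\ref{theo:Fam-closed-structure} with $\catD=\Set$'' is not available. $\Set$ has no biproducts at all ($\initial=\emptyset$ is not terminal; ``has biproducts only trivially'' is not a meaningful weakening), and the biproduct hypothesis there is essential rather than decorative: the exponential formula of that proposition --- index set $\Fam{\catD}\left(\left(Y,\famY\right),\left(Z,\famZ\right)\right)$ with fibre $\prod_{y\in Y}\famZ(g(y))$ --- is \emph{false} in $\Fam{\Set}$. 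Concretely, take $(Y,\famY)=(Z,\famZ)=(\{*\},\{0,1\})$: the formula yields an object with four indices, each carrying a two-element fibre, and homming out of the terminal object $(\{*\},\{*\})$ into it gives $8$ elements, whereas $\Fam{\Set}\left((\{*\},\{*\})\times(Y,\famY),(Z,\famZ)\right)$ has only $4$. The genuine exponential in $\Fam{\Set}$ has index set $\Set(Y,Z)$ with fibre $\prod_{y\in Y}\Set(\famY(y),\famZ(g(y)))$; it is precisely the biproduct currying $\catD(a\times b,c)\cong\catD(a,c)\times\catD(b,c)$ that makes the paper's formula correct for bases like $\Vect$. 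So of the two options you leave open ``to be confirmed'', only the functor-category argument works, and your proof should commit to it.
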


Henceforth, we use the notation $\left(A_l\right) _{l\in L} = \left(L, A^\ast\right) \in \Fam{\Set}$ to refer to the object of $\Fam{\Set }$ that corresponds to the pair $\left(L, A^\ast\right)$, where $A^\ast$ 
assigns to each $l\in L$ the set $A_l$. This is a standard way to represent families of sets, where the index set $L$ and the set $A_l$ associated to each index $l$ are explicitly given. 

\subsubsection{Morphisms between families of sets}
Recall that a morphism between families $\left(A_l\right) _{l\in L} $ and  $\left(B_i\right) _{i\in I} $ in $\Fam{\Set} $ is a pair $\left( \underline{f} , f \right)$ where $\underline{f}:L\to I $ is a function and $f = \left( f_l: A_l \to B_{\underline{f}(l)} \right) _{l\in L} $ is family of functions. By abuse of language, we often denote such a morphism $\left( \underline{f} , f \right)$  by $f$, keeping $\underline{f} $ implicit.

\subsubsection{Singleton families}\label{subsect:singleton-families}
For a family $\left(A_l\right) _{l\in L} = \left(L, A^\ast\right) \in \sobjects{\Fam{\Set }}$ where 
 $L = \left\{ 0\right\}$ is a singleton, we abuse the notation and write $A_0$ instead of $\left(A_l\right)_{l\in L}$. For example, we use the notation $\RR^n$ to denote the singleton family in $\Fam{\Set }$ whose only object is the set $\RR^n$.
 
In this case, a morphism $f:\RR^n \to \RR^m$ in $\Fam{\Set}$ corresponds to a morphism in $\Set$. More precisely, the functor $\Set \to \Fam{\Set}$ given by $A \mapsto A$ is fully faithful.

\subsubsection{Coproducts of families of sets}
Let $\left( \left(A_{(l,i)}\right)_{l\in L_i}\right) _{i\in I} = \left(L_i, A^\ast_i\right)_{i\in I} $ be a (possibly infinite) family of objects of $\Fam{\Set}$. Recall that the coproduct  $\coprod\limits _{i\in I} \left(L_i, A^\ast_i\right) $
in $\Fam{\Set}$ is given by $\left( \coprod\limits _{i\in I} L_i, \langle A^\ast_i\rangle _{i\in I} \right) $. 

Using the notation established in \ref{subsect:singleton-families}, we see that, for a family of singleton families $\left( A_i\right) _{i\in I}$ in $\Fam{\Set }$, the coproduct $\coprod _{i\in I} A_i $ is the same as the family $\left( A_i\right) _{i\in I}$
considered as an object in $\Fam{\Set}$. Hence, in this context, we often denote  by $ \coprod _{i\in I} A_i $ the object $\left( A_i\right) _{i\in I}$ in $\Fam{\Set }$.

For instance, consider a family of natural numbers $\left( n_i \right) _{i\in I} $, and consider, for each $i\in I $, the object $\RR^{n_i}$ of $\Fam{\Set } $. In this setting, we have that
$ \coprod _{i\in I} \RR ^{n_i} $ is the family $\left( \RR^{n_i} \right) _{i\in I}$.

On one hand, it should be noted that, in this setting, a morphism 
\begin{equation} \label{eq:function-in-famSet}
f: \coprod _{i\in I} \RR ^{n_i} \to \coprod _{j\in J} \RR ^{m_j}
\end{equation} 
in $\Fam{\Set}$ is not the same as a function $\coprod _{i\in I} \RR ^{n_i} \to \coprod _{j\in J} \RR ^{m_j}$ in $\Set $. More precisely, the functor $\coprod : \Fam{\Set}\to \Set $ defined by 
$$\left( \left( A_i\right) _{i\in I} = \coprod _{i\in I} A_i \right)  \mapsto \coprod _{i\in I} A_i $$
is not full.

On the other hand, it is worth noting that there is bijection between morphisms of the form \eqref{eq:function-in-famSet} in $\Fam{\Set}$ and functions $g: \coprod\limits _{i\in I} \RR ^{n_i} \to \coprod _{j\in J} \RR ^{m_j}$ in $\Set $  such that,  for each $i\in I$, there is $j\in J $ such that $ g(\RR ^{n_i})\subset \RR ^{m_j}$.

\subsubsection{Products of families of sets}
Recall that, given  objects $\left(A_l\right)_{l\in L}$ and $\left(B_i\right)_{i\in I}$ of $\Fam{\Set}$, the product $\left(A_l\right)_{l\in L}\times \left(B_i\right)_{i\in I}$  is given by $\left( A_l\times B_i\right) _{(l,i)\in L\times I}$.

\subsection{The concrete model $\catFV$ for the target language} \label{subsect:FVECT-concrete-model}
We provide a denotational semantics for our target language by interpreting spaces of (co)tangent vectors as well as derivatives of differentiable functions in terms of families of vector spaces in \ref{subsect:concrete-semantics-functors-target-language}. To do so, we consider
the indexed category $\catFV : \Set ^\op \to\Cat$ which associates each set $X$ with $\Vect ^X$. 

It should be noted that $\catFV $ is $\catFLi$ as considered in \ref{subsec:locallypresentable-Fam} taking $\catLi = \Vect $.
By Theorem \ref{coro:target-language-concrete-model}:
\begin{corollary}\label{coro:catFV-is-a-model-for-target-language}
	$\catFV : \Set ^\op \to \Cat $ yields a $\Sigma$-bimodel for inductive, coinductive, and function types. Consequently, \begin{equation}\label{eq:FVetFam} 
		\displaystyle\GrothSet\catFV\cong  \Fam{\Vect }, \qquad\qquad \GrothSet\catFV ^\op\cong  \Fam {\Vect ^\op } 
	\end{equation}
	are bicartesian closed and have $\mu\nu$-polynomials. 
\end{corollary}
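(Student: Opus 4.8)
The plan is to recognize the statement as a direct instantiation of the generic results already established for $\catFLi$, specialized to the coefficient category $\catLi = \Vect$. Recall that throughout \S\ref{subsec:locallypresentable-Fam-basic-properties} the category $\catLi$ was only assumed to be locally presentable, to carry biproducts $(+,\initial)$, and to be monadic over $\Set$; under exactly these hypotheses Theorem~\ref{coro:target-language-concrete-model} shows that $\catFLi$ is a $\Sigma$-bimodel for inductive, coinductive, and function types, while the isomorphisms $\GrothSet\catFLi\cong\Fam{\catLi}$ and $\GrothSet\catFLi^\op\cong\Fam{\catLi^\op}$ are recorded there. Since $\catFV$ is by definition $\catFLi$ for $\catLi=\Vect$, the only genuine task is to check that $\Vect$ meets these three standing hypotheses.

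Accordingly, the first step is to verify: (i) $\Vect$ is locally presentable --- indeed it is locally finitely presentable, being the category of models of a Lawvere theory (equivalently, a finitary variety of algebras); (ii) $\Vect$ has biproducts, since finite products and coproducts coincide (the direct sum), furnishing the biproduct structure $(+,\initial)$ with the zero space as the common initial and terminal object; and (iii) $\Vect$ is monadic over $\Set$ via the free/forgetful adjunction, the monad being the free-vector-space monad, which is a standard consequence of Beck's monadicity theorem (or simply of $\Vect$ being a variety). None of these requires anything beyond citing standard facts, exactly as the paper flags real vector spaces (alongside $\CMon$) as an intended example of $\catLi$.

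With the hypotheses in hand, the conclusion follows formally. Applying Theorem~\ref{coro:target-language-concrete-model} with $\catLi = \Vect$ yields that $\catFV$ is a $\Sigma$-bimodel for inductive, coinductive, and function types. Feeding this into Corollary~\ref{coro:suitL-indexed-category-total-category-munu} gives that $\GrothSet\catFV$ and $\GrothSet\catFV^\op$ are bicartesian closed categories with $\mu\nu$-polynomials; the displayed isomorphisms with $\Fam{\Vect}$ and $\Fam{\Vect^\op}$ are the general identifications of \S\ref{subsec:locallypresentable-Fam-basic-properties} read off at $\catLi=\Vect$.

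There is essentially no deep obstacle here --- the content is packaged entirely in the earlier generic theorems. The one point worth stressing, and where a naive argument would stumble, concerns the contravariant side: $\Fam{\Vect^\op}$ is \emph{not} locally presentable in general, so one cannot obtain its $\mu\nu$-polynomials from the locally-presentable machinery of \S\ref{subsec:locallypresentable-Set} (as one does for $\Fam{\Vect}$). This is precisely why the argument must route through the $\Sigma$-bimodel property and Corollary~\ref{coro:suitL-indexed-category-total-category-munu} rather than through local presentability of the total category directly; the abstract extensivity and fibred-(co)algebra development of Section~\ref{sec:grothendieck-constructions} is what makes the $\GrothSet\catFV^\op$ case go through.
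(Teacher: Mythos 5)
Your proposal is correct and takes essentially the same route as the paper: the paper's proof simply observes that $\catFV$ is $\catFLi$ with $\catLi = \Vect$ and invokes Theorem~\ref{coro:target-language-concrete-model}, the standing hypotheses on $\catLi$ (local presentability, biproducts, monadicity over $\Set$) having been fixed in \S\ref{subsec:locallypresentable-Fam-basic-properties} with $\Vect$ explicitly named as the main intended example. Your explicit verification of those three hypotheses for $\Vect$, and your closing observation that the contravariant side $\Fam{\Vect^\op}$ must be reached via the $\Sigma$-bimodel machinery rather than local presentability, merely spell out what the paper leaves implicit (the latter point matching the paper's own remark at the end of \S\ref{subsec:locallypresentable-Fam-basic-properties}).
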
 	
Moreover, by Lemma \ref{lem:completeness-cocompleteness}, we have:
\begin{corollary}\label{coro:catFV-is-complete-cocomplete}
\eqref{eq:FVetFam}  are complete and cocomplete.
\end{corollary}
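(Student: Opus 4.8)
The plan is to recognize that Corollary~\ref{coro:catFV-is-complete-cocomplete} is a direct instantiation of Lemma~\ref{lem:completeness-cocompleteness}, specialized to $\catLi = \Vect$. First I would recall, as observed at the start of \ref{subsect:FVECT-concrete-model}, that the indexed category $\catFV : \Set^{\op} \to \Cat$ is nothing but $\catFLi$ for the particular choice $\catLi = \Vect$, and that the equivalences $\GrothSet \catFV \cong \Fam{\Vect}$ and $\GrothSet \catFV^{\op} \cong \Fam{\Vect^{\op}}$ of Corollary~\ref{coro:catFV-is-a-model-for-target-language} (Eq.~\eqref{eq:FVetFam}) identify the total categories of interest with the free cocompletions under coproducts to which the lemma refers.

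Next I would check that $\Vect$ meets the standing hypotheses imposed on $\catLi$ in \ref{subsec:locallypresentable-Fam-basic-properties}: it is locally presentable (being a finitary variety of algebras), it carries biproducts $(\oplus, \initial)$ given by direct sums, and it is monadic over $\Set$ through the free--forgetful adjunction whose monad is the free-vector-space monad. Since $\Vect$ is one of the two motivating examples singled out in \ref{subsec:locallypresentable-Fam-basic-properties}, no verification beyond these standard facts is needed, and the running assumptions of that section are satisfied.

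With these observations in place, Lemma~\ref{lem:completeness-cocompleteness} applies verbatim with $\catLi = \Vect$ and immediately yields that $\Fam{\Vect}$ and $\Fam{\Vect^{\op}}$ are complete and cocomplete. Concretely, completeness rests on the facts that each fibre $\Vect^X$ is complete and cocomplete (with limits and colimits formed pointwise) and that $\Set$ is complete and cocomplete, so that coproducts are assembled fibrewise and pullbacks are obtained from pushouts in the relevant fibre, exactly as sketched in the proof of that lemma; cocompleteness on the coproduct side is immediate from $\Fam{-}$ being the free cocompletion under coproducts.

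The main point requiring care is entirely contained in Lemma~\ref{lem:completeness-cocompleteness} itself; at the level of this corollary there is no genuine obstacle, since the work reduces to confirming that $\Vect$ satisfies the running assumptions on $\catLi$. The one subtlety worth flagging is that $\Fam{\Vect^{\op}}$ is complete and cocomplete even though $\Vect^{\op}$ is \emph{not} locally presentable, so for the contravariant case one must genuinely invoke the fibred argument of the lemma rather than any local-presentability shortcut.
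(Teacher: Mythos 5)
Your proof is correct and follows exactly the paper's route: the paper derives Corollary~\ref{coro:catFV-is-complete-cocomplete} by applying Lemma~\ref{lem:completeness-cocompleteness} to $\catLi = \Vect$, using the identification $\catFV = \catFLi$ from \ref{subsect:FVECT-concrete-model} and the standing hypotheses of \ref{subsec:locallypresentable-Fam-basic-properties}, which $\Vect$ satisfies as one of the motivating examples. Your additional remark that $\Vect^{\op}$ is not locally presentable, so the contravariant case genuinely needs the fibred argument of the lemma, is a correct and worthwhile observation consistent with the paper's own ``Duality'' remark.
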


We recall some basic aspects of \eqref{eq:FVetFam} below.

\subsubsection{Constant families of vector spaces}
We introduce notation for objects in $\Fam{\Vect}$ (and $\Fam{\Vect ^\op}$) that correspond to constant families, which is the case for the semantics of our primitive types in the target language. Given a set $N\in\Set$ and a vector space $V\in\Vect$, we denote the corresponding object as $\left(N,\underline{V}\right)$. Here, $\underline{V}: N\to\Vect$ is the family that is constantly equal to $V$, meaning that $\underline{V}(s) = V$ for all $s\in N$.

\subsubsection{Product of families of vector spaces}
Let $(M, m), (N, v) $ be objects of $\GrothSet \catFV \cong \Fam{\Vect} $ (or $\GrothSet \catFV ^\op \cong \Fam{\Vect ^\op}$). By Propositions \ref{prop:grothendieck-products-covariant} and \ref{theo:grothendieck-products-contravariant}, we have that
\begin{equation}
	(M, m)\times (N, v) = \left( M\times N, (i,j)\mapsto m(i)\times v(j) \right) 
\end{equation}
gives the product of $(M, m)$ and $(N, v) $ in $\GrothSet \catFV $ (and in $\GrothSet \catFV ^\op $). The terminal object in $\GrothSet \catFV $  (and in $\GrothSet \catFV ^\op $) is given by $\left(\terminal, \initial \right) $.

\subsubsection{Coproduct of families of vector spaces}
Let $\left(W, w _ i\right) _{i\in L} $ be a family of objects of $\GrothSet \catFV $ (or $\GrothSet \catFV ^\op $). We have that \eqref{eq:corproduct-FAM}  gives the coproduct of the family $\left(W, w _ i\right) _{i\in L} $ in $\GrothSet \catFV $ and in $\GrothSet \catFV ^\op $.
\begin{equation}\label{eq:corproduct-FAM}
\left( \coprod\limits _{i\in L} W_i , \langle w_i \rangle _{i\in L} :\coprod\limits _{i\in L} W_i \to \Vect  \right) 
\end{equation}
The initial objects in $\GrothSet \catFV $ and in $\GrothSet \catFV ^\op $ are given by $\left( \emptyset , \initial \right) $.

\subsubsection{Lists and Streams}

Let \eqref{eq:endofunctor-E-FAM} and \eqref{eq:endofunctor-H-FAM} be endofunctors on both $\Fam{\Vect}$ and $\Fam{\Vect ^\op}$. We can compute the initial algebras and terminal coalgebras of \eqref{eq:endofunctor-E-FAM} and \eqref{eq:endofunctor-H-FAM} via colimits and limits of chains~\citep{adamek1979least}. 
We get \eqref{eq:endofunctor-E-FAM-inductive-type} and \eqref{eq:endofunctor-H-FAM-codinductive-type} in both $\Fam{\Vect}$ and $\Fam{\Vect ^\op}$.
\\
\noindent\begin{minipage}{.5\linewidth} 
	\begin{equation}\label{eq:endofunctor-E-FAM}
	E (X,x) = \left( \terminal , \initial  \right) \sqcup  (X,x) \times \left( V , \underline{V}  \right)
	\end{equation} 
\end{minipage}
\begin{minipage}{.5\linewidth} 
	\begin{equation} \label{eq:endofunctor-H-FAM}
	 H (X,x) = (X,x) \times \left( V , \underline{V}  \right)
	\end{equation} 
\end{minipage}
\\
\noindent\begin{minipage}{.5\linewidth} 
	\begin{equation}\label{eq:endofunctor-E-FAM-inductive-type}
	\mu E = \coprod _{n=0}^\infty   \left( V , \underline{V}  \right) ^n,
	\end{equation} 
\end{minipage}
\begin{minipage}{.5\linewidth} 
	\begin{equation} \label{eq:endofunctor-H-FAM-codinductive-type}
	\nu H = \prod _{i=0}^\infty   \left( V , \underline{V}  \right) 
	\end{equation} 
\end{minipage}

Considering the case where $H$ is an endofunctor on $\Fam {\Vect }$, we have that $\hat{\nu H} $ in \eqref{eq:endofunctor-H-FAM-codinductive-type-nuH} 
is the functor constantly equal to the product $\displaystyle\prod _{n=0}^\infty V $. When we consider $H$ on $\Fam{\Vect ^\op}$, $\hat{\nu H} $ is the functor constantly equal to $\displaystyle\coprod _{i=0}^\infty V $.

In the case of the endofunctor $E$, $\hat{\mu E}$ in \eqref{eq:endofunctor-E-FAM-inductive-type-muE} is defined by the constant families $\underline{V^n}: V^n\to \Vect$ in each component $V^n$ of the set $\coprod_{i=0}^{\infty} V^n$. This holds true for both $\Fam{\Vect}$ and $\Fam{\Vect^\op}$.\\
\noindent\begin{minipage}{.5\linewidth} 
	\begin{equation}\label{eq:endofunctor-E-FAM-inductive-type-muE}
		 \mathsf{List} \left( V , \underline{V}  \right) = \mu E = \left( \coprod _{n=0}^\infty V ^n,   \hat{\mu E}  : \coprod _{n=0}^\infty V ^n \to \Vect\right)
	\end{equation} 
\end{minipage}
\begin{minipage}{.5\linewidth} 
	\begin{equation} \label{eq:endofunctor-H-FAM-codinductive-type-nuH}
		 \mathsf{Stream} \left( V , \underline{V}  \right) =  \nu H = \left( \prod _{i=0}^\infty V ,  \hat{\nu H} : \prod _{i=0}^\infty V  \to \Vect\right)  
	\end{equation} 
\end{minipage}

\subsection{Euclidean spaces and coproducts}
We introduce the notion of derivatives as it pertains to our work. Our definition aligns with the conventional understanding of derivatives of functions between manifolds, but with added flexibility to accommodate manifolds of varying dimensions. Readers interested in the basics of differentiable manifolds can refer to \citep{lee2013smooth, tu2011manifolds}.

Let $\Man $ be the category of differentiable manifolds and differentiable maps between them.
An \textit{Euclidean space} is an object of $\Man $ that is isomorphic to some differentiable manifold $\RR^n$.

We denote by $\Euclidean$ the category of Euclidean spaces and differentiable maps between them. In other words, 
$\Euclidean $ is the full and replete subcategory of  $\Man $ containing the
differentiable manifolds  $\RR ^k $ for all $k\in\NN $. 

\begin{definition}[Basic definition of derivatives]\label{def:very-basic-definition-of-derivative}
	Let $f: \RR ^n \to \RR ^m $ be a morphism in $\Euclidean$. We define 
the morphisms \eqref{eq:derivative-in-fam-vect-new} in $\Fam{\Vect}$
	and \eqref{eq:derivative-in-fam-vectop-new} in $\Fam{\Vect^\op}$, where  
	$Df _{x}\coloneqq  f'(x)$ is the usual Fréchet derivative, and $Df _{x}^t\coloneqq  f'(x)^t$ is the transpose of $f'(x)$. 
	\begin{equation}\label{eq:derivative-in-fam-vect-new}
		\underline{\Ds{f}} :=\left( f,  Df\right) :   \left( \RR ^{n},  \underline{ \RR ^{n} }   \right)  \to   \coprod _{k\in K} \left( \RR ^m,  \underline{\RR ^m }  \right)
	\end{equation}	
	\begin{equation}\label{eq:derivative-in-fam-vectop-new}
		\underline{\Dsr{f}} :=\left( f,  Df^t\right) :  \left( \RR ^{n},  \underline{ \RR ^{n} }   \right)  \to   \coprod _{k\in K} \left( \RR ^m,  \underline{\RR ^m }  \right)
	\end{equation}
\end{definition} 	
It follows from the usual properties of derivatives and chain rule that:
\begin{lemma}[Derivative of maps between Euclidean spaces]
\eqref{eq:derivative-in-fam-vect-new} and \eqref{eq:derivative-in-fam-vectop-new} uniquely extend to strictly cartesian functors \eqref{DSEUCLIDEAN} and \eqref{DSREUCLIDEAN}, respectively.
\begin{equation}\label{DSEUCLIDEAN}
	\underline{\Ds{}} : \Euclidean\to\Fam{\Vect }
\end{equation}	
\begin{equation}\label{DSREUCLIDEAN}
	\underline{\Dsr{} } : \Euclidean\to\Fam{\Vect ^\op}
\end{equation}
\end{lemma}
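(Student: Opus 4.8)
The plan is to recognise $\underline{\Ds{}}$ as the tangent-bundle functor of $\Euclidean$ repackaged inside $\Fam{\Vect}\cong\GrothSet\catFV$, and to read off functoriality and the cartesian property from the chain rule together with the explicit (co)product formulas of Section~\ref{sec:grothendieck-constructions}. First I would verify that the data of Definition~\ref{def:very-basic-definition-of-derivative} is well-typed. The pair $(\RR^n,\underline{\RR^n})$ is a genuine object of $\Fam{\Vect}$ (index set $\RR^n$, constant family $\underline{\RR^n}$). For a differentiable $f:\RR^n\to\RR^m$, a morphism $(\RR^n,\underline{\RR^n})\to(\RR^m,\underline{\RR^m})$ in $\GrothSet\catFV$ consists of a function $\RR^n\to\RR^m$ together with a family of linear maps $\underline{\RR^n}(x)\to(\catFV(f)(\underline{\RR^m}))(x)=\RR^m$ indexed by $x\in\RR^n$ (the coproduct over $K$ in Definition~\ref{def:very-basic-definition-of-derivative} being over a one-element index set for maps into a single Euclidean space). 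Taking $f$ itself and the family $(Df_x)_{x\in\RR^n}$ of Fr\'echet derivatives is exactly such a morphism, since differentiability of $f$ guarantees each $Df_x$ is a well-defined linear map $\RR^n\to\RR^m$. Dually, for $\Fam{\Vect^\op}$ the fibre component must be a family $\RR^m\to\RR^n$, provided by the transposes $(Df_x^t)_x$.

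Next I would check functoriality. For the identity we have $D(\id[\RR^n])_x=\id[\RR^n]$ for every $x$, so $\underline{\Ds{}}(\id[\RR^n])$ is the identity morphism $(\id[\RR^n],\id)$. For composition, given $f:\RR^n\to\RR^m$ and $g:\RR^m\to\RR^p$, the composite $\underline{\Ds{}}(g)\circ\underline{\Ds{}}(f)$ in $\GrothSet\catFV$ is, by the composition law of the total category, $(g\circ f,\,\catFV(f)(Dg)\circ Df)$; evaluating the fibre family at $x$ gives $(Dg)_{f(x)}\circ(Df)_x$, which by the chain rule equals $D(g\circ f)_x$, so $\underline{\Ds{}}(g)\circ\underline{\Ds{}}(f)=\underline{\Ds{}}(g\circ f)$. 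The reverse case is codual: composition in $\GrothSet\catFV^\op$ uses the opposite fibre order $(g\circ f,\,Df^t\circ\catFV(f)(Dg^t))$, and since transposition is a contravariant involution, $(Df_x)^t\circ(Dg_{f(x)})^t=(Dg_{f(x)}\circ Df_x)^t=(D(g\circ f)_x)^t$, so $\underline{\Dsr{}}$ is again a (covariant) functor.

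Then I would establish strict preservation of finite products. The terminal object $\RR^0$ of $\Euclidean$ maps to $(\RR^0,\underline{\RR^0})=(\terminal,\initial)$, which is the terminal object of $\GrothSet\catFV$ (resp.\ $\GrothSet\catFV^\op$) by Proposition~\ref{prop:grothendieck-products-covariant} (resp.\ Proposition~\ref{theo:grothendieck-products-contravariant}). For binary products I would fix the identification $\RR^n\times\RR^m=\RR^{n+m}$ as the chosen product in $\Euclidean$ and use the formula $(W,w)\times(X,x)=(W\times X,(i,j)\mapsto w(i)\times x(j))$ of Proposition~\ref{prop:grothendieck-products-covariant}: this sends $\underline{\Ds{}}(\RR^n)\times\underline{\Ds{}}(\RR^m)$ to $(\RR^{n+m},\underline{\RR^{n+m}})=\underline{\Ds{}}(\RR^{n+m})$, and one checks that the chosen projections agree on the nose, because the derivative of a linear projection is that projection at every point and the derivative of a pairing is the pairing of derivatives. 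The contravariant case is identical, replacing the fibrewise product by the corresponding (bi)product of $\Fam{\Vect^\op}$ via Proposition~\ref{theo:grothendieck-products-contravariant}. Finally, uniqueness is immediate: Definition~\ref{def:very-basic-definition-of-derivative} prescribes the value of the functor on every object $\RR^k$ and on every morphism, so any functor extending that data coincides with it; for the non-standard objects of the replete subcategory $\Euclidean$ one uses the intrinsic tangent bundle $M\mapsto(M,x\mapsto T_xM)$, which restricts to the stated formula under the canonical trivialisation $T_x\RR^n=\RR^n$ and is forced by functoriality applied to a diffeomorphism $M\cong\RR^n$.

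The individual calculations (chain rule, derivatives of projections and pairings, transposition reversing composition) are entirely standard; the genuinely delicate points are organisational. The main obstacle is matching the two different composition conventions of $\GrothSet\catFV$ and $\GrothSet\catFV^\op$ with, respectively, the covariant chain rule and the transpose-reversed chain rule, so that $\underline{\Dsr{}}$ comes out covariant; and securing \emph{strict}, on-the-nose preservation of products, which forces one to fix compatible chosen products $\RR^n\times\RR^m=\RR^{n+m}$ in $\Euclidean$ and the corresponding chosen (bi)products in $\Fam{\Vect}$ and $\Fam{\Vect^\op}$.
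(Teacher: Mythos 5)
Your proposal is correct and takes essentially the same approach as the paper: the paper gives no written proof at all, merely asserting the lemma ``follows from the usual properties of derivatives and chain rule,'' and your verification --- well-typedness of $(f,(Df_x)_x)$ as a morphism in $\GrothSet\catFV$, identity and chain-rule functoriality in both composition conventions (with transposition reversing composition for the $\Fam{\Vect^\op}$ case), and strict product preservation via the chosen identification $\RR^n\times\RR^m=\RR^{n+m}$ together with the formulas of Propositions~\ref{prop:grothendieck-products-covariant} and~\ref{theo:grothendieck-products-contravariant} --- is exactly that standard check made explicit. The only soft spot is your claim that the value on non-standard objects of the replete subcategory $\Euclidean$ is ``forced by functoriality'' (a diffeomorphism $M\cong\RR^n$ determines the value only up to isomorphism, not on the nose), but this imprecision in the uniqueness clause is inherited from the paper's own unproved statement rather than introduced by your argument.
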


While the definitions provided above are presented in the CHAD-style, they are essentially the same as the ones used to define derivatives between Euclidean spaces, which are commonly taught in calculus courses.

In order to establish a consistent and rigorous framework for proving the correctness of CHAD for inductive data types, we will extend the definition of derivatives by using \textit{cotupling}. More precisely, from a categorical perspective, this extension will rely on the universal property of the free cocompletion under coproducts.

\begin{definition}[Derivative of families]
The universal  property of the free cocompletion  under coproducts $\Fam{\Euclidean}$ of $\Euclidean$
induces unique coproduct-preserving functors 
\begin{equation}\label{DSEUCLIDEAN-FUNCTOR}
	\overline{\Ds{}} : \Fam{\Euclidean}\to\Fam{\Vect }
\end{equation}	
\begin{equation}\label{DSREUCLIDEAN-FUNCTOR}
	\overline{\Dsr{} } : \Fam{\Euclidean}\to\Fam{\Vect ^\op}
\end{equation}
that (genuinely) extend the functors  \eqref{DSEUCLIDEAN} and \eqref{DSREUCLIDEAN}, respectively.
\end{definition} 

Let $\Fam{-}$ be the $2$-functor that takes each category to its free cocompletion under coproducts.
Denoting by $\coprod $ the respective functors that give the coproduct of families, recall that, by the definition above, 
\eqref{DSEUCLIDEAN-FUNCTOR} and \eqref{DSREUCLIDEAN-FUNCTOR} are respectively  given by the composition \eqref{eq:new-DS} and  \eqref{eq:new-DSR}. 
\begin{equation}\label{eq:new-DS}
	\begin{tikzpicture}[x=4cm, y=1cm]
		\node (a) at (0,0) {$\Fam{\Euclidean } $};
		\node (b) at (1, 0) { $\Fam{\Fam{\Vect }} $ };
		\node (c) at (2,0) {$   \Fam{\Vect}   $ };
		\draw[->] (a)--(b) node[midway,above] {$\Fam{\overline{\Ds{} } } $};
		\draw[->] (b)--(c) node[midway,above] {$\coprod $};
	\end{tikzpicture} 
\end{equation}  
\begin{equation}\label{eq:new-DSR} 
	\begin{tikzpicture}[x=4cm, y=1cm]
		\node (a) at (0,0) {$\Fam{\Euclidean } $};
		\node (b) at (1, 0) { $\Fam{\Fam{\Vect ^\op }} $ };
		\node (c) at (2,0) {$   \Fam{\Vect ^\op} $ };
		\draw[->] (a)--(b) node[midway,above] {$\Fam{\overline{\Dsr{}}} $};
		\draw[->] (b)--(c) node[midway,above] {$\coprod $};
	\end{tikzpicture} 
\end{equation}

\subsection{Euclidean families, differentiable morphisms, derivatives and diffeomorphisms} \label{subsect:differentiation-seminarcs-subsection}
We introduce the notion of differentiable morphisms in $\Fam{\Set }$, fundamental to establish the specification and correctness of CHAD. 
To this end, we first define  Euclidean families.
\begin{definition}[$\Euc $: Euclidean families] \label{eq:Euclidea-families}
We inductively define the set $\Euc $ of \textit{Euclidean families} by \ref{def:EUC-1}, \ref{def:EUC-2} and \ref{def:EUC-3}. 
\begin{enumerate}[$\Euc$1)]
	\item For any $k\in\NN$, the singleton family with $\RR ^k $ is a member is an element of $\Euc $.\label{def:EUC-1}
	\item Assuming that $A $ and $B$ are elements of $ \Euc $,  the product $A\times B$ in $\Fam{\Set}$ belongs to $\Euc$.\label{def:EUC-2}
	\item Assuming that $ \left(L _i, A^\ast_i \right) _{i\in L}  $ is a (possibly infinite) family of objects in  $ \Euc $,  the coproduct 
	$$ \left( \coprod _{i\in L } L _i, \langle A^\ast_i\rangle  _{i\in L } \right)  =  \coprod  _{i\in L} \left(L _i, A^\ast_i \right)  $$ 
	 in $\Fam{\Vect } $ also belongs to $\Euc$.\label{def:EUC-3}
\end{enumerate}
\end{definition} 

We denote by
\begin{equation} \label{eq:EuclideU}
	\EuclidU : \Fam{\Euclidean} \to \Fam{\Set}.
\end{equation} 
the forgetful functor obtained by $\EuclidU \coloneqq \Fam{ \underline{\EuclidU } }$ where $\underline{\EuclidU } : \Euclidean \to \Set$ denotes the obvious forgetful functor.

\begin{definition}[Differentiable morphisms and their derivatives]\label{def:derivative-of-a-functioonn}
	A morphism $f: A\to B $ in $\Fam{\Set} $ is differentiable if $A, B\in \Euc $ and  there is a morphism $ \mathfrak{f} $ in $\Fam{\Diff} $ such that
	$\EuclidU\left( \mathfrak{f}  \right) = f $. In this case, we define:
	\begin{equation}
		\Ds{f}\coloneqq \overline{\Ds{\mathfrak{f}} } \qquad\mbox{ and }\qquad \Dsr{f}\coloneqq \overline{\Dsr{\mathfrak{f}} }.
	\end{equation}
We call $f$ a differentiable map, $\Ds{f}$ the \textit{derivative}, and $\Dsr{f} $ the \textit{transpose derivative} of $f$.
\end{definition} 	

\begin{definition}[Diffeomorphism and diffeomorphic Euclidean families]
	We say that a morphism  $f$ of $\Fam{\Set } $ is a \textit{diffeomorphism} if it is an isomorphism in $\Fam{\Vect }$ such that both $f$ and $f^{-1}$ are 
	differentiable. 
	
	We say that two objects $\left( A_l\right) _{l\in L}$ and $\left( B_j\right) _{j\in J}$ of $\Fam{\Vect } $ are \textit{diffeomorphic} 
	if there is a diffeomorphism $\left( A_l\right) _{l\in L}\to \left( B_j\right) _{j\in J}$.
\end{definition}

It should be noted that the chain rule applies. More precisely:
\begin{lemma}[Chain rule]\label{lem:chain-rule}
	If $g$ and $ f$ are composable differentiable morphisms in $\Fam{\Set}$, $g\circ f $ is differentiable. Moreover,
	Equations \eqref{eq:chain-rule-equation-forward} and \eqref{eq:chain-rule-equation-reverse} respectively hold
	in $\Fam{\Vect}$ and $\Fam{\Vect ^\op}$.\\
	\noindent\begin{minipage}{.5\linewidth} 
		\begin{equation} \label{eq:chain-rule-equation-forward}
			\Ds{g}\circ \Ds{f} = \Ds{\left( g\circ f\right) }
		\end{equation} 
	\end{minipage} 
	\begin{minipage}{.5\linewidth} 
		\begin{equation} \label{eq:chain-rule-equation-reverse}
			\Dsr{g}\circ \Dsr{f} = \Dsr{\left( g\circ f\right) }
		\end{equation} 
	\end{minipage}
\end{lemma}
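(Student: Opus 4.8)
The plan is to derive the chain rule purely from functoriality, since every ingredient in Definition~\ref{def:derivative-of-a-functioonn} is already packaged as a functor. Recall that a morphism $f\colon A\to B$ in $\Fam{\Set}$ is differentiable precisely when $A,B\in\Euc$ and $f$ lifts along the forgetful functor $\EuclidU\colon\Fam{\Euclidean}\to\Fam{\Set}$ to a morphism $\mathfrak{f}$ of $\Fam{\Euclidean}$, and that $\Ds{f}$ and $\Dsr{f}$ are then defined by applying the coproduct-preserving functors $\overline{\Ds{}}\colon\Fam{\Euclidean}\to\Fam{\Vect}$ and $\overline{\Dsr{}}\colon\Fam{\Euclidean}\to\Fam{\Vect^\op}$ to $\mathfrak{f}$. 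The whole lemma should fall out by transporting the composite $\mathfrak{g}\circ\mathfrak{f}$ through these functors.

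First I would establish that $g\circ f$ is differentiable. Choosing lifts $\mathfrak{f},\mathfrak{g}$ of $f,g$, I note that the target of $\mathfrak{f}$ and the source of $\mathfrak{g}$ are both lifts of the Euclidean family $B$; since Euclidean families carry canonical lifts to $\Fam{\Euclidean}$ and $\EuclidU$ is faithful (being $\Fam{\underline{\EuclidU}}$ for the faithful forgetful functor $\underline{\EuclidU}\colon\Euclidean\to\Set$), these two objects coincide, so $\mathfrak{g}\circ\mathfrak{f}$ is defined in $\Fam{\Euclidean}$. Functoriality of $\EuclidU$ then gives $\EuclidU(\mathfrak{g}\circ\mathfrak{f})=\EuclidU(\mathfrak{g})\circ\EuclidU(\mathfrak{f})=g\circ f$, exhibiting $\mathfrak{g}\circ\mathfrak{f}$ as a lift of $g\circ f$ and hence witnessing its differentiability.

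With differentiability in hand, the two displayed identities become immediate. By definition $\Ds{(g\circ f)}=\overline{\Ds{}}(\mathfrak{g}\circ\mathfrak{f})$, and since $\overline{\Ds{}}$ is a functor this equals $\overline{\Ds{}}(\mathfrak{g})\circ\overline{\Ds{}}(\mathfrak{f})=\Ds{g}\circ\Ds{f}$ in $\Fam{\Vect}$, which is \eqref{eq:chain-rule-equation-forward}. The identical argument with $\overline{\Dsr{}}$ yields $\Dsr{(g\circ f)}=\overline{\Dsr{}}(\mathfrak{g})\circ\overline{\Dsr{}}(\mathfrak{f})=\Dsr{g}\circ\Dsr{f}$ in $\Fam{\Vect^\op}$; here it is worth emphasizing that the composition on the right is taken in $\Fam{\Vect^\op}$, so that functoriality of $\overline{\Dsr{}}$ automatically installs the contravariant (transpose) order, reproducing \eqref{eq:chain-rule-equation-reverse} without any separate computation.

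The one genuinely load-bearing point, and the place I expect to spend the most care, is the well-definedness underpinning Definition~\ref{def:derivative-of-a-functioonn}: that $\Ds{f}$ and $\Dsr{f}$ do not depend on the chosen lift $\mathfrak{f}$. This is exactly where faithfulness of $\EuclidU$ is used---it forces the lift to be unique given that the source and target Euclidean families have unique lifts---and it is what legitimizes taking $\mathfrak{g}\circ\mathfrak{f}$ as the representative computing the derivative of $g\circ f$. Everything else reduces to chasing a single composite through two functors, with the classical Fr\'echet chain rule already absorbed into the functoriality of $\overline{\Ds{}}$ and $\overline{\Dsr{}}$ (equivalently, of the functors $\underline{\Ds{}}$ and $\underline{\Dsr{}}$ on $\Euclidean$ established in the preceding lemma).
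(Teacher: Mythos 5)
Your proof is correct and is essentially the paper's (implicit) argument: the lemma is stated without proof precisely because, once Definition~\ref{def:derivative-of-a-functioonn} packages $\Ds{f}$ and $\Dsr{f}$ as images under the functors $\overline{\Ds{}}$ and $\overline{\Dsr{}}$ of a lift of $f$ along $\EuclidU$, both the differentiability of $g\circ f$ and the two displayed equations are immediate from functoriality, exactly as you argue. One small misattribution worth fixing: faithfulness of $\EuclidU$ only gives uniqueness of the lifted \emph{morphism} between fixed objects of $\Fam{\Euclidean}$, not coincidence of the lifted \emph{objects} (distinct differentiable structures can share the same underlying family of sets), so the composability of $\mathfrak{f}$ and $\mathfrak{g}$ really rests on reading the definition as requiring lifts between the canonical Euclidean structures carried by members of $\Euc$ --- the same reading that makes $\Ds{f}$ and $\Dsr{f}$ well defined in the first place.
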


We spell out the definition of the derivative of a function between some particular Euclidean families below.

\begin{remark}[Explicit derivatives]\label{def:derivative-function-differentiable}
By Definition \ref{def:derivative-of-a-functioonn},  \eqref{eq:def-derivative} in $\Fam{\Vect}$ is differentiable if, for each 
$j\in J $, \eqref{eq:def-derivative-restriction} is differentiable in the usual sense; namely, if \eqref{eq:def-derivative} is the underlying function of 
	a map $\RR ^{n_j}\to \RR ^{m_{\underline{f}(j)}} $ in $\Euclidean $.\\
	\noindent\begin{minipage}{.5\linewidth} 
		\begin{equation} \label{eq:def-derivative}
			f = (\underline{f}, f): \coprod _{j\in J} {\RR }^{n_j}\to  \coprod _{k\in K} {\RR }^{m_k}
		\end{equation} 
	\end{minipage} 
	\begin{minipage}{.5\linewidth} 
		\begin{equation} \label{eq:def-derivative-restriction}
			f_{j} : \RR ^{n_j}\to \RR ^{m_{\underline{f}(j)}}
		\end{equation} 
	\end{minipage}
\end{remark}

Lemma \ref{LEM:CANONICAL-EUCLIDEAN-FAMILIES} shows that all differentiable maps can be expressed in the form specified in \eqref{def:derivative-function-differentiable} through the use of canonical diffeomorphisms.
More precisely, we show that every Euclidean family is canonically diffeomorphic to something of the form $\coprod _{j\in L} {\RR }^{l_j}$.

\begin{definition}[Normal form]\label{DEF:CANONICAL-EUCLIDEAN-FAMILIES}
For each Euclidean family $A\in \Euc $, we inductively define a (possibly infinite) family $\NORMAL{A} = \left( n_j \right) _{j\in J}$ of natural numbers,  and a morphism 
\begin{equation} 
\CANsh{A}: A\to \coprod _{j\in J }\RR ^{n_j} 
\end{equation} 
in $\Fam{\Set }$ by \ref{eq:yet-another-induction1}, \ref{eq:yet-another-induction2} and \ref{eq:yet-another-induction3}.
\begin{enumerate}[$\mathcal{N}$a)]
	\item For each $k\in\NN $,  $\NORMAL{\RR ^k} \coloneqq \RR ^k $ and $\CANsh{\RR ^k}\coloneqq \id _{\RR ^k} $.\label{eq:yet-another-induction1}
	\item Assuming that $(A,B)\in \Euc\times \Euc $,   $\NORMAL{A} = \left( n_j \right) _{j\in J}$ and $\NORMAL{B} = \left( m_l \right) _{l\in L}$, we set  $$\NORMAL{A\times B} \coloneqq \left( n_j + m_l \right) _{\left( j,l\right)\in J\times L} .$$ We define $\CANsh{A\times B} $\label{eq:yet-another-induction2}
	by the morphism given by the composition \eqref{eq:new-induction-canonical-form}, where the unlabeled arrow is the canonical isomorphism induced by the universal property of the product and the distributive property of $\Set $.
\begin{equation}\label{eq:new-induction-canonical-form} 
	\begin{tikzpicture}[x=4cm, y=1cm]
		\node (a) at (0,0) {$A\times B $};
		\node (b) at (1, 0) { $ \coprod\limits _{j\in J }\RR ^{n_j}  \times\coprod\limits _{l\in L }\RR ^{m_l}   $ };
		\node (c) at (2,0) {$   \coprod\limits _{(j,l)\in J\times L }\RR ^{n_j + m_l}   $ };
		\draw[->] (a)--(b) node[midway,above] {$ \CANsh{A}\times \CANsh{B}  $};
		\draw[->] (b)--(c) node[midway,above] {};
	\end{tikzpicture} 
\end{equation}  	
	\item Assuming that $\left( L _ j , A_j^\ast \right) _{j\in J} $ is a family of objects in $\Euc $ such that $\NORMAL{\left( L _ j , A_j^\ast \right) } = \left( m_{(j,l)} \right) _{l\in L_j} $, we set \label{eq:yet-another-induction3}
	$$\NORMAL{\coprod\limits _{j\in J } A _j} \coloneqq \left( m_{t} \right) _{t\in \mathsf{I}} , $$ where
	$ \mathsf{I}\coloneqq \bigcup \limits _ {j\in J} \left\{ j \right\}\times L_j .$
	Finally, we define $\CANsh{\coprod\limits _{j\in J } A _j} $ by the composition \eqref{eq:new-induction-canonical-form-coproducts}  where the unlabeled arrow is the canonical isomorphism induced by the universal property of coproducts.
\end{enumerate}
\begin{equation}\label{eq:new-induction-canonical-form-coproducts} 
	\begin{tikzpicture}[x=4cm, y=1cm]
		\node (a) at (0,0) {$\coprod\limits _{j\in J } A _j  $};
		\node (b) at (1, 0) { $ \coprod\limits _{j\in J } \left(\coprod\limits _{l\in L _j }\RR ^{ m_{(j,l)} }  \right)   $ };
		\node (c) at (2,0) {$   \coprod\limits  _{t\in \mathsf{I}} \RR ^{ m_{t} }   $ };
		\draw[->] (a)--(b) node[midway,above] {$ \coprod\limits _{j\in J } \CANsh{ A _j}   $};
		\draw[->] (b)--(c) node[midway,above] {};
	\end{tikzpicture} 
\end{equation}  
\end{definition}	

It is simple to verify by induction that:

\begin{lemma}[Canonical form of Euclidean families]\label{LEM:CANONICAL-EUCLIDEAN-FAMILIES}
	For every object $A\in \Euc $, $\CANsh{A}$ is a diffeomorphism.
\end{lemma}

By utilizing these normal forms, we are able to establish a valuable characterization of differentiable maps (Lemma \ref{lem:differentiability-derivative-characterization}). This characterization is then leveraged in our logical relations argument, which is detailed in Sections \ref{sec:logical-relations-argument} and \ref{sect:correctness-inductive-types}.

\begin{lemma}\label{lem:differentiability-derivative-characterization}
	Let $f: W\to X$ be a morphism in $\Fam{\Set } $ and
	$\left( g,h\right)  $ a morphism in $\Fam{\Vect}\times \Fam{\Vect ^\op} $.
	Assuming that $W\in \Euc $, we have that
\begin{center} 	
	 $f$ is differentiable and $\left( g,h \right) = \left( \Ds{f} , \Dsr{f} \right)  $     \\ if, and  only if, \\ 
	$f\circ \gamma  $ is differentiable, 
 $ g  \circ \Ds{\gamma } = \Ds{\left( f\circ \gamma \right) } $,  and
	 $h  \circ \Dsr{\gamma } = \Dsr{\left( f\circ \gamma \right) } $
\end{center}
for any differentiable map $\gamma : \RR ^n \to W $  in $\Fam{\Set } $ (where $n$ is any natural number).
\end{lemma}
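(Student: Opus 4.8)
The plan is to prove the two directions separately, exploiting that a morphism $f\colon W\to X$ with $W\in\Euc$ can be tested against all differentiable probes $\gamma\colon\RR^n\to W$, and that $W$ itself is such a probe's codomain in a controlled way. The forward direction ($\Rightarrow$) is the easy one: assuming $f$ is differentiable with $(g,h)=(\Ds f,\Dsr f)$, then for any differentiable $\gamma\colon\RR^n\to W$ the composite $f\circ\gamma$ is differentiable by the chain rule (Lemma~\ref{lem:chain-rule}), and the same lemma gives $g\circ\Ds\gamma=\Ds f\circ\Ds\gamma=\Ds{(f\circ\gamma)}$ in $\Fam\Vect$ and $h\circ\Dsr\gamma=\Dsr f\circ\Dsr\gamma=\Dsr{(f\circ\gamma)}$ in $\Fam{\Vect^\op}$. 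This is essentially just functoriality of $\Ds{}$ and $\Dsr{}$ as established in Lemma~\ref{lem:chain-rule}.

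For the converse direction ($\Leftarrow$), the key device is to use the normal form from Definition~\ref{DEF:CANONICAL-EUCLIDEAN-FAMILIES}. First I would reduce to the case where $W$ is already in canonical form: by Lemma~\ref{LEM:CANONICAL-EUCLIDEAN-FAMILIES}, $\CANsh W\colon W\to\coprod_{j\in J}\RR^{n_j}$ is a diffeomorphism, so $W\cong\coprod_{j\in J}\RR^{n_j}$ diffeomorphically, and differentiability and its derivatives are preserved under pre/post-composition with diffeomorphisms (again via the chain rule). Now for each index $j\in J$, the coprojection $\inj{j}\colon\RR^{n_j}\hookrightarrow\coprod_{j\in J}\RR^{n_j}\cong W$ is a differentiable map $\gamma_j\colon\RR^{n_j}\to W$. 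Applying the hypothesis to each such $\gamma_j$ shows that every restriction $f\circ\gamma_j=f_j\colon\RR^{n_j}\to X$ is differentiable in the usual sense. By Remark~\ref{def:derivative-function-differentiable}, differentiability of a map out of a coproduct of Euclidean spaces is exactly differentiability on each component, so $f$ itself is differentiable. Finally, since $\Ds f$ and $\Dsr f$ are determined component-wise (coproducts are computed as in~\eqref{eq:corproduct-FAM}, and $\overline{\Ds{}}$, $\overline{\Dsr{}}$ are coproduct-preserving functors by Definition~\ref{DSEUCLIDEAN-FUNCTOR}), the identities $g\circ\Ds{\gamma_j}=\Ds{(f\circ\gamma_j)}$ and $h\circ\Dsr{\gamma_j}=\Dsr{(f\circ\gamma_j)}$ for each $j$, together with $\Ds{\gamma_j}$ and $\Dsr{\gamma_j}$ being the coprojections into the (co)tangent families, force $g=\Ds f$ and $h=\Dsr f$ on each component, hence globally.

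The main obstacle I anticipate is the careful bookkeeping in the converse direction showing that the component-wise data $(g\circ\Ds{\gamma_j},h\circ\Dsr{\gamma_j})_{j\in J}$ genuinely assembles, via the universal property of the free cocompletion under coproducts (i.e.\ that $\overline{\Ds{}}$ and $\overline{\Dsr{}}$ are the \emph{unique} coproduct-preserving extensions), into the globally defined pair $(\Ds f,\Dsr f)$ and that this pins down $g$ and $h$ uniquely. One must verify that testing against the coprojections $\gamma_j$ alone suffices — that no extra probes are needed — which rests on the fact that a morphism out of $\coprod_{j\in J}\RR^{n_j}$ in $\Fam\Set$ is determined by its restrictions to the components (and likewise for the derivative families in $\Fam\Vect$ and $\Fam{\Vect^\op}$). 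The diffeomorphism $\CANsh W$ is what lets us assume this coproduct structure on $W$ without loss of generality, so the argument hinges on combining Lemma~\ref{LEM:CANONICAL-EUCLIDEAN-FAMILIES} with the coproduct-preservation of the derivative functors.
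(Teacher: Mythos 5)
Your proposal is correct and follows essentially the same route as the paper's proof: the forward direction via the chain rule (Lemma \ref{lem:chain-rule}), and the converse by reducing to normal form with the canonical diffeomorphisms of Lemma \ref{LEM:CANONICAL-EUCLIDEAN-FAMILIES}, probing with the coprojections $\gamma_j$, and invoking the componentwise characterization of differentiability from Remark \ref{def:derivative-function-differentiable}, with the derivative identities then forced componentwise. The only bookkeeping point to add is that the codomain $X$ must also be observed to lie in $\Euc$ (immediate from the hypothesis, since differentiability of $f\circ\gamma$ requires its codomain to be an Euclidean family), so that $\CANsh{X}$ is available for normalizing the codomain side as well, exactly as the paper does in its opening lines of the converse direction.
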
 
\begin{proof}
		It should be noted that one direction follows from chain rule; namely, if $f$ is differentiable and  $\left( g,h \right) = \left( \Ds{f} , \Dsr{f} \right)  $, then $f\circ\alpha $ is differentiable, $ g  \circ \Ds{\alpha} = \Ds{\left( f\circ \alpha \right) } $, and $h  \circ \Dsr{\alpha} = \Dsr{\left( f\circ \alpha \right) } $.

		Reciprocally, we assume that $f$ and $\left( g,h \right)$ are such that 	$f\circ \gamma  $ is differentiable, 
		$ g  \circ \Ds{\gamma } = \Ds{\left( f\circ \gamma \right) } $,  and
		$h  \circ \Dsr{\gamma } = \Dsr{\left( f\circ \gamma \right) } $ for any differentiable map $\gamma : \RR ^n \to W $ in $\Fam{\Set } $.
		
Since $W\in \Euc $, we conclude that so is $X$ since, by hypothesis, we can conclude that there is at least a morphism $W\to X $ that is differentiable. 

By Lemma \ref{LEM:CANONICAL-EUCLIDEAN-FAMILIES}, we have canonical diffeomorphism $$\CANsh{W} : W\to \coprod\limits _{j\in J} {\RR }^{n_j}, \qquad \CANsh{X} : X\to \coprod\limits _{l\in L} {\RR }^{m_l} $$ 
		where $ \left(  n_j \right) _ {j\in J} = \NORMAL{W}$ and $ \left(  m_l \right) _ {l\in L} = \NORMAL{X}$ as defined in \ref{DEF:CANONICAL-EUCLIDEAN-FAMILIES}.

For each $j\in J $, we define $\gamma _ j \coloneqq \CANsh{W}\circ  \ic _{\RR^{n_j} } $ where $$\ic _{\RR^{n_j} }  : \RR^{n_j} \to \coprod\limits _{j\in J} {\RR }^{n_j} $$ is the coproduct coprojection in $\Fam{\Set } $.
By hypothesis, for all $j\in J $,  $$f\circ \CANsh{W}\circ  \ic _{\RR^{n_j} } = f\circ \gamma _ j $$ is differentiable and, hence,   $\CANsh{X}\circ f\circ \CANsh{W}\circ  \ic _{\RR^{n_j} } =\CANsh{X}\circ f\circ \gamma _ j $ is differentiable by the chain rule.
This shows that 
\begin{equation}\label{eq:differentiable-function-componentwise} 
\CANsh{X}\circ f\circ \CANsh{W} : \coprod\limits _{j\in J} {\RR }^{n_j} \to X
\end{equation} 
is componentwise differentiable, that is to say,  
\eqref{eq:differentiable-function-componentwise} 
is such that $$\left( \CANsh{X}\circ f\circ \CANsh{W} \right)  _{j\in J} :  {\RR }^{n_j} \to  {\RR }^{m_{\underline{\CANsh{X}\circ f\circ \CANsh{W}} (j) } } $$ is differentiable for all $j\in J$. 
By Remark \ref{def:derivative-function-differentiable},
we conclude that $\CANsh{X}\circ f\circ \CANsh{W}$ is differentiable. Since $\CANsh{X}$ and $\CANsh{W}$ are diffeomorphisms, this proves that $f$ is differentiable by the chain rule.

Analogously, by using the morphisms $\gamma _ j$ defined above, we conclude that  $\left(\sDs{\CANsh{X}}\circ  g\circ \sDs{\CANsh{W}}  , \sDsr{\CANsh{X}}\circ  h\circ \sDsr{\CANsh{W}} \right) = \left( \sDs{\CANsh{X} \circ f \circ \CANsh{W} }  , \sDsr{\CANsh{X} \circ f \circ \CANsh{W} } \right)  $. Therefore, since $\CANsh{W}$ and $\CANsh{X}$ are diffeomorphisms, $\left( g, h\right)  = \left( \Ds{f}  , \Dsr{f  } \right) $ by the chain rule.
\end{proof}

\subsection{Semantic functors}\label{subsect:concrete-semantics-functors}

We establish the concrete denotational semantics of our languages as suitable structure-preserving functors induced by the respective universal properties.

\subsubsection{The concrete denotational model for the source language}\label{subsect:concrete-semantics-functors-target-language}
Recall that $\Fam{\Set}$ is cartesian closed and has $\mu\nu $-polynomials (Proposition \ref{prop:FAMSET-as-CONCRETE-MODEL}).
By the universal property of the source language $\Syn$ established in Corollary \ref{cor:universal-property-of-source-language}, we can define the semantic functor from $\Syn $ to $\Fam{\Set}$:
\begin{corollary}[Concrete semantics of the source language]
	\label{cor:concrete-semantics-source-language}
We fix the concrete semantics of the ground types and primitive operations of $\Syn $ by 
	defining
	\begin{enumerate}[s-a)]
		\item for each $n$-dimensional array $\reals^n\in\Syn$, 
		$\sem{\reals^n}\defeq \RR^n\in\sobjects{\Fam{\Set}}$ in which $\RR^n $ is the singleton family with $\RR ^n $ as unique member,
		\item for each primitive $\op\in \Op_{n_1,\ldots, n_k}^m$,  $\sem{\op}:\RR^{n_1}\times\cdots\times \RR^{n_k}\To \RR^m$ is the map in $\Fam{\Set} $ corresponding to the function that $\op $ intends to implement.
	\end{enumerate}
By Corollary \ref{cor:universal-property-of-source-language}, we  obtain a unique functor 
	$$
	\sem{-}:\Syn\to \Set
	$$
	that extends these definitions to give a concrete denotational semantics for the entire source language
	such that $\sem{-}$  is a strictly bicartesian closed functor that (strictly) preserves $\mu\nu$-polynomials.
\end{corollary}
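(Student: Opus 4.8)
The plan is to obtain $\sem{-}$ purely as an instance of the universal property of the source language recorded in Corollary~\ref{cor:universal-property-of-source-language}. That corollary says that any bicartesian closed category $\catC$ with $\mu\nu$-polynomials, together with a consistent assignment of an object $F(\reals^n)\in\ob{\catC}$ to each ground type and a morphism $F(\op)\in\catC(F(\reals^{n_1})\times\cdots\times F(\reals^{n_k}),F(\reals^m))$ to each primitive, extends \emph{uniquely} to a $\mu\nu$-polynomial preserving bicartesian closed functor out of $\Syn$. Thus the entire content of the statement reduces to checking two things: that $\Fam{\Set}$ is an admissible target, and that the data in (s-a) and (s-b) genuinely constitute such a consistent assignment.

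For the first point, I would invoke Proposition~\ref{prop:FAMSET-as-CONCRETE-MODEL}, which states that $\Fam{\Set}\simeq\Cat[\mathsf{2},\Set]$ is complete, cocomplete, cartesian closed and has $\mu\nu$-polynomials. Finite products and exponentials come from cartesian closedness, finite coproducts from cocompleteness, and distributivity of the coproducts over the products is automatic because the functors $(-)\times A$, being left adjoints, preserve colimits; hence $\Fam{\Set}$ is bicartesian closed with $\mu\nu$-polynomials, exactly as demanded by the hypothesis of Corollary~\ref{cor:universal-property-of-source-language}.

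For the second point, the assignment $\sem{\reals^n}\defeq\RR^n$ is well-posed because the singleton family $\RR^n$ is a genuine object of $\Fam{\Set}$. It remains to see that each $\sem{\op}$ lives in the correct homset. Using the description of products in $\Fam{\Set}$, the product $\RR^{n_1}\times\cdots\times\RR^{n_k}$ of singleton families is again the singleton family on the set-theoretic product $\RR^{n_1}\times\cdots\times\RR^{n_k}$; since the embedding $\Set\to\Fam{\Set}$ sending a set to the corresponding singleton family is fully faithful, the set function that $\op$ intends to implement determines precisely one morphism $\RR^{n_1}\times\cdots\times\RR^{n_k}\to\RR^m$ in $\Fam{\Set}$. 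This makes the assignment consistent, so Corollary~\ref{cor:universal-property-of-source-language} yields the desired unique, strictly bicartesian closed, $\mu\nu$-polynomial preserving functor $\sem{-}$.

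I do not expect a genuine obstacle here: the result is a direct instantiation of an already-established universal property, and the only verification of substance — that the singleton-family embedding respects the product structure used to type the primitives — is the routine observation recorded above. The single point requiring care is keeping the two distinct senses of ``product'' (in $\Set$ and in $\Fam{\Set}$) apart, so that the typing of $\sem{\op}$ matches the shape of homset demanded by Corollary~\ref{cor:universal-property-of-source-language}.
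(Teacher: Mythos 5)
Your proposal is correct and follows essentially the same route as the paper: the paper also obtains $\sem{-}$ as a direct instantiation of the universal property of $\Syn$ (Corollary~\ref{cor:universal-property-of-source-language}), citing Proposition~\ref{prop:FAMSET-as-CONCRETE-MODEL} to certify that $\Fam{\Set}$ is a bicartesian closed category with $\mu\nu$-polynomials. The extra verification you spell out — that products of singleton families are singleton families on set-theoretic products and that the embedding $\Set\to\Fam{\Set}$ is fully faithful, so the primitives are assigned morphisms of the right type — is exactly the material the paper records beforehand in its discussion of singleton families and products in $\Fam{\Set}$.
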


\subsubsection{The concrete denotational model for the target language}
We establish the concrete denotational semantics of our target language. Recall that  $\catFV$ is a $\Sigma$-bimodel for tuples, function types, sum types, inductive and coinductive types by Corollary~\ref{coro:catFV-is-a-model-for-target-language}. 

We define the functors \eqref{eq:semantics-target-language-forward-mode} and  \eqref{eq:semantics-target-language-reverse-mode} induced by 
the universal property of $(\CSyn,\LSyn)$ established in Corollary~\ref{cor:universal-property-of-target-language}.

Henceforth, we make use of the terminology and notation established in \ref{subsect:differentiation-seminarcs-subsection}
and \ref{def:derivative-function-differentiable}.

\begin{corollary}[Concrete semantics of the target language]
	\label{cor:concrete-semantics-target-language}
Let	$\FVect : \Set ^\op \to \Cat $	be the $\Sigma$-bimodel for inductive, coinductive and function types $\FVect : \Set ^\op \to \Cat $	established in \ref{subsect:FVECT-concrete-model} (see Corollary \ref{coro:catFV-is-a-model-for-target-language}). We establish the following assignment.
	\begin{enumerate}[t-a)]
		\item for each $n$-dimensional array $\reals^n\in\Syn$, \label{SEM-TARGET1}
		$\overline{\semt{\reals^n} } = \overline{\semtt{\reals^n} }\defeq\sem{\reals^n} \RR^n\in\Set$;
		\item \label{SEM-TARGET2} for each $n$-dimensional array $\reals^n\in\Syn$, 
		$$\underline{\semt{\reals^n} } = \underline{\semtt{\reals^n} }\defeq L _{ \reals^n } \in  \catFV \left( \RR^n \right) $$
		in which $L _{ \reals^n }  =\cRR^n : \RR^n \to \Vect $;
		\item \label{SEM-TARGET3} for each primitive $\op\in \Op_{n_1,\ldots, n_k}^m$:
		\begin{enumerate}[t-i)]
			\item  $\overline{\semt{\op } } = \sem{\op}  :\RR^{n_1}\times\cdots\times \RR^{n_k}\To \RR^m$ is the map in $\Set$ corresponding to the function that $\op$ intends to implement;\label{SEM-TARGET1a}
			\item $f_\op=\sem{D\op }\in\catFV (\RR^{n_1}\times\cdots\times \RR^{n_k}) (\cRR^{n_1}\times\cdots\times \cRR^{n_k}, \cRR^m    ) $ is the family of linear transformations that $D\op$ intends to implement;\label{SEM-TARGETb}
			\item $f_\op^t = \sem{\transpose{\left( D\op\right) } } \in \catFV (\RR^{n_1}\times\cdots\times \RR^{n_k})(\cRR^m ,\cRR^{n_1}\times\cdots\times \cRR^{n_k}   )
			$ is the family of linear transformations that $\transpose{D\op}$ intends to implement.\label{SEM-TARGET1c}
		\end{enumerate}
	\end{enumerate}
By Corollary \ref{cor:universal-property-of-target-language}, we obtain canonical functors 
	\begin{eqnarray}
		&\semt{-}: & \Sigma_{\CSyn}\LSyn  \to\GrothSet\catFV \cong \Fam{\Vect } \label{eq:semantics-target-language-forward-mode} \\ 
		&\semtt{-}: & \Sigma_{\CSyn}\LSyn^{op} \to\GrothSet\catFV ^{op} \cong \Fam{\Vect ^\op}\label{eq:semantics-target-language-reverse-mode}
	\end{eqnarray}
	that extend \ref{SEM-TARGET1}, \ref{SEM-TARGET2} and \ref{SEM-TARGET3} to give a concrete denotational semantics for the entire target language of the forward AD and the reverse AD respectively such that $\semt{-}$ and $\semtt{-} $ are bicartesian closed functors that preserve $\mu\nu$-polynomials.
\end{corollary}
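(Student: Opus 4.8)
The plan is to obtain $\semt{-}$ and $\semtt{-}$ as a direct instantiation of the universal property of the target language established in Corollary~\ref{cor:universal-property-of-target-language}, applied to the concrete indexed category $\catFV:\Set^{op}\to\Cat$. That corollary asserts that for any $\Sigma$-bimodel for inductive, coinductive and function types $\catL:\catC^{op}\to\Cat$, any assignment of the data listed there as a)--c) extends uniquely to bicartesian closed functors that preserve $\mu\nu$-polynomials. Hence the entire content of the present statement is to take $\catC=\Set$ and $\catL=\catFV$, to recognise the assignment t-a)--t-c) as exactly such a choice of data, and then to read off the conclusion.

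First I would verify the standing hypothesis on the indexed category: $\catFV=\FVect$ is a $\Sigma$-bimodel for inductive, coinductive and function types by Corollary~\ref{coro:catFV-is-a-model-for-target-language}. This is precisely the input demanded by Corollary~\ref{cor:universal-property-of-target-language}, so the universal property is applicable with $\catC=\Set$.

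Next I would match the assignment t-a)--t-c) against the data a)--c) of Corollary~\ref{cor:universal-property-of-target-language}, checking that each piece is well-typed in the concrete model. For t-a), $\overline{F}(\reals^n)=\RR^n$ is an object of $\Set$, namely the singleton family with underlying set $\RR^n$, as required by a). For t-b), the constant family $\cRR^n:\RR^n\to\Vect$ is an object of $\catFV(\RR^n)=\Vect^{\RR^n}$, matching b). For t-c), $\overline{F}(\op)=\sem{\op}$ is the map in $\Set$ implementing $\op$, while $f_\op=\sem{D\op}$ and $f_\op^t=\sem{\transpose{D\op}}$ are the families of Fréchet derivatives and their transposes associated to $\sem{\op}$ in the sense of Definition~\ref{def:very-basic-definition-of-derivative}. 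The only check here with any content is that these constitute genuine families of linear transformations fibred over $\sem{\op}$, i.e. that they inhabit the hom-sets $\catFV(\cdots)(\cdots,\cdots)$ prescribed by c); this is immediate from the linearity of the Fréchet derivative and of its transpose.

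Finally, invoking Corollary~\ref{cor:universal-property-of-target-language} with this data yields canonical functors $\GrothSet\catFV$-valued functors that are bicartesian closed and preserve $\mu\nu$-polynomials, and that extend t-a)--t-c); under the identifications $\GrothSet\catFV\cong\Fam{\Vect}$ and $\GrothSet\catFV^{op}\cong\Fam{\Vect^{op}}$ of \eqref{eq:FVetFam} these are the desired $\semt{-}$ and $\semtt{-}$ of \eqref{eq:semantics-target-language-forward-mode} and \eqref{eq:semantics-target-language-reverse-mode}. I do not anticipate a genuine obstacle: since $\catFV$ has already been shown to be a $\Sigma$-bimodel of the relevant kind and the semantic derivatives of primitive operations are manifestly linear, the statement is a pure instantiation of the previously proved universal property, with the well-typedness of $\sem{D\op}$ and $\sem{\transpose{D\op}}$ as fibred families of linear maps being the sole verification that is not purely formal.
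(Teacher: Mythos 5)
Your proposal is correct and follows essentially the same route as the paper: the corollary is proved by directly instantiating the universal property of the target language (Corollary~\ref{cor:universal-property-of-target-language}) with $\catC=\Set$ and $\catL=\catFV$, where the hypothesis that $\catFV$ is a $\Sigma$-bimodel for inductive, coinductive and function types is supplied by Corollary~\ref{coro:catFV-is-a-model-for-target-language}. Your additional well-typedness check of the assignment t-a)--t-c), in particular that $\sem{D\op}$ and $\sem{\transpose{D\op}}$ are genuine fibred families of linear maps, is the only non-formal verification, and it matches what the paper takes as given.
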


\subsection{Semantic assumptions and specification of CHAD}\label{subsect:CHAD-SOUNDNESS-ASSUMPTIONS}
Although our work applies to more general contexts, \textit{we assume that every primitive operation in the 
source language intends to implement a differentiable function}. We claim that, whenever we have an AD correct macro in this setting, this can be applied to further general cases. For the case of dual-numbers AD, we refer to the revised version of \citep{2022arXiv221007724L}  for comments on general contexts involving non-differentiable functions.

More precisely, for any primitive operation $\op \in \Op_{n_1,\ldots,n_k}^m $ of the source language, we assume that
$$\sem{\op } : \prod_{i=1}^k\RR ^{n_i}\to \RR ^m$$
is differentiable. Moreover, we assume that \eqref{eq:semDop} and \eqref{eq:semDoptranspose} hold.\\
\noindent\begin{minipage}{.5\linewidth} 
	\begin{equation}\label{eq:semDop}
\semt{\Dsyn{\op}} = \Ds{\sem{\op}}, 
	\end{equation} 
\end{minipage}
\begin{minipage}{.5\linewidth} 
	\begin{equation} \label{eq:semDoptranspose}
\semtt{\Dsynrev{\op}} = \Dsr{\sem{\op}}.
	\end{equation} 
\end{minipage}\\ \\ \\
It should be noted that \eqref{eq:semDoptranspose} and \eqref{eq:semDop} hold as long as $\semt{\Dsyn{\op}} =\left( \sem{\op}, \sem{D\op } \right) = \left( \sem{\op}, f_\op \right)  $ and $\semtt{\Dsyn{\op}} =\left( \sem{\op}, \sem{\transpose{D\op} } \right) = \left( \sem{\op}, f_\op ^t \right)  $. In other words, \eqref{eq:semDoptranspose} and \eqref{eq:semDop} hold as long as $D\op $ and $\transpose{D\op} $ implement the family of linear transformations corresponding to the respective derivatives of $\sem{\op}$, as explained 
in Sect~\ref{sec:target-language}

\subsubsection{Specification}
We can inductively define what we mean by \textit{data types} in the source language. These are those types constructed out of ground types, tuples, variant types and inductive types. 

We show  in Sect~\ref{sect:correctness-inductive-types} that the semantics for the inductive data types are rather simple, 
as they are Euclidean families, that is to say, elements of $\Euc $. This shows, by Lemma \ref{LEM:CANONICAL-EUCLIDEAN-FAMILIES}, 
that the semantics of any data type is isomorphic (actually, canonically diffeomorphic) to a (possibly  infinite) coproduct of $\coprod \limits _{j\in J} \RR ^{n_j}$.

In Sect~\ref{sect:correctness-inductive-types}, we prove the full correctness theorem of CHAD for data types. More precisely,
 given any  well-typed program $\var_1:\ty[1]\vdash {\trm}:\ty[2] $ in the source language,  where $\ty[1], \ty[2] $ are data types, we have that: 
 \begin{enumerate}[C1)]
\item $\sem{\ty[1]} $ and $\sem{\ty[2]}$ are Euclidean families;
\item $\sem{t}$ is differentiable; 
\item $\semt{\Dsyn{\trm}}=\Ds{\sem{\trm}} $ and $\semtt{\Dsynrev{\trm}}=\Dsr{\sem{\trm}}$. 
 \end{enumerate}

\section{Sconing}\label{sec:subsconing}
Our approach to categorical semantics for logical relations emphasizes principled constructions of concrete categories from elementary ones, guided by the properties we seek to prove in each setting, \textit{e.g.} \citep[Sect.~4]{2022arXiv221008530L}.
In this section, we introduce the basic categorical framework for our open semantic logical relations proof; namely, we study the \textit{scone}, also called \textit{Artin gluing}.

Recall that, given a functor $G:\catC\to\catD$, 
the \textit{scone} of $G$ is the comma category $\catD\downarrow G$ of the identity along $G$.
Explicitly, the scone's objects are triples $(C_0\in\catD  , C_1\in\catC  , f:C_0\to G(C_1) )$ in which  $f$ is a morphism of $\catD$. Its morphisms $(C_0,C_1, f)\to (C_0', C_1', f')$ are pairs
$(h_0 : C_0\to C_0'  , h_1 : C_1\to C_1')$ such that \eqref{eq:morphism-scone-definition} commutes in $\catD $. 
\begin{equation}\label{eq:morphism-scone-definition} 
	\begin{tikzpicture}[x=4cm, y=1cm]
		\node (a) at (0,0) {$C_0$};
		\node (b) at (1, 0) { $C_0'$ };
		\node (c) at (1,-1) {$G(C_1')$};
		\node (d) at (0,-1) {$ G(C_1) $};
		\draw[->] (a)--(b) node[midway,above] {$ h_ 0  $};
		\draw[->] (b)--(c) node[midway,right] {$ f'  $};
		\draw[->] (a)--(d) node[midway,left] {$ f   $};
		\draw[->] (d)--(c) node[midway,below] {$ G(h_1)  $};
	\end{tikzpicture} 
\end{equation}

The scone $\catD\downarrow G$ inherits  
much of the structure of $\catD\times \catC $. For that reason, under suitable conditions, sconing can be seen as a principled way of building a suitable categorical model from a previously given categorical model $\catD\times\catC $, providing an appropriate semantics for our problem. This is, indeed, the fundamental aspect that underlies our logical relations argument in Sect~\ref{sec:logical-relations-argument} and also in \citep{vakar2021chad, 2022arXiv221008530L, 2022arXiv221007724L}.

In this section, we present our comonadic-monadic approach to studying the properties of $\catD\downarrow G$; it consisting of studying $\catD\downarrow G$ via its comonadicity and monadicity over $\catD\times\catC $. This approach allows us to establish conditions under which $\catD\downarrow G$ has $\mu\nu$-polynomials. The key contribution of this section is twofold: (1) our approach provides a systematic and principled way to understand the nice properties of $\catD\downarrow G$ under suitable conditions; and (2) the conditions we establish for the existence of $\mu\nu$-polynomials are particularly useful for building categorical models for logical relations arguments.

Specifically, our approach shows that the forgetful functor 
\begin{equation}\label{eq:forgetful-scone}
	\forgetfulS : \catD \downarrow G\to \catD\times\catC  
\end{equation}
 is comonadic and, in our case, monadic, and that the properties of $\catD\downarrow G$ can be seen as consequences of this fact.

To lay the groundwork for our approach, we begin by recalling Beck's Monadicity Theorem, since Theorem \ref{theo:basic-comonadicity-sconing}  holds a fundamental place in our approach. The original statement of this theorem involves split (co)equalizers; see, for instance, \cite[Theorem3.14]{MR2178101} or \cite[TheoremII.2.1]{zbMATH03362059}  for the enriched case. However, for our purposes, we will make use of a slightly modified version; namely \textit{a left adjoint functor is comonadic if and only if it creates absolute limits}. This version can be found, for instance, in \citep[pag.~550]{MR4266479}.

\begin{therm}\label{theo:basic-comonadicity-sconing}
	If $\catD $ has binary products, then \eqref{eq:forgetful-scone} is comonadic. 
\end{therm}	
\begin{proof}
By the universal property of comma categories, a diagram $D: \catS\to \catD\downarrow G  $ corresponds biunivocally
with triples 
\begin{equation}\label{eq:diagram-in-scone}
\left( D_0 :\catS\to \catD, D_1 :\catS\to \catC , \mathfrak{d} : D_0 \rightarrow GD_1       \right) 
\end{equation}
in which $D_0 , D_1 $ are diagrams and $\mathfrak{d} $ is a natural transformation. In this setting, it is clear that, assuming that 
$\lim D_0 $ exists, if $   \lim D_1 $ exists and is preserved by $G$, we have that
\begin{equation}\label{eq:description-lim-comma-category}
	\left( \lim D_0, \lim D_1,  \lim D_0 \xrightarrow{\mathsf{d} } \lim \left( G\circ D_1\right) \xto{\cong } G\left( \lim D_1 \right)    \right), 
\end{equation}
is the limit of $D$ in $\catD \downarrow G $, in which $\mathsf{d} $ is the morphism induced by the natural transformation $\mathfrak{d}$.

Now, given a diagram $D : \catS\to \catD\downarrow G $ such that $\forgetfulS\circ D = \left( D_ 0 ,  D_1\right) : \catS \to \catD\times \catC   $ has an absolute limit, we get that  $\lim D_0 $ and $  \lim D_1 $
exist and are preserved by any functor. Hence, by the observed above, in this case, the limit of $D$ exists and is given by 
\eqref{eq:description-lim-comma-category}. Thus it is preserved by $\forgetfulS$.
Since \eqref{eq:forgetful-scone} is conservative, this completes the proof that \eqref{eq:forgetful-scone} creates absolute limits.

Finally, since \eqref{eq:forgetful-scone}  has a right adjoint defined by 
$$(Y,X)\mapsto \left( Y\times G(X), X, \pi _2 : Y\times G(X)\to G(X)    \right) ,$$
the proof that \eqref{eq:forgetful-scone}  is comonadic is complete 
by Beck's Monadicity Theorem.
\end{proof}
\begin{remark}
	If $\catC $ has a terminal object and $\catD $ has binary products as above,  \eqref{eq:forgetful-scone} is comonadic and, furthermore, the comonad induced by it is the free comonad
	over the endofunctor on $\catD\times \catC $ defined by $(Y,X)\mapsto \left( G(X), \terminal \right)  $.
\end{remark}

\begin{corollary}\label{coro:basic-comonadicity-monadicity-sconing}
	Assume that $\catC $ has binary coproducts and $\catD $ has binary products. We have that \eqref{eq:forgetful-scone} is comonadic and monadic provided that $G$ has a left adjoint $F$.
\end{corollary}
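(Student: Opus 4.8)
The comonadicity half of the statement is immediate: since $\catD$ has binary products, Theorem~\ref{theo:basic-comonadicity-sconing} already shows that \eqref{eq:forgetful-scone} is comonadic, independently of any hypothesis on $\catC$ or on $G$. So the plan is to establish only monadicity, and to do so I would dualize the strategy of Theorem~\ref{theo:basic-comonadicity-sconing}: exhibit a left adjoint to $\forgetfulS$ and show that $\forgetfulS$ creates absolute colimits, then invoke the dual form of Beck's theorem (a right adjoint functor is monadic if and only if it creates absolute colimits).

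First I would construct the left adjoint $F'\dashv \forgetfulS$. Writing $\eta:\id\to GF$ for the unit of $F\dashv G$ and $\coproj{1},\coproj{2}$ for the coprojections into the binary coproduct of $\catC$, I set
\[
F'(Y,X)\defeq\left(Y,\; X\sqcup F(Y),\; G(\coproj{2})\circ\eta_Y:Y\to G(X\sqcup F(Y))\right).
\]
To verify the adjunction I would check that scone morphisms $F'(Y,X)\to(C_0,C_1,f)$ are in natural bijection with $(\catD\times\catC)\big((Y,X),(C_0,C_1)\big)$. Such a morphism is a pair $(h_0:Y\to C_0,\ h_1:X\sqcup F(Y)\to C_1)$ satisfying $f\circ h_0=G(h_1)\circ G(\coproj{2})\circ\eta_Y$. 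By the universal property of the coproduct, $h_1$ is the same data as the pair $(h_1\circ\coproj{1}:X\to C_1,\ h_1\circ\coproj{2}:F(Y)\to C_1)$, and by $F\dashv G$ the second component corresponds to its transpose $G(h_1\circ\coproj{2})\circ\eta_Y:Y\to G(C_1)$. The commutativity condition of \eqref{eq:morphism-scone-definition} says precisely that this transpose equals $f\circ h_0$, so $h_1\circ\coproj{2}$ is completely forced by $h_0$, while $h_0$ and $h_1\circ\coproj{1}$ range freely; this yields the required bijection, natural in all arguments.

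Next I would show that $\forgetfulS$ creates absolute colimits, codually to the creation of absolute limits in the proof of Theorem~\ref{theo:basic-comonadicity-sconing}. A diagram $D:\catS\to\catD\downarrow G$ corresponds to a triple $(D_0,D_1,\mathfrak{d}:D_0\to GD_1)$ as in \eqref{eq:diagram-in-scone}, and whenever $\colim D_0$, $\colim D_1$ exist and $G$ preserves $\colim D_1$, the colimit of $D$ exists and is computed as
\[
\left(\colim D_0,\; \colim D_1,\; \colim D_0\xrightarrow{\ \mathsf{d}\ }\colim(G\circ D_1)\xrightarrow{\ \cong\ }G(\colim D_1)\right),
\]
where $\mathsf{d}$ is induced by $\mathfrak{d}$. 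If $\forgetfulS\circ D=(D_0,D_1)$ has an absolute colimit in $\catD\times\catC$, then $\colim D_0$ and $\colim D_1$ are absolute, hence preserved by every functor, in particular by $G$; so the colimit of $D$ exists, is given by the formula above, and is preserved by $\forgetfulS$. Since $\forgetfulS$ is conservative, it creates absolute colimits, and the dual Beck theorem then gives monadicity.

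The one genuinely delicate point is the explicit construction and verification of the left adjoint $F'$. Everything else is a mechanical dualization of Theorem~\ref{theo:basic-comonadicity-sconing}, but pinning down the structure map $G(\coproj{2})\circ\eta_Y$ and checking that the scone constraint rigidifies exactly the $F(Y)$-component of $h_1$ (and leaves the $X$- and $C_0$-components free) is where the hypotheses on binary coproducts in $\catC$ and on the adjunction $F\dashv G$ are actually consumed; I would take care to confirm naturality of the bijection explicitly, as this is the step most prone to a hidden coherence error.
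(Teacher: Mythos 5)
Your proposal is correct, but it takes a genuinely different route from the paper. The paper's proof of monadicity is a two-line reduction: the adjunction $F\dashv G$ induces an isomorphism of comma categories $\catD\downarrow G\cong F\downarrow\catC$ (transposing each structure map $C_0\to G(C_1)$ to $F(C_0)\to C_1$) commuting with the forgetful functors to $\catD\times\catC$, and the dual of Theorem~\ref{theo:basic-comonadicity-sconing} — which is where the hypothesis that $\catC$ has binary coproducts enters — says that $F\downarrow\catC\to\catC\times\catD$ is monadic; hence so is $\forgetfulS$. You instead dualize the proof of Theorem~\ref{theo:basic-comonadicity-sconing} by hand: you exhibit the left adjoint $F'(Y,X)=\bigl(Y,\;X\sqcup F(Y),\;G(\coproj{2})\circ\eta_Y\bigr)$ explicitly, verify the hom-set bijection (your rigidification argument — that the scone constraint forces exactly the $F(Y)$-component of $h_1$ via the adjunction transpose, leaving $h_0$ and $h_1\circ\coproj{1}$ free — is correct), and then show creation of absolute colimits plus conservativity to invoke the dual Beck theorem. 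Both proofs are sound, and in fact your $F'$ is precisely what the paper's left adjoint for $F\downarrow\catC$ becomes when transported across the comma isomorphism, so the arguments agree at a structural level. What the paper's route buys is economy: no new verifications at all, everything is inherited by duality. What your route buys is an explicit formula for the left adjoint (hence for the induced monad on $\catD\times\catC$), which the paper leaves implicit, at the cost of redoing the adjunction check and the colimit-creation argument. One small remark: in your colimit formula the hypothesis that $G$ preserves $\colim D_1$ is actually superfluous for mere existence of the colimit in $\catD\downarrow G$ (colimits in a comma category $\id_{\catD}\downarrow G$ are created from $\catD\times\catC$ with no condition on $G$, since the comparison map $\colim(G\circ D_1)\to G(\colim D_1)$ always exists in the needed direction); it is only needed to make that comparison invertible, which of course holds automatically in the absolute case you use.
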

\begin{proof}
	Firstly, of course, by Theorem \ref{theo:basic-comonadicity-sconing}, we have that \eqref{eq:forgetful-scone} is comonadic.
	Secondly,  by the dual of Theorem \ref{theo:basic-comonadicity-sconing}, we have that the forgetful
	functor $ F \downarrow \catC  \to \catC\times\catD $ is monadic. Hence, since
	\begin{equation*}
		\begin{tikzpicture}[x=4cm, y=1cm]
			\node (a) at (0,0) {$ \catD \downarrow G  $};
			\node (b) at (2, 0) { $\catD\times\catC $ };
			\node (c) at (1,-1) {$    F \downarrow \catC  $ };
			\draw[->] (a)--(b) node[midway,above] {$\forgetfulS $};
			\draw[->] (c)--(b);
			\draw[<->] (a)--(c) node[midway,below left] {$ \cong   $};
		\end{tikzpicture} 
	\end{equation*}  
commutes, we get that $\forgetfulS $ is monadic as well.
\end{proof}

Indeed, in our case, all the properties of the scone we are interested in follow from 
the comonadicity and monadicity of \eqref{eq:forgetful-scone}, that is to say, Corollary \ref{coro:basic-comonadicity-monadicity-sconing}.\footnote{Some of the results presented here hold under slightly more general conditions. But we chose to make the most of our setting, which is general enough for our proof and many others cases of interest.}

\subsection{Bicartesian structure of the scone}\label{subsect:bicartesian-scone}
The bicartesian closed structure of the scone $\catD \downarrow G$ follows from 
the well known result about monadic functors and creation of limits. Namely:

\begin{proposition}\label{prop:creation-limits-monadicity}
	Monadic functors create all limits. Dually, comonadic functors create all colimits. 
\end{proposition}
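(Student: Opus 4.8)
The statement to prove is Proposition~\ref{prop:creation-limits-monadicity}: \emph{monadic functors create all limits, and dually comonadic functors create all colimits.}

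\textbf{Overall approach.} This is a classical result, and the plan is to prove the limit-creation half directly and obtain the colimit-creation half by formal dualization, exactly as the paper's own style (cf.\ the systematic use of ``codually'' throughout Section~\ref{sec:grothendieck-constructions}) would suggest. Since a monadic functor $U\colon\catA\to\catB$ is by definition equivalent (over $\catB$) to the forgetful functor $\catB^{\monad}\to\catB$ from the Eilenberg--Moore category of the induced monad $\monad$, and since creation of limits is invariant under such equivalences over $\catB$, it suffices to prove that for any monad $\monad=(\umonad,\mul,\unitt)$ on $\catB$, the forgetful functor $U^{\monad}\colon\catB^{\monad}\to\catB$ creates all limits. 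I would state this reduction first, noting that $U^{\monad}$ has a left adjoint (the free functor) and so automatically preserves limits; the content is therefore in the \emph{creation}, i.e.\ the lifting of limit cones uniquely through $U^{\monad}$.

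\textbf{Key steps.} First I would fix a diagram $D\colon\catS\to\catB^{\monad}$ whose underlying diagram $U^{\monad}\circ D$ has a limit $(\lim(U^{\monad}D),(p_s)_{s})$ in $\catB$. Each $D(s)$ is a $\monad$-algebra $(X_s,\xi_s\colon\umonad X_s\to X_s)$, and morphisms in the diagram are algebra morphisms. The crucial step is to equip $L\definedby\lim(U^{\monad}D)$ with a canonical algebra structure $\xi_L\colon\umonad L\to L$. For this I would observe that $\umonad$ preserves the limit cone only in the sense we need via the following universal-property argument: the family of composites $(\xi_s\circ\umonad(p_s)\colon\umonad L\to X_s)_{s}$ forms a cone over $U^{\monad}D$ (here one checks compatibility using that the transition maps are algebra morphisms), hence induces a unique mediating morphism $\xi_L\colon\umonad L\to L$ with $p_s\circ\xi_L=\xi_s\circ\umonad(p_s)$. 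Then I would verify the two algebra laws ($\xi_L\circ\unitt_L=\id[L]$ and $\xi_L\circ\mul_L=\xi_L\circ\umonad\xi_L$) by postcomposing with each $p_s$ and using the uniqueness of mediating morphisms into the limit $L$, reducing each identity to the corresponding law for the algebras $(X_s,\xi_s)$. Next I would check that $(L,\xi_L)$ together with the $p_s$ gives a limit cone \emph{in} $\catB^{\monad}$: each $p_s$ is an algebra morphism by construction, and given any competing cone in $\catB^{\monad}$ its underlying cone factors uniquely through $L$ in $\catB$, with the factorizing map automatically an algebra morphism (again by the uniqueness argument). Finally, uniqueness of the lifted structure---the defining feature of \emph{creation} as opposed to mere preservation---follows because any algebra structure on $L$ making the $p_s$ into algebra morphisms must satisfy $p_s\circ\xi_L=\xi_s\circ\umonad(p_s)$, which pins down $\xi_L$ uniquely. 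The dual statement for comonadic functors and colimits is then immediate by applying the result to $U^{\op}\colon\catA^{\op}\to\catB^{\op}$, which is monadic precisely when $U$ is comonadic, and by the self-dual interplay between limits in $\catB^{\op}$ and colimits in $\catB$.

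\textbf{Main obstacle.} The genuinely delicate point is establishing that the induced structure map $\xi_L$ is well defined and satisfies the algebra laws, since this is where one must be careful to distinguish \emph{creation} from \emph{preservation}: one cannot simply invoke that $U^{\monad}$ preserves limits, but must build the algebra structure on the limit object and prove its uniqueness. Everything hinges on the uniqueness clause in the universal property of the limit $L$ in $\catB$, applied repeatedly to reduce algebra identities at $L$ to the known identities at each $D(s)$. I expect this verification to be the crux, while the reduction to the Eilenberg--Moore case and the dualization to comonads are formal and routine.
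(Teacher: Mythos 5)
Your proof is correct. Note that the paper itself does not prove this proposition at all: it simply cites the literature (``See, for instance, [Section~1.4]{MR2056584}''), so there is no ``paper proof'' to compare against; what you have written out is precisely the standard Eilenberg--Moore argument that such references contain, and every step (the cone compatibility from the transition maps being algebra morphisms, the algebra laws for $\xi_L$ via uniqueness of mediating morphisms, the factorization of competing cones, and the dualization through $U^{\op}$) is carried out correctly. The only point worth flagging is your opening reduction: strict creation in Mac Lane's sense is \emph{not} invariant under equivalence over the base, so the reduction from a general monadic functor to the forgetful functor $\catB^{\monad}\to\catB$ is only valid for the up-to-isomorphism notion of creation. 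This is harmless here, since that weaker notion is exactly what this paper uses elsewhere (cf.\ its definition of creation of initial algebras, which asks only for preservation, reflection, and existence transfer), but if one insisted on strict creation one would instead have to run your limit argument directly through the comparison functor rather than appeal to equivalence-invariance.
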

\begin{proof}
	See, for instance, \citep[Section~1.4]{MR2056584}.
\end{proof}	

As a corollary, then, we have the following explicit constructions.

\begin{corollary}
	Assuming that $\catC $ and $\catD $ have finite products and finite coproducts,
	if $G : \catC\to\catD $ has a left adjoint, then 
$	\forgetfulS : \catD \downarrow G\to \catD\times\catC $ creates limits and colimits.
	In particular, $\catD \downarrow G $ is bicartesian and, in the case $\catD\times\catC$ is a distributive category, so is $\catD \downarrow G$.
\end{corollary}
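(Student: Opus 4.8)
The plan is to obtain the statement as an immediate consequence of the comonadicity/monadicity package already established. The key observation is that Corollary~\ref{coro:basic-comonadicity-monadicity-sconing} gives, whenever $G$ has a left adjoint $F$ and $\catC$ has binary coproducts while $\catD$ has binary products, that the forgetful functor $\forgetfulS : \catD\downarrow G\to\catD\times\catC$ is both comonadic and monadic. Under the present hypotheses ($\catC$ and $\catD$ have finite products and finite coproducts), those structural requirements are met, so $\forgetfulS$ satisfies the hypotheses of that corollary. First I would invoke it to conclude that $\forgetfulS$ is simultaneously monadic and comonadic.

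From there the creation of limits and colimits is purely formal. By Proposition~\ref{prop:creation-limits-monadicity}, a monadic functor creates all limits and (dually) a comonadic functor creates all colimits. Since $\forgetfulS$ is monadic, it creates limits; since it is comonadic, it creates colimits. This is exactly the assertion that $\forgetfulS : \catD\downarrow G\to\catD\times\catC$ creates limits and colimits, so the first sentence of the corollary follows directly.

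For the bicartesian conclusion, I would argue that creation of (finite) limits and colimits transports the relevant structure from the base to the scone. Because $\catD\times\catC$ has finite products and finite coproducts (being a product of two categories each having them, computed componentwise), and $\forgetfulS$ creates these limits and colimits, the category $\catD\downarrow G$ has finite products and finite coproducts as well, with $\forgetfulS$ preserving them strictly. Hence $\catD\downarrow G$ is bicartesian. For distributivity, I would note that a functor creating finite limits and finite colimits and reflecting isomorphisms (as $\forgetfulS$ does, being conservative) reflects the canonical distributivity isomorphism: the canonical comparison morphism of \eqref{eq:isomorphism-distributive-category} in $\catD\downarrow G$ is sent by $\forgetfulS$ to the corresponding comparison in $\catD\times\catC$, and if the latter is invertible then so is the former by conservativity. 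Thus, when $\catD\times\catC$ is distributive, so is $\catD\downarrow G$.

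The development is essentially a bookkeeping exercise, so there is no deep obstacle. The only point requiring a little care is the distributivity claim: one must check that the \emph{canonical} comparison map in the scone really is carried by $\forgetfulS$ to the canonical one below, which relies on $\forgetfulS$ strictly preserving the chosen products and coproducts (guaranteed by creation), and then invoke conservativity of $\forgetfulS$ to reflect invertibility. An alternative route, if one prefers to avoid tracking canonicity, is to appeal to \citep[Theorem~4]{MR2864762} (as in the proof of Theorem~\ref{theo:distributive-property-total-category}): it suffices to exhibit \emph{any} natural isomorphism $(W\times Y)\sqcup(W\times Z)\cong W\times(Y\sqcup Z)$ in $\catD\downarrow G$, and such an isomorphism is created from the one in the distributive category $\catD\times\catC$. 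Either way the conclusion is immediate from the creation properties already in hand.
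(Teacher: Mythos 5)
Your proof is correct and takes essentially the same route as the paper: both rest on Corollary~\ref{coro:basic-comonadicity-monadicity-sconing} (monadicity and comonadicity of $\forgetfulS$) together with Proposition~\ref{prop:creation-limits-monadicity} to get creation of limits and colimits, and then transport the bicartesian structure from $\catD\times\catC$. The only differences are cosmetic: the paper's proof additionally writes out explicit formulas for the created (co)limits in the comma category, while your conservativity argument for distributivity (strict preservation of the canonical comparison map plus reflection of isomorphisms) carefully fills in a point the paper's proof leaves implicit.
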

\begin{proof}
Given a diagram $D : \catS\to \catD\downarrow G $, we have that it is uniquely determined by a triple $ \left( D_0 :\catS\to \catD, D_1 :\catS\to \catC , \mathfrak{d} : D_0 \rightarrow GD_1       \right) $
like in \eqref{eq:diagram-in-scone}. In this case, we have that:	
\begin{enumerate}
	\item In the proof of Theorem \ref{theo:basic-comonadicity-sconing}, we implicitly addressed the problem of creation of limits that are preserved by $G$. Since $G$ has a left adjoint, it preserves all the limits and, hence, all the limits are created like \eqref{eq:description-lim-comma-category2}. 
	
	More precisely, assuming that $\forgetfulS\circ D = \left( D_ 0 ,  D_1\right) : \catS \to \catD\times \catC   $ has a limit, we get that both $\lim D_0$  and $ \lim D_1$
	exist, since the projections $\catD\times \catC\to\catD $ and $\catD\times\catC\to\catC $ have left adjoints (because $\catC $ and $\catD $ have initial objects). 
	
	Since $G$ has a left adjoint, it preserves  the limit of $D_1$.  Hence, the limit of $D$ is given by 
\begin{equation}\label{eq:description-lim-comma-category2}
	\left( \lim D_0, \lim D_1,  \lim D_0, \xrightarrow{\mathsf{d} } \lim \left( G\circ D_1\right) \xto{\cong } G\left( \lim D_1 \right)    \right), 
\end{equation}
like in \eqref{eq:description-lim-comma-category}, in which $ \mathsf{d} $ is the morphism induced by $\mathfrak{d}$ and $\lim \left( G\circ D_1\right) \cong G\left( \lim D_1 \right) $ comes from the fact that $G$ preserves limits.
\item Assuming that $\forgetfulS\circ D = \left( D_ 0 ,  D_1\right) : \catS \to \catD\times \catC   $ has a limit, we get that both $\colim D_0$  and $ \colim D_1$ exist. In this case, the colimit of $D$ is given by
\begin{equation}\label{eq:description-colim-comma-category}
	\left( \colim D_0, \colim D_1,  \colim D_0 \xrightarrow{\mathsf{d} } \colim \left( G\circ D_1\right) \to G\left( \colim D_1 \right)    \right), 
\end{equation}
in which $\colim \left( G\circ D_1\right) \to G\left( \colim D_1 \right)  $ is the induced comparison.
\end{enumerate} 	
\end{proof}

\begin{remark}
	It will be particularly important for our correctness proof in Section \ref{sect:correctness-inductive-types}  that $ \catD \downarrow G $  has infinite coproducts
	whenever $\catC $ and $\catD $ have finite products and infinite coproducts. This is a consequence of the fact stated above.
\end{remark}

\subsection{Monadic-comonadic functors and the cartesian closedness of the scone}\label{subsect:closed-structure-comonadic-monadic}
Under the conditions of our proof, the scone $\catD \downarrow G$ is cartesian closed. 
In our case, we can see as a consequence of the well known result below.

\begin{proposition}\label{prop:closed-structure-comonadic-monadic}
	If a category is monadic-comonadic over a finitely complete cartesian closed category, then it is 
	finitely complete cartesian closed as well. 
	
	More precisely, if $\catD $ is finitely complete and $G:\catC\to\catD $ is monadic and comonadic, then $G$ reflects exponentiable objects.
\end{proposition}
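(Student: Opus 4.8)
The plan is to unfold exponentiability of $A$ into the existence of a right adjoint to the product functor $(-)\times A\colon\catC\to\catC$, and to obtain that right adjoint from the corresponding downstairs right adjoint $GA\Rightarrow(-)$, using the full adjoint triple $F\dashv G\dashv R$ that monadicity and comonadicity together supply. First I would record the structural consequences of the hypotheses: since $G$ is monadic it creates limits (Proposition~\ref{prop:creation-limits-monadicity}) and has a left adjoint $F$ (Corollary~\ref{coro:basic-comonadicity-monadicity-sconing}), and since $G$ is comonadic it creates colimits and has a right adjoint $R$. Hence $G$ is conservative, $\catC$ inherits finite completeness from $\catD$, and $G$ preserves finite products, so that $G(X\times A)\cong GX\times GA$ naturally; in particular the square relating $(-)\times A$ on $\catC$ to $(-)\times GA$ on $\catD$ commutes up to this isomorphism.

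Next I would show that $(-)\times A$ preserves colimits, the necessary condition for a right adjoint to exist. Given a colimit $\colim_i X_i$ in $\catC$, applying $G$ and using that $G$ preserves products and colimits while $GA$ is exponentiable (so $(-)\times GA$ preserves colimits), the canonical comparison $\colim_i(X_i\times A)\to(\colim_i X_i)\times A$ becomes invertible after $G$; since $G$ is conservative it is already invertible. This is the step where comonadicity is indispensable: it is exactly what makes $G$, and hence $(-)\times A$, colimit-preserving, and it is what fails for a merely monadic functor such as the forgetful functor from groups to sets, where products do not distribute over coproducts and no such object is exponentiable.

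I would then build the right adjoint explicitly rather than invoke a general adjoint functor theorem. The candidate value at $Y$ is a suitable subobject of $R(GA\Rightarrow GY)$: using the natural identifications $\catC(X,R(GA\Rightarrow GY))\cong\catD(GX,GA\Rightarrow GY)\cong\catD(G(X\times A),GY)$, the genuine $\catC$-morphisms $X\times A\to Y$ are those elements of $\catD(G(X\times A),GY)$ satisfying the $(co)$algebra compatibility condition coming from the $(co)$monadic presentation of $\catC$. I would realise this condition as an equalizer of two parallel maps out of $R(GA\Rightarrow GY)$ in $\catC$, which exists because $\catC$ has finite limits, and verify the adjunction $\catC(X\times A,Y)\cong\catC(X,A\Rightarrow Y)$ by comparing the equalizer presentations of both sides after Currying on the $\catD$-side. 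Applying this reflection to every object of $\catC$ then yields the corollary.

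The main obstacle I anticipate is precisely this last construction: showing that the two maps defining the equalizer are well defined and natural in $X$, i.e. controlling how the $(co)$monad structure interacts with the binary product $X\times A$. Its underlying object is $GX\times GA$ by monadicity, but its $(co)$algebra structure is not the naive one, since the comonadic forgetful functor does not create products. This compatibility — in effect a distributive law between $(-)\times A$ and the $(co)$monad — is where the monadic and comonadic hypotheses must be used in tandem, and it is the step that genuinely separates this situation from the merely monadic one.
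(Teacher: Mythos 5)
Your route coincides with the paper's up to the last step: monadicity gives finite completeness of $\catC$ and the isomorphism $G\circ(A\times -)\cong (GA\times -)\circ G$; comonadicity gives the right adjoint $R$ of $G$, so that $G\circ(A\times -)$ acquires the right adjoint $R(GA\Rightarrow G(-))$ whenever $GA$ is exponentiable. What remains is to descend this to a right adjoint of $A\times(-)$ itself, and this is exactly the step you flag as your ``main obstacle'': the paper closes it in one line by invoking Dubuc's adjoint triangle theorem (a comonadic $G$, equalizers in $\catC$, and a right adjoint for $G\circ T$ together yield a right adjoint for $T$), whereas your proposal leaves it open. So, as written, there is a genuine gap, located precisely at the only substantive input of the paper's proof. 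Your second paragraph, showing that $A\times(-)$ preserves colimits, does not help close it: colimit preservation is a necessary condition for adjointness, never a sufficient one, and it plays no role in the construction, so it can be deleted.

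The gap is fillable along the lines you sketch, and more easily than you fear: no distributive law between $A\times(-)$ and the comonad is needed, and you never have to identify the coalgebra structure of $X\times A$. Comonadicity gives, for every $Y\in\catC$, a canonical equalizer $Y\xrightarrow{\eta_Y} RGY\rightrightarrows RGRGY$ (the two maps being $RG\eta_Y$ and $\eta_{RGY}$, where $\eta$ is the unit of $G\dashv R$; this is the dual of Beck's canonical presentation of algebras as coequalizers of free ones). Applying $\catC(X\times A,-)$, which preserves equalizers, and transposing each hom-set across $G\dashv R$, then across the product--exponential adjunction in $\catD$, then across $G\dashv R$ again --- using the isomorphism $G(X\times A)\cong GX\times GA$ supplied by monadicity --- exhibits $\catC(X\times A,Y)$ as the equalizer of two transformations $\catC(X,R(GA\Rightarrow GY))\rightrightarrows\catC(X,R(GA\Rightarrow GRGY))$ that are natural in $X$. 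By Yoneda these are induced by a parallel pair $R(GA\Rightarrow GY)\rightrightarrows R(GA\Rightarrow GRGY)$ in $\catC$; its equalizer $E_Y$, which exists because $\catC$ is finitely complete, then satisfies $\catC(X\times A,Y)\cong\catC(X,E_Y)$ naturally in $X$, so $A\Rightarrow Y\defeq E_Y$ is the desired exponential. This is precisely the proof of the adjoint triangle theorem that the paper cites; writing it out as above (and discarding the colimit digression and the distributive-law worry) would turn your proposal into a complete, self-contained argument taking essentially the paper's approach.
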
	
\begin{proof}
	See, for instance, a slightly more general version in \citep[Theorem~1.8.2]{2018arXiv180201767L}.
	Indeed, assuming that $G:\catC\to\catD $ is monadic and comonadic and that $\catD $ is finitely complete, 
	we get that $\catC $ is finitely complete as well and, moreover, $G$ preserves them (since monadic functors create limits).
	
Denoting the right adjoint of $G$ by $H$, given an object $W\in \catC $,  we have an isomorphism 
		\begin{equation}\label{eq:commutative-diagram-closed-structure-comonadicity}
		\begin{tikzpicture}[x=4cm, y=0.5cm]
			\node (a) at (0,0) {$ \catC  $};
			\node (b) at (2, 0) { $\catC $ };
			\node (c) at (2,-2) {$\catD $ };
			\node (d) at (0,-2) {$\catD $ };
			\node (e) at (1,-1) {$\cong $ };
			\draw[->] (a)--(b) node[midway,above] {$\left( W\times -\right)  $};
			\draw[->] (d)--(c) node[midway,below] {$ \left( G (W)\times - \right)  $};
			\draw[->] (a)--(d) node[midway,left] {$ G  $};
			\draw[->] (b)--(c) node[midway,right] {$ G  $};
		\end{tikzpicture} 
	\end{equation} 
If $G(W) $ is exponentiable, we know that 	$ \left( G(W)\times G(-)\right)  \dashv H \left( G(W)\Rightarrow -\right)   $. Since 
  $\catC$ has equalizers and $G$ is comonadic, we get that $\left( W\times - \right) $ has a right adjoint by Dubuc's adjoint triangle theorem.\footnote{The original result on adjoint triangles was proven in \citep{MR0233864}. Further comments and generalizations are given in \citep{zbMATH06970806}, while a precise statement for our case is given in \citep[Corollary~1.2]{MR3491845}.} That is to say, $W$ is exponentiable.
\end{proof}

Explicitly, we get:

\begin{corollary}\label{coro:closed-structure-comonadic-monadic-special-case}
Let  $\catC$ and $\catD$ be finitely complete cartesian closed categories. If $G: \catC\to\catD $ has a left adjoint,
we get that $\catD \downarrow G$ is finitely complete cartesian closed. More precisely, the exponential in $\catD \downarrow G$ 
is given by \eqref{eq:exponential-scone} where we write $f\Rightarrow f'$ for the Pullback \eqref{eq:commutative-diagram-closed-structure-pullback}. 
\begin{equation} \label{eq:exponential-scone}
(C_0 , C_1, f )\Rightarrow (D_0, D_1, f') \, = \, (P, C_1\Rightarrow D_1, f\Rightarrow f')
\end{equation} 
\begin{equation}\label{eq:commutative-diagram-closed-structure-pullback}
	\begin{tikzpicture}[x=4cm, y=0.8cm]
		\node (a) at (0,0) {$ P $};
		\node (b) at (1.8, 0) { $ C_0\Rightarrow D_0 $ };
		\node (c) at (1.8,-2) {$C_0\Rightarrow G(D_1) $ };
		\node (d) at (0,-2) {$G(C_1\Rightarrow D_1 )  $ };
		\node (e) at (0.8,-2) {$ G(C_1)\Rightarrow G( D_1 )  $ };
		\draw[->] (a)--(b);
		\draw[->] (d)--(e);
		\draw[->] (e)--(c) node[midway,below] {$ f\Rightarrow G( D_1 )  $};
		\draw[->] (a)--(d) node[midway,left] {$ f\Rightarrow f'  $};
		\draw[->] (b)--(c) node[midway,right] {$ C_0\Rightarrow f'  $};
	\end{tikzpicture} 
\end{equation} 
\end{corollary}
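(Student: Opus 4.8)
The plan is to obtain Corollary~\ref{coro:closed-structure-comonadic-monadic-special-case} as a direct application of Proposition~\ref{prop:closed-structure-comonadic-monadic}, and then to extract the explicit pullback formula for the exponential by unwinding Dubuc's adjoint triangle theorem in the concrete setting of the scone. First I would invoke Corollary~\ref{coro:basic-comonadicity-monadicity-sconing}: since $\catC$ has binary coproducts (it is cartesian closed, hence has finite products and, being finitely complete and cartesian closed, we assume the coproducts needed; in fact the corollary only needs binary coproducts in $\catC$ and binary products in $\catD$, both available here) and $G$ has a left adjoint $F$, the forgetful functor $\forgetfulS:\catD\downarrow G\to\catD\times\catC$ is both comonadic and monadic. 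Then, because $\catC$ and $\catD$ are finitely complete cartesian closed, their product $\catD\times\catC$ is finitely complete cartesian closed as well (products of bicartesian closed categories are computed pointwise). Proposition~\ref{prop:closed-structure-comonadic-monadic} now applies verbatim with the monadic-comonadic functor $\forgetfulS$ over the finitely complete cartesian closed category $\catD\times\catC$, yielding that $\catD\downarrow G$ is finitely complete cartesian closed.

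Next I would derive the explicit description \eqref{eq:exponential-scone}. The abstract argument in Proposition~\ref{prop:closed-structure-comonadic-monadic} shows exponentiability of an object $W$ by factoring the adjunction through the comonadicity of $\forgetfulS$ via the commutative square \eqref{eq:commutative-diagram-closed-structure-comonadicity} and Dubuc's theorem; to make this concrete for the scone I would trace what the constructed right adjoint does on objects. Given scone objects $(C_0,C_1,f)$ and $(D_0,D_1,f')$, the first component must be the second-level exponential $C_1\Rightarrow D_1$ in $\catC$ (since $\forgetfulS$ preserves and creates limits and the $\catC$-component of the exponential is forced), while the $\catD$-component is obtained as the equalizer/pullback correcting the two ways of mapping into $G(D_1)$. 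I would verify directly that the pullback $P$ of the cospan in \eqref{eq:commutative-diagram-closed-structure-pullback} — comparing $C_0\Rightarrow D_0$ along $C_0\Rightarrow f'$ with $G(C_1\Rightarrow D_1)$ along the composite $G(C_1\Rightarrow D_1)\to (G(C_1)\Rightarrow G(D_1))\xrightarrow{f\Rightarrow G(D_1)} (C_0\Rightarrow G(D_1))$ — supplies precisely the object with the correct structure map $f\Rightarrow f'$, and that the resulting triple satisfies the universal property of the exponential.

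The cleanest way to establish the universal property is to check the defining hom-set bijection directly:
\begin{equation*}
(\catD\downarrow G)\big((E_0,E_1,e)\times(C_0,C_1,f),\,(D_0,D_1,f')\big)\;\cong\;(\catD\downarrow G)\big((E_0,E_1,e),\,(P,C_1\Rightarrow D_1,f\Rightarrow f')\big).
\end{equation*}
Here the product on the left is computed componentwise as $(E_0\times C_0,\,E_1\times C_1,\,e\times f)$, since $\forgetfulS$ creates limits. A morphism out of this product is a pair $(h_0,h_1)$ making the scone square commute; transposing $h_0$ and $h_1$ along the cartesian closed structures of $\catD$ and $\catC$ respectively turns the commutativity condition into exactly the condition that the transposed pair factors through the pullback $P$. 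The pullback's universal property then packages this factorization, and naturality in all variables is routine. I would spell out only the transposition step and the identification of the structure map, leaving the naturality bookkeeping implicit.

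The main obstacle I expect is not the existence statement — that is an immediate corollary of the earlier machinery — but verifying that the pullback \eqref{eq:commutative-diagram-closed-structure-pullback} is the \emph{correct} concrete realization of the right adjoint produced abstractly by Dubuc's theorem, i.e. matching the arrow $f\Rightarrow G(D_1)$ and the factorization $G(C_1\Rightarrow D_1)\to G(C_1)\Rightarrow G(D_1)$ (the canonical lax comparison coming from $G$ preserving finite products, hence respecting exponentials up to this canonical map) with the data in \eqref{eq:commutative-diagram-closed-structure-comonadicity}. Getting the direction and the mates of these canonical morphisms right, so that the structure map of the exponential genuinely equals $f\Rightarrow f'$ rather than some twisted variant, is the delicate point; once the hom-set bijection above is checked against this candidate, the rest follows formally.
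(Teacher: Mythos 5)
Your overall route coincides with the paper's intended derivation: the paper states this corollary, without a separate proof, as the explicit instance of Proposition~\ref{prop:closed-structure-comonadic-monadic} applied to the forgetful functor $\forgetfulS:\catD\downarrow G\to\catD\times\catC$, and your hom-set/transposition verification of the pullback formula \eqref{eq:exponential-scone} is correct and is exactly the computation needed to justify the ``more precisely'' part (a pair $(h_0,h_1)$ out of the componentwise product transposes to a pair of maps into $C_0\Rightarrow D_0$ and $C_1\Rightarrow D_1$ whose compatibility in $C_0\Rightarrow G(D_1)$ is precisely factorization through $P$). However, there is one genuine gap in your first step. Corollary~\ref{coro:basic-comonadicity-monadicity-sconing} requires $\catC$ to have \emph{binary coproducts}, and your parenthetical claim that these are ``available here'' is unjustified: the hypotheses of the present corollary are only that $\catC$ and $\catD$ are finitely complete and cartesian closed, and finite completeness plus cartesian closedness does not imply the existence of binary coproducts. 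Consequently you may not conclude that $\forgetfulS$ is monadic, and the same unjustified monadicity is invoked a second time when you appeal to ``$\forgetfulS$ creates limits'' to get the componentwise description of products in the scone.

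The gap is reparable, and the repair shows that monadicity is not needed at all. Comonadicity of $\forgetfulS$ requires only binary products in $\catD$ (Theorem~\ref{theo:basic-comonadicity-sconing}). For finite completeness of $\catD\downarrow G$ one does not need monadic creation of limits: since $G$ has a left adjoint it preserves all limits, so finite limits in the comma category exist and are computed componentwise as in \eqref{eq:description-lim-comma-category}; in particular the product of $(E_0,E_1,e)$ and $(C_0,C_1,f)$ is $(E_0\times C_0,\,E_1\times C_1,\,\cong\circ(e\times f))$, using the canonical isomorphism $G(E_1)\times G(C_1)\cong G(E_1\times C_1)$. With finite completeness of the scone in hand, the Dubuc adjoint-triangle argument inside the proof of Proposition~\ref{prop:closed-structure-comonadic-monadic} runs on comonadicity alone (monadicity is used there only to transfer finite completeness to the domain, which we now have for free); alternatively, your own hom-set bijection, with the product justified directly as above, already proves both cartesian closedness and the formula \eqref{eq:exponential-scone}--\eqref{eq:commutative-diagram-closed-structure-pullback} without invoking either (co)monadicity. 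If instead you wish to keep the argument exactly as written, you must strengthen the hypotheses to include binary coproducts in $\catC$, which is what the paper implicitly does in its applications, where all categories involved are bicartesian closed.
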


\subsection{Monadic functors create terminal coalgebras of compatible endofunctors}

Recall the definition of preservation, reflection and creation of initial algebras and terminal coalgebras, see Definitions \ref{def:preservation-of-initial-algebras-of-a-specific-endofunctor} and \ref{def:preservation-of-terminal-coalgebras-specific-endofunctor}.
We prove and establish the result that says that monadic functors create initial algebras, while, dually, comonadic functors
create terminal coalgebras. 

We first establish the fact that left adjoint functors preserve initial algebras and, dually, right adjoint functors preserve terminal coalgebras. In order to do so, we start by observing that:

\begin{lemma}\label{lem:induced-adjunctions-between-coalgebras-dual}
	Let
	\begin{equation*}
		\begin{tikzcd}
			\catC \arrow[bend right=15, rrrr, swap, "{G}"] &&\bot (\varepsilon , \eta ) && \catD \arrow[bend right=15, llll, swap, "{F}"]                            
		\end{tikzcd}
	\end{equation*}	
	be an adjunction. Assume that $ \gamma : E\circ F \cong F\circ E' $ is a natural isomorphism in which $E$ and $E ' $ are endofunctors. 
 In this case, we have an induced adjunction
	\begin{equation}
		\begin{tikzcd}
			E\AAlg \arrow[bend right=15, rrrr, swap, "{\underline{\hat{G}} _ \gamma  }"] &&\bot ( \underline{\hat{\varepsilon } }, \underline{\hat{\eta }} ) && E'\AAlg \arrow[bend right=15, llll, swap, "{\check{F}_ \gamma  }"]                            
		\end{tikzcd}
	\end{equation}	
	in which $\check{F} _\gamma  $ is defined as in Definition~\ref{def:preservation-of-initial-algebras-of-a-specific-endofunctor} and $\underline{\hat{G}} _ \gamma  $
	is defined as follows:
	\begin{eqnarray*}
		\underline{\hat{G}} _ \gamma   : &E\AAlg                & \to E'\AAlg  \\
		&\left( Y, \xi  \right) & \mapsto  \left( G(Y) , G\left( \xi \right)\circ   GE \left( \varepsilon _{Y}\right)\circ G\left( \gamma ^{-1} _ {G(Y) } \right)   \circ   \eta _{E'G (Y) } \right)\\
		& f & \mapsto G(f).   
	\end{eqnarray*}	
\end{lemma}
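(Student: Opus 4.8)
The plan is to exhibit the claimed adjunction $\underline{\hat{G}}_\gamma \dashv \check{F}_\gamma$ by constructing its unit and counit from those of the given adjunction $F \dashv G$ and verifying the triangle identities. First I would check that $\underline{\hat{G}}_\gamma$ is a well-defined functor: given an $E$-algebra $(Y,\xi)$, the composite
\[
G(\xi)\circ GE(\varepsilon_Y)\circ G(\gamma^{-1}_{G(Y)})\circ \eta_{E'G(Y)} : E'G(Y) \to G(Y)
\]
must be verified to be an $E'$-algebra structure (this is automatic as a morphism, so nothing to prove beyond typechecking), and for a morphism $f:(Y,\xi)\to(Z,\zeta)$ of $E$-algebras, $G(f)$ must commute with the two induced $E'$-algebra structures. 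The latter is the first genuine computation: it follows from functoriality of $G$ and $E$, naturality of $\varepsilon$, naturality of $\gamma^{-1}$, and naturality of $\eta$, assembled around the square expressing that $f$ is an $E$-algebra morphism. Functoriality of $\underline{\hat{G}}_\gamma$ on identities and composites is then inherited directly from that of $G$.

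Next I would define the unit $\underline{\hat{\eta}}$ and counit $\underline{\hat{\varepsilon}}$ at the level of underlying objects by $\eta$ and $\varepsilon$ respectively, and verify that these underlying morphisms are in fact $E'$-algebra (resp. $E$-algebra) morphisms between the relevant composite algebras. Concretely, for an $E'$-algebra $(X,\theta)$ the component $\eta_X : X \to GF(X)$ must be shown to commute with $\theta$ and with the $E'$-algebra structure that $\underline{\hat{G}}_\gamma\check{F}_\gamma(X,\theta)$ carries; here $\check{F}_\gamma$ turns $(X,\theta)$ into the $E$-algebra $(F(X), F(\theta)\circ\gamma^{-1}_X)$ by Definition~\ref{def:preservation-of-initial-algebras-of-a-specific-endofunctor}, and then $\underline{\hat{G}}_\gamma$ produces an explicit $E'$-structure on $GF(X)$. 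The verification is a diagram chase using the triangle identity for $\eta,\varepsilon$ together with the naturality squares of $\gamma$. The counit is dual and slightly simpler, since $\varepsilon$ interacts with the algebra structures through $\gamma$ only via $\check{F}_\gamma$.

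The triangle identities for $(\underline{\hat{\varepsilon}},\underline{\hat{\eta}})$ reduce, after applying the faithful forgetful functors $E\AAlg \to \catC$ and $E'\AAlg \to \catD$, to the triangle identities for the original adjunction $(\varepsilon,\eta)$: since the forgetful functors are faithful and send $\underline{\hat{\varepsilon}}, \underline{\hat{\eta}}$ to $\varepsilon,\eta$, an equality of algebra morphisms holds as soon as it holds on underlying morphisms. This is the clean payoff of defining the unit and counit on underlying objects by the original data. I expect the main obstacle to be purely bookkeeping: writing the $E'$-algebra structure on $\underline{\hat{G}}_\gamma(Y,\xi)$ is a four-fold composite involving $\gamma^{-1}$, so the naturality chase showing $\eta_X$ is an algebra morphism (and that $G(f)$ respects the structures) requires carefully tracking where $E$, $E'$, $F$, $G$, and $\gamma$ are applied, and invoking naturality of $\gamma$ at the correct objects. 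There is no conceptual difficulty — everything follows from the naturality of $\eta$, $\varepsilon$, $\gamma$ and the single triangle identity — but the diagrams are large enough that the risk is a misplaced functor or an off-by-one application of $\gamma$ versus $\gamma^{-1}$; I would organize the proof around a single commuting diagram for each of the two nontrivial checks to keep this under control.
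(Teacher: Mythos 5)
Your proposal is correct and takes essentially the same route as the paper, whose entire proof is the one-line observation that the unit and counit of the lifted adjunction are given pointwise by $\eta$ and $\varepsilon$ --- precisely the construction you spell out, with the verifications (functoriality of $\underline{\hat{G}}_\gamma$, the algebra-morphism property of the unit/counit components, and the triangle identities via the faithful forgetful functors) all going through as you describe. Two notational slips to fix: the adjunction is $\check{F}_\gamma \dashv \underline{\hat{G}}_\gamma$, not $\underline{\hat{G}}_\gamma \dashv \check{F}_\gamma$ (your unit on $E'\AAlg$ and counit on $E\AAlg$ are placed correctly, so only your opening sentence has the direction reversed), and $\check{F}_\gamma(X,\theta) = \left(F(X),\, F(\theta)\circ\gamma_X\right)$ uses $\gamma_X$ rather than $\gamma^{-1}_X$, since $\gamma\colon E\circ F\cong F\circ E'$.
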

\begin{proof}
	In fact, the counit and unit, $\underline{\hat{\varepsilon }} , \underline{\hat{\eta }} $, are defined pointwise by the original counit and unit. That is to say,
	$  \underline{\hat{\varepsilon }} _{(Y, \xi )} = \varepsilon _ Y$  and $ \underline{\hat{\eta }} _{(W, \zeta )} = \eta _ W $.
\end{proof}

\begin{remark}[Doctrinal adjunction]
	The right adjoint $\underline{\hat{G}} _\gamma $ does not come out of the blue. The association $\left( F, \gamma\right) \mapsto \check{F}\gamma $ in Lemma \ref{lem:comparsion-of-algebras-basic-lemma} is part of a $2$-functor, with the domain being the $2$-category of endomorphisms in $\Cat $, lax natural transformations and modifications, and the codomain being $\Cat$. By the doctrinal adjunction,\footnote{For the original statement, please refer to \citep{MR0360749}. For the general case of lax algebras, see, for instance,  \citep[Corollary~1.4.15]{2018arXiv180201767L}.} we know that whenever $\left( F, \gamma \right) $ is pseudonatural (i.e., $\gamma$ is invertible) and $F$ has a right adjoint in $\Cat$, the pair $\left( F, \gamma\right) $ has a right adjoint $\left( G, \left( GE \varepsilon\right)\cdot \left( G \gamma ^{-1} _ G\right)\cdot \left( \eta {E'G } \right) \right) $ in the $2$-category of endofunctors. Therefore, since $2$-functors preserve adjunctions, we obtain that $\check{F}\gamma $ has a right adjoint given by $\check{G}_{\left( GE \varepsilon\right)\cdot \left( G \gamma ^{-1} _ G\right)\cdot \left( \eta _{E'G } \right)} $, denoted by $\underline{\hat{G}} _\gamma $, whenever $\gamma $ is invertible and $F$ has a right adjoint. 
\end{remark}

The dual of Lemma~\ref{lem:induced-adjunctions-between-coalgebras-dual} is given by:

\begin{lemma}\label{lem:induced-adjunctions-between-coalgebras}
Let
\begin{equation*}
	\begin{tikzcd}
		\catC \arrow[bend right=15, rrrr, swap, "{G}"] &&\bot (\varepsilon , \eta ) && \catD \arrow[bend right=15, llll, swap, "{F}"]                            
	\end{tikzcd}
\end{equation*}	
be an adjunction. Assume that $ \beta : G\circ E\cong E '\circ G $
is a natural isomorphism in which $E$ and $E'$ are endofunctors. In this case, we have an induced adjunction
\begin{equation}
	\begin{tikzcd}
		E\CCoAlg \arrow[bend right=15, rrrr, swap, "{\tilde{G} ^\beta  }"] &&\bot (\hat{\varepsilon } , \hat{\eta } ) && E'\CCoAlg \arrow[bend right=15, llll, swap, "{\hat{F}^\beta }"]                            
	\end{tikzcd}
\end{equation}	
in which $\tilde{G} ^\beta $ is defined as in \ref{def:preservation-of-terminal-coalgebras-specific-endofunctor} and $\hat{F} ^\beta $
is defined as follows:
\begin{eqnarray*}
	\hat{F} : &E'\CCoAlg                & \to E\CCoAlg  \\
	&\left( W, \zeta  \right) & \mapsto  \left( W, \varepsilon _{EF(W)}\circ F(\beta _{F(W) } ^{-1} )   \circ FE'\left( \eta _ W  \right)\circ  F\left( \zeta \right)   \right)\\
	& g & \mapsto F(g).   
\end{eqnarray*}	
\end{lemma}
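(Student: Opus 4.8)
The plan is to prove this by duality, deriving it from Lemma~\ref{lem:induced-adjunctions-between-coalgebras-dual} applied in the opposite categories. Recall from the definition of coalgebras that $E\CCoAlg = (E^{\op}\AAlg)^{\op}$ and $(E')\CCoAlg = ((E')^{\op}\AAlg)^{\op}$, and that the $2$-functor $\op:\Cat\to\Cat$ sends an adjunction $L\dashv R$ to the adjunction $R^{\op}\dashv L^{\op}$. Thus it suffices to produce, in the opposite categories, an adjunction between $E^{\op}\AAlg$ and $(E')^{\op}\AAlg$ whose image under $\op$ is the asserted one; this is exactly the output of Lemma~\ref{lem:induced-adjunctions-between-coalgebras-dual}, so Lemma~\ref{lem:induced-adjunctions-between-coalgebras} is its formal dual.

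Concretely, I would instantiate Lemma~\ref{lem:induced-adjunctions-between-coalgebras-dual} with the op-adjunction $G^{\op}\dashv F^{\op}$ (the image under $\op$ of $F\dashv G$, whose unit is $\varepsilon^{\op}$ and whose counit is $\eta^{\op}$), taking as left adjoint $F_0 := G^{\op}:\catC^{\op}\to\catD^{\op}$ and as right adjoint $G_0 := F^{\op}:\catD^{\op}\to\catC^{\op}$, and with the endofunctors $E_0 := (E')^{\op}$ on $\catD^{\op}$ and $E_0' := E^{\op}$ on $\catC^{\op}$. Since $\op$ preserves composition order, the isomorphism $\beta:G\circ E\cong E'\circ G$ yields $\beta^{\op}:(E')^{\op}\circ G^{\op}\cong G^{\op}\circ E^{\op}$, that is, a natural isomorphism $\gamma:E_0\circ F_0\cong F_0\circ E_0'$ of precisely the shape that lemma requires. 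Lemma~\ref{lem:induced-adjunctions-between-coalgebras-dual} then produces an adjunction $\check{(G^{\op})}_{\beta^{\op}}\dashv\underline{\hat{(F^{\op})}}_{\beta^{\op}}$ between $E^{\op}\AAlg=E_0'\AAlg$ and $(E')^{\op}\AAlg=E_0\AAlg$, whose counit and unit are pointwise given by $\eta^{\op}$ and $\varepsilon^{\op}$. Applying $\op$ and using $(E^{\op}\AAlg)^{\op}=E\CCoAlg$, this becomes the desired adjunction $\hat{F}^{\beta}\dashv\tilde{G}^{\beta}$ between $(E')\CCoAlg$ and $E\CCoAlg$, with counit $\hat{\varepsilon}_{(Y,\xi)}=\varepsilon_Y$ and unit $\hat{\eta}_{(W,\zeta)}=\eta_W$.

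The only genuine work, and the step I expect to be the main obstacle, is checking that the two functors obtained this way agree on the nose with the explicit formulas in the statement. For $\tilde{G}^{\beta}$ this is immediate: dualizing the object rule $(X,\zeta)\mapsto(F_0(X),\,F_0(\zeta)\circ\gamma_X)$ of $\check{(G^{\op})}_{\beta^{\op}}$ from Lemma~\ref{lem:comparsion-of-algebras-basic-lemma} and substituting $F_0=G^{\op}$, $\gamma=\beta^{\op}$ reproduces $(W,\xi)\mapsto(G(W),\,\beta_W\circ G(\xi))$, which is the defining rule of $\tilde{G}^{\beta}$ in Lemma~\ref{lem:comparsion-of-terminal-coalgebras-basic-lemma}. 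For $\hat{F}^{\beta}$ one must reverse the four-fold composite defining $\underline{\hat{(F^{\op})}}_{\beta^{\op}}$ and translate each op-term back: substituting $G_0=F^{\op}$, $E_0=(E')^{\op}$, the counit $\varepsilon_0=\eta^{\op}$, and $\gamma=\beta^{\op}$ into $G_0(\xi)\circ G_0E_0(\varepsilon_{0})\circ G_0(\gamma^{-1})\circ\eta_{0}$ and reversing the order must yield exactly $\varepsilon_{EF(W)}\circ F(\beta^{-1}_{F(W)})\circ FE'(\eta_W)\circ F(\zeta)$. This is mechanical but index-sensitive, and keeping the directions of unit, counit and of $\beta$ straight is the delicate point. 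As a conceptual cross-check, I would observe that, just as in the Remark following Lemma~\ref{lem:induced-adjunctions-between-coalgebras-dual}, the right adjoint here arises from the dual doctrinal-adjunction construction, so the triangle identities need not be verified afresh: they are transported across $\op$ from those already established for $\check{F}_\gamma\dashv\underline{\hat{G}}_\gamma$.
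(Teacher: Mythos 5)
Your proposal is correct and follows essentially the same route as the paper, which states Lemma~\ref{lem:induced-adjunctions-between-coalgebras} as ``the dual of Lemma~\ref{lem:induced-adjunctions-between-coalgebras-dual}'' with no further proof; you have simply made that duality explicit, and your bookkeeping (instantiating the dual lemma with $G^{\op}\dashv F^{\op}$, $\beta^{\op}$, and translating the four-fold composite back through $\op$) is accurate.
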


As an immediate consequence, we have that:

\begin{therm}\label{theo:right-adjoint-preserves-terminal-coalgebras}
	Right adjoint functors preserve terminal coalgebras. Dually, left adjoints preserve initial algebras.
\end{therm}
\begin{proof}
	Let $G:\catC\to\catD $ be a functor and $ \beta : G\circ E\cong E '\circ G $ a natural isomorphism in which $E, E'$ are endofunctors. 
	If  $F\dashv G $, we get that $\tilde{G} ^\beta  : E\CCoAlg\to E'\CCoAlg $ (as defined in \ref{def:preservation-of-terminal-coalgebras-specific-endofunctor}) has a left adjoint by Lemma \ref{lem:induced-adjunctions-between-coalgebras}. Therefore $\tilde{G} ^\beta $ preserves limits and, in particular, terminal objects. This completes the proof that $G$ preserves terminal coalgebras (see Definition \ref{def:preservation-of-terminal-coalgebras-specific-endofunctor}).
\end{proof}

Finally, we can state the result about monadic functors; namely:
\begin{therm}\label{theo:creation-of-initial-algebras}
	Monadic functors create terminal coalgebras. Dually, comonadic functors create initial algebras.
\end{therm}

\begin{proof}
Let $ G: \catC\to\catD  $ be a monadic functor.
Assume that $\beta : G\circ E \cong E' \circ G  $ is a natural isomorphism in which $E, E '$ are endofunctors.

We have that $\tilde{G} ^\beta  : E\CCoAlg\to E'\CCoAlg $ (as defined in \ref{def:preservation-of-terminal-coalgebras-specific-endofunctor})  has a left adjoint by Lemma \ref{lem:induced-adjunctions-between-coalgebras}. Moreover, 
we have the commutative diagram 
	\begin{equation}\label{eq:commutative-diagram-preservation-terminal-coalgebras}
		\begin{tikzpicture}[x=4cm, y=0.5cm]
			\node (a) at (0,0) {$ E\CCoAlg $};
			\node (b) at (2, 0) { $E'\CCoAlg $ };
			\node (c) at (2,-2) {$\catD $ };
			\node (d) at (0,-2) {$\catC $ };
			\draw[->] (a)--(b) node[midway,above] {$\tilde{G} ^\beta $};
			\draw[->] (d)--(c) node[midway,below] {$ G $};
			\draw[->] (a)--(d);
			\draw[->] (b)--(c);
		\end{tikzpicture} 
	\end{equation} 
	in which the vertical arrows are the forgetful functors.
	
	Since we know that all the functors in \eqref{eq:commutative-diagram-preservation-terminal-coalgebras} but $\tilde{G} ^\beta  $ create absolute colimits, we conclude that $\tilde{G} ^\beta $ creates absolute colimits as well. Therefore $\tilde{G} ^\beta $ is monadic and, thus, it creates all limits. In particular, $\tilde{G} ^\beta  $ creates terminal objects. This completes the proof that $G$ creates terminal coalgebras (see Definition \ref{def:preservation-of-terminal-coalgebras-specific-endofunctor}).
\end{proof}

\subsection{Monadic-comonadic functors create $\mu\nu$-polynomials}
We establish that monadic-comonadic functors create $\mu\nu$-polynomials below, a crucial result for our approach to the study of $\mu\nu$-polynomials in the scone.

\begin{corollary}\label{coro:creating-munu-polynomials}
	Monadic-comonadic functors create $\mu\nu$-polynomials. More precisely, if $G: \catA\to\catB $ is monadic-comonadic and $\catB $ has $\mu\nu $-polynomials, then
\begin{enumerate}	
	\item $G$ creates products and coproducts;
	\item $\catA $ has $\mu\nu $-polynomials;
	\item for each $\mu\nu$-polynomial endofunctor $E$ on $\catA $, there is a $\mu\nu $-polynomial endofunctor $\overline{\underline{E}} $ on $\catB $
	such that $G\circ E \cong \overline{\underline{E}}\circ G $ (and $G$ creates the initial algebra and the terminal coalgebra of $E$).
\end{enumerate}
\end{corollary}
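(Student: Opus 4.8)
The plan is to prove the three claims in sequence, each following from a previously established result about monadic and comonadic functors together with the structure of $\mu\nu$-polynomials. Throughout, I would write $G:\catA\to\catB$ for the monadic-comonadic functor and assume $\catB$ has $\mu\nu$-polynomials.

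\textbf{Step 1 (creation of products and coproducts).} First I would invoke Proposition~\ref{prop:creation-limits-monadicity}: monadic functors create all limits and comonadic functors create all colimits. Since $G$ is both, it creates finite products and finite coproducts. In particular, because $\catB$ has finite products and coproducts (it has $\mu\nu$-polynomials), so does $\catA$, and $G$ preserves both on the nose. This also records that $G$ has both a left adjoint $F$ and a right adjoint $H$, which we shall need repeatedly.

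\textbf{Step 2 (lifting $\mu\nu$-polynomial endofunctors and creating their (co)algebras).} The heart of the argument is the simultaneous induction on the structure of $\mnPoly_{\catA}$ (Definition~\ref{def:basic-definition-munupolynomials}), proving claim (3): for each $\mu\nu$-polynomial endofunctor $E$ on $\catA$ there is a $\mu\nu$-polynomial endofunctor $\overline{\underline{E}}$ on $\catB$ with $G\circ E\cong \overline{\underline{E}}\circ G$. The base cases (constant functors, identity, projections, diagonals) and the closure under products, coproducts, and pairing $(E,J)$ are routine, using that $G$ creates products and coproducts by Step~1, so that the comparison isomorphisms compose and combine. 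The genuinely load-bearing cases are $E=\mu H$ and $E=\nu H$. Here I would argue as follows: given a $\mu\nu$-polynomial $H$ on $\catA$ with a lift $\overline{\underline{H}}$ on $\catB$ satisfying $G\circ H\cong \overline{\underline{H}}\circ G$, take $\overline{\underline{\mu H}}\defeq \mu\overline{\underline{H}}$ (which exists since $\catB$ has $\mu\nu$-polynomials). By Theorem~\ref{theo:creation-of-initial-algebras}, comonadic functors create initial algebras; so $G$ creates the initial algebra of $H$, which gives both existence of $\mu H$ in $\catA$ and the required coherence isomorphism $G\circ \mu H\cong \mu\overline{\underline{H}}\circ G$. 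Dually, $G$ monadic creates terminal coalgebras (again Theorem~\ref{theo:creation-of-initial-algebras}), handling $\nu H$ with $\overline{\underline{\nu H}}\defeq \nu\overline{\underline{H}}$. The parameterized versions follow from Proposition~\ref{prop:general_parameterized_initial_algebras} applied fibrewise, exactly as in the proof of Lemma~\ref{lem:about-completeness-of-munupolynomials}.

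\textbf{Step 3 (assembling claim (2)).} Claim (2), that $\catA$ has $\mu\nu$-polynomials, is then immediate: $\catA$ has finite products and coproducts by Step~1, and Step~2 shows that every $\mu\nu$-polynomial endofunctor of $\catA$ has an initial algebra and a terminal coalgebra (the lift exists and its (co)algebra is created from the corresponding one in $\catB$).

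\textbf{Main obstacle.} The delicate point I expect to be the crux is the coherence bookkeeping in the inductive step for $\mu H$ and $\nu H$: I must check that the natural isomorphism $\gamma:\overline{\underline{H}}\circ G\cong G\circ H$ used to invoke the creation-of-(co)algebras machinery (which is stated in terms of a comparison $\gamma:E\circ F\cong F\circ E'$ in Definition~\ref{def:preservation-of-initial-algebras-of-a-specific-endofunctor}) is precisely the datum produced by the induction hypothesis, and that the resulting lifted functor $\overline{\underline{\mu H}}=\mu\overline{\underline{H}}$ is genuinely again a $\mu\nu$-polynomial on $\catB$ rather than merely an endofunctor admitting an initial algebra. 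The latter is guaranteed because $\overline{\underline{H}}$ is built, by induction, as a $\mu\nu$-polynomial on $\catB$, so $\mu\overline{\underline{H}}$ is a $\mu\nu$-polynomial by clause M7) of Definition~\ref{def:basic-definition-munupolynomials}; making this closure explicit, and verifying the compatibility isomorphism is natural and respects the algebra structure maps (so that creation applies), is where the real care is needed. Once this is in place, the three numbered conclusions drop out as described, and I would reference \citep[Theorem~1.8.2]{2018arXiv180201767L} and Theorem~\ref{theo:creation-of-initial-algebras} as the external inputs.
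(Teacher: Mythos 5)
Your proposal is correct and follows essentially the same route as the paper's proof: structural induction over $\mnPoly_{\catA}$, with creation of products/coproducts from monadicity--comonadicity (Proposition~\ref{prop:creation-limits-monadicity}) and creation of initial algebras/terminal coalgebras (Theorem~\ref{theo:creation-of-initial-algebras}) carrying the $\mu H$ and $\nu H$ cases. The ``coherence bookkeeping'' you flag as the main obstacle is exactly what the paper formalizes with its $\productclosure{G}$, $\mathrm{dom}$, $\mathrm{codom}$ apparatus, which makes the inductive invariant a statement about all morphisms of $\mnPoly_{\catA}$ (functors between powers of $\catA$, compared along products of copies of $G$) rather than just endofunctors.
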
	
\begin{proof}
	Let $ G : \catA\to\catB $ be a monadic-comonadic functor in which $\catB $ has $\mu\nu$-polynomials. We inductively define the set $\productclosure{G}$ as follows:
\begin{enumerate}[$\productclosure{G}$1)]
	\item the identity functor $\terminal \to \terminal $ belongs to $\productclosure{G}$;
	\item $G : \catA\to\catB $ belongs to $\productclosure{G}$;
	\item if $G ' : \catA '\to\catB '$  and $G '' : \catA ''\to\catB ''  $ belong $\productclosure{G}$, then so does the product $G '\times G '' : \catA '\times \catA ''\to\catB '\times \catB ''  $.
\end{enumerate}

We have bijections \eqref{eq:dom-definition}  and \eqref{eq:codom-definition}  inductively defined by \eqref{eq:codom-definition1}, \eqref{eq:codom-definition2} and \eqref{eq:codom-definition3}.\\
\noindent\begin{minipage}{.5\linewidth} 
	\begin{equation} \label{eq:dom-definition}
\mathrm{dom} : \productclosure{G}\to \objects \left( \mnPoly _ \catA \right) 
	\end{equation} 
\end{minipage} 
\begin{minipage}{.5\linewidth} 
	\begin{equation} \label{eq:codom-definition}
\mathrm{codom} : \productclosure{G}\to \objects \left( \mnPoly _ \catB \right) 
	\end{equation} 
\end{minipage}\\
\begin{enumerate}[bi)]
\item  \label{eq:codom-definition1} $ \mathrm{codom}\left( \terminal \to\terminal \right) = \mathrm{dom}\left( \terminal \to\terminal \right) =\terminal $; 
\item \label{eq:codom-definition2} $\mathrm{dom}\left( G \right) = \catA $ and $\mathrm{codom}\left( G \right) = \catB $; 
\item \label{eq:codom-definition3} $ \mathrm{dom}\left( G'\times G '' \right) =    \mathrm{dom}\left( G'\right) \times \mathrm{dom}\left( G '' \right) $ and $ \mathrm{codom}\left( G'\times G '' \right) =    \mathrm{codom}\left( G'\right) \times \mathrm{codom}\left( G '' \right) $.
\end{enumerate} 
In other words, the function  $\mathrm{dom} : \productclosure{G}\to \objects \left( \mnPoly _ \catA \right)  $ and
$\mathrm{codom} : \productclosure{G}\to \objects \left( \mnPoly _ \catB \right)  $ 
 give, respectively, the domain and codomain of each functor in $\productclosure{G}$.

Since $G$ creates initial algebras and terminal coalgebras, it is enough to show that, for any $\mu\nu$-polynomial
$H : \catC\to\catD $ in $\mnPoly _ \catA $, there is a morphism $ \underline{\overline{H}} $ of $\mnPoly _ \catB $ such that there is  an isomorphism
\begin{equation}\label{eq:isomorphism-for-each-polynomial}
	\begin{tikzpicture}[x=3.5cm, y=0.6cm]
		\node (a) at (0,0) {$ \catC  $};
		\node (b) at (2, 0) {$ \catD  $ };
		\node (c) at (2,-2) {$ \underline{\overline{\catD }} $ };
		\node (d) at (0,-2) {$\underline{\overline{\catC }}$ };
		\node (e) at (1,-1) {$\xleftrightarrow{\cong } $ };
		\draw[->] (a)--(b) node[midway,above] {$ H $};
		\draw[->] (d)--(c) node[midway,below] {$ \underline{\overline{H}} $};
		\draw[->] (a)--(d) node[midway,left] {$ \mathrm{dom} ^{-1} \left( \catC \right) $};
		\draw[->] (b)--(c) node[midway,right] {$ \mathrm{dom} ^{-1} \left( \catD \right) $};
	\end{tikzpicture} 
\end{equation} 
where $\underline{\overline{\catD }} : = \mathrm{codom}\circ  \mathrm{dom} ^{-1} \left( \catD \right)$  and 
$\underline{\overline{\catD }} : = \mathrm{codom}\circ  \mathrm{dom} ^{-1} \left( \catC \right)$.  

We start by proving that the objects of $\mnPoly _ \catA $ together with the functors that satisfy the property above 
do form a subcategory of $\Cat $. Indeed, observe that the identities do satisfy the condition above, since it is always true that
$$ \id_{\underline{\overline{\catC }} } \circ  \mathrm{dom} ^{-1} \left( \catC \right) =    \mathrm{dom} ^{-1} \left( \catC \right) \circ \id _{\catC } 
$$
for any given object $\catC $ of $\mnPoly _ \catA $. Moreover, given morphisms $J : \catD '' \to\catD ''' $ and $E : \catD ' \to\catD '' $
of  $\mnPoly _ \catA $ such that we have natural isomorphisms 
\begin{eqnarray*}
	\gamma : & \overline{\underline{ E } } \circ \mathrm{dom} ^{-1} \left(  \catD '  \right) & \cong \mathrm{dom}^{-1} \left(  \catD ''\right) \circ E\\
	\gamma ' : & \overline{\underline{ J } } \circ \mathrm{dom} ^{-1} \left(  \catD ''  \right) & \cong  \mathrm{dom} ^{-1} \left(  \catD '''  \right) \circ J    
\end{eqnarray*}
in which $ \overline{\underline{ J } }  $ and $\overline{\underline{ E } } $ are morphisms of $\mnPoly _ \catB $, we have that
\begin{equation}
	\begin{tikzpicture}[x=2.5cm, y=1cm]
		\node (a) at (0,0) {$ \catD '  $};
		\node (b) at (2, 0) {$ \catD '' $ };
		\node (bb) at (4, 0) {$ \catD ''' $ };
		\node (c) at (2,-2) {$ \underline{\overline{\catD '' }} $ };
		\node (d) at (0,-2) {$\underline{\overline{\catD ' }}$ };
		\node (dd) at (4,-2) {$\underline{\overline{\catD ''' }}$ };
		\node (e) at (1,-1) {$\xleftrightarrow{\gamma } $ };
		\node (e) at (3.2,-1) {$\xleftrightarrow{\gamma ' } $ };
		\draw[->] (a)--(b) node[midway,above] {$ E $};
		\draw[->] (d)--(c) node[midway,below] {$ \underline{\overline{E}} $};
		\draw[->] (a)--(d) node[midway,left] {$ \mathrm{dom} ^{-1} \left( \catD ' \right) $};
		\draw[->] (b)--(c) node[midway,right] {$ \mathrm{dom} ^{-1} \left( \catD ''\right) $};
		\draw[->] (b)--(bb) node[midway,above] {$ J $};
		\draw[->] (c)--(dd) node[midway,below] {$ \underline{\overline{J}} $};
		\draw[->] (bb)--(dd) node[midway,right] {$ \mathrm{dom} ^{-1} \left( \catD ''' \right) $};
	\end{tikzpicture} 
\end{equation} 
is a natural isomorphism and $\overline{\underline{ J } } \circ \overline{\underline{ E } }   $ is a morphism in $\mnPoly _ \catB $.

Finally, we complete the proof that all the morphisms of $\mnPoly _ \catA $ satisfy the property above by proving by induction over the Definition \ref{def:basic-definition-munupolynomials} of $\mnPoly _ \catA $.
\begin{enumerate}[M1)]
	\item for any object $\catC$ of $\mnPoly _ \catA $, the unique functor 
	$\catC\to \terminal $ is such that 
\begin{equation}
	\begin{tikzpicture}[x=1cm, y=0.7cm]
		\node (a) at (0,0) {$ \catC  $};
		\node (b) at (2, 0) {$ \terminal  $ };
		\node (c) at (2,-2) {$ \terminal $ };
		\node (d) at (0,-2) {$\underline{\overline{\catC }} $ };
		\draw[->] (a)--(b);
		\draw[->] (d)--(c);
		\draw[->] (a)--(d) node[midway,left] {$ \mathrm{dom} ^{-1} \left( \catC \right) $};
		\draw[->] (b)--(c) node[midway,right] {$ \mathrm{dom} ^{-1} \left( \terminal \right) $};
	\end{tikzpicture} 
\end{equation} 	
	commutes and, of course, $\underline{\overline{\catC }}\to \terminal $ is a morphism in $\mnPoly _ \catB $; 
	\item for any object $\catD $ of $\mnPoly _ \catA $,  given a functor $W : \terminal \to \catD $ (which belongs to $\mnPoly _ \catA $), we have that $\mathrm{dom} ^{-1} \left( \catD  \right) \circ W  $ is a morphism of $\mnPoly _ \catB $ such that
\begin{equation}
	\begin{tikzpicture}[x=4cm, y=0.7cm]
		\node (a) at (0,0) {$ \terminal  $};
		\node (b) at (2, 0) {$ \catD  $ };
		\node (c) at (2,-2) {$ \underline{\overline{\catD }}  $ };
		\node (d) at (0,-2) {$\terminal  $ };
		\draw[->] (a)--(b) node[midway,above] {$ W $};
		\draw[->] (d)--(c) node[midway,below] {$ \mathrm{dom} ^{-1} \left( \catD  \right) \circ W    $};
		\draw[->] (a)--(d) node[midway,left] {$ \mathrm{dom} ^{-1} \left( \terminal \right)    $};
		\draw[->] (b)--(c) node[midway,right] {$ \mathrm{dom} ^{-1} \left( \catD \right)  $};
	\end{tikzpicture} 
\end{equation} 	
	commutes;
	\item  consider the binary product $\times : \catA \times\catA   \to \catA $ (which exists, since $G$ is monadic). We have that $\times : \catB \times\catB   \to \catB $ (which is a morphism of $\mnPoly _ \catB $) is such that we have an isomorphism
\begin{equation}
	\begin{tikzpicture}[x=2cm, y=0.7cm]
		\node (a) at (0,0) {$ \catA\times \catA   $};
		\node (b) at (2, 0) {$ \catA  $ };
		\node (c) at (2,-2) {$ \catB  $ };
		\node (d) at (0,-2) {$\catB\times\catB   $ };
		\node (z) at (1,-1) {$\xleftrightarrow{\cong } $ };
		\draw[->] (a)--(b) node[midway,above] {$ \times  $};
		\draw[->] (d)--(c) node[midway,below] {$ \times    $};
		\draw[->] (a)--(d) node[midway,left] {$ \mathrm{dom} ^{-1} \left( \catA\times \catA   \right)    $};
		\draw[->] (b)--(c) node[midway,right] {$ \mathrm{dom} ^{-1} \left( \catA \right)  $};
	\end{tikzpicture} 
\end{equation} 	
since $G: \catA\to\catB $ preserves products and
\begin{center}
$\mathrm{dom} ^{-1}\left( \catA \right) = G  $, \qquad $\mathrm{dom} ^{-1}\left( \catA\times \catA  \right) = G\times G $;
\end{center} 
	\item  consider the binary coproduct $\sqcup : \catA \times\catA   \to \catA $ (which exists, since $G$ is comonadic). We have that $\sqcup : \catB \times\catB   \to \catB $ (which is a morphism of $\mnPoly _ \catB $) is such that we have an isomorphism
\begin{equation}
	\begin{tikzpicture}[x=2cm, y=0.7cm]
		\node (a) at (0,0) {$ \catA\times \catA   $};
		\node (b) at (2, 0) {$ \catA  $ };
		\node (c) at (2,-2) {$ \catB  $ };
		\node (d) at (0,-2) {$\catB\times\catB   $ };
		\node (z) at (1,-1) {$\xleftrightarrow{\cong } $ };
		\draw[->] (a)--(b) node[midway,above] {$ \sqcup  $};
		\draw[->] (d)--(c) node[midway,below] {$ \sqcup    $};
		\draw[->] (a)--(d) node[midway,left] {$ G\times G = \mathrm{dom} ^{-1} \left( \catA\times \catA   \right)    $};
		\draw[->] (b)--(c) node[midway,right] {$ G = \mathrm{dom} ^{-1} \left( \catA \right)  $};
	\end{tikzpicture} 
\end{equation} 	
since $G: \catA\to\catB $ preserves coproducts.
	\item for any pair of objects  $\left( \catC, \catD  \right) \in \mnPoly _ \catA\times \mnPoly _ \catA $,   we have, of course, that
\begin{equation}
	\begin{tikzpicture}[x=1cm, y=0.7cm]
		\node (a) at (0,0) {$ \catC\times \catD   $};
		\node (b) at (2, 0) {$ \catC  $ };
		\node (c) at (2,-2) {$ \underline{\overline{\catC }} $ };
		\node (d) at (0,-2) {$\underline{\overline{\catC }}\times \underline{\overline{\catD }}  $ };
		\draw[->] (a)--(b) node[midway,above] {$ \pi _1  $};
		\draw[->] (d)--(c) node[midway,below] {$ \pi _1    $};
		\draw[->] (a)--(d) node[midway,left] {$ \mathrm{dom} ^{-1} \left( \catC\times \catD   \right)    $};
		\draw[->] (b)--(c) node[midway,right] {$  \mathrm{dom} ^{-1} \left( \catC \right)  $};
		\node (aa) at (7,0) {$ \catC\times \catD   $};
		\node (bb) at (9, 0) {$ \catD  $ };
		\node (cc) at (9,-2) {$ \underline{\overline{\catD }} $ };
		\node (dd) at (7,-2) {$\underline{\overline{\catC }}\times \underline{\overline{\catD }}  $ };
		\draw[->] (aa)--(bb) node[midway,above] {$ \pi _2  $};
		\draw[->] (dd)--(cc) node[midway,below] {$ \pi _2    $};
		\draw[->] (aa)--(dd) node[midway,left] {$ \mathrm{dom} ^{-1} \left( \catC\times \catD   \right)    $};
		\draw[->] (bb)--(cc) node[midway,right] {$  \mathrm{dom} ^{-1} \left( \catD \right)  $};
	\end{tikzpicture} 
\end{equation} 	
commute and $\pi _1 : \underline{\overline{\catC }}\times \underline{\overline{\catD }}\to \underline{\overline{\catC }} $ and
$\pi _2  : \underline{\overline{\catC }}\times \underline{\overline{\catD }}\to \underline{\overline{\catD }} $ are morphisms
in $\mnPoly _ \catB $.
	\item given objects $ \catD ', \catD '' ,  \catD '''$  of $\mnPoly _\catA  $, if $E: \catD '   \to \catD ''    $ and $J : \catD '  \to \catD '''    $ are morphisms of $\mnPoly _ \catA $ such that we have natural isomorphisms 
	\begin{eqnarray*}
\gamma : & \overline{\underline{ E } } \circ \mathrm{dom} ^{-1} \left(  \catD '  \right) & \cong \mathrm{dom}^{-1} \left(  \catD ''\right) \circ E\\
\gamma ' : & \overline{\underline{ J } } \circ \mathrm{dom} ^{-1} \left(  \catD '  \right) & \cong  \mathrm{dom} ^{-1} \left(  \catD '''  \right) \circ J    
	\end{eqnarray*}
in which $ \overline{\underline{ J } }  $ and $\overline{\underline{ E } } $ are morphisms of $\mnPoly _ \catB $, then
$\left( \overline{\underline{ E } }, \overline{\underline{ J } }\right)  $ is a morphism in $\mnPoly _ \catB $ and
$\left( \gamma , \gamma '\right) $ defines an isomorphism 
\begin{equation}
	\left( \overline{\underline{ E } }, \overline{\underline{ J } }\right)  \circ \mathrm{dom} ^{-1} \left(  \catD '  \right) \cong \mathrm{dom} ^{-1} \left(  \catD ''\times  \catD ''' \right) \circ (E, J) . 
\end{equation}
	\item   if $\catC$ is an object of $\mnPoly _ \catA $ and  $H: \catC\times \catA   \to\catA  $ is a morphism of $\mnPoly _ \catA $ 
	such that there is an isomorphism 
	$$ \gamma :  \overline{\underline{ H } } \circ \mathrm{dom} ^{-1} \left(  \catC\times \catA   \right) \cong \mathrm{dom} ^{-1} \left(  \catA \right) \circ H $$
	in which $\overline{\underline{ H } } $ is a morphism of $\mnPoly _ \catB $, then, since $G$ creates initial algebras and terminal coalgebras, we get that
	there are natural transformations
\begin{eqnarray*}
& \mu\overline{\underline{ H } } \circ \mathrm{dom} ^{-1} \left(   \catA   \right) & \cong \mathrm{dom} ^{-1} \left(  \catA \right) \circ \mu H\\
 & \nu\overline{\underline{ H } } \circ \mathrm{dom} ^{-1} \left(   \catA   \right) & \cong \mathrm{dom} ^{-1} \left(  \catA \right) \circ \nu H	
\end{eqnarray*} 		 
and, of course, $\mu\overline{\underline{ H } } $ and $\nu\overline{\underline{ H } } $ are morphisms of $\mnPoly _ \catB $.
\end{enumerate}
\end{proof}

\subsection{$\mu\nu$-polynomials in product categories}
Before applying the results above to study the $\mu\nu$-polynomials in suitable scones $\catD\downarrow G $,
we need to study the $\mu\nu$-polynomials in product categories $\catC\times\catD $. We start by showing that:

\begin{lemma}\label{lem:product-initial-algebras}
Let $\left( E _ i: \catC _ {i} \to \catC _{i} \right) _ {i\in L}  $  be a (possibly infinite) family of endofunctors such that $E_i$ has initial algebra $\left( \mu E_i, \ind_{E_i} \right)$ and terminal coalgebra $\left( \nu E_i, \coind_{E_i} \right)$. The functor defined by the product  
	\begin{equation}
		\prod\limits _{i\in L} E _ i: \prod\limits _{i\in L} \catC _ i \to \prod\limits _{i\in L} \catC _ i
	\end{equation}
has initial algebra given by  $\left( \mu E_i, \ind_{E_i} \right) _{i\in L} $ and terminal coalgebra given by 
$\left( \nu E_i, \coind_{E_i} \right) _{i\in L} $.

As a consequence, if  $\left( H _ i: \catA_ i\times \catC _ {i} \to \catC _{i} \right) _ {i\in L}  $ is a a (possibly infinite) family of functors with parameterized initial algebras and terminal coalgebras, then $\prod\limits _{i\in L} H _ i$
has parameterized initial algebra given by $\prod\limits _{i\in L} \mu H _ i : \prod\limits _{i\in L }\catA_ i\to \prod\limits _{i\in L }\catC _i $ and parameterized terminal coalgebra 
given by $\prod\limits _{i\in L} \nu H _ i : \prod\limits _{i\in L }\catA_ i\to \prod\limits _{i\in L }\catC _i$.
\end{lemma}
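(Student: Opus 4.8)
The plan is to prove the statement about products of endofunctors first and then derive the statement about parameterized (co)algebras as a formal consequence, exactly as the phrasing ``as a consequence'' suggests. The key observation throughout is that the product category $\prod_{i\in L}\catC_i$ has all its structure computed componentwise: a morphism into or out of a product object is precisely a family of morphisms in the components, and composition, identities, and the action of $\prod_{i\in L}E_i$ are all determined coordinate by coordinate. This reduces every universal property we need to verify to the conjunction of the corresponding componentwise universal properties, each of which holds by hypothesis.

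First I would treat the initial algebra claim. Given the family $\left(\mu E_i,\ind_{E_i}\right)_{i\in L}$, I regard $\ind:=\left(\ind_{E_i}\right)_{i\in L}$ as a morphism $\left(\prod_{i\in L}E_i\right)\left(\left(\mu E_i\right)_{i\in L}\right)\to\left(\mu E_i\right)_{i\in L}$; this typechecks because $\prod_{i\in L}E_i$ acts componentwise. To see it is initial, take any $\left(\prod_{i\in L}E_i\right)$-algebra, which unwinds to a family $\left((Y_i,\xi_i)\right)_{i\in L}$ of $E_i$-algebras. For each $i$ there is a unique algebra morphism $\fold_{E_i}\xi_i:\mu E_i\to Y_i$, and I would assemble these into $\left(\fold_{E_i}\xi_i\right)_{i\in L}$, which is then an algebra morphism in the product (the defining square commutes componentwise). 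Uniqueness in the product follows from uniqueness in each component: any competing algebra morphism is a family of competing algebra morphisms, each forced to coincide with $\fold_{E_i}\xi_i$. The terminal coalgebra claim is handled by the formally dual argument using $\unfold_{E_i}$, or equivalently by invoking the self-duality already recorded via $E\CCoAlg:=\left(E^{\op}\AAlg\right)^{\op}$ and applying the initial-algebra case to the opposite categories.

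Next I would derive the parameterized consequence. Given the family $\left(H_i:\catA_i\times\catC_i\to\catC_i\right)_{i\in L}$ with parameterized initial algebras and terminal coalgebras, I would apply Proposition~\ref{prop:general_parameterized_initial_algebras}: it suffices to check that for each object $X=(X_i)_{i\in L}$ of $\prod_{i\in L}\catA_i$ the endofunctor $\left(\prod_{i\in L}H_i\right)^X$ has an initial algebra and a terminal coalgebra. But $\left(\prod_{i\in L}H_i\right)^X=\prod_{i\in L}H_i^{X_i}$ by componentwise evaluation, and each $H_i^{X_i}$ has $\mu H_i^{X_i}$ and $\nu H_i^{X_i}$ by hypothesis; so the first part of this lemma applies to give $\mu\left(\prod_{i\in L}H_i\right)^X=\left(\mu H_i^{X_i}\right)_{i\in L}=\left(\prod_{i\in L}\mu H_i\right)(X)$, and dually for $\nu$. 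Functoriality of the induced operators, and the identification of $\prod_{i\in L}\mu H_i$ and $\prod_{i\in L}\nu H_i$ as the parameterized (co)algebra functors, then follows because the formulas for $\mu H(f)$ and $\nu H(f)$ in Proposition~\ref{prop:general_parameterized_initial_algebras} are themselves expressed through $\fold$ and $\unfold$, which I have just shown are computed componentwise.

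I do not expect a genuine obstacle here, as the result is essentially bookkeeping about the product $2$-categorical structure of $\Cat$; the only point requiring a little care is the possibly infinite index set $L$, but since products in $\Cat$ of arbitrary families are still computed pointwise on objects, morphisms, and composition, the infinite case is handled identically to the finite one with no limit or size issues. The mildest subtlety worth stating explicitly is that the unique factorisation through a product object is itself unique, so that ``a family of unique maps'' really does assemble to ``a unique map''; spelling this out is what secures uniqueness and hence initiality/terminality in $\prod_{i\in L}\catC_i$.
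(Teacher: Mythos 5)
Your proposal is correct and follows essentially the same route as the paper's proof: algebras of $\prod_{i\in L}E_i$ are decomposed componentwise into families of $E_i$-algebras, the fold is assembled as the family of component folds with uniqueness checked coordinatewise, and the coalgebra case is handled dually. The paper's proof only spells out this first part and leaves the parameterized consequence implicit, whereas you derive it explicitly via Proposition~\ref{prop:general_parameterized_initial_algebras} and the identification $\left(\prod_{i\in L}H_i\right)^X=\prod_{i\in L}H_i^{X_i}$, which is exactly the intended argument.
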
 	
\begin{proof}
Given an $\left( \prod\limits _{i\in L} E _ i \right)$-algebra $\left( Y_i, \xi _i \right) _{i\in L} $, we have that 
$\left( Y _i , \xi _i \right)$  is an $E_i$-algebra for every $i\in L$. Therefore, by the universal property of $\left( \mu E_i ,  \ind _{E_i}\right)  $ for each $i$, we conclude that 
\begin{equation} 
\fold_{\prod\limits _{i\in L} E _ i } \left( Y_i, \xi _i \right) _{i\in L} \coloneqq \left( \fold_{E_i} \left( Y _i, \xi _i \right)\right) _{i\in L} 
\end{equation} 
is the unique morphism in $\prod\limits _{i\in L} \catC  _ i  $ such that  
\begin{equation}\label{eq:fold-product-of-categories} 
	\begin{tikzpicture}[x=9.5cm, y=3cm]
		\node (a) at (0,0) {$	\prod\limits _{i\in L} E _ i\left( \mu E_i\right) _{i\in L} =  \left( E_i\left( \mu E_i\right) \right) _{i\in L} $};
		\node (d) at (0,-1) {$ \left( \mu E_i \right) _{i\in L}  $};
		\node (b) at (1, 0) { $\prod\limits _{i\in L} E _ i \left( Y_i\right) _{i\in L} $ };
		\node (c) at (1,-1) {$\left( Y_i\right) _{i\in L} $};
		\draw[->] (a)--(b) node[midway,above] {$ \prod\limits _{i\in L} E _ i \left( \fold_{E_i} \left( Y _i, \xi _i \right)\right) _{i\in L}  $};
		\draw[->] (a)--(b) node[midway,below] {$ \prod\limits _{i\in L} E _ i \left( \fold_{\prod\limits _{i\in L} E _ i} \left( Y_i, \xi _i \right) _{i\in L}  \right)   $};
		\draw[->] (b)--(c) node[midway,right] {$  \left( \xi _i\right) _{i\in L}   $};
		\draw[->] (a)--(d) node[midway,left] {$ \left( \ind _{E_i} \right) _{i\in L}   $};
		\draw[->] (d)--(c) node[midway,above] {$ \fold_{\prod\limits _{i\in L} E _ i} \left( Y_i, \xi _i \right) _{i\in L}  $};
		\draw[->] (d)--(c) node[midway,below] {$ \left( \fold_{E_i} \left( Y _i, \xi _i \right)\right) _{i\in L}   $};
	\end{tikzpicture} 
\end{equation}
holds. 
   This proves that $\left( \mu E_i, \ind_{E_i} \right) _{i\in L}   $ is the initial $\left( \prod\limits _{i\in L} E _ i \right) $-algebra. Dually, $\left( \nu E_i, \coind_{E_i} \right) _{i\in L}  $ is the terminal $\left( \prod\limits _{i\in L} E _ i \right) $-coalgebra.
\end{proof}

We prove below that the binary product of categories with $\mu\nu$-polynomials has $\mu\nu$-polynomials.
We start by:
\begin{definition}[$\deckk{\catA}$]
	Let $\left( \catC _i \right) _ {i\in L}$ be a (possibly infinite) family of categories. We  establish a family  	\begin{equation}\label{eq:triple-for-the-isomorphism-product-of-categories-munu-polynomials}
		\left( \deckk\catA^i : \catA \to \obdeck{i}{\catA }    \right) _{\left( \catA , i \right) \in \left( \objects\left(\mnPoly _ {\prod\limits _{i\in L} \catC _ i}\right)\times L\right)  }
	\end{equation}
 of functors, where $\obdeck{i}{\catA } \in \objects\left( \mnPoly _{\catC_i }\right) $,  by induction on 
 the objects of $\mnPoly _ {\prod\limits _{i\in L} \catC _ i}$:\\
\noindent\begin{minipage}{.5\linewidth} 
	\begin{equation}
		\deckk\terminal ^i \coloneqq  \id _{\terminal}  ; 
	\end{equation}
\end{minipage} 
\begin{minipage}{.5\linewidth} 
	\begin{equation} 
		\deckk{\prod\limits _{i\in L} \catC _ i} ^i \coloneqq  \projection _{\catC _i }: \prod\limits _{i\in L} \catC _ i\to \catC _i ; 
	\end{equation}  
\end{minipage} 
\begin{equation}
	\deckk{\catA\times\catA '} ^i \coloneqq  \deckk{\catA} ^i\times \deckk{\catA '} ^i  ,\mbox{ if }\left( \catA , \catA '\right) \in\objects\left(\mnPoly _ {\prod\limits _{i\in L} \catC _ i} \right) ^2.
\end{equation}
Finally, for each $\catA\in \objects\left(\mnPoly _ {\prod\limits _{i\in L} \catC _ i}\right) $, we define the isomorphism of categories
\begin{equation} 
	\deckk{\catA}  \coloneqq  \left( \deckk{\catA}^0 , \deckk{\catA}^1\right).
\end{equation}
\end{definition} 	

\begin{lemma}\label{lem:support-munupolynomials-products}
	Let $\left( \catC _i \right) _ {i\in L}$ be a (possibly infinite) family of categories with $\mu\nu$-polynomials. For each pair $\left(  \catA, \catA '\right)\in\objects\left( \mnPoly _ {\prod\limits _{i\in L} \catC _ i} \right) ^2  $ and any functor $H: \catA \to \catA '$ in $\mnPoly _ {\prod\limits _{i\in L} \catC _ i}$, we have that 
$ \deckk{\catA'}\circ H\circ \deckk{\catA}^{-1} = \prod\limits _{i\in L} H _ i $ for some 
morphism $\left( H _i \right) _ {i\in L}$ in $\prod\limits _{i\in L} \mnPoly _ {\catC _ i}$.
\end{lemma}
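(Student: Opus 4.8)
The plan is to proceed by structural induction on the way $H$ is generated as a morphism of the smallest subcategory $\mnPoly_{\prod_{i\in L}\catC_i}$ of $\Cat$, following the clauses of Definition \ref{def:basic-definition-munupolynomials}, exploiting at every step that products, coproducts, initial algebras and terminal coalgebras in a product category $\prod_{i\in L}\catC_i$ are all computed componentwise. First I would record two facts used throughout. Recall that each $\deckk{\catA}$ is an isomorphism of categories; inductively this rests on $\deckk{\terminal}=\id[\terminal]$ (up to the canonical $\prod_{i\in L}\terminal\cong\terminal$), on $\deckk{\prod_{i\in L}\catC_i}=(\projection_{\catC_i})_{i\in L}$ being the identity, and on $\deckk{\catA\times\catA'}$ being, modulo the reshuffling isomorphism $\prod_{i\in L}(\obdeck{i}{\catA}\times\obdeck{i}{\catA'})\cong(\prod_{i\in L}\obdeck{i}{\catA})\times(\prod_{i\in L}\obdeck{i}{\catA'})$, the product $\deckk{\catA}\times\deckk{\catA'}$. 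Secondly, every $\obdeck{i}{\catA}$ is a \emph{nonempty} category, since $\terminal$ and each $\catC_i$ (which has a terminal object, being a category with $\mu\nu$-polynomials) are nonempty and products of nonempty categories are nonempty; this will matter for the $\mu$- and $\nu$-clauses.

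Next I would verify the claim on the generators M1)--M6) of Definition \ref{def:basic-definition-munupolynomials}, together with identities and composites, which are all routine. The unique functor to $\terminal$ decomposes as the product of the unique functors $\obdeck{i}{\catA}\to\terminal$; a functor $\terminal\to\catA$ (an object of $\catA$) decomposes through $\deckk{\catA}$ as the product of the corresponding objects $\terminal\to\obdeck{i}{\catA}$; the binary product $\times$ and binary coproduct $\sqcup$ on $\prod_{i\in L}\catC_i$ decompose as $\prod_{i\in L}\times_{\catC_i}$ and $\prod_{i\in L}\sqcup_{\catC_i}$, because (co)products in a product category are componentwise; projections decompose as products of projections; and a pairing $(E,J)$ decomposes as $\prod_{i\in L}(E_i,J_i)$ once $E$ and $J$ decompose, by the inductive hypothesis. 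In each case the components lie in $\mnPoly_{\catC_i}$ by the corresponding clause of $\mnPoly_{\catC_i}$, and I would only chase the $\deckk{}$ isomorphisms to see the stated equality. Identities decompose as products of identities, and composites decompose componentwise via $\deckk{\catA''}\circ(H_2\circ H_1)\circ\deckk{\catA}^{-1}=\bigl(\deckk{\catA''}\circ H_2\circ\deckk{\catA'}^{-1}\bigr)\circ\bigl(\deckk{\catA'}\circ H_1\circ\deckk{\catA}^{-1}\bigr)$, using closure of $\mnPoly_{\catC_i}$ under composition.

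The crux is the clauses M7) and M8), and here I would lean on Lemma \ref{lem:product-initial-algebras}. Suppose $H=\mu\widetilde H$ with $\widetilde H:\catA\times\bigl(\prod_{i\in L}\catC_i\bigr)\to\prod_{i\in L}\catC_i$ a morphism of $\mnPoly_{\prod_{i\in L}\catC_i}$ whose parameterized initial algebra exists; by the inductive hypothesis $\widetilde H$ decomposes through $\deckk{}$ as $\prod_{i\in L}\widetilde H_i$ with $\widetilde H_i:\obdeck{i}{\catA}\times\catC_i\to\catC_i$ in $\mnPoly_{\catC_i}$. For each object $X$ of $\catA$, writing $X\cong(x_i)_{i\in L}$, I have $\widetilde H^X\cong\prod_{i\in L}\widetilde H_i^{x_i}$; since $\bigl(\prod_{i\in L}\widetilde H_i^{x_i}\bigr)\AAlg\cong\prod_{i\in L}\bigl(\widetilde H_i^{x_i}\AAlg\bigr)$ while an initial object of a product category exists iff each factor has one, existence of $\mu\widetilde H^X$ forces each $\mu\widetilde H_i^{x_i}$ to exist. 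Using the nonemptiness of the factors, every $x_i\in\obdeck{i}{\catA}$ occurs as a component of some such $X$, so each parameterized $\mu\widetilde H_i$ exists (Proposition \ref{prop:general_parameterized_initial_algebras}) and is a morphism of $\mnPoly_{\catC_i}$ by clause M7) there. Lemma \ref{lem:product-initial-algebras} (its parameterized consequence) then gives $\mu\widetilde H=\prod_{i\in L}\mu\widetilde H_i$ up to the isomorphisms $\deckk{}$, which is the desired decomposition; clause M8) is dual, replacing initial algebras and $\fold$ by terminal coalgebras and $\unfold$.

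The main obstacle I anticipate is precisely this existence bookkeeping in M7)/M8): Lemma \ref{lem:product-initial-algebras} is phrased as a sufficient condition (componentwise existence implies product existence), whereas here I am handed existence of $\mu\widetilde H$ in the product category and must run the implication in reverse to recover the componentwise $\mu\widetilde H_i$. Making this converse clean is what forces the observation that $\bigl(\prod_{i\in L}E_i\bigr)\AAlg\cong\prod_{i\in L}\bigl(E_i\AAlg\bigr)$ together with the nonemptiness of all the factor categories $\obdeck{i}{\catA}$; everything else is a disciplined but routine chase of the canonical isomorphisms $\deckk{}$.
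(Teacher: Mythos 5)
Your proof is correct, and on the base cases and closure under identities and composition it matches the paper's proof, which likewise treats clauses M\ref{eq:munu-M1})--M\ref{eq:munu-M6}) as routine componentwise verifications through the $\deckk{}$ isomorphisms. The genuine divergence is at clauses M\ref{eq:munu-M7}) and M\ref{eq:munu-M8}). The paper never needs to run Lemma \ref{lem:product-initial-algebras} in reverse: since each $\catC_i$ is assumed to have $\mu\nu$-polynomials and each component $\widetilde{H}_i : \obdeck{i}{\catA}\times\catC_i\to\catC_i$ lies in $\mnPoly_{\catC_i}$, Lemma \ref{lem:about-completeness-of-munupolynomials} already guarantees that the parameterized $\mu \widetilde{H}_i$ and $\nu \widetilde{H}_i$ exist outright; Lemma \ref{lem:product-initial-algebras} then identifies $\mu \widetilde{H}$ and $\nu \widetilde{H}$ with $\prod_{i\in L}\mu \widetilde{H}_i$ and $\prod_{i\in L}\nu \widetilde{H}_i$ modulo the $\deckk{}$ isomorphisms. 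Your route instead extracts componentwise existence from the product-level existence hypothesis built into clause M\ref{eq:munu-M7}), via the decomposition $\left(\prod_{i\in L} E_i\right)\AAlg\cong\prod_{i\in L}\left(E_i\AAlg\right)$, the observation that an initial object of a product category restricts to an initial object in each factor (your identity-padding argument, which is sound and works for infinite $L$), and the nonemptiness of every $\obdeck{j}{\catA}$ (also sound: these are finite products of copies of $\terminal$ and the $\catC_i$, all of which have terminal objects). This amounts to proving a converse to Lemma \ref{lem:product-initial-algebras}. What your approach buys: at this step you use only nonemptiness of the factor categories rather than the full $\mu\nu$-polynomial completeness of each $\catC_i$, so the decomposition argument would survive weakening that hypothesis. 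What it costs: the extra bookkeeping (algebra-category decomposition, padding, nonemptiness) that the paper's direct appeal to Lemma \ref{lem:about-completeness-of-munupolynomials} renders unnecessary, since the stronger hypothesis is available in the statement of the lemma anyway.
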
 	
\begin{proof}
	It is clear the property above is closed under composition, and the identity on $\prod\limits _{i\in L} \catC _ i $ satisfies the property.
	Moreover, for the base case (see Definition \ref{def:basic-definition-munupolynomials}), it is clear that the functors in $\mnPoly _ {\prod\limits _{i\in L} \catC _ i}$ defined by the base cases \ref{eq:munu-M1} and \ref{eq:munu-M2} satisfy the statement above.
	Moreover, since the binary products and coproducts in $\prod\limits _{i\in L} \catC _ i$  are defined pointwise, it is also true that
	\ref{eq:munu-M3} and \ref{eq:munu-M4} satisfy the statement above. Finally, it is also clear that the statement above holds for \ref{eq:munu-M5} and \ref{eq:munu-M6}.
	
	We assume, by induction, that $H: \catA \times \prod\limits _{i\in L} \catC _ i \to \prod\limits _{i\in L} \catC _ i $ is a morphism of $\mnPoly _ {\prod\limits _{i\in L} \catC _ i}$
	such that $H\circ \deckk{\catA\times\prod\limits _{i\in L} \catC _ i}^{-1} = \prod\limits _{i\in L} H _ i$ for some 
	morphism $\left( H _i \right) _ {i\in L}$ in $\prod\limits _{i\in L} \mnPoly _ {\catC _ i}$.
	
	Since $\catC _i $ has $\mu\nu$-polynomials for all $i\in L$, we have that $H_i $ has parameterized initial algebras and parameterized terminal coalgebras for all $i\in L$. Therefore, $\prod\limits _{i\in L} H _ i$ has parameterized initial algebra $\prod\limits _{i\in L} \mu H _ i $ and parameterized terminal coalgebra $\prod\limits _{i\in L} \nu H _ i  $ by Lemma \ref{lem:product-initial-algebras}.
	Hence $\mu H = \left( \prod\limits _{i\in L} \mu H _ i \right) \circ \deckk{\catA} $ and $\nu H = \left( \prod\limits _{i\in L} \nu H _ i  \right) \circ \deckk{\catA} $ where $\left( \mu H _i \right) _ {i\in L}$
	and $\left( \nu H _i \right) _ {i\in L}$ are morphisms in $\prod\limits _{i\in L} \mnPoly _ {\catC _ i}$. This completes the proof.	
\end{proof}

\begin{therm}\label{theo:munupolynomials-of-product-of-categories}
	Let $\left( \catC _i \right) _ {i\in L}$ be a (possibly infinite) family of categories with $\mu\nu$-polynomials.  The category $\prod\limits _{i\in L} \catC _ i $ has $\mu\nu$-polynomials.
\end{therm} 
\begin{proof}

For each endofunctor $E: \prod\limits _{i\in L} \catC _ i \to \prod\limits _{i\in L} \catC _ i $ in $\mnPoly _ {\catC\times \catD}$,
we conclude that $E= \prod\limits _{i\in L} E_i $ for some morphism $\left( E_i : \catC _i \to \catC _i \right) _{i\in L}$ of $\mnPoly _ {\prod\limits _{i\in L} \catC _ i } $ by Lemma \ref{lem:support-munupolynomials-products}. Therefore, by Lemma \ref{lem:product-initial-algebras}, 
$E$ has initial algebra and terminal coalgebra, since the functors of the family $\left( E_i : \catC _i \to \catC _i \right) _{i\in L}$ do.
\end{proof}

\subsection{Suitable scones have $\mu\nu$-polynomials} 

Finally, we establish the existence of $\mu\nu$-polynomials in the scone, and the preservation of the initial algebras and terminal coalgebras by the forgetful functor. 

\begin{corollary}\label{cor:sconing-munupolynomials}
	Let $\catC $ and $\catD $ be categories with $\mu\nu $-polynomials.
	If $G: \catC\to \catD $ has a left adjoint, then $\catD\downarrow G $
	has $\mu\nu $-polynomials and
	\begin{equation}
		\forgetfulS : \catD\downarrow G \to \catD\times \catC 
	\end{equation}
(strictly) preserves (in fact, creates) $\mu\nu$-polynomials.
\end{corollary}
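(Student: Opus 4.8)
The plan is to assemble the statement from the tools developed in this section together with Theorem~\ref{theo:munupolynomials-of-product-of-categories}, so that no new computation is required. First I would verify the hypotheses of Corollary~\ref{coro:basic-comonadicity-monadicity-sconing}: since $\catC$ and $\catD$ both have $\mu\nu$-polynomials, they have in particular finite products and finite coproducts, so $\catC$ has binary coproducts and $\catD$ has binary products. Together with the standing assumption that $G$ has a left adjoint, Corollary~\ref{coro:basic-comonadicity-monadicity-sconing} then gives that the forgetful functor $\forgetfulS:\catD\downarrow G\to\catD\times\catC$ is simultaneously monadic and comonadic.

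The second step is to check that the base category $\catD\times\catC$ has $\mu\nu$-polynomials. This is precisely the binary instance of Theorem~\ref{theo:munupolynomials-of-product-of-categories}, applied to the finite family $(\catD,\catC)$ of categories with $\mu\nu$-polynomials.

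The third and final step is to invoke Corollary~\ref{coro:creating-munu-polynomials}, taking its monadic-comonadic functor to be $\forgetfulS$ and its codomain to be $\catD\times\catC$. That corollary then yields at once that $\catD\downarrow G$ has $\mu\nu$-polynomials and that $\forgetfulS$ creates them. Since creation of $\mu\nu$-polynomials entails their (strict) preservation, this establishes both claims of the statement.

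Because every ingredient is already in place, there is no genuinely hard step; the work is entirely bookkeeping. The points I would take care over are that the chosen-structure conventions for ``having $\mu\nu$-polynomials'' agree across Theorem~\ref{theo:munupolynomials-of-product-of-categories} and Corollary~\ref{coro:creating-munu-polynomials}, and that $\forgetfulS$ is monadic-comonadic \emph{on the nose}, as required by Corollary~\ref{coro:creating-munu-polynomials}, rather than only up to equivalence. The one conceptual difficulty --- reconciling the inductive generation of $\mnPoly_{\catD\downarrow G}$ with creation along a monadic-comonadic functor --- has already been discharged inside the proof of Corollary~\ref{coro:creating-munu-polynomials}, so here it reduces to confirming that its hypotheses hold verbatim.
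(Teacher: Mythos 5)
Your proposal is correct and follows essentially the same route as the paper's own proof: establish that $\forgetfulS$ is monadic-comonadic via Corollary~\ref{coro:basic-comonadicity-monadicity-sconing}, obtain $\mu\nu$-polynomials in $\catD\times\catC$ from Theorem~\ref{theo:munupolynomials-of-product-of-categories}, and conclude by creation along monadic-comonadic functors (Corollary~\ref{coro:creating-munu-polynomials}). Your additional care about verifying the hypotheses (finite (co)products from the definition of having $\mu\nu$-polynomials) is sound and only makes explicit what the paper leaves implicit.
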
	
\begin{proof}
	By Corollary \ref{coro:basic-comonadicity-monadicity-sconing}, we have that
	$\forgetfulS $ is monadic and comonadic. Hence it creates $\mu\nu $-polynomials and we get the conclusion of the result provided that  
	$\catD\times \catC  $ has $\mu\nu $-polynomials.
	
	Indeed, by Theorem \ref{theo:munupolynomials-of-product-of-categories}, $\catD\times \catC  $  has $\mu\nu $-polynomials provided that $\catD$ and $\catC$ have $\mu\nu $-polynomials.
\end{proof}

\subsection{The projection $\catD\downarrow G\to \catC $}
Let $\catC $ and $\catD $ be bicartesian closed categories with finite limits.
Recall that $\pi _{\catC } : \catD\times \catC \to \catC $ has left and right adjoints, respectively given by  $W\mapsto \left( W, \initial   \right)  $ and $W\mapsto \left( W, \terminal \right) $.
Therefore, assuming that $G: \catC\to\catD $ has a left adjoint,
we get that
\begin{equation}\label{eq:projection-semantics-transformation}
\catD\downarrow G \xrightarrow{\forgetfulS } \catD\times \catC \xrightarrow{\pi_2 } \catC  
\end{equation}
has a left adjoint and a right adjoint. Therefore it preserves limits, colimits, initial algebras and terminal coalgebras. Finally, \eqref{eq:projection-semantics-transformation} preserves the closed structure by Corollary \eqref{coro:closed-structure-comonadic-monadic-special-case}.

\begin{corollary}\label{coro:main-sconing}	
Let $\catC $ and $\catD $ be finitely complete bicartesian closed categories that have $\mu\nu $-polynomials.
If $G : \catC \to\catD $ has a left adjoint, the category $\catD\downarrow G $ is a finitely complete bicartesian closed category with $\mu\nu $-polynomials, and \eqref{eq:projection-semantics-transformation}  is a (strictly) bicartesian closed functor that (strictly) preserves $\mu\nu $-polynomials.

Furthermore, if, additionally, $\catC$ and $\catD $ have infinite coproducts, so does $\catD\downarrow G$ and \eqref{eq:projection-semantics-transformation} (strictly) preserves them.
\end{corollary}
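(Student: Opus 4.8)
The statement is a synthesis of the comonadicity/monadicity results of this section with the structural corollaries they feed into, so the plan is to assemble those ingredients rather than prove anything new from scratch. First I would record that the hypotheses put us squarely in the scope of Corollary~\ref{coro:basic-comonadicity-monadicity-sconing}: since $\catC$ and $\catD$ are finitely complete and bicartesian closed, $\catD$ has binary products and $\catC$ has binary coproducts; together with the assumption that $G$ has a left adjoint, this makes the forgetful functor $\forgetfulS:\catD\downarrow G\to\catD\times\catC$ both monadic and comonadic. From this single fact everything about the scone's own structure follows. By Proposition~\ref{prop:creation-limits-monadicity}, $\forgetfulS$ creates all limits (monadicity) and all colimits (comonadicity); since $\catD\times\catC$ is finitely complete bicartesian, so is $\catD\downarrow G$. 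Cartesian closedness is Corollary~\ref{coro:closed-structure-comonadic-monadic-special-case} (equivalently Proposition~\ref{prop:closed-structure-comonadic-monadic}), which applies verbatim because $\catC$ and $\catD$ are finitely complete cartesian closed and $G$ has a left adjoint; this also supplies the explicit exponential \eqref{eq:exponential-scone} that I will reuse below. Finally, $\mu\nu$-polynomials in $\catD\downarrow G$ and their creation by $\forgetfulS$ are exactly Corollary~\ref{cor:sconing-munupolynomials}, whose hypotheses (both $\catC$ and $\catD$ have $\mu\nu$-polynomials, $G$ has a left adjoint) are in force. This settles the first assertion.

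For the second assertion, concerning \eqref{eq:projection-semantics-transformation}, the key observation is already spelled out in the paragraph preceding the corollary: $\pi_{\catC}:\catD\times\catC\to\catC$ has both a left and a right adjoint ($W\mapsto(W,\initial)$ and $W\mapsto(W,\terminal)$), and $\forgetfulS$ has both adjoints by its monadicity and comonadicity. Hence the composite $\pi_2\circ\forgetfulS$ has both a left and a right adjoint, so it preserves all limits and all colimits; in particular it preserves finite products, the terminal object, finite coproducts and the initial object. For the inductive/coinductive part I would invoke Theorem~\ref{theo:right-adjoint-preserves-terminal-coalgebras}: right adjoints preserve terminal coalgebras and left adjoints preserve initial algebras, so a functor that is simultaneously a left and a right adjoint preserves both, and combined with the preservation of (co)products this yields preservation of $\mu\nu$-polynomials. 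Preservation of the closed structure is read off Corollary~\ref{coro:closed-structure-comonadic-monadic-special-case}, since the exponential $(C_0,C_1,f)\Rightarrow(D_0,D_1,f')$ in the scone has $\catC$-component $C_1\Rightarrow D_1$, which is precisely the exponential of the projected objects. Strictness throughout is immediate because products, coproducts and (co)limits in $\catD\times\catC$ are computed componentwise and $\forgetfulS$ creates them on the nose, so $\pi_2\circ\forgetfulS$ lands on the chosen structure exactly.

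The last assertion, about infinite coproducts, is handled the same way: if $\catC$ and $\catD$ additionally have infinite coproducts then so does $\catD\times\catC$ (componentwise), and since $\forgetfulS$ creates colimits, $\catD\downarrow G$ inherits infinite coproducts; the projection \eqref{eq:projection-semantics-transformation}, having a right adjoint, preserves them, strictly by the componentwise description. I do not expect a genuine obstacle here, as the mathematical content is entirely contained in the monadicity/comonadicity established earlier. The one point requiring care is the \emph{strictness} of all the preservation claims: I must make sure to phrase the argument through the explicit componentwise creation of (co)limits and the explicit exponential formula \eqref{eq:exponential-scone}, rather than merely appealing to preservation up to isomorphism, so that the chosen bicartesian closed and $\mu\nu$-polynomial structure on $\catD\downarrow G$ and $\catC$ is matched on the nose.
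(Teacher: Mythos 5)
Your proposal is correct and follows essentially the same route as the paper: the paper also obtains the corollary by assembling Corollary~\ref{coro:basic-comonadicity-monadicity-sconing} (monadicity-comonadicity of $\forgetfulS$), Proposition~\ref{prop:creation-limits-monadicity} and Corollary~\ref{coro:closed-structure-comonadic-monadic-special-case} for the finitely complete bicartesian closed structure, Corollary~\ref{cor:sconing-munupolynomials} for $\mu\nu$-polynomials, and the observation that $\pi_2\circ\forgetfulS$ has both adjoints (hence preserves limits, colimits, initial algebras and terminal coalgebras), with the closed structure preserved via the explicit exponential formula. Your attention to strictness via componentwise creation matches the paper's intent, so there is nothing to correct.
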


\section{Correctness of CHAD for tuples and variant types, by logical relations}\label{sec:logical-relations-argument}
Henceforth we assume the hypothesis established in \ref{subsect:CHAD-SOUNDNESS-ASSUMPTIONS}, and  rely on the concrete semantics and notation established in Section \ref{sec:concrete-semantics-specification}. 

In this section, we present the basic correctness theorem for tuples and variant types, which serves as a crucial step towards establishing the full correctness theorem for data types. 
More precisely, we prove:
\begin{therm}[Correctness of CHAD for tuples and variant tuples]\label{theorem:correctness-theorem-for-data-types}
	For any well-typed program $$ \var[1] : \ty[1] \tinf \trm[1] : \ty[2], $$
	where $\ty[1], \ty[2] $ are data types that do not involve inductive types,  
	we have that $\sem{\trm[1]}$ is differentiable. Moreover, \eqref{eq:derivateive-forward-modee} and \eqref{eq:derivateive-reverse-modee} hold. \\	
	\noindent\begin{minipage}{.5\linewidth} 
\begin{equation}\label{eq:derivateive-forward-modee}
	\semt{\Dsyn{\trm[1]} } = \Ds{\sem{\trm[1]} } 
\end{equation}
	\end{minipage} 
	\begin{minipage}{.5\linewidth} 
\begin{equation}\label{eq:derivateive-reverse-modee}
	\semtt{\Dsynrev{\trm[1]} } = \Dsr{\sem{\trm[1]} } 
\end{equation}
	\end{minipage}
\end{therm}
It should be noted that: (1) we prove our result only assuming that the semantics of the primitive operations are differentiable
instead of requiring them to be smooth;\footnote{We even claim that the result is useful when the semantics of the primitive operations is not differentiable 
everywhere in the domain; see the revised version of \citep{2022arXiv221007724L}.} (2) $\trm[1]$ above might, in particular, have subprograms that use higher-order functions and (co)inductive types. 

The argument we present below is a categorical version of a semantic open logical relations proof; see, for instance, \citep{openlogicalrelationsBarthe, hsv-fossacs2020, vakar2020reverse, vakar2021chad}.
We follow the perspective described in \citep[Section~4]{2022arXiv221008530L} and in \citep{2022arXiv221007724L}.

The precise statement \ref{theorem:correctness-theorem-for-data-types} is presented in Theorem \ref{theorem:correctness-theorem-for-data-typess}. 
\subsection{The scone for the correctness proof}\label{subsect:scones-in-the-proof}
We first establish the appropriate scone for our proof (see Section~\ref{sec:subsconing}).

By Proposition \ref{prop:FAMSET-as-CONCRETE-MODEL}, Corollary \ref{coro:catFV-is-a-model-for-target-language} and Corollary \ref{coro:catFV-is-complete-cocomplete},  
we conclude, in particular, that $\Fam{\Set } $, $\Fam{\Vect}$ and $\Fam{\Vect ^\op}$ are finitely complete cartesian closed categories with $\mu\nu$-polynomials and infinite coproducts. Therefore, we conclude 
that $\Fam{\Set } \times\Fam{\Vect}\times\Fam{\Vect ^\op } $ is a finitely complete cartesian closed category with $\mu\nu $-polynomials and infinite coproducts:
see Theorem \ref{theo:munupolynomials-of-product-of-categories} for the result on $\mu\nu$-polynomials.

We consider the scone along \eqref{eq:functor-SCONE}, which is representable by the coproduct $\coprod\limits _{k\in\NN }  \left( \RR ^k  , \left( \RR ^k , \cRR ^k \right) , \left( \RR ^k , \cRR ^k \right) \right)  $ 
in $\Fam{\Set} \times\Fam{\Vect}\times\Fam{\Vect ^\op } $.
\begin{eqnarray}
	&\forG   &:  \Fam{\Set }\times \Fam{\Vect }  \to \Set \label{eq:functor-SCONE}\\
	&\forG  & \coloneqq \prod _{k\in \NN } \left( \Fam{\Set }\times \Fam{\Vect  }\times \Fam{\Vect ^\op } \left( \left( \RR ^k  , \left( \RR ^k , \cRR ^k \right), \left( \RR ^k , \cRR ^k \right)\right)  , -\right)  \right)\nonumber
\end{eqnarray}
Moreover, \eqref{eq:functor-SCON-copower} given by the copower in $\Fam{\Set}\times \Fam{\Vect }\times \Fam{\Vect ^\op}$ defines the left adjoint $\forF\dashv\forG $.
As a consequence, we get Theorem~\ref{theo:fundamental-scone-result-proof} by Corollary \ref{coro:main-sconing}.
\begin{eqnarray}\label{eq:functor-SCON-copower}
	\forF : &\Set & \to \Fam{\Set}\times \Fam{\Vect }\times \Fam{\Vect ^\op} \\
	&W&\mapsto  W\otimes \coprod_{k\in\NN }  \left( \RR ^k  , \left( \RR ^k , \cRR ^k \right) , \left( \RR ^k , \cRR ^k \right) \right)  \cong \coprod _{x\in W} \left( \coprod_{k\in\NN }  \left( \RR ^k  , \left( \RR ^k , \cRR ^k \right) , \left( \RR ^k , \cRR ^k \right)\right) \right) \nonumber
\end{eqnarray}

\begin{therm}\label{theo:fundamental-scone-result-proof}
	$\Set \downarrow \forG $ is a finitely complete cartesian closed categories with $\mu\nu $-polynomials and infinite coproducts. Moreover, \eqref{eq:projection-semantics-transformation-logical-relations}  is a strictly bicartesian closed functor that preserves $\mu\nu $-polynomials and (infinite) coproducts.
\begin{equation}
	\Set  \downarrow \forG \rightarrow  \Set \times \Fam{\Set}\times \Fam{\Vect }\times \Fam{\Vect ^\op}  \rightarrow \Fam{\Set}\times \Fam{\Vect } \times \Fam{\Vect ^\op}   \label{eq:projection-semantics-transformation-logical-relations} 
\end{equation}	 
\end{therm}

\begin{definition}[$\Fscone$]
	For short, we henceforth denote by \eqref{eq:pifor}, where $\Fscone \ceq \Set  \downarrow \forG $, the forgetful functor \eqref{eq:projection-semantics-transformation-logical-relations}.\\
\begin{equation} \label{eq:pifor}
			\forpi : \Fscone  \to  \Fam{\Set} \times\Fam{\Vect } \times \Fam{\Vect ^\op} 
\end{equation} 
\end{definition}

\subsection{The logical relations}
Guided by the characterization of differentiable morphisms and their derivatives (Lemma \ref{lem:differentiability-derivative-characterization}), 
we now define the objects in $\Fscone$ that will provide us with the appropriate predicates for our logical relations argument.

It should be noted that, for any object $\left( Y, (W,w), (Z,z)\right)$ in $ \Fam{\Set}\times \Fam{\Vect }\times \Fam{\Vect ^\op }  $, the elements of $\revG{\left( Y, (W,w), (Z,z)\right)}$
are families $\left( f _k , g_k, h_k \right) _{k\in\NN } $ where, for each $k\in\NN$,  $f_k : \RR ^k\to Y $ is a morphism in $\Fam{\Set}$, $g_k :  \left( \RR ^k , \cRR ^k \right)\to (W,w)$ is a morphism in $\Fam{\Vect } $ and 
$h_k :  \left( \RR ^k , \cRR ^k \right)\to (Z,z)$ is a morphism in $\Fam{\Vect ^\op } $.
\begin{definition}[$\forsem{\reals ^n }$]\label{def:logical-relation-very-basic-definition}
	For each $n$-dimensional array $\reals^n\in\Syn$, we define the subset \eqref{eq:basic-definition-LR-subset} of $\revR\coloneqq \forG{ \left( \RR ^n , \left( \RR ^n , \cRR ^n \right), \left( \RR ^n , \cRR ^n \right)\right) }$.
	\begin{equation}\label{eq:basic-definition-LR-subset}
		\underline{\forsem{\reals ^n }} \coloneqq \left\{ \left( f_k, g_k, h_k  \right) _{k\in \NN }\in \revR :\forall k\in\NN , \, f_k \,\mbox{is differentiable,}\, g_k  = \Ds{f_k}, \, h_k = \Dsr{f_k} \right\} 
	\end{equation}
	Denoting the subset inclusion by   $$\incLRR :  \underline{\forsem{\reals ^n }} \to\forG{ \left( \RR ^n , \left( \RR ^n , \cRR ^n \right), \left( \RR ^n , \cRR ^n \right)\right) } ,$$ we define the object \eqref{eq:realsn-logicalrelations}  of $\Fscone$.
	\begin{equation}\label{eq:realsn-logicalrelations} 
		\forsem{\reals ^n }\coloneqq \left( \underline{\forsem{\reals ^n }} , \left( \RR ^n , \left( \RR ^n , \cRR ^n \right), \left( \RR ^n , \cRR ^n \right)\right) , \incLRR \right). 
	\end{equation}

\end{definition} 	

Recall that we denote by $\Euc$ the set of Euclidean families defined in \ref{eq:Euclidea-families}.
Theorem \ref{theo-basic-result-for-LR} relies on the canonical diffeomorphisms given in Definition \ref{DEF:CANONICAL-EUCLIDEAN-FAMILIES}. 

\begin{therm}\label{theo-basic-result-for-LR}
Let  $\left( f ,g,h \right)$  be a morphism in $\Fam{\Set } \times \Fam{\Vect }\times \Fam{\Vect ^\op} $.
Assuming that $f :A\to B $ is such that  $A$ and $B$ are Euclidean families,  we have that  \ref{fundamentallemma:condition2)LR}  implies \ref{fundamentallemma:condition1)differentiability-of-f}.
\begin{enumerate}[i)] 
\item \label{fundamentallemma:condition2)LR} There is a morphism
		\begin{equation}\label{eq:proof-scone-logical-relations}
			\alpha : \coprod _{j\in J}\left( \prod _{i=1 }^{n_j} {\forsem{\reals ^{q_{(j,i)} } } } \right) \to  \coprod _{l\in L}\left( \prod _{t=1 }^{m_l} {\forsem{\reals ^{s_{(l,t)} } } } \right)
		\end{equation}
			in $\Fscone$, where $\left( n_{j} \right) _{j\in J } $, $\left( m_l \right) _{l\in L } $, 
			$\left( \left( q_{(j,i)} \right)_{i\in \left\{ 1, \ldots, n_j\right\} } \right)  _{j\in J } $ and $\left( \left( s_{(l,t)} \right)_{t\in \left\{ 1, \ldots, m_l\right\} }  \right) _{l\in L } $  are (possibly infinite) families of natural numbers, such that
			\begin{equation} 
			\forpi\left( \alpha \right) = \left( \CANsh{B}\circ  f \circ \CANsh{A} ^{-1}, \sDs{\CANsh{B}} \circ g \circ \sDs{\CANsh{A}} ^{-1} ,\sDsr{\CANsh{B}} \circ  h \circ \sDsr{\CANsh{A}} ^{-1}\right).
			\end{equation} 
\item\label{fundamentallemma:condition1)differentiability-of-f} The morphism
$f$ is differentiable, $\Ds{f} = g $ and $\Dsr{f} = h $.	
\end{enumerate} 
\end{therm} 	
\begin{proof}
We start by establishing the objects $\okS _ 0 $ and $\okS _1 $ of $\Fscone $ together with the canonical isomorphisms \eqref{eq:sh0-LR-FSCONE} 	and \eqref{eq:sh1-LR-FSCONE}. 

Let $q_j \coloneqq \sum\limits _{i=1}^{n_j} q_{(i,l)} $ and $s_l \coloneqq \sum\limits _{t=1}^{m_l} s_{(l,t)} $. We define the objects $\okCO _0$ and $\okCO _1 $  of $\Set\times\Fam{\Vect}\times\Fam{\Vect ^\op}$ by \eqref{eq:coproduct-scone-LR1}
and \eqref{eq:coproduct-scone-LR2}: the construction of infinite coproducts in $\Fscone$ follows from \ref{subsect:bicartesian-scone}.
\begin{eqnarray} 
	\okCO _0 &\coloneqq & \coprod _{j\in J} \left(\RR ^{q_j} , \left(  \RR ^{q_j}, \cRR ^{q_j} \right) ,  \left(  \RR ^{q_j}, \cRR ^{q_j} \right)  \right) \nonumber \\ &= & \left(\coprod _{j\in J} \RR ^{q_j} , \left(  \coprod _{j\in J} \RR ^{q_j}, \langle \cRR ^{q_j} \rangle _{j\in J}  \right) ,  \left(  \coprod _{j\in J} \RR ^{q_j}, \langle \cRR ^{q_j} \rangle _{j\in J}  \right)  \right)\label{eq:coproduct-scone-LR1} 
\end{eqnarray} 
\begin{eqnarray} 	
	\okCO _1 &\coloneqq &   \coprod _{l\in L} \left(\RR ^{s_l} , \left(  \RR ^{s_l}, \cRR ^{s_l} \right) ,  \left(  \RR ^{s_l},  \cRR ^{s_l}  \right)  \right) \nonumber \\ & = & \left( \coprod _{l\in L}\RR ^{s_l} , \left(  \coprod _{l\in L}\RR ^{s_l}, \langle \cRR ^{s_l} \rangle _{l\in L}  \right) ,  \left(  \coprod _{l\in L}\RR ^{s_l}, \langle \cRR ^{s_l} \rangle _{l\in L}  \right)  \right)\label{eq:coproduct-scone-LR2}
\end{eqnarray} 
We consider the subsets $\underline{\okS}_0 \subset \sforG{\okCO _ 0} $ and $\underline{\okS}_1 \subset \sforG{\okCO _ 1} $ defined by \eqref{eq:subset-0}  and \eqref{eq:subset-1}.
Denoting by $\incLRR$ the appropriate subset inclusions, we define the objects ${\okS}_0\coloneqq \left( \underline{\okS}_0 , \okCO _0 , \incLRR \right) $ and ${\okS}_1\coloneqq \left( \underline{\okS}_1 , \okCO _1 , \incLRR \right) $ of $\Fscone $.
\begin{equation}\label{eq:subset-0} 
	\underline{\okS}_0 \coloneqq  \left\{ \left( f_k, g_k, h_k  \right) _{k\in \NN }\in \forG\left(\okCO _0\right)  :\forall k\in\NN , \, f_k \,\mbox{is differentiable,}\, g_k  = \Ds{f_k}, \, h_k = \Dsr{f_k} \right\}
\end{equation}
\begin{equation}\label{eq:subset-1}  
	\underline{\okS}_1 \coloneqq  \left\{ \left( f_k, g_k, h_k  \right) _{k\in \NN }\in \forG\left(\okCO _1\right)  :\forall k\in\NN , \, f_k \,\mbox{is differentiable,}\, g_k  = \Ds{f_k}, \, h_k = \Dsr{f_k} \right\} 
\end{equation}
By the results of \ref{subsect:bicartesian-scone}, the chain rule (Lemma \ref{lem:chain-rule})  and Definition \ref{def:logical-relation-very-basic-definition}, since the canonical isomorphisms \eqref{eq:sh0}
and \eqref{eq:sh1} are diffeomorphisms, there are (invertible) functions $\underline{\sh{0}}$ and $\underline{\sh{1}} $, respectively induced by the compositions with $\left( \sh{0}, \sDs{\sh{0}},\sDsr{\sh{0}}\right)   $ and $	\left( \underline{\sh{1}}, \left( \sh{1}, \sDs{\sh{1}},\sDsr{\sh{1}}\right)  \right) $,  such that \eqref{eq:sh0-LR-FSCONE} and \eqref{eq:sh1-LR-FSCONE} define isomorphisms in $\Fscone $.
\begin{equation}\label{eq:sh0}
	\sh{0}:\coprod _{j\in J} \left( \prod _{i=1 }^{n_j} {\RR ^{q_{(j,i)}  } } \right) \xto{\cong} \coprod _{j\in J} {\RR ^{q_{j}  } } 
\end{equation} 
\begin{equation}\label{eq:sh1}
	\sh{1}: \coprod _{l\in L} \left( \prod _{t=1 }^{m_l} {\RR ^{s_{(l,t)}  } } \right) \xto{\cong}  \coprod _{l\in L} {\RR ^{s_{l}  } } 
\end{equation} 
\begin{equation}\label{eq:sh0-LR-FSCONE}
	\tilde{\sh{0}} \coloneqq  \left( \underline{\sh{0}}, \left( \sh{0}, \sDs{\sh{0}},\sDsr{\sh{0}}\right)  \right) : \coprod _{j\in J}\left( \prod _{i=1 }^{n_j} {\forsem{\reals ^{q_{(j,i)} } } } \right) \xto{\cong} {\okS}_0
\end{equation} 
\begin{equation}\label{eq:sh1-LR-FSCONE}
	\tilde{\sh{1}}  \coloneqq  \left( \underline{\sh{1}}, \left( \sh{1}, \sDs{\sh{1}},\sDsr{\sh{1}}\right)  \right)  : \coprod _{l\in L}\left( \prod _{t=1 }^{m_l} {\forsem{\reals ^{s_{(l,t)} } } } \right) \xto{\cong}  {\okS}_1
\end{equation} 		
\\ \\
Proof of \ref{fundamentallemma:condition2)LR} $\Rightarrow $ \ref{fundamentallemma:condition1)differentiability-of-f}.\\
By \ref{fundamentallemma:condition2)LR} and chain rule, denoting $\mathfrak{f} = \CANsh{B}\circ  f \circ \CANsh{A} ^{-1} $,  $ \mathfrak{g} = \sDs{\CANsh{B}} \circ g \circ \sDs{\CANsh{A}} ^{-1}$
and $\mathfrak{h} = \sDsr{\CANsh{B}} \circ  h \circ \sDsr{\CANsh{A}} ^{-1} $, we conclude that there is a morphism  $\alpha $ in $\Fscone$ such that
\begin{eqnarray*}
&\forpi{\left( \tilde{\sh{1}} \circ	\alpha\circ \tilde{\sh{0}}  ^{-1} \right)  }& \\
& =& \\
&\left(   \sh{1} \circ  \mathfrak{f}  \circ \sh{0}  ^{-1}, \sDs{\sh{1}} \circ \mathfrak{g}  \circ \sDs{\sh{0}}^{-1}, \sDsr{\sh{1}} \circ \mathfrak{h}  \circ \sDsr{\sh{0}}^{-1}\right).& 
\end{eqnarray*} 
\normalsize
This implies, by the definitions of $\okS _0 $ and $\okS_1$, that, for any family $\left( \gamma _k: \RR ^k \to \coprod\limits _{j\in J} \RR ^{q_j} \right) _{k\in \NN} $ of differentiable functions, we have that, for all $k\in \NN $: 
\begin{enumerate}[(I)]
	\item  $\sh{1}\circ \mathfrak{f} \circ \sh{0}^{-1} \circ \gamma _k  $ is differentiable,
	\item $\sDs{\sh{1}} \circ \mathfrak{g} \circ \sDs{\sh{0}}^{-1}  = \sDs{\sh{1}\circ \mathfrak{f} \circ \sh{0}^{-1}\circ \gamma _k}$, and
	\item $\sDsr{\sh{1}} \circ \mathfrak{h}  \circ \sDsr{\sh{0}}^{-1} = \sDsr{\sh{1}\circ \mathfrak{f} \circ \sh{0}^{-1}\circ \gamma _ k}$. 
\end{enumerate} 
By Lemma \ref{lem:differentiability-derivative-characterization}, this implies that: 
\begin{enumerate}[(A)]
	\item  $\sh{1}\circ \mathfrak{f} \circ \sh{0}^{-1}   $ is differentiable,
	\item $\sDs{\sh{1}} \circ \mathfrak{g} \circ \sDs{\sh{0}}^{-1}  = \sDs{\sh{1}\circ \mathfrak{f} \circ \sh{0}^{-1}}$, and
	\item $\sDsr{\sh{1}} \circ \mathfrak{h} \circ \sDsr{\sh{0}}^{-1} = \sDsr{\sh{1}\circ \mathfrak{f} \circ \sh{0}^{-1}}$. 
\end{enumerate} 
By the chain rule (Lemma \ref{lem:chain-rule}) and the fact that $\sh{1}$, $\sh{0}$, $\CANsh{A}$ and $\CANsh{B}$ are diffeomorphisms, this implies that  $f    $ is differentiable, $  g   = \Ds{ f }$ and $  h   = \Dsr{ f }$. This completes the proof.
\end{proof}

\subsection{Logical relations as a functor}
For each primitive operation $\op\in \Op_{n_1,\ldots, n_k}^m$ of the source language, recall that $$\sem{\op}: \RR ^{n_1}\times \cdots \times \RR ^{n_k}\to \RR ^m $$
is differentiable, $\semt{\Dsyn{\op}} = \Ds{\sem{\op}}$, and $ \semtt{\Dsynrev{\op}} = \Dsr{\sem{\op}}$ (see \ref{subsect:CHAD-SOUNDNESS-ASSUMPTIONS}).
Therefore, for each primitive operation $\op\in \Op_{n_1,\ldots, n_l}^m$,  we conclude, by the chain rule (Lemma \ref{lem:chain-rule}), that we can define the morphism
\begin{equation}
	\forsem{\op }  \coloneqq \left( \underline{\sem{\op }} , \left( \sem{\op} , \semt{\Dsyn{\op } }, \semtt{\Dsynrev{\op}}   \right) \right) : \prod _{i=1}^l\forsem{\reals ^{n_i} }\to \forsem{\reals ^m }
\end{equation}
in $\Fscone $. 

Since $\Fscone $ is bicartesian closed and has $\mu\nu$-polynomials, by the universal property of the category $\Syn $  established in Corollary \ref{cor:universal-property-of-source-language}, we conclude:
\begin{lemma}
	There is a unique strictly bicartesian closed functor
	\begin{equation}
		\forsem{-} : \Syn \to \Fscone 
	\end{equation}	
	that strictly preserves $\mu\nu$-polynomials such that $\forsem{-}$ extends the consistent assignment given by \eqref{eq:association-extending-logical-relations}.
\begin{equation}\label{eq:association-extending-logical-relations}
 \reals ^n \mapsto \forsem{\reals ^n}, \qquad \qquad \op \mapsto \forsem{\op }.
\end{equation}		
\end{lemma}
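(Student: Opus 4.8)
The plan is to invoke the universal property of the source language $\Syn$ recorded in Corollary \ref{cor:universal-property-of-source-language}: the category $\Syn$ is initial among bicartesian closed categories with $\mu\nu$-polynomials that interpret the ground types $\reals^n$ and primitive operations $\op$, in the sense that any consistent assignment of an object for each $\reals^n$ and a morphism for each $\op$ into such a category $\catC$ extends to a \emph{unique} $\mu\nu$-polynomial preserving (strictly) bicartesian closed functor $\Syn\to\catC$. Our task thus reduces to checking two things: that $\Fscone$ is a target of the required kind, and that $\reals^n\mapsto\forsem{\reals^n}$ together with $\op\mapsto\forsem{\op}$ is a consistent assignment into it.

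First I would record that $\Fscone$ carries exactly the structure demanded by Corollary \ref{cor:universal-property-of-source-language}. This is Theorem \ref{theo:fundamental-scone-result-proof}, which shows that $\Fscone$ is a finitely complete bicartesian closed category with $\mu\nu$-polynomials (indeed, with infinite coproducts). Next I would confirm that each $\forsem{\reals^n}$ is a genuine object of $\Fscone$; this is precisely Definition \ref{def:logical-relation-very-basic-definition}, where the predicate $\underline{\forsem{\reals^n}}$ is cut out of $\forG\left(\RR^n,(\RR^n,\cRR^n),(\RR^n,\cRR^n)\right)$ as the set of families $(f_k,g_k,h_k)_{k\in\NN}$ with $f_k$ differentiable, $g_k=\Ds{f_k}$ and $h_k=\Dsr{f_k}$.

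The substantive point is that $\forsem{\op}$ is a well-defined morphism of $\Fscone$ of the correct type $\prod_{i=1}^k\forsem{\reals^{n_i}}\to\forsem{\reals^m}$, which is exactly what the paragraph preceding the statement verifies. Unwinding the definition of a scone morphism, one must check that the underlying morphism $\left(\sem{\op},\semt{\Dsyn{\op}},\semtt{\Dsynrev{\op}}\right)$ of $\Fam{\Set}\times\Fam{\Vect}\times\Fam{\Vect^\op}$ carries the predicate attached to $\prod_{i=1}^k\forsem{\reals^{n_i}}$ into $\underline{\forsem{\reals^m}}$. This is where the hypotheses of \ref{subsect:CHAD-SOUNDNESS-ASSUMPTIONS} are used: since $\sem{\op}$ is differentiable with $\semt{\Dsyn{\op}}=\Ds{\sem{\op}}$ and $\semtt{\Dsynrev{\op}}=\Dsr{\sem{\op}}$, the chain rule (Lemma \ref{lem:chain-rule}) guarantees that precomposing a logically related family $(f_k,\Ds{f_k},\Dsr{f_k})_{k}$ with $\sem{\op}$ produces $\left(\sem{\op}\circ f_k,\Ds{\left(\sem{\op}\circ f_k\right)},\Dsr{\left(\sem{\op}\circ f_k\right)}\right)_{k}$, which again satisfies the defining condition of $\underline{\forsem{\reals^m}}$. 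Consistency of the assignment (that the domain of $\forsem{\op}$ matches the product of the interpretations of its argument types, with codomain $\forsem{\reals^m}$) is immediate, since products in $\Fscone$ are computed as described in Section \ref{subsect:scones-in-the-proof} and the data are specified objectwise.

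Having assembled these facts, I would conclude by applying Corollary \ref{cor:universal-property-of-source-language} directly: the assignment \eqref{eq:association-extending-logical-relations} extends to a unique strictly bicartesian closed, $\mu\nu$-polynomial preserving functor $\forsem{-}:\Syn\to\Fscone$. I expect the only non-formal step to be the verification that $\forsem{\op}$ respects the logical relations, i.e. that the subsets of Definition \ref{def:logical-relation-very-basic-definition} are closed under the action of $\sem{\op}$; but this reduces cleanly to the chain rule together with the correctness assumptions on the derivatives of primitives, so it presents no real obstacle, and everything else is a formal appeal to the two universal/structural results already in hand.
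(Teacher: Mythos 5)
Your proposal is correct and follows essentially the same route as the paper: the paper likewise defines $\forsem{\op}$ as a morphism of $\Fscone$ using the soundness assumptions of \ref{subsect:CHAD-SOUNDNESS-ASSUMPTIONS} together with the chain rule (Lemma \ref{lem:chain-rule}), observes that $\Fscone$ is bicartesian closed with $\mu\nu$-polynomials, and then concludes by the universal property of $\Syn$ from Corollary \ref{cor:universal-property-of-source-language}. Your extra detail on why the logical relation is preserved under precomposition with $\sem{\op}$ is exactly the content the paper compresses into its appeal to the chain rule.
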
	

Let us recall that we defined the forward-mode and reverse-mode CHAD corresponding functors in Corollary \ref{cor:CHAD-definition}, which we denote by $\Dsyn{-}$ and $\Dsynrev{-}$, respectively.
By the universal property of $\Syn $ and the hypothesis established in \ref{subsect:CHAD-SOUNDNESS-ASSUMPTIONS}, we can further conclude that:
\begin{therm}[Correctness commutative diagram]\label{theo:correctness-commutative-diagram}
Diagram \eqref{eq:diagram-correctness-proof} commutes.
\begin{equation}\label{eq:diagram-correctness-proof}
	\begin{tikzpicture}[x=6cm, y=1.7cm]
		\node[scale=0.9] (a) at (0,0) {$\Syn $};
		\node[scale=0.9] (b) at (1,0) {$\Syn\times \Sigma_{\CSyn}\LSyn\times \Sigma_{\CSyn}\LSyn ^\op  $ };
		\node[scale=0.9] (c) at (0,-1) {$\Fscone $ };
		\node[scale=0.9] (d) at (1,-1) {$\Fam{\Set}\times \Fam{\Vect}\times \Fam{\Vect ^\op } $ };
		\draw[->] (a)--(b) node[midway,above] {$\left( \id , \Dsyn{-}, \Dsynrev{-}  \right) $ };
		\draw[->] (c)--(d) node[midway,below] {$\forpi$ };
		\draw[->] (a)--(c) node[midway,left] {$\forsem{-} $};
		\draw[->] (b)--(d) node[midway,right] {$\sem{-}\times \semt{-}\times \semtt{-} $};
	\end{tikzpicture}
\end{equation}
\end{therm}	
\begin{proof}
For each primitive type $\reals ^n $ and each primitive operation $\op$, we have that Equations \eqref{eq:reals-commutative-diagram1} and \eqref{eq:reals-commutative-diagram2}   hold by CHAD's soundness for primitives (by which we mean the assumptions of \ref{subsect:CHAD-SOUNDNESS-ASSUMPTIONS}). 
\begin{equation} \label{eq:reals-commutative-diagram1}
	\forpi{\left( \forsem{\reals ^n}\right) } = \left( \RR ^n , \left( \RR ^n, \cRR ^n \right) , \left( \RR ^n, \cRR ^n \right)\right)   = \left( \sem{\reals ^n}, \semt{\Dsyn{\reals ^n}}, \semtt{\sDsr{\reals ^n}}   \right)  
\end{equation} 
\begin{equation} \label{eq:reals-commutative-diagram2}
	\forpi{\left( \forsem{\op }\right) } = \left( \sem{\op } , \sDs{\sem{\op }  } , \sDsr{\sem{\op }  } \right)   = \left( \sem{\op }, \semt{\Dsyn{\op}}, \semtt{\Dsynrev{\op}}   \right)  
\end{equation}

Since  $\left( \sem{-}\times \semt{-}\times \semtt{-}  \right) \circ \left( \id , \Dsyn{-}, \Dsynrev{-}  \right) $ and $\forpi\circ \forsem{-}$ are (compositions) of strictly $\mu\nu $-polynomial preserving  bicartesian closed functors satisfying \eqref{eq:reals-commutative-diagram1} and \eqref{eq:reals-commutative-diagram2} for any ground type $\reals ^n$ and any primitive operation $\op$, we conclude that $\left( \sem{-}\times \semt{-}\times \semtt{-}  \right) \circ \left( \id , \Dsyn{-}, \Dsynrev{-}  \right) = \forpi\circ \forsem{-}$
by the universal property of $\Syn $ established in Corollary \ref{cor:universal-property-of-source-language}.
\end{proof}

\subsection{Correctness result}
We are now ready to establish the fundamental correctness result for both forward-mode and reverse-mode CHAD. Specifically, we prove that these techniques yield the correct derivatives for any well-typed program of the form $\var[1] : \ty[1] \tinf \trm[1] : \ty[2]$, where $\ty[1]$ and $\ty[2]$ are types constructed from sum and product types. 
\begin{therm}[Correctness of CHAD for tuples and variant tuples]\label{theorem:correctness-theorem-for-data-typess}
	Let	$\left( n_{j} \right) _{j\in J } $, $\left( m_l \right) _{l\in L } $,   
	$\left( \left( q_{(j,i)} \right) _{i\in \left\{ 1, \ldots, n_j\right\} }  \right)  _{j\in J }  $ and $\left( \left( s_{(l,t)} \right)_{t\in \left\{ 1, \ldots, m_l\right\} } \right)_{l\in L }   $  	be finite families of natural numbers. 
	
	For any well-typed program $ \var[1] : \ty[1] \tinf \trm[1] : \ty[2]$,
	where 
\begin{equation}
	\ty[1] = \coprod _{j\in J} \left( \prod _{i=1 }^{n_j} {\reals ^{q_{(j,i)}  } } \right) \qquad\mbox{and} \qquad \ty[2] = \coprod _{l\in L} \left( \prod _{t=1 }^{m_l} {\reals ^{s_{(l,t)}  } } \right),
\end{equation}	 
	we have that $\sem{\trm[1]}$ is differentiable. Moreover, \eqref{eq:derivateive-forward-modeee} and \eqref{eq:derivateive-reverse-modeee} hold. \\	
	\noindent\begin{minipage}{.5\linewidth} 
		\begin{equation}\label{eq:derivateive-forward-modeee}
			\semt{\Dsyn{\trm[1]} } = \Ds{\sem{\trm[1]} } 
		\end{equation}
	\end{minipage} 
	\begin{minipage}{.5\linewidth} 
		\begin{equation}\label{eq:derivateive-reverse-modeee}
			\semtt{\Dsynrev{\trm[1]} } = \Dsr{\sem{\trm[1]} } 
		\end{equation}
	\end{minipage}
\end{therm}
\begin{proof}
	Let $ t: 	\coprod\limits _{j\in J} \left( \prod\limits _{i=1 }^{n_j} {\reals ^{q_{(j,i)}  } } \right)\to  \coprod\limits _{l\in L} \left( \prod\limits _{t=1 }^{m_l} {\reals ^{s_{(l,t)}  } } \right) $ be a morphism in $\Syn $.
	By the commutativity of Diagram \ref{eq:diagram-correctness-proof}, the morphism $\left( \sem{t}, \semt{\Dsyn{t} }, \semtt{\Dsynrev{t} }  \right) $ in $\Fam{\Set}\times\Fam{\Vect}\times\Fam{\Vect ^\op} $
	satisfies $\forpi{\left( \forsem{t}\right)  }=\left( \sem{t}, \semt{\Dsyn{t} }, \semtt{\Dsynrev{t} }  \right) $.

	By Theorem \ref{theo-basic-result-for-LR},  we conclude that $\sem{t}$ is differentiable and $\left( \semt{\Dsyn{t} }, \semtt{\Dsynrev{t} }  \right) =\left( \Ds{\sem{t} } , \Dsr{\sem{t} }  \right)   $.
\end{proof}

\section{Inductive data types: $\mu$-polynomials}\label{sect:correctness-inductive-types}
We establish the correctness of CHAD for any well-typed program of the form
$$ \var[1] : \ty[1] \tinf \trm[1] : \ty[2], $$
where $\ty[1]$ and $\ty[2]$ are data types in our source language in \ref{subsect:aim-1-section-inductive-data-types}. 

It should be noted that our source language supports inductive types, which enable us to represent lists, trees, or other more complex inductive types.
In order to emphasize this fact, we refer to our data types as \textit{inductive data types}.

We start by clarifying the categorical semantics of inductive data types, which are referred to as $\mu$-polynomials and defined in \ref{subsect:definition-mu-polynomials}. We demonstrate in \ref{subsect:normal-form-inductive-data-types} how $\mu$-polynomials can be created from coproducts and finite products in concrete models that feature infinite coproducts. As a result, we can deduce that whenever $\ty[1]$ is an inductive data type, $\sem{\ty[1]}$ is an Euclidean family. This allows us to establish the specification and correctness of forward and reverse-mode CHAD for general inductive data types in \ref{subsect:aim-1-section-inductive-data-types}.

The definitions and results presented below heavily rely on the terminology, notation, and results established in Subsection \ref{SUBSECT:INDUCTIVE-COINDUCTIVE-TYPES} and Section \ref{sect:structure-preserving-functors}.

\subsection{$\mu$-polynomials}\label{subsect:definition-mu-polynomials}
In our source language, data types are constructed using tupling, cotupling, and the $\mu$-fixpoint operator. From a categorical semantic viewpoint, this implies that we want to examine objects that arise from products, coproducts, and initial algebras. Specifically, we consider the $\mu$-polynomials as defined below.
\begin{definition}[$\mu$-polynomials]
	The set $\MPoly $ of $\mu $-polynomial functors in $\Syn $ is the smallest set satisfying \ref{def:mu-polynomials(-1)},  \ref{def:mu-polynomials0}, \ref{def:mu-polynomials1}, \ref{def:mu-polynomials2}, and \ref{def:mu-polynomials3}. 
\begin{enumerate}[$\MPoly$1)]
	\item For every $k\in \NN $, every projection $\projection_t : \Syn ^k \to \Syn $ is an element $MPoly $. \label{def:mu-polynomials(-1)}
	\item For any $k\in\NN $, the constant functors $$\terminal : \Syn ^k \to \Syn , W\mapsto \terminal\quad \mbox{and}\quad \initial  : \Syn ^k \to \Syn , W\mapsto \initial $$
	belong to $\MPoly $.\label{def:mu-polynomials0}
	\item For any $k\in\NN $ and any primitive type $\reals ^n \in \sobjects{\Syn}$, the functor 
	$$ H_{\reals ^n }: \Syn ^k \to \Syn $$
	constantly equal to $\reals ^n$ belongs to $\MPoly$. \label{def:mu-polynomials1}
	\item If $H : \Syn ^k \to \Syn  $ and $J: \Syn ^k \to \Syn $
	are functors in $\MPoly $, then \eqref{eq:product-mupolynomials} and \eqref{eq:coproduct-mupolynomials}            belong to $\MPoly$.\label{def:mu-polynomials2}
\begin{equation} \label{eq:product-mupolynomials}
 \times \circ \left( H, J\right) : \Syn ^k \to \Syn , W\mapsto H(W)\times J(W) 
\end{equation} 
\begin{equation}  \label{eq:coproduct-mupolynomials}
\sqcup \circ \left( H, J\right) : \Syn ^k \to \Syn, W\mapsto H(W)\sqcup J(W) 
\end{equation} 
	\item If $k\in \NN - \left\{ 0 \right\} $ and $H : \Syn ^k\to \Syn $ belongs to $\MPoly $, then the parameterized initial algebra (initial algebra) $\mu H : \Syn ^{k-1}\to \Syn $ ($\mu H $) belongs to $\MPoly $.	\label{def:mu-polynomials3}
\end{enumerate}	 
An \textit{inductive data type} is a type $\ty[1]$ in our source language that corresponds to a initial algebra of a $\mu$-polynomial functor $E: \Syn \to \Syn $. 
\end{definition} 	

\subsection{$\mu$-polynomials in concrete models: a normal form}\label{subsect:normal-form-inductive-data-types}
Similarly to Euclidean families, in concrete models of our source language,  we can reduce every $\mu$-polynomial functor to a canonically isomorphic normal form.
More precisely, we have Theorem \ref{THEO:NORMAL-FORM-INDUCTIVE-TYPES}.

Let $G: \Syn \to \catD $ be a strictly cartesian closed functor that strictly preserves $\mu\nu$-polynomials. Given functors 
$H : \Syn ^k \to \Syn$ and $J : \catD^n \to \catD $, we say that $J $ is $(H,G)$-compatible if \eqref{DIAG:G-COMPATIBILITY} commutes.
\begin{equation} \label{DIAG:G-COMPATIBILITY}
	\begin{tikzcd}
		\Syn ^k \arrow[rrr, "{G^k}"] \arrow[swap,dd, "{ H }"] &&& 
		\catD ^k \arrow[dd, "{J}"] \\
		&&&\\
		\Syn \arrow[swap, rrr, "{G}"] &&& \mu \catD                               
	\end{tikzcd}
\end{equation}

In the result below, we denote $\NNN{n}\coloneqq \left\{ 1, \ldots , n\right\}$, for each $n\in\NN$. 
\begin{therm}\label{THEO:NORMAL-FORM-INDUCTIVE-TYPES}
	Let $\catD $ be a cartesian closed category with $\mu\nu $-polynomials and infinite coproducts. We assume that $G: \Syn \to \catD $ 
	is strictly cartesian closed functor that strictly preserves $\mu\nu$-polynomials. 
	
	If $H :\Syn ^n \to\Syn$ is a functor in $\MPoly $, then
	there is a quadruple $\left( J, \MUNORMAG{H}, \mathtt{m}, \MUCANG\right) $, where $F : \catD ^n\to \catD $ is an $(H,G)$-compatible functor,  $\mathtt{m} = \displaystyle\left( \mind{j, \mathtt{T}} \right) _{\left( j, \mathtt{T}\right)\in \left( \NNN{n}\cup \left\{0\right\} \right)\times {\TreeIndex}} $ is a countable family of natural numbers and 
\begin{equation} \label{eq:canonical-isomorphism-coproduct-inductive}	
\MUCANG _ {(Y_i)_{i\in\NNN{n}}} : F (Y_i)_{i\in\NNN{n}} \cong \coprod _{\mathtt{T}\in {\TreeIndex} }\left(  \MUNORMAG_{\mathtt{T}}^{\mind{0,\mathtt{T}}}\times \prod _{j=1}^{n} Y_j ^{\mind{j,\mathtt{T}} }\right) 
\end{equation} 
	is a natural isomorphism, where, for each $\mathtt{T}\in {\TreeIndex} $, 
	\begin{equation} \label{eq:products-form}
		\MUNORMAG_{\mathtt{T}} = \prod _{l\in {L_{\mathtt{T}}}} G\left(\reals ^{z_{(l,\mathtt{T} )} } \right)
	\end{equation} 	
for some finite family $\left( z_{(l,\mathtt{T} )} \right) _{l\in {L_{\mathtt{T}}} } $ of natural numbers.
\end{therm} 
\begin{proof}
	The result follows from induction over the definition of $\MPoly $. The only non-trivial part of the proof  is related to \ref{def:mu-polynomials3}, that is to say, the stability of  $\MPoly $
	under the parameterized initial algebras, which we sketch below. 
	
	Let $\tilde{H} :\Syn ^{n+1} \to\Syn$ be a member of $\MPoly $. We assume, by induction, that $\tilde{F}: \catD ^ {n+1}\to \catD $ satisfies the above. That is to say, 
	it is an $(H,G)$-compatible functor and we have a natural isomorphism
$$\tilde{F} (Y_i)_{i\in\NNN{n+1}} \cong \coprod _{r\in {\KI } }\left(  {\tilde{\MUNORMAG}_{r} }^{\sind{0,r}}\times \prod _{i=1}^{n+1} Y_i ^{\sind{i,r} }\right) . $$
where $\tilde{\MUNORMAG_{r}}$ is equal to some finite product 
$$ \prod _{l\in {L_{r} }} G\left(\reals ^{z_{(l, r )} } \right) .$$

It is clear that $\tilde{F} $ preserves colimits of $\omega $-chains. 
Hence, given $W = (W_i)_{i\in\NNN{n}}$,  $\mu\tilde{F} ^W = \mu\tilde{F}\left( W \right)   $
exists and is given by the colimit of the $\omega $-chain
\begin{equation}\label{eq:colimit-omeage-chain} 
\initial\rightarrow  \tilde{F} ^W \left( \initial \right) \rightarrow \left( \tilde{F} ^W\right) ^2 \left( \initial \right)\rightarrow \cdots  
\end{equation} 
 provided that it exists. 
 
 We claim that the colimit  \eqref{eq:colimit-omeage-chain}  indeed exists. More precisely, the colimit is given by the coproduct
 $$ \coprod _{q=0}^\infty S_v (W) $$
where $\left( S_v(W)\right) _{v\in\NN} $ is defined inductively by \ref{eq:inductive-induction-mupolynomials1} and  \ref{eq:inductive-induction-mupolynomials2}.

\begin{enumerate}[S1)]
\item Denoting by $\overline{K}_0\coloneqq \left\{r\in\KI\mbox{ such that }  \sind{n+1,r} = 0 \right\} $,
$$ S_0 (W)\coloneqq \coprod _{r\in {\overline{K}_0}} \left( \tilde{\MUNORMAG}_{r} \times \prod _{i=1}^{n} W_i ^{\sind{i,r} }\right) .  $$\label{eq:inductive-induction-mupolynomials1}
\item Denoting by $\overline{K}_{a}\coloneqq \left\{r\in\KI\mbox{ such that }  \sind{n+1,r} = a \right\} $,\label{eq:inductive-induction-mupolynomials2}
$$ S_{v+1}(W)\coloneqq \coprod _{a=1}^\infty\coprod _{r\in {\overline{K}_{a}}} \left(\left( S_v(W) \right) ^a\times \tilde{\MUNORMAG}_{r} \times \prod _{i=1}^{n} W_i ^{\sind{i,r} }\right). $$
\end{enumerate}
By the infinitely distributive property and the universal property of the coproduct and product, we conclude that there is a canonical isomorphism between
 $$\mu\tilde{F}\left( W \right) = \coprod\limits _{q=0}^\infty S_v (W) $$
and something of the form 
$\coprod\limits _{\mathtt{T}\in {\TreeIndex} }\left(  \MUNORMAG_{\mathtt{T}}^{\mind{0,\mathtt{T}}}\times \prod\limits _{j=1}^{n} Y_j ^{\mind{j,\mathtt{T}} }\right), $
as described in \eqref{eq:canonical-isomorphism-coproduct-inductive}.	

Since $G$ preserves $\mu\nu $-polynomials, we conclude that $\mu\tilde{F}$ is a $\left( \mu\tilde{H}, G\right)$-compatible satisfying the required conditions.
\end{proof}	

As consequence, we get:
\begin{corollary}\label{coro:normal-form-inductive-data-types}
		Let $\catD $ be a cartesian closed category with $\mu\nu $-polynomials and infinite coproducts. We assume that $G: \Syn \to \catD $ 
	is strictly cartesian closed functor that strictly preserves $\mu\nu$-polynomials. If $E : \Syn \to \Syn $ is an endofunctor 
	in $\MPoly $, then there is a canonical isomorphism 
\begin{equation} 
\MUNORMAG :  G\left( \mu E \right)\cong  \coprod _{l\in L}\left( \prod _{t=1 }^{m_l} {G\left( \reals ^{s_{(l,t)}} \right)     } \right) ,
\end{equation} 
where $\left( m_l \right) _{l\in L } $ and $\left( \left( s_{(l,t)} \right)_{t\in \left\{ 1, \ldots, m_l\right\} } \right)_{l\in L }$ 
are (possibly infinite) families of natural numbers.
\end{corollary}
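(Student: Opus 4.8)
The plan is to derive Corollary~\ref{coro:normal-form-inductive-data-types} as a direct specialization of Theorem~\ref{THEO:NORMAL-FORM-INDUCTIVE-TYPES}. The statement concerns an endofunctor $E:\Syn\to\Syn$ in $\MPoly$, which is precisely the case $n=0$ of the parameterized setting, since a $0$-ary $\mu$-polynomial functor on $\Syn^{0}=\terminal$ is nothing other than an object, and an endofunctor $\Syn\to\Syn$ in $\MPoly$ that we apply $\mu$ to produces the initial algebra $\mu E\in\ob\Syn$. So first I would invoke Theorem~\ref{THEO:NORMAL-FORM-INDUCTIVE-TYPES} with $H\defeq E$ viewed as a member of $\MPoly$ with a single variable, obtaining the $(E,G)$-compatible functor $J$ on $\catD$, the countable family of natural numbers $\mathtt{m}$, and the natural isomorphism \eqref{eq:canonical-isomorphism-coproduct-inductive}.

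Next I would specialize the formula \eqref{eq:canonical-isomorphism-coproduct-inductive} to the initial algebra. The key observation is that $\mu E$ is obtained by applying the parameterized $\mu$-operator with an empty parameter context, so in the notation of Theorem~\ref{THEO:NORMAL-FORM-INDUCTIVE-TYPES} there are no surviving $Y_j$-factors: the product $\prod_{j=1}^{n}Y_j^{\mind{j,\mathtt{T}}}$ collapses to the empty product $\terminal$ because $n=0$. I would then use that $G$ strictly preserves $\mu\nu$-polynomials, so $G(\mu E)\cong \mu J$ (the initial algebra of the compatible functor $J$ in $\catD$), and apply the isomorphism from the theorem to rewrite $G(\mu E)$ as a coproduct indexed by the tree-index set $\TreeIndex$ of terms of the form $\MUNORMAG_{\mathtt{T}}^{\mind{0,\mathtt{T}}}$, each of which by \eqref{eq:products-form} is a finite product $\prod_{l\in L_{\mathtt{T}}}G(\reals^{z_{(l,\mathtt{T})}})$. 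Reindexing this double layer of (coproduct of finite products of finite products of representables) into a single coproduct of finite products, I set $L\defeq \TreeIndex$ (or the appropriate flattened index set accounting for the exponents $\mind{0,\mathtt{T}}$), and read off the families $(m_l)_{l\in L}$ and $(s_{(l,t)})_{t}$ directly from the $L_{\mathtt{T}}$ and the $z_{(l,\mathtt{T})}$. This yields the desired canonical isomorphism $\MUNORMAG: G(\mu E)\cong \coprod_{l\in L}\bigl(\prod_{t=1}^{m_l}G(\reals^{s_{(l,t)}})\bigr)$.

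The only genuinely nontrivial bookkeeping is the reindexing step: I must check that a coproduct of terms $\MUNORMAG_{\mathtt{T}}^{\mind{0,\mathtt{T}}}$, where each $\MUNORMAG_{\mathtt{T}}$ is itself a finite product of representables $G(\reals^{z})$, can be rewritten as a single coproduct of finite products of representables, i.e.\ that raising a finite product of representables to a finite power $\mind{0,\mathtt{T}}$ and then taking a (possibly infinite) coproduct stays within the claimed normal form. This is immediate since $\bigl(\prod_{l\in L_{\mathtt{T}}}G(\reals^{z_{(l,\mathtt{T})}})\bigr)^{\mind{0,\mathtt{T}}}$ is again a finite product of representables (concatenating $\mind{0,\mathtt{T}}$ copies of the list $(z_{(l,\mathtt{T})})_{l\in L_{\mathtt{T}}}$), so the whole expression is a coproduct over $\TreeIndex$ of finite products of representables. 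The countability of $\mathtt{m}$ guarantees that $L$ is at most countable, matching the ``possibly infinite families'' in the statement.

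The main obstacle is purely notational rather than conceptual: one must carefully match the indexing conventions of Theorem~\ref{THEO:NORMAL-FORM-INDUCTIVE-TYPES} (where the exponents $\mind{0,\mathtt{T}}$ appear on $\MUNORMAG_{\mathtt{T}}$ and the $Y_j$-factors are present) against the cleaner target normal form (a bare coproduct of finite products), ensuring that the collapse $n=0$ is applied consistently and that the strict preservation of $\mu\nu$-polynomials by $G$ justifies identifying $G(\mu E)$ with $\mu J$ before invoking the isomorphism. Since all the hard analytic work — the colimit-of-$\omega$-chains computation establishing the normal form under parameterized initial algebras — is already discharged in the proof of Theorem~\ref{THEO:NORMAL-FORM-INDUCTIVE-TYPES}, the corollary is essentially a specialization and reindexing, and I expect it to be short.
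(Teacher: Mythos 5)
Your proposal is correct and matches the paper's route: the paper states the corollary as an immediate consequence of Theorem~\ref{THEO:NORMAL-FORM-INDUCTIVE-TYPES}, obtained exactly as you describe by noting that $\mu E\in\MPoly$ is the $n=0$ instance (so the $Y_j$-factors form an empty product), that $(H,G)$-compatibility identifies the resulting functor's value with $G(\mu E)$, and that finite powers of finite products of objects $G(\reals^{z})$ flatten into finite products of the same shape. The only cosmetic issue is your initial invocation of the theorem at $H=E$ with $n=1$, which is superfluous once you pass to $\mu E:\Syn^{0}\to\Syn$; the substance of the argument is unaffected.
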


\subsection{Correctness of CHAD for inductive data types, by logical relations}\label{subsect:aim-1-section-inductive-data-types}
Since the canonical isomorphisms $\MUNORMAG$ given in Corollary \ref{coro:normal-form-inductive-data-types} 
are indeed canonical in the sense that they are given by the composition of isomorphisms coming from the distributively property and universal property of (co)products,
we have that:
\begin{lemma}\label{lemma:NORMAL-INDUCTIVE-DATA-TYPE}
	Let $\ty[1]$ be an inductive data type as defined in \ref{subsect:definition-mu-polynomials}. It follows that there is a canonical isomorphism 
\begin{equation} \label{eq:canonical-isomorphism-diffeomorphism-inductive-data-type}
	\MUNORMAG _{\ty[1]} :  \forsem{ \ty[1] }\cong  \coprod _{l\in L}\left( \prod _{t=1 }^{m_l} { \forsem{ \reals   ^{s_{(l,t)} }  }} \right) ,
\end{equation} 
such that:
\begin{enumerate}[$\mathcal{C}$1)] 
 \item $\left( m_l \right) _{l\in L } $ and $\left( \left( s_{(l,t)} \right)_{t\in \left\{ 1, \ldots, m_l\right\} } \right)_{l\in L }$ 
are (possibly infinite) families of natural numbers;
\item $\underline{\MUNORMAG _{\ty[1]} }$ is a diffeomorphism;
\item $\MUNORMAG _{\ty[1]} = \left( \underline{\MUNORMAG _{\ty[1]} }, \sDs{\underline{\MUNORMAG _{\ty[1]} }}, \sDsr{\underline{\MUNORMAG _{\ty[1]} }}   \right)   $.
\end{enumerate} 
\end{lemma}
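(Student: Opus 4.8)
The plan is to obtain the isomorphism $\MUNORMAG _{\ty[1]}$ by specializing Corollary~\ref{coro:normal-form-inductive-data-types} to the logical-relations functor $\forsem{-}:\Syn\to\Fscone$, and then to read off properties $\mathcal{C}2$ and $\mathcal{C}3$ by projecting along $\forpi$ and invoking the correctness square (Theorem~\ref{theo:correctness-commutative-diagram}). First I would check that the hypotheses of Corollary~\ref{coro:normal-form-inductive-data-types} hold: by Theorem~\ref{theo:fundamental-scone-result-proof} the scone $\Fscone$ is a finitely complete cartesian closed category with $\mu\nu$-polynomials and infinite coproducts, while $\forsem{-}$ is strictly bicartesian closed and strictly preserves $\mu\nu$-polynomials (established just above). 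Writing $\ty[1]=\mu E$ for the $\mu$-polynomial endofunctor $E:\Syn\to\Syn$ of which $\ty[1]$ is the initial algebra, Corollary~\ref{coro:normal-form-inductive-data-types} then yields a canonical isomorphism $\MUNORMAG _{\ty[1]}:\forsem{\ty[1]}\cong\coprod_{l\in L}\prod_{t=1}^{m_l}\forsem{\reals^{s_{(l,t)}}}$ in $\Fscone$, with $(m_l)_{l\in L}$ and $(s_{(l,t)})$ (possibly infinite) families of natural numbers. This settles existence and $\mathcal{C}1$.

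Next I would project. Since $\forpi$ is strictly bicartesian closed and preserves $\mu\nu$-polynomials and infinite coproducts (Theorem~\ref{theo:fundamental-scone-result-proof}), and since the isomorphism of Corollary~\ref{coro:normal-form-inductive-data-types} is canonical — built uniformly, for any strictly bicartesian closed $\mu\nu$-polynomial-preserving functor, from the $\omega$-chain colimit of Theorem~\ref{THEO:NORMAL-FORM-INDUCTIVE-TYPES} together with distributivity and the universal properties of (co)products — the image $\forpi(\MUNORMAG _{\ty[1]})$ is exactly the canonical isomorphism obtained by applying the same construction to $\forpi\circ\forsem{-}$. By Theorem~\ref{theo:correctness-commutative-diagram}, $\forpi\circ\forsem{-}=(\sem{-}\times\semt{-}\times\semtt{-})\circ(\id,\Dsyn{-},\Dsynrev{-})$, a functor into the product category $\Fam{\Set}\times\Fam{\Vect}\times\Fam{\Vect^\op}$ in which products, coproducts and $\mu\nu$-polynomials are computed componentwise (Lemma~\ref{lem:product-initial-algebras}, Theorem~\ref{theo:munupolynomials-of-product-of-categories}). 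Hence $\forpi(\MUNORMAG _{\ty[1]})$ decomposes as a triple whose three entries are the canonical isomorphisms of Corollary~\ref{coro:normal-form-inductive-data-types} applied, respectively, to $\sem{-}:\Syn\to\Fam{\Set}$ (strictly cartesian closed and $\mu\nu$-polynomial preserving by Corollary~\ref{cor:concrete-semantics-source-language}) and to the composites $\semt{-}\circ\Dsyn{-}:\Syn\to\Fam{\Vect}$ and $\semtt{-}\circ\Dsynrev{-}:\Syn\to\Fam{\Vect^\op}$. The first entry is precisely $\underline{\MUNORMAG _{\ty[1]}}$.

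Then I would deduce $\mathcal{C}2$ and $\mathcal{C}3$ from this decomposition. The $\Fam{\Set}$-entry $\underline{\MUNORMAG _{\ty[1]}}$ is a composite of the elementary canonical isomorphisms coming from distributivity in $\Fam{\Set}$ and the universal properties of finite products and (infinite) coproducts; on each Euclidean block each of these is a linear reindexing isomorphism, hence a diffeomorphism of Euclidean families in the sense of Definition~\ref{DEF:CANONICAL-EUCLIDEAN-FAMILIES} and Lemma~\ref{LEM:CANONICAL-EUCLIDEAN-FAMILIES}. By the chain rule (Lemma~\ref{lem:chain-rule}) the composite $\underline{\MUNORMAG _{\ty[1]}}$ is a diffeomorphism, giving $\mathcal{C}2$. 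For $\mathcal{C}3$ I would argue that $\overline{\Ds{}}$ and $\overline{\Dsr{}}$ preserve coproducts and finite products and send each elementary $\Fam{\Set}$-isomorphism to the corresponding $\Fam{\Vect}$- (resp. $\Fam{\Vect^\op}$-) isomorphism, since on each block the structure map is linear, so its Fréchet derivative is itself and its transpose the reverse map. Thus the $\Fam{\Vect}$- and $\Fam{\Vect^\op}$-entries of $\forpi(\MUNORMAG _{\ty[1]})$ coincide with $\sDs{\underline{\MUNORMAG _{\ty[1]}}}$ and $\sDsr{\underline{\MUNORMAG _{\ty[1]}}}$; chaining by Lemma~\ref{lem:chain-rule} yields $\mathcal{C}3$.

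The main obstacle I anticipate is precisely this last compatibility: that $\overline{\Ds{}}$ and $\overline{\Dsr{}}$ commute with the whole inductive construction of $\MUNORMAG$, including the passage through the $\omega$-chain colimit $\coprod_{q}S_v$ used for the initial algebra in the proof of Theorem~\ref{THEO:NORMAL-FORM-INDUCTIVE-TYPES}. One must verify that at each stage $S_v$ the derivative of the $\Fam{\Set}$-level structure isomorphism is the $\Fam{\Vect}$-level one (dually for transposes), and that these identifications are compatible with the colimit/limit inclusions, so that forming derivatives commutes with the (co)limit. Once the elementary building blocks and their interaction with $\overline{\Ds{}}$ and $\overline{\Dsr{}}$ are pinned down, the remainder is bookkeeping justified by the canonicity of $\MUNORMAG$ and the chain rule.
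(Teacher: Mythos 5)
Your proposal is correct and takes essentially the same route as the paper: the paper also obtains $\MUNORMAG_{\ty[1]}$ by instantiating Corollary~\ref{coro:normal-form-inductive-data-types} at $\forsem{-}:\Syn\to\Fscone$ (whose hypotheses hold by Theorem~\ref{theo:fundamental-scone-result-proof}), and then deduces $\mathcal{C}$2) and $\mathcal{C}$3) from the observation that the isomorphism is canonical, i.e.\ a composite of distributivity and (co)product universal-property isomorphisms, which are blockwise linear and therefore diffeomorphisms whose derivatives and transposed derivatives are exactly the corresponding canonical isomorphisms in $\Fam{\Vect}$ and $\Fam{\Vect^\op}$. Your explicit projection along $\forpi$ via Theorem~\ref{theo:correctness-commutative-diagram}, the componentwise decomposition, and the linearity/chain-rule argument simply spell out what the paper compresses into its one-sentence appeal to canonicity.
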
 	
By making use of the canonical isomorphisms \eqref{eq:canonical-isomorphism-diffeomorphism-inductive-data-type}, we can prove our correctness theorem; namely:
\begin{therm}[Correctness of CHAD for tuples and variant tuples]\label{theorem:correctness-theorem-for-inductive-data-types}
	For any well-typed program $ \var[1] : \ty[1] \tinf \trm[1] : \ty[2]$,
	where  $\ty[1], \ty[2]$ are inductive data types, we have that $\sem{\trm[1]}$ is differentiable. Moreover, \eqref{eq:derivateive-forward-modee-inductive-data-types} and \eqref{eq:derivateive-reverse-mode-inductive-data-types} hold. \\	
	\noindent\begin{minipage}{.5\linewidth} 
		\begin{equation}\label{eq:derivateive-forward-modee-inductive-data-types}
			\semt{\Dsyn{\trm[1]} } = \Ds{\sem{\trm[1]} } 
		\end{equation}
	\end{minipage} 
	\begin{minipage}{.5\linewidth} 
		\begin{equation}\label{eq:derivateive-reverse-mode-inductive-data-types}
			\semtt{\Dsynrev{\trm[1]} } = \Dsr{\sem{\trm[1]} } 
		\end{equation}
	\end{minipage}
\end{therm}
\begin{proof}
	Let $ t: \ty[1]\to \ty[2]$ be a morphism in $\Syn $.
	By the commutativity of Diagram \ref{eq:diagram-correctness-proof} and Lemma \ref{lemma:NORMAL-INDUCTIVE-DATA-TYPE}, the morphism $\left( \sem{t}, \semt{\Dsyn{t} }, \semtt{\Dsynrev{t} }  \right) $ in $\Fam{\Set}\times\Fam{\Vect}\times\Fam{\Vect ^\op} $
	is such that $\forpi{\left( \MUNORMAG _{\ty[2]}\circ  \forsem{t}  \circ \MUNORMAG _{\ty[1]}^{-1} \right)  } $ is equal to 
	$$ \left( \underline{\MUNORMAG _{\ty[2]}} \circ  \sem{t} \circ 	\underline{\MUNORMAG _{\ty[1]}} ^{-1}, \sDs{\underline{\MUNORMAG _{\ty[2]}}} \circ \semt{\Dsyn{t} } \circ \sDs{\underline{\MUNORMAG _{\ty[1]}}} ^{-1},
	\sDsr{\underline{\MUNORMAG _{\ty[2]}}} \circ  \semtt{\Dsynrev{t} } \circ \sDsr{\underline{\MUNORMAG _{\ty[1]}}} ^{-1}  \right) .$$

	By Theorem \ref{theo-basic-result-for-LR},  we conclude that
	\begin{enumerate}[$\mathfrak{C}$1)]
		\item  $\underline{\MUNORMAG _{\ty[2]}} \circ  \sem{t} \circ 	\underline{\MUNORMAG _{\ty[1]}} ^{-1}$ is differentiable;
		\item  $\left( \sDs{\underline{\MUNORMAG _{\ty[2]}}} \circ \semt{\Dsyn{t} } \circ \sDs{\underline{\MUNORMAG _{\ty[1]}}} ^{-1}, 	\sDsr{\underline{\MUNORMAG _{\ty[2]}}} \circ  \semtt{\Dsynrev{t} } \circ \sDsr{\underline{\MUNORMAG _{\ty[1]}}} ^{-1}  \right) =\left( \Ds{\sem{t} } , \Dsr{\sem{t} }  \right)   $.
	\end{enumerate} 
By the chain rule, since $\underline{\MUNORMAG _{\ty[2]}} $ and $\underline{\MUNORMAG _{\ty[1]}} $ are diffeomorphisms, we conclude that $\sem{t}$ is differentiable and $\left( \semt{\Dsyn{t} }, \semtt{\Dsynrev{t} }  \right) =\left( \Ds{\sem{t} } , \Dsr{\sem{t} }  \right)   $.
\end{proof}

\section{Examples of reverse-mode CHAD}\label{sect:EXAMPLE}
We provide examples of reverse-mode CHAD computation of derivatives, with a focus on computing derivatives of functions involving inductive types. In particular, we consider the simplest example of an inductive type: the type of non-empty lists of real numbers, denoted by $\nonList{\reals}$.

We present three examples. The function $\summ :\nonList{\reals}\to\reals  $ that computes the sum of elements of a list; $\productt :\nonList{\reals}\to\reals $ that gives the product of elements of a list; and the polynomial evaluator $\evp :\nonList{\reals}\to\reals  $. The semantics of these functions are roughly described below: 
\begin{eqnarray} 
	\sem{\summ} :  \sem{\nonList{\reals}}\to\sem{\reals}, & \qquad & \left[ a_0, \ldots , a_n\right]\mapsto a_0+a_1+\cdots + a_n \\
	\sem{\productt} : \sem{\nonList{\reals}} \to \sem{\reals}, &\qquad&  \left[ a_0, \ldots , a_n\right]\mapsto a_0a_1\cdots a_n\\
	\sem{\evp} : \sem{\nonList{\reals}} \to \sem{\reals}, & \qquad & \left[ a_0, \ldots , a_n, v\right]\mapsto  a_0  + a_1 v + \cdots + a_nv^n
\end{eqnarray} 	

The examples presented below heavily rely on the terminology, notation, and results established in Subsection \ref{SUBSECT:INDUCTIVE-COINDUCTIVE-TYPES}, Section \ref{sect:structure-preserving-functors} and Section \ref{sect:correctness-inductive-types}.

\subsection{The derivative of $0$}
In order to express the polynomial evaluator, we assume that we have a morphism $0: \reals\to \reals $
whose semantics correspond to the function $0 : \RR \to \RR $ constantly equal to $0\in\RR $. 

The morphism $0: \reals\to \reals $ can be either a primitive operation, or a function obtained by composing
$$ \reals \to \terminal \xto{0} \reals ,$$
where the constant $0:\terminal\to\reals $ would be taken to be the primitive operation.
Either way, by our semantic assumptions of \ref{subsect:CHAD-SOUNDNESS-ASSUMPTIONS}, we
get that 
\begin{equation}
		\semtt{\Dsynrev{0}} : \left( \RR , \cRR \right) \to \left(\RR , \cRR \right) 
\end{equation}
is the morphism in $\Fam{\Vect ^\op} $ defined by the pair $\left( 0, 0'\right) $ where, for each $a\in \RR $, 
$0'_a : \RR \to \RR $ is the linear transformation constantly equal to $0$.

\subsection{The derivatives of $(+)$ and $(\cdot)$}
We assume that 
$$(\cdot) : \reals \times \reals \to \reals\qquad\mbox{ and }\qquad(+) : \reals\times \reals \to \reals $$
are primitive operations in the source-language whose semantics are given, respectively, by the addition $\plus : \RR \times \RR \to \RR $ and multiplication $\multI : \RR \times \RR \to \RR$.
Since  $\left( +\right)$ and $\left( \cdot\right)$ are primitive operations in the source-language, $\Dsynrev{+}$ and $\Dsynrev{\cdot}$ are set by definition. 

By our semantic assumptions as per \ref{subsect:CHAD-SOUNDNESS-ASSUMPTIONS}, we have:
\begin{enumerate}
	\item[$(+)$] the morphism $\semtt{\Dsynrev{+}} : \left( \RR , \cRR \right) \to \left( \RR\times \RR  , \cRR \times \cRR \right) $ of $\Fam{\Vect ^\op }$ is defined by  \label{eq:derivative-of-+}
	\begin{equation} 
		\semtt{\Dsynrev{+}} = \left( \plus , \plus '  \right) 
	\end{equation} 
	where $\plus (a,b) = a+b $ and, for each $(a,b)\in \RR\times \RR$, 
	$\plus _{(a,b)} ' : \RR \to \RR \times \RR $ is defined by $x\mapsto  \left( x, x\right)  $. 
	\item[$(\cdot)$] the morphism $\semtt{\Dsynrev{\cdot}} : \left( \RR , \cRR \right) \to \left( \RR\times \RR  , \cRR \times \cRR \right) $ of $\Fam{\Vect ^\op }$ is defined by  \label{eq:derivative-of-.}
	\begin{equation} 
		\semtt{\Dsynrev{\cdot}} = \left( \multI , \multI  '  \right) 
	\end{equation} 
	where $\multI (a,b) = ab $ and, for each $(a,b)\in \RR\times \RR$, 	$\multI _{(a,b)} ' : \RR \to \RR \times \RR $ is defined by $x\mapsto  \left( bx, ax\right)  $. 
\end{enumerate}

\subsection{Type of non-empty lists of real numbers in $\Syn $ and $\Fam{\Vect ^\op}$} \label{subsect:derivative-of-fold-fam-vect}
As our examples mainly concern the type $\nonList{\reals}$ of non-empty lists of real numbers in $\Syn$, let us first recall its categorical semantics, and discuss its image under the reverse-mode CHAD $\Dsynrev{-}$.

The $\nonList{\reals}\coloneqq \mu \liste $ where the endofunctor $\liste$ is defined by
\begin{eqnarray}
	\liste : &\Syn& \to \Syn \label{eq:endofunctor-list-Syn}\\
	&W& \mapsto  \reals \sqcup W\times\reals. \nonumber
\end{eqnarray} 
Denoting by $\listee : \Fam{\Vect ^\op}\to \Fam{\Vect ^\op} $  the endofunctor defined by 
\begin{equation} 
\listee (W,w) = \left( \RR , \cRR \right) \sqcup (W,w) \times \left( \RR , \cRR \right), 
\end{equation}
we conclude that 
\begin{eqnarray}
	\semtt{\Dsynrev{\nonList{\reals}}}& = &\mu \listee \\
	& = & \left(\coprod _{j\in \NN -\left\{0\right\}} \RR ^j  , \langle \cRR ^j \rangle _{j\in \NN -\left\{0\right\}} : \coprod _{j\in \NN -\left\{0\right\}} \RR ^j   \to \Vect \right) \nonumber
\end{eqnarray}
by the structure-preserving property of $\Dsynrev{-}$.

Let $\langle \left( \zeta, \zeta '\right)    , \left( \beta, \beta '\right) \rangle   :  \left( \RR , \cRR \right) \sqcup (W,w) \times \left( \RR , \cRR \right)\to \left( W,w\right) $ 
be the morphism
in $\Fam{\Vect ^\op }$ induced by given morphisms 
\begin{eqnarray*}
	\left( \zeta, \zeta '\right) : & \left( \RR , \cRR \right) & \to \left( W,w\right)\\
	\left( \beta, \beta '\right) : & (W,w) \times \left( \RR , \cRR \right) & \to \left( W,w\right) 
\end{eqnarray*}
in $\Fam{\Vect ^\op} $.
Denoting
\begin{equation}
\left( \xi, \xi '   \right) \coloneqq	\fold _{\listee}\left( \left( W, w \right) , \langle \left( \zeta, \zeta '\right)    , \left( \beta , \beta '\right) \rangle  \right)   : \mu \listee \to \left( W, w\right), 
\end{equation}
we have the following:
\begin{enumerate} 
\item[$\xi$)]  $\xi  : \coprod\limits _{j\in \NN -\left\{0\right\}} \RR ^j \to W $ is induced by the family
\begin{equation}
 \xi =	\langle \xi _j : \RR^j\to W  \rangle _{j\in \NN -\left\{0\right\}}  
\end{equation}
defined by $\xi _1 =  \zeta : \RR  \to W   $ and $\xi _{j+1} =  \beta \circ \left( \xi _j \times \id_{\RR } \right)  $;
\item[$\xi _r '$)] for each $r \in \RR  \subset  \coprod\limits _{j\in \NN -\left\{0\right\}} \RR ^j $, the component $$\xi '_r : w\circ \xi (r) \to \RR $$ is given by $\zeta '_r : w\circ \zeta (r) \to \RR $.
\item[$\xi '$)] for each $p = \left(  p_\ast, p_0 \right)\in \RR ^k\times \RR   =  \RR ^{k+1} \subset  \coprod\limits _{j\in \NN -\left\{0\right\}} \RR ^j $, 
\begin{equation}
	\xi ' _p =   \left( \xi ' _{p_\ast} \times \id_{\RR } \right)  \circ \beta '_{\left( \xi \left( p_\ast\right) , p_0  \right)} . \label{eq:derivative-of-fold-Fam}
\end{equation} 
\end{enumerate}

\subsection{Reverse-mode CHAD derivative of sum}
The function $\summ : \nonList{\reals}\to\reals $ computes the sum of the elements of a non-empty list of real numbers.
We can express $\summ $ in $\Syn$ by:
\begin{equation}
	\summ  \coloneqq   \fold _{\liste}\left( \reals , \langle \id _{\reals} , \left( + \right) \rangle : \reals \sqcup \reals\times \reals \to \reals  \right) : \mu\liste = \nonList{\reals} \to \reals  .
\end{equation}
By the structure-preserving property of CHAD, we conclude that:
\begin{equation} 
	\semtt{\Dsynrev{\summ }} =\fold _{\listee} \left( \left( \RR ,\cRR \right) , \langle \id _{\left( \RR ,\cRR \right) } , \left( \plus , \plus '\right) \rangle : \left( \RR ,\cRR \right) \sqcup \left( \RR \times \RR ,\cRR ^2 \right) \to \left( \RR ,\cRR \right)  \right).
\end{equation} 	
Therefore, by \ref{eq:derivative-of-+} and \ref{subsect:derivative-of-fold-fam-vect}, we conclude that, denoting $\semtt{\Dsynrev{\summ }} = \left( \sem{\summ }, \sem{\summ } ' \right) $, we have the following:
\begin{enumerate}[A)] 
	\item the function 
\begin{equation}
	\sem{\summ } : \coprod _{j\in \NN -\left\{0\right\}} \RR ^j   \to \RR
\end{equation}  
is induced by the family $\langle\sem{\summ }_j : \RR ^j \to \RR\rangle _{j\in \NN -\left\{0\right\}} $ defined by $$\sem{\summ }_j\left(w_1, \ldots , w_ j\right) = \sum _{i=1}^j w_i ;$$
	\item for each $p\in \RR ^k \subset \coprod\limits _{j\in \NN -\left\{0\right\}}  \RR ^j$, we have that
\begin{equation}
	\sem{\summ }'_p :  \RR    \to \RR ^k
\end{equation} 	
is defined by $x\mapsto\left( x, \ldots , x \right)$.
\end{enumerate}

\subsection{Reverse-mode CHAD derivative of product}
The function $\productt : \nonList{\reals}\to\reals $ computes the product of the elements of a non-empty list of real numbers.
We can express $\productt $ in $\Syn$ by:
\begin{equation}
	\productt  \coloneqq   \fold _{\liste}\left( \reals , \langle \id _{\reals} , \left( \cdot \right) \rangle : \reals \sqcup \reals\times \reals \to \reals  \right) : \mu\liste = \nonList{\reals} \to \reals .
\end{equation}
By the structure-preserving property of CHAD, we have that:
\begin{equation} 
	\semtt{\Dsynrev{\summ }} =\fold _{\listee} \left( \left( \RR ,\cRR \right) , \langle \id _{\left( \RR ,\cRR \right) } , \left( \multI , \multI '\right) \rangle : \left( \RR ,\cRR \right) \sqcup \left( \RR \times \RR ,\cRR ^2 \right) \to \left( \RR ,\cRR \right)  \right).
\end{equation} 	
Therefore, by \ref{eq:derivative-of-.} and \ref{subsect:derivative-of-fold-fam-vect}, we conclude that, denoting $\semtt{\Dsynrev{\productt }} = \left( \sem{\productt }, \sem{\productt } ' \right) $, we have the following:
\begin{enumerate}[I)] 
	\item the function 
	\begin{equation}
		\sem{\productt} : \coprod _{j\in \NN -\left\{0\right\}} \RR ^j   \to \RR
	\end{equation}  
	is induced by the family $\langle\sem{\productt}_j : \RR ^j \to \RR\rangle _{j\in \NN -\left\{0\right\}} $ defined by $$\sem{\productt}_j\left(w_1, \ldots , w_ j\right) = \prod _{i=1}^j w_i;   $$
	\item for each $$p=\left( p_1, \ldots , p_k\right) \in \RR ^k \subset \coprod _{j\in \NN -\left\{0\right\}} \RR^j ,$$ we have that
	\begin{equation}
		\sem{\productt}'_p :  \RR    \to \RR ^k
	\end{equation} 	
	is defined by $x\mapsto\left( \hat{p_1} x,  \hat{p_2} x, \ldots , \hat{p_k} x \right)$, where
	$$ \hat{p_t} = \prod _{i\in\left\{1, \ldots , k \right\} -\left\{t\right\} } p_i . $$
\end{enumerate}

\subsection{Reverse-mode CHAD derivative of $(+)\circ  ( \id _{\reals}\times (\cdot) )$}
In order to compute the derivative of the polynomial evaluator as expressed in \eqref{eq:polynomial-evaluator-as-a-composition}, we need to compute the derivative of the function
\begin{equation}
	(+)\circ  ( \id _{\reals}\times (\cdot) ) : \reals \times \reals \times \reals \to \reals 
\end{equation}
whose semantics is defined by $(a,b,c)\mapsto a + bc$.

We use the structure-preserving property of $\Dsynrev{-}$ to compute $\semtt{\Dsynrev{\id _{\reals}\times (\cdot)}}$. This gives us:
\begin{eqnarray*}
	\semtt{\Dsynrev{ \id _{\reals}\times (\cdot)     }} &=& \semtt{\Dsynrev{\id _{\reals}}}\times \semtt{\Dsynrev{\cdot}}\\
	& = &       \id _{\left( \RR , \cRR \right) }\times \left( \multI , \multI  '  \right)  \\
	& = &  \left( \overline{\multI}, \overline{\multI} '\right) 
\end{eqnarray*}
in $\Fam{\Vect ^\op}$, where $\overline{\multI} : \RR\times \RR \times \RR \to \RR\times\RR $ is defined by $\left( a,b,c\right)\mapsto \left( a, bc\right)  $
and, for each $\left( a,b,c\right)\in \RR\times \RR\times \RR $,  
\begin{equation}
	\overline{\multI}'_{(a,b,c)} : \RR\times\RR  \to \RR\times \RR\times \RR, \qquad
	 \left( w, x\right) \mapsto  \left( w, cx, bx\right). 
\end{equation}
We conclude, then, that 
\begin{eqnarray*}
	\semtt{\Dsynrev{(+)\circ \left( \id _{\reals}\times (\cdot)  \right)   }} & = & \semtt{\Dsynrev{+}}\circ \semtt{\Dsynrev{\left( \id _{\reals}\times (\cdot)  \right)   }}\\
	& = & \semtt{\Dsynrev{+ }}\circ \left( \semtt{\Dsynrev{\id _{\reals}}}\times \semtt{\Dsynrev{\cdot}}     \right) \\
	& = &  \left( \plus , \plus '  \right) \circ \left(  \id _{\left( \RR , \cRR \right) }\times \left( \multI , \multI  '  \right)   \right) 
\end{eqnarray*}
 is equal to the morphism
\begin{equation}
\left( \overline{\plus}, \overline{\plus}'   \right) \coloneq	\left( \plus \circ \left( \id _{\RR}\times \multI \right), \left( \plus \circ \left( \id _{\RR}\times \multI \right) \right) '    \right) : \left( \RR \times \RR \times \RR ,   \cRR ^3 \right) \to  \left( \RR , \cRR \right) 
\end{equation}
where, for each $\left( a, b, c \right)\in\RR\times\RR\times\RR$,  
\begin{eqnarray*}
	\overline{\plus}'_{(a,b,c)} = \left( \plus \circ \left( \id _{\RR}\times \multI \right) \right) ' _{(a,b,c)} : & \RR & \to \RR\times \RR \times \RR\\
	& x & \mapsto \left( x, cx, bx\right) . 
\end{eqnarray*}

\subsection{Reverse-mode CHAD derivative of polynomial evaluator}
For convenience, we represent a pair $\left( p(x), v\right) $, where 
 \begin{equation} 
p(x) =	a_0  + \cdots + a_nx^n  
\end{equation} 
is a polynomial and $v\in \RR$, by a non-empty list $\left[ a_0 , \ldots , a_n, v\right]$. With this notation, the polynomial evaluator 
$$ \evp : \nonList{\reals}\to \reals $$
can be expressed as the composition 
\begin{equation}\label{eq:polynomial-evaluator-as-a-composition}
	\mu\liste = \nonList{\reals} \xto{\fold _ \liste \left( \reals \times \reals , \langle \left(  0, \id _\reals \right),   \left(   (+)\circ  ( \id _{\reals}\times (\cdot) ), \projection _{3}\right) \rangle \right)  }\reals\times\reals \xto{\projection _1} \reals .
\end{equation}
It should be noted that $\sem{\langle \left(  0, \id _\reals \right),   \left(   (+)\circ  ( \id _{\reals}\times (\cdot) ), \projection _{3}\right) \rangle  }$ is the morphism
$$ \RR\sqcup \left( \RR\times\RR\times \RR\right) \to \RR\times \RR  $$
in $\Fam{\Set}$ induced by the morphism $\RR \ni r\mapsto \left( 0, r\right)  $ and $ \RR\times\RR\times \RR  \ni \left( a,b,c\right) \mapsto \left( c, a + bc\right) $ and, hence, indeed, 
$$\sem{\evp}\left( a_0 , \ldots , a_k , v\right) = \left( a_0  + \cdots + a_kv^k , v\right)  $$ for each $\left( a_0 , \ldots , a_k , v\right)\in \RR ^k \subset \coprod _{j\in \NN -\left\{0\right\}} \RR ^j $.

By the structure-preserving property of $\Dsynrev{-}$, we conclude that $\semtt{\Dsynrev{\evp}}$ is given by the composition 
\begin{equation}\label{eq:derivative-polynomial-evaluator-as-a-composition}
	\mu\listee \xto{\fold _ \listee \left( \RR \times \RR , \langle \left( \left( 0 , 0'\right) ,   \left( \id _\RR , \id _\RR '\right)    \right),   \left(   \left( \overline{\plus} , \overline{\plus} '  \right) , \left( \projection _{3} , \projection _{3}'\right)  \right)  \rangle \right)  }\left( \RR ^2, \cRR ^2 \right)   \xto{\left( \projection _{1} , \projection _{1}'\right) } \left( \RR , \cRR \right)  ,
\end{equation}
where $\left( \projection_{3}, \projection_{3}'\right) $ and $\left( \projection_{1}, \projection_{1}'\right) $ denote the respective projections in $\Fam{\Vect ^\op }$.
By \ref{subsect:derivative-of-fold-fam-vect}, denoting  
\begin{equation}
\left( \mathsf{g} , \mathsf{g}'\right) \coloneqq{\fold _ \listee \left( \RR \times \RR , \langle \left( \left( 0 , 0'\right) ,   \left( \id _\RR , \id _\RR '\right)    \right),   \left(   \left( \overline{\plus} , \overline{\plus} '  \right) , \left( \projection _{3} , \projection _{3}'\right)  \right)  \rangle \right)  },
\end{equation}
we have the following. 
\begin{enumerate}[a)]
\item The function 
\begin{equation}
\mathsf{g} : \coprod _{j\in \NN -\left\{0\right\}} \RR ^j  \to  \RR \times \RR
\end{equation} 
takes each $$\left( a_0, \ldots , a_k, v \right)\in \RR ^{k+1}\subset \coprod \limits_{j\in \NN -\left\{0\right\}} \RR ^j $$ to $\left( a_0 +a_1v + \cdots + a_k v^k, v \right)\in \RR\times \RR $.
\item For each $\left( a_0, \ldots , a_k, v \right)\in \RR ^{k+1}\subset \coprod _{j\in \NN -\left\{0\right\}} \RR ^j $, 
\begin{equation}
	\mathsf{g}' _{\left( a_0, \ldots , a_k, v \right)}: \RR \times \RR  \to  \RR ^{k+1}
\end{equation}  
is defined by $\left( x,y \right)\mapsto \left( x, vx, v^2x, \ldots , v^k x, \left( a_1 + 2\cdot a_2 v + 3\cdot a_3v^2 + \cdots + k a_k v^{k-1}  \right) x +y\right) $.
\end{enumerate}
Therefore $\left( \sem{\evp} , \semtt{\evp} '\right) \coloneq\semtt{\Dsynrev{\evp}} =  \left( \projection _{1} , \projection _{1}'\right) \circ  \left( \mathsf{g} , \mathsf{g}'\right)$ is such that, for each 
$$\left( a_0, \ldots , a_k, v \right)\in \RR ^{k+1}\subset \coprod _{j\in \NN -\left\{0\right\}} \RR ^j, $$
$\semtt{\evp} '_{\left( a_0, \ldots , a_k, v \right)} : \RR  \to  \RR ^{k+1} $ is defined by 
$$x\mapsto \left( x, vx, v^2x, \ldots , v^k x, \left( a_1 + 2\cdot a_2 v + 3\cdot a_3v^2 + \cdots + k a_k v^{k-1}  \right) x \right) .$$

\section{Practical considerations}\label{sec:practical-considerations}
Despite the theoretical approach this paper has taken, our motivations for this line 
of research are very applied: we want to achieve efficient and correct reverse AD on expressive 
programming languages.
We believe this paper lays some of the necessary theoretical groundwork to achieve that goal. 
We are planning to address the practical considerations around achieving efficient
implementations of CHAD in detail in a dedicated applied follow-up paper.
However, we still sketch some of these considerations in this section to convey that the methods 
described in this paper are not merely of theoretical interest.

\subsection{Addressing expression blow-up and sharing common subcomputations}
We can observe that our source-code transformations of Appendix \ref{sec:inefficient-ad-transformation}
can result in code-blowup due to the interdependence of the transformations $\Dsyn[\vGamma]{-}_1$ and 
$\Dsyn[\vGamma]{-}_2$ (and $\Dsynrev[\vGamma]{-}_1$ and $\Dsynrev[\vGamma]{-}_2$, respectively) on programs.
This is why, in \S\ref{sec:AD-transformations}, we have instead defined
a single code transformation on programs $\Dsyn[\vGamma]{-}$ for forward mode and $\Dsynrev[\vGamma]{-}$ for reverse mode that simultaneously computes the primals and 
(co)tangents and shares any subcomputations they have in common.
These more efficient CHAD transformations are still representations of the canonical CHAD functors $\Dsyn{-}:\Syn\to \Sigma_\CSyn\LSyn$ and 
$\Dsynrev{-}:\Syn\to \Sigma_\CSyn \LSyn^{op}$
in the sense that $\Dsyn[\vGamma]{\trm}\bepeq \tPair{\Dsyn[\vGamma]{\trm}_1}{\lfun{\lvar}\Dsyn[\vGamma]{\trm}_2}$ and $\Dsynrev[\vGamma]{\trm}\bepeq \tPair{\Dsynrev[\vGamma]{\trm}_1}{\lfun{\lvar}\Dsynrev[\vGamma]{\trm}_2}$ and hence are 
equivalent to the infficient CHAD transformations from the point of view of denotational semantics and correctness.

We can observe that the efficient CHAD code transformations $\Dsyn[\vGamma]{-}$  and $\Dsynrev[\vGamma]{-}$ have the 
property that the transformation $\Dsyn[\vGamma]{C[\trm_1,\ldots,\trm_n]}$ (resp. $\Dsynrev[\vGamma]{C[\trm_1,\ldots,\trm_n]}$) of a term former $C[\trm_1,\ldots,\trm_n]$ that takes $n$ arguments $\trm_1$, \ldots, $\trm_n$ (e.g., the pair constructor $C[\trm_1,\trm_2]=\tPair{\trm_1}{\trm_2}$, which takes two arguments $\trm_1$ and $\trm_2$) is a piece of code that uses the CHAD transformation $\Dsyn[\vGamma]{\trm_i}$ (resp. $\Dsynrev[\vGamma]{\trm_i}$) of each subterm $\trm_i$ exactly once.
This has as a consequence the following important compile-time complexity result that is a necessary condition if this AD technique is to scale up to large code-bases.
\begin{corollary}[No code blow-up]
\label{cor:no-code-blow-up}
The size of the code of the CHAD transformed programs $\Dsyn[\vGamma]{\trm}$ and $\Dsynrev[\vGamma]{\trm}$  grows linearly with the size of the original source program $\trm$.
\end{corollary}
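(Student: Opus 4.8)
The plan is to prove the corollary by a routine structural induction on the source term $\trm$, taking as its crucial input the syntactic shape of the \emph{efficient} code transformations $\Dsyn[\vGamma]{-}$ and $\Dsynrev[\vGamma]{-}$ defined in Section~\ref{sec:AD-transformations}. First I would fix a notion of code size: let $\lvert\trm\rvert$ be the number of nodes in the abstract syntax tree of $\trm$, so that for every term former $C$ with arguments $\trm_1,\ldots,\trm_n$ one has $\lvert C[\trm_1,\ldots,\trm_n]\rvert = 1+\sum_{i=1}^n\lvert\trm_i\rvert$. I would adopt the standard convention that a variable lookup $\tProj{i}(-)$ and a coprojection $\tCoProj{i}(-)$ each count as a single atomic node (equivalently, that variables are accessed by unit-cost environment lookup). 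This convention matters: it is what makes the bound genuinely linear rather than quadratic, and it also makes the size of a transform independent of which context $\vGamma$ we work in, so the induction hypothesis can be stated uniformly over all $\vGamma$.

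The key lemma is the observation already recorded just before the corollary: inspecting each defining clause of $\Dsyn[\vGamma]{-}$ (and dually $\Dsynrev[\vGamma]{-}$), the transform of an $n$-ary term former $C[\trm_1,\ldots,\trm_n]$ is a fixed piece of wrapper code into which each recursive transform $\Dsyn[\vGamma_i]{\trm_i}$, for a context $\vGamma_i$ determined by $\vGamma$ and the binders of $C$, is substituted \emph{exactly once}. I would package this as a bound
\begin{equation*}
\lvert\Dsyn[\vGamma]{C[\trm_1,\ldots,\trm_n]}\rvert \;\le\; c_C \;+\; \sum_{i=1}^n \lvert\Dsyn[\vGamma_i]{\trm_i}\rvert,
\end{equation*}
where $c_C$ is a constant depending only on the term former $C$ (and, in the (co)inductive clauses, on the size of the type annotation carried by $C$, which is bounded independently of the recursive transforms). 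Since the grammar of Figure~\ref{fig:sl-terms-types-kinds} has finitely many term formers, $c\defeq\max_C c_C$ is well defined. The induction is then immediate: proving $\lvert\Dsyn[\vGamma']{\trm}\rvert\le K\lvert\trm\rvert$ simultaneously for all $\vGamma'$, with $K\defeq\max(c,1)$, we assume the bound for the $\trm_i$ and compute $\lvert\Dsyn[\vGamma]{\trm}\rvert\le c_C+K\sum_i\lvert\trm_i\rvert\le c+K(\lvert\trm\rvert-1)\le K\lvert\trm\rvert$. The reverse-mode statement follows verbatim with $\Dsynrev[\vGamma]{-}$.

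The main obstacle is not the induction but the verification of the key lemma, a finite yet genuinely case-heavy check that every clause in Section~\ref{sec:AD-transformations} uses each $\Dsyn[\vGamma_i]{\trm_i}$ once and wraps it in bounded code. Two points require care. First, it is essential to use the \emph{paired} transformations $\Dsyn[\vGamma]{\trm}\bepeq\tPair{\Dsyn[\vGamma]{\trm}_1}{\lfun{\lvar}\Dsyn[\vGamma]{\trm}_2}$ rather than the separate components $\Dsyn[\vGamma]{\trm}_1$ and $\Dsyn[\vGamma]{\trm}_2$: in clauses such as the one for $\letin{\var}{\trm}{\trm[2]}$ both the primal and the (co)tangent of a subterm are needed, so computing the two components independently would invoke each recursive transform twice and produce an exponential, depth-many blow-up; the \textbf{let}-sharing in the efficient transforms is exactly what collapses the two uses into one and pins the multiplicity at one. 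Second, one must justify the atomicity convention for $\tProj{i}$ and $\tCoProj{i}$, since written out with $\tFst/\tSnd$ and $\zero$-padding their size is $O(\lvert\vGamma\rvert)$ and the bound would degrade to quadratic; I would either state the result relative to unit-cost tuple access, or appeal to the fact that in any concrete implementation variable lookup is constant-time, which is the honest reading of ``linear growth'' intended here. Modulo this bookkeeping, the corollary follows.
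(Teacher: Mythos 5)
Your proposal is correct and follows essentially the same route as the paper: the paper's entire argument is the observation, stated immediately before the corollary, that the efficient transforms $\Dsyn[\vGamma]{-}$ and $\Dsynrev[\vGamma]{-}$ use the transform of each subterm exactly once inside a bounded wrapper, with the structural induction left implicit — exactly the key lemma you prove. Your two refinements (insisting on the paired transforms $\Dsyn[\vGamma]{\trm}$ rather than the separate components of Appendix~\ref{sec:inefficient-ad-transformation}, and fixing a unit-cost convention for $\tProj{i}$/$\tCoProj{i}$ so the bound is genuinely linear rather than quadratic in the presence of large contexts) are bookkeeping points the paper glosses over, and they strengthen rather than alter the argument.
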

While we have taken care to avoid recomputation as much as possible in defining these code transformations by sharing results of subcomputations through $\mathbf{let}$-bindings, the run-time complexity of the generated code remains to be studied.

\subsection{Removing dependent types from the target language}
In this paper, we have chosen to work with a dependently typed target language,
as this allows our AD transformations to correspond as closely as possible to the 
conventional mathematics of differential geometry, in which spaces of tangent and cotangent vectors form (non-trivial) bundles over the space of primals.
For example, the dimension of the space of (co)tangent vectors to a sum
$\RR^n\sqcup \RR^m $ is either $n$ or $m$, depending on whether the base point (primal) is chosen in the left or right component. 
An added advantage of this dependently typed approach is that it leads to a cleaner categorical story in which all 
$\eta$-laws are preserved by the AD transformations and standard categorical logical relations techniques can be used in the correctness proof.

That said, while the dependent types we presented give extra type safety that simplify mathematical foundations and the correctness argument underlying our AD techniques, nothing breaks if we keep the transformation on programs the same and simply coarse grain the types by removing any type dependency. 
This may be desirable in practical implementations of the algorithms as most practical programming languages have either no or only limited support for type dependency.

To be precise, we can perform the following coarse-graining transformation $\deptrans{(-)}$ on the types of the target language, which removes all type dependency:
\[
\begin{array}{lll}
\deptrans{\ltvar} & \defeq \ltvar \\
\deptrans{\creals^n} &\defeq& \creals^n\\
\deptrans{\lUnit} & \defeq & \lUnit\\
\deptrans{(\cty\t* \cty[2])} & \defeq & \deptrans{\cty}\t* \deptrans{\cty[2]}\\
\deptrans{(\Pi \var:\ty.\cty[2])} & \defeq & \Pi \var:\deptrans{\ty}.\deptrans{\cty[2]}\\
\deptrans{(\Sigma \var:\ty.\cty[2])} & \defeq &  \Sigma\var:\deptrans{\ty}.\deptrans{\cty[2]}\\
\end{array}\qquad\qquad\!\!
\begin{array}{lll}
\deptrans{(\vMatch{\trm}{\ell_1\var_1\To\cty_1\mid\cdots\ell_n\var_n\To\cty_n})} &\defeq  & \deptrans{\cty_1}\ovplus \cdots\ovplus \deptrans{\cty_n}\\
\deptrans{(\llfp{\ltvar}\cty)} & \defeq &  \llfp{\ltvar}{\deptrans{\cty}}\\
\deptrans{(\lgfp{\ltvar}\cty)} & \defeq &  \llfp{\ltvar}{\deptrans{\cty}}\\
\deptrans{(\cty\multimap \cty[2])} & \defeq & \deptrans{\cty}\multimap \deptrans{\cty[2]}\\
\deptrans{(\Pi \var:\ty.\ty[2])} & \defeq & \Pi\var:\deptrans{\ty}.\deptrans{\ty[2]}\\
\deptrans{(\Sigma \var:\ty.\ty[2])} & \defeq &  \Sigma\var:\deptrans{\ty}.\deptrans{\ty[2]}.
\end{array}    
\]
In fact, seeing that  $(\vMatch{\trm}{\ell_1\var_1\To\cty_1\mid\cdots\ell_n\var_n\To\cty_n})$-types 
were the only source of type dependency in our language while these are translated to non-dependent types, all $\Pi$- and 
$\Sigma$-types are simply translated to powers, copowers, function types and product types:
\[
    \begin{array}{lll}
\deptrans{(\Pi \var:\ty.\cty[2])} & = & \deptrans{\ty}\To\deptrans{\cty[2]}\\
\deptrans{(\Sigma \var:\ty.\cty[2])} & = &  !\deptrans{\ty}\otimes\deptrans{\cty[2]}
    \end{array}
    \qquad\qquad 
    \begin{array}{lll}
        \deptrans{(\Pi \var:\ty.\ty[2])} & = & \deptrans{\ty}\To\deptrans{\ty[2]}\\
        \deptrans{(\Sigma \var:\ty.\ty[2])} & = &  \deptrans{\ty}\t*\deptrans{\ty[2]}.
            \end{array}  
\] 
Our translation $\deptrans{(-)}$ is the identity on programs.

The types $\cty_1\ovplus \cdots\ovplus \cty_n$ require some elaboration.
We give this in the next section where we explain how to implement all required linear types and their terms in a standard functional programming language.

\subsection{Removing linear types from the target language}
\subsubsection{Basics}
As discussed in detail in \citep{vakar2021chad,vakar2020reverse} and demonstrated in the Haskell implementation available at \url{https://github.com/VMatthijs/CHAD}, the types $\creals^n$, $\lUnit$, 
$\cty\t* \cty[2]$, $\ty\To\cty[2]$, $!\ty\otimes \cty[2]$ and
$\cty\multimap\cty[2]$ (and, obviously, the ordinary Cartesian function and product types $\ty\To\ty[2]$ and $\ty\t*\ty[2]$) together with their terms can all be implemented in a standard functional language.
The core idea is to implement $\cty$ as the type $\lintrans{\cty}$:
\[
\begin{array}{lll}
\lintrans{\creals^n } & \defeq & \reals^n\\
\lintrans{\lUnit} & \defeq & \Unit\\
\lintrans{(\cty\t* \cty[2])} & \defeq & \lintrans{\cty}\t*\lintrans{\cty[2]}\\
\end{array}\qquad \qquad
\begin{array}{lll}
\lintrans{(\ty\To\cty[2])} & \defeq & \lintrans{\ty}\To\lintrans{\cty[2]}\\
\lintrans{(!\ty\otimes\cty[2])} & \defeq & [(\lintrans{\ty},\lintrans{\cty[2]})]\\
\lintrans{(\cty\multimap\cty[2])} & \defeq & \lintrans{\cty}\To\lintrans{\cty[2]}.
\end{array}
\]
Crucially, we implement the copowers as abstract types that can under the 
hood be lists of pairs $[(\lintrans{\ty},\lintrans{\cty[2]})]$ 
and we implement the linear function types as abstract types that 
can under the hood be plain functions $\lintrans{\cty}\To\lintrans{\cty[2]}$.
As discussed in \citep{vakar2021chad,vakar2020reverse} and 
shown in the Haskell implementation, this translation extends to 
programs and leads to a correct implementation of CHAD on 
a simply typed $\lambda$-calculus.

We explain here how to extend this translation to implement 
the extra linear types $\cty_1\ovplus \cdots\ovplus\cty_n$,
$\llfp{\ltvar}{\cty}$ and $\lgfp{\ltvar}{\cty}$
required to perform AD on source languages that additionally use
sum types, inductive types and coinductive types.

\subsubsection{Linear sum types $\cty_1\ovplus \cdots\ovplus\cty_n$}
We briefly outline three possible implementations $\lintrans{(\cty_1\ovplus \cdots\ovplus\cty_n)}$ of the linear sum types\linebreak $\cty_1\ovplus \cdots\ovplus\cty_n$:
\begin{enumerate}
\item as a finite (bi)product $\lintrans{\cty_1}\t* \cdots\t* \lintrans{\cty_n}$;
\item as a finite lifted sum $\set{Zero\mid Opt_1\,\lintrans{\cty_1}\mid\cdots\mid Opt_n\,\lintrans{\cty_n}}$;
\item as a finite sum $\set{Opt_1\,\lintrans{\cty_1}\mid\cdots\mid Opt_n\,\lintrans{\cty_n}}$.
\end{enumerate}
Approach 1 has the advantage that we can keep the implementation total.
As demonstrated in Appendix \ref{sec:simply-typed-coprod},
this allows us the easily extend the logical relations argument for the correctness of the applied implementation of \citep{vakar2021chad,vakar2020reverse} (in actual Haskell, available at \url{https://github.com/VMatthijs/CHAD}).
Categorically, what is going on is that, for a locally indexed category 
$\catL:\catC^{op}\to\Cat$ with indexed finite biproducts and $\multimap$-types, $(X_1\sqcup \cdots \sqcup X_n, A_1\times \cdots\times A_n)$ is a weak coproduct of $(X_1,A_1)$, ...,
$(X_n, A_n)$ in both $\Sigma_\catC\catL$ and $\Sigma_\catC \catL^{op}$:
i.e. a coproduct for which the $\eta$-law may fail.
The logical relations proof of Appendix \ref{sec:simply-typed-coprod} lifts these weak coproducts to the subscone, demonstrating that this implementation of CHAD for 
coproducts indeed computes semantically correct derivatives.

This approach was first implemented in the Haskell implementation of CHAD.
However, a major downside of approach 1 is its inefficiency: it represents (co)tangents to a coproducts as tuples of (co)tangents to the component spaces, all but one of which are known to be zero.
This motivates approaches 2 and 3.

Approach 2 exploits this knowledge that all but one component of the (co)tangent
space are zero by only storing the single non-zero component, corresponding to the connected component the current primal is in.
To see the correctness of this approach, we can add an extra error element $\bot$ to all our linear types $\Dsyn{\ty}_2$ and $\Dsynrev{\ty}_2$, for which $\bot+x=\bot$, and do a manual (total) logical relations proof.
We can then note that we can also leave out the error element of the data type and throw actual errors at runtime.

We pay for this more efficient representation in two ways:
\begin{itemize}
\item addition on the (co)tangent space is defined by 
\[
Zero + x = x \qquad x + Zero = x \qquad Opt_i(\trm) + Opt_i(\trm[2])= Opt_i(\trm+\trm[2])
\]
and hence is a partial operation that throws an error if we try to add $Opt_i (\trm)+ Opt_j(\trm[2])$ for $i\neq j$;
\item we need to add a new zero element $Zero$ rather than simply reusing the 
zeros $Opt_i (\zero)$ that are present in each of the components, which should be equivalent for all practical purposes.
\end{itemize}
The first issue is not a problem at all in practice, as the more precise dependent types we have erased guarantee that CHAD only ever adds (co)tangents in the same component, meaning that the error can never be trigerred in practice.
However, it requires us to do a manual logical relations proof of correctness.
This is the approach that is currently implemented in the reference Haskell implementation of CHAD.
The second issue is a minor inefficiency that can become more serious if (co)inductive types are built using this representation of coproducts.
This motivates approach 3.

Approach 3 addresses the second issue with approach 2 by removing the unnecessary extra element $Zero$ of the (co)tangent spaces.
To achieve this, however, the zeros $\zero$ at each type $\Dsyn{\ty}_2$ of tangent and $\Dsynrev{\ty}_2$ of cotangents need to be made functions $\zero:\Dsyn{\ty}_1\to\Dsyn{\ty}_2$ and  $\zero:\Dsynrev{\ty}_1\to\Dsynrev{\ty}_2$, rather than mere constant zeros.
Whenever the a zero is used by CHAD, it is called on the corresponding primal value that specifies in which component we want the zero to land.
While a mathematical formalization of this approach remains future work, we have shown this approach to work well in practice in an experimental Haskell implementation of CHAD.
As we plan to detail in an applied follow-up paper, this approach also gives an efficient way of applying CHAD to dynamically sized arrays.

\subsubsection{Linear inductive and coinductive types $\llfp{\ltvar}{\cty}$ and $\lgfp{\ltvar}{\cty}$}
As we have seen, linear coinductive types arise in reverse CHAD of inductive types 
as well as in forward CHAD of coinductive types.
Similarly, linear inductive types arise in reverse CHAD of coinductive types 
as well as in forward CHAD of inductive types.
It remains to be investigated how these can be best implemented.
However, as was the case for the implementation of copowers and linear sum types, we are hopeful that the concrete denotational semantics can guide us

Observe that all polynomials $F:\Vect\to \Vect$ are of the form $W\mapsto L(A)+W^n$, where $L\dashv U:\Set\to \Vect$ is the usual free-forgetful adjunction.
Therefore, $U\circ F=H\circ U$ for the polynomial $H:\Set\to\Set$ defined by $S\mapsto U(L(A))\times S^n$.
As the forgetful functor $F:\Vect\to \Set$ is monadic, it creates terminal coalgebras, hence hence $U(\nu F)=\nu H$.
This suggests that we might be able to implement $\lintrans{(\lgfp{\ltvar}{\cty})}$ as the plain coinductive type $\gfp{\tvar}{\lintrans{\cty}}$, where $\lintrans{\ltvar}\defeq \tvar$.

Similarly, we have that $F\circ L= L\circ E$ for the polynomial $E:\Set\to \Set $ defined by $E(X)=A \sqcup \bigsqcup_n X$.
Therefore, we have that $\mu F=L(\mu E)=(\mu E) \To \RR$.
This suggests that the implementation of linear inductive types might be achieved by "delinearizing" a polynomial $F$ to $E$, taking the initial algebra of $E$ and taking the function type to $\RR$.

We are hopeful that this theory will lead to a practical implementation, but the details remain to be verified.

\section{Related work}
\label{sec:related-work}
Automatic differentiation has long been studied by the scientific computing community.
In fact, its study goes back many decades with forward mode AD
being introduced by \citep{wengert1964simple}
and variants of reverse mode AD seemingly being reinvented several times, 
for example, by 
\citep{linnainmaa1970representation,speelpenning1980compiling}.
For brief reviews of this complex history and the basic ideas behind AD, 
we refer the reader to \citep{baydin2018automatic}.
For a more comprehensive account of the traditional work on AD, see the standard reference
text \citep{griewank2008evaluating}.

In this section, we focus, instead, on the more recent work that 
has proliferated since the programming languages community started 
seriously studying AD.
Their objectives are more closely aligned with those of the present paper.

\citep{pearlmutter2008reverse} is one of the early programming languages papers trying to extend the scope of 
AD from the traditional setting of first-order imperative languages to more expressive 
programming languages.
Specifically, this applied paper proposes a method to use reverse mode AD on an 
untyped higher-order functional language, through the use of an intricate source 
code transformation that employs ideas similar to defunctionalization.
It focuses on implementation rather than correctness or intended semantics.
\citep{alvarez2021functorial} recently simplified this code transformation and formalized its correctness.

Prompted by \citep{plotkin2018some}, there has, more recently, been a push in the 
programming language community to learn from \citep{pearlmutter2008reverse} and 
arrive at a definition of (reverse) AD as a source code 
transformation on expressive languages that should ideally be simple, semantically motivated and correct, compositional and efficient.

Among this work, \citep{wang2018demystifying} specifies and implements much simpler reverse AD transformation on a higher-order functional language with sum types.
The price they have to pay is that the transformation relies on the use of 
delimited continuations in the target language.

Various more theoretical works give formalizations and correctness proofs of reverse AD on expressive languages through the use of custom operational semantics.
\citep{abadi-plotkin2020} gives such an analysis for a first-order 
functional language with recursion, using an operational semantics that mirrors 
the runtime tracing techniques used in practice.
\citep{mak-ong2020} instead works with a total higher-order language that is a variant 
of the differential $\lambda$-calculus.
Using slightly different operational techniques, coming from linear logic, \citep{brunel2019backpropagation,mazza2021automatic} give an analysis of reverse AD on a simply typed $\lambda$-calculus and PCF.
Notably, \citep{brunel2019backpropagation} shows that their algorithm has 
the right complexity if one assumes a specific operational semantics for their linear $\lambda$-calculus with what 
they call a ``linear factoring rule''.
Very recently, \citep{krawiec2021} applied the idea of reverse AD through tracing to 
a higher-order functional language with variant types.
They implement the custom operational semantics as an evaluator and give a denotational correctness proof (using logical relations techniques similar to those of \citep{bcdg-open-logical-relations,hsv-fossacs2020}) as well as an asymptotic complexity proof 
about the full code transformation plus evaluator.

\citep{elliott2018simple} takes a different approach that is much closer to the 
present paper by working with a target language that is a plain functional language and does not depend on a custom operational semantics or an evaluator for traces.
Although this approach also naturally has linear types, it is a fundamentally 
different algorithm from that of \citep{brunel2019backpropagation,mazza2021automatic}:
for example, the linear types can be coarse-grained to plain simply typed code (e.g., Haskell) with the right computational complexity, even under the standard operational semantics of functional languages.
This is the approach that we have been referring to as CHAD.
Elliott's CHAD transformation, however, is restricted to a first-order functional language with tuples.
\citep{vytiniotis2019differentiable, vakar2020reverse} both present (the same) extensions of CHAD to apply to a higher-order functional source 
language, while still working with a functional target language. 
While \citep{vytiniotis2019differentiable} relates CHAD to the approach of 
\citep{pearlmutter2008reverse,alvarez2021functorial},
\citep{vakar2020reverse} and its extended version \citep{vakar2021chad}
give a (denotational) semantic foundation and correctness proof for CHAD, using 
a combination of logical relations techniques that \citep{bcdg-open-logical-relations,hsv-fossacs2020,huot2021higher} had previously used 
to prove correct (higher-order) forward mode AD together with the observation 
that AD can be understood through the framework of lenses or Grothendieck fibrations, which had 
previously been made by \citep{fong2019backprop,rev-deriv-cat2020}.
The present paper extends CHAD to further apply to source languages
with variant types and (co)inductive types.
To our knowledge, it is the first paper to consider reverse AD on languages with 
such expressive type systems.

\subsubsection*{Acknowledgments}                        
This project has received funding from the European Union’s Horizon 2020 research and innovation
programme under the Marie Skłodowska-Curie grant agreement No. 895827 and from the Nederlandse Organisatie voor Wetenschappelijk Onderzoek under NWO Veni grant number VI.Veni.202.124. 
This research was also supported through the programme ``Oberwolfach Leibniz Fellows'' by the Mathematisches Forschungsinstitut Oberwolfach in 2022, and partially supported by the CMUC, Centre for Mathematics of the University of Coimbra - UIDB/00324/2020, funded by the Portuguese Government through FCT/MCTES.

We thank Tom Smeding, Gordon Plotkin, Wouter Swierstra, Gabriele Keller, Ohad Kammar, Dimitrios Vytiniotis,  Patricia Johann, Michelle Pagani,     Michael Betancourt, Bob Carpenter, Sam Staton, Mathieu Huot, Curtis Chin Jen Sem and Amir Shaikhha for helpful discussions about topics related to the present work.

\clearpage
\bibliographystyle{msclike}
\bibliography{bibliography}

\clearpage

\appendix

\section{Pseudo-preterminal objects in Cat}\label{sec:appendix-pseudoterminal-in-Cat}
The appropriate $2$-dimensional analogous to preterminal objects are the pseudo-preterminal ones.
Namely, in the case of $\Cat $:
\begin{definition} 
An object $W$ in $\Cat $ is \textit{pseudo-preterminal} if the category of functors $\Cat \left[ X, W\right] $ is a groupoid for any object $X $ in $\Cat$.
\end{definition}  

Lemma \ref{lem:pseudopreterminal-Cat} establishes that the initial and terminal categories are, up to equivalence, the only pseudo-preterminal objects of $\Cat $.

\begin{lemma}[Pseudo-preterminal objects in $\Cat $]\label{lem:pseudopreterminal-Cat}
	Let $W$ be an object of $\Cat $. Assuming that $W$ is not the initial object of $\Cat $, the following statements are equivalent.
	\begin{enumerate}[i]
		\item The unique functor $W\to \terminal $ is an equivalence.\label{lem:pseudoterminal-wterminalequivalence}
		\item The projection $\pi _ W : W\times W \to W $ is an equivalence.\label{lem:pseudoterminal-projectionequivalence}
		\item The identity $\id _W : W\to W $ is naturally isomorphic to a constant functor $c: W\to W $.\label{lem:pseudoterminal-widentityisomorphictoconstant}
		\item If $f, g: X\to W $ are functors, then there is a natural isomorphism $f\cong g $ (that is to say, $W $ is pseudo-preterminal).\label{lem:pseudoterminal-pseudoterminal}
	\end{enumerate}
\end{lemma}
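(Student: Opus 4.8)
The plan is to prove the equivalence of the four statements by establishing a cycle of implications, with the classical two-element structure of $\Cat$ doing the heavy lifting. The natural cycle is \ref{lem:pseudoterminal-wterminalequivalence} $\Rightarrow$ \ref{lem:pseudoterminal-projectionequivalence} $\Rightarrow$ \ref{lem:pseudoterminal-widentityisomorphictoconstant} $\Rightarrow$ \ref{lem:pseudoterminal-pseudoterminal} $\Rightarrow$ \ref{lem:pseudoterminal-wterminalequivalence}, since each link is either formal or a short diagram chase. First I would handle \ref{lem:pseudoterminal-wterminalequivalence} $\Rightarrow$ \ref{lem:pseudoterminal-projectionequivalence}: if $W\to \terminal$ is an equivalence, then $W\times W\to W\times \terminal\cong W$ is an equivalence because the product functor $W\times(-)$ preserves equivalences (it is a $2$-functor, hence preserves adjoint equivalences), and this composite is (naturally isomorphic to) the projection $\pi_W$. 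For \ref{lem:pseudoterminal-projectionequivalence} $\Rightarrow$ \ref{lem:pseudoterminal-widentityisomorphictoconstant}: if $\pi_W$ is an equivalence, it is in particular essentially surjective and fully faithful; I would use a quasi-inverse $s:W\to W\times W$ with $\pi_W\circ s\cong \id_W$ and observe that the other projection $\pi_W'\circ s$ composed back gives, via the natural isomorphism $\pi_W\cong\pi_W'$ forced by the equivalence, a natural isomorphism between $\id_W$ and a functor that factors through a single object, i.e. a constant functor.

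The step \ref{lem:pseudoterminal-widentityisomorphictoconstant} $\Rightarrow$ \ref{lem:pseudoterminal-pseudoterminal} is the most conceptually clean: if $\id_W\cong c$ for a constant functor $c$ at some object $w_0\in W$, then for any functors $f,g:X\to W$ we have $f=\id_W\circ f\cong c\circ f$ and $g=\id_W\circ g\cong c\circ g$, and both $c\circ f$ and $c\circ g$ are the constant functor at $w_0$; composing these isomorphisms yields $f\cong g$. This shows $\Cat[X,W]$ is a groupoid (in fact connected with all objects isomorphic), so $W$ is pseudo-preterminal. Finally, \ref{lem:pseudoterminal-pseudoterminal} $\Rightarrow$ \ref{lem:pseudoterminal-wterminalequivalence}: pseudo-preterminality applied to $X=W$ gives that any two functors $W\to W$ are naturally isomorphic, so in particular $\id_W$ is isomorphic to a constant functor $c$ at some $w_0$; since $W$ is not initial (not empty), such a $w_0$ exists, and the functor $\terminal\to W$ picking $w_0$ together with $W\to\terminal$ forms an equivalence, because $W\to\terminal\to W$ equals $c\cong\id_W$ and $\terminal\to W\to\terminal$ is trivially the identity.

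The main obstacle I anticipate is bookkeeping around the non-emptiness hypothesis and the distinction between a category being pseudo-preterminal (a groupoid of functors from every $X$) versus being equivalent to $\terminal$. The subtlety is that a groupoid-valued hom does \emph{not} by itself force connectedness or contractibility of $W$ — for instance, a discrete category with several objects and no nonidentity morphisms has every $\Cat[X,W]$ a groupoid, yet is not equivalent to $\terminal$. The resolution is that statement \ref{lem:pseudoterminal-pseudoterminal} as phrased is stronger: it asserts that \emph{any two} parallel functors are naturally isomorphic, which is exactly connectedness-up-to-isomorphism and rules out the multi-object discrete case. I would be careful to extract, from \ref{lem:pseudoterminal-pseudoterminal}, not merely that $\Cat[X,W]$ is a groupoid but that it has a unique isomorphism class, and to feed that precise fact into the construction of the equivalence $W\simeq\terminal$. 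The non-initiality hypothesis is used exactly once, to guarantee $W$ has at least one object so that the constant functor $c$ and the inverse equivalence $\terminal\to W$ are well-defined; I would flag this explicitly at that point.
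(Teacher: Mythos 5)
Your proposal follows exactly the paper's cycle (i)$\Rightarrow$(ii)$\Rightarrow$(iii)$\Rightarrow$(iv)$\Rightarrow$(i), and three of your four steps are essentially the paper's own: you write $\pi_W$ as the composite $W\times W\xto{\id_W\times t}W\times\terminal\cong W$ for (i)$\Rightarrow$(ii); you compose $f\cong c\circ f=c\circ g\cong g$ for (iii)$\Rightarrow$(iv); and you close the loop by noting that $W\to\terminal\xto{c}W$ is isomorphic to $\id_W$ while $\terminal\to W\to\terminal$ is forced to be the identity. Your cautionary remark about discrete categories is also well taken: condition (iv) asserts that \emph{any} two parallel functors into $W$ are isomorphic, which is strictly stronger than merely requiring each $\Cat\left[X,W\right]$ to be a groupoid, and it is the stronger reading that the proof needs.

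The genuine problem is your step (ii)$\Rightarrow$(iii), which as written does not go through. First, the isomorphism $\pi_W\cong\pi_W'$ is asserted as "forced by the equivalence", but two parallel functors, one of which is an equivalence, need not be isomorphic; what makes it true here is the additional fact that the diagonal $\Delta=(\id_W,\id_W)$ is a common section of both projections. Concretely: $\pi_W\circ\Delta=\id_W$ and $\pi_W$ being an equivalence make $\Delta$ a quasi-inverse of $\pi_W$, whence $\pi_W'\cong\pi_W'\circ\Delta\circ\pi_W=\id_W\circ\pi_W=\pi_W$ --- an argument you would have to supply. Second, your claim that $\pi_W'\circ s$ "factors through a single object, i.e.\ a constant functor" is false for a general quasi-inverse $s$: take $W$ indiscrete on two objects (so (ii) holds) and $s=\Delta$; then $\pi_W'\circ s=\id_W$, which is surjective on objects. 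What is true is only that $\id_W$ is \emph{isomorphic} to a constant functor, and to produce any constant functor $W\to W$ at all you already need $W$ non-empty --- so non-initiality is consumed in this step too, not "exactly once" in (iv)$\Rightarrow$(i) as your closing paragraph claims. The paper's route avoids all of this: pick a constant functor $c$ (here non-emptiness enters), observe that $(\id_W,c)$ and $\Delta$ are both sections of the equivalence $\pi_W$, hence naturally isomorphic to each other, and postcompose with the second projection to get $c\cong\id_W$. Equivalently, you can repair your own argument by taking the specific quasi-inverse $s=(\id_W,c)$ rather than an arbitrary one.
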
	
\begin{proof}
	Assuming \eqref{lem:pseudoterminal-wterminalequivalence}, denoting by $t:W\to \terminal $
	the unique functor, we have that $\pi _W $ is the composition $W\times W \xto{\id_W \times t} W\times\terminal \cong W  $. Hence, since $\id _W $ and $t $ are equivalences, we conclude that $\pi _W $ is an equivalence.
	This proves that \eqref{lem:pseudoterminal-wterminalequivalence} $\Rightarrow $ \eqref{lem:pseudoterminal-projectionequivalence}.
	
	Given any constant functor 
	$c: W\to W $, we have that $\left( \id _W , c\right) : W\to W\times W $ and the diagonal functor $\left( \id _W , \id _W \right) : W\to W\times W $  are such that
	$\pi _W \circ \left( \id _W , c\right) = \id _ W $ and $\pi _W \circ \left( \id _W , \id _ W\right) = \id _ W $. Hence, assuming \eqref{lem:pseudoterminal-projectionequivalence}, we have that 
	$ \left( \id _W , c\right) $ and $ \left( \id _W , \id _ W\right) $  are inverse equivalences of $ \pi _W $. Thus we have a natural isomorphism 
	$ \left( \id _W , c\right) \cong \left( \id _W , \id _ W\right) $
	which implies that 
	$$c\cong \pi _2 \circ \left( \id _W , c\right) \cong \pi _2 \circ \left( \id _W , \id _W \right) \cong \id _ W. $$
	This proves that  \eqref{lem:pseudoterminal-projectionequivalence} $ \Rightarrow $ \eqref{lem:pseudoterminal-widentityisomorphictoconstant}.
	
	Assuming \eqref{lem:pseudoterminal-widentityisomorphictoconstant}, if $ f, g : X\to W $ are functors, we have the natural isomorphisms 
	$$ f = \id _W \circ f \cong c \circ f = c\circ g \cong \id _W \circ g = g .$$
	This shows that \eqref{lem:pseudoterminal-widentityisomorphictoconstant} $ \Rightarrow $ \eqref{lem:pseudoterminal-pseudoterminal}.
	
	Finally, assuming \eqref{lem:pseudoterminal-pseudoterminal}, we have that,
	given any functor $c : \terminal \to W $, the composition
	$ W\to\terminal \xto{c} W $
	is naturally isomorphic to the identity. Hence $W\to \terminal $ is an equivalence.
	This shows that \eqref{lem:pseudoterminal-pseudoterminal} $ \Rightarrow $ \eqref{lem:pseudoterminal-wterminalequivalence}.
\end{proof}

\begin{remark}
	The equivalence \eqref{lem:pseudoterminal-projectionequivalence} $\Leftrightarrow $ \eqref{lem:pseudoterminal-pseudoterminal}  holds for the general context of any $2$-category.
	The other equivalences mean that $\terminal $ and $\initial $ are, up to equivalence, the unique pseudo-preterminal objects of $\Cat$. The reader might compare the result, for instance, with the characterization of contractible spaces in basic homotopy theory.
\end{remark} 	

\section{CHAD transformation without sharing between primal and (co)tangents}
\label{sec:inefficient-ad-transformation}
In this section, we list the CHAD program transformations $\Dsyn{\Gamma}_1\vdash \Dsyn[\vGamma]{\trm}_1:\Dsyn{\ty}$, $\Dsyn{\Gamma}_1;\lvar:\Dsyn{\Gamma}_2\vdash\Dsyn[\vGamma]{\trm}_2:\subst{\Dsyn{\ty}_2}{\sfor{p}{\Dsyn[\vGamma]{\trm}_1}}$, $\Dsynrev{\Gamma}_1\vdash \Dsynrev[\vGamma]{\trm}_1:\Dsynrev{\ty}$ and $\Dsynrev{\Gamma}_1;\lvar:\subst{\Dsynrev{\ty}_2}{\sfor{p}{\Dsyn[\vGamma]{\trm}_1}}\vdash\Dsynrev[\vGamma]{\trm}_2:\Dsynrev{\Gamma}_2$ of a program $\Gamma\vdash\trm:\ty$ that keep the primals and (co)tangents separate
without sharing computation.
We advise against implementing these, due to
\begin{enumerate}
    \item the code explosion they can result in, leading to a potentially large code size and compilation times;
    \item the lack of sharing of computation they can result in, leading to poor runtime performance.
\end{enumerate}

\sqsubsection{Forward-mode AD}
\begin{align*}
    &\Dsyn[\vGamma]{\op(\trm_1,\ldots,\trm_k)}_1\defeq \letin{\var_1}{\Dsyn[\vGamma]{\trm_1}_1}{\cdots\letin{\var_k}{\Dsyn[\vGamma]{\trm_k}_1}{\op(\var_1,\ldots,\var_k)}}\\&
    \Dsyn[\vGamma]{\var}_1 \defeq \var\\&
    \Dsyn[\vGamma]{\letin{\var}{\trm}{\trm[2]}}_1 \defeq \letin{\var}{\Dsyn[\vGamma]{\trm}_1}{\Dsyn[\vGamma,\var]{\trm[2]}_1} \\&
    \Dsyn[\vGamma]{\tUnit}_1  \defeq \tUnit\\&
    \Dsyn[\vGamma]{\tPair{\trm}{\trm[2]}}_1 \defeq 
    \tPair{\Dsyn[\vGamma]{\trm}_1}{\Dsyn[\vGamma]{\trm[2]}_1}\\&
    \Dsyn[\vGamma]{\tFst(\trm)}_1  \defeq \tFst(\Dsyn[\vGamma]{\trm}_1)\\&
    \Dsyn[\vGamma]{\tSnd(\trm)}_1  \defeq \tSnd(\Dsyn[\vGamma]{\trm}_1)\\&
    \Dsyn[\vGamma]{\fun{\var}{\trm}}_1   \defeq
    \fun{\var}{\tPair{\Dsyn[\vGamma,\var]{\trm}_1}{\lfun{\lvar}{\letin{\lvar}{\tPair{\zero}{\lvar}}{\Dsyn[\vGamma,\var]{\trm}_2}}}}\\&
    \Dsyn[\vGamma]{\trm\,\trm[2]}_1  \defeq \tFst(\Dsyn[\vGamma]{\trm}_1\,\Dsyn[\vGamma]{\trm[2]}_1)\\&
    \Dsyn[\vGamma]{\Cns\trm}_1   \defeq \Cns(\Dsyn[\vGamma]{\trm}_1)\\&
    \Dsyn[\vGamma]{\vMatch{\trm}{\Cns_1\var_1\To\trm[2]_1\mid\cdots\mid \Cns_n \var_n\To\trm[2]_n}}_1   \defeq\\& \qquad \vMatch{\Dsyn[\vGamma]{\trm}_1}{\Cns_1\var_1\To\Dsyn[\vGamma,\var_1]{\trm[2]_1}_1\mid\cdots\mid \Cns_n \var_n\To\Dsyn[\vGamma,\var_n]{\trm[2]_n}_1}\\&
    \Dsyn[\vGamma]{\tRoll\trm}_1 \defeq \tRoll\Dsyn[\vGamma]{\trm}_1\\&
    \Dsyn[\vGamma]{\tFold\trm\var{\trm[2]}}_1 \defeq \tFold{\Dsyn[\vGamma]{\trm}_1}{\var}{\Dsyn[\var]{\trm[2]}_1}\\&
    \Dsyn[\vGamma]{\tGen\trm\var{\trm[2]}}_1 \defeq \tGen{\Dsyn[\vGamma]{\trm}_1}{\var}{\Dsyn[\var]{\trm[2]}_1}\\&
    \Dsyn[\vGamma]{\tUnroll\trm}_1 \defeq \tUnroll\Dsyn[\vGamma]{\trm}_1\\&
    \\&
    \Dsyn[\vGamma]{\op(\trm_1,\ldots,\trm_k)}_2\defeq \letin{\var_1}{\Dsyn[\vGamma]{\trm_1}_1}{\cdots\letin{\var_k}{\Dsyn[\vGamma]{\trm_k}_1}{\\
     & \phantom{\Dsyn[\vGamma]{\op(\trm_1,\ldots,\trm_k)}_2\defeq}D\op(\var_1,\ldots,\var_k;\tTriple{\lapp{\Dsyn[\vGamma]{\trm_1}_2}{\lvar}}{\ldots}{\lapp{\Dsyn[\vGamma]{\trm_k}}_2{\lvar}})}}\\&
    \Dsyn[\vGamma]{\var}_2 \defeq \tProj{\idx{\var}{\vGamma}}(\lvar)\\&
    \Dsyn[\vGamma]{\letin{\var}{\trm}{\trm[2]}}_2 \defeq \letin{\var}{\Dsyn[\vGamma]{\trm}_1}{\letin{\lvar}{\tPair{\lvar}{\Dsyn[\vGamma]{\trm}_2}}{\Dsyn[\vGamma,\var]{\trm[2]}_2}} \\&
    \Dsyn[\vGamma]{\tUnit}_2  \defeq \tUnit\\&
    \Dsyn[\vGamma]{\tPair{\trm}{\trm[2]}}_2 \defeq 
    \tPair{\Dsyn[\vGamma]{\trm}_2}{\Dsyn[\vGamma]{\trm[2]}_2}\\&
    \Dsyn[\vGamma]{\tFst(\trm)}_2  \defeq \tFst(\Dsyn[\vGamma]{\trm}_2)\\&
    \Dsyn[\vGamma]{\tSnd(\trm)}_2  \defeq \tSnd(\Dsyn[\vGamma]{\trm}_2)\\&
    \Dsyn[\vGamma]{\fun{\var}{\trm}}_2   \defeq
    \fun{\var}{\letin{\lvar}{\tPair{\lvar}{\zero}}{\Dsyn[\vGamma,\var]{\trm}_2}}\\&
    \Dsyn[\vGamma]{\trm\,\trm[2]}_2  \defeq 
    \letin{\var[2]}{\Dsyn[\vGamma]{\trm[2]}_1}{
    \Dsyn[\vGamma]{\trm}_2\, \var[2]+\lapp{(\tSnd(\Dsyn[\vGamma]{\trm}_1\,\var[2]))}{\Dsyn[\vGamma]{\trm[2]}_2}}\\&
    \Dsyn[\vGamma]{\Cns\trm}_2   \defeq \Dsyn[\vGamma]{\trm}_2\\&
    \Dsyn[\vGamma]{\vMatch{\trm}{\Cns_1\var_1\To\trm[2]_1\mid\cdots\mid \Cns_n \var_n\To\trm[2]_n}}_2   \defeq\\& \qquad 
    \letin{\lvar}{\tPair{\lvar}{\Dsyn[\vGamma]{\trm}_2}}{\vMatch{\Dsyn[\vGamma]{\trm}_1}{\Cns_1\var_1\To\Dsyn[\vGamma,\var_1]{\trm[2]_1}_2\mid \cdots 
    \mid \Cns_n\var_n\To\Dsyn[\vGamma,\var_n]{\trm[2]_n}_2}}\\&
    \Dsyn[\vGamma]{\tRoll\trm}_2 \defeq \tRoll\Dsyn[\vGamma]{\trm}_2\\&
    \Dsyn[\vGamma]{\tFold\trm\var{\trm[2]}}_2 \defeq  
     \tFold{\Dsyn[\vGamma]{\trm}_2}{\lvar}{
    \\&\qquad\letin{\var}{\tFold{\Dsyn[\vGamma]{\trm}_1}{\var}{\subst{\Dsyn{\ty}_1}{\sfor{\tvar}{\var\vdash \Dsyn[\var]{\trm[2]}_1}}}}{\Dsyn[\var]{\trm[2]}_2}
    }\\&
    \Dsyn[\vGamma]{\tGen{\trm}{\var}{\trm[2]}}_2 \defeq \tGen{\Dsyn[\vGamma]{\trm}_2}{\lvar}
    {\letin{\var}{\Dsyn[\vGamma]{\trm}_1}{\Dsyn[\var]{\trm[2]}_2} }
    \\&
    \Dsyn[\vGamma]{\tUnroll\trm}_2 \defeq \tUnroll \Dsyn[\vGamma]{\trm}_2
    \end{align*}

\sqsubsection{Reverse-mode AD}
\begin{align*}
    &\Dsynrev[\vGamma]{\op(\trm_1,\ldots,\trm_k)}_1 \defeq \letin{\var_1}{\Dsynrev[\vGamma]{\trm_1}}{\cdots\letin{\var_k}{\Dsynrev[\vGamma]{\trm_k}}{\op(\var_1)}}\\&
    \Dsynrev[\vGamma]{\var}_1\defeq\var  \\&  
    \Dsynrev[\vGamma]{\letin{\var}{\trm}{\trm[2]}}_1  
    \defeq \letin{\var}{\Dsynrev[\vGamma]{\trm}_1}{\Dsynrev[\vGamma,\var]{\trm[2]}_1}  \\&  
    \Dsynrev[\vGamma]{\tUnit}_1 \defeq \tUnit \\&  
    \Dsynrev[\vGamma]{\tPair{\trm}{\trm[2]}}_1 \defeq \tPair{\Dsynrev[\vGamma]{\trm}_1}{\Dsynrev[\vGamma]{\trm[2]}_1} \\&  
    \Dsynrev[\vGamma]{\tFst(\trm)}_1\defeq
    \tFst(\Dsynrev[\vGamma]{\trm}_1) \\&  
    \Dsynrev[\vGamma]{\tSnd(\trm)}_1\defeq
    \tSnd(\Dsynrev[\vGamma]{\trm}_1) \\&  
    \Dsynrev[\vGamma]{\fun{\var}{\trm}}_1 \defeq 
    \fun{\var}{\tPair{\Dsynrev[\vGamma,\var]{\trm}_1}
    {\lfun{\lvar}{\tSnd(\Dsynrev[\vGamma,\var]{\trm}_2)}}}
     \\&  
    \Dsynrev[\vGamma]{\trm\,\trm[2]}_1 \defeq
    \tFst(\Dsynrev[\vGamma]{\trm}_1\,\Dsynrev[\vGamma]{\trm[2]}_1) \\&  
    \Dsynrev[\vGamma]{\Cns\trm}_1\defeq\Cns(\Dsynrev[\vGamma]{\trm}_1) \\&  
    \Dsynrev[\vGamma]{\vMatch{\trm}{\Cns_1\var_1\To\trm[2]_1\mid\cdots\mid \Cns_n \var_n\To\trm[2]_n}}_1\defeq\\ &\qquad \vMatch{\Dsynrev[\vGamma]{\trm}_1}{\Cns_1\var_1\To\Dsynrev[\vGamma,\var_1]{\trm[2]_1}_1\mid\cdots\mid \Cns_n \var_n\To\Dsynrev[\vGamma,\var_n]{\trm[2]_n}_1} \\& 
    \Dsynrev[\vGamma]{\tRoll\trm}_1 \defeq \tRoll\Dsynrev[\vGamma]{\trm}_1\\& 
    \Dsynrev[\vGamma]{\tFold\trm\var{\trm[2]}}_1 \defeq \tFold{\Dsynrev[\vGamma]{\trm}_1}{\var}{\Dsynrev[\var]{\trm[2]}_1}\\&
    \Dsynrev[\vGamma]{\tGen\trm\var{\trm[2]}}_1 \defeq \tGen{\Dsynrev[\vGamma]{\trm}_1}{\var}{\Dsynrev[\var]{\trm[2]}_1}\\&
    \Dsynrev[\vGamma]{\tUnroll\trm}_1 \defeq \tUnroll\Dsynrev[\vGamma]{\trm}_1\\&
     \\&
     \Dsynrev[\vGamma]{\op(\trm_1,\ldots,\trm_k)}_2\defeq 
     \letin{\var_1}{\Dsynrev[\vGamma]{\trm_1}}{\cdots\letin{\var_k}{\Dsynrev[\vGamma]{\trm_k}}{\letin{\lvar}{\transpose{D\op}(\var_1,\ldots,\var_k;\lvar)}{}}}\\
     &\phantom{\Dsynrev[\vGamma]{\op(\trm_1,\ldots,\trm_k)}_2\defeq } (\letin{\lvar}{\tProj{1}{\lvar}}{\Dsynrev[\vGamma]{\trm_1}_2}) +\cdots+ (\letin{\lvar}{\tProj{1}{\lvar}}{\Dsynrev[\vGamma]{\trm_k}_2})\\&
    \Dsynrev[\vGamma]{\var}_2\defeq\tCoProj{\idx{\var}{\vGamma}}(\lvar) \\&  
    \Dsynrev[\vGamma]{\letin{\var}{\trm}{\trm[2]}}_2\defeq
    \letin{\var}{\Dsynrev[\vGamma]{\trm}_1}{
    \letin{\lvar}{\Dsynrev[\vGamma,\var]{\trm[2]}_2}{\tFst(\lvar)+
        \letin{\lvar}{\tSnd(\lvar)}{\Dsynrev[\vGamma]{\trm}_2}
    }   }  \\&  
    \Dsynrev[\vGamma]{\tUnit}_2 \defeq \zero \\&  
    \Dsynrev[\vGamma]{\tPair{\trm}{\trm[2]}}_2 \defeq (\letin{\lvar}{\tFst(\lvar)}{\Dsynrev[\vGamma]{\trm}_2})+(\letin{\lvar}{\tSnd(\lvar)}{\Dsynrev[\vGamma]{\trm[2]}_2}) \\&  
    \Dsynrev[\vGamma]{\tFst(\trm)}_2\defeq\letin{\lvar}{\tPair{\lvar}{\zero}}{\Dsynrev[\vGamma]{\trm}_2} \\&  
    \Dsynrev[\vGamma]{\tSnd(\trm)}_2\defeq\letin{\lvar}{\tPair{\zero}{\lvar}}{\Dsynrev[\vGamma]{\trm}_2} \\&  
    \Dsynrev[\vGamma]{\fun{\var}{\trm}}_2 \defeq
    \tensMatch{\lvar}{\var}{\lvar}{
    \tFst(\Dsynrev[\vGamma,\var]{\trm}_2)} \\&  
    \Dsynrev[\vGamma]{\trm\,\trm[2]}_2 \defeq 
    \letin{\var}{\Dsynrev[\vGamma]{\trm[2]}_1}{
    (\letin{\lvar}{!\var\otimes \lvar}{\Dsynrev[\vGamma]{\trm}_2})+ \\&  
    \qquad\qquad\qquad\quad\;\;
    (\letin{\lvar}{\lapp{(\tSnd(\Dsynrev[\vGamma]{\trm}_1\,\var))}{\lvar}}{\Dsynrev[\vGamma]{\trm[2]}_2})
    }
     \\&  
    \Dsynrev[\vGamma]{\Cns\trm}_2\defeq\Dsynrev[\vGamma]{\trm}_2 \\&  
    \Dsynrev[\vGamma]{\vMatch{\trm}{\Cns_1\var_1\To\trm[2]_1\mid\cdots\mid \Cns_n \var_n\To\trm[2]_n}}_2   \defeq\\& \qquad 
    \letin{\lvar}{
    \vMatch{\Dsynrev[\vGamma]{\trm}_1}{\Cns_1\var_1\To\Dsynrev[\vGamma,\var_1]{\trm[2]_1}_2\mid \cdots\mid
    \Cns_n\var_n\To\Dsynrev[\vGamma,\var_n]{\trm[2]_n}_2 }}{}\\
    &\qquad
    {\tFst\lvar+\letin{\lvar}{\tSnd\lvar}{\Dsynrev[\vGamma]{\trm}_2}}\\&
    \Dsynrev[\vGamma]{\tRoll\trm}_2 \defeq 
    \letin{\lvar}{\tUnroll\lvar}{\Dsynrev[\vGamma]{\trm}_2}\\ &
    \Dsynrev[\vGamma]{\tFold\trm\var{\trm[2]}}_2 \defeq
    \letin{\lvar}{\big(
    \tGen{\lvar}{\lvar}{
        \\&\qquad\letin{\var}{\tFold{\Dsynrev[\vGamma]{\trm}_1}{\var}{\subst{\Dsynrev{\ty}_1}{\sfor{\tvar}{\var\vdash \Dsynrev[\var]{\trm[2]}_1}}}}{\Dsynrev[\var]{\trm[2]}_2}
        }\big)}{\Dsynrev[\vGamma]{\trm}_2}\\ & 
        \Dsynrev[\vGamma]{\tGen{\trm}{\var}{\trm[2]}}_2 \defeq \letin{\lvar}{(\tFold{\lvar}{\lvar}
        {\letin{\var}{\Dsynrev[\vGamma]{\trm}_1}{\Dsynrev[\var]{\trm[2]}_2} })}{\Dsynrev[\vGamma]{\trm}_2}\\ &
    \Dsynrev[\vGamma]{\tUnroll\trm}_2 \defeq \letin{\lvar}{\tRoll\lvar}{\Dsynrev[\vGamma]{\trm}_2} 
    \end{align*}

\section{A Manual Proof of AD Correctness for Simply Typed Coproducts}\label{sec:simply-typed-coprod}
In many implementations of CHAD, we will not have access to dependent types.
Therefore, we need to give up a bit of type safety for AD on coproducts.
Here, we extend the applied, manual correctness proof of the applied CHAD implementation of \citep[Appendix A]{vakar2021chad}.

For coproducts, we have the following constructs in the source language:
\begin{align*}
\tInl \in \Syn(\ty,\ty\sqcup \ty[2])\\
\tInr \in \Syn(\ty[2],\ty\sqcup \ty[2])\\
[,]: \Syn(\ty,\ty[3])\times \Syn(\ty[2],\ty[3])\to \Syn(\ty\sqcup \ty[2], \ty[3]).
\end{align*}

\subsection*{Forward AD}
We can define 
\begin{align*}
    \Dsyn{\ty\sqcup \ty[2]}_1&\defeq \Dsyn{\ty}_1\sqcup \Dsyn{\ty[2]}_1\\
    \Dsyn{\ty\sqcup \ty[2]}_2&\defeq \Dsyn{\ty}_1\t* \Dsyn{\ty[2]}_1\\
    \Dsyn{\tInl}_1&\defeq \tInl\\
    \Dsyn{\tInl}_2 &\defeq \lfun\lvar\tPair{\lvar}{\zero}\\
    \Dsyn{\tInr}_1&\defeq \tInr\\
    \Dsyn{\tInr}_2 &\defeq \lfun\lvar\tPair{\zero}{\lvar}\\
    \Dsyn{[\trm,\trm[2]]}_1 & \defeq x\vdash \vMatch{x}{\tInl\,x\To \Dsyn{\trm}_1 | x\To \Dsyn{\trm[2]}_1}\\
    \Dsyn{[\trm,\trm[2]]}_2 & \defeq x\vdash \vMatch{x}{\tInr\,x\To \lfun\lvar\lapp{\Dsyn{\trm}_2}{(\tFst\lvar)} | x\To \lfun\lvar\lapp{\Dsyn{\trm[2]}_2}{(\tSnd\lvar)} }\\.
    \end{align*}
Then, we have that  
\begin{align*}
\Dsyn{\tInl}_1&\in \CSyn(\Dsyn{\ty}_1, \Dsyn{\ty}_1\sqcup  \Dsyn{\ty}_2)\\
\Dsyn{\tInl}_2& \in \CSyn(\Dsyn{\ty}_1,{\Dsyn{\ty}_2}\multimap{\Dsyn{\ty}_2\t*\Dsyn{\ty[2]}_2})\\
\Dsyn{\tInr}_1&\in \CSyn(\Dsyn{\ty[2]}_1, \Dsyn{\ty}_1\sqcup  \Dsyn{\ty}_2)\\
\Dsyn{\tInr}_2& \in \CSyn(\Dsyn{\ty[2]}_1,{\Dsyn{\ty[2]}_2}\multimap{\Dsyn{\ty}_2\t*\Dsyn{\ty[2]}_2})\\
\Dsyn{[\trm,\trm[2]]}_1 &\in \CSyn(\Dsyn{\ty}_1\sqcup \Dsyn{\ty[2]}_1, \Dsyn{\ty[3]}_1)\\
\Dsyn{[\trm,\trm[2]]}_2 &\in \CSyn(\Dsyn{\ty}_1\sqcup \Dsyn{\ty[2]}_1,{\Dsyn{\ty}_2\t*\Dsyn{\ty[2]}_2}\multimap{\Dsyn{\ty[3]}_2}).
\end{align*}

Then, we define the following semantics:
\begin{align*}
    \sem{\Dsyn{\ty\sqcup \ty[2]}_1}&\defeq \sem{\Dsyn{\ty}_1}\sqcup \sem{\Dsyn{\ty}_1}\\
    \sem{\Dsyn{\ty\sqcup \ty[2]}_2}&\defeq \sem{\Dsyn{\ty}_2}\times\sem{\Dsyn{\ty}_2}\\
    \sem{\Dsyn{\tInl}_1}&\defeq \inj{1}\\
    \sem{\Dsyn{\tInl}_2} &\defeq \_ \mapsto x\mapsto (x,0)\\
    \sem{\Dsyn{\tInr}_1}&\defeq \inj{2}\\
    \sem{\Dsyn{\tInr}_2} &\defeq \_\mapsto y\mapsto (0,y)\\
    \sem{\Dsyn{[\trm,\trm[2]]}_1}&\defeq [\sem{\Dsyn{\trm}_1}, \sem{\Dsyn{\trm[2]}_1}] \\ 
    \sem{\Dsyn{[\trm,\trm[2]]}_2}&\defeq [x\mapsto (x',\_)\mapsto \sem{\Dsyn{\trm}_2}(x)(x'), 
    y\mapsto (y',\_)\mapsto \sem{\Dsyn{\trm}_2}(y)(y')]\\     .
    \end{align*}
We define the forward AD logical relation $P_{\ty\sqcup \ty[2]}$ for coproducts on 
$$
(\RR\To(\sem{\ty}\sqcup \sem{\ty[2]}))\times( (\RR\To (\sem{\Dsyn{\ty}_1}\sqcup \sem{\Dsyn{\ty[2]}_1}))\times (\RR\To \RR\multimap (\sem{\Dsyn{\ty}_2}\times \sem{\Dsyn{\ty[2]}_2})))
$$
as 
\begin{align*}
&\set{(\inj{1}\circ f',(\inj{1}\circ g', x\mapsto x'\mapsto (h(x)(x'), 0)))\mid  (f',(g',h'))\in P_{\ty}}\cup \\
&\set{(\inj{2}\circ f',(\inj{2} \circ g', x\mapsto x'\mapsto (0,h(x)(x'))))\mid  (f',(g',h'))\in P_{\ty[2]}}.
\end{align*}
Then, clearly, $ \tInl$ and $\tInr$ respect this relation (almost by definition).
We verify that $[\trm,\trm[2]]$ also respects the relation provided that $\trm$ and $\trm[2]$ do.
Suppose that $(f, (g,h))\in P_{\ty\sqcup \ty[2]}$ and $(\sem{\trm}, (\sem{\Dsyn{\trm}_1}, \sem{\Dsyn{\trm}_2}))\in P_{\ty}$ and 
$(\sem{\trm[2]}, (\sem{\Dsyn{\trm[2]}_1}, \sem{\Dsyn{\trm[2]}_2}))\in P_{\ty[2]}$.
We have to show that 
\begin{align*}
(&[\sem{\trm},\sem{\trm[2]}]\circ f,\\
&\qquad([\sem{\Dsyn{\trm}_1}, \sem{\Dsyn{\trm[2]}_1}] \circ g, \\
&\qquad\;z\mapsto z'\mapsto [x\mapsto (x',\_)\mapsto \sem{\Dsyn{\trm}_2}(x)(x'), \\&\hspace{68pt}
    y\mapsto (y',\_)\mapsto \sem{\Dsyn{\trm}_2}(y)(y')](g(z))(h(z)(z'))
))\in P_{\sem{\ty[3]}}.
\end{align*}
Now, we have two cases:
\begin{itemize}
\item $(f,(g,h))=(\inj{1}\circ f',(\inj{1}\circ g', x\mapsto x'\mapsto (h'(x)(x'), 0)))$, for $(f',(g',h'))\in P_{\ty}$.
Then, 
\begin{align*}
    &([\sem{\trm},\sem{\trm[2]}]\circ f,\\
    &\;\qquad([\sem{\Dsyn{\trm}_1}, \sem{\Dsyn{\trm[2]}_1}] \circ g, \\
    &\;\qquad\;z\mapsto z'\mapsto [x\mapsto (x',\_)\mapsto \sem{\Dsyn{\trm}_2}(x)(x'), \\&\hspace{68pt}
        y\mapsto (y',\_)\mapsto \sem{\Dsyn{\trm}_2}(y)(y')](g(z))(h(z)(z'))
    ))=\\
    &(\sem{\trm}\circ f',(\sem{\Dsyn{\trm}_1}\circ g', z\mapsto z'\mapsto \sem{\Dsyn{\trm}_2}(g(z))(h(z)(z'))
    )),
    \end{align*}
    which is a member of $P_{\ty[3]}$ because $\trm$ respects the logical relation by assumption.
\item $(f,(g,h))=(\inj{2}\circ f',(\inj{2} \circ g', x\mapsto x'\mapsto (0,h'(x)(x'))))$ for $(f',(g',h'))\in P_{\ty[2]}$.
Then, 
\begin{align*}
    &([\sem{\trm},\sem{\trm[2]}]\circ f,\\
    &\;\qquad([\sem{\Dsyn{\trm}_1}, \sem{\Dsyn{\trm[2]}_1}] \circ g,\\
    &\;\qquad\;z\mapsto z'\mapsto [x\mapsto (x',\_)\mapsto \sem{\Dsyn{\trm}_2}(x)(x'), \\&\hspace{68pt}
        y\mapsto (y',\_)\mapsto \sem{\Dsyn{\trm}_2}(y)(y')](g'(z))(h'(z)(z'))
        ))=\\
    &(\sem{\trm[2]}\circ f',(\sem{\Dsyn{\trm[2]}_1}\circ g',z\mapsto z'\mapsto \sem{\Dsyn{\trm}_2}(g'(z))(h'(z)(z'))
    )),
    \end{align*}
    which is a member of $P_{\ty[3]}$ because $\trm[2]$ respects the logical relation by assumption.
\end{itemize}
It follows that our implementation of forward AD for coproducts is correct.

\subsection*{Reverse AD}
We can define 
\begin{align*}
    \Dsynrev{\ty\sqcup \ty[2]}_1&\defeq \Dsynrev{\ty}_1\sqcup \Dsynrev{\ty[2]}_1\\
    \Dsynrev{\ty\sqcup \ty[2]}_2&\defeq \Dsynrev{\ty}_1\t* \Dsynrev{\ty[2]}_1\\
    \Dsynrev{\tInl}_1&\defeq \tInl\\
    \Dsynrev{\tInl}_2 &\defeq \lfun\lvar\tFst\lvar\\
    \Dsynrev{\tInr}_1&\defeq \tInr\\
    \Dsynrev{\tInr}_2 &\defeq \lfun\lvar\tSnd\lvar\\
    \Dsynrev{[\trm,\trm[2]]}_1 & \defeq x\vdash \vMatch{x}{\tInl\,x\To \Dsynrev{\trm}_1 | x\To \Dsynrev{\trm[2]}_1}\\
    \Dsynrev{[\trm,\trm[2]]}_2 & \defeq x\vdash \vMatch{x}{\tInr\,x\,\To \lfun\lvar \tPair{\lapp {\Dsynrev{\trm}_2}{\lvar}}{\zero} | x\To \lfun\lvar\tPair{\zero}{\lapp{\Dsynrev{\trm[2]}_2}{\lvar}}}\\.
    \end{align*}
    Then, we have that  
    \begin{align*}
    \Dsynrev{\tInl}_1&\in \CSyn(\Dsynrev{\ty}_1, \Dsynrev{\ty}_1\sqcup  \Dsynrev{\ty}_2)\\
    \Dsynrev{\tInl}_2& \in \CSyn(\Dsynrev{\ty}_1,{\Dsynrev{\ty}_2\t*\Dsynrev{\ty[2]}_2}\multimap{\Dsynrev{\ty}_2})\\
    \Dsynrev{\tInr}_1&\in \CSyn(\Dsynrev{\ty[2]}_1, \Dsynrev{\ty}_1\sqcup  \Dsynrev{\ty}_2)\\
    \Dsynrev{\tInr}_2& \in \CSyn(\Dsynrev{\ty[2]}_1,{\Dsynrev{\ty}_2\t*\Dsynrev{\ty[2]}_2}\multimap{\Dsynrev{\ty[2]}_2})\\
    \Dsynrev{[\trm,\trm[2]]}_1 &\in \CSyn(\Dsynrev{\ty}_1\sqcup \Dsynrev{\ty[2]}_1, \Dsynrev{\ty[3]}_1)\\
    \Dsynrev{[\trm,\trm[2]]}_2 &\in \CSyn(\Dsynrev{\ty}_1\sqcup \Dsynrev{\ty[2]}_1,{\Dsynrev{\ty[3]}_2}\multimap{\Dsynrev{\ty}_2\t*\Dsynrev{\ty[2]}_2}).
    \end{align*}
Then, 
\begin{align*}
    \sem{\Dsynrev{\ty\sqcup \ty[2]}_1}&\defeq \sem{\Dsynrev{\ty}_1}\sqcup \sem{\Dsynrev{\ty}_1}\\
    \sem{\Dsynrev{\ty\sqcup \ty[2]}_2}&\defeq \sem{\Dsynrev{\ty}_2}\times\sem{\Dsynrev{\ty}_2}\\
    \sem{\Dsynrev{\tInl}_1}&\defeq \inj{1}\\
    \sem{\Dsynrev{\tInl}_2} &\defeq \_ \mapsto (x,\_)\mapsto x\\
    \sem{\Dsynrev{\tInr}_1}&\defeq \inj{2}\\
    \sem{\Dsynrev{\tInr}_2} &\defeq \_\mapsto (\_,y)\mapsto y\\
    \sem{\Dsynrev{[\trm,\trm[2]]}_1}&\defeq [\sem{\Dsynrev{\trm}_1}, \sem{\Dsynrev{\trm[2]}_1}] \\ 
    \sem{\Dsynrev{[\trm,\trm[2]]}_2}&\defeq [x\mapsto z'\mapsto (\sem{\Dsynrev{\trm}_2}(x)(z'),0), 
    y\mapsto z'\mapsto (0,\sem{\Dsynrev{\trm}_2}(y)(z'))]\\     .
    \end{align*}
We define the reverse AD logical relation $P_{\ty\\ty[2]}$ for coproducts on 
$$
(\RR\To(\sem{\ty}\sqcup \sem{\ty[2]}))\times( (\RR\To (\sem{\Dsynrev{\ty}_1}\sqcup \sem{\Dsynrev{\ty[2]}_1}))\times (\RR\To (\sem{\Dsynrev{\ty}_2}\times \sem{\Dsynrev{\ty[2]}_2})\multimap \RR))
$$
as 
\begin{align*}
&\set{(\inj{1}\circ f',(\inj{1}\circ g', z\mapsto (x',\_)\mapsto h'(z)(x')))\mid  (f',(g',h'))\in P_{\ty}}\cup \\
&\set{ (\inj{2}\circ f',(\inj{2} \circ g', z\mapsto (\_,y')\mapsto h'(z)(y')))\mid  (f',(g',h'))\in P_{\ty[2]}}.
\end{align*}
Then, clearly, $ \tInl$ and $\tInr$ respect this relation (almost by definition).
We verify that $[\trm,\trm[2]]$ also respects the relation provided that $\trm$ and $\trm[2]$ do.
Suppose that $(f, (g,h))\in P_{\ty\sqcup \ty[2]}$ and $(\sem{\trm}, (\sem{\Dsynrev{\trm}_1}, \sem{\Dsynrev{\trm}_2}))\in P_{\ty}$ and 
$(\sem{\trm[2]}, (\sem{\Dsynrev{\trm[2]}_1}, \sem{\Dsynrev{\trm[2]}_2}))\in P_{\ty[2]}$.
We have to show that 
\begin{align*}
(&[\sem{\trm},\sem{\trm[2]}]\circ f,\\
&\qquad([\sem{\Dsynrev{\trm}_1}, \sem{\Dsynrev{\trm[2]}_1}]\circ g, \\
&\qquad\;z\mapsto x'\mapsto h(z)([x\mapsto z'\mapsto (\sem{\Dsynrev{\trm}_2}(x)(z'),0), \\&\hspace{90pt}
y\mapsto z'\mapsto (0,\sem{\Dsynrev{\trm[2]}_2}(y)(z'))](g(x))(x'))
))\in P_{\sem{\ty[3]}}.
\end{align*}
Now, we have two cases:
\begin{itemize}
\item $(f,(g,h))=(\inj{1}\circ f',(\inj{1}\circ g', z\mapsto (x',\_)\mapsto h'(z)(x')))$, for $(f',(g',h'))\in P_{\ty}$.
Then, 
\begin{align*}
    &([\sem{\trm},\sem{\trm[2]}]\circ f,\\
&\;\qquad([\sem{\Dsynrev{\trm}_1}, \sem{\Dsynrev{\trm[2]}_1}]\circ g, \\
&\;\qquad\;z\mapsto x'\mapsto h(z)([x\mapsto z'\mapsto (\sem{\Dsynrev{\trm}_2}(x)(z'),0), \\&\hspace{92pt}
y\mapsto z'\mapsto (0,\sem{\Dsynrev{\trm[2]}_2}(y)(z'))](g(x))(x'))
))=\\
    &(\sem{\trm}\circ f',(\sem{\Dsynrev{\trm}_1}\circ g', z\mapsto x'\mapsto h'(z)(\sem{\Dsynrev{\trm}_2}
    (g'(x))(x')))),
    \end{align*}
    which is a member of $P_{\ty[3]}$ because $\trm$ respects the logical relation by assumption.
\item $(f,(g,h))=(\inj{2}\circ f',(\inj{2} \circ g', z\mapsto (\_,y')\mapsto h'(z)(y')))$ for $(f',(g',h'))\in P_{\ty[2]}$.
Then, 
\begin{align*}
    &([\sem{\trm},\sem{\trm[2]}]\circ f,\\
&\;\qquad([\sem{\Dsynrev{\trm}_1}, \sem{\Dsynrev{\trm[2]}_1}]\circ g, \\
&\;\qquad\;z\mapsto x'\mapsto h(z)([x\mapsto z'\mapsto (\sem{\Dsynrev{\trm}_2}(x)(z'),0), \\&\hspace{92pt}
y\mapsto z'\mapsto (0,\sem{\Dsynrev{\trm[2]}_2}(y)(z'))](g(x))(x'))
))=\\
    &(\sem{\trm[2]}\circ f',(\sem{\Dsynrev{\trm[2]}_1}\circ g', z\mapsto x'\mapsto h'(z)(\sem{\Dsynrev{\trm[2]}_2}
    (g'(x))(x')))),
    \end{align*}
    which is a member of $P_{\ty[3]}$ because $\trm[2]$ respects the logical relation by assumption.
\end{itemize}
It follows that our implementation of reverse AD for coproducts is correct.

A categorical way to understand this proof is that 
$(A_1,A_2)\sqcup (B_1,B_2)\defeq (A_1\sqcup B_1, A_2\times B_2)$ lifts the coproduct in $\catC$ to a 
\emph{weak} (fibred) coproduct in $\Sigma_\catC\catL$ and $\Sigma_\catC\catL^{op}$.
This weak coproduct lifts to the subscone, in the manner outlined above.
One consequence is that the AD transformations no longer respect the $\eta$-rule for coproducts
(unlike in the dependently typed setting).

\end{document}